\declaretheorem[style=mythm,sibling=definition,]{thesis}
\crefname{thesis}{thesis}{theses}
\Crefname{thesis}{Thesis}{Theses}
\renewcommand{\todoi}[1]{}
\newcommand{\todo}[1]{}
\newcounter{equationbis}
\newcommand{\commentout}[1]{}
\newcommand{\pullbackcorner}[1][dr]{\save*!/#1-1.2pc/#1:(-1,1)@^{|-}\restore}
\newcommand{\inference}[3]{\inferenceright{#1}{#2}{\text{#3}}}
\newcommand{\binference}[3]{\binferenceright{#1}{#2}{\text{#3}}}
\newcommand{\vmess}[1]{\begin{array}{c} #1 \end{array}}
\newcommand{\interpinference}[3]{\interp{\vmess{\inference{#1}{#2}{#3}}}}
\renewcommand{\loch}{{\llcorner\hspace{-.4ex}\lrcorner}}
\newcommand{\pparen}[1]{{\llparenthesis #1 \rrparenthesis}}
\newcommand{\Ty}{\mathrm{Ty}}
\newcommand{\Tm}{\mathrm{Tm}}
\newcommand{\Arr}{\name{Arr}}
\newcommand{\Empty}{\name{Empty}}
\newcommand{\Disc}{\name{Disc}}
\newcommand{\uniPsh}{\cat U^{\name{Psh}}}
\newcommand{\uniNDD}{\cat U^{\name{NDD}}}
\newcommand{\uniDD}{\cat U^{\name{DD}}}
\newcommand{\PropD}{\Prop^{\name{D}}}
\newcommand{\unidepth}[2]{\cat U_{#1}^{#2}}
\newcommand{\IB}{\mathbb{B}}
\newcommand{\IP}{\mathbb{P}}
\newcommand{\IE}{\mathbb{E}}
\newcommand{\XX}{\mathbb{X}}
	\newcommand{\ctxbrid}[1]{{#1} : \IB}
	\newcommand{\ctxpath}[1]{{#1} : \IP}
	\newcommand{\ctxedge}[1]{{#1} : \IE}
	\newcommand{\ctxline}[1]{{#1} : \XX}
\newcommand{\IX}{\mathbb{I}}
\newcommand{\IF}{\mathbb{F}}
\newcommand{\IJud}{\mathbb{J}}
\newcommand{\Idim}[1]{{\llparenthesis #1 \rrparenthesis}}
\newcommand{\cwfpair}[2]{{#1, #2}}
\newcommand{\ap}{\name{ap}}
\newcommand{\J}{\name{J}}
\newcommand{\tycode}[1]{\ulcorner #1 \urcorner}
\newcommand{\dtycode}[1]{\underline \ulcorner #1 \underline \urcorner}
\newcommand{\dEl}{\underline \El}
\newcommand{\tycodeDD}[1]{\ulcorner #1 \urcorner^{\name{DD}}}
\newcommand{\ElDD}{\El^{\name{DD}}}
\newcommand{\tycodeD}[1]{\ulcorner #1 \urcorner^{\name{D}}}
\newcommand{\ElD}{\El^{\name{D}}}
\newcommand{\dtype}{\,\name{dtype}}
\newcommand{\ftype}{\,\name{ftype}}
\newcommand{\prop}{\,\name{prop}}
\newcommand{\eqrel}{\,\name{eqrel}\,}
\newcommand{\yoneda}{\mathbf{y}}
\newcommand{\homclass}[1]{\mathscr{#1}}
\newcommand{\sheq}{\name{SE}}
\newcommand{\fpsh}[1]{{#1}^\dagger}%
\newcommand{\fpshadj}[1]{\underline{\alpha}_{#1}}
\newcommand{\lpsh}[1]{\widehat{#1}}
\newcommand{\rpsh}[1]{{#1}_\ddagger}
\newcommand{\rpshadj}[1]{\underline{\beta}_{#1}}
\newcommand{\ftrtm}[2]{{}^{#1} #2}
\newcommand{\ftrvar}[2]{\ftrtm{#1}{\var{#2}}}
\newcommand{\shp}{{\rbag}}%
\newcommand{\quotshp}{\mathrlap{\shp}{\circ}}
	\newcommand{\inquotshp}{\varsigma_\circ}
	\newcommand{\hatinquotshp}{\inquotshp}
\newcommand{\tmshp}[1]{\hatinquotshp(#1)}
\newcommand{\coshp}{\P}
\newcommand{\fibrepl}{\cat R}
\newcommand{\infibrepl}{\name{in}\cat R}
\newcommand{\fibcorepl}{\cat Q}
\newcommand{\outfibcorepl}{\name{out}\cat Q}
\newcommand{\catTop}{\name{Top}}
\newcommand{\cohpi}{{\sqcap}}
\newcommand{\cohdisc}{{\vartriangle}}
\newcommand{\cohfget}{{\sqcup}}
\newcommand{\cohcodisc}{{\triangledown}}
\newcommand{\cohpaths}{{\boxminus}}
\newcommand{\shirr}{\%}
\newcommand{\cohshirr}{{\Diamond}}
\newcommand{\rellist}[2]{{\left({#1}\middle|{#2}\right)}}
\newcommand{\reshlist}[2]{{\left\langle{#1}\middle|{#2}\right\rangle}}
\newcommand{\rlookup}[2]{{{#1} \cdot {#2}}}
\newcommand{\eqty}{{=}}
\newcommand{\sys}[1]{\paren{#1}}
\newcommand{\sysclauseb}[2]{#1\,?\,#2}
\newcommand{\sysclause}[2]{\parclr{#1}\,?\,#2}
\newcommand{\Glue}{\name{Glue}}
\newcommand{\Gluesys}[2]{\Glue \accol{#1 \leftarrow \sys{#2}}}
\newcommand{\Gluesysclause}[3]{#1\,?\,{#2, \ptwclr{#3}}}
\newcommand{\Gluesysclauseb}[3]{#1\,?\,{#2, #3}}
\newcommand{\glue}{\name{glue}}
\newcommand{\gluesys}[2]{\glue \accol{#1 \mapsfrom \sys{#2}}}
\newcommand{\unglue}{\name{unglue}}
\newcommand{\ungluesys}[1]{\unglue\,\sys{#1}}
\newcommand{\Gluetp}[4]{\Gluesys{#2}{\Gluesysclause{#1}{#3}{#4}}}
\newcommand{\gluetm}[3]{\gluesys{#2}{\sysclause{#1}{#3}}}
\newcommand{\ungluetm}[3]{\ungluesys{\sysclause{#1}{\ptwclr{#2}}}\,#3}
\newcommand{\Weld}{\name{Weld}}
\newcommand{\Weldsys}[2]{\Weld \accol{#1 \to \sys{#2}}}
\newcommand{\Weldsysclause}[3]{#1\,?\,{#2, \ptwclr{#3}}}
\newcommand{\Weldsysclauseb}[3]{#1\,?\,{#2, #3}}
\newcommand{\weld}{\name{weld}}
\newcommand{\weldsys}[1]{\weld\,\sys{#1}}
\newcommand{\dweld}{\underline{\weld}}
\newcommand{\dunweld}{\underline{\name{unweld}}}
\newcommand{\Weldtp}[4]{\Weldsys{#2}{\Weldsysclause{#1}{#3}{#4}}}
\newcommand{\weldtm}[3]{\weldsys{\sysclause{#1}{\ptwclr{#2}}}\,#3}
\newcommand{\parclr}[1]{\textcolor{purple}{#1}}
\newcommand{\ptwclr}[1]{\textcolor{cyan}{#1}}
\newcommand{\varclr}[1]{\textcolor{orange}{#1}}
\newcommand{\leftflat}[1]{\sharp \setminus {#1}}
\newcommand{\leftsharp}[1]{\coshp \setminus {#1}}
\newcommand{\judty}[1]{\sez \parclr{ #1 } \type}
\newcommand{\judtm}[2]{\sez #1 : \parclr{ #2 }}
\newcommand{\judtmeq}[3]{\sez #1 \jeq #2 : \parclr{ #3 }}
\newcommand{\idmod}{\mathbf{id}}
\newcommand{\parmod}{\mathbf{par}}
\newcommand{\shirrmod}{\mathbf{shi}}
\newcommand{\strmod}{\mathbf{str}}
\newcommand{\hocmod}{\mathbf{hoc}}
\newcommand{\ctxctu}[2]{#1 : \parclr{ #2 }}
\newcommand{\ctxpar}[2]{\parclr{ #1 }^\sharp : \parclr{ #2 }}
\newcommand{\ctxptw}[2]{\ptwclr{ #1 }^\coshp : \parclr{ #2 }}
\renewcommand{\ctxvar}[3]{\varclr{ #2 }^{#1} : \parclr{ #3 }}
\newcommand{\reshsep}{\shortmid}
\newcommand{\ctxresh}[3]{#1 \reshsep #2 : #3}
\newcommand{\tyresh}[2]{#1 \reshsep #2}
\newcommand{\ctxface}[1]{\parclr{#1}}
\newcommand{\Pipar}{\forall}
\newcommand{\Piptw}{\Pi^\coshp}
\newcommand{\Pivar}[1]{\Pi^{#1}}
\newcommand{\Sigmapar}{\exists}
\newcommand{\Sigmaptw}{\Sigma^\coshp}
\newcommand{\Sigmavar}[1]{\Sigma^{#1}}
\newcommand{\prodpar}[2]{\Pipar\paren{#1 : #2}}
\newcommand{\prodctu}[2]{\Pi\paren{#1 : #2}}
\newcommand{\prodptw}[2]{\Piptw\paren{\ptwclr{#1} : #2}}
\renewcommand{\prodvar}[3]{\Pivar{#1}\paren{\varclr{#2} : #3}}
\newcommand{\sumpar}[2]{\Sigmapar\paren{#1 : #2}}
\newcommand{\sumctu}[2]{\Sigma\paren{#1 : #2}}
\newcommand{\sumptw}[2]{\Sigmaptw\paren{\ptwclr{#1} : #2}}
\renewcommand{\sumvar}[3]{\Sigmavar{#1}\paren{\varclr{#2} : #3}}
\newcommand{\lamannotpar}[2]{\lambda (\parclr{#1}^\sharp : \parclr{#2})}
\newcommand{\lamannotvar}[3]{\lambda (\varclr{#2}^{#1} : \parclr{#3})}
\newcommand{\appar}[2]{#1\,\parclr{#2}^\sharp}
\newcommand{\apvar}[3]{#2\,\varclr{#3}^{#1}}
\newcommand{\apresh}[3]{{#2 \angles{#1} #3}}
\renewcommand{\pairvar}[3]{(\varclr{#2}^{#1}, #3)}
\newcommand{\pairpar}[2]{(\parclr{#1}^\sharp, #2)}
\newcommand{\pairresh}[3]{(#1 \reshsep #2, #3)}
\newcommand{\fstptw}[1]{\name{fst}^\coshp\,\parclr{#1}}
\newcommand{\sndptw}[1]{\name{snd}^\coshp\,#1}
\newcommand{\idpr}[2]{#1 \doteq #2}
\newcommand{\degax}{\name{degax}}
\newcommand{\degaxof}[1]{\degax\,\parclr{#1}}
\newcommand{\Nat}{\name{Nat}}
\renewcommand{\Size}{\name{Size}}
	\newcommand{\szero}{0_{\name{S}}}
	\newcommand{\ssuc}{{\uparrow}}
	\newcommand{\smax}[2]{#1 \sqcup #2}
	\newcommand{\sfix}{\name{fix}}
	\newcommand{\sfill}{\name{fill}}
	\newcommand{\sfillsys}[1]{\name{fill}\sys{#1}}
	\newcommand{\sfillsyssharp}[1]{\name{fill}_\sharp\sys{#1}}
	\newcommand{\sfillsysclause}[2]{#1\,?\,#2}
\newcommand{\leqfill}{\name{fill}_\leq}
\newcommand{\leqfillsys}[1]{\leqfill\sys{#1}}
\newcommand{\leqfillsysclause}[2]{\sysclause{#1}{#2}}
\renewcommand{\suc}{\name{s}}
\newcommand{\idtpresh}[4]{{#3 =_{\tyresh{#1}{#2}} #4}}
\newcommand{\forsub}[1]{#1 !}
\newcommand{\fix}{\name{fix}}
\newcommand{\cubecat}{\name{Cube}}
\newcommand{\bpcubecat}{\name{BPCube}}
	\newcommand{\facewkn}[1]{#1 / \novar}
	\newcommand{\bpdisc}{\widehat{\bpcubecat}_\Disc}
		\newcommand{\DTy}{\Ty^\Disc}
	\newcommand{\ddisc}[1]{\widehat{\dcubecat{#1}}_\Disc}
	\newcommand{\pshfib}[1]{\widehat{#1}_{\name{Fib}}}
\newcommand{\dcubecat}[1]{\cubecat_{#1}}
\newcommand{\pointcat}{{\name{Point}}}
\newcommand{\reshufflecat}{\name{Reshuffle}}
\newcommand{\reshufflecatll}{\name{Reshuffle}^{\circ \circ \bullet}}
\newcommand{\reshufflecatlr}{\name{Reshuffle}^{\circ \bullet \circ}}
\newcommand{\reshufflecatrr}{\name{Reshuffle}^{\bullet \circ \circ}}
\newcommand{\reshufflecatllr}{\name{Reshuffle}^{\circ \circ \bullet \circ}}
\newcommand{\catW}{{\cat W}}
\newcommand{\catV}{{\cat V}}
\newcommand{\RGcat}{{\name{RG}}}
	\newcommand{\RGdisc}{\widehat{\RGcat}_\Disc}
\newcommand{\novar}{\oslash}
\newcommand{\PSub}[2]{#1 \Rrightarrow #2}
\newcommand{\DSub}[2]{#1 \Rightarrow #2}
\newcommand{\Dsez}{\mathrel{\rhd}}
\newcommand{\sub}[1]{\brac{#1}}
\newcommand{\dsub}[1]{{\left[ #1 \right\rangle}}
\newcommand{\psub}[1]{\angles{#1}}
\newcommand{\ssub}[1]{\left\{ #1 \right \}}
\newcommand{\dlambda}{\underline \lambda}
\newcommand{\dap}{\underline \ap}
\newcommand{\emptysub}{\bullet}
\newcommand{\textdef}[1]{\textbf{#1}}
\newcommand{\var}[1]{{\mathbf{#1}}}
\newcommand{\wknvar}[1]{\wkn{\var{#1}}}
\newcommand{\wkn}[1]{\pi^{#1}}
\newcommand{\subext}{{+}}
\newcommand{\fresh}{\mathbf{fr}}
\newcommand{\fst}{\name{fst}}
\newcommand{\snd}{\name{snd}}
\newcommand{\textand}{\,\text{and}\,}
\newcommand{\thetitle}{Presheaf Models of Relational Modalities in Dependent Type Theory}
\newcommand{\theauthor}{Andreas Nuyts}
\newcommand{\theinstitution}{KU Leuven}
\newcommand{\thedepartment}{imec-DistriNet \\ Dept. of Computer Science}
\begin{document}
	\addtolength{\voffset}{-.5in}

\title{\thetitle}
\date{\today}
\author{\theauthor}

\begin{titlepage}
	\centering
	{\scshape\LARGE \theinstitution \par}
	{\scshape\Large \thedepartment\par}
	\vspace{1cm}
	{\scshape\Large Technical Report\par}
	\vspace{1.5cm}
	{\huge\bfseries \thetitle \par}
	\vspace{2cm}
	{\Large\itshape \theauthor \par}
	\vfill

	\vfill

	{\large \today\par}
\end{titlepage}

\setcounter{tocdepth}{2}
\tableofcontents

\chapter*{Introduction}
This report is an extension of \cite{bpcubicalsets}. The purpose of this text is to prove all technical aspects of our model for dependent type theory with parametric quantifiers \cite{paramdtt} and with degrees of relatedness \cite{reldtt}.

\section*{Overview of the report}
In \textbf{\cref{part:prerequisites}}, we review and develop some important prerequisites for modelling modal dependent type theory in presheaf categories.
\begin{itemize}
	\item In \textbf{\cref{ch:psh}}, we review the main concepts of categories with families \cite{dybjer-cwf}, and the standard presheaf model of dependent type theory \cite{Hofmann97-presheaf-chapter,psh-universes}, and we establish the notations we will use.
	
	\item In \textbf{\cref{ch:cwf-morphisms}}, we capture morphisms of CwFs, and natural transformations and adjunctions between them, in typing rules. We especially study morphisms of CwFs between presheaf categories, that arise from functors between the base categories. This chapter was modified since \cite{bpcubicalsets}: we now relax Dybjer's definition of CwF morphisms \cite{dybjer-cwf} by requiring that the empty context and context extension are preserved up to isomorphism, instead of on the nose.
\end{itemize}

In \textbf{\cref{part:paramdtt}}, we build a presheaf model for the type system ParamDTT (parametric dependent type theory) described in \cite{paramdtt}. We construct our model by defining the base category $\bpcubecat$ of \emph{bridge/path cubes} and adapting the general presheaf model over $\bpcubecat$ to suit our needs. Our model is heavily based on the models by Atkey, Ghani and Johann \cite{dtt-parametricity}, Huber \cite{huber}, Bezem, Coquand and Huber \cite{model-cubical}, Cohen, Coquand, Huber and M\"ortberg \cite{cubical}, Moulin \cite{moulin} and Bernardy, Coquand and Moulin \cite{moulin-param3}.
\begin{itemize}
	\item In \textbf{\cref{ch:bpcubecat}}, we introduce the category $\bpcubecat$ of bridge/path cubes --- which are cubes whose dimensions are all annotated as either bridge or path dimensions --- and its presheaf category $\widehat{\bpcubecat}$ of bridge/path cubical sets. There is a rich interaction with the category of cubical sets $\widehat{\cubecat}$ which we investigate more closely using ideas from axiomatic cohesion \cite{adjoint-logic}.
	\item In \textbf{\cref{ch:discreteness}}, we define discrete types and show that they form a model of dependent type theory. We prove some infrastructural results.
	\item In \textbf{\cref{ch:paramdtt}}, we give an interpretation of the typing rules of ParamDTT \cite{paramdtt} in $\widehat{\bpcubecat}$.
\end{itemize}

In \textbf{\cref{part:reldtt}}, which was not present in \cite{bpcubicalsets}, we build a presheaf model for the type system described in \cite{reldtt}. The main difference with the model from \cref{part:paramdtt} is that we now annotate cube dimensions with a degree of relatedness, rather than just `bridge' or `path'.
\begin{itemize}
	\item In \textbf{\cref{ch:dcubecat}}, we introduce the category $\dcubecat n$ of depth $n$ cubes, whose dimensions are annotated with a degree of relatedness $0 \leq i \leq n$, and its presheaf category $\widehat{\dcubecat n}$ of depth $n$ cubical sets. We generalize the interaction between $\widehat{\bpcubecat}$ and $\widehat{\cubecat}$ to a collection of CwF morphisms, which we call \emph{reshuffling functors}, between the categories of cubical sets of various depths.
	
	\item In \textbf{\cref{ch:fibrancy}}, we introduce the concept of a \emph{robust} notion of fibrancy and prove some results about this concept in arbitrary CwFs. After that, we define and study discreteness of depth $n$ cubical sets, which is a robust notion of fibrancy. %
	
	\item In \textbf{\cref{ch:reldtt}}, we will give an interpretation of the typing rules from \cite{reldtt} in the categories of depth $n$ cubical sets.
\end{itemize}

\section*{Acknowledgements}
\subsection*{Regarding \cref{part:paramdtt} (originally published in June 2017)}
Special thanks goes to Andrea Vezzosi. A cornerstone of this model was Andrea's insight that a shape modality on reflexive graphs is relevant to modelling parametricity. The other foundational ideas -- in particular the use of (cohesive-like) endofunctors of a category with families and the internalization of them as modalities -- were formed in discussion with him. He also injected some vital input during the formal elaboration process and pointed out the relevance of the $\Glue$-operator from cubical type theory \cite{cubical}.

Also thanks to Andreas Abel, Paolo Capriotti, Jesper Cockx, Dominique Devriese, Dan Licata and Sandro Stucki for many fruitful discussions.

\subsection*{Regarding \cref{part:reldtt}} Thanks to Paolo Capriotti for some relevant discussions, to Dominique Devriese, Andrea Vezzosi and Andreas Abel for being excellent sounding boards, and additionally to Andrea Vezzosi for his aid in making sense of definitional equality with type-checking time erasure of irrelevant subterms.

\subsection*{General acknowledgements}
The author holds a Ph.D. Fellowship from the Research Foundation - Flanders (FWO).

\part{Prerequisites}\label{part:prerequisites}
\chapter{The standard presheaf model of Martin-L\"of Type Theory}\label{ch:psh}
In this chapter, we introduce the notion of a category with families (CwF, \cite{dybjer-cwf}) and show that every presheaf category constitutes a CwF that supports various interesting type formers. Most of this has been shown by \cite{Hofmann97-presheaf-chapter,psh-universes}; the construction of $\Glue$-types has been shown by \cite{cubical}. The construction of the $\Weld$-type is new. 

\section{Categories with families}
We state the definition of a CwF without referring to the category $\name{Fam}$ of families. Instead, we will make use of the category of elements:
\begin{definition}
	Let $\cat C$ be a category and $A : \cat C \to \Set$ a functor. Then the category of elements $\int_{\cat C} A$ is the category whose
	\begin{itemize}
		\item objects are pairs $(c, a)$ where $c$ is an object of $\cat C$ and $a \in A(c)$,
		\item morphisms are pairs $(\vfi | a) : (c, a) \to (c', a')$ where $\vfi : c \to c'$ and $a' = A(\vfi)(a)$.
	\end{itemize}
	If the functor $A : \cat C\op \to \Set$ is contravariant, we define $\int_{\cat C} A := \paren{\int_{\cat C\op} A}\op$. Thus, its morphisms are pairs $(\vfi|a') : (c,a) \to (c', a')$ where $\vfi : c \to c'$ and $a = A(\vfi)(a')$.
\end{definition}
\begin{definition}\label{def:cwf}
	A \textbf{category with families} (CwF) \cite{dybjer-cwf} consists of:
	\begin{enumerate}
		\item A category $\Ctx$ whose objects we call \textbf{contexts}, and whose morphisms we call \textbf{substitutions}. We also write $\Gamma \ctx$ to say that $\Gamma$ is a context.
		\item A contravariant functor $\Ty : \Ctx\op \to \Set$. The elements $T \in \Ty(\Gamma)$ are called \textbf{types} over $\Gamma$ (also denoted $\Gamma \sez T \type$). The action $\Ty(\sigma) : \Ty(\Gamma) \to \Ty(\Delta)$ of a substitution $\sigma : \Delta \to \Gamma$ is denoted $\loch[\sigma]$, i.e. if $\Gamma \sez T \type$ then $\Delta \sez T[\sigma] \type$.
		\item A contravariant functor $\Tm : \paren{\int_{\Ctx} \Ty}\op \to \Set$ from the category of elements of $\Ty$ to $\Set$. The elements $t \in \Tm(\Gamma, T)$ are called \textbf{terms} of $T$ (also denoted $\Gamma \sez t : T$). The action $\Tm(\sigma | T) : \Tm(\Gamma, T) \to \Tm(\Delta, T[\sigma])$ of $(\sigma | T) : (\Delta, T[\sigma]) \to (\Gamma, T)$ is denoted $\loch[\sigma]$, i.e. if $\Gamma \sez t : T$, then $\Delta \sez t[\sigma] : T[\sigma]$.
		\item A terminal object $()$ of $\Ctx$ called the \textbf{empty context}.
		\item A \textbf{context extension} operation: if $\Gamma \ctx$ and $\Gamma \sez T \type$, then there is a context $\Gamma.T$, a substitution $\pi : \Gamma.T \to \Gamma$ and a term $\Gamma.T \sez \xi : T[\pi]$, such that for all $\Delta$, the map
		\begin{equation*}
			\Hom(\Delta,\Gamma.T) \to \Sigma(\sigma : \Hom(\Delta, \Gamma)). \Tm(\Delta, T[\sigma]) : \tau \mapsto (\pi  \tau , \xi[\tau])
		\end{equation*}
		is invertible. We call the inverse $\loch, \loch$.
		Note that for more precision and less readability, we could write $\pi_{\Gamma, T}$, $\xi_{\Gamma, T}$ and $(\loch, \loch)_{\Gamma, T}$.
		
		If $\sigma : \Delta \to \Gamma$, then we will write $\sigma \subext = (\sigma \pi, \xi) : \Delta.T[\sigma] \to \Gamma.T$.
	
		Sometimes, for clarity, we will use variable names: we write $\Gamma, \var x : T$ instead of $\Gamma.T$, and $\wknvar x : (\Gamma, \var x : T) \to \Gamma$ and $\Gamma, \var x : T \sez \var x : T[\wknvar x]$ for $\pi$ and $\xi$. Their joint inverse will be called $(\loch, \loch/\var x)$.
	\end{enumerate}
\end{definition}

\section{Presheaf categories are CwFs}\label{sec:psh-cwf}
This is proven elaborately in \cite{Hofmann97-presheaf-chapter}, though we give an unconventional treatment that views the Yoneda-embedding truly as an embedding, i.e. treating the base category $\catW$ as a fully faithful subcategory of $\widehat{\catW}$.

\subsection{Contexts} Pick a base category $\catW$. We call its objects \textdef{primitive contexts} and its morphisms \textdef{primitive substitutions}, denoted $\vfi : \PSub V W$. The presheaf category $\widehat \catW$ over $\catW$ is defined as the functor space $\Set^{\catW\op}$. We will use $\widehat \catW$ as $\Ctx$. A context $\Gamma$ is thus a \textdef{presheaf} over $\catW$, i.e. a functor $\Gamma : \catW\op \to \Set$. We denote its action on a primitive context $W$ as $\DSub W \Gamma$, and the elements of that set are called \textdef{defining substitutions} from $W$ to $\Gamma$. The action of $\Gamma$ on $\vfi : \PSub V W$ is denoted $\loch \vfi : (\DSub W \Gamma) \to (\DSub V \Gamma)$ and is called \textdef{restriction} by $\vfi$. A substitution $\sigma : \Delta \to \Gamma$ is then a natural transformation $\sigma \loch : (\DSub \loch \Delta) \to (\DSub \loch \Gamma)$.

\subsection{Types} A type $\Gamma \sez T \type$ is a \textdef{dependent presheaf} over $\Gamma$. In categorical language, this is a functor $T : \paren{\int_{\catW} \Gamma}\op \to \Set$. We denote its action on an object $(W, \gamma)$, where $\gamma : \DSub W \Gamma$, as $T \dsub \gamma$; the elements $t \in T \dsub \gamma$ will be called \textdef{defining terms} and denoted $W \Dsez t : T \dsub \gamma$. The action of $T$ on a morphism $(\vfi | \gamma) : (V, \gamma \vfi) \to (W, \gamma)$ is denoted $\loch \psub \vfi : T \dsub \gamma \to T \dsub{\gamma \vfi}$ and is again called \textdef{restriction}. We have now defined $\Ty(\Gamma)$.

Given a substitution $\sigma : \Delta \to \Gamma$, we need an action $\loch \sub \sigma : \Ty(\Gamma) \to \Ty(\Delta)$. This is defined by setting $T \sub \sigma \dsub \delta := T \dsub{\sigma \delta}$ and defining $\loch \psub \vfi^{T[\sigma]} : T \sub \sigma \dsub \delta \to T \sub \sigma \dsub{\delta \vfi}$ as $\loch \psub \vfi^T : T \dsub{\sigma \delta} \to T \dsub{\sigma \delta \vfi}$.

\subsection{Terms} A term $\Gamma \sez t : T$ consists of, for every $\gamma : \DSub W \Gamma$, a defining term $W \Dsez t \dsub \gamma : T \dsub \gamma$. Moreover, this must be natural in $W$, i.e. for every $\vfi : V \to W$, we require $t \dsub \gamma \psub \vfi = t \dsub{\gamma \vfi}$.

\subsection{The empty context} We set $(\DSub W ()) = \accol{\emptysub}$. The unique substitutions $\Gamma \to ()$ will also be denoted $\emptysub$.

\subsection{Context extension} We set $(\DSub W \Gamma.T) = \set{(\gamma, t)}{\gamma : \DSub W \Gamma \textand W \Dsez t : T \dsub \gamma}$, and $(\gamma, t)\vfi = (\gamma \vfi, t \psub \vfi)$. Of course $\pi(\gamma, t) = \gamma$ and $\xi \dsub{(\gamma, t)} = t$. In variable notation, we will write $(\gamma, t/\var x)$ for $(\gamma, t)$.

\subsection{Yoneda-embedding} There is a fully faithful embedding $\yoneda : \catW \to \widehat \catW$, called the Yoneda embedding, given by $(\DSub V {\yoneda W}) := (\PSub V W)$. Fully faithful means that $(\PSub V W) \cong (\yoneda V \to \yoneda W)$. We have moreover that $(\DSub V \Gamma) \cong (\yoneda V \to \Gamma)$ and $\Tm(\yoneda V, T) \cong T \dsub \id_V$ meaning that terms $\yoneda V \sez t : T$ correspond to defining terms $V \Dsez t \dsub \id : T \dsub \id$. We will omit notations for each of these isomorphisms, effectively treating them as equality.

\section{$\Sigma$-types}
\begin{definition}
	We say that a CwF \textdef{supports $\Sigma$-types} if it is closed under the following rules:
	\begin{equation}
		\inference{\Gamma \sez A \type \qquad \Gamma.A \sez B \type}{\Gamma \sez \Sigma A B \type}{} \qquad
		\inference{\Gamma \sez a : A \qquad \Gamma \sez b : B[\id, a]}{\Gamma \sez (a , b) : \Sigma A  B}{}
	\end{equation}
	\begin{equation}
		\inference{\Gamma \sez p : \Sigma A B}{\Gamma \sez \fst\,p : A}{} \qquad
		\inference{\Gamma \sez p : \Sigma A B}{\Gamma \sez \snd\,p : B[\id, \fst\,p]}{}
	\end{equation}
	where $(\fst, \snd)$ and $(\loch, \loch)$ are inverses and all four operations are natural in $\Gamma$:
	\begin{align*}
		(\Sigma A B)[\sigma] &= \Sigma (A[\sigma])(B[\sigma \subext]), \\
		(a , b)[\sigma] &= (a[\sigma] , b[\sigma]), \qquad
		(\fst\,p)[\sigma] = \fst(p[\sigma]), \qquad
		(\snd\,p)[\sigma] = \snd(p[\sigma]).
	\end{align*}
\end{definition}
\begin{proposition}
	Every presheaf category supports $\Sigma$-types.
\end{proposition}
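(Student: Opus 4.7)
The plan is to define every piece of structure pointwise, using the $\Sigma$-types of $\Set$, and then verify that the resulting data is natural in the primitive context $W$ and stable under substitution.

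First I would define the presheaf $\Sigma A B$ over $\Gamma$. Given $\gamma : \DSub W \Gamma$, I set
\begin{equation*}
	(\Sigma A B) \dsub \gamma := \set{(a,b)}{W \Dsez a : A \dsub \gamma \textand W \Dsez b : B \dsub{(\gamma, a)}},
\end{equation*}
with restriction $(a,b)\psub \vfi := (a \psub \vfi, b \psub \vfi)$ for $\vfi : \PSub V W$. Here the second component makes sense because $(\gamma, a)\vfi = (\gamma \vfi, a \psub \vfi)$ by the definition of context extension, so $b \psub \vfi$ lies in $B \dsub{(\gamma \vfi, a \psub \vfi)}$. Functoriality of this restriction follows at once from functoriality of $A$ and $B$.

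Next I would define the four operations. Pairing sends a pair of terms $\Gamma \sez a : A$ and $\Gamma \sez b : B[\id, a]$ to the term whose defining components are $(a,b) \dsub \gamma := (a \dsub \gamma, b \dsub \gamma)$; note that $b \dsub \gamma$ has type $B[\id, a]\dsub \gamma = B \dsub{(\gamma, a \dsub \gamma)}$, as required. The projections $\fst$ and $\snd$ are defined pointwise by the obvious set-theoretic projections. Each of these four operations is immediately natural in $W$ because the restriction of a pair is the pair of restrictions. That $(\fst, \snd)$ and $(\loch, \loch)$ are inverse follows pointwise from the corresponding fact about $\Sigma$-types in $\Set$.

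Finally I would check naturality in $\Gamma$, i.e.\ the four substitution equations. Given $\sigma : \Delta \to \Gamma$ and $\delta : \DSub W \Delta$, unfolding the definitions gives
\begin{equation*}
	(\Sigma A B)[\sigma] \dsub \delta = (\Sigma A B) \dsub{\sigma \delta} = \set{(a,b)}{a \in A \dsub{\sigma \delta} \textand b \in B \dsub{(\sigma \delta, a)}},
\end{equation*}
while $\Sigma(A[\sigma])(B[\sigma \subext]) \dsub \delta$ unfolds to the same set, since $(\sigma \subext)(\delta, a) = (\sigma \delta, a)$ by definition of $\subext$. The substitution equations for $(\loch, \loch)$, $\fst$ and $\snd$ then follow from their pointwise definitions.

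The calculations are entirely routine; the only mild obstacle is keeping the bookkeeping consistent, especially the identity $(\sigma \subext)(\delta, a) = (\sigma \delta, a)$, which is what ensures that the definition of $(\Sigma A B)[\sigma]$ really matches $\Sigma(A[\sigma])(B[\sigma \subext])$ on the nose rather than just up to isomorphism.
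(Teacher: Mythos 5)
Your proof is correct and follows essentially the same approach as the paper: define $(\Sigma A B)\dsub\gamma$ as the set of pairs $(a,b)$ with $b : B\dsub{\gamma,a}$, with componentwise restriction and pointwise pairing/projections. You have simply spelled out the routine naturality and substitution checks in a little more detail than the paper does.
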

\begin{proof}
	Given $\gamma : \DSub W \Gamma$, we set $(\Sigma A B) \dsub \gamma = \set{(a, b)}{W \Dsez a : A \dsub \gamma \textand W \Dsez b : B \dsub{\gamma, a}}$, and $(a, b) \psub \vfi = (a \psub \vfi, b \psub \vfi)$, which is natural in $\Gamma$.
	
	We define the pair term $(a, b)$ by $(a, b) \dsub \gamma = (a \dsub \gamma, b \dsub \gamma)$; $\fst\,p$ by $(\fst\,p) \dsub \gamma = p \dsub \gamma_1$ and $(\snd\,p) \dsub \gamma = p \dsub \gamma_2$. All of this is easily seen to be natural in $\Gamma$ and $W$.
\end{proof}

\section{$\Pi$-types}\label{sec:psh-pi-types}
\begin{definition}\label{def:pi-types}
	We say that a CwF \textdef{supports $\Pi$-types} if it is closed under the following rules:
	\begin{equation}
		\inference{\Gamma \sez A \type \qquad \Gamma.A \sez B \type}{\Gamma \sez \Pi A B \type}{} \qquad
		\inference{\Gamma.A \sez b : B}{\Gamma \sez \lambda b : \Pi A B}{} \qquad
		\inference{\Gamma \sez f : \Pi A B}{\Gamma.A \sez \ap f : B}{}
	\end{equation}
	such that $\ap$ and $\lambda$ are inverses and such that all three operations commute with substitution:
	\begin{align*}
		(\Pi A B)[\sigma] = \Pi (A[\sigma])(B[\sigma \subext]), \qquad
		(\lambda b)[\sigma] = \lambda(b[\sigma \subext]), \qquad
		(\ap f)[\sigma \subext] = \ap (f[\sigma])
	\end{align*}
	We will write $f\,a$ for $(\ap\,f)\sub{\id, a}$.
\end{definition}
\begin{proposition}
	Every presheaf category supports $\Pi$-types.
\end{proposition}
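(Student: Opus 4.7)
The plan is to apply the standard Hofmann construction, adapted to the notation of this excerpt. For $\Gamma \sez A \type$, $\Gamma.A \sez B \type$ and $\gamma : \DSub W \Gamma$, I would define $(\Pi A B) \dsub \gamma$ as the set of families $f$ assigning, to every primitive substitution $\vfi : \PSub V W$ and every defining term $V \Dsez a : A \dsub{\gamma \vfi}$, a defining term $V \Dsez f(\vfi, a) : B \dsub{\gamma \vfi, a}$, subject to the naturality requirement $f(\vfi \psi, a \psub \psi) = f(\vfi, a) \psub \psi$ for every further $\psi : \PSub U V$. Restriction $f \psub \vfi \in (\Pi A B) \dsub{\gamma \vfi}$ is given by $(f \psub \vfi)(\psi, a) := f(\vfi \psi, a)$; this is functorial in $\vfi$ by associativity of composition in $\catW$, and the substitution action along $\sigma : \Delta \to \Gamma$ is inherited pointwise via $T[\sigma]\dsub\delta = T\dsub{\sigma\delta}$, which already yields the first of the three substitution equations on the nose.

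For the introduction and elimination, given $b \in \Tm(\Gamma.A, B)$, I would set $(\lambda b) \dsub \gamma (\vfi, a) := b \dsub{\gamma \vfi, a}$; the internal naturality in $\psi$ is an immediate consequence of the naturality of $b$ as a term. Dually, given $f \in \Tm(\Gamma, \Pi A B)$ and $(\gamma, a) : \DSub W \Gamma.A$, set $(\ap f) \dsub{\gamma, a} := f \dsub \gamma (\id_W, a)$. To see that $\ap f$ really is a term, I would verify $(\ap f) \dsub{\gamma, a} \psub \vfi = (\ap f) \dsub{\gamma \vfi, a \psub \vfi}$: the left side unfolds, by the internal naturality of $f \dsub \gamma$, to $f \dsub \gamma (\vfi, a \psub \vfi)$, while the right side unfolds to $f \dsub{\gamma \vfi}(\id_V, a \psub \vfi) = (f \dsub \gamma \psub \vfi)(\id_V, a \psub \vfi) = f \dsub \gamma (\vfi, a \psub \vfi)$ after invoking the external naturality $f \dsub{\gamma \vfi} = f \dsub \gamma \psub \vfi$ of $f$ as a term.

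The remaining verifications are short. The $\beta$-law $\ap(\lambda b) = b$ follows by specializing $\vfi$ to $\id$ in the definitions, and the $\eta$-law $\lambda(\ap f) = f$ reduces, after unfolding, to $f \dsub{\gamma \vfi}(\id, a) = f \dsub \gamma(\vfi, a)$, which is once more a single application of the external naturality of $f$. The two remaining substitution equations $(\lambda b)[\sigma] = \lambda(b[\sigma \subext])$ and $(\ap f)[\sigma \subext] = \ap(f[\sigma])$ follow by straightforward unfolding.

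The main obstacle I anticipate is nothing more conceptual than keeping straight the two interacting flavours of naturality throughout: the \emph{internal} naturality in $\psi$ carried by each element of $(\Pi A B) \dsub \gamma$, and the \emph{external} naturality in $\gamma$ enjoyed by any term $\Gamma \sez f : \Pi A B$. Both play an essential role in showing $\ap f$ is a well-defined term and in the $\eta$-law; once the notational conventions are fixed, each required equation reduces mechanically to composition in $\catW$ together with one of these two naturality properties.
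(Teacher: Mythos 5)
Your proposal is correct and takes essentially the same approach as the paper: the paper defines $(\Pi A B)\dsub\gamma$ as the set of terms $\yoneda W.A[\gamma] \sez b : B[\gamma\subext]$, and your families $f(\vfi, a)$ with the internal naturality condition are precisely the explicit unfolding of what such a term over a representable context is (indeed the paper itself records this unfolding midway, observing that a defining term of $\Pi A B$ is fully determined by the values $f\psub\vfi \cdot a$). The remaining definitions of $\lambda$, $\ap$, restriction, and the $\beta$/$\eta$/substitution checks correspond step-for-step under this identification.
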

\begin{proof}
	Given $\gamma : \DSub W \Gamma$, we set $(\Pi A B) \dsub \gamma = \set{\dlambda b}{\yoneda W.A[\gamma] \sez b : B[\gamma \subext]}$, where the label $\dlambda$ is included for clarity but can be implemented as the identity function; and $(\dlambda b) \psub \vfi = \dlambda (b \sub{\vfi \subext})$.
	To see that this is natural in $\Gamma$, take $\sigma : \Delta \to \Gamma$ and $\delta : \DSub W \Delta$ and unfold the definitions of $(\Pi A B)\sub \sigma \dsub \delta$ and $(\Pi (A \sub \sigma) (B \sub{\sigma \subext})) \dsub \delta$.
	
	We define $\lambda b$ by $(\lambda b) \dsub \gamma = \dlambda(b[\gamma \subext])$. To see that $\lambda b$ is a term:
	\begin{equation}
		(\lambda b) \dsub \gamma \psub \vfi
		= (\dlambda (b \sub{\gamma\subext})) \psub \vfi
		= \dlambda (b \sub{\gamma\subext} \sub{\vfi\subext})
		= \dlambda (b \sub{(\gamma \vfi)\subext}) = (\dlambda b) \dsub{\gamma \vfi}.
	\end{equation}
	One easily checks that $\lambda$ is natural in $\Gamma$.
	
	Let $\dap$ be the inverse of $\dlambda$. Then $\dap$ satisfies $(\dap\,f)[\vfi \subext] = \dap\,(f \psub \vfi)$. Write $f \cdot a$ for $(\dap\,f) \dsub{\cwfpair{\id}{a}}$. We have
	\begin{equation}
		(\dap (f)) \dsub{\cwfpair{\vfi}{a}}
		= (\dap (f)) \sub{\vfi \subext} \dsub{\cwfpair{\id}{a}}
		= (\dap (f \psub{\vfi})) \dsub{\cwfpair{\id}{a}}
		= f \psub \vfi \cdot a,
	\end{equation}
	so that a defining term $W \Dsez f : (\Pi A B) \dsub \gamma$ is fully determined if we know $f \psub \vfi \cdot a$ for all $\vfi : \PSub V W$ and $V \Dsez a : A \dsub{\gamma \vfi}$.
	Similarly, a term $\Gamma \sez f : \Pi A B$ is fully determined if we know $f \dsub \gamma \cdot a$ for all $\gamma : \DSub V \Gamma$ and $V \Dsez a : A \dsub{\gamma}$.
	
	We define $\ap\,f$ by $(\ap\,f) \dsub{\gamma, a} = f \dsub \gamma \cdot a$. To see that this is a term:
	\begin{align}
		(f \dsub \gamma \cdot a) \psub \vfi
		&= (\dap\,(f \dsub \gamma)) \dsub{\id, a} \psub \vfi
		= (\dap\,(f \dsub \gamma)) \sub{\vfi \subext} \dsub{\id, a \psub \vfi}
		= (\dap\,(f \dsub{\gamma \vfi})) \dsub{\id, a \psub \vfi}
		\nn \\
		&= (f \dsub{\gamma \vfi}) \cdot (a \psub \vfi)
		= (\ap\,f) \dsub{(\gamma, a) \vfi}.
	\end{align}
	One easily checks that $\ap$ is natural in $\Gamma$.
	
	To see that $\ap\,\lambda b = b$, we can unfold
	\begin{equation}
		(\ap\,\lambda b) \dsub{\gamma, a}
		= (\lambda b) \dsub \gamma \cdot a
		= \dlambda (b \sub{\gamma \subext}) \cdot a
		= b \sub{\gamma \subext} \dsub{\id, a}
		= b \dsub{\gamma, a}.
	\end{equation}
	To see that $\lambda\,\ap\,f = f$:
	\begin{equation}
		(\lambda\,\ap\,f) \dsub \gamma \cdot a
		= \dlambda((\ap\,f) \sub{\gamma \subext}) \cdot a
		= (\ap\,f) \sub{\gamma \subext} \dsub{\id, a}
		= (\ap\,f) \dsub{\gamma, a}
		= f \dsub \gamma \cdot a. \qedhere
	\end{equation}
\end{proof}

\section{Identity type}\label{sec:cwf-idtp}
\begin{definition}
	A CwF \textdef{supports the identity type} if it is closed under the following rules:
	\begin{equation}
		\inference{
			\Gamma \sez A \type \\
			\Gamma \sez a, b : A
		}{\Gamma \sez a \idtp A b \type}{}
		\qquad
		\inference{
			\Gamma \sez a : A
		}{\Gamma \sez \refl\,a : a \idtp A a}{}
		\qquad
		\inference{
			\Gamma \sez a, b : A \\
			\Gamma, \var y : A, \var w : (a[\wknvar y] \idtp{A[\wknvar y]} \var y) \sez C \type \\
			\Gamma \sez e : a \idtp A b \\
			\Gamma \sez c : C[\id, a / \var y, \refl\,a / \var w]
		}{\Gamma \sez \J(a, b, \var y.\var w.C, e, c) : C[\id, b / \var y, e / \var w]}{}
	\end{equation}
	such that all three operations commute with substitution:
	\begin{align*}
		(a \idtp A b)[\sigma] &= \paren{a[\sigma] \idtp{A[\sigma]} b[\sigma]} \\
		(\refl\,a)[\sigma] &= \refl\,(a[\sigma]) \\
		\J(a, b, \var y.\var w.C, p, c)[\sigma] &= \J(a[\sigma], b[\sigma], \var y.\var w.C[\sigma \wknvar y \wknvar w, \var y/\var y, \var w/\var w], p[\sigma], c[\sigma])
	\end{align*}
	and such that $\J(a, a, \var y.\var w.C, \refl\,a, c) = c$.
\end{definition}
\begin{proposition}
	Every presheaf category supports the identity type.\footnote{The identity type in the standard presheaf model, expresses equality of mathematical objects. It supports the reflection rule and axiom K, and not the univalence axiom.}
\end{proposition}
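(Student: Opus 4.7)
The plan is to interpret the identity type as an extensional, proof-irrelevant construction, where $(a \idtp A b) \dsub \gamma$ is a singleton exactly when the two defining terms already coincide, and empty otherwise. Concretely, I would set $(a \idtp A b) \dsub \gamma = \{\duh\}$ when $a \dsub \gamma = b \dsub \gamma$ in $A \dsub \gamma$, and $\emptyset$ otherwise, with the unique possible restriction maps. Stability under $\loch \psub \vfi$ holds because restriction preserves equalities; stability under $\loch \sub \sigma$ is immediate from the formula $T \sub \sigma \dsub \delta = T \dsub{\sigma \delta}$. The term $\refl\,a$ is defined to be $\duh$ at every stage, and is again obviously stable under substitution.

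For $\J$, the key observation I would exploit is that the mere existence of a term $\Gamma \sez e : a \idtp A b$ forces, pointwise at each $\gamma : \DSub W \Gamma$, that $a \dsub \gamma = b \dsub \gamma$; and since terms over $\Gamma$ are determined by their families of defining terms, this promotes to an honest equality $a = b$ of terms over $\Gamma$. Moreover $e$ must coincide with $\refl\,a$, again because the relevant fibre is a singleton. Hence the substitutions $[\id, a/\var y, \refl\,a/\var w]$ and $[\id, b/\var y, e/\var w]$ applied to $C$ yield literally the same type, and I can simply define $\J(a, b, \var y.\var w.C, e, c) := c$. The computation rule $\J(a, a, \var y.\var w.C, \refl\,a, c) = c$ then holds definitionally, and commutation with substitution by $\sigma$ reduces to the analogous property of $c$.

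The main obstacle, though a mild one, is verifying that the pointwise coincidence argument underpinning $\J$ genuinely yields equality of terms (and of the implied substitutions) rather than merely of their components at each stage. This reduces to carefully unfolding the definition of a term as a natural family of defining terms, and then noting that the two substitutions $[\id, a/\var y, \refl\,a/\var w]$ and $[\id, b/\var y, e/\var w]$ become literally equal once one has established $a = b$ and $e = \refl\,a$ as terms. Everything else --- the commutation of formation, $\refl$, and $\J$ with substitution --- is routine bookkeeping using the formulas defining the presheaf CwF structure.
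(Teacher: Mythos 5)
Your construction is exactly the one in the paper: the identity type is a proposition that is inhabited at stage $\gamma$ precisely when $a\dsub\gamma = b\dsub\gamma$, $\refl$ returns the unique witness, and $\J$ is defined to be $c$ (pointwise in the paper, globally in your version, which is the same thing). The argument that $e$ forces $a = b$ and $e = \refl\,a$ as terms, making the two substitutions literally equal, is a correct and slightly more explicit version of the paper's remark that ``$e$ witnesses that $a = b$.''
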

\begin{proof}
	We set $(a \idtp A b)\dsub \gamma$ equal to $\accol \star$ if $a \dsub \gamma = b \dsub \gamma$, and make it empty otherwise. We define $(\refl\,a)\dsub \gamma = \star$ and $\J(a, b, \var y.\var w.C, e, c) \dsub \gamma = c \dsub \gamma$, which is well-typed since $e$ witnesses that $a = b$ and all terms of the identity type are equal.
\end{proof}

\section{Universes}
Assume a functor $\Ty^* : \Ctx\op \to \Set$, and write $\Gamma \sez T \type^*$ for $T \in \Ty^*(\Gamma)$.
\begin{definition}
	We say that a CwF \textdef{supports a universe} for $\Ty^*$ if it is closed under the following rules:
	\begin{equation}
		\inference{\Gamma \ctx}{\Gamma \sez \uni{}^* \type}{} \qquad
		\inference{\Gamma \sez T : \uni{}^*}{\Gamma \sez \El\,T \type^*}{} \qquad
		\inference{\Gamma \sez T \type^*}{\Gamma \sez \tycode T : \uni{}^*}{}
	\end{equation}
	where $\El$ and $\tycode \loch$ are inverses and all three operators commute with substitution:
	\begin{equation}
		\uni{}^*[\sigma] = \uni{}^*, \qquad
		(\El\,T)[\sigma] = \El(T[\sigma]), \qquad
		\tycode T[\sigma] = \tycode{T[\sigma]}.
	\end{equation}
	Note that here, we are switching between term substitution and *-type substitution.
\end{definition}
We assume that the metatheory has Grothendieck universes, i.e. there is a chain
\begin{equation}
	\Set_0 \in \Set_1 \in \Set_2 \in \ldots
\end{equation}
such that every $\Set_k$ is a model of ZF set theory. We say that a type $T \in \Ty(\Gamma)$ has level $k$ if $T \dsub \gamma \in \Set_k$ for every $\gamma : \DSub W \Gamma$. Let $\Ty_k(\Gamma)$ be the set of all level $k$ types over $\Gamma$; this constitutes a functor $\Ty_k : \widehat \catW \op \to \Set_{k+1}$. Write $\Gamma \sez T \type_k$ for $T \in \Ty_k(\Gamma)$.
\begin{proposition}\label{thm:unipsh}
	Every presheaf category (over a base category of level 0) supports a universe $\uni k$ for $\Ty_k$ that is itself of level $k+1$.
\end{proposition}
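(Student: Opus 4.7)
The plan is to carry out the Hofmann--Streicher construction, exploiting the fact that, through Yoneda, every dependent presheaf over $\yoneda W$ is equivalent data to a level-$k$ type depending on an arbitrary primitive context mapping into $W$. Concretely, for $\gamma : \DSub W \Gamma$ I will set
\[
	\uni k \dsub \gamma \;:=\; \Ty_k(\yoneda W),
\]
where the right-hand side depends only on the domain $W$ of $\gamma$, and restriction along $\vfi : \PSub V W$ is pullback of dependent presheaves along $\yoneda \vfi$. Because a dependent presheaf on $\yoneda W$ is a functor out of $\int_{\catW}\yoneda W$, whose set of objects $\bigsqcup_V (\PSub V W)$ lies in $\Set_0$ since the base category is of level $0$, the whole set $\Ty_k(\yoneda W)$ lies in $\Set_{k+1}$, giving the claimed level of $\uni k$. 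Substitution invariance $\uni k[\sigma] = \uni k$ is immediate: the right-hand side of the defining formula depends only on the primitive context $W$.

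Next I will define the coding and decoding operators. Given $\Gamma \sez T \type_k$, set
\[
	\tycode T \dsub \gamma \;:=\; T[\gamma] \;\in\; \Ty_k(\yoneda W),
\]
which is manifestly a $\uni k$-term (naturality in $W$ amounts to $T[\gamma\vfi] = T[\gamma][\vfi]$, which is exactly restriction in $\uni k$). Conversely, given $\Gamma \sez T : \uni k$, so that $T \dsub \gamma$ is a dependent presheaf on $\yoneda W$, evaluate it at the generic element:
\[
	(\El T) \dsub \gamma \;:=\; T \dsub \gamma \dsub{\id_W}.
\]
To check that this is a dependent presheaf over $\Gamma$, I will use term-naturality of $T$: for $\vfi : \PSub V W$, $T \dsub{\gamma\vfi} = T \dsub \gamma [\yoneda\vfi]$, so $T \dsub{\gamma\vfi} \dsub{\id_V} = T \dsub \gamma \dsub \vfi$, and the required restriction map $(\El T) \dsub \gamma \to (\El T) \dsub{\gamma\vfi}$ is supplied by the internal restriction $\loch\psub \vfi$ of the dependent presheaf $T\dsub \gamma$.

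The verification that $\El$ and $\tycode{\loch}$ are mutually inverse is the step that requires care, and I expect it to be the main obstacle — not because it is deep, but because the equality $\tycode{\El T} = T$ is an equality of dependent presheaves on $\yoneda W$ and so must be checked on all objects and all morphisms of the category of elements. For one direction, $\El\,\tycode T \dsub \gamma = \tycode T\dsub \gamma \dsub{\id_W} = T[\gamma]\dsub{\id_W} = T \dsub \gamma$ on the nose. For the other, I unfold
\[
	(\tycode{\El T}\dsub \gamma) \dsub \vfi = (\El T)[\gamma] \dsub \vfi = (\El T)\dsub{\gamma\vfi} = T\dsub{\gamma\vfi}\dsub{\id_V} = T\dsub \gamma\dsub \vfi,
\]
using term-naturality of $T$ at the last step, and a parallel computation using naturality along a second morphism $\psi : \PSub U V$ shows that the restriction actions of the two dependent presheaves agree; Yoneda then justifies reading this as an equality of presheaves on $\yoneda W$.

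Finally, all three substitution laws are immediate from the definitions: $\uni k[\sigma]\dsub \delta = \uni k \dsub{\sigma\delta} = \Ty_k(\yoneda W) = \uni k\dsub \delta$; $\tycode T[\sigma]\dsub \delta = \tycode T\dsub{\sigma\delta} = T[\sigma\delta] = T[\sigma][\delta] = \tycode{T[\sigma]}\dsub \delta$; and $(\El T)[\sigma]\dsub \delta = (\El T)\dsub{\sigma\delta} = T\dsub{\sigma\delta}\dsub{\id_W} = T[\sigma]\dsub \delta\dsub{\id_W} = \El(T[\sigma])\dsub \delta$. With the inversion laws checked, this completes the construction.
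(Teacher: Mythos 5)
Your construction is the Hofmann--Streicher one and matches the paper's proof step by step: $\uni k \dsub \gamma := \Ty_k(\yoneda W)$ with restriction by pullback, $\tycode T \dsub \gamma := T[\gamma]$, $\El T \dsub \gamma := T\dsub\gamma\dsub{\id_W}$, followed by the round-trip verifications. The only cosmetic difference is that the paper inserts ignorable labels $\dtycode{\loch}$ and $\dEl$ for readability while you work with $\Ty_k(\yoneda W)$ directly, and you supply a short justification for the level-$(k+1)$ claim (counting objects of $\int_{\catW}\yoneda W$) that the paper leaves implicit.
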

\begin{proof}
	Given $\gamma : \DSub W \Gamma$, we set $\uni k \dsub \gamma = \set{\dtycode T}{\yoneda W \sez T \type_k} \cong \Ty_k(\yoneda W) \in \Set_{k+1}$. Again, $\dtycode \loch$ is just a label which we add for readability. We set $\dtycode T \psub \vfi = \dtycode{T \sub \vfi}$. Naturality in $\Gamma$ is immediate, as the definition of $\uni k$ does not refer to either $\Gamma$ or $\gamma$.
	
	Given $\Gamma \sez T \type_k$, we define $\Gamma \sez \tycode T : \uni k$ by $\tycode T \dsub \gamma = \dtycode{T \sub \gamma}$, which satisfies
	\begin{equation}
		\tycode T \dsub \gamma \psub \vfi = \dtycode{T \sub \gamma} \psub \vfi = \dtycode{T \sub \gamma \sub \vfi} = \tycode T \dsub{\gamma \vfi}.
	\end{equation}
	
	Write $\dEl$ for the inverse of $\dtycode \loch$. It satisfies $\dEl\,A \sub \vfi = \dEl(A \psub \vfi)$. Given $\Gamma \sez A : \uni k$, we define $\Gamma \sez \El\,A \type_k$ by $\El\,A \dsub \gamma = \dEl(A \dsub \gamma) \dsub \id$. Given $\vfi : V \to W$ and $W \Dsez a : \El\,A \dsub \gamma$, we set $a \psub \vfi^{\El\,A} : \El\,A \dsub{\gamma \vfi}$ equal to $a \psub \vfi^{\dEl(A \dsub \gamma)} : \dEl(A \dsub \gamma) \dsub \vfi$. This is well-typed, because
	\begin{equation}
		\dEl(A \dsub \gamma) \dsub \vfi
		= \dEl(A \dsub \gamma) \sub \vfi \dsub \id
		= \dEl(A \dsub{\gamma \vfi}) \dsub \id
		= \El\,A\dsub{\gamma\vfi}.
	\end{equation}
	To see that $\El\,\tycode T = T$:
	\begin{equation}
		\El\,\tycode T \dsub \gamma
		= \dEl (\tycode T \dsub \gamma) \dsub \id
		= \dEl (\dtycode{T \sub \gamma}) \dsub \id = T \dsub \gamma 
	\end{equation}
	One can check that the substitution operations of $\El\,\tycode T$ and $T$ also match.
	
	Conversely, we show that $\tycode{\El\,A} = A$. To that end, we unpack both completely by applying $\dEl(\loch \dsub \gamma) \dsub \vfi$:
	\begin{align}
		\dEl(\tycode{\El\,A} \dsub \gamma) \dsub \vfi
		&= \dEl(\dtycode{\El\,A \sub \gamma}) \dsub \vfi
		= \El\,A \dsub{\gamma \vfi}
		= \dEl(A \dsub{\gamma \vfi}) \dsub \id, \\
		\dEl(A \dsub \gamma) \dsub \vfi
		&= \dEl(A \dsub \gamma) \sub \vfi \dsub \id
		= \dEl(A \dsub{\gamma \vfi}) \dsub \id. \qedhere
	\end{align}
\end{proof}
We say that a type $T \in \Ty(\Gamma)$ is a \textdef{proposition} if for every $\gamma : \DSub W \Gamma$, we have $T \dsub \gamma \subseteq \accol \star$. We denote this as $T \in \Prop(\Gamma)$ or $\Gamma \sez T \prop$.
\begin{proposition}
	Every presheaf category (over a base category of level 0) supports a universe $\Prop$ of propositions that is itself of level 0.
\end{proposition}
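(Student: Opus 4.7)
The plan is to mimic the proof of \cref{thm:unipsh} verbatim, but restrict to the sub-collection of propositions and verify the tighter level bound. Concretely, I would set
\begin{equation*}
	\Prop \dsub \gamma = \set{\dtycode T}{\yoneda W \sez T \prop} \cong \Prop(\yoneda W),
\end{equation*}
with restriction $\dtycode T \psub \vfi = \dtycode{T \sub \vfi}$, which makes sense because propositionality is preserved by substitution: if $T \dsub \delta \subseteq \{\star\}$ for every $\delta : \DSub V \Gamma$, then in particular $T \sub \vfi \dsub \delta = T \dsub{\vfi \delta} \subseteq \{\star\}$. The operations $\tycode\loch$ and $\El$ are defined by the same formulas as for $\uni k$, and naturality, commutation with substitution, and the inverse laws $\El\,\tycode T = T$ and $\tycode{\El\,A} = A$ then carry over from \cref{thm:unipsh} without change; one only needs to observe that $\tycode T : \Prop$ is well-typed when $\Gamma \sez T \prop$, and that $\El\,A$ is a proposition whenever $\Gamma \sez A : \Prop$, both of which are immediate from the definitions since $\El\,A \dsub \gamma = \dEl(A \dsub \gamma) \dsub \id \subseteq \{\star\}$.

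The main obstacle is the sharper level claim: one must check that $\Prop(\yoneda W) \in \Set_0$, rather than merely in $\Set_1$. The key observation is that a proposition $T$ over $\yoneda W$ is completely determined by the set
\begin{equation*}
	S_T = \set{\vfi : \PSub V W}{T \dsub \vfi = \{\star\}},
\end{equation*}
and that $S_T$ is necessarily a sieve on $W$: by functoriality of $T$, whenever $\vfi \in S_T$ and $\psi : \PSub U V$, the restriction $\star \psub \psi \in T \dsub{\vfi \psi}$ forces $\vfi \psi \in S_T$. Conversely, every sieve $S$ on $W$ determines a proposition by $T \dsub \vfi = \{\star\}$ iff $\vfi \in S$. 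Hence $\Prop(\yoneda W)$ is in bijection with the set of sieves on $W$, which is a subset of the powerset of $\coprod_V \Hom_\catW(V, W)$. Since $\catW$ is of level $0$ and $\Set_0$ is a model of ZF (hence closed under powersets, coproducts, and subsets), this entire construction stays inside $\Set_0$.

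The remaining steps --- naturality in $\Gamma$, behaviour under substitution, and verifying the two inverse laws --- reduce to the exact same calculations as in the proof of \cref{thm:unipsh}, so I would simply reference that proof rather than reproducing them. Hence the whole argument collapses to: (1) specialize the universe construction to the sub-functor $\Prop \subseteq \Ty_0$; (2) check that $\tycode\loch$ and $\El$ restrict properly; (3) establish the sieve characterization to obtain the level $0$ bound.
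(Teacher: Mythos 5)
Your proof is correct, and it does more than the paper, whose argument for this proposition consists solely of the words ``Completely analogous.'' The construction of $\Prop$, $\tycode\loch$ and $\El$ by specializing the $\uni k$ construction to propositions is indeed what ``analogous'' means, and your checks that propositionality is preserved under substitution and that $\El\,A$ is a proposition when $A : \Prop$ are the right ones. But you correctly observe that the level bound is the one place the analogy does \emph{not} transfer: running the argument of \cref{thm:unipsh} verbatim would only put $\Prop(\yoneda W) \in \Set_1$. Your sieve characterization is exactly how to close this gap: a proposition over $\yoneda W$ determines and is determined by a sieve on $W$ (the functoriality of $T$ forces downward closure, and conversely any sieve yields a proposition), and the set of sieves on $W$ sits inside the powerset of $\coprod_V \Hom_\catW(V,W)$, which is of level $0$. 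This is precisely the statement that $\Prop$ is the subobject classifier of the presheaf topos, which the paper uses implicitly later on (``Since $\Prop$ is the subobject quantifier of $\widehat\bpcubecat$'') without ever justifying the level claim. So your argument is both sound and genuinely supplies a step the paper omits.
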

\begin{proof}
	Completely analogous.
\end{proof}
One easily shows that $\Prop$ is closed under $\top$, $\bot$, $\wedge$ and $\vee$. It also clearly contains the identity types. There is an absurd eliminator for $\bot$ and we can construct systems to eliminate proofs of $\vee$.

\section{Glueing}
\begin{definition}
	A CwF \textdef{supports glueing} if it is closed under the following rules:
	\begin{equation*}
		\inference{
			\Gamma \sez P \prop \\
			\Gamma.P \sez T \type \\
			\Gamma.P \sez f : T \to A[\pi] \\
			\Gamma \sez A \type
		}{\Gamma \sez \Gluesys{A}{\Gluesysclauseb{P}{T}{f}} \type}{}, \qquad
		\inference{
			\Gamma \sez \Gluesys{A}{\Gluesysclauseb{P}{T}{f}} \type \\
			\Gamma \sez a : A \\
			\Gamma.P \sez t : T \\
			\Gamma.P \sez ft = a[\pi] : T
		}{\Gamma \sez \gluesys{a}{\sysclauseb{P}{t}} : \Gluesys{A}{\Gluesysclauseb{P}{T}{f}}}{},
	\end{equation*}
	\begin{equation*}
		\inference{
			\Gamma \sez b : \Gluesys{A}{\Gluesysclauseb{P}{T}{f}}
		}{\Gamma \sez \ungluesys{\sysclauseb P f} b : A}{},
	\end{equation*}
	naturally in $\Gamma$, such that
	\begin{align*}
		\Gluesys{A}{\Gluesysclauseb{\top}{T}{f}} &= T[\cwfpair \id \star], \\
		\gluesys{a}{\sysclauseb{\top}{t}} &= t[\cwfpair \id \star], \\
		\ungluesys{\sysclauseb \top f} b &= f[\cwfpair \id \star]\,b, \\
		\ungluesys{\sysclauseb P f} (\gluesys{a}{\sysclauseb{P}{t}}) &= a, \\
		\gluesys{\ungluesys{\sysclauseb P f} b}{\sysclauseb{P}{b[\pi]}} &= b.
	\end{align*}
\end{definition}
\begin{proposition}
	Every presheaf category supports glueing.
\end{proposition}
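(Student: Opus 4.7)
The plan is to build $\Gluesys{A}{\Gluesysclauseb{P}{T}{f}}$ as the presheaf-theoretic pullback of $f$ along the canonical restriction $A \to A|_P$, made explicit as follows. For $\gamma : \DSub W \Gamma$, I would set a defining term to be a pair $(a, t)$, where $a$ is a defining term of $A$ at $\gamma$ and $t$ is a \emph{natural} family assigning, to each $\vfi : \PSub V W$ and each $p \in P \dsub{\gamma \vfi}$, an element $t(\vfi, p) \in T \dsub{\gamma\vfi, p}$, subject to the compatibility condition $f \dsub{\gamma\vfi, p}(t(\vfi,p)) = a \psub \vfi$. Equivalently, $t$ is a term of $T$ over the sub-presheaf $\yoneda W . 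P[\gamma] \hookrightarrow \yoneda W$ obtained from the proposition $P[\gamma]$. Restriction by $\psi : \PSub U V$ acts coordinate-wise: $(a, t) \psub \psi = (a \psub \psi, t')$ with $t'(\vfi, p) = t(\psi\vfi, p)$.

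Next I would define the term constructors. For $\gluesys{a}{\sysclauseb P t}$ at $\gamma$, I take the pair $(a \dsub \gamma,\, (\vfi, p) \mapsto t \dsub{\gamma\vfi, p})$; naturality of this family reduces to naturality of $t$ as a term over $\Gamma.P$, and the compatibility condition is precisely $ft = a[\pi]$. For $\ungluesys{\sysclauseb P f} b$, I project on the first component. Naturality in $\Gamma$ (commutation with substitution) is immediate from the pointwise definition.

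Most of the required equations then follow by direct unfolding: that $\unglue \circ \glue$ is the identity is just first-projection of a pair, and $\glue \circ \unglue = \id$ uses that, given $b = (a, t)$, one has $b[\pi] \dsub{\gamma, p} = t(\id, p)$ so that pairing $\unglue\,b$ with $b[\pi]$ reconstructs $b$; the equation for $\unglue$ in the $\top$ case likewise follows from the compatibility condition evaluated at $(\id, \star)$.

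The main obstacle is the two \emph{strict} equations in the $P = \top$ case, namely $\Gluesys{A}{\Gluesysclauseb{\top}{T}{f}} = T[\cwfpair \id \star]$ and $\gluesys{a}{\sysclauseb \top t} = t[\cwfpair \id \star]$. With the naive pair-based definition one only gets a canonical bijection: when $P = \top$, every $P \dsub{\gamma\vfi} = \{\star\}$, so by naturality $t$ is determined by $t(\id, \star) \in T \dsub{\gamma, \star}$, and by compatibility $a = f \dsub{\gamma, \star}(t(\id, \star))$, making the pair redundant. To force definitional equality, I would refine the construction by a case distinction on whether the clause label is $\top$ (taking $\Glue \dsub \gamma := T \dsub{\gamma, \star}$ outright in that case, and defining $\glue$/$\unglue$ accordingly via $t(\id, \star)$ and $f$), analogously to the strictification used by Cohen--Coquand--Huber--Mörtberg. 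The remaining bookkeeping is just checking that restriction and substitution respect this case split, which is routine because the condition "$P = \top$" is stable under substitution.
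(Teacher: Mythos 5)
Your core construction---pairs $(a,t)$ where $a$ lies in $A$ and $t$ is a term of $T$ over the subobject of $\yoneda W$ cut out by $P[\gamma]$, with the compatibility condition $f(t) = a[\pi]$---is the right starting point, and you correctly diagnose that this naive definition yields only a canonical isomorphism with $T[\cwfpair \id \star]$ when the proposition is $\top$, not the required equality. The gap lies in the strictification step. You propose a global case distinction on whether the clause label is $\top$ and assert the bookkeeping is routine ``because the condition $P=\top$ is stable under substitution.'' That stability only holds in one direction: $P = \top$ does imply $P[\sigma] = \top$, but a non-trivial proposition can become trivially true after substitution, and in that case the global definition breaks naturality. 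Concretely, if $P \neq \top$ yet $P[\sigma] = \top$, then $\Gluesys{A}{\Gluesysclauseb{P}{T}{f}}[\sigma]$ at a defining substitution $\delta$ is, under your scheme, still a pair-type element, while applying the type former to the substituted data and then the strict $\top$-equation demands an element of $T\dsub{\sigma\delta, \star}$; these are canonically bijective but not equal, so $\Glue$ would fail to commute with substitution.

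What the strictification actually needs is a \emph{pointwise} case distinction on the value $P\dsub\gamma$ rather than on $P$ globally: set $G\dsub\gamma = T\dsub{\gamma,\star}$ whenever $P\dsub\gamma = \accol\star$ and use the pair definition whenever $P\dsub\gamma = \eset$. This commutes with substitution on the nose since $P[\sigma]\dsub\delta = P\dsub{\sigma\delta}$. The price is that restriction must now be \emph{defined}, not merely checked, in the cross-case situation where $P\dsub\gamma = \eset$ but $P\dsub{\gamma\vfi} = \accol\star$: given a pair $(a \mapsfrom t)$, the restriction $(a \mapsfrom t)\psub\vfi$ should be $t\dsub{\vfi,\star}$. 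One must then verify that this cross-case restriction composes correctly with the same-case ones, and re-examine the $\eta$-rule, which in the pointwise setting requires showing $b\sub{\gamma\pi} = t$ as terms over $\yoneda W.P[\gamma]$; this is the genuinely nontrivial piece that your ``routine bookkeeping'' glosses over.
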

\begin{proof}
	We assume given the prerequisites of the type former. Write $G = \Gluesys{A}{\Gluesysclauseb{P}{T}{f}}$.
	\begin{description}
		\item[The type] We define $G \dsub \gamma$ by case distinction on $P \dsub \gamma$:
		\begin{enumerate}
			\item If $P \dsub \gamma = \accol \star$, then we set $G \dsub \gamma = T \dsub{\gamma, \star}$.
			\item If $P \dsub \gamma = \eset$, we let $P \dsub \gamma$ be the set of pairs $(a \mapsfrom t)$ where $a : A \dsub \gamma$ and $\yoneda W.P[\gamma] \sez t : T[\gamma \subext]$, such that for every $\vfi$ for which $P \dsub{\gamma \vfi} = \accol \star$, the application $f \dsub{\gamma \vfi, \star} \cdot t \dsub{\vfi, \star}$ is equal to $a \psub \vfi$.
		\end{enumerate}
		Given $g : G \dsub \gamma$ and $\vfi : \PSub V W$, we need to define $g \psub \vfi$.
		\begin{enumerate}
			\item If $P \dsub \gamma = P \dsub{\gamma \vfi} = \accol \star$, then we use the definition from $T$.
			\item If $P \dsub \gamma = P \dsub{\gamma \vfi} = \eset$, then we set $(a \mapsfrom t) \psub \vfi = (a \psub \vfi \mapsfrom t[\vfi \subext])$.
			\item If $P \dsub \gamma = \eset$ and $P \dsub{\gamma \vfi} = \accol \star$, then we set $(a \mapsfrom t) \psub \vfi = t \dsub{\vfi, \star}$.
		\end{enumerate}
		One can check that this definition preserves composition and identity, and that this entire construction is natural in $\Gamma$.
		
		\item[The constructor] Write $g = \gluesys{a}{\sysclauseb{P}{t}}$. We define $g \dsub \gamma$ by case distinction on $P \dsub \gamma$:
		\begin{enumerate}
			\item If $P \dsub \gamma = \accol \star$, then we set $g \dsub \gamma = t \dsub{\gamma, \star}$.
			\item If $P \dsub \gamma = \eset$, then we set $g \dsub \gamma = (a \dsub \gamma \mapsfrom t \sub{\gamma \subext})$.
		\end{enumerate}
		By case distinction, it is easy to check that this is natural in the domain $W$ of $\gamma$. Naturality in $\Gamma$ is straightforward.
		
		\item[The eliminator] Write $u = \ungluesys{\sysclauseb P f} b$. We define $u \dsub \gamma$ by case distinction on $P \dsub \gamma$:
		\begin{enumerate}
			\item If $P \dsub \gamma = \accol \star$, then we set $u \dsub \gamma = f \dsub{\gamma, \star} \cdot b \dsub{\gamma}$.
			\item If $P \dsub \gamma = \eset$, then $b \dsub \gamma$ is of the form $(a \mapsfrom t)$ and we set $u \dsub \gamma = a$.
		\end{enumerate}
		Naturality in the domain $W$ of $\gamma$ is evident when we consider non-cross-case restrictions. Naturality for cross-case restrictions is asserted by the condition on pairs $(a \mapsfrom t)$. Again, naturality in $\Gamma$ is straightforward.
		
		\item[The $\beta$-rule] Pick $\gamma : \DSub W \Gamma$. Write $g = \gluesys{a}{\sysclauseb{P}{t}}$.
		\begin{enumerate}
			\item If $P \dsub \gamma = \accol \star$, then we get
			\begin{equation}
				\mathrm{LHS} \dsub \gamma
				= f \dsub{\gamma, \star} \cdot g \dsub{\gamma}
				= f \dsub{\gamma, \star} \cdot t \dsub{\gamma, \star}
				= (f\,t) \dsub{\gamma, \star} = a \dsub{\gamma}
			\end{equation}
			by the premise of the $\glue$ rule.
			\item If $P \dsub \gamma = \eset$, then we have $g \dsub \gamma = (a \dsub \gamma \mapsfrom t \sub{\gamma \subext})$ and $\unglue$ simply extracts the first component.
		\end{enumerate}
		
		\item[The $\eta$-rule] Pick $\gamma : \DSub W \Gamma$. Write $u = \ungluesys{\sysclauseb P f} b$.
		\begin{enumerate}
			\item If $P \dsub \gamma = \accol \star$, then we have $\mathrm{LHS} \dsub \gamma = b \sub \pi \dsub{\gamma, \star} = b \dsub \gamma$.
			\item If $P \dsub \gamma = \eset$, then $b \dsub \gamma$ has the form $(a \mapsfrom t)$ and we get
			\begin{equation}
				\mathrm{LHS} \dsub \gamma
				= (u \dsub \gamma \mapsfrom b \sub \pi \sub{\gamma \subext})
				= (a \mapsfrom b \sub{\gamma \pi})
				=^{(\dagger)} (a \mapsfrom t).
			\end{equation}
			The last step $(\dagger)$ is less than trivial. We show that $\yoneda W.P[\gamma] \sez b \sub{\gamma \pi} = t : T$. Pick any $(\vfi, \star) : \DSub V {(\yoneda W.P[\gamma])}$. If you manage to pick one, then $P \dsub{\gamma \vfi} = \accol \star$, so $b \sub{\gamma \pi} \dsub{\vfi, \star} = b \dsub{\gamma \vfi} = (a \mapsfrom t) \psub \vfi = t \dsub{\vfi, \star}$. \qedhere
		\end{enumerate} 
	\end{description}
\end{proof}

\section{Welding}
\begin{definition}
	A CwF \textdef{supports welding} if it is closed under the following rules:
	\begin{equation}
		\inference{
			\Gamma \sez P \prop \\
			\Gamma, \var p : P \sez T \type \\
			\Gamma, \var p : P \sez f : A[\pi] \to T \\
			\Gamma \sez A \type
		}{\Gamma \sez \Weldsys{A}{\Weldsysclauseb{\var p : P}{T}{f}} \type}{}, \qquad
		\inference{
			\Gamma \sez \Weldsys{A}{\Weldsysclauseb{\var p : P}{T}{f}} \type \\
			\Gamma \sez a : A
		}{\Gamma \sez \weldsys{\sysclauseb{\var p : P}{f}} a : \Weldsys{A}{\Weldsysclauseb{\var p : P}{T}{f}}}{},
	\end{equation}
	\begin{equation}
		\inference{
			\Gamma, \var y : \Weldsys{A}{\Weldsysclauseb{\var p : P}{T}{f}} \sez C \type \\
			\Gamma, \var p : P, \var y : T \sez d : C[\wknvar p \subext] \\
			\Gamma, \var x : A \sez c : C[\wknvar x, \weldsys{\sysclauseb{\var p : P[\wknvar x \subext]}{f[\wknvar x \subext]}} \var x/\var y] \\
			\Gamma, \var p : P, \var x : A[\wknvar p] \sez d[\wknvar x, f[\wknvar x]\,\var x/\var y] = c[\wknvar p \subext] : C[\wknvar p \wknvar x, f[\wknvar x]\,\var x/\var y]\\
			\Gamma \sez b : \Weldsys{A}{\Weldsysclauseb{\var p : P}{T}{f}}
		}{\Gamma \sez \ind_\Weld(\var y.C, \sys{\sysclauseb{\var p : P}{\var y.d}}, \var x.c, b) : C[\id, b/\var y]}{}
	\end{equation}
	naturally in $\Gamma$, such that
	\begin{align*}
		\Weldsys{A}{\Weldsysclauseb{\var p : \top}{T}{f}} &= T[\id, \star/\var p], \\
		\weldsys{\sysclauseb{\var p : \top}{f}} a &= f[\id, \star/\var p] a, \\
		\ind_\Weld(\var y.C, \sys{\sysclauseb{\var p : \top}{\var y.d}}, \var x.c, b) &= d[\id, \star / \var p, b / \var y], \\
		\ind_\Weld(\var y.C, \sys{\sysclauseb{\var p : P}{\var y.d}}, \var x.c, \weldsys{\sysclauseb{\var p : P}{f}} a) &= c[\id, a/\var x].
	\end{align*}
\end{definition}
\begin{proposition}
	Every presheaf category supports welding.
\end{proposition}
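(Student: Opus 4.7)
The plan is to follow the pattern of the glueing proof, performing a case distinction on whether $P \dsub \gamma = \accol \star$ or $P \dsub \gamma = \eset$ at each $\gamma : \DSub W \Gamma$. For the semantic weld type, I would set $\Weld \dsub \gamma = T \dsub{\gamma, \star}$ in the ``yes $P$'' case (as dictated by the expected $\top$-equation) and $\Weld \dsub \gamma = \set{\dweld a}{a \in A \dsub \gamma}$ in the ``no $P$'' case, i.e.\ a labelled copy of $A \dsub \gamma$. Because $P$ is a proposition, any restriction $\vfi : \PSub V W$ can only move from the ``no $P$'' region into the ``yes $P$'' region, never the other way round, so the restriction has three substantive cases: inherited from $T$ when both source and target satisfy $P$, inherited from $A$ (and relabelled by $\dweld$) when neither does, and $\dweld a \psub \vfi := f \dsub{\gamma \vfi, \star} \cdot (a \psub \vfi)$ in the cross case. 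Functoriality and naturality in $\Gamma$ then follow by routine case splits on where each intermediate substitution lies.

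For the constructor I would set $(\weldsys{\sysclauseb{\var p : P}{f}} a) \dsub \gamma$ to $f \dsub{\gamma, \star} \cdot a \dsub \gamma$ in the ``yes $P$'' case and to $\dweld(a \dsub \gamma)$ otherwise, so that naturality is essentially built in by the definition of the cross-case restriction. For the eliminator $\ind_\Weld(\var y.C, \sys{\sysclauseb{\var p : P}{\var y.d}}, \var x.c, b)$ I would again split: evaluate to $d \dsub{\gamma, \star, b \dsub \gamma}$ whenever $P \dsub \gamma = \accol \star$, and note that otherwise $b \dsub \gamma$ must be of the form $\dweld(a)$ for a unique $a \in A \dsub \gamma$ by the very construction of $\Weld$, in which case return $c \dsub{\gamma, a}$.

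The main obstacle will be verifying naturality of the eliminator across the ``no $P$'' to ``yes $P$'' boundary. Starting from $\gamma$ in the ``no $P$'' region with $b \dsub \gamma = \dweld(a)$ and $\vfi$ landing in the ``yes $P$'' region, one side of the naturality equation reduces to $c \dsub{\gamma \vfi, a \psub \vfi}$ by naturality of $c$, while the other unfolds, via the cross-case restriction of $\dweld(a)$, to $d \dsub{\gamma \vfi, \star, f \dsub{\gamma \vfi, \star} \cdot (a \psub \vfi)}$. These agree precisely by the semantic content of the compatibility premise $d[\wknvar x, f[\wknvar x]\,\var x/\var y] = c[\wknvar p \subext]$, which is exactly the coherence between the two induction cases that a pushout-like construction demands. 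The four definitional equalities are then straightforward: the three $\top$-equations fall out of computing directly in the ``yes $P$'' branch, while the $\beta$-rule $\ind_\Weld(\ldots, \weld a) = c[\id, a / \var x]$ holds on the ``no $P$'' branch by the very definition of the eliminator and on the ``yes $P$'' branch by invoking the same compatibility premise once more to rewrite $d \dsub{\gamma, \star, f \dsub{\gamma, \star} \cdot a \dsub \gamma}$ as $c \dsub{\gamma, a \dsub \gamma}$.
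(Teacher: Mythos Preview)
Your proposal is correct and follows essentially the same approach as the paper: the same case split on $P \dsub \gamma$, the same definitions of the type (labelled copy of $A$ in the ``no $P$'' case), constructor, and eliminator, and the same use of the compatibility premise $d[\wknvar x, f[\wknvar x]\,\var x/\var y] = c[\wknvar p \subext]$ both for cross-case naturality of the eliminator and for the ``yes $P$'' branch of the $\beta$-rule. The paper's proof differs only in notation (it writes $\dunweld$ for the inverse of the $\dweld$ label) and presentation.
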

\begin{proof}
	We assume given the prerequisites of the type former. Write $\Omega = \Weldsys{A}{\Weldsysclauseb{\var p : P}{T}{f}}$.
	\begin{description}
		\item[The type] We define $W \dsub \gamma$ by case distinction on $P \dsub \gamma$:
		\begin{enumerate}
			\item If $P \dsub \gamma = \accol \star$, then we set $\Omega \dsub \gamma = T \dsub{\gamma, \star/\var p}$.
			\item If $P \dsub \gamma = \eset$, then we set $\Omega \dsub \gamma = \set{\dweld\,a}{a : A \dsub \gamma} \cong A \dsub \gamma$. Once more, $\dweld$ is a meaningless label that we add for readability.
		\end{enumerate}
		Given $w : \Omega \dsub \gamma$ and $\vfi : \PSub V W$, we need to define $w \psub \vfi$.
		\begin{enumerate}
			\item If $P \dsub \gamma = P \dsub{\gamma \vfi} = \accol \star$, then we use the definition from $T$.
			\item If $P \dsub \gamma = P \dsub{\gamma \vfi} = \eset$, then we set $(\dweld\,a) \psub \vfi = \dweld (a \psub \vfi)$
			\item If $P \dsub \gamma = \eset$ and $P \dsub{\gamma \vfi} = \accol \star$, then we set $(\dweld\,a) \psub \vfi = f \dsub{\gamma \vfi, \star / \var p} \cdot a \psub \vfi$.
		\end{enumerate}
		One can check that this definition preserves composition and identity, and that this entire construction is natural in $\Gamma$.
		
		\item[The constructor] Write $w = \weldsys{\sysclauseb{\var p : P}{f}} a$.
		\begin{enumerate}
			\item If $P \dsub \gamma = \accol \star$, then we set $w \dsub \gamma = f \dsub{\gamma, \star / \var p} \cdot a \dsub \gamma$.
			\item If $P \dsub \gamma = \eset$, then we set $w \dsub \gamma = \dweld(a \dsub \gamma)$.
		\end{enumerate}
		This is easily checked to be natural in $\Gamma$ and the domain $W$ of $\gamma$.
		
		\item[The eliminator] Write $z = \ind_\Weld(\var y.C, \sys{\sysclauseb{\var p : P}{\var y.d}}, \var x.c, b)$.
		\begin{enumerate}
			\item If $P \dsub \gamma = \accol \star$, then $b \dsub \gamma : T \dsub{\gamma, \star / \var p}$, and we can set $z \dsub \gamma = d \dsub{\gamma, \star / \var p, b \dsub \gamma / \var y}$.
			\item If $P \dsub \gamma = \eset$, then $\dunweld(b \dsub \gamma) : A \dsub \gamma$ and we can set $z \dsub \gamma = c \dsub{\gamma, \dunweld(b \dsub \gamma) / \var x}$, where $\dunweld$ removes the $\dweld$ label.
		\end{enumerate}
		We need to show that this is natural in the domain $W$ of $\gamma$, which is only difficult in the cross-case-scenario. So pick $\vfi : \PSub V W$ such that $P \dsub \gamma = \eset$ and $P \dsub{\gamma \vfi} = \accol \star$. We need to show that $z \dsub \gamma \psub \vfi = z \dsub{\gamma \vfi}$. Write $b \dsub \gamma = \dweld\,a$. We have
		\begin{align}
			z \dsub \gamma \psub \vfi
			&= c \dsub{\gamma, a / \var x} \psub \vfi
			= c \dsub{\gamma \vfi, a \psub \vfi / \var x}, \\ \nn
			z \dsub{\gamma \vfi}
			&= d \dsub{\gamma \vfi, \star / \var p, b \dsub{\gamma \vfi} / \var y}
			= d \dsub{\gamma \vfi, \star / \var p, (\dweld\,a) \psub \vfi / \var y} \\ \nn
			&= d \dsub{\gamma \vfi, \star / \var p, f \dsub{\gamma \vfi, \star / \var p} \cdot a \psub \vfi / \var y}.
		\end{align}
		From the premises, we know that $d[\wknvar x, f[\wknvar x]\,\var x/\var y] = c[\wknvar p \subext]$. Applying $\loch \dsub{\gamma \vfi, \star / \var p, a \psub \vfi / \var x}$ to this equation yields
		\begin{equation}
			d \dsub{\gamma \vfi, \star / \var p, f \dsub{\gamma \vfi, \star / \var p} \cdot a \psub \vfi / \var y} = c \dsub{\gamma \vfi, a \psub \vfi / \var x}.
		\end{equation}
		
		\item[The $\beta$-rule] Write $w = \weldsys{\sysclauseb{\var p : P}{f}} a$.
		\begin{enumerate}
			\item If $P \dsub \gamma = \accol \star$, then
			\begin{align}
				\mathrm{LHS} \dsub \gamma
				&= d \dsub{\gamma, \star / \var p, w \dsub \gamma / \var y}
				= d \dsub{\gamma, \star / \var p, f \dsub{\gamma, \star / \var p} \cdot a \dsub \gamma / \var y}, \\ \nn
				\mathrm{RHS} \dsub \gamma
				&= c \dsub{\gamma, a \dsub \gamma / \var x}.
			\end{align}
			Again, the premises give us this equality.
			\item If $P \dsub \gamma = \eset$, then the equality is trivial. \qedhere
		\end{enumerate}
	\end{description}
\end{proof}

\chapter{Internalizing transformations of semantics}\label{ch:cwf-morphisms}
Given categories with families (CwFs) $\cat C$ and $\cat D$, we can consider functors $F : \cat C \to \cat D$ that sufficiently preserve the CwF structure to preserve semantical truth (though not necessarily falsehood) of judgements. Such functors will be called \textbf{morphisms of CwFs}.

In \cref{sec:cwf-transform} of this chapter, we are concerned with how we can internalize a CwF morphism and even more interestingly, how we can internalize a natural transformation between CwF morphisms. That is, we want to answer the question: What inference rules become meaningful when we know of the existence of (natural transformations between) CwF morphisms? Finally, we consider adjoint CwF morphisms, which of course give rise to unit and co-unit natural transformations.

In \cref{sec:psh-transform}, we delve deeper and study the implications of functors, natural transformations and adjunctions between categories $\catV$ and $\catW$ for the CwFs $\widehat{\catV}$ and $\widehat{\catW}$.

Throughout the chapter, we will need to annotate symbols like $\Ty$, $\sez$ and $\Dsez$ with the CwF that we are talking about.

\section{Categories with families}\label{sec:cwf-transform}

\subsection{Morphisms of CwFs}
\begin{definition}
	A \textbf{(weak) morphism of CwFs} $F : \cat C \to \cat D$ consists of:
	\begin{enumerate}
		\item A functor $F_\Ctx : \cat C \to \cat D$,
		\item A natural transformation $F_\Ty : \Ty_{\cat C} \to \Ty_{\cat D} \circ F_\Ctx$,
		\item A natural transformation $F_\Tm : \Tm_{\cat C} \to \Tm_{\cat D} \circ F_{\int}$, where $F_{\int} : \int_{\cat C} \Ty_{\cat C} \to \int_{\cat D} \Ty_{\cat D}$ is easily constructed from $F_\Ctx$ and $F_\Ty$,
		\item such that $F_\Ctx ()$ is terminal,
		\item such that $(F_\Ctx \pi, F_\Tm \xi) : F_\Ctx(\Gamma.T) \to (F_\Ctx \Gamma) . (F_\Ty T)$ is an isomorphism.
	\end{enumerate}
	The images of a context $\Gamma$, a substitution $\sigma$, a type $T$ and a term $t$ are also denoted $F\Gamma$, $F \sigma$, $FT$ and $\ftrtm F t$ respectively. We choose to denote the action of terms differently because CwF morphisms act very differently on types and on terms of the universe: in general $\ftrtm F A$ will be quite different from $F(\El\,A)$.
	
	Given $\sigma : \Delta \to F \Gamma$ and $\Delta \sez t : (FT)[\sigma]$, we write $(\sigma, t)_F := (F\pi, \ftrtm F \xi)\inv (\sigma, t) : \Delta \to F(\Gamma.T)$. In particular, $(\pi, \xi)_F = (F\pi, \ftrtm F \xi)\inv$. In variable notation, we will write $(\sigma, t/\ftrvar F x)_F : \Delta \to F(\Gamma, \var x : T)$
	
	The canonical map to $F()$ will be denoted $()_F$.
	
	\bigskip
	
	A morphism of CwFs is called \textdef{strict} if
	\begin{enumerate}
		\setcounter{enumi}{3}
		\item $F_\Ctx () = ()$,
		\item $F_\Ctx(\Gamma.T) = (F_\Ctx \Gamma).(F_\Ty T)$, $F_\Ctx \pi = \pi$ and $F_\Tm \xi = \xi$.
	\end{enumerate}
\end{definition}
A morphism of CwFs $F : \cat C \to \cat D$ is easy to internalize:
\begin{equation}
	\inference{\Gamma \sez_\cat C \Ctx}{F\Gamma \sez_\cat D \Ctx}{} \qquad
	\inference{\Gamma \sez_\cat C T \type}{F\Gamma \sez_\cat D FC \type}{} \qquad
	\inference{\Gamma \sez_\cat C t : T}{F\Gamma \sez_\cat D \ftrtm F t : FT}{}
	\qquad
	\inference{\sigma : \Delta \to F \Gamma \qquad \Delta \sez t : (FT)[\sigma]}{(\sigma, t/\ftrvar F x)_F : \Delta \to F(\Gamma, \var x : T)}{}
\end{equation}

with equations for
substitution:
\begin{equation}
	F\id = \id, \qquad
	F(\tau \sigma) = (F\tau) (F\sigma), \qquad
	F(T[\sigma]) = (FT)[F\sigma], \qquad
	\ftrtm F {(t[\sigma])} = (\ftrtm F t)[F\sigma],
\end{equation}
and pairing and projecting:
\begin{equation}
	F \pi \circ (\sigma, t)_F = \sigma, \qquad
	(\ftrtm F \xi) [(\sigma, t)_F] = t, \qquad
	(F \pi \circ \rho, \ftrtm F \xi[\rho])_F = \rho,
\end{equation}
\begin{equation}
	(\sigma, t)_F \circ \rho = (\sigma \rho, t[\rho])_F, \qquad
	F(\cwfpair \sigma t) = (F \sigma, \ftrtm F t)_F,
\end{equation}
or in variable notation:
\begin{equation}
	F \wknvar x \circ (\sigma, t / \ftrvar F x)_F = \sigma, \qquad
	(\ftrvar F x)[(\sigma, t / \ftrvar F x)_F] = t, \qquad
	(F \wknvar x \circ \rho, \ftrvar F x[\rho]/\ftrvar F x)_F = \rho,
\end{equation}
\begin{equation}
	(\sigma, t/\ftrvar F x)_F \circ \rho = (\sigma \rho, t[\rho]/\ftrvar F x), \qquad
	F(\sigma, t / \var x) = (F \sigma, \ftrtm F t / \ftrvar F x)_F.
\end{equation}
Note that $G(\sigma, t)_F = (G\sigma, \ftrtm G t)_{GF}$ because
\begin{align}
	GF\pi \circ G(\sigma, t)_F
	&= G(F\pi \circ (\sigma, t)_F)
	= G\sigma, \\
	(\ftrtm{GF}{\xi})[G(\sigma, t)_F]
	&= \ftrtm{G}{\paren{(\ftrtm F \xi)[(\sigma, t)_F]}}
	= \ftrtm G t.
\end{align}

If $F$ is a \textdef{strict} CwF morphism, then we get equations for
context formation:
\begin{equation}
	F () = (), \qquad F(\Gamma.T) = F\Gamma.FT,
\end{equation}
and pairing and projecting:
\begin{equation}
	F \pi = \pi, \qquad
	\ftrtm F \xi = \xi, \qquad
	F(\cwfpair \sigma t) = (F \sigma, \ftrtm F t).
\end{equation}
When using variable notation, the equation $\ftrtm F \xi = \xi$ inspires us to write $F(\Gamma, \var x : T) = (F\Gamma, \ftrtm{F}{\var x} : FT)$. Here, $\ftrtm F {\var x}$ is to be regarded as an atomic variable name, which happens to be equal to the compound term of the same notation. Then we get $F \wknvar x = \wkn{\ftrtm F x}$.

\begin{proposition}
	If $G$ is a morphism of CwFs, and $\zeta : F \cong G$, then $F$ is a morphism of CwFs.
\end{proposition}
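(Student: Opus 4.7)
The plan is to transport the entire CwF‑morphism structure of $G$ along the natural isomorphism $\zeta$. I read $\zeta : F \cong G$ as a natural iso of the underlying context‑level functors $F, G : \cat{C} \to \cat{D}$, so what needs to be built is the rest of the structure (action on types, action on terms, preservation of empty context, preservation of context extension). Concretely, on a type $\Gamma \sez_{\cat{C}} T \type$ I set $FT := (GT)[\zeta_\Gamma]$, and on a term $\Gamma \sez_{\cat{C}} t : T$ I set $\ftrtm F t := (\ftrtm G t)[\zeta_\Gamma]$.

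Naturality in $\Gamma$ of $F_\Ty$ and $F_\Tm$ then follows from functoriality of the substitution actions of $\cat{D}$ together with the naturality square $G\sigma \circ \zeta_\Delta = \zeta_\Gamma \circ F\sigma$ for $\sigma : \Delta \to \Gamma$: for instance, \(F(T[\sigma]) = (GT)[G\sigma \circ \zeta_\Delta] = (GT)[\zeta_\Gamma \circ F\sigma] = (FT)[F\sigma]\), and symmetrically for terms. That $F()$ is terminal is immediate, because $\zeta_{()} : F() \to G()$ is an isomorphism onto the terminal object $G()$.

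The only non‑trivial clause left is that $(F\pi, \ftrtm F \xi) : F(\Gamma.T) \to (F\Gamma).(FT)$ is an isomorphism. My strategy is a two‑out‑of‑three argument reducing it to the analogous iso for $G$. Because $\zeta_\Gamma$ is an iso, so is its lift $\zeta_\Gamma \subext = (\zeta_\Gamma \pi, \xi)$, which, by our definition of $FT$, is literally a morphism $(F\Gamma).(FT) = (F\Gamma).(GT)[\zeta_\Gamma] \to (G\Gamma).(GT)$. A short calculation — using naturality of $\zeta$ at $\pi : \Gamma.T \to \Gamma$ on the first projection and the defining equation $\ftrtm F \xi = (\ftrtm G \xi)[\zeta_{\Gamma.T}]$ on the second — yields the commuting square
\[ \zeta_\Gamma \subext \circ (F\pi, \ftrtm F \xi) = (G\pi, \ftrtm G \xi) \circ \zeta_{\Gamma.T}. \]
Both factors on the right are isomorphisms (the rightmost because $G$ is a CwF morphism), so the composite is an iso; cancelling the iso $\zeta_\Gamma \subext$ on the left forces $(F\pi, \ftrtm F \xi)$ to be one as well.

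The only thing I expect to require care is the bookkeeping in this last square: one has to use that $(F\Gamma).(FT)$ is by definition $(F\Gamma).(GT)[\zeta_\Gamma]$ so that $\zeta_\Gamma \subext$ makes sense as a morphism out of it, and one must verify that the map called $\ftrtm F \xi$ above really is the one obtained by pulling back $\ftrtm G \xi$ along $\zeta_{\Gamma.T}$. Everything else — functoriality on substitutions, and the naturality conditions listed after the definition of a CwF morphism — is forced by the definitions of $FT$ and $\ftrtm F t$ and follows by routine substitution calculus.
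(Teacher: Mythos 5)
Your proposal is correct and is essentially the paper's own proof: you define $FT := (GT)[\zeta]$ and $\ftrtm F t := (\ftrtm G t)[\zeta]$, note terminality of $F()$ via the iso $\zeta_{()}$, and establish the preservation of context extension by exactly the same commuting square $\zeta_\Gamma \subext \circ (F\pi, \ftrtm F \xi) = (G\pi, \ftrtm G \xi) \circ \zeta_{\Gamma.T}$ followed by a two-out-of-three cancellation. The only cosmetic difference is that you spell out the naturality check for $F_\Ty$ and $F_\Tm$, which the paper leaves implicit.
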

\begin{proof}
	Given a type $\Gamma \sez T \type$, we define $F\Gamma \sez FT \type$ by $FT = (GT)[\zeta]$.
	
	Given a term $\Gamma \sez t : T$, we define $F\Gamma \sez \ftrtm F t : FT$ by $\ftrtm F t = (\ftrtm G t)[\zeta]$.
	
	Clearly, $F()$ is terminal as $F() \cong G()$.
	
	The substitution $(F\pi, \ftrtm F \xi) = (F\pi, (\ftrtm G \xi)[\zeta]) : F(\Gamma.T) \to F \Gamma.F T = F \Gamma.(GT)[\zeta]$ is an isomorphism, because the following diagram commutes and the other trajectory consists of isomorphisms:
	\begin{equation}
		\xymatrix{
			F(\Gamma.T) \ar[r]_{(F\pi, \ftrtm F \xi)}
				\ar@/^{2em}/[rr]^{(F\pi, (\ftrtm G \xi)[\zeta])}
				\ar[d]_{\zeta}
			& F\Gamma.FT \ar@{=}[r]
			& F\Gamma.(GT)[\zeta] \ar[d]^{\zeta \subext} \\
			G(\Gamma.T) \ar[rr]_{(G\pi, \ftrtm G \xi)}
			&& G \Gamma.GT
		}
	\end{equation}
\end{proof}
It will be a corollary of the results in the following section, that a functor is a morphism of CwFs in at most one way up to isomorphism.

\subsection{Natural transformations of CwFs}
In this section, we consider morphisms of CwFs $F, G : \cat C \to \cat D$ and a natural transformation $\nu : F \to G$ between the underlying functors. It is clear that for any context $\Gamma$, we get a substitution between its respective images:
\begin{equation}
	\inference{\Gamma \sez_\cat C \Ctx}{\nu : F\Gamma \xrightarrow{\cat D} G\Gamma}{}
\end{equation}
Just like we did for $\pi$ and $\xi$, we will omit the index $\Gamma$ on $\nu$.
\begin{proposition}\label{thm:nattrans-function}
	We have an operation $\nu_\loch(\loch)$ for applying $\nu$ to terms:
	\begin{equation}
		\inference{
			\Gamma \sez_{\cat C} T \type \qquad
			\sigma : \Delta \xrightarrow{\cat D} F \Gamma \qquad
			\Delta \sez_{\cat D} t : (FT)[\sigma]
		}{\Delta \sez_{\cat D} \nu_\sigma(t) : (GT)[\nu \sigma]}{}.
	\end{equation}
	\begin{enumerate}
		\item This operation is natural in $\Delta$, i.e. $\nu_\sigma(t)[\tau] = \nu_{\sigma \tau}(t[\tau])$.
		\item It is also natural in $\Gamma$, i.e. if $\rho : \Gamma' \xrightarrow{\cat C} \Gamma$ and $\sigma : \Delta \xrightarrow{\cat D} F\Gamma'$, then $\nu_{F\rho \circ \sigma}(t) = \nu_\sigma(t)$. For this reason, we will write $\nu(t)$ for $\nu_\sigma(t)$.
		\item For a context $\Gamma, \var x : T$, we have $\nu = (\nu \circ F \wknvar x, \nu(\ftrtm F \var x) / \ftrtm G \var x)_G : F(\Gamma, \var x : T) \xrightarrow{\cat D} G( \Gamma, \var x : T)$.
		\item We have $\nu(\ftrtm F {t'}) = \ftrtm G {t'}[\nu]$.
		\item We have $(\nu \mu)(t) = \nu(\mu(t))$ and $\id(t) = t$.
		\item We have $\ftrtm{R}{(\nu(t))} = (R\nu)(\ftrtm R t)$.
	\end{enumerate}
\end{proposition}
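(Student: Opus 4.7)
The plan is to define $\nu_\sigma(t)$ by packaging $\sigma$ and $t$ into a substitution into $F(\Gamma.T)$, pushing it forward along $\nu$, and reading off the variable component:
\[
	\nu_\sigma(t) := (\ftrtm{G}{\xi})[\nu \circ (\sigma, t)_F].
\]
To check the claimed type $(GT)[\nu\sigma]$, post-compose $\nu \circ (\sigma, t)_F : \Delta \to G(\Gamma.T)$ with the iso $(G\pi, \ftrtm{G}{\xi})$; naturality of $\nu$ at $\pi : \Gamma.T \to \Gamma$ gives $G\pi \circ \nu \circ (\sigma, t)_F = \nu \circ F\pi \circ (\sigma, t)_F = \nu\sigma$, so the composite is $(\nu\sigma, \nu_\sigma(t))$, as desired.

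Then each of the six properties reduces to a short manipulation. Property (1) follows by applying $(\ftrtm{G}{\xi})[\nu \circ \loch]$ to the already-listed equation $(\sigma, t)_F \circ \tau = (\sigma\tau, t[\tau])_F$. For property (3), specialise to $\sigma = F\pi$ and $t = \ftrtm{F}{\xi}$: since $(F\pi, \ftrtm{F}{\xi})_F = \id$, we obtain $\nu_{F\pi}(\ftrtm{F}{\xi}) = (\ftrtm{G}{\xi})[\nu]$, and naturality of $\nu$ at $\pi$ yields $(G\pi, \ftrtm{G}{\xi}) \circ \nu = (\nu \circ F\pi, (\ftrtm{G}{\xi})[\nu])$, which is precisely the required pairing. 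Property (4) reduces via property (2) to the case $\sigma = \id$, where $F(\id, t') = (\id, \ftrtm{F}{t'})_F$ together with naturality of $\nu$ at $(\id, t')$ and $(\ftrtm{G}{\xi})[G(\id, t')] = \ftrtm{G}{t'}$ concludes. Property (5) follows from the observation $(\mu\sigma, \mu_\sigma(t))_F = \mu \circ (\sigma, t)_E$, which is exactly the defining equation of $\mu_\sigma$; the identity case $\id_\sigma(t) = (\ftrtm{F}{\xi})[(\sigma, t)_F] = t$ is immediate. Property (6) combines the already-noted $R(\sigma, t)_F = (R\sigma, \ftrtm{R}{t})_{RF}$ with the compatibility of $\ftrtm{R}{\loch}$ with substitution.

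The main obstacle is property (2), which needs a coherence between the packaging maps $(\loch, \loch)_F$ indexed by $(\Gamma, T)$ and by $(\Gamma', T[\rho])$, a mismatch that only vanishes for \emph{strict} CwF morphisms. Concretely, one verifies $F(\rho \subext) \circ (\sigma, t)_F = (F\rho \circ \sigma, t)_F$ by comparing both projections using $F(\cwfpair{\sigma}{t}) = (F\sigma, \ftrtm{F}{t})_F$; then naturality of $\nu$ at $\rho \subext$ and the identity $\xi[\rho\subext] = \xi$ together give $(\ftrtm{G}{\xi})[\nu \circ F(\rho\subext) \circ (\sigma, t)_F] = (\ftrtm{G}{\xi})[\nu \circ (\sigma, t)_F]$, i.e. $\nu_{F\rho\sigma}(t) = \nu_\sigma(t)$. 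Once this coherence is settled, the rest of the proposition is a routine juggling of the pairing and projection identities listed above with naturality of $\nu$.
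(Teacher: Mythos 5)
Your proof is correct and follows the paper's own route closely: you use the same definition $\nu_\sigma(t) := (\ftrtm{G}{\xi})[\nu \circ (\sigma, t)_F]$ (the paper writes it in variable notation as $(\ftrvar G x)[\nu (\sigma, t / \ftrvar F x)_F]$), and the verifications of the six properties use the same pairing/projection identities and naturality of $\nu$. The one place worth a remark is your framing of property (2): the coherence $F(\rho\subext) \circ (\sigma,t)_F = (F\rho\circ\sigma, t)_F$ is not a feature peculiar to strict morphisms that one must work around in the weak case — it holds for weak CwF morphisms too, precisely because the equations $(\sigma,t)_F \circ \rho = (\sigma\rho, t[\rho])_F$, $F(\cwfpair{\sigma}{t}) = (F\sigma, \ftrtm F t)_F$ and $\xi[\rho\subext] = \xi$ are all part of the definition, and your own computation confirms this. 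So calling it an "obstacle" overstates the issue; the paper treats it as a one-line calculation via $F(\rho+)$, exactly as you do once you unwind your phrasing.
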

\begin{proof}
	The unpairing of $\nu$ requires that $F(\Gamma, \var x : T) \sez \ftrvar G x [\nu] = \nu_{F \wknvar x}(\ftrvar F x) : (GT)[\nu \circ F \wknvar x]$. Now $t = (\ftrvar F x)[(\sigma, t/\ftrvar F x)_F]$, so naturality requires that we define $\nu_\sigma(t) = (\ftrvar G x)[\nu (\sigma, t / \ftrvar F x)_F]$.
	\begin{enumerate}
		\item This is easily seen to be natural in $\Delta$.
		\item For naturality in $\Gamma$:
		\begin{align}
			\nu_{F \rho \circ \sigma}(t)
			= (\ftrtm G \var x)[\nu(F \rho \circ \sigma, t / \ftrtm F \var x)_F]
			&= (\ftrtm G \var x)[\nu \circ F(\rho+) \circ (\sigma, t / \ftrtm F \var x)_F] \\ \nn
			&= (\ftrtm G \var x)[G(\rho+) \circ \nu \circ (\sigma, t / \ftrtm F \var x)_F]
			= (\ftrtm G \var x)[\nu(\sigma, t / \ftrtm F \var x)_F]
			= \nu_\sigma(t).
		\end{align}
		\item We have $G \wknvar x \circ \nu = \nu \circ F \wknvar x$ and $\ftrvar G x [\nu] = \ftrvar G x[\nu (F \wknvar x, \ftrvar F x / \ftrvar F x)_F] = \nu_{F\wknvar x}(\ftrvar F x)$.
		\item $\nu_\id(\ftrtm F {t'}) = (\ftrtm G \var x)[\nu (\id, \ftrtm F {t'} / \ftrtm F \var x)_F] = (\ftrtm G \var x)[\nu \circ F(\id, t' / \var x)] = (\ftrtm G \var x)[G(\id, t' / \var x) \circ \nu] = \ftrtm{G}{t'}[\nu]$.
		\item Assume $\mu : E \to F$ and $\nu : F \to G$. We have
		\begin{align}
			(\nu \mu)(t)
			&= \ftrtm G \var x[\nu \mu(\sigma, t / \ftrtm E \var x)_E]
			= \ftrtm G \var x[\nu (\mu \circ E \wknvar x, \mu(\ftrvar E x)/\ftrvar F x)_F(\sigma, t / \ftrtm E \var x)_E] \\ \nn
			&= \ftrtm G \var x[\nu (\mu \sigma, \mu(t) / \ftrvar F x)_F] = \nu(\mu(t)).
		\end{align}
		\item This is immediate from the definition. \qedhere
	\end{enumerate}
\end{proof}

\subsection{Adjoint morphisms of CwFs}
In this section, we consider functors $L : \cat C \to \cat D$ and $R : \cat D \to \cat C$, where $L$ may be and $R$ is a morphism of CwFs, such that $\alpha : L \dashv R$. Then $LR : \cat D \to \cat D$ will be a comonad and $RL : \cat C \to \cat C$ will be a monad. Of course, we have unit and co-unit natural transformations
\begin{equation}
	\eps : LR \to \Id_{\cat D}, \qquad \eta : \Id_{\cat C} \to RL,
\end{equation}
such that $\eps L \circ L \eta = \id_L$ and $R\eps \circ \eta R = \id_R$. The isomorphism $\alpha : \Hom(L\loch, \loch) \cong \Hom(\loch, R\loch)$ can be retrieved from unit and co-unit:
\begin{equation}
	\alpha(\sigma) = R\sigma \circ \eta, \qquad
	\alpha\inv(\tau) = \eps \circ L\tau,
\end{equation}
and vice versa:
\begin{equation}
	\eta = \alpha(\id), \qquad \eps = \alpha\inv(\id).
\end{equation}
Finally, $\alpha$ is natural in the following sense:
\begin{equation}
	\alpha(\tau \circ \sigma \circ L\rho) = R\tau \circ \alpha(\sigma) \circ \rho, \qquad
	\alpha\inv(R\tau \circ \sigma \circ \rho) = \tau \circ \alpha\inv(\sigma) \circ L\rho.
\end{equation}
If $L$ is a CwF morphism, then \cref{thm:nattrans-function} gives us functions $\eta : T \to (RLT)[\eta]$ and $\eps : LRT \to T[\eps]$. Moreover, $\eta(t) = (\ftrtm{RL}{t})[\eta]$ and $\eps(\ftrtm{LR}{t}) = t[\eps]$.

\begin{proposition}\label{thm:adjunction-rules}
	Assume $\Gamma \sez_{\cat D} T \type$. We have mutually inverse rules:
	\begin{equation}
		\inference{
			\sigma : L\Delta \xrightarrow{\cat D} \Gamma \\
			L\Delta \sez_{\cat D} t : T[\sigma]
		}{\Delta \sez_{\cat C} \alpha_\sigma(t) : (RT)[\alpha(\sigma)]}{}
		\qquad \qquad
		\inference{
			\tau : \Delta \xrightarrow{\cat D} R\Gamma \\
			\Delta \sez_{\cat C} t' : (RT)[\tau]
		}{L\Delta \sez_{\cat D} \alpha\inv_{\alpha\inv(\tau)}(t') : T[\alpha\inv(\tau)]}{}
	\end{equation}
	\begin{enumerate}
		\item These operations are natural in $\Delta$, i.e. $\alpha_\sigma(t)[\tau] = \alpha_{\sigma \circ L \tau}(t[L\tau])$.
		\item These operations are natural in $\Gamma$, i.e if $\rho : \Gamma' \to \Gamma$ and $\sigma : L\Delta \xrightarrow{\cat D} \Gamma'$, then $\alpha_{\rho \sigma}(t) = \alpha_\sigma(t)$. For this reason, we will omit the subscript on $\alpha$.
		\item $\alpha(t) = (\ftrtm R t)[\eta]$.
		\item If $L$ is a CwF morphism, then $\alpha\inv(t') = \eps(\ftrtm L {t'})$. In general, we have $\alpha\inv(t') = \xi[\eps \circ L((\eta, t')_R)]$.
	\end{enumerate}
\end{proposition}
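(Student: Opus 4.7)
The plan is to read off the operations $\alpha_\sigma(\loch)$ and $\alpha\inv_{\alpha\inv(\tau)}(\loch)$ from the adjunction bijection on extended contexts, composed with the CwF-morphism isomorphism $(R\pi, \ftrtm R \xi) : R(\Gamma.T) \xrightarrow{\cong} R\Gamma.RT$, which is available because $R$ is a morphism of CwFs. Concretely, I would set
\[
  \alpha_\sigma(t) := (\ftrtm R \xi)[\alpha(\sigma,t)]
  \qquad\text{and}\qquad
  \alpha\inv_{\alpha\inv(\tau)}(t') := \xi\,[\,\alpha\inv((\tau,t'/\ftrvar R x)_R)\,].
\]
Naturality of $\alpha$ on the codomain gives $R\pi\circ\alpha(\sigma,t) = \alpha(\pi\circ(\sigma,t)) = \alpha(\sigma)$, so $\alpha_\sigma(t)$ indeed lives in $(RT)[\alpha(\sigma)]$; symmetrically, $R\pi\circ(\tau,t')_R = \tau$ places $\alpha\inv_{\alpha\inv(\tau)}(t')$ in $T[\alpha\inv(\tau)]$. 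That these operations are mutually inverse is then a bookkeeping exercise: composing them reassembles a pair, feeds it through $\alpha\circ\alpha\inv = \id$ (or the other way), and re-decomposes it using that $(R\pi, \ftrtm R \xi)$ and $(\loch,\loch)_R$ are mutually inverse.

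The naturality claims (1) and (2) fall out of the two naturality properties of $\alpha$. Pre-composing with $\tau':\Delta'\to\Delta$ yields $\alpha((\sigma,t)\circ L\tau') = \alpha(\sigma,t)\circ\tau'$, which after applying $\ftrtm R\xi$ gives (1). Post-composing with the weakening $\rho\subext:\Gamma'.T[\rho]\to\Gamma.T$ of $\rho:\Gamma'\to\Gamma$ gives $\alpha(\rho\subext\circ(\sigma,t)) = R\rho\subext\circ\alpha(\sigma,t)$, and the CwF-morphism substitution law $\ftrtm R{\xi^{\Gamma.T}}[R\rho\subext] = \ftrtm R{(\xi^{\Gamma.T}[\rho\subext])} = \ftrtm R{\xi^{\Gamma'.T[\rho]}}$ absorbs $R\rho\subext$, yielding (2) and justifying dropping the subscript.

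For the closed-form identities (3) and (4), the CwF-morphism structure on $R$ (and on $L$) pays off. For (3), using $\alpha(\sigma,t) = R(\sigma,t)\circ\eta$ together with the substitution laws yields
\[
  \alpha(t) = (\ftrtm R \xi)[R(\sigma,t)\circ\eta] = \ftrtm R{(\xi[(\sigma,t)])}[\eta] = (\ftrtm R t)[\eta].
\]
For (4), assuming $L$ is also a CwF morphism, I would invoke the identity $L(\tau,t')_R = (L\tau,\ftrtm L{t'})_{LR}$ (the last displayed equation in the internalization of CwF morphisms, with $G:=L$, $F:=R$) to rewrite $\alpha\inv((\tau,t')_R) = \eps\circ L((\tau,t')_R) = \eps\circ(L\tau,\ftrtm L{t'}/\ftrvar{LR}x)_{LR}$; at that point $\xi[\eps\circ(L\tau,\ftrtm L{t'})_{LR}]$ matches verbatim the definition of $\eps_{L\tau}(\ftrtm L{t'})$ given in the proof of \cref{thm:nattrans-function}. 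Without the CwF-morphism hypothesis on $L$, the same unfolding halts one step earlier and yields the general form $\xi[\eps\circ L((\tau,t')_R)]$. The only real obstacle I foresee is bookkeeping: keeping the two senses of $\alpha$ (on substitutions versus on terms) notationally separate and tracking which $\pi$ and $\xi$ live in which context; every individual equality is then a one-line chase using naturality of $\alpha$ or of the extension isomorphism.
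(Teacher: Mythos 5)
Your proposal is correct and follows essentially the same route as the paper's proof: you define $\alpha_\sigma(t)$ and $\alpha\inv_{\alpha\inv(\tau)}(t')$ by pairing into the extended context, transporting across $\alpha$, and projecting back out via the CwF-morphism isomorphism for $R$, and then items (1)--(4) fall out of the two naturality laws of $\alpha$ plus the substitution laws for $\ftrtm R \loch$ and $\ftrtm L \loch$, exactly as in the paper. The only cosmetic difference is in (4): you stop at $\xi[\eps\circ L((\tau,t')_R)]$, whereas the paper first applies naturality in $\Gamma$ to replace $\tau$ by $\eta$ so as to get the stated closed form $\xi[\eps\circ L((\eta,t')_R)]$ verbatim; that extra step is the same naturality you already invoked in (2).
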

\begin{proof}
	We have $\alpha : (L\Delta \to (\Gamma, \var x : T)) \cong (\Delta \to R(\Gamma, {\var x} : T))$. We will define $\alpha_\sigma(t)$ and $\alpha_{\alpha\inv(\tau)}(t')$ by the (equivalent) equations:
	\begin{equation}
		\alpha(\sigma, t / \var x) = (\alpha(\sigma), \alpha_\sigma(t) / \ftrtm R \var x)_R,
		\qquad
		\alpha\inv((\tau, t' / \ftrtm R \var x)_R) = (\alpha\inv(\tau), \alpha\inv_{\alpha\inv(\tau)}(t') / \var x).
	\end{equation}
	The first components of these equations are correct:
	\begin{align}
		R \wknvar x \circ \alpha(\sigma, t / \var x)
		&= \alpha(\wknvar x \circ (\sigma, t / \var x))
		= \alpha(\sigma), \\
		\wknvar x \circ \alpha\inv((\tau, t'/\ftrtm R \var x)_R)
		&= \alpha\inv(R \wknvar x \circ (\tau, t' / \ftrtm R \var x)_R)
		= \alpha\inv(\tau).
	\end{align}
	\begin{enumerate}
		\item This follows from naturality of the adjunction $\alpha$.
		\item We have $\alpha_{\rho \sigma}(t) = (\ftrvar R x)[\alpha(\rho \sigma, t / \var x)] = (\ftrvar R x)[R(\rho +) \circ \alpha(\sigma, t / \var x)] = \ftrtm{R}{(\var x[\rho +])}[\alpha(\sigma, t / \var x)] = \ftrvar R x[\alpha(\sigma, t/\var x)] = \alpha_\sigma(t)$. Then naturality for $\alpha\inv$ holds because it is inverse to $\alpha$.
		\item We have $\alpha_\sigma(t) = \ftrtm R \var x[\alpha(\sigma, t / \var x)] = \ftrtm R \var x[(R\sigma, \ftrtm R t / \ftrtm R \var x)_R \circ \eta] = \ftrtm R t[\eta]$.
		\item We have 
			$\alpha\inv_{\alpha\inv(\tau)}(t')
			= \alpha\inv_{\alpha\inv(\eta)}(t')
			= \var x[\alpha\inv((\eta, t' / \ftrtm R \var x)_R)]
			= \var x[\eps \circ L((\eta, t' / \ftrtm R \var x)_R)]$. If $L$ is a weak CwF morphism, then this reduces further to $\eps(\ftrtm{LR} \var x[(L\eta, \ftrtm{L}{t'}/\ftrtm{LR} \var x)_{LR}]) = \eps(\ftrtm{L}{t'})$. \qedhere
	\end{enumerate}
\end{proof}
\begin{corollary}
	We have naturality rules as for ordinary adjunctions:
	\begin{equation}
		\begin{array}{r c l c r c l}
			\alpha(\tau \circ \sigma \circ L\rho)
			&=& R\tau \circ \alpha(\sigma) \circ \rho
			&\qquad \qquad&
			\alpha\inv(R\tau \circ \sigma \circ \rho)
			&=& \tau \circ \alpha\inv(\sigma) \circ L\rho
			\\
			\alpha(t[\sigma][L\rho])
			&=& (\ftrtm R t)[\alpha(\sigma)][\rho]
			&&
			\alpha\inv((\ftrtm R t)[\sigma][\rho])
			&=& t[\alpha\inv(\sigma)][L\rho]
			\\
			\alpha(\nu(s[L \rho])) &=& (R\nu)(\alpha(s))[\rho]
			&&
			\alpha\inv((R\nu)s[\rho]) &=& \nu(\alpha\inv(s)[L \rho])
			\\
			\alpha(\nu(\mu(\ftrtm L r))) &=& (R\nu)(\alpha(\mu)(r))
			&&
			\alpha\inv((R\nu)(\mu(r))) &=& \nu(\alpha\inv(\mu)(\ftrtm L r))
		\end{array}
	\end{equation}
	where $\rho, \sigma, \tau$ denote substitutions, $s, t$ denote terms and $\mu, \nu$ denote natural transformations.
\end{corollary}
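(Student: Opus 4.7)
The top row of the corollary (substitution-level naturality) is nothing new: its two entries are precisely the two identities stated in the paragraph immediately preceding Proposition~\ref{thm:adjunction-rules}, so for this row I would simply cite that prior discussion.

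For the second row (the term-level analogue), the plan is to combine Proposition~\ref{thm:adjunction-rules}.3, which gives $\alpha(t) = (\ftrtm R t)[\eta]$, with the CwF-morphism property $\ftrtm R{t[\sigma]} = (\ftrtm R t)[R\sigma]$ and the naturality square $\eta_\Delta \circ \rho = RL\rho \circ \eta_{\Delta'}$ of the unit. Concretely, I would rewrite
\begin{equation*}
\alpha(t[\sigma][L\rho]) = (\ftrtm R t)[R\sigma \circ RL\rho \circ \eta_{\Delta'}],
\end{equation*}
push $\eta_{\Delta'}$ past $RL\rho$ using naturality of $\eta$, and recognise the result as $(\ftrtm R t)[\alpha(\sigma)][\rho]$. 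The companion identity for $\alpha\inv$ I would obtain either by applying $\alpha\inv$ to both sides, or symmetrically via Proposition~\ref{thm:adjunction-rules}.4 and naturality of $\eps$.

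For the third and fourth rows, in addition to the above I would invoke Proposition~\ref{thm:nattrans-function}: specifically, the naturality of $\nu(\loch)$ in its substitution arguments (parts~1--2), the composition rule $(\nu \mu)(t) = \nu(\mu(t))$ (part~5), the image-under-a-CwF-morphism rule $\ftrtm R{\nu(t)} = (R\nu)(\ftrtm R t)$ (part~6), and — for the last row in particular — the identity $\nu(\ftrtm F{t'}) = \ftrtm G{t'}[\nu]$ (part~4). Each of the four remaining equations then reduces to a short chain of rewrites: $\alpha(s) = \ftrtm R s[\eta]$ together with part~6 moves the natural transformation across $\alpha$; naturality of $\eta$ absorbs the innermost $L\rho$ into the outer substitution; and for the last row, parts~4--5 convert the composite $\nu(\mu(\ftrtm L r))$ into a form that matches $(R\nu)(\alpha(\mu)(r))$ once $\alpha$ is unfolded.

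The main obstacle is not conceptual but bookkeeping: every term carries an implicit substitution whose source and target live in different CwFs ($\cat C$ or $\cat D$), and at each rewrite one must check that the unit or co-unit is being taken at the correct object and that the natural transformations $\mu, \nu$ are between compatible CwF morphisms. Once this indexing is laid out, no individual step is more than a single application of a previously-established identity, and the corollary amounts to a mechanical verification.
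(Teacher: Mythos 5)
Your proposal is correct and follows the same route as the paper's (very terse) proof: the paper simply observes that each right-hand equation is equivalent to its left-hand counterpart via $\alpha^{-1}$, that the first left equation is the already-established naturality of the adjunction on substitutions, and that the remaining three follow from the identity $\alpha(t) = (\ftrtm R t)[\eta]$ (\cref{thm:adjunction-rules}.3). What you add is the actual chain of rewrites that the paper leaves to the reader — invoking the CwF-morphism law $\ftrtm R{(t[\sigma])} = (\ftrtm R t)[R\sigma]$, naturality of $\eta$, and the natural-transformation laws from \cref{thm:nattrans-function} — which is exactly what one needs to spell out the paper's "follow from" in detail. Your minor alternative (proving the $\alpha^{-1}$ identities directly via $\alpha^{-1}(t') = \eps(\ftrtm L{t'})$ rather than by applying $\alpha^{-1}$ to the left equations) is equally valid but is not the route the paper chose.
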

\begin{proof}
	Each equation on the right is equivalent to its counterpart on the left. The first equation on the left is old news. The other equations follow from $\alpha(t) = (\ftrtm R t)[\eta]$.
\end{proof}

\section{Presheaf models}\label{sec:psh-transform}
In this section, we study the implications of functors, natural transformations and adjunctions between categories $\catV$ and $\catW$ for the CwFs $\widehat{\catV}$ and $\widehat{\catW}$.

\subsection{Lifting functors}\label{sec:lifting-functors}
A functor $F : \catV \to \catW$ gives rise to a functor $\fpsh F : \widehat{\catW} \to \widehat{\catV} : \Gamma \mapsto \Gamma \circ F$.
\begin{theorem}\label{thm:fpsh-strict-cwf-morphism}
	For any functor $F : \catV \to \catW$, the functor $\fpsh F$ is a strict morphism of CwFs.
\end{theorem}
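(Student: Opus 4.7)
The plan is to verify each piece of the CwF morphism structure by direct computation, exploiting the fundamental observation that on any presheaf $\Gamma \in \widehat\catW$, defining substitutions into $\fpsh F \Gamma$ from $V \in \catV$ coincide on the nose with defining substitutions into $\Gamma$ from $FV \in \catW$, i.e.\ $(\DSub V {\fpsh F \Gamma}) = \Gamma(FV) = (\DSub{FV}{\Gamma})$, with restriction along $\vfi : \PSub U V$ implemented by $\catW$-restriction along $F\vfi$. Functoriality of $\fpsh F$ on $\widehat\catW$ is immediate from functoriality of precomposition.

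Next I would define the action on types and terms. For $\Gamma \sez T \type$, i.e.\ a functor $T : \paren{\int_\catW \Gamma}\op \to \Set$, precompose with the evident functor $\int_\catV \fpsh F \Gamma \to \int_\catW \Gamma$ sending $(V, \gamma) \mapsto (FV, \gamma)$ and $(\vfi | \gamma) \mapsto (F\vfi | \gamma)$; concretely, for $\gamma : \DSub V {\fpsh F \Gamma}$ set $(\fpsh F T)\dsub \gamma := T \dsub \gamma$ with restriction $(\fpsh F T) \psub \vfi := T \psub{F\vfi}$. For $\Gamma \sez t : T$, set $(\ftrtm{\fpsh F}{t}) \dsub \gamma := t \dsub \gamma$; naturality of $\ftrtm{\fpsh F}{t}$ in $V$ reduces to naturality of $t$ evaluated at the primitive substitution $F \vfi$.

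I would then check naturality of these assignments in $\Gamma$. For any substitution $\sigma : \Delta \to \Gamma$ and $\delta : \DSub V {\fpsh F \Delta}$, unfolding yields
\begin{equation*}
\fpsh F(T[\sigma]) \dsub \delta = T[\sigma] \dsub \delta = T \dsub{\sigma \delta} = (\fpsh F T)[\fpsh F \sigma] \dsub \delta,
\end{equation*}
with the analogous identity on restriction maps and on terms, so $\fpsh F_\Ty$ and $\fpsh F_\Tm$ are natural transformations.

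Finally I would verify strictness. For the empty context, $(\DSub V {\fpsh F ()}) = ()(FV) = \accol \emptysub$, so $\fpsh F () = ()$ on the nose. For context extension, both $\fpsh F(\Gamma.T)$ and $\fpsh F \Gamma . \fpsh F T$ unfold to the same presheaf on $\catV$, whose set of defining substitutions at $V$ is $\set{(\gamma, t)}{\gamma : \DSub{FV}{\Gamma}, t \in T \dsub \gamma}$, with identical restrictions along $F \vfi$; the formulas $\pi(\gamma, t) = \gamma$ and $\xi \dsub{(\gamma, t)} = t$ coincide on both sides, so $\fpsh F \pi = \pi$ and $\ftrtm{\fpsh F}{\xi} = \xi$ hold strictly. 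There is no genuine obstacle here — the content of the theorem is precisely that precomposition by a base-category functor preserves the presheaf CwF structure verbatim; the only care needed is to spell out the identifications consistently.
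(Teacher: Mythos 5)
Your proof is correct and follows essentially the same route as the paper: define $\fpsh F$ on types and terms by precomposing with the induced functor $\int_\catV \fpsh F \Gamma \to \int_\catW \Gamma$, and verify strictness of empty context and comprehension by direct unfolding. The only cosmetic difference is that the paper wraps the identifications in an ignorable label $\fpshadj F$ for readability, whereas you treat them as on-the-nose equalities throughout.
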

\begin{proof}
Throughout the proof, it is useful to think of $\fpsh F$ as being right adjoint to $F$. For that reason we will again add an ignorable label $\fpshadj F$ for readability.
\begin{enumerate}
	\item A context $\Gamma \in \widehat{\catW}$ is mapped to the context $\fpsh F \Gamma = \Gamma \circ F \in \Psh(\catV)$. A substitution $\sigma : \Delta \to \Gamma$ is mapped to a substitution $\fpsh F \sigma : \fpsh F \Delta \to \fpsh F \Gamma$ by functoriality of composition. Unpacking this and adding labels, we get:
	\begin{itemize}
		\item $(\DSub{V}{\fpsh F \Gamma}) = \set{\fpshadj F(\gamma)}{\gamma : \DSub{FV}{\Gamma}}$,
		\item For $\vfi : \PSub{V'}{V}$ and $\gamma : \DSub{V}{\fpsh F \Gamma}$, we have $\fpshadj F(\gamma) \circ \vfi = \fpshadj F(\gamma \circ F \vfi) : \DSub{V'}{\fpsh F \Gamma}$.
		\item For $\sigma : \Delta \to \Gamma$, we get $\fpsh F \sigma \circ \fpshadj F(\delta) = \fpshadj F(\sigma \circ \delta)$.
	\end{itemize}
	
	\item We easily construct a functor $\int_{\catV}\fpsh F \Gamma \to \int_{\catW} \Gamma$ sending $(V, \gamma)$ to $(FV, \gamma)$. Precomposing with this functor, yields a map $\Ty(\Gamma) \to \Ty(\fpsh F \Gamma) : T \mapsto \fpsh F T$.
	Let us unpack and label this construction. Given $(\Gamma \sez_{\widehat{\catW}} T \type)$, the type $(\fpsh F \Gamma \sez_{\widehat{\catV}} \fpsh F T \type)$ is defined by:
	\begin{itemize}
		\item Terms $V \Dsez \fpshadj F(t) : (\fpsh F T) \dsub{\fpshadj F(\gamma)}$ are obtained by labelling terms $FV \Dsez_{\widehat{\catW}} t : T \dsub \gamma$.
		\item $\fpshadj F(t) \psub \vfi = \fpshadj F(t \psub{F \vfi})$.
	\end{itemize}
	This is natural in $\Gamma$, because
	\begin{align}
		\fpsh F(T \sub \sigma) \dsub{\fpshadj F(\gamma)} &\cong T \sub \sigma \dsub \gamma = T \dsub{\sigma \gamma} \cong (\fpsh F T) \dsub{\fpshadj F(\sigma \gamma)} = (\fpsh F T) \sub{\fpsh F \sigma} \dsub{\fpshadj F(\gamma)},
	\end{align}
	where the isomorphisms become equalities if we ignore labeling.
	
	\item Given $\Gamma \sez_{\widehat{\catW}} t : T$, we define $\fpsh F \Gamma \sez_{\widehat{\catV}} \ftrtm{\fpsh F}{t} : \fpsh F T$ by setting $(\ftrtm{\fpsh F}{t})\dsub{\fpshadj F(\gamma)}$ equal to $\fpshadj F(t \dsub \gamma)$. This is natural in the domain $V$ of $\fpshadj F(\gamma)$:
	\begin{equation}
		(\ftrtm{\fpsh F} t) \dsub{\fpshadj F(\gamma)} \psub \vfi
		= \fpshadj F(t \dsub \gamma) \psub{\vfi}
		= \fpshadj F(t \dsub{\gamma \circ F\vfi})
		= (\ftrtm{\fpsh F} t) \dsub{\fpshadj F(\gamma \circ F \vfi)}
		= (\ftrtm{\fpsh F} t) \dsub{\fpshadj F(\gamma) \circ \vfi}.
	\end{equation}
	It is also natural in $\Gamma$ by the same reasoning as for types.
	
	\item The terminal presheaf over $\catW$ is automatically mapped to the terminal presheaf over $\catV$.
	
	\item It is easily checked that comprehension, $\pi$, $\xi$ and $(\cwfpair \loch \loch)$ are preserved on the nose if we assume that $\fpshadj F(\gamma, t) = (\fpshadj F(\gamma), \fpshadj F(t))$, e.g. by ignoring labels.\qedhere
\end{enumerate}
\end{proof}
\begin{proposition}\label{thm:lifted-preserves-types}
	For any functor $F : \catV \to \catW$, the morphism of CwFs $\fpsh F$ preserves all operators related to $\Sigma$-types, $\Weld$-types and identity types on the nose.
\end{proposition}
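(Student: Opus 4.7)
My plan is to exploit the observation that, in sharp contrast with $\Pi$-types, $\Glue$-types, and universes, the standard presheaf interpretations of $\Sigma$-types, identity types and $\Weld$-types given in \cref{ch:psh} are entirely \emph{pointwise}: at a stage $\gamma : \DSub W \Gamma$, the sets $(\Sigma A B)\dsub\gamma$, $(a \idtp A b)\dsub\gamma$ and $\Weldsys{A}{\Weldsysclauseb{\var p : P}{T}{f}}\dsub\gamma$ are defined purely in terms of the sets $A\dsub\gamma$, $B\dsub{\gamma, a}$, $T\dsub{\gamma, \star/\var p}$ and the truth value of $P\dsub\gamma$, without any recourse to the Yoneda embedding $\yoneda W$ as a stand-alone context. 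The restriction maps $\loch \psub \vfi$ along $\vfi : \PSub V W$, and likewise all constructors and eliminators, are cut from the same pointwise cloth.

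First I would invoke \cref{thm:fpsh-strict-cwf-morphism} to record that $\fpsh F$ is a \emph{strict} CwF morphism and that, modulo the ignorable labels $\fpshadj F(\loch)$, we have $(\fpsh F T)\dsub{\fpshadj F(\gamma)} = T\dsub\gamma$ and $\ftrtm{\fpsh F}{t}\dsub{\fpshadj F(\gamma)} = t\dsub\gamma$ for $\gamma : \DSub{FV}{\Gamma}$, with restriction along $\vfi : \PSub{V'}{V}$ in $\widehat{\catV}$ implemented by restriction along $F\vfi$ in $\widehat{\catW}$. Strictness of context extension additionally ensures that the substitutions of the form $(\id, a)$ and $(\wknvar p \subext, b)$ appearing inside the type formers are preserved on the nose.

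Then, for each of the three type formers, I would unfold both sides of the desired equality and read off that they describe the very same set. For $\Sigma$ this is immediate:
\begin{equation*}
(\fpsh F(\Sigma A B))\dsub{\fpshadj F(\gamma)} = (\Sigma A B)\dsub\gamma = \set{(a,b)}{a \in A\dsub\gamma, b \in B\dsub{\gamma, a}} = (\Sigma(\fpsh F A)(\fpsh F B))\dsub{\fpshadj F(\gamma)},
\end{equation*}
and similarly for pairing, projections and their commutations with restriction along $\vfi$. For the identity type the check reduces to the tautology that $a\dsub\gamma = b\dsub\gamma$ iff $\ftrtm{\fpsh F}a\dsub{\fpshadj F(\gamma)} = \ftrtm{\fpsh F}b\dsub{\fpshadj F(\gamma)}$, so reflexivity and the $\J$-eliminator, which in this model are defined as $\star$ and as $c\dsub\gamma$ respectively, transport trivially. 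For $\Weld$, I would observe that $(\fpsh F P)\dsub{\fpshadj F(\gamma)} = P\dsub\gamma$, so the case split on $P\dsub\gamma$ is respected, and in each branch the definitions of the carrier, constructor and eliminator are again pointwise in $\gamma$.

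The one step I expect to require a bit of care — and the only plausible obstacle — is the cross-case restriction clause of $\Weld$, where $(\dweld\,a)\psub\vfi = f\dsub{\gamma\vfi, \star/\var p} \cdot a\psub\vfi$ whenever $P\dsub\gamma = \eset$ but $P\dsub{\gamma\vfi} = \accol\star$. Here I would argue that application of $f$, substitution of $\star/\var p$, and restriction along $\vfi$ are each individually preserved by $\fpsh F$ (again by \cref{thm:fpsh-strict-cwf-morphism}), whence the entire cross-case clause is preserved verbatim along $F\vfi$. Everything else is routine unfolding.
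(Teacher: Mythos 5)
Your proposal is correct and follows essentially the same route as the paper, which observes in one paragraph that the defining terms of $\Sigma$-, $\Weld$- and identity types are built entirely out of other defining terms (in contrast with $\Pi$-, $\Glue$- and universe types), so the strict CwF morphism $\fpsh F$, which merely relabels defining terms, respects them automatically. You simply unfold the same observation case by case, and you correctly single out the cross-case restriction clause of $\Weld$ as the one place requiring a second look — though the cleanest way to discharge it is to rewrite $f\dsub{\gamma\vfi,\star/\var p}\cdot a\psub\vfi$ as $(\ap\,f)\dsub{\gamma\vfi,\star/\var p,\,a\psub\vfi}$, a defining term of the term $\ap\,f$ at a defining substitution, both of which $\fpsh F$ preserves, rather than asserting that ``application of $f$'' is preserved (since $\fpsh F$ does not preserve the $\Pi$-type itself).
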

\begin{proof}
	The defining terms of each of these types, are built from of other defining terms. This is in constrast with $\Pi$- and $\Glue$-types, where defining terms also contain non-defining terms, and the universe, whose defining terms even contain types. As $\fpsh F$ merely reshuffles defining terms, it respects each of these operations (ignoring labels).
\end{proof}

The reason we use a different notation for terms and for types ($\ftrtm{\fpsh F} t$ versus ${\fpsh F} T$) is to avoid confusion when it comes to encoding and decoding types: $\fpsh F$ acts very differently on types and on terms of the universe. To begin with, ${\fpsh F} \uni{}$ will typically not be the universe of $\widehat{\catV}$. Indeed, its primitive terms still originate as types over the primitive contexts of $\catW$. So if $\Gamma \sez_{\widehat{\catW}} A : \uni{}$, then $\ftrtm{\fpsh F} A$ is a representation in $\widehat{\catV}$ of a type from $\widehat{\catW}$. In contrast, $\fpsh F(\El\,A)$ is a type in $\widehat{\catV}$. Put differently still, when applied to an element of the universe, ${\fpsh F}$ reorganizes the presheaf structure of the universe. When applied to a type $T$, ${\fpsh F}$ reorganizes the presheaf structure of $T$.

\begin{example}\label{ex:presheaf-sets}
	Let $\pointcat$ be the terminal category with just a single object $()$ and only the identity morphism. It is easy to see that $\widehat{\pointcat} \cong \Set$. Meanwhile, let $\RGcat$ be the base category with objects $()$ and $(\var i : \IE)$ and maps non-freely generated by $() : (\var i : \IE) \to ()$ and $(0/\var i), (1/\var i) : () \to (\var i : \IE)$, such that $\widehat \RGcat$ is the category of reflexive graphs. We have a functor $F : \pointcat \to \RGcat$ sending $()$ to $()$. This constitutes a functor $\fpsh F : \widehat \RGcat \to \widehat \pointcat$ sending a reflexive graph to its set of nodes. It is not surprising that this functor is sufficiently well-behaved to be a CwF morphism.
	
	This example also illustrates that the universe is not preserved. The nodes of the universe in $\widehat \RGcat$ are reflexive graphs. Thus $\fpsh F \uni{}$ will be the set of reflexive graphs. Then if $\Gamma \sez_{\widehat \RGcat} A : \uni{}$ in $\widehat \RGcat$, then $\fpsh F \Gamma \sez_{\widehat \pointcat} \ftrtm{\fpsh F}{A} : \fpsh F \uni{}$ sends nodes of $\Gamma$ to reflexive graphs. The edges \emph{between} types are forgotten. However, the type $\fpsh F \Gamma \sez_{\widehat \pointcat} \fpsh F (\El\,A) \type$ is simply the dependent set of nodes of $\El\,A$. Here, the edges \emph{within} $\El\,A$ are also forgotten.
	
	In general, from $\var X : \uni{} \sez \El\,\var X \type$, we can deduce $\ftrtm{\fpsh F}{\var X} : \fpsh F \uni{} \sez \fpsh F(\El\,\var X) \type$.
\end{example}

\subsection{Lifting natural transformations}
Assume we have functors $F, G : \catV \to \catW$ and a natural transformation $\nu : F \to G$. Then we get morphisms of CwFs $\fpsh F, \fpsh G : \Psh(\catW) \to \Psh(\catV)$ and, since presheaves are contravariant, a natural transformation $\fpsh \nu : \fpsh G \to \fpsh F$.

Let us see how $\fpsh \nu$ works. Pick $\fpshadj G(\gamma) : \DSub{V}{\fpsh G \Gamma}$. Then $\gamma : \DSub{GV}{\Gamma}$ and $\gamma \circ \nu : \DSub{FV}{\Gamma}$. Now $\fpsh \nu \circ \fpshadj G(\gamma) = \fpshadj F(\gamma \circ \nu)$. This is natural because
\begin{equation}
	\fpsh \nu \circ (\fpshadj G(\gamma) \circ \vfi)
	= \fpsh \nu \circ \fpshadj G(\gamma \circ G\vfi)
	= \fpshadj F(\gamma \circ G\vfi \circ \nu)
	= \fpshadj F(\gamma \circ \nu \circ F\vfi)
	= \fpshadj F(\gamma \circ \nu) \circ \vfi
	= (\fpsh \nu \circ \fpshadj F(\gamma)) \circ \vfi.
\end{equation}

\subsection{Lifting adjunctions}
\begin{proposition}\label{thm:lifting-adjunctions}
	Assume we have functors $L : \catV \to \catW$ and $R : \catW \to \catV$ such that $\alpha : L \dashv R$ with unit $\eta : \Id \to RL$ and co-unit $\eps : LR \to \Id$. Then we get $\fpsh L : \widehat \catW \to \widehat \catV$ and $\fpsh R : \widehat \catV \to \widehat \catW$ and we have $\fpsh \alpha : \fpsh L \dashv \fpsh R$ with unit $\fpsh \eps : \Id \to \fpsh R \fpsh L$ and co-unit $\fpsh \eta : \fpsh L \fpsh R \to \Id$. Moreover, $\fpsh L \circ \yoneda \cong \yoneda \circ R : \catW \to \widehat \catV$.
\end{proposition}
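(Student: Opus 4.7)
The plan is to leverage the fact that the lifting $(-)^\dagger$ is doubly contravariant --- reversing both the direction of 1-cells ($\fpsh{(GH)} = \fpsh H \fpsh G$) and the direction of 2-cells ($\nu : F \to G$ becomes $\fpsh \nu : \fpsh G \to \fpsh F$) --- and that any such 2-functor carries adjunctions to adjunctions with unit and co-unit swapped. The objects $\fpsh L$ and $\fpsh R$ are CwF morphisms by \cref{thm:fpsh-strict-cwf-morphism}. Applying the natural-transformation lifting from the previous subsection to $\eps : LR \to \Id_\catW$ and $\eta : \Id_\catV \to RL$ produces $\fpsh \eps : \Id_{\widehat \catW} \to \fpsh R \fpsh L$ and $\fpsh \eta : \fpsh L \fpsh R \to \Id_{\widehat \catV}$, the candidates for unit and co-unit of $\fpsh L \dashv \fpsh R$.

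The main verification is of the triangle identities. First I would establish the whiskering formulas $\fpsh{(\nu H)} = \fpsh H \fpsh \nu$ and $\fpsh{(H \nu)} = \fpsh \nu \fpsh H$ by component-wise unfolding of definitions; these are routine. Then applying $(-)^\dagger$ to $\eps L \circ L \eta = \id_L$ produces $\fpsh \eta \fpsh L \circ \fpsh L \fpsh \eps = \id_{\fpsh L}$, and similarly $R \eps \circ \eta R = \id_R$ yields $\fpsh R \fpsh \eta \circ \fpsh \eps \fpsh R = \id_{\fpsh R}$. These are precisely the triangle identities for $\fpsh L \dashv \fpsh R$; the corresponding adjunction bijection can then be defined by $\fpsh \alpha(\sigma) = \fpsh R \sigma \circ \fpsh \eps$.

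For the compatibility with Yoneda, I would construct the isomorphism pointwise: for $W \in \catW$ and $V \in \catV$,
\begin{equation*}
	\fpsh L(\yoneda W)(V) = \yoneda W(LV) = \Hom_\catW(LV, W) \xrightarrow{\alpha} \Hom_\catV(V, RW) = \yoneda(RW)(V),
\end{equation*}
and naturality in $V$ and $W$ follows from the two naturality properties of the adjunction isomorphism $\alpha$.

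The main obstacle is purely bookkeeping: one must be careful that the contravariance of $(-)^\dagger$ correctly swaps the roles of unit and co-unit and reverses the order in whiskerings. No new ideas are required beyond this, since the whole statement is a formal consequence of the 2-functoriality of $(-)^\dagger$.
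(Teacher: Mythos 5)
Your proof is correct and takes essentially the same approach as the paper: both reduce the claim to verifying the triangle identities for $\fpsh\eps$ and $\fpsh\eta$, and both establish the Yoneda compatibility by the identical chain of bijections $(\DSub V {\fpsh L \yoneda W}) \cong (\PSub{LV}{W}) \cong (\PSub{V}{RW})$. Where the paper verifies the triangle identities by unfolding the action of $\fpsh\loch$ on defining substitutions directly, you factor that computation into whiskering lemmas for $\fpsh\loch$ and then transport the original triangle identities through the doubly contravariant 2-functor, which is a cleaner packaging of the same bookkeeping.
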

So $\fpsh \loch$ is an operation that takes the right adjoint of a functor and immediately extends it to the entire presheaf category.
\begin{proof}
	To prove the adjunction, it is sufficient to prove that $\fpsh \eta \circ \fpsh L \fpsh \eps = \id_{\fpsh L}$ and $\fpsh R \fpsh \eta \circ \fpsh \eps = \id_{\fpsh R}$. We need to assume that $\fpshadj{\Id} = \id$ and $\fpshadj{FG} = \fpshadj{G} \circ \fpshadj{F}$.
	
	Pick $\fpshadj L(\gamma) : \DSub{V}{\fpsh L \Gamma}$. Then
	\begin{align}
		\fpsh \eta \circ \fpsh L \fpsh \eps \circ \fpshadj L(\gamma)
		&= \fpsh \eta \circ \fpshadj L(\fpsh \eps \circ \gamma)
		= \fpsh \eta \circ \fpshadj L(\fpshadj{R} (\fpshadj{L}(\gamma \circ \eps))) \\ \nn
		&= \fpshadj{L}(\gamma \circ \eps) \circ \eta
		= \fpshadj{L}(\gamma \circ \eps \circ L\eta)
		= \fpshadj{L}(\gamma).
	\end{align}
	
	Similarly, pick $\fpshadj R(\delta) : \DSub{W}{\fpsh R \Gamma}$. Then
	\begin{align}
		\fpsh R \fpsh \eta \circ \fpsh \eps \circ \fpshadj R(\delta)
		&= \fpsh R \fpsh \eta \circ \fpshadj R(\fpshadj L(\fpshadj R(\delta) \circ \eps))
		= \fpsh R \fpsh \eta \circ \fpshadj R(\fpshadj L(\fpshadj R(\delta \circ R\eps))) \\ \nn
		&= \fpshadj R(\fpsh \eta \circ \fpshadj L(\fpshadj R(\delta \circ R\eps)))
		= \fpshadj R(\delta \circ R \eps \circ \eta) = \fpshadj R(\delta).
	\end{align}
	To see the isomorphism:
	\begin{equation}
		(\DSub V {\fpsh L \yoneda W})
		\cong (\DSub{LV}{\yoneda W})
		= (\PSub{LV}{W})
		\cong (\PSub{V}{RW})
		= (\DSub{V}{\yoneda RW}).
	\end{equation}
	This is clearly natural.
\end{proof}

\subsection{The left adjoint to a lifted functor}\label{sec:left-adjoint-to-lifted-functor}
Assume a functor $F : \catV \to \catW$. Under reasonable conditions, one finds a general construction of a \emph{functor} $\lpsh F : \widehat \catV \to \widehat \catW$ that is left adjoint to $\fpsh F : \widehat \catW \to \widehat \catV$ \cite[00VC]{stacks-project17}. Although we will need such a left adjoint at some point, the general construction is overly complicated for our needs. Therefore, we will construct that functor ad hoc when we need it.
For now, we simply assume its existence and prove a lemma and some bad news:
\begin{lemma}\label{thm:yoneda-and-left-adjoint}
	Suppose we have a functor $\lpsh F : \widehat \catV \to \widehat \catW$ and a functor $F : \catV \to \catW$, such that $\lpsh F \dashv \fpsh F$. Then $\lpsh F \circ \yoneda \cong \yoneda \circ F$.
\end{lemma}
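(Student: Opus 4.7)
The plan is to derive the isomorphism by chaining the adjunction $\lpsh F \dashv \fpsh F$ with the Yoneda identifications recalled in \cref{sec:psh-cwf} and the explicit description of $\fpsh F$ on defining substitutions from the proof of \cref{thm:fpsh-strict-cwf-morphism}. Concretely, fix $V \in \catV$ and let $\Gamma$ range over $\widehat{\catW}$; I would assemble the chain
\begin{align*}
	(\lpsh F \yoneda V \to \Gamma)
	&\cong (\yoneda V \to \fpsh F \Gamma)
	\cong (\DSub{V}{\fpsh F \Gamma})
	= (\DSub{FV}{\Gamma})
	\cong (\yoneda FV \to \Gamma).
\end{align*}
The first isomorphism is the adjunction. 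The second and fourth are the standard Yoneda identifications $(\DSub V \Gamma') \cong (\yoneda V \to \Gamma')$ used throughout \cref{sec:psh-cwf}. The middle equality is the definition of $\fpsh F$ on contexts unfolded in item~1 of the proof of \cref{thm:fpsh-strict-cwf-morphism} (dropping the ignorable label $\fpshadj F$).

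Each step of the chain is natural in $\Gamma$: the adjunction by assumption, the Yoneda identifications by the Yoneda lemma, and the middle equality by functoriality of $\fpsh F$. So I obtain a natural isomorphism $\Hom_{\widehat{\catW}}(\lpsh F \yoneda V, \loch) \cong \Hom_{\widehat{\catW}}(\yoneda FV, \loch)$, and the Yoneda lemma applied to $\widehat{\catW}\op$ yields an isomorphism $\lpsh F \yoneda V \cong \yoneda FV$ in $\widehat{\catW}$. Naturality in $V$ follows because every step in the chain is also natural in $V$ (the adjunction, Yoneda, and, for the middle equality, functoriality of $F$).

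I do not expect a significant obstacle: all ingredients are either proved earlier in the chapter or standard. The only point requiring a bit of care is the bookkeeping needed to verify that the pointwise isomorphisms assemble into a genuine natural transformation of functors $\catV \to \widehat{\catW}$, which follows mechanically from the naturality of each constituent step.
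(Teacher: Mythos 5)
Your chain of isomorphisms and the appeal to the Yoneda lemma is exactly the paper's proof; the paper merely unfolds the final Yoneda step by constructing the two mutually inverse maps $f(\id)$ and $f\inv(\id)$ explicitly and checking they compose to identity via naturality in $\Gamma$. The proposal is correct and essentially identical to the paper's argument.
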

\begin{proof}
	We have a chain of isomorphisms, natural in $W$ and $\Gamma$:
	\begin{equation}
		(\lpsh F \yoneda W \to \Gamma) \cong (\yoneda W \to \fpsh F \Gamma) \cong (\DSub{F W}{\Gamma}) \cong (\yoneda F W \to \Gamma).
	\end{equation}
	Call the composite of these isomorphisms $f$. Then we have $f(\id_{\lpsh F \yoneda}) : \yoneda F \to \lpsh F \yoneda$ and $f\inv(\id_{\yoneda F}) : \lpsh F \yoneda \to \yoneda F$. By naturality in $\Gamma$, we have:
	\begin{equation}
		f(\id) \circ f\inv(\id) = f\inv(f(\id) \circ \id) = f\inv(f(\id)) = \id,
	\end{equation}
	\begin{equation}
		f\inv(\id) \circ f(\id) = f(f\inv(\id) \circ \id) = f(f\inv(\id)) = \id. \qedhere
	\end{equation}
\end{proof}
\begin{proposition}\label{thm:left-adjoint-not-cwf}
	The functor $\lpsh F : \widehat \catV \to \widehat \catW$ is not in general a morphism of CwFs.
\end{proposition}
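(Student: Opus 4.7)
The plan is to exhibit a small counterexample in which $\lpsh F$ fails to preserve the terminal context; since any weak CwF morphism must send the terminal to a terminal object, this will suffice to disqualify $\lpsh F$ from being a CwF morphism in general.

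Concretely, I would take $\catV = \pointcat$, so $\widehat{\catV} \cong \Set$, and let $\catW$ be the discrete category on two objects $a, b$, so $\widehat{\catW} \cong \Set \times \Set$. Let $F : \catV \to \catW$ pick out $a$. Unfolding the definition of $\fpsh F$ then shows that it is precisely the first projection $\Set \times \Set \to \Set$. By \cref{thm:yoneda-and-left-adjoint}, the left adjoint $\lpsh F : \Set \to \Set \times \Set$ must satisfy $\lpsh F \yoneda () \cong \yoneda a$, i.e. it sends the singleton $\{*\}$ to the pair $(\{*\}, \emptyset)$. (Since every set is a coproduct of singletons and $\lpsh F$ preserves colimits, this in fact forces $\lpsh F S = (S, \emptyset)$, but we do not need this explicit form.)

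To finish, observe that the terminal object of $\widehat{\catV} \cong \Set$ is $\{*\}$, whereas the terminal of $\widehat{\catW} \cong \Set \times \Set$ is $(\{*\}, \{*\})$. The image $\lpsh F \{*\} \cong (\{*\}, \emptyset)$ is manifestly not isomorphic to $(\{*\}, \{*\})$, so $\lpsh F$ does not send the terminal context to a terminal object. Hence $\lpsh F$ cannot be endowed with the structure of a (weak) morphism of CwFs.

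The only non-routine step is picking the right counterexample; once chosen, the computation is immediate from \cref{thm:yoneda-and-left-adjoint}. The conceptual obstruction is that $\lpsh F$ is a left adjoint, so it preserves colimits but in general not limits — and the terminal context is a limit.
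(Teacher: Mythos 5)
Your proof is correct, but it takes a genuinely different route from the paper's. The paper's counterexample uses the unique functor $G : \RGcat \to \pointcat$; there the left adjoint $\lpsh G$ (which collapses a reflexive graph to its set of connected components) \emph{does} preserve the terminal object, so the paper has to dig deeper and exhibit a failure of the substitution equation $\lpsh G(T[\sigma]) = (\lpsh G T)[\lpsh G \sigma]$ for a concrete dependent presheaf $T$. Your counterexample instead uses a disconnected codomain category, which makes $\lpsh F$ fail at the very first and coarsest requirement in the definition of a weak CwF morphism: preservation of the terminal context. That is a cruder failure mode and gives a shorter argument, at the cost of relying on a disconnected base category that would never arise in the paper's own setting (the cube categories are all connected, which is exactly why the more delicate substitution failure is what actually bites there). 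Both proofs establish the proposition as stated; the paper's example is the more instructive one for the uses to which $\lpsh \loch$ is put later, while yours isolates the minimal obstruction. One small note on presentation: your concluding remark that ``$\lpsh F$ is a left adjoint, so it preserves colimits but in general not limits'' is suggestive but, by itself, insufficient --- as the paper's own $\lpsh G$ shows, a left adjoint may well happen to preserve the terminal --- so the concrete counterexample really is carrying all the weight, as you yourself implicitly acknowledge by supplying it.
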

\begin{proof}
	Consider the only functor $G : \RGcat \to \pointcat$; it sends $()$ and $(\var i : \IE)$ to $()$. Then $\fpsh G : \widehat \pointcat \to \widehat \RGcat$ sends a set $S$ to the discrete reflexive graph on $S$.
	
	Its left adjoint $\lpsh G$ sends a reflexive graph $\Gamma$ to its colimit. That is, it identifies all edge-connected nodes. Now consider a type $\Gamma \sez_{\widehat \RGcat} T \type$:
	\begin{equation}
		\xymatrix{
			\gamma \ar@{-}[d] &&& t \ar@{-}[dl] \ar@{-}[dr]
			\\
			\gamma' && t' && t''
		}
	\end{equation}
	That is: $\Gamma$ contains two nodes and an edge connecting them (as well as constant edges). $T$ has one node above $\gamma$ and two nodes above $\gamma'$ and they are connected as shown.
	
	There are two substitutions from $\yoneda O$, namely $\gamma$ and $\gamma' : \yoneda O \to \Gamma$. Clearly, $\lpsh G(T[\gamma]) \neq \lpsh G(T[\gamma'])$. But $\lpsh G (\gamma) = \lpsh G (\gamma')$. Thus, $\lpsh G$ cannot preserve substitution in the sense that $\lpsh G(T[\sigma])$ is equal to $\lpsh GT[\lpsh G \sigma]$.
\end{proof}

\subsection{The right adjoint to a lifted functor}
In this section, we look at the general construction of the right adjoint to a lifted functor. This section is not a prerequisite for \cref{part:paramdtt}.
\begin{definition}
	Given a functor $F : \catV \to \catW$, we define $\rpsh F : \widehat \catV \to \widehat \catW$ by
	\begin{equation}
		(\DSub{W}{\rpsh F \Gamma}) := (\fpsh F \yoneda W \to \Gamma),
	\end{equation}
	which is obviously contravariant in $W$. As usual, we treat this equality as though it were a non-trivial isomorphism: if $\sigma : \fpsh F \yoneda W \to \Gamma$, then we write $\rpshadj F(\sigma) : \DSub{W}{\rpsh F \Gamma}$.
\end{definition}
Thus, we have $\rpshadj F(\sigma) \circ \vfi = \rpshadj F(\sigma \circ \fpsh F \vfi)$ and $\rpsh F \tau \circ \rpshadj F(\sigma) = \rpshadj F(\tau \circ \sigma)$. (Recall that we do not write $\yoneda$ when applied to morphisms, treating primitive substitutions as substitutions.)
\begin{proposition}
	The functor $\rpsh F : \widehat \catV \to \widehat \catW$ is right adjoint to $\fpsh F : \widehat \catW \to \widehat \catV$.
\end{proposition}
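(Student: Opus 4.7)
The plan is to exhibit, naturally in $\Delta \in \widehat{\catW}$ and $\Gamma \in \widehat{\catV}$, a bijection
$$\fpsh \alpha : \Hom_{\widehat{\catV}}(\fpsh F \Delta, \Gamma) \;\cong\; \Hom_{\widehat{\catW}}(\Delta, \rpsh F \Gamma).$$
First, I would dispense with the representable case: for $\Delta = \yoneda W$, the Yoneda lemma gives $\Hom_{\widehat{\catW}}(\yoneda W, \rpsh F \Gamma) \cong (\rpsh F \Gamma)(W)$, which is by definition $\Hom_{\widehat{\catV}}(\fpsh F \yoneda W, \Gamma)$. So the adjunction is essentially built into the definition of $\rpsh F$; the real content of the proposition is to extend this coherently to all $\Delta$.

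For arbitrary $\Delta$, I would invoke the density theorem: every presheaf is canonically the colimit $\Delta \cong \colim_{(W, \delta) \in \int_{\catW} \Delta} \yoneda W$. A quick pointwise check shows that $\fpsh F$ preserves colimits — indeed, colimits of presheaves are computed pointwise, and $(\fpsh F \Gamma)(V) = \Gamma(FV)$, so $\fpsh F$ commutes with any colimit on the nose. Hence $\fpsh F \Delta \cong \colim_{(W,\delta)} \fpsh F \yoneda W$. Applying $\Hom(-, \Gamma)$ on the left and $\Hom(-, \rpsh F \Gamma)$ on the right turns these colimits into limits; combining with the representable case $(W,\delta) \mapsto \Hom(\fpsh F \yoneda W, \Gamma) = \Hom(\yoneda W, \rpsh F \Gamma)$ yields
$$\Hom(\fpsh F \Delta, \Gamma) \cong \lim_{(W,\delta)} \Hom(\fpsh F \yoneda W, \Gamma) \cong \lim_{(W,\delta)} \Hom(\yoneda W, \rpsh F \Gamma) \cong \Hom(\Delta, \rpsh F \Gamma).$$

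It is cleaner, though, to describe $\fpsh \alpha$ by its unit and counit, so that one can verify naturality directly instead of arguing through density. The unit $\fpsh \eta_\Delta : \Delta \to \rpsh F \fpsh F \Delta$ should send $\delta : \yoneda W \to \Delta$ to $\fpsh F \delta : \fpsh F \yoneda W \to \fpsh F \Delta$, viewed as an element of $(\rpsh F \fpsh F \Delta)(W) = \Hom(\fpsh F \yoneda W, \fpsh F \Delta)$ via $\rpshadj F$. The counit $\fpsh \eps_\Gamma : \fpsh F \rpsh F \Gamma \to \Gamma$ has components $(\fpsh \eps_\Gamma)_V : (\rpsh F \Gamma)(FV) = \Hom(\fpsh F \yoneda FV, \Gamma) \to \Gamma(V)$ given by precomposing with the canonical map $\yoneda V \to \fpsh F \yoneda FV$ corresponding under Yoneda to $\id_{FV} \in (\fpsh F \yoneda FV)(V) = \Hom_{\catW}(FV, FV)$. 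One then checks the triangle identities by unfolding these definitions on representables and extending by naturality.

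The main obstacle is bookkeeping rather than conceptual: verifying naturality of $\fpsh \eta$ in $\Delta$ and of $\fpsh \eps$ in $\Gamma$, and the two triangle laws, all hinge on keeping track of (i) the ignorable labels $\fpshadj F, \rpshadj F$, (ii) the two Yoneda unpackings (once for $\widehat{\catV}$ and once for $\widehat{\catW}$), and (iii) the interaction $\fpsh F \delta \circ \fpshadj F(\id) = \fpshadj F(\delta)$ coming from \cref{thm:fpsh-strict-cwf-morphism}. None of these is difficult in isolation, but assembling them without notational slips is the only real labour; once done, naturality of $\fpsh \alpha$ in both arguments follows mechanically.
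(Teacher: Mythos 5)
Your proposal offers two routes. The one you ultimately favour — describing the adjunction by its unit and counit — is precisely the one the paper takes, and your formulas agree with the paper's on the nose: the unit sends $\gamma : \DSub{W}{\Gamma}$ to $\rpshadj F(\fpsh F \gamma)$, and the counit sends $\fpshadj F(\rpshadj F(\sigma)) : \DSub{V}{\fpsh F \rpsh F \Gamma}$ to $\sigma \circ \fpshadj F(\id_{FV})$, exactly as you describe. What you defer as "bookkeeping" is exactly what the paper carries out: it checks both triangle identities componentwise on defining substitutions (which, as you say, is checking on representables), and naturality of the unit. So your preferred route and the paper's coincide.

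Your first route — via density of representables and the fact that $\fpsh F$ preserves colimits pointwise — is a genuinely different and valid argument. It trades the explicit triangle-identity verification for the coend calculus: $\fpsh F$ commutes with colimits on the nose, so $\Hom(\fpsh F \Delta, \Gamma)$ is a limit of $\Hom(\fpsh F \yoneda W, \Gamma) = (\rpsh F \Gamma)(W)$, which reassembles into $\Hom(\Delta, \rpsh F \Gamma)$. This is more conceptual and a little slicker (one does not have to find the unit and counit at all), but it leans on the cocompleteness machinery, and if one later needs the unit and counit explicitly — as the paper does, for the subsequent lemmas about natural transformations and the comparison $\iota(t)[\kappa]\inv$ — one has to extract them anyway, so the paper's choice to write them down directly is the more useful one in context. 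Either way, your proposal identifies the correct construction and correctly locates where the labour lies; I see no gap.
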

\begin{proof}
	We construct unit and co-unit and prove the necessary equalities.
	\begin{description}
		\item[$\eta$] Pick $\Gamma \in \widehat \catW$ and $\gamma : \DSub W \Gamma$. We need to construct $\eta \circ \gamma : \DSub{W}{\rpsh F \fpsh F \Gamma}$. Since we have $\fpsh F \gamma : \fpsh F \yoneda W \to \fpsh F \Gamma$, we can set $\eta \circ \gamma := \rpshadj F(\fpsh F \gamma)$.
		To see that this is natural in $W$, pick $\vfi : \PSub{W'}{W}$. Then $\rpshadj F(\fpsh F (\gamma \vfi)) = \rpshadj F(\fpsh F \gamma \circ \fpsh F \vfi) = \rpshadj F(\fpsh F \gamma) \vfi$.
		
		\item[$\eps$] Pick $\Gamma \in \widehat \catV$ and $\fpshadj F(\rpshadj F(\sigma)) : \DSub{V}{\fpsh F \rpsh F \Gamma}$. We need to construct $\eps \circ \fpshadj F(\rpshadj F(\sigma)) : \DSub V \Gamma$. We have $\rpshadj F(\sigma) : \DSub{FV}{\rpsh F \Gamma}$, i.e. $\sigma : \fpsh F \yoneda F V \to \Gamma$. From $\id_{FV} : \DSub{FV}{\yoneda FV}$, we get $\fpshadj F(\id_{FV}) : \DSub{V}{\fpsh F \yoneda F V}$, so we can set $\eps \circ \fpshadj F(\rpshadj F(\sigma)) := \sigma \circ \fpshadj F(\id_{FV}) : \DSub V \Gamma$. To see that this is natural, pick $\vfi : \PSub{V'}{V}$. Then we have
		\begin{align}
			\sigma \circ \fpshadj F(\id_{FV}) \circ \vfi
			&= \sigma \circ \fpshadj F(F \vfi)
			= \sigma \circ \fpsh F F \vfi \circ \fpshadj F(\id_{FV'})
		\end{align}
		while
		\begin{equation}
			\fpshadj F (\rpshadj F(\sigma)) \circ \vfi
			= \fpshadj F(\rpshadj F(\sigma) \circ F \vfi)
			= \fpshadj F(\rpshadj F(\sigma \circ \fpsh F F \vfi)).
		\end{equation}
		
		\item[$\rpsh F \eps \circ \eta \rpsh F = \id$] Pick $\rpshadj F(\sigma) : \DSub{W}{\rpsh F \Gamma}$. We have
		\begin{align}
			\rpsh F \eps \circ \eta \rpsh F \circ \rpshadj F(\sigma)
			&= \rpsh F \eps \circ \rpshadj F(\fpsh F \rpshadj F(\sigma))
			= \rpshadj F(\eps \circ \fpsh F \rpshadj F(\sigma)).
		\end{align}
		So it remains to show that $\sigma = \eps \circ \fpsh F \rpshadj F(\sigma) : \fpsh F \yoneda W \to \Gamma$. So pick $\fpshadj F(\vfi) : \DSub{V}{\fpsh F \yoneda W}$. Then we have
		\begin{align}
			\eps \circ \fpsh F \rpshadj F(\sigma) \circ \fpshadj F(\vfi)
			&= \eps \circ \fpshadj F(\rpshadj F(\sigma) \circ \vfi)
			= \eps \circ \fpshadj F(\rpshadj F(\sigma \circ \fpsh F \vfi)) \\ \nn
			&= \sigma \circ \fpsh F \vfi \circ \fpshadj F(\id_{FV}) = \sigma \circ \fpshadj F(\vfi),
		\end{align}
		as required.
		
		\item[$\eps \fpsh F \circ \fpsh F \eta = \id$] Pick $\fpshadj F(\gamma) : \DSub{V}{\fpsh F \Gamma}$. We have
		\begin{align}
			\eps \fpsh F \circ \fpsh F \eta \circ \fpshadj F(\gamma)
			&= \eps \fpsh F \circ \fpshadj F(\eta \circ \gamma)
			= \eps \fpsh F \circ \fpshadj F(\rpshadj F(\fpsh F \gamma)) \\ \nn
			&= \fpsh F \gamma \circ \fpshadj F(\id_{FV}) = \fpshadj F (\gamma). \qedhere
		\end{align}
	\end{description}
\end{proof}

\begin{proposition}\label{thm:rpsh-cwf-morphism}
	The functor $\rpsh F : \widehat \catV \to \widehat \catW$ is a (weak) morphism of CwFs.
\end{proposition}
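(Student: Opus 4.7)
The plan is to leverage the adjunction $\fpsh F \dashv \rpsh F$ together with the fact that, by construction, every defining substitution into $\rpsh F \Gamma$ at stage $W$ has the form $\rpshadj F(\sigma)$ for some $\sigma : \fpsh F \yoneda W \to \Gamma$ in $\widehat\catV$. This lets me transport the CwF structure of $\widehat\catV$ to $\widehat\catW$ stage-wise, making essential use of the fact (established in \cref{thm:fpsh-strict-cwf-morphism}) that $\fpsh F$ is already a strict CwF morphism.

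For a type $\Gamma \sez_{\widehat\catV} T \type$, I would define $\rpsh F \Gamma \sez_{\widehat\catW} \rpsh F T \type$ by setting $(\rpsh F T) \dsub{\rpshadj F(\sigma)} := \Tm_{\widehat\catV}(\fpsh F \yoneda W, T[\sigma])$, with restriction along $\vfi : \PSub{W'}{W}$ given by $t \mapsto t[\fpsh F \vfi]$. For a term $\Gamma \sez t : T$, I would define $(\ftrtm{\rpsh F}{t}) \dsub{\rpshadj F(\sigma)} := t[\sigma]$, which lies in the right set by construction. Functoriality in the stage $W$ is immediate from associativity of substitution; naturality in $\Gamma$ reduces to the identity $T[\sigma][\rho] = T[\sigma\rho]$ once one unfolds $\rpsh F \tau \circ \rpshadj F(\sigma) = \rpshadj F(\tau \circ \sigma)$.

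The empty context is preserved up to terminality: $(\DSub W {\rpsh F()}) = (\fpsh F \yoneda W \to ())$ is a singleton since $()$ is terminal in $\widehat\catV$, so $\rpsh F()$ is terminal in $\widehat\catW$. Context extension is only preserved up to isomorphism, which is precisely why the morphism is weak: the canonical map $(\rpsh F \pi, \ftrtm{\rpsh F}{\xi}) : \rpsh F(\Gamma.T) \to \rpsh F \Gamma . \rpsh F T$ is an isomorphism because at stage $W$ it factors through the defining bijection $(\fpsh F \yoneda W \to \Gamma.T) \cong \{(\sigma, t) \mid \sigma : \fpsh F \yoneda W \to \Gamma,\, t \in \Tm(\fpsh F \yoneda W, T[\sigma])\}$ coming from comprehension in $\widehat\catV$. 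The main obstacle I expect is the bookkeeping for preservation of substitution --- verifying $(\rpsh F T)[\rpsh F \tau] = \rpsh F(T[\tau])$ and its term analogue on the nose --- since the silent labels $\rpshadj F$ and $\fpshadj F$ are precisely where strictness can quietly degrade to isomorphism. Beyond that, there is little content: $\rpsh F$ is essentially just a relabelling of the CwF structure of $\widehat\catV$ at the representable-looking stages $\fpsh F \yoneda W$.
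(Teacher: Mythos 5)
Your proposal is correct and takes essentially the same route as the paper: you define $(\rpsh F T)\dsub{\rpshadj F(\sigma)}$ as the set of terms $\fpsh F \yoneda W \sez t : T[\sigma]$ with restriction by precomposing with $\fpsh F \vfi$, define $\ftrtm{\rpsh F}{t}\dsub{\rpshadj F(\sigma)}$ as $t[\sigma]$, observe $\rpsh F()$ is terminal since $(\fpsh F \yoneda W \to ())$ is a singleton, and note that comprehension is preserved only up to the obvious iso — all of which matches the paper's proof. The one small misconception worth flagging: you worry that the silent labels might cause $(\rpsh F T)[\rpsh F\tau] = \rpsh F(T[\tau])$ to ``degrade to isomorphism,'' but for $\rpsh F$ to be any kind of CwF morphism the components $F_\Ty$ and $F_\Tm$ must be literal natural transformations, so this equality must and does hold on the nose (the paper checks it by unfolding $T[\tau][\sigma] = T[\tau\sigma]$); the weakness of the morphism lives entirely in the empty context and comprehension being preserved only up to iso, which you correctly identify.
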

\begin{proof}
	\begin{enumerate}
		\item The action of $\rpsh F$ on contexts and substitutions is of course $\rpsh F$ itself.
		\item Pick $\Gamma \sez_{\widehat \catV} T \type$. We define $\rpsh F \Gamma \sez_{\widehat \catW} \rpsh F T \type$.
		\begin{itemize}
			\item The defining terms $W \Dsez_{\widehat \catW} \rpshadj F(t) : \rpsh F T \dsub{\rpshadj F(\sigma)}$ are labelled terms $\fpsh F \yoneda W \sez_{\widehat \catV} t : T[\sigma]$.
			\item Given $\vfi : \PSub{W'}{W}$, we define $\rpshadj F(t) \psub \vfi := \rpshadj F(t[\fpsh F \vfi])$.
		\end{itemize}
		We need to prove that this is natural in $\Gamma$, i.e. that $\rpsh F(T[\tau]) = \rpsh F T [\rpsh F \tau]$. We have
		\begin{align*}
			&W \Dsez_{\widehat \catW} \rpshadj F(t) : \rpsh F (T[\tau]) \dsub{\rpshadj F(\sigma)} \\ \nn
			&\Leftrightarrow
			\fpsh F \yoneda W \sez_{\widehat \catV} t : (T[\tau])[\sigma] \\ \nn
			&\Leftrightarrow
			\fpsh F \yoneda W \sez_{\widehat \catV} t : T[\tau \sigma] \\ \nn
			&\Leftrightarrow
			W \Dsez_{\widehat \catW} \rpshadj F(t) : \rpsh F T \dsub{\rpshadj F(\tau \sigma)} \\ \nn
			&\Leftrightarrow
			W \Dsez_{\widehat \catW} \rpshadj F(t) : \rpsh F T [\rpsh F \tau] \dsub{\rpshadj F(\sigma)}.
		\end{align*}
		
		\item Pick $\Gamma \sez_{\widehat \catV} t : T$. We define $\rpsh F \Gamma \sez_{\widehat \catW} \ftrtm{\rpsh F}{t} : \rpsh F T$ by
		\begin{equation}
			\ftrtm{\rpsh F}{t} \dsub{\rpshadj F(\sigma)} = \rpshadj F(t[\sigma]).
		\end{equation}
		This is natural, since
		\begin{equation}
			\ftrtm{\rpsh F}{t} \dsub{\rpshadj F(\sigma)} \psub{\vfi}
			= \rpshadj F(t[\sigma]) \psub \vfi
			= \rpshadj F(t[\sigma \circ \fpsh F \vfi])
			= \ftrtm{\rpsh F}{t} \dsub{\rpshadj F(\sigma \circ \fpsh F \vfi)}
			= \ftrtm{\rpsh F}{t} \dsub{\rpshadj F(\sigma) \circ \vfi}.
		\end{equation}
		
		\item We show that $\rpsh F ()$ is terminal. We have
		\begin{equation}
			(\DSub{W}{\rpsh F ()}) = (\fpsh F \yoneda W \to ()),
		\end{equation}
		which is a singleton.
		
		\item We show that $(\rpsh F \pi, \ftrtm{\rpsh F}{\xi}) : \rpsh F (\Gamma.T) \to \rpsh F \Gamma.\rpsh F T$ is invertible. We will first check what it does. Pick $\rpshadj F(\tau) : \DSub{W}{\rpsh F (\Gamma.T)}$. Then $\tau : \fpsh F \yoneda W \to \Gamma.T$ is of the form $(\sigma, t)$, where $\sigma = \pi \tau$ and $t = \xi[\tau]$. We have
		\begin{align}
			(\rpsh F \pi, \ftrtm{\rpsh F}{\xi}) \circ \rpshadj F(\sigma, t)
			&= (\rpsh F \pi \circ \rpshadj F(\sigma, t), \ftrtm{\rpsh F}{\xi}[\rpshadj F(\sigma, t)]) \\ \nn
			&= (\rpshadj F(\pi \circ (\sigma, t)), \rpshadj F(\xi[\sigma, t]))
			= (\rpshadj F(\sigma), \rpshadj F(t)).
		\end{align}
		This can clearly be inverted by mapping $(\rpshadj F(\sigma), \rpshadj F(t))$ to $\rpshadj F(\sigma, t)$. \qedhere
	\end{enumerate}
\end{proof}
To the extent possible, we will avoid inspecting the definition of $\rpsh F$.

\begin{proposition}\label{thm:right-adjoint-preserves-types}
	For any functor $F : \catV \to \catW$, the morphism of CwFs $\rpsh F$ preserves all operators related to $\Sigma$-types and identity types up to isomorphism.
\end{proposition}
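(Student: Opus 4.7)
The strategy is to unfold defining terms on both sides via the identification $\DSub{W}{\rpsh F \Gamma} = (\fpsh F \yoneda W \to \Gamma)$ and then match them by inspection. Throughout, let $\zeta_{\Gamma, A} := (\rpsh F \pi, \ftrtm{\rpsh F}{\xi}) : \rpsh F(\Gamma.A) \to \rpsh F \Gamma . \rpsh F A$ denote the iso witnessing weak preservation of context extension (Proposition \ref{thm:rpsh-cwf-morphism}). From the computation in that proof, one extracts the key identity
\[
  \zeta_{\Gamma, A}^{-1} \circ (\rpshadj F(\sigma), \rpshadj F(a)) = \rpshadj F(\sigma, a),
\]
on which all subsequent bookkeeping depends.

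For $\Sigma$-types, the aim is to exhibit a natural iso $\rpsh F(\Sigma A B) \cong \Sigma (\rpsh F A)((\rpsh F B)[\zeta^{-1}])$ of types over $\rpsh F \Gamma$, under which pair, $\fst$ and $\snd$ correspond on the nose. Unpacking at $\rpshadj F(\sigma) : \DSub{W}{\rpsh F \Gamma}$: the LHS consists of labels $\rpshadj F(t)$ of terms $\fpsh F \yoneda W \sez t : (\Sigma A B)[\sigma] = \Sigma(A[\sigma])(B[\sigma \subext])$, which decompose canonically into pairs $(a, b)$ with $a : A[\sigma]$ and $b : B[\sigma, a]$ over $\fpsh F \yoneda W$. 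The RHS consists of pairs $(\rpshadj F(a'), \rpshadj F(b'))$ with the same data, since the key identity gives $(\rpsh F B)[\zeta^{-1}] \dsub{(\rpshadj F(\sigma), \rpshadj F(a'))} = \rpsh F B \dsub{\rpshadj F(\sigma, a')}$. The resulting bijection $(a, b) \leftrightarrow (\rpshadj F(a), \rpshadj F(b))$ is evidently natural in $W$ and $\Gamma$ and matches the constructors and destructors.

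For identity types, the claim is $\rpsh F(a \idtp A b) \cong (\ftrtm{\rpsh F} a \idtp{\rpsh F A} \ftrtm{\rpsh F} b)$. At $\rpshadj F(\sigma)$, both sides are subsets of a singleton: the LHS is inhabited iff $a[\sigma] = b[\sigma]$ as terms over $\fpsh F \yoneda W$, the RHS iff $\rpshadj F(a[\sigma]) = \rpshadj F(b[\sigma])$, and these conditions coincide because $\rpshadj F(\loch)$ is an injective label. Preservation of $\refl$ is then immediate, and preservation of $\J$ follows because in the presheaf model $\J$ simply returns its default branch $c$ on defining terms, so both sides unfold to $\rpshadj F(c[\sigma])$.

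The main obstacle is the careful bookkeeping around $\zeta$: once the identity $\zeta^{-1} \circ (\rpshadj F(\sigma), \rpshadj F(a)) = \rpshadj F(\sigma, a)$ is made explicit, every remaining check is a mechanical unfolding. I expect no deep surprises; conceptually, the reason $\Sigma$ and $\Id$ survive while $\Weld$ does not (cf.\ the analogous \cref{thm:lifted-preserves-types} for $\fpsh F$, where $\Weld$ \emph{was} included) is that $\rpsh F$ is a right adjoint, so it preserves the limit-like type formers but not the colimit-like ones.
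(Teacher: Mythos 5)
Your proposal is correct and follows the same route as the paper's proof: unfolding defining terms of both sides at $\rpshadj F(\sigma) : \DSub{W}{\rpsh F \Gamma}$ via the identification $\DSub{W}{\rpsh F \Gamma} = (\fpsh F \yoneda W \to \Gamma)$, matching them using the comprehension-preservation identity from \cref{thm:rpsh-cwf-morphism}, and observing that the identity type reduces to an equality of labelled terms. You merely make explicit the key identity $\zeta^{-1}\circ(\rpshadj F(\sigma),\rpshadj F(a)) = \rpshadj F(\sigma,a)$ and additionally spell out $\refl$, $\J$, and the right-adjoint/limit heuristic, none of which the paper states but all of which are correct.
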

\begin{proof}
	\begin{description}
		\item[$\Sigma$-types] Assume $\Gamma \sez A \type$ and $\Gamma.A \sez B \type$. We show that $\rpsh F (\Sigma A B)$ is isomorphic to $\Sigma(\rpsh F A)((\rpsh F B)[(\pi, \xi)_{\rpsh F}])$.
		
		We have correspondences (i.e. isomorphisms between the sets whose element relation is denoted using these judgement notations) as follows:
		\begin{equation*}
			\xymatrix{
				&{W \Dsez \rpshadj F (a, b) : (\rpsh F (\Sigma A B))\dsub{\rpshadj F (\sigma)}} \ar@{-}[d] \\
				&{\fpsh F \yoneda W \sez (a, b) : (\Sigma A B)[\sigma]} \ar@{-}[ld] \ar@{-}[d]
				\\
				\fpsh F \yoneda W \sez a : A[\sigma] \ar@{-}[dd]
				& \fpsh F \yoneda W \sez b : B[(\sigma, a)] \ar@{-}[d]
				\\
				& W \Dsez \rpshadj F(b) : B \dsub{\rpshadj F (\sigma, a)} \ar@{-}[d]
				\\
				W \Dsez \rpshadj F(a) : (\rpsh F A) \dsub{\rpshadj F(\sigma)} \ar@{-}[rd]
				& W \Dsez \rpshadj F(b) : (\rpsh F B) \sub{(\pi, \xi)_{\rpsh F}} \dsub{\rpshadj F (\sigma), \rpshadj F (a)} \ar@{-}[d]
				\\
				& W \Dsez \paren{\rpshadj F (a), \rpshadj F (b)} : \paren{\Sigma(\rpsh F A)((\rpsh F B)[(\pi, \xi)_{\rpsh F}])} \dsub{\rpshadj F (\sigma)}
			}
		\end{equation*}
		One can show that the pairing and projection operations are also respected.
		
		\item[Identity types] Assume $\Gamma \sez a, b : A$. We can show that $\rpsh F(a \idtp A b)$ is isomorphic to $\ftrtm{\rpsh F}{a} \idtp{\rpsh F A} \ftrtm{\rpsh F}{b}$. We have
		\begin{equation*}
			\xymatrix{
				W \Dsez \rpshadj F(p) : \rpsh F (a \idtp A b) \dsub{\rpshadj F (\sigma)} \ar@{-}[d]
				\\
				\fpsh F \yoneda W \sez p : (a \idtp A b)[\sigma] \ar@{-}[d]
				\\
				\fpsh F \yoneda W \sez a[\sigma] = b[\sigma] : A[\sigma] \ar@{-}[d]
				\\
				W \Dsez (\ftrtm{\rpsh F}{a})\dsub{\rpshadj F (\sigma)} = (\ftrtm{\rpsh F}{b})\dsub{\rpshadj F (\sigma)} : (\rpsh F A)\dsub{\rpshadj F (\sigma)} \ar@{-}[d]
				\\
				W \Dsez q : \paren{\ftrtm{\rpsh F}{a} \idtp{\rpsh F A} \ftrtm{\rpsh F}{b}} \dsub{\rpshadj F (\sigma)}
			}
		\end{equation*}
		
		\item[$\Weld$-types]
	\end{description}
\end{proof}

\begin{proposition}\label{thm:right-adjoint-no-weld}
	The right adjoint to a lifted functors does not always preserve $\Weld$-types up to isomorphism.
\end{proposition}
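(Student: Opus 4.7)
My plan is to construct an explicit counterexample, reusing the elementary setup of \cref{ex:presheaf-sets}. Take $F : \pointcat \to \RGcat$ sending $()$ to $()$, so that $\fpsh F : \widehat\RGcat \to \widehat\pointcat \cong \Set$ extracts the node-set of a reflexive graph and $\rpsh F : \Set \to \widehat\RGcat$ is its right adjoint. I will choose a context $\Gamma$ together with a proposition $P$, types $A$ and $T$, and a map $f$ on the $\Set$-side, and then show that $\rpsh F(\Weldsys{A}{\Weldsysclauseb{\var p : P}{T}{f}})$ and the preservation candidate $\Omega' := \Weldsys{\rpsh F A}{\Weldsysclauseb{\var p : \rpsh F P}{\rpsh F T}{\ftrtm{\rpsh F}{f}}}$ have distinct cardinalities at a carefully chosen defining substitution, ruling out any isomorphism of dependent presheaves.

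Concretely, I will take $\Gamma = \{0,1\}$, $P(0) = \{\star\}$ and $P(1) = \emptyset$, a two-element $A(0) = A(1)$, a singleton $T(0) = \{t\}$, and the unique $f$. Writing $\Omega$ for the original $\Weld$-type, we have $\Omega(0) = T(0)$ (one element) and $\Omega(1) \cong A(1)$ (two elements). The substitution to focus on is $\rpshadj F(\sigma) : \DSub{(\var i : \IE)}{\rpsh F \Gamma}$, where $\sigma : \fpsh F \yoneda(\var i : \IE) \to \Gamma$ is the map of two-element sets with $\sigma(0/\var i) = 0$ and $\sigma(1/\var i) = 1$. By definition, $\rpsh F \Omega$ at this substitution is the set of labelled sections of $\Omega$ over $\fpsh F \yoneda(\var i : \IE)$ along $\sigma$; because the base category $\pointcat$ has only trivial morphisms, such a section is an independent pair in $\Omega(0) \times \Omega(1)$, yielding exactly $2$ elements.

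For $\Omega'$, the key observation is that $\rpsh F P$ at $\rpshadj F(\sigma)$ is empty, since a section of the proposition $P$ along $\sigma$ would require inhabiting $P(1) = \emptyset$. Hence the case distinction in the $\Weld$ semantics places $\Omega' \dsub{\rpshadj F(\sigma)}$ in bijection with $\rpsh F A \dsub{\rpshadj F(\sigma)}$, which by the same argument counts independent pairs in $A(0) \times A(1)$, giving $4$ elements. The mismatch $2 \neq 4$ defeats any isomorphism of dependent presheaves, since an isomorphism would restrict to a bijection at each defining substitution.

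The main obstacle is conceptual rather than computational: I must argue cleanly that the failure of preservation is exactly the dislocation of two case distinctions. Inside $\rpsh F \Omega$, the case distinction from the $\Weld$ semantics happens per-primitive-context, via the underlying presheaf structure of $\Omega$, so each primitive sub-context contributes its own choice of clause; inside $\Omega'$, by contrast, the case distinction happens once, at $\rpshadj F(\sigma)$, after $P$ has been strengthened by $\rpsh F$ to its ``global-section'' form. Making this dislocation visible at one concrete substitution is the entire content of the proof; the remaining work is merely bookkeeping in the definitions of $\rpsh F$ and of $\Weld$.
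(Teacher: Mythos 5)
Your counterexample is correct, and it is built on the same underlying phenomenon as the paper's: the case distinction in the $\Weld$ semantics fires at each primitive context inside $\rpsh F \Omega$, but only once, after taking $\rpsh F$, inside $\Omega'$. You also use the same base functor $\cohdisc : \pointcat \to \RGcat$ and the same proposition $P$. The differences are in the witness and the final step. The paper sets $A\dsub{\gamma_0} = \eset$, so that $\rpsh F A$ --- hence $\Omega'$ --- is \emph{empty} at the edge $\gamma_0 \to \gamma_1$, while $\rpsh F \Omega$ is visibly \emph{inhabited} there (it carries the image of a global section $\Gamma \sez w : \Omega$); a slicker contradiction than cardinality counting, and it makes the slogan ``welding codiscrete graphs can produce a non-codiscrete graph'' immediate. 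Your witness uses a uniformly two-element $A$ and concludes by the mismatch $2 \neq 4$ in fiber cardinalities. Both work; the paper's version is more minimal (it doesn't even need to count), yours is arguably more robust in that it avoids an empty fiber, which some readers might feel is a degenerate case.

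One small piece of bookkeeping you elide: the preservation candidate $\Omega'$ should be
\begin{equation*}
	\Omega' = \Weldsys{\rpsh F A}{\Weldsysclauseb{\rpsh F P}{(\rpsh F T)[(\pi,\xi)_{\rpsh F}]}{\lambda((\ftrtm{\rpsh F}{\ap~f})[(\pi,\xi)_{\rpsh F}])}},
\end{equation*}
since $\rpsh F T$ naturally lives over $\rpsh F(\Gamma.P)$, which is only \emph{isomorphic} to $(\rpsh F\Gamma).(\rpsh F P)$ because $\rpsh F$ is a weak (not strict) CwF morphism. The reindexing by $(\pi,\xi)_{\rpsh F}$ does not change the cardinality count, so your argument survives, but you should state the candidate $\Omega'$ carefully --- otherwise a reader might object that you are comparing $\rpsh F \Omega$ against a type that is not even well-formed.
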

\begin{proof}
	Consider the functor $\cohdisc : \pointcat \to \RGcat$ sending $()$ to $()$. Then $\lpsh \cohdisc$ sends a set to its discrete reflexive graph, $\fpsh \cohdisc$ sends a graph to its set of nodes, and $\rpsh \cohdisc$ sends a set to its codiscrete reflexive graph.
	
	Now define a context $\Gamma \in \widehat{\pointcat}$ by setting $(\DSub{()}{\Gamma}) = \accol{\gamma_0, \gamma_1}$, a type $\Gamma \sez A \type$ by setting $A \dsub{\gamma_0} = \eset$ and $A \dsub{\gamma_1} = \accol{a_1}$, and a proposition $\Gamma \sez P \prop$ by setting $P \dsub{\gamma_0} = \accol{\star}$ and $P \dsub{\gamma_1} = \eset$. Finally, define $\Gamma.P \sez T$ by setting $T \dsub{\gamma_0, \star} = \accol{t_0}$. Define $\Gamma.P \sez f : A[\pi] \to T$ from the absurd. Let $\Omega = \Weldsys{A}{\Weldsysclauseb{P}{T}{f}}$. Then we have
	\begin{align*}
		\Omega \dsub{\gamma_0} &= T \dsub{\gamma_0, \star} = \accol{t_0} \\
		\Omega \dsub{\gamma_1} &= \set{\dweld~a}{a : A \dsub{\gamma_1}} = \accol{\dweld~a_1}.
	\end{align*}
	Hence, we can define $\Gamma \sez w : \Omega$ by $w \dsub{\gamma_0} = t_0$ and $w \dsub{\gamma_1} = \dweld~a_1$. Then we also get
	\begin{equation}
		\rpsh \cohdisc \Gamma \sez \ftrtm{\rpsh \cohdisc}{w} : \rpsh \cohdisc \Omega.
	\end{equation}
	
	Meanwhile, define $\Omega' = \Weldsys{\rpsh \cohdisc A}{\Weldsysclauseb{\rpsh \cohdisc P}{(\rpsh \cohdisc T)[(\pi, \xi)_{\rpsh \cohdisc}]}{\lambda((\ftrtm{\rpsh \cohdisc}{\ap~f})[(\pi, \xi)_{\rpsh \cohdisc}])}}$. Now suppose $\Omega \cong \Omega'$, then in particular we have some term
	\begin{equation}
		\rpsh \cohdisc \Gamma \sez w' : \Omega'.
	\end{equation}
	Now there is a defining substitution $\gamma : \DSub{(\var i : \IE)}{\rpsh \cohdisc \Gamma}$ which is an edge from `$\gamma_0$' to `$\gamma_1$'. Since $(\rpsh \cohdisc P) \dsub \gamma$ is false, we have
	\begin{equation}
		\Omega' \dsub{\gamma} = \set{\dweld~a}{a : \rpsh \cohdisc A} = \eset,
	\end{equation}
	so the object $w' \dsub \gamma$ cannot exist.
	
	The idea of this proof is that we built a non-codiscrete graph by welding together codiscrete graphs.
\end{proof}

\part{A Model of Parametric Dependent Type Theory in Bridge/Path Cubical Sets}\label{part:paramdtt}
\chapter{Bridge/path cubical sets}\label{ch:bpcubecat}
In this chapter, we move from the general presheaf model to the category $\widehat{\cubecat}$ of cubical sets and the category $\widehat{\bpcubecat}$ of bridge/path cubical sets. Throughout, we assume the existence of an infinite alphabet $\aleph$ of variable names, as well as a function $\fresh : \Pi(A \subseteq \aleph).(\aleph \setminus A)$ that picks a fresh variable for a given set of variables $A$.

\section{The category of cubes}\label{sec:cubecat}
\label{sec:cube}
In this section, we define the category of cubes $\cubecat$; presheaves over this category are called \textdef{cubical sets}. The reason we are interested in cubical sets, is that they generalize reflexive graphs: they contain points and edges, but also edges between edges (squares), edges between squares (cubes), etc. Imagine we have a type $\IE$ that contains two points, connected by an edge (as opposed to a bridge or a path). Then an $n$-dimensional cube is a value that ranges over $n$ variables of type $\IE$. For this reason, we define a \textdef{cube} as a finite set $W \subseteq \aleph$. We write $()$ for the 0-dimensional cube, $(W, \ctxedge{\var i})$ for $W \uplus \accol{\var i}$, implying that $\var i \not\in W$, and similarly $(V, W)$ for $V \uplus W$.

A face map $\vfi : \PSub V W$ assigns to every variable $\var i \in W$ a value $\var i \psub \vfi$ which is either 0, 1 (up to $n - 1$ for $n$-ary parametricity), or a variable in $V$. We use common substitution notation to denote face maps. So $\var i \psub{()} = \var i$, while $\var i \psub{\vfi, t / \var i} = t$. We write $\vfi = (\psi, f(\var i)/\var i \in W') : \PSub V W$ to denote that $\var i \psub \vfi = f(\var i)$ for all $\var i \in W' \subseteq W$. To emphasize that a face map does not use a variable $\var i$, we write $\vfi = (\psi, \facewkn{\var i})$. We make sure that different clauses in the same substitution do not conflict; hence we need no precedence rules.

Then a cubical set $\Gamma$ contains, for every cube $W$, a set of $|W|$-dimensional cubes $\DSub W \Gamma$. Every such cube has $2^{|W|}$ vertices, extractable using the face maps $(\epsilon(\var i)/\var i \in M) : \PSub \eset W$ for all $\epsilon : W \to \accol{0, 1}$. By substituting 0 or 1 for only some variables, we obtain the sides of a cube. We can create flat (\textdef{degenerate}/\textdef{constant}) cubes by not using variables, e.g. $(\facewkn{\var i}) : \PSub {(M, \ctxedge{\var i})} M$. Cubes also have diagonals, e.g. $(\var i/\var j) : \PSub{(\ctxedge{\var i})}{(\ctxedge{\var i, \var j})}$.

It is easy to see that $\cubecat$ is closed under finite cartesian products; namely $V \times W = (V, W)$. Because the Yoneda-embedding preserves limits such as cartesian products, we have $\yoneda W \cong (\yoneda \IE)^W$.

This is not the only useful definition. For example, \cite{moulin-param3} and \cite{huber} consider cubes which have no diagonals, while \cite{cubical} uses cubes that have `connections', a way of adding a dimension by folding open a line to become a square with two adjacent constant sides.

\section{The category of bridge/path cubes}
\label{sec:bpcube}
The novel category of bridge/path cubes $\bpcubecat$ is similar to $\cubecat$ but its cubes have two flavours of dimensions: bridge dimensions and path dimensions. So a \textdef{bridge/path cube} $W$ is a pair $(W_\IB, W_\IP)$, where $W_\IB$ and $W_\IP$ are disjoint subsets of $\aleph$. We write $()$ for the 0-dimensional cube, $(W, \ctxbrid{\var i})$ for $(W_\IB \uplus \accol{\var i}, W_\IP)$ and $(W, \ctxpath{\var i})$ for $(W_\IB, W_\IP \uplus \accol{\var i})$.

A face map $\vfi : \PSub V W$ assigns to every bridge variable $(\ctxbrid{\var i}) \in W$ either 0, 1 or a bridge variable from $V$, and to every path variable $(\ctxpath{\var i}) \in W$ either 0, 1, or a path or bridge variable from $V$. We will sometimes add a superscript to make the status of a variable clear, e.g. $(\var j^\IB / \var i^\IP) : \PSub{(\var j : \IB)}{(\var i : \IP)}$.

Then a bridge/path cubical set $\Gamma$ contains, for every bridge/path cube $W$, a set of cubes with $|W_\IB|$ bridge dimensions and $|W_\IP|$ path dimensons. Again we can extract vertices and faces and we can create flat cubes by introducing bridge or path dimensions. We can weaken paths to bridges, e.g. $(\var j/\var i) : \PSub{(W, \ctxbrid{\var j})}{(W, \ctxpath{\var i})}$. Finally, we can extract diagonals, but if the cube of which we take the diagonal, has at least one bridge dimension, then the diagonal has to be a bridge, e.g. $(\var j^\IB/\var i^\IP) : \PSub{(W, \ctxbrid{\var j})}{(W, \ctxpath{\var i}, \ctxbrid{\var j})}$.

\section{The cohesive structure of $\widehat{\bpcubecat}$ over $\widehat{\cubecat}$}
In this section, we construct a chain of five adjoint functors (all but the leftmost one morphisms of CwFs) between $\widehat{\bpcubecat}$ and $\widehat{\cubecat}$. By composing each one with its adjoint, these give rise to a chain of four adjoint endofunctos (all but the leftmost one endomorphisms of CwFs) on $\widehat{\bpcubecat}$.

We are not interested in the adjoint quintuple between $\widehat{\bpcubecat}$ and $\widehat{\cubecat}$ per se, but making a detour along them reveals a structure similar to what is studied in cohesive type theory, and may also be beneficial in order to understand intuitively what is going on in the rest of this text.

\subsection{Cohesion}
Let $\cat S$ be a category whose objects are some notion of spaces. Then a notion of \emph{cohesion} on objects of $\cat S$ gives rise to a category $\cat C$ of cohesive spaces and a \textbf{forgetful functor} $\cohfget : \cat C \to \cat S$ which maps a cohesive space $C$ to the underlying space $UC$, forgetting its cohesive structure.

Typically, if $\cohfget : \cat C \to \cat S$ appeals to the intuition about cohesion, then it is part of an adjoint quadruple of functors\footnote{More often, these are denoted $\Pi \dashv \Delta \dashv U \dashv \nabla$, but some of these symbols are already heavily in use in type theory.}
\begin{equation}
	\cohpi \dashv \cohdisc \dashv \cohfget \dashv \cohcodisc.
\end{equation}
Here, the \textbf{discrete functor} $\cohdisc$ equips a space $S$ with a discrete cohesive structure, i.e. in the cohesive space $\cohdisc S$, nothing is stuck together. As such, a cohesive map $\cohdisc S \to C$ amounts to a map $S \to UC$.

Dually, the \textbf{codiscrete functor} $\cohcodisc$ equips a space $S$ with a codiscrete cohesive structure, sticking everything together. As such, a cohesive map $C \to \cohcodisc S$ amounts to a map $UC \to S$.

Finally, the functor $\cohpi$ maps a cohesive space $C$ to its space of cohesively connected components. A map $C \to \cohdisc S$ will necessarily be constant on cohesive components, as $\cohdisc S$ is discrete, and hence amounts to a map $\cohpi C \to S$.

Typically, the composites $\cohpi \cohdisc, \cohfget \cohdisc, \cohfget \cohcodisc : \cat S \to \cat S$ will be isomorphic to the identity functor, the latter two even equal. Indeed: if we equip a space with discrete cohesion and then contract components, we have done essentially nothing. If we equip a space with a discrete or codiscrete cohesion, and then forget it again, we have litterally done nothing.

The composites $\shp = \cohdisc \cohpi$, $\flat = \cohdisc \cohfget$ and $\sharp = \cohcodisc \cohfget : \cat C \to \cat C$ form a more interesting adjoint triple $\shp \dashv \flat \dashv \sharp$ of endofunctors on $\cat C$. The \textbf{shape} functor $\shp$ contracts cohesive components. The \textbf{flat} functor $\flat$ removes the existing cohesion in favour of the discrete one, and the \textbf{sharp} functor $\sharp$ removes it in favour of the codiscrete one. If the adjoint triple on $\cat S$ is indeed as described above, then we can show
\begin{equation}
	\begin{array}{c c c c}
		\shp \shp \cong \shp &
		\shp \flat \cong \flat &&\\
		\flat \shp = \shp &
		\flat \flat = \flat &
		\flat \sharp = \flat &\\&
		\sharp \flat = \sharp &
		\sharp \sharp = \sharp,
	\end{array}
\end{equation}
Moreover, $\shp \dashv \flat$ will have (essentially) the same co-unit $\kappa : \flat \to \Id$ as $\cohdisc \dashv \cohfget$ and the same unit $\varsigma : \Id \to \shp$ as $\cohpi \dashv \cohdisc$. The adjunction $\flat \dashv \sharp$ will have the same unit $\kappa : \flat \to \Id$ as $\cohdisc \dashv \cohfget$ and the same co-unit $\iota : \Id \to \sharp$ as $\cohfget \dashv \cohcodisc$. For more information, see e.g. \cite{adjoint-logic}.

\begin{example}
	Let $\cat C = \catTop$ (the category of topological spaces) and $\cat S = \Set$. Then $\cohfget : \catTop \to \Set$ maps a topological space $(X, \tau)$ to the underlying set $X$. The discrete functor $\cohdisc$ equips a set $X$ with its discrete topology $2^X$ and $\cohcodisc$ equips it with the codiscrete topology $(\eset, X)$. Finally, $\cohpi$ maps a topological space to its set of connected components.
\end{example}
\begin{example}
	Let $\cat C = \Cat$ and $\cat S = \Set$. Then we can take $\cohfget \cat A = \Obj~\cat A$, make $\cohdisc X$ the discrete category on $X$ with only identity morphisms, $\cohcodisc X$ the codiscrete category on $X$ where every Hom-set is a singleton, and $\cohpi \cat A$ the set of zigzag-connected components of $\cat A$, i.e. $\cohpi \cat A = \Obj(\cat A)/\Hom$.
\end{example}
\begin{example}
	Let $\cat C = \widehat \RGcat$, the category of reflexive graphs, and $\cat S = \Set$. Then we can let $\cohfget$ map a reflexive graph $\Gamma$ to its set of nodes $\cohfget \Gamma = (\DSub{()} \Gamma)$; $\cohdisc$ will map a set $X$ to the discrete reflexive graph with only constant edges (i.e. $(\DSub{()}{\cohdisc X}) = (\DSub{(\var i : \IE)}{\cohdisc X}) = X$), and $\cohcodisc$ maps a set $X$ to the codiscrete reflexive graph with a unique edge between any two points (i.e. $(\DSub{()}{\cohcodisc X}) = X$ and $(\DSub{(\var i : \IE)}{\cohcodisc X}) = X \times X$). Finally, $\cohpi$ maps a graph $\Gamma$ to its set $\cohpi \Gamma$ of edge-connected components.
\end{example}
This last example is interesting, because we know that $\cohfget : \widehat{\RGcat} \to \widehat{\pointcat} \cong \Set$ is a morphism of CwFs, arising as $\cohfget = \fpsh F$ with $F : \pointcat \to \RGcat$ the unique functor mapping $()$ to $()$ (see \cref{ex:presheaf-sets}). The functor $\cohdisc$ is also a morphism of CwFs, arising as $\cohdisc = \fpsh G$, with $G$ the unique functor $\RGcat \to \pointcat$.

\subsection{The cohesive structure of $\widehat{\bpcubecat}$ over $\widehat{\cubecat}$, intuitively}
In the remainder of this section, we establish a chain of no less than five adjoint functors between $\widehat{\bpcubecat}$ and $\widehat{\cubecat}$. The reason we have more than in other situations, is that bridge/path cubical sets can be seen as cohesive cubical sets in two ways: we can either view cubical sets as bridge-only cubical sets, in which case the forgetful functor $\cohfget : \widehat{\bpcubecat} \to \widehat{\cubecat}$ forgets the cohesive structure given by the paths; or we can view cubical sets as path-only cubical sets, in which case the forgetful functor $\cohpaths : \widehat{\bpcubecat} \to \widehat{\cubecat}$ forgets the cohesion given by the bridges.

The chain of functors we obtain is the following:
\begin{equation}
	\cohpi \dashv \cohdisc \dashv \cohfget \dashv \cohcodisc \dashv \cohpaths, \qquad
	\cohpi, \cohfget, \cohpaths : \widehat{\bpcubecat} \to \widehat{\cubecat}, \qquad
	\cohdisc, \cohcodisc : \widehat{\cubecat} \to \widehat{\bpcubecat}.
\end{equation}
and it can likely be extended by a sixth functor on the right, that we currently have no use for. The (cohesion-as-paths) forgetful functor $\cohfget$ maps a bridge/path cubical set $\Gamma$ to the cubical set $\cohfget \Gamma$ made up of its bridges, forgetting which bridges are in fact paths. The (cohesion-as-paths) discrete functor $\cohdisc$ introduces a discrete path relation, the bridges of $\cohdisc \Gamma$ are the edges of $\Gamma$, whereas the paths of $\cohdisc \Gamma$ are all constant. The (cohesion-as-paths) codiscrete functor $\cohcodisc$ introduces a path relation which is codiscrete in the sense that there are as many paths as there can be: every bridge is a path. So a bridge in $\cohcodisc \Gamma$ is the same as an edge in $\Gamma$, and a path in $\cohcodisc \Gamma$ is also the same as an edge in $\Gamma$. Note that $\cohcodisc$ is also the cohesion-as-bridges discrete functor: viewing $\Gamma$ as a path-only cubical set, it equips $\cohcodisc \Gamma$ with the fewest bridges possible: only when there is a path, there will also be a bridge. The paths functor $\cohpaths$, which is the cohesion-as-bridges forgetful functor, maps a bridge/path cubical set $\Gamma$ to its cubical set of paths $\cohpaths \Gamma$. Finally, $\cohpi$ divides out a bridge/path cubical set by its path relation, obtaining a bridge-only cubical set.

\subsection{The cohesive structure of $\bpcubecat$ over $\cubecat$}
We saw in \cref{thm:lifting-adjunctions} that if we have adjoint functors $L \dashv R$ on the base categories, then we obtain functors $\fpsh L \dashv \fpsh R$ on the presheaf categories and moreover $\fpsh L$ extends $R$. So $\fpsh \loch$ takes the right adjoint of a functor and at the same time extends it from the category of primitive contexts, to the entire presheaf category. In this sense, it is a good idea to start by defining the functors
\begin{equation}
	\cohpi \dashv \cohdisc \dashv \cohfget \dashv \cohcodisc, \qquad
	\cohpi, \cohfget : \bpcubecat \to \cubecat, \qquad
	\cohdisc, \cohcodisc : \cubecat \to \bpcubecat,
\end{equation}
on the base categories. We define these functors as in \cref{fig:cohesion-base}.
This may not be entirely intuitive. The key here is that every path dimension can be weakened to a bridge dimension. Thus, two adjacent vertices of a bridge/path cube are always connected by a bridge, and only connected by a path if they are adjacent along a path dimension. The $\cohpi$ functor leaves bridges alone (converting them to edges), but contracts paths. The $\cohdisc$ functor turns edges into bridges, but does not produce paths. The $\cohfget$ functor keeps bridges (converting them to edges) and forgets paths, but remembers the bridges they weaken to. The $\cohcodisc$ functor turns edges into paths, which are also bridges.
\begin{figure}
	\begin{equation*}
		\begin{array}{c || c c c | c c c c}
			& () & (W, \ctxbrid{\var i}) & (W, \ctxpath{\var i})
			& () & (\vfi, \var j^\IB /\var i^\IB) & (\vfi, \var j^\IB /\var i^\IP) & (\vfi, \var j^\IP /\var i^\IP) \\
			\hline \hline
			\cohpi
			& () & (\cohpi W, \ctxedge{\var i}) & \cohpi W 
			& () & (\cohpi \vfi, \var j^\IE / \var i^\IE) & \cohpi \vfi & \cohpi \vfi \\
			\cohfget
			& () & (\cohfget W, \ctxedge{\var i}) & (\cohfget W, \ctxedge{\var i})
			& () & (\cohfget \vfi, \var j^\IE / \var i^\IE) & (\cohfget \vfi, \var j^\IE / \var i^\IE) & (\cohfget \vfi, \var j^\IE / \var i^\IE) \\
			\hline
			\shp
			& () & (\shp W, \ctxbrid{\var i}) & \shp W
			& () & (\shp \vfi, \var j^\IB / \var i^\IB) & \shp \vfi & \shp \vfi \\
			\flat
			& () & (\flat W, \ctxbrid{\var i}) & (\flat W, \ctxbrid{\var i})
			& () & (\flat \vfi, \var j^\IB/\var i^\IB) & (\flat \vfi, \var j^\IB/\var i^\IB) & (\flat \vfi, \var j^\IB/\var i^\IB) \\
			\sharp
			& () & (\sharp W, \ctxpath{\var i}) & (\sharp W, \ctxpath{\var i})
			& () & (\sharp \vfi, \var j^\IP / \var i^\IP) & (\sharp \vfi, \var j^\IP / \var i^\IP) & (\sharp \vfi, \var j^\IP / \var i^\IP) \\
			\hline
			\varsigma : \Id \to \shp
			& () & (\varsigma_W, \var i^\IB / \var i^\IB) & (\varsigma_W, \facewkn{\var i^\IP})\\
			\kappa : \flat \to \Id
			& () & (\kappa_W, \var i^\IB / \var i^\IB) & (\kappa_W, \var i^\IB / \var i^\IP) &&& \circlearrowright \\
			\iota : \Id \to \sharp
			& () & (\iota_W, \var i^\IB / \var i^\IP) & (\iota_W, \var i^\IP / \var i^\IP)
		\end{array}
	\end{equation*}
	\begin{equation*}
		\begin{array}{c || c c | c c}
			& () & (W, \ctxedge{\var i})
			& () & (\vfi, \var j^\IE / \var i^\IE) \\
			\hline \hline
			\cohdisc & () & (\cohdisc W, \ctxbrid{\var i}) 
			& () & (\cohdisc \vfi, \var j^\IB / \var i^\IB) \\
			\cohcodisc & () & (\cohcodisc W, \ctxpath{\var i}) 
			& () & (\cohcodisc \vfi, \var j^\IP / \var i^\IP)
		\end{array}
	\end{equation*}
	\caption{The cohesive structure of $\bpcubecat$ over $\cubecat$.}
	\label{fig:cohesion-base}
\end{figure}

Note also that $\cohpaths$ cannot be defined this way as it does not map all primitive contexts to primitive contexts. For example, $(\ctxbrid{\var i})$ consists of two points connected by a bridge. The only paths are constant. Hence, forgetting the bridge structure yields two loose points, which together do not form a cube.

\begin{lemma}\label{thm:uniqueness-of-nattrans-base}
	Let $\catV, \catW \in \accol{\cubecat, \bpcubecat}$ and let $F, G : \catV \to \catW$ be composites of the functors $\cohpi$, $\cohdisc$, $\cohfget$ and $\cohcodisc$. Then all natural transformations $F \to G$ are equal.
\end{lemma}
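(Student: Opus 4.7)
My plan is to prove uniqueness by pinning down each component $\nu_W : FW \to GW$ of any natural transformation $\nu : F \to G$, using only naturality along vertex face maps $v : () \to W$ in $\catV$. The base case is immediate: each of $\cohpi$, $\cohdisc$, $\cohfget$, $\cohcodisc$ preserves the terminal object $()$ on the nose, so $F() = () = G()$, and $\nu_{()}$ is forced to be $\id_{()}$, the unique endomorphism of $()$ in $\catW$.

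For an arbitrary object $W \in \catV$, naturality against a vertex $v : () \to W$ yields
\[
\nu_W \circ F v \;=\; G v \circ \nu_{()} \;=\; G v,
\]
so for any two natural transformations $\nu, \mu : F \to G$ we automatically have $\nu_W \circ Fv = \mu_W \circ Fv$ for every such $v$. To upgrade this to $\nu_W = \mu_W$ I combine two elementary facts about the base categories.

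The first is that each basic functor is surjective on vertex face maps, hence so is every composite: $\cohfget$, $\cohdisc$, $\cohcodisc$ act bijectively on vertices (a vertex is just an assignment of $0$'s and $1$'s to the variables of a cube, and these functors only relabel variables), while $\cohpi$ surjects the $2^{|W_\IB|+|W_\IP|}$ vertices of a bridge/path cube $W$ onto the $2^{|W_\IB|}$ vertices of $\cohpi W$ by restriction to the bridge coordinates. Consequently, every vertex $w : () \to FW$ is of the form $Fv$ for some vertex $v$ of $W$.

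The second is that, in either $\cubecat$ or $\bpcubecat$, a face map $\psi : U \to X$ is uniquely determined by the family of composites $\psi \circ v$ with the vertices $v : () \to U$. Indeed, for each target variable $\var k \in X$, the value $\psi(\var k)$ lies in $\{0,1\} \cup U$; the function $v \mapsto (\psi \circ v)(\var k)$ on the vertex set of $U$ is constantly $0$ (resp.\ $1$) precisely when $\psi(\var k)$ is $0$ (resp.\ $1$), and otherwise equals the projection $v \mapsto v(\var j)$ for the unique $\var j \in U$ with $\psi(\var k) = \var j$; these functions are pairwise distinct as soon as $U$ has at least one variable, and the case $U = ()$ is trivial. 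Combining both facts, $\nu_W$ is forced on every vertex of $FW$, hence itself uniquely determined, giving $\nu = \mu$. The one bookkeeping obstacle is verifying the second fact uniformly for $\bpcubecat$: the bridge-versus-path kind of a variable in $U$ is intrinsic to $U$ rather than a further choice made by $\psi$, so even though vertex compositions ignore kind, they still distinguish all candidate values of $\psi(\var k)$.
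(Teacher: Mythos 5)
Your proof is correct, and it takes a genuinely different route from the paper's. The paper first invokes the fact that $F$ and $G$ preserve finite products (since all four basic functors do) to reduce the question to single-variable cubes, and then checks commutativity against the endpoint faces $(0/\var i), (1/\var i)$ on a single-variable cube. You bypass the product-preservation step entirely and work at arbitrary cubes, replacing it with two elementary observations: that the basic functors --- hence all their composites --- are surjective on vertices $() \to W$, and that any face map in $\cubecat$ or $\bpcubecat$ is determined by its precomposites with vertices. The second observation is the concrete, element-level substitute for the paper's appeal to products; it is the special case, for cubes, of the statement that $()$ detects morphisms. The trade-off is a matter of taste: the paper's version is shorter and more conceptual given the product lemma as prior input, while yours is self-contained at the cost of the bookkeeping in your ``second fact.'' Both approaches extend verbatim to the depth-$n$ cube categories, where the paper's version reappears as \cref{thm:uniqueness-of-nattrans-base-reshuffle}. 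I see no gap; your final remark about $\bpcubecat$ correctly isolates the only point where one might worry, namely that the bridge-versus-path kind of $\psi(\var k)$ is intrinsic to $U$ and contributes no extra degree of freedom beyond the variable name.
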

\begin{proof}
	Let $\nu : F \to G$ be a natural transformation. We show that $\nu$ is completely determined. Since $F$ and $G$ preserve products, $\nu$ is determined by its action on single-variable contexts. Now if $\Gamma$ has a single variable $\var i$, then either $G\Gamma = ()$ in which case $\nu_\Gamma : F \Gamma \to ()$ is determined, or $G\Gamma$ also contains $\var i$ as its only variable. Now $\var i \psub{\nu_\Gamma} \neq 0$ because then the following diagram could not commute:
\begin{equation}
	\xymatrix{
		() \ar[r]^{\nu_{()}} \ar[d]_{F(1/\var i)}
		& () \ar[d]^{G(1/\var i) = (1 / \var i)}
		\\
		F\Gamma \ar[r]^{\nu_\Gamma}
		& G\Gamma
	}
\end{equation}
Similarly, $\var i \psub{\nu_\Gamma} \neq 1$. Then $F\Gamma$ must contain a variable, implying that it contains only the variable $\var i$ and $\var i \psub{\nu_\Gamma} = \var i$.
\end{proof}

\begin{proposition}[The cohesive structure of $\bpcubecat$ over $\cubecat$]\label{thm:cohesion-base}
	These four functors are adjoint: $\cohpi \dashv \cohdisc \dashv \cohfget \dashv \cohcodisc$. On the $\cubecat$-side, we have
	\begin{equation}
		\cohpi\cohdisc = \Id \quad \dashv \quad
		\cohfget\cohdisc = \Id \quad \dashv \quad
		\cohfget\cohcodisc = \Id \quad 
		: \quad \cubecat \to \cubecat.
	\end{equation}
	On the $\bpcubecat$-side, we write
	\begin{equation}
		\shp := \cohdisc\cohpi \quad \dashv \quad
		\flat := \cohdisc\cohfget \quad \dashv \quad
		\sharp := \cohcodisc\cohfget \quad
		: \quad \bpcubecat \to \bpcubecat.
	\end{equation}
	By consequence, we have
	\begin{equation}
		\begin{array}{c c c c}
			\shp \shp = \shp &
			\shp \flat = \flat &&\\
			\flat \shp = \shp &
			\flat \flat = \flat &
			\flat \sharp = \flat &\\&
			\sharp \flat = \sharp &
			\sharp \sharp = \sharp.
		\end{array}
	\end{equation}
	The following table lists the units and co-units of all adjunctions involved:
	\begin{equation}
		\begin{array}{c || c | c | c || c | c}
			& \cohpi \dashv \cohdisc & \cohdisc \dashv \cohfget & \cohfget \dashv \cohcodisc & \shp \dashv \flat & \flat \dashv \sharp \\
			\hline
			\text{unit} & \varsigma : \Id \to \shp & \id : \Id \to \Id & \iota : \Id \to \sharp & \varsigma : \Id \to \shp & \iota : \Id \to \sharp \\
			\hline
			\text{co-unit} & \id : \Id \to \Id & \kappa : \flat \to \Id & \id : \Id \to \Id & \kappa : \flat \to \Id & \kappa : \flat \to \Id
		\end{array}
	\end{equation}
	The functors $\shp$, $\flat$ and $\sharp$ and the natural transformations $\varsigma$, $\kappa$ and $\iota$ are given in \cref{fig:cohesion-base}. Finally, the following natural transformations are all the identity:
	\begin{equation}\label{eq:cohesion-base-identities}
		\begin{array}{c c c c | c c c}
			\cohpi \to \cohpi & \cohdisc \to \cohdisc & \cohfget \to \cohfget & \cohcodisc \to \cohcodisc & \shp \to \shp & \flat \to \flat & \sharp \to \sharp \\ \hline
			\cohpi \varsigma & \varsigma \cohdisc &&&
			\shp \varsigma = \varsigma \shp & \varsigma \flat \qquad{} \\
			& \kappa \cohdisc & \cohfget \kappa &&
			{} \qquad \kappa \shp & \kappa \flat = \flat \kappa & \sharp \kappa \qquad{} \\
			&& \cohfget \iota & \iota \cohcodisc &
			& {}\qquad \flat \iota & \sharp \iota = \iota \sharp
		\end{array}
	\end{equation}
\end{proposition}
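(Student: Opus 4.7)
The plan is to proceed in three stages, using \cref{thm:uniqueness-of-nattrans-base} to eliminate the bulk of the verification.

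First, I would check that the four base functors of \cref{fig:cohesion-base} are well-defined, by induction on objects and case analysis on generators of face maps, and verify the three strict equalities $\cohpi\cohdisc = \Id$, $\cohfget\cohdisc = \Id$, $\cohfget\cohcodisc = \Id$ on $\cubecat$ by direct inspection: in each case the discrete or codiscrete functor adds a bridge or path dimension that is immediately converted back to an edge by the forgetful functor, and action on variable substitutions is matched up straightforwardly.

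Second, to establish the adjoint quadruple $\cohpi \dashv \cohdisc \dashv \cohfget \dashv \cohcodisc$, I would exhibit a unit and co-unit for each adjunction. For $\cohpi \dashv \cohdisc$ and $\cohfget \dashv \cohcodisc$ the co-unit (respectively unit) can be taken to be the identity since $\cohpi\cohdisc$ (resp.\ $\cohfget\cohcodisc$) equals $\Id$ strictly; likewise the unit of $\cohdisc \dashv \cohfget$ is the identity. The remaining natural transformations $\varsigma, \kappa, \iota$ are specified in \cref{fig:cohesion-base}, and I would verify their naturality by case analysis on generators of face maps. The triangle identities then hold for free by \cref{thm:uniqueness-of-nattrans-base}, since both sides are natural transformations between composites of the four generating functors.

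Third, I would define $\shp := \cohdisc\cohpi$, $\flat := \cohdisc\cohfget$, $\sharp := \cohcodisc\cohfget$ and derive each composition identity $\shp\shp = \shp$, $\shp\flat = \flat$, $\flat\shp = \shp$, $\flat\flat = \flat$, $\flat\sharp = \flat$, $\sharp\flat = \sharp$, $\sharp\sharp = \sharp$ by telescoping the inner pair of functors to $\Id$ using Stage 1. The induced adjunctions $\shp \dashv \flat$ and $\flat \dashv \sharp$ then arise by composing the component adjunctions. That their units and co-units reduce to the $\varsigma, \kappa, \iota$ recorded in the table, and that every equation in the final list of identity natural transformations holds, is in each case another single appeal to \cref{thm:uniqueness-of-nattrans-base}: both sides are natural transformations between the same composites of the four generating functors. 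The main obstacle is concentrated in Stage 1 and in the naturality checks for $\varsigma, \kappa, \iota$ in Stage 2, which require a fair amount of combinatorial bookkeeping to track how path and bridge dimensions transform under each functor; after that, the uniqueness lemma reduces the remaining claims to one-line arguments.
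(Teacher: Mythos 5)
Your proposal is correct and follows the paper's own strategy: the paper likewise verifies the strict equalities and the well-definedness of $\varsigma$, $\kappa$, $\iota$ by inspecting \cref{fig:cohesion-base}, composes the elementary adjunctions, and then appeals to \cref{thm:uniqueness-of-nattrans-base} to dispense with the triangle identities (and, by the same token, the identities in \cref{eq:cohesion-base-identities}) — exactly the move you make. The only cosmetic difference is that the paper explicitly remarks that ``the mere existence of well-typed candidates for the unit and co-unit is sufficient to conclude adjointness,'' which is the phrasing of the same observation you state for your Stage 2.
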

\begin{proof}
	The equalities are immediate from \cref{thm:uniqueness-of-nattrans-base}.
\end{proof}
\begin{proof}
	Each of the transformations in \cref{eq:cohesion-base-identities} is easily seen to be the identity by inspecting the definitions in \cref{fig:cohesion-base}. In order to prove that $L \dashv R$ with unit $\eta$ and co-unit $\eps$, it suffices to check that $\eps L \circ L \eta = \id : L \to L$ and $R \eps \circ \eta R = \id : R \to R$, which also follows from \cref{thm:uniqueness-of-nattrans-base}. In other words, the mere existence of well-typed candidates for the unit and co-unit is sufficient to conclude adjointness.
\end{proof}

\subsection{The cohesive structure of $\widehat{\bpcubecat}$ over $\widehat{\cubecat}$, formally}
We now define functors and natural transformations of the same notation by
\begin{equation}
	\cohdisc := \fpsh \cohpi, \quad
	\cohcodisc := \fpsh \cohfget \quad
	: \widehat{\cubecat} \to \widehat{\bpcubecat},
\end{equation}
\begin{equation}
	\cohfget := \fpsh \cohdisc, \quad
	\cohpaths := \fpsh \cohcodisc \quad
	: \widehat{\bpcubecat} \to \widehat{\cubecat},
\end{equation}
\begin{equation}
	\flat := \fpsh \shp = \cohdisc \cohfget, \quad
	\sharp := \fpsh \flat = \cohcodisc \cohfget, \quad
	\coshp := \fpsh \sharp = \cohcodisc \cohpaths \quad : \widehat{\bpcubecat} \to \widehat{\bpcubecat}.
\end{equation}
\begin{equation}
	\kappa := \fpsh \varsigma : \flat \to \Id, \qquad
	\iota := \fpsh \kappa : \Id \to \sharp, \qquad
	\vartheta := \fpsh \iota : \coshp \to \Id.
\end{equation}
From \cref{sec:left-adjoint-to-lifted-functor}, we know that there is a further left adjoint $\cohpi \dashv \cohdisc$, which we should not expect to be a morphism of CwFs. Indeed, the proof of \cref{thm:left-adjoint-not-cwf} is easily adapted to show the contrary. We postpone its construction to \cref{sec:def-cohpi}; however, by \cref{thm:yoneda-and-left-adjoint} it will satisfy the property that $\cohpi \circ \yoneda \cong \yoneda \circ \cohpi$, which we can use to characterize its behaviour. We take a moment to see how each of these functors behaves:
\begin{description}
	\item[$\cohpi$] We know that $W$-cubes $\gamma : \DSub W \Gamma$ correspond to substitutions $\gamma : \yoneda W \to \Gamma$ and hence give rise to substitutions $\cohpi \gamma : \cohpi(\yoneda W) \to \cohpi \Gamma$, which in turn correspond to $\cohpi W$-cubes $\DSub{\cohpi W}{\cohpi \Gamma}$. So a bridge $\DSub{(\ctxbrid{\var i})}{\Gamma}$ is turned into an edge $\DSub{(\ctxedge{\var i})}{\cohpi \Gamma}$, whereas a path $\DSub{(\ctxpath{\var i})}{\Gamma}$ is contracted to a point $\DSub{()}{\cohpi \Gamma}$. Simply put, $\cohpi$ contracts paths to points.
	\item[$\cohdisc$] A $W$-cube $\fpshadj{\cohpi}(\gamma) : \DSub{W}{\cohdisc \Gamma}$ is a $\cohpi W$-cube $\gamma : \DSub{\cohpi W}{\Gamma}$. So a bridge $\DSub{(\ctxbrid{\var i})}{\cohdisc \Gamma}$ is the same as an edge $\DSub{(\ctxedge{\var i})}{\Gamma}$ and a path $\DSub{(\ctxpath{\var i})}{\cohdisc \Gamma}$ is the same as a point $\DSub{()}{\Gamma}$, which in turn is the same as a point $\DSub{()}{\cohdisc \Gamma}$, showing that there are only constant paths.
	
	Viewed differently, using that $\cohdisc \circ \yoneda = \yoneda \circ \cohdisc$ by \cref{thm:lifting-adjunctions}, we can say that an edge $\DSub{(\ctxedge{\var i})}{\Gamma}$ gives rise to a bridge $\DSub{(\ctxbrid{\var i})}{\cohdisc \Gamma}$, while there is nothing that gives rise to (non-trivial) paths.
	\item[$\cohfget$] A $W$-cube $\fpshadj \cohdisc(\gamma) : \DSub{W}{\cohfget \Gamma}$ is the same as a brdige $\gamma : \DSub{\cohdisc W}{\Gamma}$. So an edge in $\cohfget \Gamma$ is a bridge in $\Gamma$. Alternatively, we can say that any bridge and any path in $\Gamma$ gives rise to an edge in $\cohfget \Gamma$.
	\item[$\cohcodisc$] A bridge in $\cohcodisc \Gamma$ is the same as an edge in $\Gamma$. A path in $\cohcodisc \Gamma$ is also the same as an edge in $\Gamma$. Alternatively, we can say that an edge in $\Gamma$ gives rise to a path in $\cohcodisc \Gamma$, which can then also be weakened to a bridge.
	\item[$\cohpaths$] An edge in $\cohpaths \Gamma$ is the same as a path in $\Gamma$. The alternative formulation --- a path in $\Gamma$ gives rise to an edge in $\cohpaths \Gamma$; a bridge in $\Gamma$ is forgotten --- cannot be formalized as in the previous cases, because $\cohpaths$ was not defined for primitive contexts and hence the property $\cohpaths \circ \yoneda = \yoneda \circ (\ldots)$ cannot be formulated
\end{description}
\begin{proposition}[The cohesive structure of $\widehat{\bpcubecat}$ over $\widehat{\cubecat}$]\label{thm:cohesion-psh}
	These five functors are adjoint: $\cohpi \dashv \cohdisc \dashv \cohfget \dashv \cohcodisc \dashv \cohpaths$. On the $\widehat{\cubecat}$-side, we have
	\begin{equation}
		\bar \shp := \cohpi\cohdisc \cong \Id \quad \dashv \quad
		\cohfget\cohdisc = \Id \quad \dashv \quad
		\cohfget\cohcodisc = \Id \quad \dashv \quad 
		\cohpaths\cohcodisc = \Id \quad
		: \quad \widehat{\cubecat} \to \widehat{\cubecat}.
	\end{equation}
	On the $\widehat{\bpcubecat}$-side, we write
	\begin{equation}
		\shp := \cohdisc\cohpi \quad \dashv \quad
		\flat := \cohdisc\cohfget \quad \dashv \quad
		\sharp := \cohcodisc\cohfget \quad \dashv \quad
		\coshp := \cohcodisc\cohpaths \quad
		: \quad \widehat{\bpcubecat} \to \widehat{\bpcubecat}.
	\end{equation}
	By consequence, we have
	\begin{equation}
		\begin{array}{c c c c c}
			\shp \shp \cong \shp &
			\shp \flat \cong \flat &&\\
			\flat \shp = \shp &
			\flat \flat = \flat &
			\flat \sharp = \flat &\\&
			\sharp \flat = \sharp &
			\sharp \sharp = \sharp &
			\sharp \coshp = \coshp \\&&
			\coshp \sharp = \sharp &
			\coshp \coshp = \coshp,
		\end{array}
	\end{equation}
	
	The following tables lists the units and co-units of all adjunctions involved:
	\begin{equation}
		\begin{array}{c || c | c | c | c}
			& \cohpi \dashv \cohdisc & \cohdisc \dashv \cohfget & \cohfget \dashv \cohcodisc & \cohcodisc \dashv \cohpaths \\
			\hline
			\text{unit} & \varsigma : \Id \to \shp & \id : \Id \to \Id & \iota : \Id \to \sharp & \id : \Id \to \Id \\
			\hline
			\text{co-unit} & \bar \varsigma\inv : \bar \shp \cong \Id & \kappa : \flat \to \Id & \id : \Id \to \Id & \vartheta : \coshp \to \Id
		\end{array}
	\end{equation}
	\begin{equation}
		\begin{array}{c || c || c | c | c}
			& \bar \shp \dashv \Id & \shp \dashv \flat & \flat \dashv \sharp & \sharp \dashv \coshp \\
			\hline
			\text{unit} & \bar \varsigma : \Id \cong \bar \shp & \varsigma : \Id \to \shp & \iota : \Id \to \sharp & \iota : \Id \to \sharp \\
			\hline
			\text{co-unit} & \bar \varsigma \inv : \bar \shp \cong \Id & \kappa \circ (\varsigma \flat)\inv : \shp \flat \to \Id & \kappa : \flat \to \Id & \vartheta : \coshp \to \Id
		\end{array}
	\end{equation}
	Finally, the following natural transformations are all (compatible with) the identity:
	\begin{equation}\label{eq:cohesion-psh-identities}
		\begin{array}{c c c c c | c c c c}
			\cohpi \to \cohpi & \cohdisc \to \cohdisc & \cohfget \to \cohfget & \cohcodisc \to \cohcodisc & \cohpaths \to \cohpaths & \shp \to \shp & \flat \to \flat & \sharp \to \sharp & \coshp \to \coshp \\ \hline
			(\cohpi \varsigma) & (\varsigma \cohdisc) &&&&
			(\shp \varsigma = \varsigma \shp) & (\varsigma \flat) \qquad{} \\
			& \kappa \cohdisc & \cohfget \kappa &&&
			{} \qquad \kappa \shp & \kappa \flat = \flat \kappa & \sharp \kappa \qquad{} \\
			&& \cohfget \iota & \iota \cohcodisc &&
			& {}\qquad \flat \iota & \sharp \iota = \iota \sharp & \iota \coshp \qquad{} \\
			&&& \vartheta \cohcodisc & \cohpaths \vartheta &
			&& {} \qquad \vartheta \sharp & \vartheta \coshp = \coshp \vartheta
		\end{array}
	\end{equation}
	The ones involving $\kappa$, $\iota$ and $\vartheta$ are actually equal, while for $\varsigma$ we have
	\begin{equation*}
		\cohpi \varsigma = \bar \varsigma \cohpi : \cohpi \cong \cohpi \shp, \qquad
		\varsigma \cohdisc = \cohdisc \bar \varsigma : \cohdisc \cong \shp \cohdisc, \qquad
		\shp \varsigma = \varsigma \shp = \cohdisc \bar \varsigma \cohpi : \shp \shp \cong \shp, \qquad
		\varsigma \flat = \cohdisc \bar \varsigma \cohfget : \shp \flat \cong \flat.
	\end{equation*}
\end{proposition}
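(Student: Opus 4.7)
The plan is to reduce everything to \cref{thm:cohesion-base} by invoking \cref{thm:lifting-adjunctions} three times and the construction of a left adjoint from \cref{sec:left-adjoint-to-lifted-functor} once. Starting from the base-level chain $\cohpi \dashv \cohdisc \dashv \cohfget \dashv \cohcodisc$ established in \cref{thm:cohesion-base}, each of the three constituent adjunctions lifts by \cref{thm:lifting-adjunctions} to a presheaf-level adjunction, giving $\cohdisc \dashv \cohfget$, $\cohfget \dashv \cohcodisc$, and $\cohcodisc \dashv \cohpaths$, with units and co-units the lifts $\fpsh\varsigma$, $\fpsh\kappa$, $\fpsh\iota$ of those from \cref{thm:cohesion-base}, matching the $\kappa$, $\iota$, $\vartheta$ in the tables. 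The fifth, outermost adjunction $\cohpi \dashv \cohdisc$ is the one that is \emph{not} obtained by lifting: its existence is supplied by \cref{sec:left-adjoint-to-lifted-functor}, and by \cref{thm:yoneda-and-left-adjoint} the functor $\cohpi$ satisfies $\cohpi \circ \yoneda \cong \yoneda \circ \cohpi$ (the right-hand $\cohpi$ being the base-level one), which is all I will need.

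For the composite identities I would use functoriality of $\fpsh{(\cdot)}$ together with the base-level relations from \cref{thm:cohesion-base}. Writing base-level functors momentarily with a subscript $0$, one has $\cohfget\cohdisc = \fpsh{\cohdisc_0}\fpsh{\cohpi_0} = \fpsh{\cohpi_0\cohdisc_0} = \fpsh{\Id} = \Id$, and similarly $\cohfget\cohcodisc = \Id$ and $\cohpaths\cohcodisc = \Id$. The isomorphism $\bar\shp = \cohpi\cohdisc \cong \Id$ then follows because $\bar\shp$ and $\Id$ are both left adjoint to $\cohfget\cohdisc = \Id$, and left adjoints are unique up to unique iso; call this iso $\bar\varsigma : \Id \cong \bar\shp$. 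On the $\widehat\bpcubecat$-side, the relations not mentioning $\shp$ --- $\flat\flat = \flat$, $\sharp\sharp = \sharp$, $\coshp\coshp = \coshp$, $\flat\sharp = \flat$, $\sharp\flat = \sharp$, $\sharp\coshp = \coshp$, $\coshp\sharp = \sharp$ --- reduce to their base-level counterparts by the same functoriality argument, while those involving $\shp$ are settled by collapsing a middle factor $\cohpi\cohdisc \cong \Id$ or $\cohfget\cohdisc = \Id$, e.g.\ $\shp\shp = \cohdisc\,\cohpi\cohdisc\,\cohpi \cong \cohdisc\cohpi = \shp$ and $\flat\shp = \cohdisc\,\cohfget\cohdisc\,\cohpi = \cohdisc\cohpi = \shp$.

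With the adjunctions in hand, the unit/co-unit tables and the identities in \eqref{eq:cohesion-psh-identities} follow almost mechanically. Three rows of the first table come straight from the lifts; the fourth row, for $\cohpi \dashv \cohdisc$, uses $\bar\varsigma^{-1}$ as co-unit and a $\varsigma : \Id \to \shp$ obtained by transposing across the adjunction. The second table, for $\shp \dashv \flat \dashv \sharp \dashv \coshp$, is then standard composition of adjunctions. The rows of \eqref{eq:cohesion-psh-identities} involving only $\kappa$, $\iota$, $\vartheta$ are lifts of identities from \cref{thm:cohesion-base} (themselves identities by \cref{thm:uniqueness-of-nattrans-base}) and $\fpsh{(\cdot)}$ preserves identity natural transformations; the rows involving $\varsigma$ use the explicit comparisons $\cohpi\varsigma = \bar\varsigma\cohpi$ and $\varsigma\cohdisc = \cohdisc\bar\varsigma$ quoted in the proposition, which are formal consequences of the triangle identities for $\bar\shp \dashv \Id$ and $\shp \dashv \flat$.

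The main obstacle is neither any single calculation nor a conceptual gap, but the \emph{location} of $\cohpi$: it is the only adjoint in the chain not delivered by \cref{thm:lifting-adjunctions}, so everything said about $\shp$, $\bar\shp$, and about identities involving $\varsigma$ is downstream of the promised construction in \cref{sec:def-cohpi} and of the Yoneda property $\cohpi\yoneda \cong \yoneda\cohpi$. The residual risk is keeping the variance conventions straight: tracking that $\fpsh{(\cdot)}$ is contravariant in composition and reverses the direction of the underlying functor, and that each of $\varsigma, \kappa, \iota, \vartheta$ points the way it does, so that the $\cohpi \dashv \cohdisc$ column dovetails consistently with the three lifted columns.
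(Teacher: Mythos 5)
Your proposal is correct and takes essentially the same route as the paper: lift the three base-level adjunctions via \cref{thm:lifting-adjunctions}, supply $\cohpi$ from the general left-adjoint construction, use uniqueness of left adjoints to $\Id$ to get $\bar\varsigma : \Id \cong \bar\shp$, and then observe that the remaining identities are automatic. The only small stylistic difference is the final step: the paper discharges all the unit/co-unit computations and the identities in \eqref{eq:cohesion-psh-identities} in one stroke by citing \cref{thm:uniqueness-of-nattrans-psh}, whereas you argue more explicitly by lifting base-level identities and invoking triangle identities; both are valid, but note that $\varsigma$ is not ``obtained by transposing'' from other data — it is simply defined as the unit of $\cohpi \dashv \cohdisc$, and everything about it then follows from the uniqueness result.
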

\begin{lemma}\label{thm:unlift-nattrans}
	A natural transformation $\nu : \fpsh F \to H : \widehat \catV \to \widehat \catW$ whose domain is a lifted functor, is fully determined by $\nu \yoneda F : \fpsh F \yoneda F \to H \yoneda F : \catW \to \widehat \catW$. If $H = \fpsh G$, then $\nu = \fpsh{\tilde \nu}$ for some $\tilde \nu : G \to F : \catW \to \catV$. If $H = \lpsh K \dashv \fpsh K$ for some $K : \catV \to \catW$, then $\nu$ corresponds to a natural transformation $\mu : \Id \to KF : \catW \to \catW$.
\end{lemma}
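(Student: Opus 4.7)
The plan is to exploit the fact that every defining $W$-cube of $\fpsh F \Gamma$ arises as the image of a canonical ``generic'' element under a substitution of the form $\fpsh F \gamma$, so that naturality of $\nu$ at that single $\gamma : \yoneda FW \to \Gamma$ pins down all of $\nu_\Gamma$ from the restriction $\nu \yoneda F$.

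For the first claim I would take $\fpshadj F(\id_{FW}) \in \DSub W {\fpsh F \yoneda FW}$ as the generic element and observe that, for any $\Gamma \in \widehat \catV$ and any $\gamma' = \fpshadj F(\gamma) \in \DSub W {\fpsh F \Gamma}$ with $\gamma : \yoneda FW \to \Gamma$, the substitution $\fpsh F \gamma$ sends $\fpshadj F(\id_{FW})$ to $\fpshadj F(\gamma \circ \id) = \gamma'$. Naturality of $\nu$ at $\gamma$ then gives
\begin{equation*}
(\nu_\Gamma)_W(\gamma') = (H\gamma)_W\paren{(\nu_{\yoneda FW})_W(\fpshadj F(\id_{FW}))},
\end{equation*}
so $\nu_\Gamma$ is entirely determined by the family $(\nu_{\yoneda FW})_{W \in \catW}$, i.e.\ by $\nu \yoneda F$.

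For $H = \fpsh G$ with $G : \catW \to \catV$, I would read off $\tilde \nu$ by evaluating $\nu$ on the generic element: set $\tilde \nu_W := (\nu_{\yoneda FW})_W(\fpshadj F(\id_{FW}))$, which lies in $\DSub W {\fpsh G \yoneda FW} = \PSub{GW}{FW}$ and hence is a morphism $GW \to FW$ in $\catV$. Naturality of $\tilde \nu$ in $W$ follows from naturality of $\nu$ at substitutions $\yoneda F \vfi$ for $\vfi \in \PSub{W'}{W}$, and the identity $\nu = \fpsh{\tilde \nu}$ holds because both sides agree on each generic element $\fpshadj F(\id_{FW})$ by construction, so they agree on every $\gamma'$ by the formula of the first part.

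For $H = \lpsh K$ with $\lpsh K \dashv \fpsh K$ and $K : \catV \to \catW$, I would use the natural iso $\lpsh K \yoneda \cong \yoneda K$ from \cref{thm:yoneda-and-left-adjoint} to identify $\lpsh K \yoneda FW \cong \yoneda KFW$, and set $\mu_W := (\nu_{\yoneda FW})_W(\fpshadj F(\id_{FW})) \in \DSub W {\yoneda KFW} = \PSub W {KFW}$, yielding $\mu_W : W \to KFW$. Conversely, from $\mu$ one recovers $\nu$ by $(\nu_{\yoneda FW})_{W'}(\fpshadj F(\psi)) := K\psi \circ \mu_{W'}$ for $\psi \in \PSub{FW'}{FW}$, extending to arbitrary $\Gamma$ by the first part; naturality of $\mu$ in $W$ and naturality of $\nu$ at $\yoneda F\vfi$ translate into one another via the defining equation $K(\psi \circ F\vfi) \circ \mu_{W''} = K\psi \circ \mu_{W'} \circ \vfi$. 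The hard (but unavoidable) part throughout is the bookkeeping of $\fpshadj F$ labels and of the iso $\lpsh K \yoneda \cong \yoneda K$ when translating naturality squares; no genuinely new ideas are required beyond the generic-element trick.
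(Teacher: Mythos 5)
Your proposal is correct and follows essentially the same route as the paper's proof: the generic-element trick at $\fpshadj F(\id_{FW})$ to establish that $\nu$ is determined by $\nu\yoneda F$, reading off $\tilde\nu_W$ (resp. $\mu_W$) by evaluating on that generic element, and recovering $\nu$ from $\mu$ via $\lpsh K\gamma$ and the isomorphism $\lpsh K\yoneda\cong\yoneda K$. The only differences are cosmetic (the paper keeps the $\fpshadj G$ and $\zeta$ labels explicit, and spells out the naturality chain for $\tilde\nu$ that you correctly summarize).
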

\begin{proof}
	Pick a presheaf $\Gamma \in \widehat \catV$ and a defining substitution $\fpshadj F (\gamma) : \DSub{W}{\fpsh F \Gamma}$. Then the following diagram commutes:
	\begin{equation}
		\xymatrix{
			& \fpsh F \yoneda F W \ar[r]^{\nu \yoneda FW} \ar[d]^{\fpsh F \gamma}
			& H \yoneda F W \ar[d]^{H \gamma} 
			\\
			W \ar@{=>}[ru]^{\fpshadj F (\id)} \ar@{=>}[r]_{\fpshadj F (\gamma)}
			& \fpsh F \Gamma \ar[r]_{\nu \Gamma}
			& H\Gamma,
		}
	\end{equation}
	showing that $\nu_\Gamma \circ \fpshadj F(\gamma)$ is determined by $\nu \yoneda F W$.
	\begin{itemize}
		\item If $H = \fpsh G$, then we define $\tilde \nu W = \fpshadj G\inv(\nu \yoneda F W \circ \fpshadj F(\id)) : \PSub{GW}{FW}$. Note that if $\nu = \fpsh \mu$, then we would find
		\begin{equation}
			\tilde \nu W = \fpshadj G\inv(\fpsh \mu \yoneda F W \circ \fpshadj F(\id))
			= \fpshadj G\inv(\fpshadj G(\id \circ \mu W)) = \mu W : \PSub{GW}{FW}.
		\end{equation}
		In general, this is a natural transformation because if we have $\vfi : \PSub V W$, then\footnote{Remember that we write $\gamma : \yoneda W \to \Gamma$ when $\gamma : \DSub W \Gamma$, implying that we do not write $\yoneda$ when applied to a morphism.}
		\begin{align*}
			F \vfi \circ \tilde \nu V
			&= \fpshadj G \inv (\fpsh G (F \vfi) \circ \nu \yoneda F V \circ \fpshadj F (\id)) \\
			&= \fpshadj G \inv (\nu \yoneda F W \circ \fpsh F (F \vfi) \circ \fpshadj F (\id)) \\
			&= \fpshadj G \inv (\nu \yoneda F W \circ \fpshadj F (F \vfi)) \\
			&= \fpshadj G \inv (\nu \yoneda F W \circ \fpshadj F (\id) \circ \vfi) \\
			&= \fpshadj G \inv (\nu \yoneda F W \circ \fpshadj F (\id)) \circ G \vfi
			= \tilde \nu W \circ G \vfi.
		\end{align*}
		Moreover, the above diagram shows that $\nu = \fpsh{\tilde \nu}$ because
		\begin{equation}
			\nu \circ \fpshadj F(\gamma) = \fpsh G \gamma \circ \nu \yoneda F W \circ \fpshadj F(\id) = \fpsh G \gamma \circ \fpshadj G (\tilde \nu) = \fpshadj G (\gamma \circ \tilde \nu) = \fpsh{\tilde \nu} \circ \fpshadj G(\gamma).
		\end{equation}
		
		\item If $H = \lpsh K \dashv \fpsh K$, then we show that natural transformations $\fpsh F \to \lpsh K : \widehat \catV \to \widehat \catW$ correspond to natural transformations $\Id \to KF : \catW \to \catW$. We already know that $\nu : \fpsh F \to \lpsh K$ is determined by $\nu \yoneda F : \fpsh F \yoneda F \to \lpsh K \yoneda F$, and \cref{thm:yoneda-and-left-adjoint} tells us that $\zeta : \lpsh K \yoneda F \cong \yoneda K F$.
		
		Given $\nu$, we now define $\mu : \Id \to KF$ by $\mu W = \zeta W \circ \nu \yoneda F W \circ \fpshadj F(\id_W) \in (\DSub{W}{\yoneda KF W}) = (\PSub{W}{KFW})$. Conversely, given $\mu : \Id \to KF$, we define $\nu : \fpsh F \to \lpsh K$ by setting for every $\fpshadj F(\gamma) : \DSub{W}{\fpsh F \Gamma}$ (i.e. $\gamma : \DSub{FW}{\Gamma}$), the composition $\nu \Gamma \circ \fpshadj F(\gamma)$ equal to $\lpsh K \gamma \circ (\zeta W)\inv \circ \mu W : \DSub W {\lpsh K \Gamma}$. These operations are inverse:
		\begin{align*}
			\lpsh K \gamma \circ (\zeta W)\inv \circ \mu W
			&= \lpsh K \gamma \circ (\zeta W)\inv \circ \zeta W \circ \nu \yoneda F W \circ \fpshadj F(\id_W) \\
			&= \lpsh K \gamma \circ \nu \yoneda F W \circ \fpshadj F(\id_W)
			= \nu \Gamma \circ \fpsh F \gamma \circ \fpshadj F(\id_W)
			= \nu \Gamma \circ \fpshadj F(\gamma). \\
			\zeta W \circ \nu \yoneda F W \circ \fpshadj F(\id_W)
			&= \zeta W \circ \lpsh K\,\id_W \circ (\zeta W)\inv \circ \mu W = \mu W. \qedhere
		\end{align*}
	\end{itemize}
\end{proof}
\begin{corollary}\label{thm:uniqueness-of-nattrans-psh}
	Let $\catV, \catW \in \accol{\cubecat, \bpcubecat}$ and $F, G : \widehat \catV \to \widehat \catW$. Then all natural transformations from $F$ to $G$ are equal in each of the following cases:
	\begin{enumerate}
		\item If $F$ and $G$ are composites of $\cohdisc$, $\cohfget$, $\cohcodisc$ and $\cohpaths$;
		\item If $F$ is a composite of $\cohdisc$, $\cohfget$, $\cohcodisc$ and $\cohpaths$, and $G$ is a composite of $\cohpi$, $\cohdisc$, $\cohfget$ and $\cohcodisc$;
		\item If $F$ factors as $LP$, where $L \dashv R$ and one of the previous cases apply to $P$ and $RG$.
	\end{enumerate}
\end{corollary}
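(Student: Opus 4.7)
The plan is to reduce each of the three cases to base-level uniqueness supplied by \cref{thm:uniqueness-of-nattrans-base}, using the reduction in \cref{thm:unlift-nattrans}. Throughout I will rely on two structural observations: first, on presheaves, the functors $\cohdisc, \cohfget, \cohcodisc, \cohpaths$ are respectively $\fpsh{\cohpi}, \fpsh{\cohdisc}, \fpsh{\cohfget}, \fpsh{\cohcodisc}$, and $\fpsh{(\_)}$ is contravariant on composition, so any composite of these four is itself lifted; second, $\cohpi$ on presheaves is a left adjoint of the lifted functor $\cohdisc = \fpsh{\cohpi}$.

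For Case 1, both $F$ and $G$ are composites of lifted functors, so $F = \fpsh{F_0}$ and $G = \fpsh{G_0}$ for some base-level composites $F_0, G_0$ of $\cohpi, \cohdisc, \cohfget, \cohcodisc$. The first clause of \cref{thm:unlift-nattrans} (the case $H = \fpsh G$) puts natural transformations $F \to G$ into bijection with natural transformations $G_0 \to F_0$ on the base, and \cref{thm:uniqueness-of-nattrans-base} concludes.

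For Case 2, $F = \fpsh{F_0}$ is lifted as above, but $G$ may involve the non-lifted presheaf-level $\cohpi$. My trick is to realise $G$ itself as a left adjoint of a lifted functor. Each factor of $G$ sits inside the chain $\cohpi \dashv \cohdisc \dashv \cohfget \dashv \cohcodisc \dashv \cohpaths$, so $G$ admits a right adjoint $G^R$ that is, in reverse order, a composite of $\cohdisc, \cohfget, \cohcodisc, \cohpaths$; by the observation in Case 1, $G^R = \fpsh{G^R_0}$ is lifted. By uniqueness of adjoints up to natural isomorphism, $G \cong \lpsh{G^R_0}$, where $\lpsh{G^R_0}$ exists as in \cref{sec:left-adjoint-to-lifted-functor}. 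Post-composing with this natural isomorphism reduces the question to natural transformations $\fpsh{F_0} \to \lpsh{G^R_0}$, and the second clause of \cref{thm:unlift-nattrans} (the case $H = \lpsh K$) then gives a bijection with base-level natural transformations $\Id \to G^R_0 F_0$; these are unique by \cref{thm:uniqueness-of-nattrans-base}.

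Case 3 is then immediate: the adjunction $L \dashv R$ applied pointwise to functor categories yields a bijection between natural transformations $LP \to G$ and $P \to RG$, so uniqueness transfers from the case (1 or 2) assumed to apply to $(P, RG)$. The main subtlety of the argument lies in Case 2, where I need to identify $G$ as, up to isomorphism, a left adjoint of a lifted functor; this identification proceeds by reading off the right adjoint factor by factor along the chain and invoking uniqueness of adjoints.
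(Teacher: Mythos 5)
Your proposal is correct and follows essentially the same route as the paper's own proof: Case 1 identifies both $F$ and $G$ as lifted functors and applies the $H = \fpsh G$ clause of \cref{thm:unlift-nattrans}; Case 2 observes that $G$ is a left adjoint of a lifted functor and applies the $H = \lpsh K$ clause; Case 3 transposes across $L \dashv R$. The only difference is a matter of explicitness: where the paper writes tersely that ``$G$ is left adjoint to a lifted functor,'' you spell out the intermediate step of producing the right adjoint $G^R = \fpsh{G^R_0}$ factor by factor, invoking uniqueness of left adjoints to obtain $G \cong \lpsh{G^R_0}$, and post-composing with that isomorphism before applying the lemma — which is precisely the reasoning the paper leaves implicit.
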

\begin{proof}
	Pick $\nu : F \to G$.
	\begin{enumerate}
		\item Then both $F$ and $G$ are lifted so that $\nu = \fpsh{\tilde \nu}$, and $\tilde \nu$ is completely determined by \cref{thm:uniqueness-of-nattrans-base}.
		\item Then $F$ is lifted and $G$ is left adjoint to a lifted functor, so that $\nu$ corresponds to a natural transformation of primitive contexts, which is uniquely determined because of \cref{thm:uniqueness-of-nattrans-base}.
		\item Natural transformations $LP \to G$ are in bijection with natural transformations $P \to RG$ because $L \dashv R$. \qedhere
	\end{enumerate}
\end{proof}
\begin{lemma}\label{thm:unicity-of-left-adjoint}
	Assume $L_1, L_2 : \catV \to \catW$ and $R : \catW \to \catV$ such that $\alpha_i : L_i \dashv R$ with unit $\eta_i : \Id \to R L_i$ and co-unit $\eps_i : L_i R \to \Id$. Then there is a natural isomorphism $\zeta : L_1 \cong L_2$ such that $R \zeta \circ \eta_1 = \eta_2$ and $\eps_1 = \eps_2 \circ \zeta R$.
\end{lemma}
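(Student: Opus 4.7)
\medskip

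\noindent\textbf{Proof plan.} The plan is to exploit the standard category-theoretic fact that left adjoints are unique up to unique natural isomorphism. The natural isomorphism $\zeta : L_1 \cong L_2$ will be defined using the bijection $\alpha_1$. Specifically, I would set
\[
  \zeta := \alpha_1\inv(\eta_2) : L_1 \to L_2,
\]
so that $\zeta_V : L_1 V \to L_2 V$ is the morphism corresponding under $\alpha_1$ to $\eta_{2,V} : V \to R L_2 V$. Naturality of $\zeta$ is inherited immediately from the naturality of $\alpha_1\inv$ and of $\eta_2$, so nothing needs to be checked by hand.

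Next, I would verify the compatibility with the units. Recall from the excerpt that $\alpha_i(\sigma) = R\sigma \circ \eta_i$. Applying $\alpha_1$ to the definition of $\zeta$ gives
\[
  R\zeta \circ \eta_1 = \alpha_1(\zeta) = \alpha_1(\alpha_1\inv(\eta_2)) = \eta_2,
\]
which is the first required equation. For the counit equation $\eps_1 = \eps_2 \circ \zeta R$, I would apply $\alpha_1$ to both sides and use the triangle identities $R\eps_i \circ \eta_i R = \id_R$. The left side yields $\alpha_1(\eps_1) = R\eps_1 \circ \eta_1 R = \id_R$, while the right side gives $\alpha_1(\eps_2 \circ \zeta R) = R\eps_2 \circ R\zeta R \circ \eta_1 R = R\eps_2 \circ (R\zeta \circ \eta_1) R = R\eps_2 \circ \eta_2 R = \id_R$. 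Since $\alpha_1$ is a bijection, the two sides are equal.

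It remains to show that $\zeta$ is invertible. The natural symmetric choice is to set $\zeta' := \alpha_2\inv(\eta_1) : L_2 \to L_1$; by the same argument as above it satisfies $R\zeta' \circ \eta_2 = \eta_1$. To see that $\zeta' \circ \zeta = \id_{L_1}$, I would apply $\alpha_1$ to both sides: on one hand $\alpha_1(\id_{L_1}) = R\id_{L_1} \circ \eta_1 = \eta_1$, and on the other
\[
  \alpha_1(\zeta' \circ \zeta) = R\zeta' \circ R\zeta \circ \eta_1 = R\zeta' \circ \eta_2 = \eta_1,
\]
so by injectivity of $\alpha_1$ we conclude $\zeta' \circ \zeta = \id_{L_1}$. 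The identity $\zeta \circ \zeta' = \id_{L_2}$ is symmetric.

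None of the steps is really an obstacle; the only mildly delicate point is resisting the temptation to pick a formula such as $\zeta = \eps_2 L_1 \circ L_2 \eta_1$ whose naturality and inversion would then have to be checked by unpacking, whereas defining $\zeta$ via $\alpha_1\inv(\eta_2)$ makes both naturality and the unit-compatibility equation essentially immediate, and reduces everything else to the triangle identities for $(\eta_i, \eps_i)$.
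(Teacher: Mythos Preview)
Your proof is correct and, once unfolded, defines exactly the same isomorphism as the paper: since $\alpha_1\inv(\tau) = \eps_1 \circ L_1\tau$, your $\zeta = \alpha_1\inv(\eta_2)$ is precisely $\eps_1 L_2 \circ L_1\eta_2$, which is the paper's choice (your closing remark about ``resisting the temptation'' to pick such a formula is therefore a bit ironic --- you did pick it, just in adjunction-bijection notation). The only genuine difference is in the verification of invertibility: the paper shows $\zeta \circ \zeta\inv = \id$ by an explicit commutative diagram built from naturality squares and triangle identities, whereas you apply $\alpha_1$ to both sides and reduce to $\eta_1 = \eta_1$. Your route is shorter and avoids the diagram chase; the paper's route has the advantage of being entirely in unit/counit language without invoking the hom-set bijection. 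Both are standard and equivalent.
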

\begin{proof}
	We set $\zeta = \eps_1 L_2 \circ L_1 \eta_2$ and $\zeta\inv = \eps_2 L_1 \circ L_2 \eta_1$. We show that $\zeta \circ \zeta\inv = \id$; the other equation holds by symmetry of the indices. Observe the commutative diagram:
	\begin{equation}
		\xymatrix{
			L_2 \ar[rr]^{L_2 \eta_1} \ar[d]_{L_2 \eta_2}
			\ar@/^{2em}/[rrrr]^{\zeta}
			&& L_2 R L_1 \ar[rr]^{\eps_2 L_1} \ar[d]_{L_2 R L_1 \eta_2}
			&& L_1 \ar[d]_{L_1 \eta_2}
			\ar@/^{2em}/[dd]^{\zeta\inv}
			\\
			L_2 R L_2 \ar[rr]^{L_2 \eta_1 R L_2} \ar[rrd]_{\id}
			&& L_2 R L_1 R L_2 \ar[rr]^{\eps_2 L_1 R L_2} \ar[d]^{L_2 R \eps_1 L_2}
			&& L_1 R L_2 \ar[d]_{\eps_1 L_2}
			\\
			&& L_2 R L_2 \ar[rr]_{\eps_2 L_2}
			&& L_2
		}
	\end{equation}
	The top right square applies naturality of $\eps_2 L_1$ to $L_1 \eta_2$. The top left square still holds after removing $L_2$ on the left and is then an instance of naturality of $\eta_1$. The lower right square still holds after removing $L_2$ on the right and is then an instance of naturality of $\eps_2$. The lower right triangle commutes because $R \eps_1 \circ \eta_1 R = \id$. Finally, the entire left-lower side composes to the identity. We have
	\begin{align*}
		R\zeta \circ \eta_1
		&= R \eps_1 L_2 \circ R L_1 \eta_2 \circ \eta_1
		= R \eps_1 L_2 \circ \eta_1 RL_2 \circ \eta_2 = \eta_2, \\
		\eps_2 \circ \zeta R
		&= \eps_2 \circ \eps_1 L_2 R \circ L_1 \eta_2 R
		= \eps_1 \circ L_1 R \eps_2 \circ L_1 \eta_2 R = \eps_1. \qedhere
	\end{align*}
\end{proof}
\begin{proof}[Proof of \cref{thm:cohesion-psh}]
	The adjunctions $\cohdisc \dashv \cohfget \dashv \cohcodisc \dashv \cohpaths$ follow from \cref{thm:lifting-adjunctions}. For now, we just assume $\cohpi$ to be some left adjoint to $\cohdisc$.
	
	The fact that $\cohfget \cohdisc = \Id$, $\cohfget\cohcodisc = \Id$ and $\cohpaths\cohcodisc = \Id$ follows from the fact that $\fpsh \loch$ swaps composition and preserves identity.
	
	It is clear that both $\bar \shp := \cohpi \cohdisc$ and $\Id$ are left adjoint to $\cohfget \cohdisc = \Id$. Then \cref{thm:unicity-of-left-adjoint} below gives us, after filling in the identity in various places, an isomorphism $\bar \varsigma : \Id \cong \bar \shp$ which is the unit of $\bar \shp \dashv \Id$, while the co-unit is $\bar \varsigma \inv$.
	
	The equalities and isomorphisms are obvious.
	
	We \emph{define} $\varsigma$ as the unit of $\cohpi \dashv \cohdisc$.
	The rest of the theorem now follows from \ref{thm:uniqueness-of-nattrans-psh}.
\end{proof}

\subsection{Characterizing cohesive adjunctions}
We currently have various cohesion-based ways of manipulating terms: we can apply functors, turning ($t \mapsto \ftrtm F t$), we can apply natural transformations ($t \mapsto \nu(t)$), we can instead substitute with natural transformations ($t \mapsto t[\nu]$) and apply adjunctions ($t \mapsto \alpha(t)$). We have various equations telling us how these relate, but altogether it becomes hard to tell whether terms are equal. For this reason, we will at least try to write the relevant adjunctions in terms of the other constructions.
\begin{proposition}
	For any contexts $\Gamma$ and $\Delta$, we have the following diagrams, in which all arrows are invertible:
	\begin{equation}
		\xymatrix{
			(\shp \Gamma \to \Delta)
				\ar[rr]^{\alpha_{\shp \dashv \flat}}
				\ar@{<-}[rd]_{\kappa \circ \loch}
			&&
			(\Gamma \to \flat \Delta)
				\ar@{<-}[ld]^{\loch \circ \varsigma}
			\\&
			(\shp \Gamma \to \flat \Delta)
		\\
			(\flat \Gamma \to \Delta)
				\ar[rr]^{\alpha_{\flat \dashv \sharp}}
				\ar[rd]_{\iota \circ \loch}
			&&
			(\Gamma \to \sharp \Delta)
				\ar[ld]^{\loch \circ \kappa}
			\\&
			(\flat \Gamma \to \sharp \Delta)
		\\
			(\sharp \Gamma \to \Delta)
				\ar[rr]^{\alpha_{\sharp \dashv \coshp}}
				\ar@{<-}[rd]_{\vartheta \circ \loch}
			&&
			(\Gamma \to \coshp \Delta)
				\ar@{<-}[ld]^{\loch \circ \iota}
			\\&
			(\sharp \Gamma \to \coshp \Delta)
		}
	\end{equation}
\end{proposition}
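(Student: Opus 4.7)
The plan is to handle each of the three triangles in turn, separating commutativity from invertibility. Using \cref{thm:adjunction-rules}, I first expand each adjunction isomorphism explicitly: $\alpha_{\shp \dashv \flat}(\sigma) = \flat \sigma \circ \varsigma$, $\alpha_{\flat \dashv \sharp}(\sigma) = \sharp \sigma \circ \iota$, and $\alpha_{\sharp \dashv \coshp}(\sigma) = \coshp \sigma \circ \iota$. All type-checking works because of the equalities between composites of $\shp, \flat, \sharp, \coshp$ listed in \cref{thm:cohesion-psh}.

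Commutativity of the first triangle reduces to showing $\flat \tau \circ \varsigma = \tau \circ \varsigma$ for $\tau : \shp \Gamma \to \flat \Delta$, which I get by applying naturality of $\kappa : \flat \to \Id$ to $\tau$ and invoking $\kappa \flat = \id$ and $\kappa \shp = \id$ from \eqref{eq:cohesion-psh-identities} to conclude $\flat \tau = \tau$. Commutativity of the second triangle follows from the auxiliary identity $\iota_{\flat \Gamma} = \iota_\Gamma \circ \kappa_\Gamma$, which itself is naturality of $\iota$ applied to $\kappa_\Gamma$ together with $\sharp \kappa = \id$. Commutativity of the third uses $\coshp \vartheta = \id$ together with naturality of $\vartheta$ applied to $\tau : \sharp \Gamma \to \coshp \Delta$, along with $\vartheta \sharp = \id$ and $\vartheta \coshp = \id$, to conclude $\coshp \tau = \tau$.

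For invertibility, $\alpha$ is an isomorphism by definition. I exhibit explicit inverses for one of the remaining arrows in each triangle using the appropriate modal endofunctor: $\sigma \mapsto \flat \sigma$ inverts $\kappa \circ \loch$ in the first triangle (well-typed because $\flat \shp = \shp$ and $\flat \flat = \flat$); $\tau \mapsto \sharp \tau \circ \iota_\Gamma$ inverts $\loch \circ \kappa$ in the second (using $\sharp \flat = \sharp$ and $\sharp \sharp = \sharp$, with round trips relying on $\sharp \kappa = \id$ and $\iota \sharp = \id$); and $\sigma \mapsto \coshp \sigma$ inverts $\vartheta \circ \loch$ in the third. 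The round-trip verifications reuse the same naturality-plus-identity bookkeeping as for commutativity. The third arrow in each triangle is then invertible automatically, as a composite of two invertible maps.

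The main obstacle will be bookkeeping: each reduction consists of checking that some natural transformation, evaluated at a composite of idempotent modal endofunctors, degenerates to the identity, and there are many candidates to choose from in \eqref{eq:cohesion-psh-identities}. Conceptually, however, the proposition simply encodes the familiar fact that in a cohesive setting the units and co-units of one modality become identities when whiskered with the opposite modality, so every computation collapses after at most one application of naturality.
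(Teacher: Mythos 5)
Your proof is correct, and the underlying calculations are the same: naturality of the (co-)units combined with the idempotence identities ($\kappa\flat = \flat\kappa = \id$, $\kappa\shp = \id$, $\sharp\kappa = \id$, $\iota\sharp = \sharp\iota = \id$, $\vartheta\sharp = \id$, $\vartheta\coshp = \coshp\vartheta = \id$). The organization differs from the paper's, though, in a way worth flagging. The paper's opening move is to observe that, because $\flat$, $\sharp$, $\coshp$ are idempotent, the diagonal arrows \emph{already are} adjunction isomorphisms at a shifted argument --- e.g.\ $\iota\circ\loch$ is $\alpha_{\flat\dashv\sharp}^{\flat\Gamma,\Delta}$, $\loch\circ\iota$ is $\alpha_{\sharp\dashv\coshp}^{\Gamma,\coshp\Delta}$, $\vartheta\circ\loch$ is the inverse of $\alpha_{\sharp\dashv\coshp}^{\sharp\Gamma,\Delta}$, and $\kappa\circ\loch$ is handled as a composite of isomorphisms. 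Invertibility then comes for free, and all that remains is checking that the formula reduces to the stated (co-)unit composition and that the triangle closes. You instead establish commutativity first and then \emph{construct} inverses by hand via $\sigma\mapsto\flat\sigma$, $\tau\mapsto\sharp\tau\circ\iota_\Gamma$, $\sigma\mapsto\coshp\sigma$; if you unwind these, you will find you have rediscovered the very adjunction isomorphisms the paper points at. Your route is slightly more verbose but makes every inverse explicit, which has pedagogical value; the paper's is more economical because it delegates invertibility to \cref{thm:adjunction-rules}. One small presentational gap: in the second triangle, after rewriting via $\iota_{\flat\Gamma} = \iota_\Gamma \circ \kappa_\Gamma$, you still need a second application of naturality of $\iota$ (applied to $\sigma : \flat\Gamma \to \Delta$ and then $\iota\sharp = \id$) to land on $\iota_\Delta \circ \sigma$; your summary leaves that step implicit.
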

\begin{proof}
	For all diagonal arrows except $\kappa \circ \loch$, we can again take the adjunction isomorphism since $\flat$, $\sharp$ and $\coshp$ are idempotent. One can check that these boil down to composition with a (co)-unit, e.g. for $\sigma : \flat \Gamma \to \Delta$ we have
	\begin{equation}
		\alpha_{\flat \dashv \sharp}^{\flat \Gamma, \Delta}(\sigma)
		= \sharp \sigma \circ \iota \flat \Gamma
		= \iota \Delta \circ \sigma.
	\end{equation}
	For $\sigma : \sharp \Gamma \to \coshp \Delta$, we have
	\begin{equation}
		\alpha_{\sharp \dashv \coshp}^{\Gamma, \coshp \Delta}(\sigma)
		= \coshp \sigma \circ \iota \Gamma = \sigma \circ \iota \Gamma
	\end{equation}
	because $\coshp \sigma = \vartheta \coshp \Delta \circ \coshp \sigma = \sigma \circ \vartheta \sharp \Gamma = \sigma$. For the arrow $\kappa \circ \loch$, we use a composition of isomorphisms
	\begin{equation}
		(\shp \Gamma \to \flat \Delta) \xrightarrow{(\alpha_{\shp \dashv \flat}^{\shp \Gamma, \Delta})\inv} (\shp \shp \Gamma \to \flat \Delta) \xrightarrow{\loch \circ \varsigma \shp \Gamma} (\shp \Gamma \to \Delta).
	\end{equation}
	A substitution $\sigma : \shp \Gamma \to \flat \Delta$ is then mapped to
	\begin{equation}
		(\alpha_{\shp \dashv \flat}^{\shp \Gamma, \Delta})\inv(\sigma) \circ \varsigma \shp \Gamma
		= \kappa \Delta \circ (\varsigma \flat \Delta)\inv \circ \shp \sigma \circ \varsigma \shp \Gamma
		= \kappa \Delta \circ (\varsigma \flat \Delta)\inv \circ \varsigma \flat \Delta \circ \sigma
		= \kappa \Delta \circ \sigma.
	\end{equation}
	Finally, one can check that the diagrams commute, e.g. if $\sigma : \shp \Gamma \to \flat \Delta$ then
	\begin{equation}
		\alpha_{\shp \dashv \flat}^{\Gamma, \Delta}(\kappa \Delta \circ \sigma)
		= \flat \kappa \Delta \circ \flat \sigma \circ \varsigma \Gamma
		= \flat \sigma \circ \varsigma \Gamma = \sigma \circ \varsigma \Gamma
	\end{equation}
	because $\flat \sigma = \kappa \flat \Delta \circ \flat \sigma = \sigma \circ \kappa \shp \Gamma = \sigma$.
\end{proof}
\begin{notation}
	When it exists, we write $\tau \setminus \sigma$ for the unique substitution such that $\tau \circ (\tau \setminus \sigma) = \sigma$, and $\sigma / \tau$ for the unique substitution such that $(\sigma / \tau) \circ \tau = \sigma$. Uniqueness implies that $\tau \setminus (\tau \circ \sigma) = \sigma$ and $(\sigma \circ \tau) \setminus \tau = \sigma$ if the left hand side exists.
	
	Similarly, when it exists, we write $\nu\inv(t)$ for the unique term such that $\nu(\nu\inv(t)) = t$ and $t[\nu]\inv$ for the unique term such that $t[\nu]\inv[\nu] = t$. Uniqueness implies that $\nu\inv(\nu(t)) = t$ and $t[\nu][\nu]\inv = t$.

	The above theorem then justifies the following notations:
	\begin{equation}
		\begin{array}{r | c | c | c}
			& \shp \dashv \flat & \flat \dashv \sharp & \sharp \dashv \coshp
			\\ \hline
			\alpha(\sigma)
			& (\kappa \setminus \sigma) \circ \varsigma
			& (\iota \circ \sigma) / \kappa
			& (\vartheta \setminus \sigma) \circ \iota
			\\
			\alpha\inv(\tau)
			& \kappa \circ (\tau / \varsigma)
			& \iota \setminus (\tau \circ \kappa)
			& \vartheta \circ (\tau / \iota)
			\\
			\alpha(t)
			& \kappa\inv(t)[\varsigma]
			& \iota(t)[\kappa]\inv
			& \vartheta\inv(t)[\iota]
			\\
			\alpha\inv(u)
			& \kappa(t[\varsigma]\inv)
			& \iota\inv(t[\kappa])
			& \vartheta(t[\iota]\inv)
		\end{array}
	\end{equation}
	Note that terms correspond to substitutions to an extended context, which are then subject to the diagrams above.
\end{notation}

\chapter{Discreteness}\label{ch:discreteness}
It is common in categorical models of dependent type theory to designate a certain class of morphisms $\homclass F$ in the category of contexts, typically called \textbf{fibrations}, and to require that for any type $\Gamma \sez T \type$, the morphism $\pi : \Gamma.T \to \Gamma$ is a fibration. Types satisfying this criterion are then called \textbf{fibrant}. A context $\Gamma$ is called \textbf{fibrant} if the map $\Gamma \to ()$ is a fibration.

Typically, the fibrations can be characterized using a lifting property with respect to another class of morphisms $\homclass H$ which we will call \textbf{horn inclusions}. If $\eta : \Lambda \to \Delta$ is a horn inclusion, then we call a map $\sigma : \Lambda \to \Gamma$ a \textbf{horn} in $\Gamma$ and if $\sigma$ factors as $\tau \eta$, then $\tau : \Delta \to \Gamma$ is called a \textbf{filler} of $\sigma$.

Now the fibrations are usually those morphisms $\rho : \Gamma' \to \Gamma$ such that any horn $\sigma$ in $\Gamma'$ which has a filler $\tau$ in $\Gamma$ (meaning that $\rho\sigma$ has a filler $\tau$), also has a (compatible) filler in $\Gamma'$, i.e. commutative squares like the following have a diagonal:
\begin{equation}
	\xymatrix{
		\Lambda \ar[r]^\sigma \ar[d]_{\eta \in \homclass H}
		& \Gamma' \ar[d]^{\rho}
		\\
		\Delta \ar[r]_{\tau} \ar@{.>}[ru]
		&
		\Gamma
	}
\end{equation}
In this text, we will call the fibrant types and contexts \textbf{discrete} and we will speak of \textbf{discrete maps} instead of fibrations, because this better reflects the idea behind what we are doing.

In many models, only fibrant contexts are used. However, we will also consider non-discrete contexts because our modalities do not preserve discreteness.

\section{Definition}
\begin{definition}
	Let $\XX$ stand for $\IB$, $\IP$ or $\IE$.
	We say that a defining substitution $\gamma : \DSub{(W, \ctxline{\var i})}{\Gamma}$ or a defining term $(W, \ctxline{\var i}) \Dsez t : T \dsub \gamma$ is \textdef{degenerate in $\var i$} if it factors over $(\facewkn{\var i}) : \PSub{(W, \ctxline{\var i})}{W}$.
\end{definition}
The notion of degeneracy is thus meaningful in the CwFs $\widehat \cubecat$ and $\widehat \bpcubecat$. Thinking of $\var i$ as a variable, this means that $\gamma$ and $t$ do not refer to $\var i$. Thinking of $\var i$ as a dimension, this means that $\gamma$ and $t$ are flat in dimension $\var i$. Note that $t$ can only be degenerate in $\var i$ if $\gamma$ is.
\begin{corollary}
	For a defining substitution $\gamma : \DSub{(W, \ctxline{\var i})}{\Gamma}$ or a defining term $(W, \ctxline{\var i}) \Dsez t : T \dsub \gamma$, the following are equivalent:
	\begin{enumerate}
		\item $\gamma$/$t$ is degenerate in $\var i$,
		\item $\gamma = \gamma \circ (0/\var i, \facewkn{\var i})$; $t = t \psub{0/\var i, \facewkn{\var i}}$,
		\item $\gamma = \gamma \circ (1/\var i, \facewkn{\var i})$; $t = t \psub{1/\var i, \facewkn{\var i}}$. \qed
	\end{enumerate}
\end{corollary}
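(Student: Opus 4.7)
The plan is to prove both cycles $(1) \Rightarrow (2) \Rightarrow (1)$ and $(1) \Rightarrow (3) \Rightarrow (1)$ uniformly for substitutions and terms, using a small categorical factorization in the base category. The argument works simultaneously in $\cubecat$ and $\bpcubecat$, and for each flavour $\XX \in \{\IB, \IP, \IE\}$, because the constants $0$ and $1$ are admissible values for every flavour of dimension.

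The key observation concerns the endomorphism $(\epsilon/\var i, \facewkn{\var i})$ of $(W, \ctxline{\var i})$ for $\epsilon \in \{0,1\}$. It factors in the base category as
\[
	(W, \ctxline{\var i}) \xrightarrow{(\facewkn{\var i})} W \xrightarrow{(\epsilon/\var i)} (W, \ctxline{\var i}),
\]
where $(\epsilon/\var i) : \PSub{W}{(W, \ctxline{\var i})}$ sends target $\var i$ to $\epsilon$ and each $\var j \in W$ to itself. The crucial additional fact is that the opposite composite $(\facewkn{\var i}) \circ (\epsilon/\var i) : \PSub{W}{W}$ equals $\id_W$, since $\epsilon$ carries no source variables and every $\var j \in W$ is sent back to itself.

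Given these two identities, $(1) \Rightarrow (2)$ is a one-line computation: if $\gamma = \gamma' \circ (\facewkn{\var i})$ for some $\gamma' : \DSub{W}{\Gamma}$, then
\[
	\gamma \circ (0/\var i, \facewkn{\var i}) = \gamma' \circ (\facewkn{\var i}) \circ (0/\var i) \circ (\facewkn{\var i}) = \gamma' \circ \id_W \circ (\facewkn{\var i}) = \gamma.
\]
Conversely, for $(2) \Rightarrow (1)$, define $\gamma' := \gamma \circ (0/\var i) : \DSub{W}{\Gamma}$; then by hypothesis, $\gamma' \circ (\facewkn{\var i}) = \gamma \circ (0/\var i, \facewkn{\var i}) = \gamma$, exhibiting the required factorization. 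The equivalence $(1) \Leftrightarrow (3)$ is proved identically, replacing $0$ by $1$. For terms, the same argument transfers verbatim, replacing substitutional composition $\loch \circ \loch$ by term restriction $\loch \psub{\loch}$ and using functoriality of the presheaf action of $T$.

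The only possible obstacle is minor bookkeeping: one must respect the convention from \cref{sec:cubecat} that face maps assign values in the source to variables in the target, so that when computing $(\facewkn{\var i}) \circ (\epsilon/\var i)$ one verifies that the intermediate $\var i$ genuinely gets discarded. Once that direction convention is clear, the proof contains no real content beyond the factorization above.
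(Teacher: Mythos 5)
Your proof is correct, and the paper treats the corollary as immediate (there is no proof block, only a concluding $\qed$). Your argument — splitting the idempotent $(\epsilon/\var i, \facewkn{\var i})$ as the weakening $(\facewkn{\var i})$ followed by the section $(\epsilon/\var i)$, and noting that the other composite $(\facewkn{\var i}) \circ (\epsilon/\var i)$ is $\id_W$ — is exactly the intended unpacking of what it means for $\gamma$ or $t$ to factor over $(\facewkn{\var i})$.
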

\begin{definition}
	We call a context \textbf{discrete} if all of its cubes are degenerate in every path dimension.
	
	We call a map $\rho : \Gamma' \to \Gamma$ \textbf{discrete} if every defining substitution $\gamma$ of $\Gamma'$ is degenerate in every path dimension in which $\rho \circ \gamma$ is degenerate.
	
	We call a type $\Gamma \sez T \type$ \textbf{discrete} (denoted $\Gamma \sez T \dtype$) if every defining term $t : T \dsub \gamma$ is degenerate in every path dimension in which $\gamma$ is degenerate.
\end{definition}
\begin{proposition}
	A type $\Gamma \sez T \type$ is discrete if and only if $\pi : \Gamma.T \to \Gamma$ is discrete.
\end{proposition}
\begin{proof}
	\begin{itemize}
		\item[$\Rightarrow$] Assume that $T$ is discrete. Pick $(\gamma, t) : \DSub{(W, \ctxpath{\var i})}{\Gamma.T}$ such that $\pi \circ (\gamma, t) = \gamma$ is degenerate in $\var i$. Then $t$ is degenerate in $\var i$ by discreteness of $T$ and so is $(\gamma, t)$.
		\item[$\Leftarrow$] Assume that $\pi$ is discrete. Pick $t : T \dsub \gamma$ where $\gamma$ is degenerate in $\var i$. Then $(\gamma, t)$ is degenerate in $\var i$ since $\pi(\gamma, t) = \gamma$, and hence $t$ is degenerate in $\var i$. \qedhere
	\end{itemize}
\end{proof}
\begin{proposition}
	A context $\Gamma$ is discrete if and only if $\Gamma \to ()$ is discrete.
\end{proposition}
\begin{proof}
	Note that every defining substitution of $()$ is degenerate in every dimension. This proves the claim.
\end{proof}
\begin{proposition}
	A map $\rho : \Gamma' \to \Gamma$ is discrete if and only if it has the lifting property with respect to all horn inclusions $(\facewkn{\var i}) : \yoneda(W, \ctxpath{\var i}) \to \yoneda W$.
\end{proposition}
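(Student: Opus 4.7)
My plan is to unpack the lifting property via the Yoneda lemma, observing that a commutative square
\begin{equation*}
	\xymatrix{
		\yoneda(W, \ctxpath{\var i}) \ar[r]^-{\sigma} \ar[d]_{(\facewkn{\var i})}
		& \Gamma' \ar[d]^{\rho}
		\\
		\yoneda W \ar[r]_{\tau} \ar@{.>}[ru]
		&
		\Gamma
	}
\end{equation*}
corresponds to data $\sigma : \DSub{(W, \ctxpath{\var i})}{\Gamma'}$, $\tau : \DSub{W}{\Gamma}$ with $\rho \circ \sigma = \tau \circ (\facewkn{\var i})$ --- that is, $\rho \circ \sigma$ is degenerate in $\var i$ with witness $\tau$ --- while a diagonal filler corresponds to $\delta : \DSub{W}{\Gamma'}$ satisfying $\delta \circ (\facewkn{\var i}) = \sigma$ (so $\sigma$ is degenerate in $\var i$) and $\rho \circ \delta = \tau$.

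For the ($\Leftarrow$) direction, I would take an arbitrary $\sigma : \DSub{(W, \ctxpath{\var i})}{\Gamma'}$ such that $\rho \circ \sigma$ is degenerate in $\var i$, pick the witness $\tau$ of this degeneracy, and apply the lifting property to the resulting square; the diagonal $\delta$ then exhibits $\sigma$ itself as degenerate in $\var i$, verifying discreteness of $\rho$. For the ($\Rightarrow$) direction, I would start from a commutative square and use discreteness of $\rho$, applied to the observation that $\rho \circ \sigma = \tau \circ (\facewkn{\var i})$ is degenerate in $\var i$, to obtain $\delta : \DSub{W}{\Gamma'}$ with $\delta \circ (\facewkn{\var i}) = \sigma$. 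The lower triangle $\rho \circ \delta = \tau$ is then obtained by cancelling $(\facewkn{\var i})$ from the right in $\rho \circ \delta \circ (\facewkn{\var i}) = \rho \circ \sigma = \tau \circ (\facewkn{\var i})$.

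The main (and mild) obstacle is justifying that cancellation, i.e., showing that $\yoneda(\facewkn{\var i})$ is an epimorphism in $\widehat{\bpcubecat}$. I would establish this by exhibiting $(0/\var i) : W \to (W, \ctxpath{\var i})$ as a section of $(\facewkn{\var i})$ already in the base category $\bpcubecat$---the geometric content being that the face at $\var i = 0$ of a cube degenerate in $\var i$ recovers the original cube. Since the Yoneda embedding preserves composition and identities, it carries split epimorphisms to split epimorphisms, and split epis are in particular epis, which concludes the argument.
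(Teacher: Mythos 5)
Your proof is correct and takes essentially the same approach as the paper's: unpack the square via Yoneda, and translate between ``degenerate in $\var i$'' and the existence of a diagonal. The one place you go beyond what the paper writes is that the paper's $\Rightarrow$ direction simply asserts that degeneracy of $\gamma'$ ``yields the required diagonal'' without verifying the lower triangle $\rho \circ \delta = \gamma$; your observation that $(0/\var i)$ is a section of $(\facewkn{\var i})$, making $\yoneda(\facewkn{\var i})$ a split epi, is exactly the small lemma needed to justify that cancellation, so your version is the more careful one.
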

\begin{proof}
	\begin{itemize}
		\item[$\Rightarrow$] Suppose that $\rho$ is discrete and consider a square
		\begin{equation}
			\xymatrix{
				\yoneda(W, \ctxpath{\var i}) \ar[r]^{\gamma'} \ar[d]_{(\facewkn{\var i})}
				& \Gamma' \ar[d]^{\rho}
				\\
				\yoneda W \ar[r]_{\gamma}
				&
				\Gamma.
			}
		\end{equation}
		Then the defining substitution $\rho \circ \gamma' : \DSub{(W, \ctxpath{\var i})}{\Gamma}$ clearly factors over $(\facewkn{\var i})$ so that it is degenerate in $\var i$. By degeneracy of $\rho$, the same holds for $\gamma'$, yielding the required diagonal.
		
		\item[$\Leftarrow$] Suppose that $\rho$ has the lifting property and take $\gamma' : \DSub{(W, \ctxpath{\var i})}{\Gamma'}$ such that $\rho \circ \gamma'$ is degenerate in $\var i$. This gives us a square as above, which has a diagonal, showing that $\gamma'$ is degenerate. \qedhere
	\end{itemize}
\end{proof}
\begin{example}
	In the examples, we will develop the content of this chapter for the CwF $\widehat{\RGcat}$ of reflexive graphs. This will fail when we come to product types, which is the reason why we choose to work with presheaves over $\widehat{\bpcubecat}$ which contain not only points, paths and bridges, but also coherence cubes. An alternative is to require edges (bridges and paths) to be proof-irrelevant in the style of \cite{dtt-parametricity}, but this property cannot be satisfied by the universe. Remember that $\RGcat$ has objects $()$ and $(\ctxedge{\var i})$ and the same morphisms between them as we find in $\cubecat$.
	
	We call an edge $p : \DSub{(\ctxedge{\var i})}{\Gamma}$ \textbf{degenerate} if it is the constant edge on some point $x : \DSub{()}{\Gamma}$, i.e.\ $p = x \circ (\facewkn{\var i})$. This point is uniquely determined, as it must be equal to the edge's source and target. So we can say that $p$ is degenerate iff $p = p \circ (0/\var i) \circ (\facewkn{\var i})$ iff $p \circ (1/\var i) \circ (\facewkn{\var i})$.
	
	We call a context (reflexive graph) \textbf{discrete} if all of its edges are degenerate.
	
	We call a map $\rho : \Gamma' \to \Gamma$ \textbf{discrete} if every edge $\gamma : \DSub{(\ctxedge{\var i})}{\Gamma}$ for which $\rho \circ \gamma$ is degenerate, is itself degenerate.
	
	We call a type $\Gamma \sez T \type$ \textbf{discrete} if every defining edge $(\ctxedge{\var i}) \Dsez t : T \dsub \gamma$ over a degenerate edge $\gamma$, is degenerate.
	
	One can prove:
	\begin{itemize}
		\item A type $\Gamma \sez T \type$ is discrete if and only if $\pi : \Gamma.T \to \Gamma$ is discrete.
		\item A context $\Gamma$ is discrete if and only if $\Gamma \to ()$ is discrete.
		\item A map $\rho : \Gamma' \to \Gamma$ is discrete if it has the lifting property with respect to the horn inclusion $(\facewkn{\var i}) : \yoneda(\ctxedge{\var i}) \to \yoneda()$.
	\end{itemize}
\end{example}

\section{A model with only discrete types}
\begin{theorem}
	If we define $\DTy(\Gamma)$ to be the set of all \emph{discrete} types over $\Gamma$, then we obtain a new CwF $\bpdisc$ which also supports dependent products, dependent sums and identity types.
\end{theorem}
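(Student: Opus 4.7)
The plan is to show $\bpdisc$ inherits its CwF structure from $\widehat{\bpcubecat}$ by restricting types to discrete ones, and then to verify that each of the three type formers lands in $\DTy$.

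First I would check that discreteness is stable under substitution. If $\sigma : \Delta \to \Gamma$ and $\delta : \DSub{(V, \ctxpath{\var i})}{\Delta}$ is degenerate via $\delta = \tilde\delta \circ (\facewkn{\var i})$, then $\sigma \circ \delta = (\sigma \circ \tilde\delta) \circ (\facewkn{\var i})$ is again degenerate, so discreteness of $T$ yields degeneracy of any $t : T[\sigma]\dsub\delta = T\dsub{\sigma \circ \delta}$. Thus $\DTy$ is a sub-functor of $\Ty$, and contexts, substitutions, terms, the empty context and context extension are inherited unchanged from $\widehat{\bpcubecat}$.

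Identity types and $\Sigma$-types are easy. For the identity type, if $\gamma$ is degenerate via $\tilde\gamma$ and $a\dsub\gamma = b\dsub\gamma$, restricting both sides along $(0/\var i)$ and invoking naturality gives $a\dsub{\tilde\gamma} = b\dsub{\tilde\gamma}$, so the unique element $\star$ of $(a \idtp A b)\dsub\gamma$ arises as the weakening of the element over $\tilde\gamma$ (this argument does not even use discreteness of $A$). For $\Sigma$-types, a pair $(a, b) \in (\Sigma A B)\dsub\gamma$ degenerates componentwise: discreteness of $A$ gives $a = \tilde a \circ (\facewkn{\var i})$, and then $(\gamma, a) = (\tilde\gamma, \tilde a) \circ (\facewkn{\var i})$ is a degenerate substitution into $\Gamma.A$, so discreteness of $B$ gives $b = \tilde b \circ (\facewkn{\var i})$, whence $(a, b) = (\tilde a, \tilde b) \circ (\facewkn{\var i})$.

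The main obstacle is the $\Pi$-type. The plan is to show that $\Pi A B$ is discrete whenever $B$ is, regardless of $A$. Given $\dlambda b \in (\Pi A B)\dsub\gamma$ with $\gamma = \tilde\gamma \circ (\facewkn{\var i})$, I would define a candidate $\tilde b$ over $\yoneda V.A[\tilde\gamma]$ by $\tilde b\dsub{\vfi', a'} := b\dsub{(\vfi', 0/\var i), a'}$, using the canonical extension of $\vfi' : \PSub{U}{V}$ that sends $\var i \mapsto 0$. Naturality is immediate from $\vfi \mapsto (\vfi, 0/\var i)$ commuting with precomposition, and the desired equation $b = \tilde b \sub{(\facewkn{\var i})\subext}$ reduces to the pointwise claim that $b\dsub{\vfi, a} = b\dsub{\vfi \circ (0/\var i, \facewkn{\var i}), a}$ for every $\vfi : \PSub{U}{(V, \ctxpath{\var i})}$ and $a : A\dsub{\gamma \circ \vfi}$. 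The crux of the proof is this claim. To establish it I would introduce a fresh path variable $\var j$ and form the lifting $\vfi^* : \PSub{(U, \ctxpath{\var j})}{(V, \ctxpath{\var i})}$ that agrees with $\vfi$ on $V$ but sends $\var i \mapsto \var j$, together with the weakening $a^* := a \psub{(\facewkn{\var j})}$. Then $\gamma \circ \vfi^*$ is degenerate in $\var j$ (because $\gamma$ is degenerate in $\var i$), and $a^*$ is also degenerate in $\var j$, so the substitution $(\gamma \circ \vfi^*, a^*) : \DSub{(U, \ctxpath{\var j})}{\Gamma.A}$ is degenerate in $\var j$. Discreteness of $B$ then forces $b\dsub{\vfi^*, a^*}$ to be degenerate in $\var j$, and hence its restrictions along the substitutions $\psi_x : \PSub{U}{(U, \ctxpath{\var j})}$ sending $\var j \mapsto x$ all coincide, for any $x \in \{0, 1\} \cup U$. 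Instantiating $x$ as the $\var i$-value of $\vfi$ recovers $b\dsub{\vfi, a}$, while instantiating $x = 0$ recovers $b\dsub{\vfi \circ (0/\var i, \facewkn{\var i}), a}$, yielding the claim.
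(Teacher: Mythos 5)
Your proposal is correct and follows essentially the same strategy as the paper: for $\Pi$-types, both exhibit the degenerate candidate (your $\tilde b$, the paper's $f\psub{\facewkn{\var i}}$ where $f := h\psub{0/\var i}$) and settle the pointwise equality by invoking discreteness of $B$ along a path over a degenerate substitution into $\Gamma.A$ --- you do this uniformly by lifting $\vfi$ to a fresh path variable $\var j$, where the paper instead runs an explicit case split on $\var i \psub \vfi \in \accol{0, 1} \uplus V$ with WLOG renamings, but the underlying key step is identical. One minor notational slip: $\vfi \circ (0/\var i, \facewkn{\var i})$ should read $(0/\var i, \facewkn{\var i}) \circ \vfi$, matching the paper's convention (e.g.\ $\vfi = (0/\var i)\psi$) that face maps compose in the arrow direction.
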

We prove this theorem in several parts. In the examples, we will try and fail to prove the corresponding theorem for $\RGcat$: we do have a CwF $\RGdisc$ and it supports dependent sums and identity types, but we will fail to prove that it supports dependent products.

\subsection{The category with families $\bpdisc$}
\begin{lemma}
	$\bpdisc$ is a well-defined category with families (see \cref{def:cwf}).
\end{lemma}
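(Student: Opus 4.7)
The plan is to show that $\bpdisc$ inherits almost all of its CwF structure directly from $\widehat{\bpcubecat}$, which we already know to be a CwF. Concretely, the category of contexts is $\widehat{\bpcubecat}$ itself, the terms functor is $\Tm_{\widehat{\bpcubecat}}$ restricted to $\int_{\widehat{\bpcubecat}} \DTy \subseteq \int_{\widehat{\bpcubecat}} \Ty$, the empty context is the terminal presheaf $()$, and context extension is again $\Gamma.T$ as constructed in \cref{sec:psh-cwf}. So nearly all CwF equations come for free; what needs genuine verification is only that the data fits together once we narrow $\Ty$ to $\DTy$.

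First I would check that $\DTy$ is a sub-presheaf of $\Ty$, i.e.\ that discreteness is preserved under substitution. Given $\Gamma \sez T \dtype$ and $\sigma : \Delta \to \Gamma$, pick $\delta : \DSub{(W, \ctxpath{\var i})}{\Delta}$ that is degenerate in $\var i$, say $\delta = \delta' \circ (\facewkn{\var i})$, and a term $(W, \ctxpath{\var i}) \Dsez t : T[\sigma]\dsub \delta = T\dsub{\sigma \delta}$. Then $\sigma\delta = (\sigma \delta') \circ (\facewkn{\var i})$ is degenerate in $\var i$, so by discreteness of $T$ the term $t$ is degenerate in $\var i$. Hence $T[\sigma] \in \DTy(\Delta)$, and $\DTy$ is a well-defined contravariant functor $\widehat{\bpcubecat}\op \to \Set$.

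Next I would observe that restricting $\Tm_{\widehat{\bpcubecat}}$ to the full subcategory $\int_{\widehat{\bpcubecat}} \DTy$ trivially produces a functor into $\Set$, since sub-functoriality of $\DTy$ ensures this subcategory is closed in the appropriate sense. The terminal presheaf $()$ is still terminal in $\widehat{\bpcubecat}$, so it serves as the empty context. For context extension, given $\Gamma \sez T \dtype$, the presheaf $\Gamma.T$ and the projection/variable pair $(\pi, \xi)$ are exactly as in $\widehat{\bpcubecat}$; the required bijection
\begin{equation*}
\Hom(\Delta, \Gamma.T) \;\cong\; \Sigma(\sigma : \Hom(\Delta, \Gamma)).\,\Tm(\Delta, T[\sigma])
\end{equation*}
holds because $\Tm(\Delta, T[\sigma])$ is the very same set as in $\widehat{\bpcubecat}$, and $T[\sigma] \in \DTy(\Delta)$ by the first step, so everything in sight is well-typed in $\bpdisc$.

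There is no real obstacle: the whole argument is just the observation that $\DTy \hookrightarrow \Ty$ is closed under substitution and that the CwF operations from $\widehat{\bpcubecat}$ never leave this sub-presheaf. The main point to double-check is that the pairing $(\sigma, t)$ into $\Gamma.T$ does not require $\Gamma$ itself to be discrete --- it does not, since discreteness in this section is a condition on types relative to contexts, not on contexts themselves, and this is precisely why we are content to keep arbitrary presheaves as contexts.
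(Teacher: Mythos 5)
Your proof is correct and follows the same route as the paper: both reduce the claim to showing that discreteness of types is stable under substitution, and both argue this by pushing the degeneracy of $\delta$ forward to $\sigma\delta$ and invoking discreteness of $T$. The only small thing the paper makes explicit that you leave tacit is the final transfer from ``$t$ is degenerate as a defining term of $T$'' to ``$t$ is degenerate as a defining term of $T[\sigma]$'', which holds because the restriction maps of $T[\sigma]$ are by definition those of $T$.
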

\begin{proof}
	The only thing we need to prove in order to show this is that $\DTy$ still has a morphism part, i.e. that discreteness of types is preserved under substitution.

	So pick a substitution $\sigma : \Delta \to \Gamma$ and a discrete type $\Gamma \sez T \dtype$. Take a defining term $(W, \ctxpath{\var i}) \Dsez t : T \sub{\sigma} \dsub \delta$ and assume that $\delta$ is degenerate along $\var i$. We have to prove that $t$ is, too. But if $\delta$ factors over $(\facewkn{\var i})$, then so does $\sigma \delta$, and therefore also $t : T \dsub{\sigma \delta}$ by discreteness of $T$. Since restriction for $T[\sigma]$ is inherited from $T$, the term $t$ is also degenerate as a defining term of $T[\sigma]$.
\end{proof}
\begin{example}
	Show that $\RGdisc$ is a well-defined category with families.
\end{example}

\subsection{Dependent sums}
\begin{lemma}
	The category with families $\bpdisc$ supports dependent sums.
\end{lemma}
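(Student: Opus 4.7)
The plan is to inherit the $\Sigma$-type structure from the ambient presheaf CwF $\widehat{\bpcubecat}$ and merely check that discreteness is preserved by the type former. Since all the relevant $\beta$/$\eta$ and substitution equations, as well as the well-typedness of $(a,b)$, $\fst$ and $\snd$, have already been established for the full presheaf model, the only new content is the closure of $\DTy$ under $\Sigma$.

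So assume $\Gamma \sez A \dtype$ and $\Gamma.A \sez B \dtype$, and take a defining term
\[
	(W, \ctxpath{\var i}) \Dsez (a, b) : (\Sigma A B) \dsub \gamma
\]
over a defining substitution $\gamma : \DSub{(W, \ctxpath{\var i})}{\Gamma}$ that is degenerate in $\var i$. By unfolding the presheaf definition of $\Sigma$, we have $W \Dsez a : A \dsub \gamma$ and $W \Dsez b : B \dsub{\gamma, a}$.

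First I would apply discreteness of $A$ to $\gamma$: since $\gamma$ factors through $(\facewkn{\var i})$, so does $a$, i.e.\ $a = a \psub{0/\var i, \facewkn{\var i}}$. Consequently the pair $(\gamma, a) : \DSub{(W, \ctxpath{\var i})}{\Gamma.A}$ is also degenerate in $\var i$, because both components factor through $(\facewkn{\var i})$. Next I would apply discreteness of $B$ over $\Gamma.A$ to $(\gamma, a)$: this gives $b = b \psub{0/\var i, \facewkn{\var i}}$. Combining, $(a,b) = (a, b)\psub{0/\var i, \facewkn{\var i}}$, so $(a,b)$ is degenerate in $\var i$, establishing discreteness of $\Sigma A B$.

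There is no real obstacle here; the argument is essentially a two-step unpacking of the definition of the $\Sigma$-type. The reason the corresponding statement for $\RGdisc$ would still go through (pairs are degenerate componentwise on edges, exactly as above) is that $\Sigma$ introduces no new coherence data, unlike $\Pi$ where one must quantify over arbitrary primitive substitutions into extended cubes. This also explains \cref{thm:lifted-preserves-types}'s observation that $\Sigma$ is among the well-behaved type formers: its defining terms are built purely from defining terms of the constituent types, so any pointwise property such as discreteness is transported automatically.
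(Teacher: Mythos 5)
Your proof is correct and matches the paper's argument essentially verbatim: apply discreteness of $A$ to get $a$ degenerate, conclude $(\gamma, a)$ is degenerate, apply discreteness of $B$ to get $b$ degenerate, and conclude $(a,b)$ is degenerate. One small slip: after unfolding $\Sigma$ you wrote $W \Dsez a : A\dsub\gamma$ and $W \Dsez b : B\dsub{\gamma,a}$, but these defining terms live over $(W, \ctxpath{\var i})$, not $W$; the rest of your argument treats them correctly, so this is merely a typo.
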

\begin{proof}
	Take a context $\Gamma$ and discrete types $\Gamma \sez A \dtype$ and $\Gamma.A \sez B \dtype$. It suffices to show that $\Sigma A B$ is discrete. Pick a defining term $(W, \ctxpath{\var i}) \Dsez (a, b) : \Sigma A B \dsub \gamma$ where $\gamma$ is degenerate along $\var i$. Then by discreteness of $A$, $a$ is degenerate along $\var i$ and so is $(\gamma, a) : \DSub{(W, \ctxpath{\var i})}{\Gamma.A}$. Then by discreteness of $B$, $b$ is also degenerate along $\var i$ and hence so is $(a, b)$.
\end{proof}

\subsection{Dependent products}
\begin{lemma}
	The category with families $\bpdisc$ supports dependent products.
\end{lemma}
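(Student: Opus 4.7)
The plan is to show that the $\Pi$-type $\Pi A B$ constructed in $\widehat{\bpcubecat}$ is discrete; the remaining rules and their naturality in substitution are inherited from $\widehat{\bpcubecat}$. So I pick a defining term $(W, \ctxpath{\var i}) \Dsez f : (\Pi A B) \dsub \gamma$ with $\gamma$ degenerate in $\var i$, write $\gamma = \gamma' \circ (\facewkn{\var i})$ for a unique $\gamma' : \DSub W \Gamma$, and aim to produce $W \Dsez f' : (\Pi A B) \dsub{\gamma'}$ with $f' \psub{(\facewkn{\var i})} = f$. By the pointwise characterization of $\Pi$-type terms from \cref{sec:psh-pi-types}, it suffices to specify, naturally in $\vfi : \PSub V W$ and $V \Dsez a : A \dsub{\gamma' \vfi}$, the application
\[
    f' \psub \vfi \cdot a \;:=\; f \psub{(\vfi, 0/\var i)} \cdot a,
\]
which is well-typed because $\gamma \circ (\vfi, 0/\var i) = \gamma' \circ \vfi$, and manifestly natural in $\vfi$ and $a$ since $(\vfi, 0/\var i)$ is. Checking $f' \psub{(\facewkn{\var i})} = f$ then reduces, via the same pointwise characterization, to the equation
\[
    f \psub{(\vfi_0, 0/\var i)} \cdot a \;=\; f \psub{(\vfi_0, d/\var i)} \cdot a
\]
for every $\vfi_0 : \PSub V W$, every admissible image $d$ for $\var i$ (namely $0$, $1$, or a bridge or path variable of $V$), and every $a \in A \dsub{\gamma' \vfi_0}$.

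The main obstacle, and precisely the point at which the analogous argument for $\widehat{\RGcat}$ will fail, is establishing this last equation. My plan is to introduce a fresh path dimension $\var k$ and form the ``universal'' extension
\[
    \tilde \vfi \;=\; (\vfi_0 \circ (\facewkn{\var k}), \var k / \var i) \,:\, \PSub{(V, \ctxpath{\var k})}{(W, \ctxpath{\var i})}.
\]
Then $\gamma \circ \tilde \vfi = \gamma' \circ \vfi_0 \circ (\facewkn{\var k})$ is degenerate in $\var k$, and so is its paired extension $(\gamma' \circ \vfi_0 \circ (\facewkn{\var k}),\, a \psub{(\facewkn{\var k})})$; consequently, discreteness of $B$ applied to the defining term $f \psub{\tilde \vfi} \cdot (a \psub{(\facewkn{\var k})}) \in B \dsub{\gamma' \vfi_0 \circ (\facewkn{\var k}),\, a \psub{(\facewkn{\var k})}}$ yields a unique $h \in B \dsub{\gamma' \vfi_0, a}$ with $f \psub{\tilde \vfi} \cdot (a \psub{(\facewkn{\var k})}) = h \psub{(\facewkn{\var k})}$. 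Restricting along the face map $\psi_d : \PSub V {(V, \ctxpath{\var k})}$ that substitutes $d$ for $\var k$ (a legal face map for every admissible $d$, since $\var k$ is a path variable and path variables may be substituted by either endpoint or by a bridge or path variable), naturality of $\loch \cdot \loch$ together with the computation $\tilde \vfi \circ \psi_d = (\vfi_0, d/\var i)$ gives $f \psub{(\vfi_0, d/\var i)} \cdot a = h$ for every admissible $d$; specialising to $d = 0$ and to the given $d$ yields the required equation.

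Two features of this plan are worth flagging. First, discreteness of $A$ is nowhere used: the argument only requires $B$ to be discrete, because $a$ is transported untouched through the ``application'' and only the value of the function $f$ varies along $\var i$. Second, the whole construction pivots on being able to form the primitive context $(V, \ctxpath{\var k})$ with $V$ possibly already multi-dimensional, a freedom that $\bpcubecat$ affords but $\RGcat$ does not, since the only primitive contexts of $\RGcat$ are $()$ and $(\ctxedge{\var i})$; this is the expected point of collapse for the reflexive-graph example.
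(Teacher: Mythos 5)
Your proof is correct and follows the same key idea as the paper: form a defining term of $B$ (namely $f \psub{\tilde\vfi} \cdot (a \psub{\facewkn{\var k}})$) over a defining substitution that is degenerate in an auxiliary path dimension $\var k$, apply discreteness of $B$, and recover the various vertices and diagonals by restriction. The only difference is organisational --- the paper makes an explicit case distinction on $\var i \psub \vfi \in \accol{0,1} \uplus V$, whereas your restriction along $\psi_d$ handles all cases uniformly --- and your observation that discreteness of $A$ is never used is exactly the content of the paper's stronger \cref{thm:disc-prod}.
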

In fact, the proof of this lemma proves something stronger:
\begin{lemma}\label{thm:disc-prod}
	Given a context $\Gamma$, an arbitrary type $\Gamma \sez A \type$ and a discrete type $\Gamma.A \sez B \dtype$, the type $\Pi A B$ is discrete.
\end{lemma}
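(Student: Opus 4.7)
The approach is to unfold the presheaf construction of $\Pi$-types, reduce discreteness of $\Pi A B$ to a pointwise equation about the body $b$, and resolve that equation via a fresh-dimension trick. Given a defining term $(W, \ctxpath{\var i}) \Dsez f : (\Pi A B) \dsub \gamma$ with $\gamma$ degenerate in $\var i$, the corollary characterizing degeneracy reduces the goal to proving $f \psub{(0/\var i, \facewkn{\var i})} = f$. Writing $f = \dlambda b$ as in \cref{sec:psh-pi-types}, the term $f$ is determined by its applications $f \psub \vfi \cdot a = b \dsub{\vfi, a}$, so it suffices to show
\begin{equation*}
	b \dsub{\vfi, a} = b \dsub{\vfi', a}
\end{equation*}
for every $\vfi : V \to (W, \ctxpath{\var i})$ and $a : A \dsub{\gamma \vfi}$, where $\vfi' := (0/\var i, \facewkn{\var i}) \circ \vfi$ is $\vfi$ with $\var i$ replaced by $0$. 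Both sides are well-typed because $\gamma \vfi = \gamma \vfi'$ by degeneracy of $\gamma$.

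The main obstacle is that discreteness of $B$ requires a defining substitution of $\Gamma.A$ whose \emph{both} components are degenerate in a given path dimension, and the naive candidate $(\gamma \vfi, a)$ need not satisfy this: we have no control over how $a$ behaves in $V$'s existing path variables, since $A$ is arbitrary. The remedy is to engineer degeneracy along a fresh path dimension. Concretely, pick $\var k := \fresh(V)$ and let $V^+ := (V, \ctxpath{\var k})$. Define a face map $\mu : V^+ \to (W, \ctxpath{\var i})$ agreeing with $\vfi$ on $W$ and sending $\var i$ to $\var k$; this is valid because $\var i$ is a path variable and $\var k$ is also a path variable in $V^+$. A quick check gives $\mu \circ (v/\var k) = \vfi$ and $\mu \circ (0/\var k) = \vfi'$, where $v := \var i \psub \vfi$ and $(v/\var k), (0/\var k) : V \to V^+$.

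Unfolding degeneracy of $\gamma$ yields $\gamma \mu = (\gamma \vfi) \psub{(\facewkn{\var k})}$, which is degenerate in $\var k$. Set $a^+ := a \psub{(\facewkn{\var k})} : A \dsub{\gamma \mu}$, which is degenerate in $\var k$ by construction. Consequently the defining substitution $(\gamma \mu, a^+) : V^+ \to \Gamma.A$ is degenerate in $\var k$, and discreteness of $B$ forces $b \dsub{\mu, a^+}$ to be degenerate in $\var k$ as well. Its restrictions along $(v/\var k)$ and $(0/\var k)$ therefore coincide. By naturality of $b$, combined with the identities $\mu \circ (v/\var k) = \vfi$, $\mu \circ (0/\var k) = \vfi'$, and $a^+ \psub{(v/\var k)} = a^+ \psub{(0/\var k)} = a$, those two restrictions evaluate to $b \dsub{\vfi, a}$ and $b \dsub{\vfi', a}$ respectively, delivering the required equation.
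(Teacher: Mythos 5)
Your proof is correct and uses the same key idea as the paper: introduce a fresh path dimension $\var k$, degenerate $a$ along it to get $a^+$, so that $(\gamma\mu, a^+)$ is degenerate in $\var k$, apply discreteness of $B$ to the application $b\dsub{\mu, a^+}$, and then restrict along $(v/\var k)$ and $(0/\var k)$ to obtain the two sides of the required equation. The paper organizes this as a three-way case split on $\var i \psub\vfi \in \accol{0,1} \uplus V$ (with the "WLOG $\var i \notin V$" device standing in for your explicit $\fresh(V)$), whereas your construction of $\mu$ and the pair of sections $(v/\var k), (0/\var k)$ handles all cases uniformly; this is a tidy streamlining, but it is the same argument.
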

\begin{example}
	In order to clarify the idea behind the proof for $\bpdisc$, we will first (vainly) try to prove the same lemma for $\RGdisc$.
	
	Pick a context $\Gamma$, a type $\Gamma \sez A \type$ and a discrete type $\Gamma.A \sez B \dtype$. It suffices to show that $\Pi A B$ is discrete. Pick an edge $(\ctxedge{\var i}) \Dsez h : (\Pi A B) \dsub{\gamma (\facewkn{\var i})}$ over the constant edge at point $\gamma : \DSub{()}{\Gamma}$. Write
	\begin{equation}
		() \Dsez f := h \psub{0/\var i}, g := h \psub{1 / \var i} : (\Pi A B) \dsub \gamma.
	\end{equation}
	In order to have a visual representation, assume that $A$ looks like the upper diagram here; then the image of $h$ is the lower one (degenerate edges are hidden in both diagrams). Every cell projects to the cell from $\Gamma$ shown on its left, or a constant edge of it:
	\begin{equation}
		\xymatrix{
			a \ar@{-}[rr]^{a_1}
			&& a' \ar@{-}[rr]^{a_2}
			&& a''
			\\
			f \cdot a
				\ar@{-}[rr]^{f \psub{\facewkn{\var i}} \cdot a_1} 
				\ar@{-}[dd]^(.7){h \cdot (a \psub{\facewkn{\var i}})}
			  	\ar@{-}[rrdd]^{h \cdot a_1}
			&& f \cdot a'
				\ar@{-}[rr]^{f \psub{\facewkn{\var i}} \cdot a_2}
				\ar@{-}[dd]^(.7){h \cdot (a' \psub{\facewkn{\var i}})}
			  	\ar@{-}[rrdd]^{h \cdot a_2}
			&& f \cdot a'' 
				\ar@{-}[dd]^(.7){h \cdot (a'' \psub{\facewkn{\var i}})}
			\\ \\
			g \cdot a
				\ar@{-}[rr]_{g \psub{\facewkn{\var i}} \cdot a_1}
			&& g \cdot a'
				\ar@{-}[rr]_{g \psub{\facewkn{\var i}} \cdot a_2}
			&& g \cdot a''
		}
	\end{equation}
	We try to show that $h$ is degenerate by showing that $f \psub{\facewkn{\var i}} = h$. Recall that a defining $W \Dsez k : (Pi A B)\dsub \gamma$ is fully determined if we know all $k \psub \vfi \cdot a$ for all $\vfi : \PSub V W$ and all $V \Dsez a : A \dsub{\gamma \vfi}$. There are five candidates for $\vfi$:
	\begin{itemize}
		\item[$(0/\var i)$] We have $f \psub{\facewkn{\var i}} \psub{0/\var i} = f = h \psub{0/\var i}$.
		\item[$(0/\var i, \facewkn{\var i})$] This follows by further restriction by $(\facewkn{\var i})$.
		\item[$(1/\var i)$] Take a node $() \Dsez a : A \dsub \gamma$. We need to show that $f \psub{\facewkn{\var i}} \psub{1/\var i} \cdot a = h \psub{1/\var i} \cdot a$, i.e.\ $f \cdot a = g \cdot a$. To this end, consider $h \cdot (a \psub{\facewkn{\var i}})$. The fact that $\loch \cdot \loch$ commutes with restriction, guarantees that this is an edge from $f \cdot a$ to $g \cdot a$. However, it has type $(\ctxedge{\var i}) \Dsez h \cdot (a \psub{\facewkn{\var i}}) : B [(\gamma (\facewkn{\var i})) \subext] \dsub{\id, a \psub{\facewkn{\var i}}} = B \dsub{(\gamma, a)(\facewkn{\var i})}$. So it clearly lives over a degenerate edge and hence it is degenerate, implying that $f \cdot a = g \cdot a$.
		\item[$(1/\var i, \facewkn{\var i})$] Take an edge $(\ctxedge{\var i}) \Dsez a : A \dsub{\gamma (\facewkn{\var i})}$. We need to show that $f \psub{\facewkn{\var i}} \psub{1/\var i, \facewkn{\var i}} \cdot a = h \psub{1/\var i, \facewkn{\var i}} \cdot a$, i.e. $f \psub{\facewkn{\var i}} \cdot a = g \psub{\facewkn{\var i}} \cdot a$. This is an equality of edges. If we could introduce an additional variable $\var j$, we could apply the same technique as for the source extractor $(0 / \var i)$, considering a square from $f \psub{\facewkn{\var i}} \cdot a$ to $g \psub{\facewkn{\var i}} \cdot a$ that would have to be degenerate in one dimension. However, the squares would cause functions to have more components, which we would have to prove equal, requiring a third variable. This is why we chose to start from a model $\bpcubecat$ in which we can have arbitrarily many dimension variables. However, in $\RGdisc$, we cannot proceed.
		\item[$\id$] Take an edge $(\ctxedge{\var i}) \Dsez a : A \dsub{\gamma (\facewkn{\var i})}$. We need to show that $f \psub{\facewkn{\var i}} \cdot a = h \cdot a$. Now $h \cdot a$ is going to be the diagonal of the square we fancied in the previous clause. If we know that this degenerate, then the diagonal is equal to the sides. However, in $\RGdisc$, we cannot proceed.
	\end{itemize}	 
\end{example}
\begin{proof}
	Pick a context $\Gamma$, a type $\Gamma \sez A \type$ and a discrete type $\Gamma.A \sez B \dtype$. It suffices to show that $\Pi A B$ is discrete. Pick $(W, \ctxpath{\var i}) \Dsez h : (\Pi A B) \dsub{\gamma(\facewkn{\var i})}$. We name the $\var i$-source $f := h \psub{0/\var i}$ and the $\var i$-target $g := h \psub{1 / \var i}$. We have $W \Dsez f, g : (\Pi A B) \dsub \gamma$.
	
	We show that $h$ is degenerate along $\var i$ by showing that $f \psub{\facewkn{\var i}} = h$. So pick some $\vfi : \PSub{V}{(W, \ctxpath{\var i})}$. We make a case distinction by inspecting $\var i \psub{\vfi} \in \accol{0, 1} \uplus V$:
	\begin{description}
		\item[$\var i \psub \vfi = 0$] Then $\vfi = (0/\var i)\psi$ for some $\psi : \PSub V W$. Then we have $f \psub{\facewkn{\var i}} \psub{\vfi} = f \psub \psi$ and $h \psub \vfi = f \psub \psi$.
		
		\item[$\var i \psub \vfi = 1$] Then $\vfi = (1/\var i)\psi$ for some $\psi : \PSub V W$. Then we have $f \psub{\facewkn{\var i}} \psub{\vfi} = f \psub \psi$ and $h \psub \vfi = g \psub \psi$ and we need to show that $V \Dsez f \psub \psi \cdot a = g \psub \psi \cdot a : B \dsub{\psi, a}$ for every $V \Dsez a : A \dsub \psi$.
		
		Without loss of generality, we may assume that $\var i \not\in V$.
		Then we have a path $(V, \ctxpath{\var i}) \Dsez h \psub{\psi, \var i/\var i} \cdot (a \psub{\facewkn{\var i}}) : B \dsub{(\psi, a)(\facewkn{\var i})}$, with source $f \psub \psi$ and target $g \psub \psi$. Moreover, by discreteness of $B$, it is degenerate along $\var i$, implying that source and target are equal.
		
		\item[$\var i \psub \vfi \in V$] Without loss of generality, we may assume that $(W, \ctxpath{\var i})$ and $V$ are disjoint. Write $\var k = \var i \psub \vfi$ (and note that $\var k$ may be either a bridge or a path variable). Then $\vfi$ factors as $(\psi, \var i / \var i)(\var k / \var i) = (\var k / \var i)(\psi, \var k / \var k)$ for some $\psi : \PSub V W$. We have $f \psub{\facewkn{\var i}} \psub{\vfi} = f \psub \psi$ and $h \psub \vfi = h \psub{\var k / \var i} \psub \psi$. We have to show that $V \Dsez f \psub \psi \cdot a = h \psub{\psi, \var i / \var i} \psub{\var k / \var i} \cdot a : B \dsub{\psi, a}$ for all $V \Dsez a : A \dsub \psi$.
		
		Again, we have a path $(V, \ctxpath{\var i}) \Dsez h \psub{\psi, \var i/\var i} \cdot (a \psub{\facewkn{\var i}}) : B \dsub{(\psi, a)(\facewkn{\var i})}$ with $\var i$-source $f \psub \psi \cdot a$ and $(\var k/\var i)$-diagonal $h \psub{\psi, \var i / \var i} \psub{\var k / \var i} \cdot a$. This path is again degenerate in $\var i$, showing that the source and the diagonal are equal. \qedhere
	\end{description}
\end{proof}

\subsection{Identity types and propositions}
\begin{lemma}
	The category with families $\bpdisc$ supports identity types.
\end{lemma}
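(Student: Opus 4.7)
The plan is to inherit identity types from the standard presheaf model already constructed in \cref{sec:cwf-idtp} and to verify only the extra side condition that the resulting type is discrete. All the remaining data --- formation, $\refl$, $\J$, and the computation rule --- is already produced by that earlier proof and is natural in $\Gamma$, so no rewriting of those pieces is needed. The whole argument therefore reduces to a single lemma: whenever $\Gamma \sez A \dtype$ and $\Gamma \sez a, b : A$, the type $a \idtp A b$ is discrete.

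To prove this lemma I would exploit the extremely rigid shape of the identity type in the standard presheaf model: $(a \idtp A b) \dsub \gamma$ is $\accol{\star}$ if $a \dsub \gamma = b \dsub \gamma$ and empty otherwise, so every defining term is $\star$ and every restriction map acts as the identity on $\star$. Given a defining substitution $\gamma : \DSub{(W, \ctxpath{\var i})}{\Gamma}$ that is degenerate in $\var i$, write $\gamma = \gamma' \circ (\facewkn{\var i})$, and let $\star : (a \idtp A b) \dsub \gamma$ be a defining term. The candidate witness of degeneracy is $\star : (a \idtp A b) \dsub{\gamma'}$; once this witness is legal, the identity $\star \psub{\facewkn{\var i}} = \star$ immediately provides the required factorisation through $(\facewkn{\var i})$.

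The only genuine content is therefore showing that $a \dsub{\gamma'} = b \dsub{\gamma'}$. By naturality of the terms $a$ and $b$, the hypothesis $a \dsub \gamma = b \dsub \gamma$ reads $a \dsub{\gamma'} \psub{\facewkn{\var i}} = b \dsub{\gamma'} \psub{\facewkn{\var i}}$. Restricting both sides by $(0/\var i)$ and using the identity $\facewkn{\var i} \circ (0/\var i) = \id_W$ in $\bpcubecat$ collapses this to $a \dsub{\gamma'} = b \dsub{\gamma'}$. In other words: the crucial fact is that $(\facewkn{\var i})$ is a split mono in the cube category, so the propositional identity type cannot distinguish a cube from its $\var i$-degeneration. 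Note that this argument in fact does not need $A$ to be discrete, and so $a \idtp A b$ is always discrete in $\widehat{\bpcubecat}$.

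For the elimination rule I would simply reuse the standard presheaf definition $\J(a, b, \var y.\var w.C, e, c) \dsub \gamma := c \dsub \gamma$, which is well-typed since $e$ witnesses $a \dsub \gamma = b \dsub \gamma$. The motive $C$ is assumed discrete in the context $\Gamma, \var y : A, \var w : a[\wknvar y] \idtp{A[\wknvar y]} \var y$, which lies inside $\bpdisc$ because $A$ is discrete by hypothesis and $a[\wknvar y] \idtp{A[\wknvar y]} \var y$ is discrete by the lemma above; the resulting term inhabits the discrete type $C[\id, b/\var y, e/\var w]$, so no new discreteness obligation arises, and the $\J$-$\beta$ rule holds on the nose. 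I do not anticipate any real obstacle: the heart of the proof is the split-mono observation about $(\facewkn{\var i})$, and once that is isolated the rest is bookkeeping inherited from \cref{sec:cwf-idtp}.
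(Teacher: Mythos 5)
Your proof is correct and takes essentially the same approach as the paper. The paper factors the argument through the general observation, stated just below the lemma, that \emph{propositions are discrete}, and then applies it to the identity type since $(a \idtp A b) \dsub \gamma \subseteq \accol \star$; you instead verify discreteness directly for the identity type, but the core computation --- restrict along the section $(0/\var i)$ of $(\facewkn{\var i})$ to transport the equality $a\dsub\gamma = b\dsub\gamma$ back to $a\dsub{\gamma'} = b\dsub{\gamma'}$ --- is the same. One small terminological slip worth fixing: $(\facewkn{\var i})$ is a split \emph{epi} in $\bpcubecat$ (with section $(0/\var i)$), not a split mono; it is the induced restriction map on defining terms that is split mono. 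This does not affect the argument.
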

We even have a stronger result:
\begin{lemma}
	Propositions are discrete. \qed
\end{lemma}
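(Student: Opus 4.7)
The plan is to unfold both definitions and observe that the conclusion is immediate from the subsingleton property of propositions. So I would start from an arbitrary defining term $(W, \ctxpath{\var i}) \Dsez t : T \dsub \gamma$ with $\gamma$ degenerate in $\var i$, and aim to show $t = t \psub{0/\var i, \facewkn{\var i}}$ (using the equivalent characterization of degeneracy noted in the corollary).

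Since $T$ is a proposition, the set $T \dsub \gamma$ is contained in $\accol{\star}$, hence is a subsingleton. Now I would note that $t \psub{0/\var i, \facewkn{\var i}}$ is well-typed at $T \dsub{\gamma \circ (0/\var i, \facewkn{\var i})}$, and because $\gamma$ is assumed degenerate in $\var i$, this is exactly $T \dsub \gamma$. Both $t$ and $t \psub{0/\var i, \facewkn{\var i}}$ therefore live in the same subsingleton and must be equal. This yields degeneracy of $t$ in $\var i$, as required.

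There is essentially no obstacle here: once the types line up, the conclusion follows from the fact that any two elements of $T \dsub \gamma$ coincide. The only subtlety worth pointing out explicitly in the write-up is the step that identifies $T \dsub{\gamma \circ (0/\var i, \facewkn{\var i})}$ with $T \dsub \gamma$ by invoking degeneracy of $\gamma$, which is exactly the hypothesis one is allowed to use for proving discreteness.
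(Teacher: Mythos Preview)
Your argument is correct and is exactly the trivial unfolding the paper has in mind; indeed the paper gives no proof at all (just \qed), since once you observe that both $t$ and $t\psub{0/\var i,\facewkn{\var i}}$ lie in the subsingleton $T\dsub\gamma$, equality is immediate.
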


\subsection{Glueing}
\begin{lemma}
	The category with families $\bpdisc$ supports glueing.
\end{lemma}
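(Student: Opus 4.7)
The plan is to show that the Glue type $G = \Gluesys{A}{\Gluesysclauseb{P}{T}{f}}$ is discrete whenever $A$ and $T$ are, using that $P$ is automatically discrete as a proposition. I will pick a defining substitution $\gamma : \DSub{(W, \ctxpath{\var i})}{\Gamma}$ degenerate in the path variable $\var i$ together with $g : G \dsub \gamma$, and establish $g = g \psub \vfi$ for $\vfi = (0/\var i, \facewkn{\var i})$. Since $P$ is a proposition, the construction of $G$ splits on whether $P \dsub \gamma$ equals $\accol \star$ or $\eset$; because $\gamma \vfi = \gamma$, both $\gamma$ and $\gamma \vfi$ fall into the same branch.

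The branch $P \dsub \gamma = \accol \star$ is immediate: $g$ is then a defining term of $T$ at $(\gamma, \star)$, which is degenerate in $\var i$ because $\gamma$ is, so discreteness of $T$ closes the case. In the branch $P \dsub \gamma = \eset$, write $g = (a \mapsfrom t)$; the same-case restriction clause gives $g \psub \vfi = (a \psub \vfi \mapsfrom t[\vfi \subext])$. Discreteness of $A$ yields $a = a \psub \vfi$, so the whole problem reduces to proving $t = t[\vfi \subext]$; pointwise, for every $(\psi, \star) : \DSub{V}{\yoneda(W, \ctxpath{\var i}).P[\gamma]}$, this amounts to the equation $t \dsub{\psi, \star} = t \dsub{\vfi \psi, \star}$, with both sides living in $T \dsub{\gamma \psi, \star}$.

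I would close this by a case split on $\var i \psub \psi \in \accol{0, 1} \uplus V$. If $\var i \psub \psi = 0$, then $\psi = \vfi \psi$ and there is nothing to prove. Otherwise, introduce a fresh path variable $\var j'$ and define $\psi' : \PSub{(V, \ctxpath{\var j'})}{(W, \ctxpath{\var i})}$ to agree with $\psi$ on $W$ and to send $\var i$ to $\var j'$. Since $\gamma$ is degenerate in $\var i$, the base $\gamma \psi'$ is degenerate in $\var j'$, so discreteness of $T$ makes the defining term $t \dsub{\psi', \star}$ degenerate in $\var j'$ as well. Both $\psi$ and $\vfi \psi$ factor through $\psi'$ along face maps whose composite with $(\facewkn{\var j'})$ is $\id_V$ (namely $(1/\var j')$ or $(\var j/\var j')$ for $\psi$, and $(0/\var j')$ for $\vfi \psi$), so both $t \dsub{\psi, \star}$ and $t \dsub{\vfi \psi, \star}$ coincide with the witness of that degeneracy.

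The main obstacle is precisely this last comparison. Because $\gamma \vfi = \gamma$, the two terms $t \dsub{\psi, \star}$ and $t \dsub{\vfi \psi, \star}$ live over the same base substitution, so discreteness of $T$ does not apply directly to force their equality. The device, in the same spirit as the proof of \cref{thm:disc-prod}, is to fabricate an auxiliary path dimension $\var j'$ in which both face maps arise by specialization, and to let discreteness of $T$ operate along $\var j'$; as in that earlier proof, the availability of arbitrarily many fresh dimensions in $\widehat{\bpcubecat}$ is what makes the argument go through.
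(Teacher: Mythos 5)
Your proof is correct and rests on the same core idea as the paper's: the second component $t$ of a pair $(a \mapsfrom t)$ is essentially a term of the function type $\Pi\,P\,T$, and the degeneracy of such a term over a degenerate base follows from discreteness of the codomain via a fresh-dimension argument.

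The difference is organizational. The paper closes the $P\dsub\gamma = \eset$ case in one line by observing that $\dlambda\,t$ is a defining term of $(\Pi\,P\,T)\dsub\gamma$, which is degenerate by \cref{thm:disc-prod}, so $t[(0/\var i, \facewkn{\var i})\subext] = t$ follows directly. You instead re-derive the content of \cref{thm:disc-prod} inline: unfold $t = t[\vfi\subext]$ pointwise, split on $\var i\psub\psi$, and in the non-trivial cases introduce a fresh path dimension $\var j'$ through which both $\psi$ and $\vfi\psi$ factor by maps $(k/\var j')$ and $(0/\var j')$, then let degeneracy of $t\dsub{\psi', \star}$ along $\var j'$ (from discreteness of $T$) identify the two specializations. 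You even flag the resemblance yourself. The one step you pass over silently — that $P\dsub{\gamma\psi'} = \accol\star$, needed for $(\psi', \star)$ to be a valid defining substitution of $\yoneda(W, \ctxpath{\var i}).P[\gamma]$ — does hold: since $\gamma\psi' = (\gamma\psi)\circ(\facewkn{\var j'})$ is a restriction of $\gamma\psi$ and $\star \in P\dsub{\gamma\psi}$, restriction gives $\star\psub{\facewkn{\var j'}} \in P\dsub{\gamma\psi'}$. With that filled in, the argument is complete. The paper's version is shorter and more modular, reusing the earlier lemma; yours is self-contained at the cost of repeating its computation.
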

\begin{proof}
	Suppose we have $\Gamma \sez A \dtype$, $\Gamma \sez P \prop$, $\Gamma.P \sez T \dtype$ and $\Gamma.P \sez f : T \to A[\pi]$. It suffices to show that $G = \Gluesys{A}{\Gluesysclauseb{P}{T}{f}}$ is discrete. So pick $(W, \ctxpath{\var i}) \Dsez b : G \dsub{\gamma}$ where $\gamma$ is degenerate along $\var i$.
	
	If $P \dsub \gamma = \accol \star$, then $b \psub{0/\var i, \facewkn{\var i}}^G = b \psub{0/\var i, \facewkn{\var i}}^{T[\id, \star]} = b$ by discreteness of $T$.
	
	If $P \dsub \gamma = \eset$, then $b$ is of the form $(a \mapsfrom t)$. Then $(a \mapsfrom t) \psub{0/\var i, \facewkn{\var i}} = (a \psub{0/\var i, \facewkn{\var i}} \mapsfrom t[(0/\var i, \facewkn{\var i}) \subext]) = (a \mapsfrom t[(0/\var i, \facewkn{\var i}) \subext])$ by discreteness of $A$. Finally, discreteness of $\Pi P T$ shows that
	\begin{equation}
		t[(0/\var i, \facewkn{\var i}) \subext]
		= \dap (\dlambda (t[(0/\var i, \facewkn{\var i}) \subext]))
		= \dap((\dlambda t) \psub{0/\var i, \facewkn{\var i}})
		= \dap (\dlambda t) = t. \qedhere
	\end{equation}
\end{proof}

\subsection{Welding}
\begin{lemma}
	The category with families $\bpdisc$ supports welding.
\end{lemma}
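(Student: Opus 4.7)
The plan is to mimic the structure of the proofs for $\Sigma$-, $\Pi$-, identity-, and $\Glue$-types given just above: assume the input data $\Gamma \sez A \dtype$, $\Gamma, \var p : P \sez T \dtype$, and $\Gamma, \var p : P \sez f : A[\pi] \to T$, and then show that the type $\Omega = \Weldsys{A}{\Weldsysclauseb{\var p : P}{T}{f}}$ is itself discrete. As usual, discreteness follows once I verify that an arbitrary defining term $(W, \ctxpath{\var i}) \Dsez w : \Omega \dsub{\gamma}$ with $\gamma$ degenerate in the path variable $\var i$ is itself degenerate in $\var i$, i.e. satisfies $w \psub{0/\var i, \facewkn{\var i}} = w$.

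The first step is a case distinction on $P \dsub \gamma$, matching the case split by which $\Omega \dsub \gamma$ is defined. If $P \dsub \gamma = \accol \star$, then $\Omega \dsub \gamma = T \dsub{\gamma, \star/\var p}$; the extended context $(\gamma, \star/\var p) : \DSub{(W, \ctxpath{\var i})}{(\Gamma, \var p : P)}$ is still degenerate in $\var i$ (since $P$ is a proposition, its unique term automatically degenerates), so discreteness of $T$ immediately gives $w \psub{0/\var i, \facewkn{\var i}} = w$.

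If $P \dsub \gamma = \eset$, then $w$ has the form $\dweld\,a$ with $W, \ctxpath{\var i} \Dsez a : A \dsub \gamma$, and I want to conclude via discreteness of $A$. Before invoking the same-case clause of the restriction definition, I first observe that the cross-case scenario is impossible: if $\gamma$ is degenerate in $\var i$ and $P \dsub \gamma = \eset$, then $\gamma$ factors through $\gamma \psub{0/\var i}$ via $(\facewkn{\var i})$, so the restriction map $P \dsub{\gamma \psub{0/\var i}} \to P \dsub \gamma$ forces $P \dsub{\gamma \psub{0/\var i}} = \eset$ as well. Hence $(\dweld\,a) \psub{0/\var i, \facewkn{\var i}} = \dweld(a \psub{0/\var i, \facewkn{\var i}})$ by the same-case clause, and discreteness of $A$ collapses this to $\dweld\,a = w$.

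The only subtlety is making sure that the cross-case clause (the one involving $f \dsub{\gamma \vfi, \star / \var p} \cdot a \psub \vfi$) is never triggered; this is exactly what the observation in the previous paragraph takes care of, and I do not expect any further obstacle. In particular, no analogue of the fiddly $\eta$-step of the $\Glue$-proof arises here, since the welding constructor has no term projecting back into the $P$-face. The proof is therefore essentially an easy mirror of the welding-type clauses in the original construction, with discreteness hypotheses absorbing each dimension check.
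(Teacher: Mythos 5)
Your proof is correct and takes essentially the same approach as the paper's: a case distinction on $P \dsub \gamma$, reducing to discreteness of $T$ in the first case and of $A$ in the second. The one detail you spell out that the paper glosses over — that the cross-case restriction clause never fires — can be seen even more directly by noting that $\gamma (0/\var i, \facewkn{\var i}) = \gamma$ once $\gamma$ is degenerate in $\var i$, so the restriction is from $\gamma$ to itself and trivially stays within one case; your detour via the intermediate vertex $\gamma \psub{0/\var i}$ is equally valid.
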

\begin{proof}
	Suppose we have $\Gamma \sez A \dtype$, $\Gamma \sez P \prop$, $\Gamma.P \sez T \dtype$ and $\Gamma.P \sez f : A[\pi] \to T$. It suffices to show that $\Omega = \Weldsys{A}{\Weldsysclauseb P T f}$ is discrete. So pick $(W, \ctxpath{\var i}) \Dsez w : \Omega \dsub \gamma$ where $\gamma$ is degenerate along $\var i$.
	
	If $P \dsub \gamma = \accol \star$, then $w \psub{0/\var i, \facewkn{\var i}}^\Omega = w \psub{0/\var i, \facewkn{\var i}}^{T[\id, \star]} = w$ by discreteness of $T$.
	
	If $P \dsub \gamma = \eset$, then $w \psub{0/\var i, \facewkn{\var i}}^\Omega = w \psub{0/\var i, \facewkn{\var i}}^{A} = w$ by discreteness of $A$.
\end{proof}

\section{Discreteness and cohesion}
In this section, we consider the interaction between discreteness and cohesion. In the first subsection, we characterize discrete contexts as those that are in the image of the discrete functor $\cohdisc$, or equivalently in the image of $\flat$. In the second one, we show that $\coshp$ preserves discreteness. In the rest of the section we are concerned with making things discrete by quotienting out paths. For a context $\Gamma$, we will define a discrete context $\quotshp \Gamma$ and a substitution $\inquotshp : \Gamma \to \quotshp \Gamma$ into it. This will enable us to finally construct $\cohpi$. For a type $\Gamma \sez T \type$, we will define a discrete type $\Gamma \sez \quotshp T \dtype$ and a mapping $\hatinquotshp : \Tm(\Gamma, T) \to \Tm(\Gamma, \quotshp T)$. This is a prerequisite for defining existential types.
\begin{remark}
	Note that in models of HoTT, this process of forcing a type to be fibrant is ill-behaved in the sense that it does not commute with substitution: we will typically not have $\quotshp (T[\sigma]) = (\quotshp T)[\sigma]$. We will show that our shape operator does commute with substitution. This is likely related to the fact that all our horn inclusions are epimorphisms (levelwise surjective presheaf maps), so that forcing something to be discrete is an operation that transforms presheaves locally, not globally. Put differently: if a type is fibrant in HoTT, we obtain transport functions which allow us to move things around and derive a contradiction (see \cite{nlab:fibrant-replacement} for details). Discrete types however do not provide any transport or composition operations.
\end{remark}
\subsection{Discrete contexts and the discrete functor}
\begin{proposition}\label{thm:discrete-contexts-and-cohesion}
	For a context $\Gamma \in \Psh(\cat P)$, the following are equivalent:
	\begin{enumerate}
		\item $\Gamma$ is discrete,
		\item $\Gamma$ is isomorphic to $\cohdisc \Theta$ for some cubical set $\Theta \in \Psh(\cat Q)$,
		\item The substitution $\kappa : \flat \Gamma \to \Gamma$ is an isomorphism.
	\end{enumerate}
\end{proposition}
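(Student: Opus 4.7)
The plan is to close the cycle $(1) \Rightarrow (3) \Rightarrow (2) \Rightarrow (1)$.

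$(3) \Rightarrow (2)$ is essentially automatic: since $\flat = \cohdisc \cohfget$, an isomorphism $\kappa_\Gamma : \flat \Gamma \to \Gamma$ already exhibits $\Gamma \cong \cohdisc(\cohfget \Gamma)$, so it suffices to take $\Theta := \cohfget \Gamma$.

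For $(2) \Rightarrow (1)$, I would unpack $\cohdisc = \fpsh{\cohpi}$ using \cref{thm:fpsh-strict-cwf-morphism}. Any cube over $(W, \ctxpath{\var i})$ in $\cohdisc \Theta$ is by definition a cube over $\cohpi(W, \ctxpath{\var i}) = \cohpi W$ in $\Theta$, and under this bijection the restriction action of the weakening face $(\facewkn{\var i^\IP}) : (W, \ctxpath{\var i}) \to W$ coincides with the restriction action in $\Theta$ of $\cohpi(\facewkn{\var i^\IP}) = \id_{\cohpi W}$ (see \cref{fig:cohesion-base}). So every cube over $(W, \ctxpath{\var i})$ agrees on the nose with the weakening of its own restriction to $W$, i.e., is degenerate in $\var i$. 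Transporting across $\Gamma \cong \cohdisc \Theta$ yields discreteness of $\Gamma$.

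The real work is in $(1) \Rightarrow (3)$. The key observation is that $\varsigma_W : W \to \shp W$ is a split epimorphism in $\bpcubecat$: define $\rho_W : \shp W \to W$ to be the identity on bridge variables and to send each path variable of $W$ to the constant $0$; then one checks $\varsigma_W \circ \rho_W = \id_{\shp W}$. Consequently the component $\kappa_{\Gamma, W} : \Gamma(\shp W) \to \Gamma(W)$, $\gamma' \mapsto \gamma' \circ \varsigma_W$, always admits the left inverse $\gamma \mapsto \gamma \circ \rho_W$, so is injective. For surjectivity, observe that the opposite composite $\rho_W \circ \varsigma_W : W \to W$ is, as a substitution, the simultaneous ``set every path variable to $0$'' endomorphism, which factors as a commuting product of the endomorphisms $(0/\var j^\IP, \facewkn{\var j^\IP})$ appearing in the characterization of degeneracy. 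Iterated use of that characterization shows that $\gamma \in \Gamma(W)$ is fixed by $\rho_W \circ \varsigma_W$ precisely when it is degenerate in every path variable of $W$, hence by discreteness $\gamma = (\gamma \circ \rho_W) \circ \varsigma_W$, which is the required preimage. A pointwise bijection of a natural transformation yields the claimed isomorphism.

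The only real difficulty I anticipate is clerical: carefully tracking the conventions identifying a face map $\PSub V W$ with a category morphism $V \to W$ and with the opposite substitution, so that the two composites $\varsigma_W \circ \rho_W$ and $\rho_W \circ \varsigma_W$ are evaluated in the correct direction.
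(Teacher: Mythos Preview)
Your proposal is correct and follows the same cycle $(1)\Rightarrow(3)\Rightarrow(2)\Rightarrow(1)$ as the paper, with the same arguments for $(3)\Rightarrow(2)$ and $(2)\Rightarrow(1)$. For $(1)\Rightarrow(3)$ you are actually a bit more thorough: the paper only exhibits a preimage of an arbitrary $\gamma$ under $\kappa$ (surjectivity), tacitly relying on the separate \cref{thm:kappa-injective} for injectivity, whereas you package both directions via the explicit section $\rho_W$ of $\varsigma_W$ and the idempotent $\rho_W\circ\varsigma_W$. Your clerical worry is well placed but your computations of the two composites are in the right direction.
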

\begin{proof}
	\begin{description}
		\item[$1 \Rightarrow 3$.] Assume that $\Gamma$ is discrete. We show that $\kappa : \flat \Gamma \to \Gamma$ is an isomorphism. Pick a defining substitution $\gamma : \DSub W \Gamma$. Because $\Gamma$ is discrete, $\gamma$ is degenerate in every path dimension, i.e. it factors over $\varsigma : W \to \shp W$, say $\gamma = \gamma' \varsigma$. Then we have $\fpshadj{\shp}(\gamma') : \DSub{W}{\flat \Gamma}$ and moreover $\kappa \circ \fpshadj{\shp}(\gamma') = \fpsh \varsigma \circ \fpshadj{\shp}(\gamma') = \gamma' \circ \varsigma = \gamma$.
		
		\item[$3 \Rightarrow 2$.] Note that $\flat \Gamma = \cohdisc \cohfget \Gamma$.
		
		\item[$2 \Rightarrow 1$.] It suffices to prove that $\cohdisc \Theta$ is discrete. Pick some $\fpshadj \cohpi(\theta) : \DSub{(W, \ctxpath{\var i})}{\cohdisc \Theta}$. Then we have $\theta : \DSub{\cohpi (W, \ctxpath{\var i}) = \cohpi W}{\Theta}$ and hence $\fpshadj \cohpi(\theta) : \DSub W {\cohdisc \theta}$. Moreover, $\fpshadj \cohpi(\theta) \circ (\facewkn{\var i}) = \fpshadj \cohpi(\theta \circ \cohpi(\facewkn{\var i})) = \fpshadj \cohpi(\theta)$, showing that the picked defining substitution is degenerate along $\var i$. \qedhere
	\end{description}
\end{proof}

\subsection{The $\coshp$ functor preserves discreteness}
\begin{lemma}\label{thm:coshp-preserves-discreteness}
	For any discrete type $\Gamma \sez T \dtype$, the type $\coshp \Gamma \sez \coshp T \dtype$ is also discrete.
\end{lemma}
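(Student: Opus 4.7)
The plan is to exploit the fact that $\coshp = \fpsh \sharp$ is a lifted functor, whose action on defining terms and substitutions amounts to a relabelling (by \cref{thm:fpsh-strict-cwf-morphism}), combined with the observation that the base functor $\sharp : \bpcubecat \to \bpcubecat$ sends $(W,\ctxpath{\var i})$ to $(\sharp W, \ctxpath{\var i})$ and fixes the weakening face $(\facewkn{\var i})$, because $\var i$ is already a path variable.

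Concretely, I would pick $\gamma : \DSub{(W, \ctxpath{\var i})}{\coshp \Gamma}$ that is degenerate in $\var i$, and an arbitrary defining term $t : (\coshp T)\dsub \gamma$; the goal is to show that $t$ is degenerate in $\var i$. Unpacking the definition of $\coshp = \fpsh \sharp$, we may write $\gamma = \fpshadj{\sharp}(\gamma')$ for some $\gamma' : \DSub{(\sharp W, \ctxpath{\var i})}{\Gamma}$, and $t = \fpshadj \sharp(t')$ for some $t' : T \dsub{\gamma'}$. By the degeneracy hypothesis $\gamma = \fpshadj \sharp(\gamma_0') \circ (\facewkn{\var i}) = \fpshadj \sharp(\gamma_0' \circ \sharp(\facewkn{\var i}))$ for some $\gamma_0' : \DSub{\sharp W}{\Gamma}$, and since $\sharp(\facewkn{\var i}) = (\facewkn{\var i})$, we conclude $\gamma' = \gamma_0' \circ (\facewkn{\var i})$, so $\gamma'$ is itself degenerate in $\var i$ as a defining substitution into $\Gamma$.

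Discreteness of $T$ then provides $t_0' : T \dsub{\gamma_0'}$ with $t' = t_0' \psub{\facewkn{\var i}}$. Applying the naturality identity $\fpshadj \sharp(s) \psub \vfi = \fpshadj \sharp(s \psub{\sharp \vfi})$ with $\vfi = (\facewkn{\var i})$, together with $\sharp(\facewkn{\var i}) = (\facewkn{\var i})$, yields
$$ t = \fpshadj \sharp(t') = \fpshadj \sharp(t_0' \psub{\facewkn{\var i}}) = \fpshadj \sharp(t_0') \psub{\facewkn{\var i}}, $$
which exhibits $t$ as degenerate in $\var i$, as required. There is no substantial obstacle; the only delicate point is keeping careful track of which occurrence of $\var i$ lives in which cube when passing between $\Gamma$ and $\coshp \Gamma$ through the lifted functor.
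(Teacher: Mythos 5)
Your proof is correct and follows essentially the same route as the paper's: both reduce degeneracy of a defining term of $\coshp T$ to degeneracy of the underlying term of $T$ by using that $\fpsh \sharp$ merely relabels, that $\sharp$ sends a path-weakening face map $(\facewkn{\var i})$ to itself, and then invoke discreteness of $T$ together with the naturality identity $\fpshadj \sharp(s)\psub{\vfi} = \fpshadj \sharp(s \psub{\sharp \vfi})$. (The paper's last displayed equality contains a typo, writing $\fpshadj \sharp(t)\psub{\facewkn{\var i}}$ where it should read $\fpshadj \sharp(t')\psub{\facewkn{\var i}}$; your version spells this out correctly.)
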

\begin{proof}
	Pick a defining term $(W, \ctxpath{\var i}) \Dsez \fpshadj \sharp(t) : (\coshp T) \dsub{\fpshadj \sharp(\gamma) (\facewkn{\var i})}$; we will show that it is degenerate. Note that $\fpshadj \sharp(\gamma) (\facewkn{\var i}) = \fpshadj \sharp (\gamma \circ \sharp (\facewkn{\var i})) = \fpshadj \sharp (\gamma (\facewkn{\var i}))$. Hence, we have $\sharp (W, \ctxpath{\var i}) = (\sharp W, \ctxpath{\var i}) \Dsez t : T \dsub{\gamma (\facewkn{\var i})}$. By discreteness of $T$, $t$ factors over $(\facewkn{\var i})$, i.e.\ $t = t' \psub{\facewkn{\var i}}$. Then $\fpshadj \sharp(t) = \fpshadj \sharp(t' \psub{\facewkn{\var i}}) = \fpshadj \sharp(t' \psub{\sharp (\facewkn{\var i})}) = \fpshadj \sharp(t) \psub{\facewkn{\var i}}$.
\end{proof}

\subsection{Equivalence relations on presheaves}
Before we can define $\quotshp T$, we need a little bit of theory on equivalence relations on (dependent) presheaves. We will then be able to define $\quotshp T$ straightforwardly as a quotient of $T$.
\begin{definition}
	An \textbf{equivalence relation $E$ on a presheaf} $\Gamma \in \widehat \catW$ consists of:
	\begin{itemize}
		\item For every $W \in \catW$, an equivalence relation $E_W$ on $(\DSub W \Gamma)$,
		\item So that if $E_W(\gamma, \gamma')$ and $\vfi : \PSub V W$, then $E_V(\gamma \vfi, \gamma' \vfi)$.
	\end{itemize}
	We will denote this as $E \eqrel \Gamma$.
	
	Similarly, an \textbf{equivalence relation $E$ on a dependent presheaf} $(\Gamma \sez T \type)$ consists of:
	\begin{itemize}
		\item For every $W \in \catW$ and every $\gamma : \DSub W \Gamma$, an equivalence relation $E \dsub \gamma$ on $T \dsub \gamma$,
		\item So that if $E \dsub \gamma (s, t)$ and $\vfi : V \to W$, then $E \dsub{\gamma \vfi}(s \psub \vfi, t \psub \vfi)$.
	\end{itemize}
	We will denote this as $\Gamma \sez E \eqrel T$.
\end{definition}
Given a presheaf map $\sigma : \Delta \to \Gamma$ and an equivalence relation $\Gamma \sez E \eqrel T$, we can easily define its substitution $\Delta \sez E[\sigma] \eqrel T[\sigma]$, by setting $E[\sigma]\dsub \delta = E\dsub{\sigma \delta}$. Substitution of equivalence relations obviously respects identity and composition.
\begin{lemma}
	The intersection of arbitrarily many equivalence relations on a given (dependent) presheaf, is again an equivalence relation on that presheaf. \qed
\end{lemma}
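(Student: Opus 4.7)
The plan is straightforward: given a family $\{E^\lambda\}_{\lambda \in \Lambda}$ of equivalence relations on a presheaf $\Gamma \in \widehat{\catW}$, I would define the intersection $E$ pointwise by declaring $E_W(\gamma, \gamma')$ to hold iff $E^\lambda_W(\gamma, \gamma')$ holds for every $\lambda \in \Lambda$. First I would verify that each $E_W$ is an equivalence relation on $(\DSub W \Gamma)$: reflexivity, symmetry and transitivity all pass through a universal quantifier, since each $E^\lambda_W$ separately satisfies these axioms. For instance, if $E_W(\gamma, \gamma')$ and $E_W(\gamma', \gamma'')$, then for every $\lambda$ we have $E^\lambda_W(\gamma, \gamma')$ and $E^\lambda_W(\gamma', \gamma'')$, whence $E^\lambda_W(\gamma, \gamma'')$ by transitivity of $E^\lambda_W$, and so $E_W(\gamma, \gamma'')$.

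Next I would check stability under restriction, which is the only other condition in the definition. Assume $E_W(\gamma, \gamma')$ and let $\vfi : \PSub V W$. Then for every $\lambda$ we have $E^\lambda_W(\gamma, \gamma')$, so by stability of $E^\lambda$ we obtain $E^\lambda_V(\gamma \vfi, \gamma' \vfi)$. Since this holds for every $\lambda$, it follows that $E_V(\gamma \vfi, \gamma' \vfi)$.

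The dependent case is essentially identical: given a family $\{E^\lambda\}_{\lambda \in \Lambda}$ of equivalence relations on $(\Gamma \sez T \type)$, define $E\dsub\gamma$ as the pointwise intersection of the $E^\lambda\dsub\gamma$. The verification of the equivalence-relation axioms proceeds fiberwise exactly as above, and stability under restriction follows by the same argument applied to terms $s, t \in T\dsub\gamma$ and restrictions $s \psub \vfi, t \psub \vfi \in T\dsub{\gamma\vfi}$.

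There is no real obstacle here; the entire argument rests on the elementary fact that $\forall$-statements (the three equivalence-relation axioms plus naturality) commute with arbitrary intersections. The only thing to be careful about is to phrase the axioms as $\forall$-statements over the relevant elements rather than as existence claims, but this is automatic from the definition given in the excerpt.
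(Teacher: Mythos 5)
Your proof is correct and is exactly the argument the paper implicitly takes for granted (the lemma is stated with an immediate \qed and no written proof). The key observation — that all the conditions in the definition of an equivalence relation on a (dependent) presheaf, including stability under restriction, are universally quantified statements and therefore pass through arbitrary intersections — is precisely the point, and you have verified it cleanly for both the non-dependent and dependent cases.
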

\begin{lemma}[Substitution of equivalence relations, has a right adjoint]
	Given a substitution $\sigma : \Delta \to \Gamma$ and equivalence relations $\Delta \sez F \eqrel T[\sigma]$ and $\Gamma \sez E \eqrel T$, there is an equivalence relation $\Gamma \sez \forall_\sigma F \eqrel T$ such that $E[\sigma] \subseteq F$ if and only if $E \subseteq \forall_\sigma F$.
\end{lemma}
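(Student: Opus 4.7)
The plan is to construct $\forall_\sigma F$ directly as the largest equivalence relation on $T$ whose substitution along $\sigma$ lands inside $F$. Since a direct ``union'' over all such equivalence relations is not itself an equivalence relation, I would instead give a pointwise explicit formula: for $\gamma : \DSub{W}{\Gamma}$ and $s, t \in T \dsub \gamma$, declare
\begin{equation*}
	\forall_\sigma F \dsub \gamma(s, t) \;:\Leftrightarrow\; \forall V,\, \forall \vfi : \PSub V W,\, \forall \delta : \DSub V \Delta,\; \sigma \delta = \gamma \vfi \;\Rightarrow\; F \dsub \delta(s \psub \vfi, t \psub \vfi).
\end{equation*}
The intuition is that two elements of $T \dsub \gamma$ get identified by $\forall_\sigma F$ precisely when, on every ``lift'' of $\gamma$ to a cube of $\Delta$ along $\sigma$, their pulled-back versions are already $F$-related.

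First I would check the closure properties that make this a well-formed equivalence relation on the dependent presheaf $T$. Reflexivity, symmetry and transitivity are each inherited pointwise from $F$ by quantifying over the same $(\vfi, \delta)$. Naturality under restriction $\psi : \PSub{V'}{W}$ is the one bookkeeping step: given a witness pair $(\vfi, \delta)$ for $\gamma \psi$, I would reuse it via the composite $\psi \vfi$ to land back in the quantifier for $\gamma$, using that $(s \psub \psi) \psub \vfi = s \psub{\psi \vfi}$.

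Then I would verify the Galois connection in both directions. For ``$E \subseteq \forall_\sigma F \Rightarrow E[\sigma] \subseteq F$'', given $\delta : \DSub W \Delta$ and a pair related by $E[\sigma]\dsub\delta = E\dsub{\sigma \delta}$, I instantiate the defining quantifier at $\vfi = \id_W$ and the chosen $\delta$, which satisfies $\sigma \delta = (\sigma \delta) \circ \id$ trivially. For the converse ``$E[\sigma] \subseteq F \Rightarrow E \subseteq \forall_\sigma F$'', I take $E \dsub \gamma(s, t)$, restrict along an arbitrary $\vfi : V \to W$ to get $E \dsub{\gamma \vfi}(s \psub \vfi, t \psub \vfi)$ by naturality of $E$, and then whenever $\gamma \vfi = \sigma \delta$ this is $E[\sigma] \dsub \delta (s \psub \vfi, t \psub \vfi)$, which hypothesis places inside $F$.

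The only mildly subtle step is the naturality check, since one has to be careful that restricting a $\forall_\sigma F$-relation and then re-opening the quantifier does not ``lose'' lifts: this works precisely because every $(\vfi, \delta)$ witnessing $\sigma \delta = \gamma \psi \vfi$ can be viewed as a witness for $\gamma$ via $\psi \vfi$. Everything else is a routine unfolding of definitions. No universe-size issues arise because the quantifier ranges over the small category $\catW$ and the sets $\DSub{V}{\Delta}$.
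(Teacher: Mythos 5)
Your proposal is correct and uses essentially the same construction and argument as the paper: you define $\forall_\sigma F$ by quantifying over all pairs $(\vfi, \delta)$ with $\sigma\delta = \gamma\vfi$ and then verify the Galois connection by instantiating at $\vfi = \id$ (for one direction) and by naturality of $E$ plus the hypothesis (for the other). You spell out the closure properties a bit more explicitly than the paper, which just remarks that the quantification over $V$ guarantees preservation under restriction, but the content is the same.
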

The notation $\forall$ is related to the notation of the product type. Indeed, the product type is right adjoint to weakening, which is a special case of substitution.
\begin{proof}
	Given $W \Dsez x, y : T \dsub \gamma$, we set $(\forall_\sigma F) \dsub \gamma (x, y)$ if and only if for every face map $\vfi : \PSub V W$ and every $\delta : \DSub V \Delta$ such that $\sigma \delta = \gamma \vfi$, we have $F \dsub \delta(x \psub \vfi, y \psub \vfi)$. The quantification over $V$ guarantees that equivalence is preserved under restriction.
	
	We now show that $E[\sigma] \subseteq F$ if and only if $E \subseteq \forall_\sigma F$.
	\begin{itemize}
		\item[$\Rightarrow$] Assume that $E[\sigma] \subseteq F$. Pick $W \Dsez x, y : T \dsub \gamma$ such that $E \dsub \gamma (x, y)$. We show that $\forall_\sigma F \dsub \gamma(x, y)$. For any $\vfi : \PSub V W$ we have $E \dsub{\gamma \vfi}(x \psub \vfi, y \psub \vfi)$ and hence for any $\delta : \DSub V \Delta$ such that $\sigma \delta = \gamma \vfi$, we have $E[\sigma] \dsub{\delta}(x \psub \vfi, y \psub \vfi)$, implying $F \dsub \delta(x \psub \vfi, y \psub \vfi)$.
		
		\item[$\Leftarrow$] Assume that $E \subseteq \forall_\sigma F$. Pick $W \Dsez x, y : T[\sigma] \dsub \delta$ and assume that $E[\sigma]\dsub \delta(x, y)$. We show that $F \dsub \delta(x, y)$. Clearly, we have $\forall_\sigma F \dsub{\sigma \delta}(x, y)$. Instantiating $\vfi$ with $\id$, we can conclude $F \dsub{\delta}(x, y)$. \qedhere
	\end{itemize}
\end{proof}
\begin{lemma}[Applying a lifted functor to an equivalence relation, has a right adjoint]
	Assume a functor $K : \catV \to \catW$ and equivalence relations $\Gamma \sez_{\widehat \catW} E \eqrel T$ and $\fpsh K \Gamma \sez_{\widehat \catV} F \eqrel \fpsh K T$. There is an equivalence relation $\Gamma \sez_{\widehat \catW} \forall_{K} F \eqrel T$ such that $\fpsh K E \subseteq F$ if and only if $E \subseteq \forall_{K} F$. Here, $\fpsh K E$ is defined by $\fpsh K E \dsub{\fpshadj K (\gamma)}(\fpshadj K(x), \fpshadj K(y)) = E \dsub \gamma(x, y)$.
\end{lemma}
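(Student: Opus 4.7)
The plan is to mirror the construction used in the previous lemma on substitution of equivalence relations. First I would define $(\forall_K F)\dsub{\gamma}(x, y)$, for $W \Dsez x, y : T \dsub \gamma$, to hold precisely when for every object $V \in \catV$ and every primitive substitution $\vfi : \PSub{KV}{W}$ one has $F \dsub{\fpshadj K(\gamma \vfi)}(\fpshadj K(x \psub \vfi), \fpshadj K(y \psub \vfi))$. The quantification ranges over all morphisms into $W$ whose domain lies in the image of $K$; these are exactly the shapes of defining substitutions into $\fpsh K \Gamma$, playing the role analogous to pairs $(\vfi, \delta)$ with $\sigma \delta = \gamma \vfi$ in the substitution case, here parameterized simply by pairs $(V, \vfi)$ since every $\fpshadj K(\gamma \vfi)$ arises this way.

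Next I would verify that this is an equivalence relation on each $T \dsub \gamma$, which is immediate pointwise from the fact that $F$ is, and check stability under restriction by $\psi : V' \to W$ in $\catW$: given any $\vfi : \PSub{KV}{V'}$, the composite $\psi \vfi : KV \to W$ belongs to the quantification defining $(\forall_K F)\dsub{\gamma}$, so applying that hypothesis to $\psi \vfi$ yields exactly the clause needed for $(\forall_K F)\dsub{\gamma \psi}(x \psub \psi, y \psub \psi)$.

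Finally, the adjunction. For $(\Rightarrow)$, assuming $\fpsh K E \subseteq F$ and $E \dsub \gamma(x, y)$, stability of $E$ under restriction gives $E \dsub{\gamma \vfi}(x \psub \vfi, y \psub \vfi)$ for every $\vfi : \PSub{KV}{W}$, and by definition of $\fpsh K E$ and the hypothesis this upgrades to the required $F$-relation, establishing $(\forall_K F)\dsub \gamma(x, y)$. For $(\Leftarrow)$, assuming $E \subseteq \forall_K F$ and given $V \in \catV$, $\gamma : \DSub{KV}{\Gamma}$ and $x, y$ with $\fpsh K E \dsub{\fpshadj K(\gamma)}(\fpshadj K(x), \fpshadj K(y))$, unfolding the definition of $\fpsh K E$ gives $E\dsub{\gamma}(x,y)$, hence $(\forall_K F) \dsub \gamma(x, y)$; instantiating with this same $V$ and $\vfi = \id_{KV}$ then delivers $F \dsub{\fpshadj K(\gamma)}(\fpshadj K(x), \fpshadj K(y))$.

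The main subtlety, as in the substitution case, is pinning down the quantification in the definition of $\forall_K F$: we must range over arbitrary $\vfi : \PSub{KV}{W}$, not merely over those of the form $K\psi$ for $\psi \in \catV$, both so that closure under restriction by arbitrary $\psi : V' \to W$ in $\catW$ works and so that $\vfi = \id_{KV}$ is available to instantiate in the $(\Leftarrow)$ direction.
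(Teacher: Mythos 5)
Your proposal is correct and follows essentially the same approach as the paper: the same definition of $\forall_K F$ (quantifying over all face maps $\vfi : \PSub{KV}{W}$), and the same two-sided verification of the adjunction, including instantiating $\vfi = \id_{KV}$ for the backward direction. The paper leaves the restriction-stability check as a remark, whereas you spell it out, but the argument is the same.
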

\begin{proof}
	The idea is entirely the same. Given $W \Dsez_{\widehat \catW} x, y : T \dsub \gamma$, we set $\forall_K F \dsub \gamma(x, y)$ if and only if for every $V \in \cat V$ and every face map $\vfi : \PSub{KV}{W}$, we have $F \dsub{\fpshadj K(\gamma \vfi)}(\fpshadj K(x \psub \vfi), \fpshadj K(y \psub \vfi))$. The quantification over $V$ guaranties that equivalence is preserved under restriction.
	
	We now show that $\fpsh K E \subseteq F$ if and only if $E \subseteq \forall_K F$.
	\begin{itemize}
		\item[$\Rightarrow$] Assume that $\fpsh K E \subseteq F$. Pick $W \Dsez_{\widehat \catW} x, y : T \dsub \gamma$ and assume that $E \dsub \gamma(x, y)$. We show that $\forall_K F \dsub \gamma(x, y)$. For any $\vfi : \PSub{KV}{W}$, we have $E \dsub{\gamma \vfi}(x \psub \vfi, y \psub \vfi)$, i.e. $\fpsh K E \dsub{\fpshadj K(\gamma \vfi)}(\fpshadj K(x \psub \vfi), \fpshadj K(y \psub \vfi))$, implying $F \dsub{\fpshadj K(\gamma \vfi)}(\fpshadj K(x \psub \vfi), \fpshadj K(y \psub \vfi))$.
		
		\item[$\Leftarrow$] Assume that $E \subseteq \forall_{\fpsh K} F$. Pick $V \Dsez_{\widehat \catV} \fpshadj K(x), \fpshadj K(y) : \fpsh K T \dsub{\fpshadj K(\gamma)}$ such that $\fpsh K E \dsub{\fpshadj K(\gamma)}(\fpshadj K(x), \fpshadj K(y))$, i.e.\ $E \dsub \gamma(x, y)$. Then $\forall_K F \dsub \gamma(x, y)$. Instantiating $\vfi = \id : \PSub{KV}{KV}$, we have $F \dsub{\fpshadj K(\gamma)}(\fpshadj K(x), \fpshadj K(y))$. \qedhere
	\end{itemize}
\end{proof}
\begin{definition}
	If $E \eqrel \Gamma$, then we define the context $\Gamma/E$ by setting $(\DSub{W}{\Gamma/E}) = (\DSub{W}{\Gamma})/E_W$ and $\overline\gamma \circ \vfi = \overline{\gamma \circ \vfi}$, which is well-defined by virtue of the second bullet in the definition of an equivalence relation.
	
	If $\Gamma \sez E \eqrel T$, then we define $\Gamma \sez T/E \type$ by setting $(T/E) \dsub \gamma = T \dsub \gamma/E \dsub{\gamma}$ and $\overline{x} \psub \vfi = \overline{x \psub \vfi}$.
\end{definition}
One easily checks that $(T/E)[\sigma] = T[\sigma]/E[\sigma]$.

\subsection{Discretizing contexts and the functor $\cohpi$}\label{sec:def-cohpi}
In this section, our aim is to construct for any context $\Gamma$, a discrete context $\quotshp \Gamma$ with a substitution $\inquotshp : \Gamma \to \quotshp \Gamma$ such that any substitution $\tau : \Gamma \to \Gamma'$ to a discrete context $\Gamma'$, factors uniquely over $\inquotshp$. The effect of applying $\quotshp$ will be that we are contracting every path to a point (and more generally, that we are contracting every cell in all its path dimensions). When we postcompose with $\cohfget$, we obtain our desired left adjoint $\cohpi = \cohfget \quotshp$ of $\cohdisc$.

\subsubsection{Discretizing contexts}
Our approach is quite straightforward: we simply divide out the least equivalence relation that makes the quotient discrete. Recall that a context $\Gamma$ is discrete iff for every $\gamma : \DSub{(W, \ctxpath{\var i})}{\Gamma}$, we have that $\gamma = \gamma (0 / \var i, \facewkn{\var i})$.
\begin{definition}\label{def:ctxshp}
	Let the \textbf{shape equivalence relation} $\sheq$ on $\Gamma$ be the least equivalence relation such that $\sheq(\gamma, \gamma(0 / \var i, \facewkn{\var i}))$ for any $\gamma : \DSub{(W, \ctxpath{\var i})}{\Gamma}$. Then we define the \textbf{shape quotient} of $\Gamma$ as $\quotshp \Gamma = \Gamma/\sheq$. Given a substitution $\sigma : \Gamma \to \Gamma'$, we define $\quotshp \sigma : \quotshp \Gamma \to \quotshp \Gamma'$ by setting $\quotshp \sigma \circ \overline \gamma = \overline{\sigma \circ \gamma}$. This constitutes a functor $\quotshp : \widehat{\bpcubecat} \to \widehat{\bpcubecat}$.
	
	We define a natural transformation $\inquotshp : \Id \to \quotshp$ by $\inquotshp \circ \gamma := \overline \gamma$.
\end{definition}
It is easy to see that any substitution $\tau$ from $\Gamma$ into a discrete context, factors uniquely over $\inquotshp : \Gamma \to \quotshp \Gamma$. In particular, $\quotshp \sigma$ is well-defined.
\begin{lemma}\label{thm:kappa-injective}
	For any context $\Gamma \in \widehat \bpcubecat$, the substitution $\kappa : \flat \Gamma \to \Gamma$ is an injective presheaf map, meaning that $\kappa \circ \loch : (\DSub{W}{\flat \Theta}) \to (\DSub{W}{\Theta})$ is injective for every $W$.
\end{lemma}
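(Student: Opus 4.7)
The plan is to reduce injectivity of $\kappa_\Gamma$ at each $W$ to the fact that the unit component $\varsigma_W : W \to \shp W$ is a split epimorphism in $\bpcubecat$, and then to exhibit an explicit section. Because $\kappa = \fpsh{\varsigma}$, the description of $\fpsh{\loch}$ on natural transformations (given just before \cref{thm:cohesion-psh}) tells us that, for any $\gamma : \DSub{\shp W}{\Gamma}$, we have $\kappa \circ \fpshadj{\shp}(\gamma) = \gamma \circ \varsigma_W$ as elements of $\DSub{W}{\Gamma}$. Thus $\kappa_\Gamma$ is injective at $W$ iff precomposition with $\varsigma_W$ is injective on the set of substitutions $\yoneda \shp W \to \Gamma$, which holds as soon as $\varsigma_W$ is a split epimorphism in $\bpcubecat$.

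To exhibit a section, write $W = (W_\IB, W_\IP)$ so that $\shp W = (W_\IB, \eset)$. By \cref{fig:cohesion-base}, $\varsigma_W : \PSub{W}{\shp W}$ sends every bridge variable $\var j^\IB$ of $\shp W$ to the same bridge variable $\var j^\IB$ of $W$. I define $\delta_W : \PSub{\shp W}{W}$ by $\var j^\IB \psub{\delta_W} := \var j^\IB$ for every $\var j^\IB \in W_\IB$ and $\var i^\IP \psub{\delta_W} := 0$ for every $\var i^\IP \in W_\IP$. This is a legal face map, since path variables in the codomain are allowed to be pinned to constants, and a variable-by-variable computation gives $\var j^\IB \psub{\varsigma_W \circ \delta_W} = (\var j^\IB \psub{\varsigma_W}) \psub{\delta_W} = \var j^\IB$ for each $\var j^\IB \in \shp W$; since $\shp W$ contains no path variables, this suffices to conclude $\varsigma_W \circ \delta_W = \id_{\shp W}$.

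Assembling the pieces: if $\gamma, \gamma' : \DSub{\shp W}{\Gamma}$ satisfy $\gamma \circ \varsigma_W = \gamma' \circ \varsigma_W$, then composing on the right with $\delta_W$ yields $\gamma = \gamma \circ \id_{\shp W} = \gamma' \circ \id_{\shp W} = \gamma'$, hence $\fpshadj{\shp}(\gamma) = \fpshadj{\shp}(\gamma')$ and $\kappa_\Gamma \circ \loch$ is injective at $W$. The only delicate point is writing down $\delta_W$ correctly (in particular that its path-variable entries are allowed to be constants) and being careful with the composition direction; the remainder is pure bookkeeping on face maps.
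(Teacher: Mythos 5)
Your proof is correct and follows the same approach as the paper: reduce the claim to the fact that $\kappa \circ \fpshadj{\shp}(\gamma) = \gamma \circ \varsigma_W$ and then observe that $\varsigma_W : \PSub{W}{\shp W}$ is a split epimorphism. The paper leaves the section implicit ("easily seen to have a right inverse"); you merely spell out the obvious choice $\delta_W$ that keeps bridge variables and pins path variables to $0$.
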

\begin{proof}
	Given $\fpshadj \shp(\gamma) : \DSub{W}{\flat \Gamma}$, we have $\kappa \circ \fpshadj \shp(\gamma) = \gamma \varsigma$, where $\varsigma : \PSub{W}{\shp W}$ is easily seen to have a right inverse.
\end{proof}
\begin{lemma}
	Any substitution $\tau : \Gamma \to \Theta$ from a discrete context $\Gamma$ to any context $\Theta$, factors uniquely over $\kappa : \flat \Theta \to \Theta$. Hence, $\quotshp$ is left adjoint to $\flat$.
\end{lemma}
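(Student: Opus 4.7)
The plan is to split the statement into two parts: first establish the factorization property for any substitution out of a discrete context, and then derive the adjunction $\quotshp \dashv \flat$ as a formal consequence, combining the factorization with the universal property of $\inquotshp$ from \cref{def:ctxshp}.

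Uniqueness of the factorization is a one-line appeal to \cref{thm:kappa-injective}: two candidate lifts $\tilde\tau_1,\tilde\tau_2 : \Gamma \to \flat\Theta$ agreeing after postcomposition with $\kappa$ must themselves agree, since $\kappa$ is injective as a presheaf map.

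The core of the argument is existence. Given $\tau : \Gamma \to \Theta$ with $\Gamma$ discrete, I would construct $\tilde\tau$ componentwise. For each $\gamma : \DSub{W}{\Gamma}$, iterating discreteness over the path variables of $W$ yields a factorization $\gamma = \bar\gamma \circ \varsigma_W$ for some $\bar\gamma : \DSub{\shp W}{\Gamma}$, and $\bar\gamma$ is unique because $\varsigma_W$ is a split epi in $\bpcubecat$: the section sending path variables to $0$ and fixing bridges, already exploited in the proof of \cref{thm:kappa-injective}, satisfies $\varsigma_W \circ r = \id$. Setting $\tilde\tau \circ \gamma := \fpshadj{\shp}(\tau \circ \bar\gamma)$ then gives a defining substitution of $\flat\Theta$ satisfying $\kappa \circ \tilde\tau \circ \gamma = \tau \circ \bar\gamma \circ \varsigma_W = \tau \circ \gamma$. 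Naturality of $\tilde\tau$ in $W$ reduces, via naturality of $\varsigma$, to the observation that $\bar\gamma \circ \shp\vfi$ is the unique lift of $\gamma \circ \vfi$ for any face map $\vfi$; combined with the identity $\fpshadj\shp(\theta) \circ \vfi = \fpshadj\shp(\theta \circ \shp\vfi)$ from \cref{thm:fpsh-strict-cwf-morphism}, this yields $(\tilde\tau \circ \gamma) \circ \vfi = \tilde\tau \circ (\gamma \circ \vfi)$.

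The adjunction then follows formally: substitutions $\quotshp\Gamma \to \Theta$ (out of the discrete context $\quotshp\Gamma$) correspond by the factorization to substitutions $\quotshp\Gamma \to \flat\Theta$, which in turn correspond via the universal property of $\inquotshp$ (since $\flat\Theta$ is discrete) to substitutions $\Gamma \to \flat\Theta$. Both bijections are natural, so their composite is the desired adjunction isomorphism. The main obstacle is really just the iterated discreteness argument in the existence half: discreteness is a per-dimension condition whereas the factorization we need is through the composite $\varsigma_W$. The split-epi property of $\varsigma_W$ is the linchpin that both pins down the iterated lift uniquely and makes the pointwise construction of $\tilde\tau$ canonical, after which everything else is routine bookkeeping.
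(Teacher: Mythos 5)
Your proof is correct and takes essentially the same route as the paper's: existence by lifting each $\gamma$ over $\varsigma_W$ using discreteness and defining $\tilde\tau$ pointwise via $\fpshadj{\shp}$, uniqueness by $\kappa$ being an injective presheaf map (\cref{thm:kappa-injective}), and the adjunction by composing the two unique-factorization bijections through $\quotshp\Gamma \to \flat\Theta$. The only difference is cosmetic: you spell out the iteration over path dimensions and the split-epi property of $\varsigma_W$ (already noted in the paper's proof of \cref{thm:kappa-injective}) to pin down the lift $\bar\gamma$, details the paper leaves implicit.
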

\begin{proof}
	\begin{description}
		\item[Existence.] Pick $\gamma : \DSub{W}{\Gamma}$. Since $\Gamma$ is discrete, $\gamma$ factors over $\varsigma : \PSub{W}{\shp W}$ as $\gamma = \gamma' \varsigma$. Then we have $\tau \gamma' : \DSub{\shp W}{\Theta}$ and hence $\fpshadj \shp(\tau \gamma') : \DSub{W}{\flat \Theta}$. So we define $\tau' : \Gamma \to \flat \Theta$ by setting $\tau' \gamma' \varsigma = \fpshadj \shp(\tau \gamma')$. To see that this is natural:
		\begin{equation}
			\tau' \gamma' \varsigma \vfi = \tau' \gamma' (\shp \vfi) \varsigma = \fpshadj \shp(\tau \gamma' (\shp \vfi)) = \fpshadj \shp(\tau \gamma') \vfi.
		\end{equation}
		
		\item[Uniqueness.] This follows from the fact that $\kappa : \flat \Theta \to \Theta$ is an injective presheaf map, see \cref{thm:kappa-injective}.
		
		\item[Adjointness.] Substitutions $\quotshp \Gamma \to \Theta$ factor uniquely (and thus naturally) over $\kappa : \flat \Theta \to \Theta$ and are thus in natural correspondence with substitutions $\quotshp \Gamma \to \flat \Theta$. On the other hand, substitutions $\Gamma \to \flat \Theta$ factor uniquely (and thus naturally) over $\inquotshp : \Gamma \to \quotshp \Gamma$ and are thus also in natural correspondence with substitutions $\quotshp \Gamma \to \flat \Theta$. This proves the adjunction. \qedhere
	\end{description}
\end{proof}
Since $\kappa$ is an injective presheaf map and $\inquotshp$ is clearly a surjective presheaf map, we use the following notations. If $\sigma : \Gamma \to \flat \Theta$, then we write $\kappa \sigma \inquotshp\inv$ for $\alpha\inv_{\quotshp \dashv \flat}(\sigma) : \quotshp \Gamma \to \Theta$. Conversely, if $\tau : \quotshp \Gamma \to \Theta$, we write $\kappa\inv \tau \inquotshp$ for $\alpha_{\quotshp \dashv \flat}(\tau) : \Gamma \to \flat \Theta$. Thus, we have unit $\kappa\inv \inquotshp : \Id \to \flat \quotshp$ and co-unit $\kappa \inquotshp \inv : \quotshp \flat \to \Id$.

\subsubsection{The functor $\cohpi$}
As $\quotshp$ is left adjoint to $\flat$, we could define $\shp$ as $\quotshp$. However, this does not give us a decomposiiton $\shp = \cohdisc \cohpi$ or the property $\flat \shp = \shp$. Instead, we define $\cohpi := \cohfget \quotshp \dashv \flat \cohcodisc = \cohdisc$. By consequence, we have $\shp = \cohdisc \cohpi = \flat \quotshp \dashv \flat \sharp = \flat$. Since both $\shp$ and $\quotshp$ are now left adjoint to $\flat$, we have $\kappa \quotshp : \shp \cong \quotshp$.
By \cref{thm:uniqueness-of-nattrans-psh}, we have $\varsigma = (\kappa \quotshp)\inv \inquotshp : \Id \to \shp$ and $\bar \varsigma = \cohfget \varsigma \cohdisc : \Id \to \bar \shp$.
We will maximally avoid to inspect the definition of $\cohpi$; hence we will avoid explicit use of $\quotshp$ and $\inquotshp$.

\subsection{Discretizing types}
If $\shp$ were a morphism of CwFs, then from a type $\Gamma \sez T \type$, we could define a type $\Gamma \sez (\shp T)[\varsigma] \dtype$, but unfortunately this is meaningless. Because we need that operation nonetheless, we will define it explicitly in this section. The approach is the same as for contexts: we simply obtain $\Gamma \sez \quotshp T \dtype$ from $T$ by dividing out the least equivalence relation $\sheq^T$ that makes $T$ discrete. The main obstacle is that we want this operation to commute with substitution, i.e. that $\sheq^T$ commutes with substitution. Here, once more, we will need the existence of coherence squares, as is evident from the following example, where we fail to prove the same result for the category of reflexive graphs $\widehat{\RGcat}$.

\subsubsection{The shape equivalence relation}
Before we proceed $\widehat \bpcubecat$, we will try to define the shape operation in $\widehat \RGcat$.
\begin{example}\label{eg:shape-equivalence-relation}
	Given any type $\Gamma \sez T \type$, the \textbf{shape equivalence relation} $\sheq^T$ is the smallest equivalence relation such that $\sheq^T \dsub{\gamma (\facewkn{\var i})}(p, p \psub{0 / \var i, \facewkn{\var i}})$ for every edge $(W, \ctxedge{\var i}) \Dsez p : T \dsub{\gamma (\facewkn{\var i})}$.
	
	Again, dividing out $\sheq^T$ is precisely what is needed to make $T$ discrete. We now try to prove the following (false) claim: The shape equivalence relation respects substitution: $\sheq^T[\sigma] = \sheq^{T[\sigma]}$.
	\begin{proof}[Non-proof]
		Pick a substitution $\sigma : \Delta \to \Gamma$ and a type $\Gamma \sez T \type$. We try to prove both inclusions.
		\begin{itemize}
			\item[$\supseteq$] It suffices to show that $\sheq^T[\sigma]$ satisfies the defining property of $\sheq^{T[\sigma]}$. Pick an edge $(\ctxedge{\var i}) \Dsez p : T[\sigma] \dsub{\delta (\facewkn{\var i})}$. We have to show that $\sheq^T[\sigma]\dsub{\delta (\facewkn{\var i})} (p, p \psub{0 / \var i, \facewkn{\var i}})$. After composing $\sigma$ and $\delta (\facewkn{\var i})$, this follows immediately from the definition of $\sheq^T$.
			
			\item[$\subseteq$] We will try to prove the equivalent statement that $\sheq^T \subseteq \forall_\sigma \sheq^{T[\sigma]}$. It suffices to show that the right hand side satisfies the defining property of $\sheq^T$. Pick an edge $(\ctxedge{\var i}) \Dsez p : T \dsub{\gamma (\facewkn{\var i})}$. In order to show that $\forall_\sigma \sheq^{T[\sigma]} \dsub{\gamma (\facewkn{\var i})} (p, p \psub{0 / \var i, \facewkn{\var i}})$, we need to show for every $\vfi : \PSub{V}{(\ctxedge{\var i})}$ and every $\delta : \DSub{V}{\Delta}$ such that $\sigma \delta = \gamma (\facewkn{\var i}) \vfi$, that $\sheq^{T[\sigma]}\dsub{\delta}(p \psub \vfi, p \psub{0 / \var i, \facewkn{\var i}} \psub \vfi)$. In the case where $V = (\ctxedge{\var i})$, a problem arises because we do not know that $\delta$ is degenerate. In fact, we can give a counterexample. \qedhere
		\end{itemize}
	\end{proof}
	\begin{proof}[Counterexample]
		Let $\Gamma \cong ()$ and $\Delta \cong \yoneda(\ctxedge{\var i})$: a reflexive graph with two nodes $\delta, \delta' : \DSub{()}{\Delta}$ and a single non-trivial edge $\delta_1$ (\cref{fig:shape-equivalence-relation}). Let $\sigma$ be the unique substitution $\Delta \to \Gamma$. Consider the type $\Gamma \sez T \type$ consisting of two nodes $x$ and $y$ connected by two non-trivial edges $p$ and $q$. We get the setup shown in \cref{fig:shape-equivalence-relation}, which we briefly discuss here.
		
		Both $\delta$ and $\delta'$ are mapped to $\gamma$ under $\sigma$; hence in $T[\sigma]$, they both get a copy of $x$ and $y$. The degenerate edges $\delta \psub{\facewkn{\var i}}$ and $\delta'\psub{\facewkn{\var i}}$, as well as the edge $\delta_1$, are mapped to $\gamma \psub{\facewkn{\var i}}$. Hence in $T[\sigma]$, both $p$ and $q$ are tripled. The degenerate edges, too, are tripled, but you see only one copy of them, as the other two are still degenerate.
		
		All four edges in $T$ live above the degenerate edge $\gamma \psub{\facewkn{\var i}}$. Hence, when dividing out $\sheq^T$, they are all contracted to the degenerate edge at their source. Only a point remains.
		
		In $T[\sigma]$, only the vertical edges and the constant ones, live above degenerate edges in $\Delta$. Hence, only those are contracted. The horizontal and diagonal edges are preserved.
		
		In $(T/\sheq^T)[\sigma]$ (which is easily checked to be equal to $T[\sigma]/\sheq^T[\sigma]$), by contrast, we only have a single horizontal edge. Indeed: we get two copies of $\overline x$, and three copies of its constant edge, two of which are still degenerate.
		
		This is an example where $\sheq^T[\sigma] \neq \sheq^{T[\sigma]}$.
	\end{proof}
	The situation would have been different, had we had coherence squares. Indeed, in that case, we would have constant squares on $p$ and $q$ in $T$, living above the constant square at $\gamma$. These would produce squares filling up the front and back of $T[\sigma]$, living above the constant square at $\delta_1$. We could make $\sheq^X$ contract not just edges above degenerate edges, but also squares living above (partially) degenerate squares. Then both filling squares, as well as their diagonals, would be contracted and we would end up with just a single horizontal edge in $T[\sigma]/\sheq^{T[\sigma]}$.
\end{example}
\begin{figure}[htb]
	\begin{equation*}
		\xymatrix{
				\Gamma =
				& {\gamma}
				&&& \Delta =
				& {\delta} \ar@{-}[rr]^{\delta_1}
				&& {\delta'}
				\\
				& x_\gamma \ar@{-}@/_{1em}/[dd]_p \ar@{-}@/^{1em}/[dd]^q
				&&& & x_\delta \ar@{-}[rr]
					\ar@{-}@/_{1em}/[dd] \ar@{-}@/^{1em}/[dd]
					\ar@{-}@/_{1em}/[ddrr] \ar@{-}@/^{1em}/[ddrr]
				&& x_{\delta'} \ar@{-}@/_{1em}/[dd] \ar@{-}@/^{1em}/[dd]
				\\
				T = & &&& T[\sigma] =
				\\
				& y_\gamma
				&&& & y_\delta \ar@{-}[rr]
				&& y_{\delta'}
				\\
				T/\sheq^T =
				& \overline x_\gamma
				&&& T[\sigma]/\sheq^{T[\sigma]} =
				& \overline{x}_\delta
					\ar@{-}@/^{1.5em}/[rr]
					\ar@{-}@/^/[rr]
					\ar@{-}@/_/[rr]
					\ar@{-}@/_{1.5em}/[rr]
				&& \overline{x}_{\delta'}
				\\
				& &&& (T/\sheq^T)[\sigma] =
				& {\overline x_\delta} \ar@{-}[rr]
				&& \overline x_{\delta'}
		}
	\end{equation*}
	\caption{Setup from the counterexample in \cref{eg:shape-equivalence-relation}. Degenerate edges are not shown, and nodes of types are indexed with the context nodes they live above, in order to distinguish duplicates.}
	\label{fig:shape-equivalence-relation}
\end{figure}
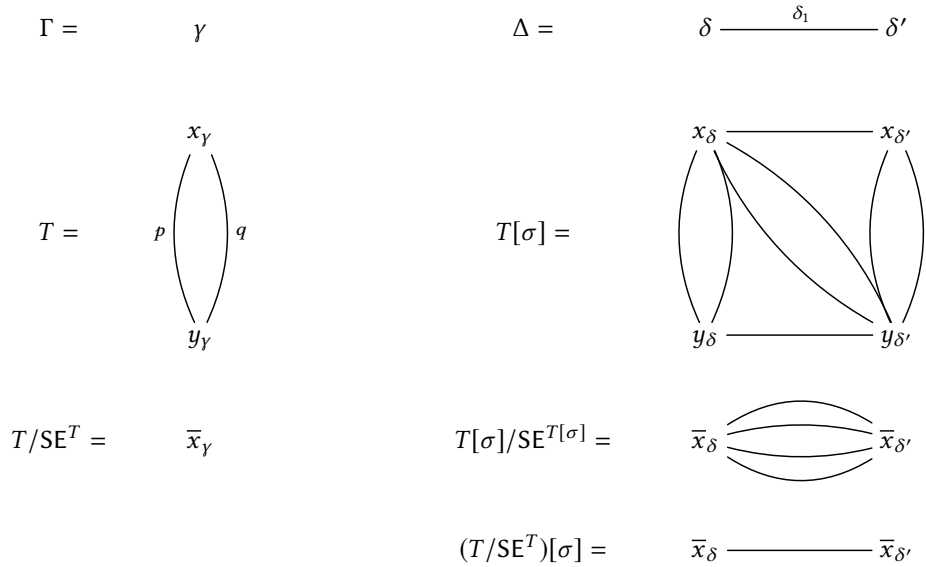
\begin{remark}\label{remark:annotated-restriction}
	Note, in \cref{fig:shape-equivalence-relation}, that the horizontal edges of $T[\sigma]$ arise from reflexive edges in $T$, yet they are themselves not reflexive. Therefore, it is important to distinguish between $x \psub{\facewkn{\var i}}^T$ and $x \psub{\facewkn{\var i}}^{T[\sigma]}$. Every edge that can be written as $x \psub{\facewkn{\var i}}^{T[\sigma]}$ can also be written as $x \psub{\facewkn{\var i}}^T$, but the converse does not hold as exhibited by the horizontal edges.
\end{remark}
\begin{definition}
	Given any type $\Gamma \sez T \type$, the \textbf{shape equivalence relation} $\sheq^T$ is the smallest equivalence relation on $T$ such that for any $(W, \ctxpath{\var i}) \Dsez p : T \dsub{\gamma(\facewkn{\var i})}$, we have $\sheq^T(p, p \psub{0/\var i, \facewkn{\var i}})$.
\end{definition}
\begin{lemma}
	The shape equivalence relation respects substitution: $\sheq^T[\sigma] = \sheq^{T[\sigma]}$.
\end{lemma}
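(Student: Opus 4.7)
The plan is to prove both inclusions using the universal property of $\sheq^T$ as the smallest equivalence relation satisfying the generating clause.

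The inclusion $\sheq^{T[\sigma]} \subseteq \sheq^T[\sigma]$ is easy. Since substitution preserves equivalence relations, $\sheq^T[\sigma]$ is an equivalence relation on $T[\sigma]$, so by minimality of $\sheq^{T[\sigma]}$ it suffices to verify the generating clause. Given $(W, \ctxpath{\var i}) \Dsez p : T[\sigma]\dsub{\delta(\facewkn{\var i})}$, the same data is a defining term of $T$ at $\sigma\delta(\facewkn{\var i})$, so the generator of $\sheq^T$ applied to it gives $\sheq^T\dsub{\sigma\delta(\facewkn{\var i})}(p, p\psub{0/\var i, \facewkn{\var i}})$, which unfolds to the desired $\sheq^T[\sigma]\dsub{\delta(\facewkn{\var i})}(p, p\psub{0/\var i, \facewkn{\var i}})$.

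For the converse $\sheq^T[\sigma] \subseteq \sheq^{T[\sigma]}$, I will apply the adjunction ``substitution $\dashv \forall_\sigma$'' to reduce the claim to $\sheq^T \subseteq \forall_\sigma \sheq^{T[\sigma]}$. Since $\forall_\sigma$ of an equivalence relation is again an equivalence relation (routine, directly from the definition as a universal quantification), minimality of $\sheq^T$ reduces this further to verifying the generating clause for $\forall_\sigma \sheq^{T[\sigma]}$. Unfolded: given $(W, \ctxpath{\var i}) \Dsez p : T\dsub{\gamma(\facewkn{\var i})}$, for every $\vfi : \PSub V {(W, \ctxpath{\var i})}$ and every $\delta : \DSub V \Delta$ with $\sigma\delta = \gamma(\facewkn{\var i})\vfi$, I must produce $\sheq^{T[\sigma]}\dsub \delta(p\psub \vfi, p\psub{(0/\var i, \facewkn{\var i})\vfi})$.

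This is the delicate step, where the naive approach that worked in the counterexample analysis of \cref{eg:shape-equivalence-relation} fails because $\delta$ need not be degenerate in any useful dimension. The rescue uses coherence cubes. Pick a fresh path variable $\var j \notin V$ and define $\vfi_j : \PSub{(V, \ctxpath{\var j})}{(W, \ctxpath{\var i})}$ to agree with $\vfi$ on $W$ and to send $\var i \mapsto \var j$. Set $q := p\psub{\vfi_j}$. A short face-map calculation shows $(\facewkn{\var i})\vfi_j = \vfi|_W (\facewkn{\var j})$ on the nose, hence $\gamma(\facewkn{\var i})\vfi_j = \sigma\delta(\facewkn{\var j})$, so $q : T[\sigma]\dsub{\delta(\facewkn{\var j})}$. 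The generating clause for $\sheq^{T[\sigma]}$ applied to $q$ gives $\sheq^{T[\sigma]}(q, q\psub{0/\var j, \facewkn{\var j}})$.

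To finish, write $k := \var i\psub \vfi \in \{0,1\} \cup V$ and consider $\vfi^* : \PSub V {(V, \ctxpath{\var j})}$ sending $\var j \mapsto k$ and acting as the identity on $V$. This is legitimate because $\var j$ is a path variable and path variables may be sent to any of $0, 1$, or to a bridge or path variable of $V$. Restricting both sides of the above equivalence along $\vfi^*$, one computes $q\psub{\vfi^*} = p\psub{\vfi_j\vfi^*} = p\psub \vfi$ (since $\vfi_j\vfi^*$ sends $\var i \mapsto k$ and agrees with $\vfi$ on $W$) and similarly $q\psub{0/\var j, \facewkn{\var j}}\psub{\vfi^*} = p\psub{(0/\var i, \facewkn{\var i})\vfi}$. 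Since $\delta(\facewkn{\var j})\vfi^* = \delta$, this yields precisely $\sheq^{T[\sigma]}\dsub \delta(p\psub \vfi, p\psub{(0/\var i, \facewkn{\var i})\vfi})$. The whole obstacle is the introduction of the fresh path variable $\var j$, which is unavailable in $\widehat{\RGcat}$ and is exactly what the counterexample isolates as the reason the analogous statement fails there.
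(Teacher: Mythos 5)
Your proof is correct and follows essentially the same strategy as the paper's: prove the easy inclusion $\supseteq$ directly, then reduce the hard inclusion $\subseteq$ to $\sheq^T \subseteq \forall_\sigma \sheq^{T[\sigma]}$ via the adjunction and verify the generating clause. The one cosmetic difference is in how you finish: the paper makes an explicit case distinction on $\var i \psub \vfi \in \{0\} \cup \{1\} \cup V$, assuming WLOG that $V$ is disjoint from $(W, \ctxpath{\var i})$ so that $\var i$ itself can be recycled as the fresh path variable; you instead introduce a genuinely fresh $\var j$, build the lifted face map $\vfi_j$ once, apply the generating clause for $\sheq^{T[\sigma]}$ at $(V, \ctxpath{\var j})$, and then uniformly restrict along $\vfi^*$ (which sends $\var j$ to whatever $k = \var i \psub \vfi$ happens to be). This absorbs the paper's three cases into a single computation — including the $k=0$ case, where the conclusion is reflexivity and your restriction reproduces that for free — and it makes the dependence on the extra dimension (unavailable in $\widehat{\RGcat}$) stand out cleanly. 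Both arguments rely on the same two facts: that $\forall_\sigma$ of an equivalence relation is again an equivalence relation, and that the generating clause of $\sheq^{T[\sigma]}$ can be instantiated at a one-higher-dimensional cube and then restricted.
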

\begin{proof}
	Pick a substitution $\sigma : \Delta \to \Gamma$ and a type $\Gamma \sez T \type$. We prove both inclusions.
	\begin{itemize}
		\item[$\supseteq$] It suffices to show that $\sheq^T[\sigma]$ satisfies the defining property of $\sheq^{T[\sigma]}$. Pick a path $(W, \ctxpath{\var i}) \Dsez p : T[\sigma] \dsub{\delta (\facewkn{\var i})}$. We have to show that $\sheq^T[\sigma]\dsub{\delta (\facewkn{\var i})} (p, p \psub{0 / \var i, \facewkn{\var i}})$. After composing $\sigma$ and $\delta (\facewkn{\var i})$, this follows immediately from the definition of $\sheq^T$.
		
		\item[$\subseteq$] We prove the equivalent statement that $\sheq^T \subseteq \forall_\sigma \sheq^{T[\sigma]}$. It suffices to show that the right hand side satisfies the defining property of $\sheq^T$. Pick a path $(W, \ctxpath{\var i}) \Dsez p : T \dsub{\gamma (\facewkn{\var i})}$. In order to show that $\forall_\sigma \sheq^{T[\sigma]} \dsub{\gamma (\facewkn{\var i})} (p, p \psub{0 / \var i, \facewkn{\var i}})$, we need to show for every $\vfi : \PSub{V}{(W, \ctxpath{\var i})}$ and every $\delta : \DSub{V}{\Delta}$ such that $\sigma \delta = \gamma (\facewkn{\var i}) \vfi$, that $\sheq^{T[\sigma]}\dsub{\delta}(p \psub \vfi, p \psub{0 / \var i, \facewkn{\var i}} \psub \vfi)$. We make a case distinction based on $\var i \psub \vfi$.
		\begin{description}
			\item[$\var i \psub \vfi = 0$] Then $\vfi = (0/\var i)\psi$ for some $\psi : \PSub V W$. Then we have to prove $\sheq^{T[\sigma]}\dsub \delta (p \psub{0/\var i} \psub \psi, p \psub{0/\var i} \psub \psi)$ which holds by reflexivity.
			
			\item[$\var i \psub \vfi = 1$] Then $\vfi = (1/\var i)\psi$ for some $\psi : \PSub V W$. Then we have to prove $\sheq^{T[\sigma]}\dsub \delta (p \psub{1/\var i} \psub \psi, p \psub{0/\var i} \psub \psi)$.
			Without loss of generality, we may assume that $\var i \not\in V$.
			Then we have a path $(V, \ctxpath{\var i}) \Dsez p \psub{\psi, \var i / \var i} : T\dsub{\gamma (\facewkn{\var i})(\psi, \var i/ \var i)}$. Now we have
			\begin{equation}
				\gamma (\facewkn{\var i})(\psi, \var i/ \var i) = \gamma (\facewkn{\var i}) (1/\var i) \psi (\facewkn{\var i}) = \sigma \delta (\facewkn{\var i}).
			\end{equation}
			Hence, we have $(V, \ctxpath{\var i}) \Dsez p \psub{\psi, \var i / \var i} : T[\sigma]\dsub{\delta(\facewkn{\var i})}$. Applying the definition of $\sheq^{T[\sigma]}$, we have $\sheq^{T[\sigma]} \dsub{\delta(\facewkn{\var i})} (p \psub{\psi, \var i / \var i}, p \psub{\psi, 0 / \var i, \facewkn{\var i}})$. Subsequently restricting by $(1/\var i) : \PSub{V}{(V, \ctxpath{\var i})}$ yields the desired result.
			
			\item[$\var i \psub \vfi \in V$] Without loss of generality, we may assume that $(W, \ctxpath{\var i})$ and $V$ are disjoint. Write $\var k = \var i \psub \vfi$ (and note that $\var k$ may be either a bridge or a path variable). Then $\vfi$ factors as $(\psi, \var i / \var i)(\var k / \var i) = (\var k / \var i)(\psi, \var k / \var k)$ for some $\psi : \PSub V W$. We have to prove $\sheq^{T[\sigma]} \dsub \delta(p \psub{\psi, \var i / \var i} \psub{\var k / \var i}, p \psub{\psi, \var 0 / \var i, \facewkn{\var i}} \psub{\var k/ \var i})$. This follows by restricting $\sheq^{T[\sigma]} \dsub{\delta(\facewkn{\var i})} (p \psub{\psi, \var i / \var i}, p \psub{\psi, 0 / \var i, \facewkn{\var i}})$, derived above, by $(\var k / \var i)$. \qedhere
		\end{description}
	\end{itemize}
\end{proof}
\begin{lemma}
	For any type $\Gamma \sez T \type$, we have $\sharp \sheq^T \subseteq \sheq^{\sharp T}$ and $\coshp \sheq^T = \sheq^{\coshp T}$.
\end{lemma}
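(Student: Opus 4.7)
The plan for both parts is the same: exhibit an equivalence relation that contains the generators of the target relation, then invoke the universal property of $\sheq^{\loch}$ as the smallest such.

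For $\sharp \sheq^T \subseteq \sheq^{\sharp T}$, I will define an auxiliary equivalence relation $E$ on $T$ by declaring $E \dsub \gamma(x,y)$ (where $\gamma : \DSub{V}{\Gamma}$) to hold iff $\sheq^{\sharp T} \dsub{\fpshadj \flat(\gamma \psi)}(\fpshadj \flat(x \psub \psi), \fpshadj \flat(y \psub \psi))$ holds for every $W \in \bpcubecat$ and every face map $\psi : \PSub{\flat W}{V}$. The universal quantification makes $E$ automatically stable under restriction, and the equivalence properties transfer straightforwardly from $\sheq^{\sharp T}$. Once $E$ is shown to contain the generators of $\sheq^T$, we get $\sheq^T \subseteq E$, and specializing $\psi = \id$ at contexts of the form $\flat W$ delivers the inclusion $\sharp \sheq^T \subseteq \sheq^{\sharp T}$.

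The generator verification is the main obstacle. Given a generating pair $(p,\, p \psub{0/\var j, \facewkn{\var j}})$ of $\sheq^T$ at $(V_0, \ctxpath{\var j})$ and a face map $\psi : \PSub{\flat W}{(V_0, \ctxpath{\var j})}$, I will case-split on the value $\var j \psub \psi$. If $\var j \psub \psi = 0$, both sides coincide after restriction and the relation holds by reflexivity. If $\var j \psub \psi = 1$, or $\var j \psub \psi = \var k$ for some variable $\var k$ of $\flat W$, the trick is to pick a fresh bridge variable $\var l$ and build $\tilde \psi : \PSub{(\flat W, \ctxbrid{\var l})}{(V_0, \ctxpath{\var j})}$ that sends $\var j$ to $\var l$ and agrees with $\psi$ on the remaining variables. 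The crucial observation is the identity $(\flat W, \ctxbrid{\var l}) = \flat(W, \ctxpath{\var l})$: it implies that $\fpshadj \flat(p \psub{\tilde \psi})$ is a defining term of $\sharp T$ at context $(W, \ctxpath{\var l})$ over a defining substitution degenerate in $\var l$. Invoking the generator rule for $\sheq^{\sharp T}$ at this context, and restricting the resulting equation by $(1/\var l)$ (resp.\ $(\var k/\var l)$) in $\PSub{W}{(W, \ctxpath{\var l})}$, delivers exactly the instance of $\sheq^{\sharp T}$ that $E$ requires.

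For $\coshp \sheq^T = \sheq^{\coshp T}$, one inclusion is essentially immediate: any generator of $\sheq^{\coshp T}$ at $(W, \ctxpath{\var i})$ has the form $(\fpshadj \sharp(p),\, \fpshadj \sharp(p \psub{0/\var i, \facewkn{\var i}}))$ with $p$ a path in $T$ at $\sharp(W, \ctxpath{\var i}) = (\sharp W, \ctxpath{\var i})$, and is therefore itself a generator of $\sheq^T$, putting the pair in $\coshp \sheq^T$. The other inclusion follows by the same auxiliary-relation argument as before, with $\flat W$ replaced by $\sharp W$ throughout and the fresh auxiliary variable $\var l$ chosen as a \emph{path} rather than a bridge; this is legal precisely because $\sharp W$ has only path dimensions, so $(\sharp W, \ctxpath{\var l}) = \sharp(W, \ctxpath{\var l})$ and the generator rule for $\sheq^{\coshp T}$ at $(W, \ctxpath{\var l})$ applies in the same way. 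The subsidiary bookkeeping obstacle in both parts is to track the action of $\flat$ and $\sharp$ on the face maps $(0/\var j, \facewkn{\var j})$, $(1/\var l)$, and $(\var k/\var l)$, which must be checked carefully so that the generator rules for $\sheq^{\sharp T}$ and $\sheq^{\coshp T}$ precisely match the degenerations being manipulated on the $T$-side.
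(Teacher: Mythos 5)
Your proof is correct and follows essentially the same strategy as the paper's: the relation $E$ you define is exactly $\forall_\flat \sheq^{\sharp T}$ (resp.\ $\forall_\sharp \sheq^{\coshp T}$), and both you and the paper reduce the inclusion to showing that this right adjoint contains the generators of $\sheq^T$, then handle the generator verification by case analysis on $\var j \psub \psi$, using the identity $\flat(W, \ctxpath{\var l}) = (\flat W, \ctxbrid{\var l})$ (resp.\ $\sharp(W, \ctxpath{\var l}) = (\sharp W, \ctxpath{\var l})$) to relabel a bridge (resp.\ path) in $T$ as a path in $\sharp T$ (resp.\ $\coshp T$) over a degenerate defining substitution. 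Your $\supseteq$ argument for $\coshp$ is also the paper's; the only stylistic difference is that you introduce a fresh variable $\var l$ where the paper writes ``without loss of generality, $\var i \not\in V$'' and reuses $\var i$.
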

This is in line with the intuition that $\sharp T$ has more paths than $T$, whereas $\coshp T$ has the same path relation as $T$.
\begin{proof}
	We first prove $\sharp \sheq^T \subseteq \sheq^{\sharp T}$.
	\begin{itemize}
		\item[$\subseteq$] We prove the equivalent statement that $\sheq^T \subseteq \forall_\flat \sheq^{\sharp T}$. It suffices to show that the right hand side satisfies the defining property of $\sheq^T$. Pick a path $(W, \ctxpath{\var i}) \Dsez p : T \dsub{\gamma (\facewkn{\var i})}$. In order to show that $\forall_\flat \sheq^{\sharp T} \dsub{\gamma (\facewkn{\var i})} (p, p \psub{0 / \var i, \facewkn{\var i}})$, we need to show for every $\vfi : \PSub{\flat V}{(W, \ctxpath{\var i})}$ that $\sheq^{\sharp T}\dsub{\fpshadj \flat (\gamma (\facewkn{\var i}) \vfi)}(\fpshadj \flat(p \psub \vfi), \fpshadj \flat(p \psub{0 / \var i, \facewkn{\var i}} \psub \vfi))$. We make a case distinction based on $\var i \psub \vfi$.
		\begin{description}
			\item[$\var i \psub \vfi = 0$] Then $\vfi = (0/\var i)\psi$ for some $\psi : \PSub {\flat V} W$. Then we have to prove $\sheq^{\sharp T}\dsub{\fpshadj \flat (\gamma \psi)}(\fpshadj \flat(p \psub{0/\var i} \psub \psi), \fpshadj \flat(p \psub{0 / \var i} \psub \psi))$ which holds by reflexivity.
			
			\item[$\var i \psub \vfi = 1$] Then $\vfi = (1/\var i)\psi$ for some $\psi : \PSub {\flat V} W$. Then we have to prove $\sheq^{\sharp T}\dsub{\fpshadj \flat (\gamma \psi)}(\fpshadj \flat(p \psub{1/\var i} \psub \psi), \fpshadj \flat(p \psub{0 / \var i} \psub \psi))$.
			Without loss of generality, we may assume that $\var i \not\in V$.
			Then we have a bridge $(\flat V, \ctxbrid{\var i}) \Dsez p \psub{\psi, \var i / \var i} : T\dsub{\gamma \psi}$ and hence a path $(V, \ctxpath{\var i}) \Dsez \fpshadj \flat(p \psub{\psi, \var i / \var i}) : (\sharp T) \dsub{\fpshadj \flat(\gamma \psi)}$. Applying the definition of $\sheq^{\sharp T}$, we have $\sheq^{\sharp T} \dsub{\fpshadj \flat(\gamma \psi)} (\fpshadj \flat(p \psub{\psi, \var i / \var i}), \fpshadj \flat (p \psub{\psi, 0 / \var i, \facewkn{\var i}}))$. Subsequently restricting by $(1/\var i) : \PSub{V}{(V, \ctxpath{\var i})}$ yields the desired result.
			
			\item[$\var i \psub \vfi \in V$] Analogous.
		\end{description}
	\end{itemize}
	We now prove $\coshp \sheq^T = \sheq^{\coshp T}$ by proving both inclusions.
	\begin{itemize}
		\item[$\supseteq$] It suffices to show that $\coshp \sheq^T$ satisfies the defining property of $\sheq^{\coshp T}$. Pick a path $(W, \ctxpath{\var i}) \Dsez \fpshadj \sharp(p) : \coshp T \dsub{\fpshadj \sharp(\gamma) \circ (\facewkn{\var i})}$. We have to show that $\coshp \sheq^T \dsub{\fpshadj \sharp(\gamma) \circ (\facewkn{\var i})} (\fpshadj \sharp(p), \fpshadj \sharp(p) \psub{0/\var i, \facewkn{\var i}})$, i.e. $\sheq^T \dsub{\gamma (\facewkn{\var i})} (p, p \psub{0/\var i, \facewkn{\var i}})$. Note that $p$ has type $(\sharp W, \ctxpath{\var i}) \Dsez p : T \dsub{\gamma (\facewkn{\var i})}$, i.e.\ it is a path. So this follows immediately from the definition of $\sheq^T$. (We could not prove the inclusion $\sharp \sheq^T \supseteq \sheq^{\sharp T}$ because we would have to apply $\flat$ to the primitive context, finding that $p$ is only a bridge.)
		\item[$\subseteq$] The proof for $\sharp$ can be copied almost verbatim. \qedhere
	\end{itemize}
\end{proof}

\subsubsection{The shape of a type}\label{sec:tyshp}
\begin{definition}
	Given a type $\Gamma \sez T \type$, we define the discrete type $\Gamma \sez \quotshp T \dtype$ as $\quotshp T = T / \sheq^T$.
\end{definition}
This definition commutes with substitution, as $(\quotshp T)[\sigma] = (T/\sheq^T)[\sigma] = T[\sigma]/\sheq^T[\sigma] = T[\sigma]/\sheq^{T[\sigma]} = \quotshp(T[\sigma])$.
\begin{proposition}\label{thm:hatinquotshp}
	Given $\Gamma \sez T \type$, we have
	\begin{equation}
		\begin{array}{l l l}
			\Gamma \sez \hatinquotshp : T \to \quotshp T, \\
			\sharp \Gamma \sez \sharp \hatinquotshp : \sharp T \to \sharp \quotshp T, &
			\qquad &
			\sharp \Gamma \sez (\hatinquotshp \sharp \hatinquotshp\inv) : \sharp \quotshp T \to \quotshp \sharp T, \\
			\coshp \Gamma \sez \coshp \hatinquotshp : \coshp T \to \coshp \quotshp T, &
			\qquad &
			\coshp \Gamma \sez \coshp \quotshp T = \quotshp \coshp T \type.
		\end{array}
	\end{equation}
	naturally in $\Gamma$.
	We have commutative diagrams
	\begin{equation}
		\xymatrix{
			\sharp T \ar[d]_{\sharp \hatinquotshp} \ar[r]^{\hatinquotshp}
			& {\quotshp \sharp T} \\
			{\sharp \quotshp T} \ar[ru]_{(\hatinquotshp \sharp \hatinquotshp\inv)}
		} \qquad
		\xymatrix{
			\coshp T \ar[d]_{\coshp \hatinquotshp} \ar[r]^{\hatinquotshp}
			& {\quotshp \coshp T} \\
			{\coshp \quotshp T} \ar@{=}[ru]
		}
	\end{equation}
	If $T$ is discrete, then $\hatinquotshp$, $\sharp \hatinquotshp$ and $\coshp \hatinquotshp$ are also invertible.
\end{proposition}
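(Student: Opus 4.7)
The plan is to build each piece of the claim directly from the definition $\quotshp T = T/\sheq^T$ and the identities on $\sheq^T$ proved in the previous subsection. First I define $\hatinquotshp$ as the canonical quotient map $t \mapsto \overline{t}$; at the level of dependent presheaves this is the universal surjection onto the quotient, so it is natural in the index cube $W$ and therefore internalizes as a term $\Gamma \sez \hatinquotshp : T \to \quotshp T$. Naturality in $\Gamma$ is immediate because $\quotshp$ was just shown to commute with substitution. The terms $\sharp \hatinquotshp$ and $\coshp \hatinquotshp$ are obtained by applying the functors $\sharp$ and $\coshp$ pointwise to the natural family of functions underlying $\hatinquotshp$, which yields natural families $\sharp T \dsub\gamma \to \sharp \quotshp T \dsub\gamma$ and $\coshp T \dsub\gamma \to \coshp \quotshp T \dsub\gamma$ internalizing as the advertised terms.

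Next, the identity $\coshp \quotshp T = \quotshp \coshp T$ reduces to the identity $\coshp \sheq^T = \sheq^{\coshp T}$ already established. The extra ingredient needed is that lifted functors act on dependent presheaves purely by relabelling defining terms, and therefore commute with quotients on the nose: $\fpsh F(T/E) = \fpsh F T / \fpsh F E$, with $\fpsh F E$ as defined in the earlier subsection on equivalence relations. Combining the two gives
\[
\coshp \quotshp T \;=\; \coshp T / \coshp \sheq^T \;=\; \coshp T / \sheq^{\coshp T} \;=\; \quotshp \coshp T.
\]
The same relabelling argument identifies $\sharp \quotshp T$ with $\sharp T / \sharp \sheq^T$, and the inclusion $\sharp \sheq^T \subseteq \sheq^{\sharp T}$ then induces a well-defined further quotient $\sharp T / \sharp \sheq^T \to \sharp T / \sheq^{\sharp T} = \quotshp \sharp T$, which I take as the definition of $(\hatinquotshp \sharp \hatinquotshp\inv)$; the notation itself encodes that one picks a representative $\sharp t$ of a class $\sharp \overline t$ and then re-quotients by $\sheq^{\sharp T}$.

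With these ingredients in place the remaining claims are routine. The first diagram commutes because both legs send a representative $\sharp t$ to $\overline{\sharp t} \in \quotshp \sharp T$: directly along the top, and via $\sharp \overline t \in \sharp \quotshp T$ along the bottom. The second diagram is exactly the identity $\coshp \quotshp T = \quotshp \coshp T$ post-composed with $\hatinquotshp$. For invertibility, if $T$ is discrete then every path $p : T \dsub{\gamma(\facewkn{\var i})}$ already satisfies $p = p \psub{0/\var i, \facewkn{\var i}}$, so every generator of $\sheq^T$ is a reflexive pair, $\sheq^T$ collapses to the diagonal, and $\hatinquotshp$ is the identity; its images under the functors $\sharp$ and $\coshp$ are then invertible as well.

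The step I expect to be the main obstacle is verifying that $\fpsh F$ commutes with quotients of dependent presheaves --- conceptually this is just relabelling of defining terms, but spelling out the bookkeeping (how $\fpsh F E$ is defined, and how its substitution and restriction behave with respect to $F$) needs a little care. Once that is in hand, every other assertion follows directly from the identities $\coshp \sheq^T = \sheq^{\coshp T}$ and $\sharp \sheq^T \subseteq \sheq^{\sharp T}$ already available.
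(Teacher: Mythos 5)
Your proof is correct and follows essentially the same route as the paper's: you define $\hatinquotshp$ as the quotient map $t \mapsto \overline t$, obtain the $\sharp$- and $\coshp$-variants by pointwise application (the paper internalizes this as $\lambda(\ftrtm{\sharp}{(\ap\,\hatinquotshp)})$ etc., which is the same idea made formal), derive everything else from $\fpsh F(T/E) = \fpsh F T/\fpsh F E$ together with the already-proved inclusion $\sharp\sheq^T \subseteq \sheq^{\sharp T}$ and equality $\coshp\sheq^T = \sheq^{\coshp T}$, and conclude invertibility by showing discreteness forces $\sheq^T$ down to the diagonal. The "obstacle" you flag --- that lifted functors commute with quotients on the nose --- is indeed used silently by the paper as well and is a harmless relabelling fact, not a gap.
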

\begin{proof}
	Recall from \cref{sec:psh-pi-types} that a function $\Gamma \sez f : \Pi A B$ is fully determined if we know $W \Dsez f \dsub \gamma \cdot a : B \dsub{\gamma, a}$ for every $W$, $\gamma : \DSub W \Gamma$ and $W \Dsez a : A \dsub \gamma$.
	
	We set $\hatinquotshp \dsub \gamma \cdot t := \overline t$. This is well-defined because $\hatinquotshp \dsub{\gamma \vfi} \cdot (t \psub \vfi) = \overline{t \psub \vfi} = \overline t \psub \vfi = (\hatinquotshp \dsub \gamma \cdot t) \psub \vfi$.
	
	We define $\sharp \hatinquotshp := \lambda(\ftrtm{\sharp}{(\ap\,\hatinquotshp)})$ and $\coshp \hatinquotshp := \lambda(\ftrtm{\coshp}{(\ap\,\hatinquotshp)})$. Then we have (twice using the fact that labels can be ignored)
	\begin{align*}
		(\sharp \hatinquotshp) \dsub{\fpshadj \flat(\gamma)} \cdot \fpshadj \flat(t)
		&= \ftrtm{\sharp}{(\ap\,\hatinquotshp)} \dsub{\fpshadj \flat(\gamma), \fpshadj \flat(t)}
		= \ftrtm{\sharp}{(\ap\,\hatinquotshp)} \dsub{\fpshadj \flat(\gamma, t)}
		= \fpshadj \flat(\ap\,\hatinquotshp \dsub{\gamma, t}) \\
		&= \fpshadj \flat(\hatinquotshp \dsub \gamma \cdot t)
		= \fpshadj \flat(\overline t) = \overline{\fpshadj \flat(t)},
	\end{align*}
	i.e. $(\sharp \hatinquotshp) \dsub \gamma \cdot t = \overline t$. Similarly, we find $(\coshp \hatinquotshp) \dsub \gamma \cdot t = \overline t$.
	
	Note that $\sharp \quotshp T = \sharp(T/\sheq^T) = \sharp T / \sharp \sheq^T$ and $\quotshp \sharp T = \sharp T / \sheq^{\sharp T}$. Since $\sharp \sheq^T \subseteq \sheq^{\sharp T}$, we can define $(\hatinquotshp \sharp \hatinquotshp\inv) \dsub \gamma \cdot \overline t := \overline t$. Then the first commuting diagram is clear.
	
	Also note that $\coshp \quotshp T = \coshp(T/\sheq^T) = \coshp T/\coshp \sheq^T = \coshp T/\sheq^{\coshp T} = \quotshp \coshp T$. The second commuting diagram is then also clear.
	
	Now suppose that $T$ is discrete. We show that $\tmshp\loch$ is an isomorphism by showing that every equivalence class of $\sheq^T$ is a singleton. This is equivalent to saying that $\sheq^T$ is the equality relation. Clearly, the equality relation is the weakest of all equivalence relations, so it suffices to show that the equality relation satisfies the defining property of $\sheq^T$. But that is precisely the statement that $T$ is discrete.
	
	Furthermore, if $\sheq^T$ is the equality relation, then so are $\sharp \sheq^T$ and $\coshp \sheq^T$. Hence, $\sharp \hatinquotshp$ and $\coshp \hatinquotshp$ will also be invertible.
\end{proof}
\begin{remark}
	Substitutions can be applied to the functions $\sharp \hatinquotshp$ and $\coshp \hatinquotshp$ from \cref{thm:hatinquotshp}, moving them to non-$\sharp$ or non-$\coshp$ contexts. We will omit those substitutions, writing e.g. $\Delta \sez \sharp \hatinquotshp : (\sharp T)[\sigma] \to (\sharp \quotshp T)[\sigma]$.
\end{remark}
\begin{lemma}\label{thm:elim-quotshp}
	For discrete types $T$ living in the appropriate context, we have \emph{invertible} rules
	\begin{equation}
		\binference{
			\Gamma, \var x : \quotshp S \sez t : T
		}{
			\Gamma, \var x : S \sez
			t[\wknvar x, \hatinquotshp(\var x)/\var x]
			: T[\wknvar x, \hatinquotshp(\var x)/\var x]
		}{}.
	\end{equation}
	\begin{equation}
		\binference{
			\Gamma, \ftrtm \sharp {\var x} : (\sharp \quotshp S) [\sigma] \sez t : T
		}{
			\Gamma, \ftrtm \sharp {\var x} : (\sharp S) [\sigma] \sez
			t[\wkn{\ftrtm \sharp {\var x}}, (\sharp \hatinquotshp)(\ftrtm \sharp {\var x})/\ftrtm \sharp {\var x}]
			: T[\wkn{\ftrtm \sharp {\var x}}, (\sharp \hatinquotshp)(\ftrtm \sharp {\var x})/\ftrtm \sharp {\var x}]
		}{}.
	\end{equation}
	\begin{equation}
		\binference{
			\Gamma, \ftrtm \coshp {\var x} : (\coshp \quotshp S) [\sigma] \sez t : T
		}{
			\Gamma, \ftrtm \coshp {\var x} : (\coshp S) [\sigma] \sez
			t[\wkn{\ftrtm \coshp {\var x}}, (\coshp \hatinquotshp)(\ftrtm \coshp {\var x})/\ftrtm \sharp {\var x}]
			: T[\wkn{\ftrtm \coshp {\var x}}, (\coshp \hatinquotshp)(\ftrtm \coshp {\var x})/\ftrtm \coshp {\var x}]
		}{}.
	\end{equation}
	The downward direction is each time a straightforward instance of substitution and hence natural in $\Gamma$. The inverse is then automatically also natural in $\Gamma$.
\end{lemma}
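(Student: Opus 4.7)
The plan is to exhibit the inverse of each downward map; the downward direction is routine substitution and is natural in $\Gamma$ automatically, so any inverse will inherit naturality.

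For the non-modal rule, given a term $\Gamma, \var x : S \sez t' : T[\wknvar x, \hatinquotshp(\var x)/\var x]$, I define its lift $t$ over $\Gamma, \var x : \quotshp S$ by choosing representatives: for $W \Dsez (\gamma, \overline s) : \Gamma.\quotshp S$, pick any $s$ representing $\overline s$ and set $t\dsub{\gamma, \overline s} := t'\dsub{\gamma, s}$. The types match since $T[\wknvar x, \hatinquotshp(\var x)/\var x]\dsub{\gamma, s} = T\dsub{\gamma, \overline s}$. The key step is independence of the choice of representative. Since $\sheq^S$ is generated, as an equivalence relation, by pairs $(p, p\psub{0/\var i, \facewkn{\var i}})$ for paths $(W, \ctxpath{\var i}) \Dsez p : S\dsub{\gamma(\facewkn{\var i})}$, it suffices to check the equality $t'\dsub{\gamma(\facewkn{\var i}), p} = t'\dsub{\gamma(\facewkn{\var i}), p\psub{0/\var i, \facewkn{\var i}}}$ for such a generator. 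My plan is to observe that the underlying substitution $(\gamma(\facewkn{\var i}), \overline p) : \DSub{(W, \ctxpath{\var i})}{\Gamma.\quotshp S}$ is degenerate in $\var i$: the first component is degenerate trivially, and the second because the relation $\overline p = \overline{p\psub{0/\var i, \facewkn{\var i}}}$ forces $\overline p$ to factor through $(\facewkn{\var i})$. Discreteness of $T$ then forces $t'\dsub{\gamma(\facewkn{\var i}), p}$ itself to be degenerate in $\var i$, and a single restriction along $(0/\var i, \facewkn{\var i})$ together with naturality of $t'$ delivers the required equality. Naturality of $t$ in the primitive context follows from naturality of $t'$, and the two round-trip equalities unfold directly from $\hatinquotshp\dsub{\gamma, s} \cdot s = \overline s$.

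For the $\sharp$-rule and $\coshp$-rule I would apply the same template with the appropriate modality labels inserted. The $\coshp$-case should be the simpler one: \cref{thm:hatinquotshp} gives $\coshp\quotshp S = \quotshp \coshp S$, and \cref{thm:coshp-preserves-discreteness} tells us that $\coshp$ preserves discreteness, so the rule reduces essentially to the non-modal one applied to $\coshp S$. The $\sharp$-case is what I expect to be the principal obstacle: only an inclusion $\sharp\sheq^S \subseteq \sheq^{\sharp S}$ holds, and $\sharp = \fpsh\flat$ reinterprets bridge dimensions as path dimensions, so I must carefully track which substitutions become $\var i$-degenerate after applying the modality. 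The same strategy should survive, since $\sharp\quotshp S = \sharp S / \sharp\sheq^S$ and a generator of $\sharp\sheq^S$ has the form $(\fpshadj\flat(p), \fpshadj\flat(p\psub{0/\var i, \facewkn{\var i}}))$ for an underlying path $p$ in $S$; the degeneracy-and-discreteness argument applied in the $\sharp$-extended context then handles it, but the bookkeeping around the labels $\fpshadj\flat(\loch)$ and the distinction between paths in $\sharp S$ and bridges in $S$ is what will require care.
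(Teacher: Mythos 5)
Rules 1 and 3 in your sketch match the paper's argument essentially verbatim: for Rule 1, the paper also checks the degeneracy generator $(p, p\psub{0/\var i, \facewkn{\var i}})$ directly, concluding that $(\gamma(\facewkn{\var i}), \hatinquotshp(s))$ is degenerate in $\var i$ by discreteness of $\quotshp S$ and then applying discreteness of $T$; and for Rule 3, the paper likewise reduces to Rule 1 via $\coshp\quotshp S = \quotshp\coshp S$ and $\coshp\hatinquotshp = \hatinquotshp$.

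For Rule 2, however, there is a concrete flaw. You write that ``a generator of $\sharp\sheq^S$ has the form $(\fpshadj\flat(p), \fpshadj\flat(p\psub{0/\var i, \facewkn{\var i}}))$ for an underlying path $p$ in $S$'' and propose to check these in the $\sharp$-extended context. But such generators do not exist: the underlying $p$ lives over a primitive context $(V, \ctxpath{\var i})$ with a genuine path dimension, and no primitive context $W'$ satisfies $\flat W' = (V, \ctxpath{\var i})$, since $\flat$ has only bridge-dimensional shapes in its image. So $\sharp\sheq^S$, restricted to where it is actually inhabited, is \emph{not} generated by path degeneracies; whatever lands in $\sharp\sheq^S$ at a bridge-only shape arrives via \emph{restrictions} (e.g.\ $\var j^\IB/\var i^\IP$) of degeneracy generators at shapes that you cannot name in the $\sharp$-world. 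The paper resolves this by not trying to characterize generators of $\sharp\sheq^S$ at all; it defines $U\dsub\gamma(r,s) :\Leftrightarrow u\dsub{\gamma,r} = u\dsub{\gamma,s}$ and uses the adjoints $\forall_\flat$ and $\forall_\sigma$ to translate the desired inclusion $(\sharp\sheq^S)[\sigma] \subseteq U$ into $\sheq^S \subseteq \forall_\flat\forall_\sigma U$, where the generators of $\sheq^S$ \emph{do} live over their native path-containing shapes. Only at that point does the discreteness-of-$T$ argument run, after a rather delicate unwinding of $\forall_\flat\forall_\sigma$ that introduces the intermediate shapes $(V, \ctxpath{\var i})$, $(Y, \ctxpath{\var i})$ and new variables $\omega, \chi, j, k$. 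You have correctly identified that $\sharp\sheq^S \subsetneq \sheq^{\sharp S}$ is the source of difficulty, and your instinct that bookkeeping will be needed is right, but the plan as written stalls at the generator step and is missing the $\forall$-adjoint machinery that makes the rest possible.
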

\begin{proof}
	\begin{description}
		\item[Rule 1] To show that the first rule is invertible, we pick a term $\Gamma, \var x : S \sez u : T[\wknvar x, \hatinquotshp(\var x)/\var x]$ and show that it factors over $(\wknvar x, \hatinquotshp(\var x)/ \var x) : (\Gamma, \var x : S) \to (\Gamma, \var x : \quotshp S)$. So pick a path $(W, \ctxpath{\var i}) \Dsez s : S \dsub{\gamma(\facewkn{\var i})}$ that becomes degenerate in $\quotshp S$. We have to show that $u \dsub{\gamma (\facewkn{\var i}), s} = u \dsub{\gamma (\facewkn{\var i}), s \psub{0/\var i, \facewkn{\var i}}}$. Note that both live in $T \dsub{\gamma (\facewkn{\var i}), \hatinquotshp(s)}$.
	
		Now, the defining substitution $(\gamma (\facewkn{\var i}), \hatinquotshp(s))$ is degenerate in $\var i$ because this is obvious for the first component and then degeneracy of the second component follows from discreteness of $\quotshp S$. Hence, $u \dsub{\gamma (\facewkn{\var i}), s}$ is degenerate as a defining term of type $T$, meaning that
		\begin{equation}
			u \dsub{\gamma (\facewkn{\var i}), s} = u \dsub{\gamma (\facewkn{\var i}), s} \psub{0/\var i, \facewkn{\var i}} = u \dsub{\gamma (\facewkn{\var i}), s \psub{0/\var i, \facewkn{\var i}}}.
		\end{equation}
		\item[Rule 2] Note that $(\sharp \quotshp S)[\sigma] = (\sharp(S/\sheq^S))[\sigma] = (\sharp S/\sharp \sheq^S)[\sigma] = (\sharp S)[\sigma]/(\sharp \sheq^S)[\sigma]$.
		To show that the second rule is invertible, we pick a term
		$\Gamma, \ftrtm \sharp {\var x} : (\sharp S) [\sigma] \sez u : T[\wkn{\ftrtm \sharp {\var x}}, (\sharp \hatinquotshp)(\ftrtm \sharp {\var x})/\ftrtm \sharp {\var x}]$ and show that it factors over $(\wkn{\ftrtm \sharp {\var x}}, (\sharp \hatinquotshp)(\ftrtm \sharp {\var x})/\ftrtm \sharp {\var x}) : (\Gamma, \ftrtm \sharp {\var x} : (\sharp S) [\sigma]) \to (\Gamma, \ftrtm \sharp {\var x} : (\sharp \quotshp S) [\sigma])$. To that end, we need to show that whenever $(\sharp \sheq^{S}) [\sigma] \dsub \gamma(r, s)$, we also have $u \dsub{\gamma, r} = u \dsub{\gamma, s}$. Let us write $U \dsub \gamma(r, s)$ for $u \dsub{\gamma, r} = u \dsub{\gamma, s}$. This is easily seen to be an equivalence relation on $(\sharp S)[\sigma]$. So we need to prove $(\sharp \sheq^{S}) [\sigma] \subseteq U$, or equivalently $\sheq^S \subseteq \forall_\flat \forall_\sigma U$.
		
		Let $\Delta$ be the context of $S$, i.e. $\Delta \sez S \type$ and $\sigma : \Gamma \to \sharp \Delta$. It is sufficient to show that $\forall_\flat \forall_\sigma U$ satisfies the defining property of $\sheq^S$. So pick a path $(W, \ctxpath{\var i}) \Dsez p : S \dsub{\delta(\facewkn{\var i})}$. Write $q = p \psub{0/\var i, \facewkn{\var i}}$. We have to show $\forall_\flat \forall_\sigma U \dsub{\delta(\facewkn{\var i})}(p, q)$. So pick $\vfi : \PSub{\flat V}{(W, \ctxpath{\var i})}$; then we have to show $\forall_\sigma U \dsub{\fpshadj \flat(\delta (\facewkn{\var i}) \vfi)} (\fpshadj \flat(p \psub \vfi), \fpshadj \flat(q \psub \vfi))$.
		
		Without loss of generality, we may assume that $V$ and $(W, \ctxpath{\var i})$ are disjoint. Write $k = \var i \psub \vfi \in V \uplus \accol{0, 1}$. Then $\vfi$ factors as $(\psi, \var i^\IB / \var i^\IP)(k / \var i^\IB)$ for some $\psi : \PSub{\flat V}{W}$. Because $\forall_\sigma U$ respects restriction by $(k / \var i^\IP)$ and because $\flat (k / \var i^\IP) = (k / \var i^\IB)$, it is then sufficient to show that
		\begin{equation}
			\forall_\sigma U \dsub{\fpshadj \flat(\delta (\facewkn{\var i}) (\psi, \var i^\IB / \var i^\IP))} (\fpshadj \flat (p \psub{\psi, \var i^\IB / \var i^\IP}), \fpshadj \flat (q \psub{\psi, \var i^\IB / \var i^\IP}))
		\end{equation}
		which simplifies to
		\begin{equation}
			\forall_\sigma U\dsub{\fpshadj \flat(\delta \psi)(\facewkn{\var i^\IP})}(\fpshadj \flat(p \psub{\psi, \var i^\IB / \var i^\IP}), \fpshadj \flat(q \psub{\psi, \var i^\IB / \var i^\IP})).
		\end{equation}
		Write
		\begin{align*}
			\delta' &:= \fpshadj \flat(\delta \psi) : \DSub{V}{\sharp \Delta}, \\
			(V, \ctxpath{\var i}) \Dsez p' &:= \fpshadj \flat(p \psub{\psi, \var i^\IB / \var i^\IP}) : (\sharp S) \dsub{\delta' (\facewkn{\var i})}, \\
			(V, \ctxpath{\var i}) \Dsez q' &:= \fpshadj \flat(q \psub{\psi, \var i^\IB / \var i^\IP}) : (\sharp S) \dsub{\delta' (\facewkn{\var i})},
		\end{align*}
		which satisfies
		\begin{align}
			(V, \ctxpath{\var i}) \Dsez \overline{p'} &= \overline{q'} : (\sharp \quotshp S) \dsub{\delta' (\facewkn{\var i})}, \label{eq:pf-left-quotshp-1}\\
			(V, \ctxpath{\var i}) \Dsez q' &= p' \psub{0 / \var i, \facewkn{\var i}} : (\sharp \quotshp S) \dsub{\delta' (\facewkn{\var i})}. \label{eq:pf-left-quotshp-2}
		\end{align}
		Then we can further simplify to $\forall_\sigma U \dsub{\delta' (\facewkn{\var i})}(p', q')$.
		
		So pick $\chi : \PSub{Y}{(V, \ctxpath{\var i})}$ and $\gamma : \DSub Y \Gamma$ so that $\sigma \gamma = \delta' (\facewkn{\var i}) \chi$. We have to prove $U \dsub{\gamma}(p' \psub \chi, q' \psub \chi)$. Again, without loss of generality, we may assume that $Y$ and $(V, \ctxpath{\var i})$ are disjoint. Then again, $\chi$ factors as $(\omega, \var i / \var i)(j / \var i)$ for some $\omega : \PSub Y V$, where $j = \var i \psub \chi$. We claim that it then suffices to show that $U \dsub{\gamma (\facewkn{\var i})}(p' \psub{\omega, \var i / \var i}, q' \psub{\omega, \var i / \var i})$. First, note that this is well-typed, i.e.
		\begin{equation}
			(Y, \ctxpath{\var i}) \Dsez p' \psub{\omega, \var i / \var i}, q' \psub{\omega, \var i / \var i} : (\sharp S)[\sigma]\dsub{\gamma(\facewkn{\var i})}
		\end{equation}
		because $\delta'(\facewkn{\var i})(\omega, \var i/\var i) = \delta'(\facewkn{\var i})(\omega, \var i/\var i)(j / \var i)(\facewkn{\var i}) = \delta'(\facewkn{\var i})\chi(\facewkn{\var i}) = \gamma(\facewkn{\var i})$. Second, if we further restrict the anticipated result by $(j / \var i)$, then we do obtain $U \dsub \gamma(p' \psub \chi, q' \psub \chi)$.
		
		So it remains to prove that $U \dsub{\gamma (\facewkn{\var i})}(p' \psub{\omega, \var i / \var i}, q' \psub{\omega, \var i / \var i})$, i.e.
		\begin{equation}
			(Y, \ctxpath{\var i}) \Dsez u \dsub{\gamma (\facewkn{\var i}), p' \psub{\omega, \var i / \var i}} = u \dsub{\gamma (\facewkn{\var i}), q' \psub{\omega, \var i / \var i}} : T\dsub{\gamma(\facewkn{\var i}), \overline{p' \psub{\omega, \var i / \var i}}},
		\end{equation}
		which is well-typed by \cref{eq:pf-left-quotshp-1}. The combination of \cref{eq:pf-left-quotshp-1} and \cref{eq:pf-left-quotshp-2} tells us that $\overline{p' \psub{\omega, \var i / \var i}}$ is degenerate in $\var i$. Hence, by discreteness of $T$, we have
		\begin{equation}
			u \dsub{\gamma (\facewkn{\var i}), p' \psub{\omega, \var i / \var i}}
			= u \dsub{\gamma (\facewkn{\var i}), p' \psub{\omega, \var i / \var i}} \psub{0/\var i, \facewkn{\var i}}
			= u \dsub{\gamma (\facewkn{\var i}), q' \psub{\omega, \var i / \var i}}.
		\end{equation}
		
		\item[Rule 3] Since $(\coshp \quotshp S)[\sigma] = (\quotshp \coshp S)[\sigma] = \quotshp((\coshp S)[\sigma])$, and $\coshp \hatinquotshp = \hatinquotshp$, the third rule is a special case of the first rule. \qedhere
	\end{description}
\end{proof}

\section{Universes of discrete types}
In \cref{sec:uniNDD} we give a straightforward definition of a sequence of universes that classify discrete types. Unfortunately, these universes are themselves not discrete, so that they do not contain their lower-level counterparts. In \cref{sec:uniDD-discussion} we discuss the problem and define a hierarchy of discrete universes of discrete types.
As of this point, we will write $\uniPsh_\ell$ for the standard presheaf universe $\uni \ell$.

\subsection{Non-discrete universes of discrete types}\label{sec:uniNDD}
In any presheaf model, we have a hierarchy of universes $\uniPsh_{\ell}$ such that
\begin{equation}
	\inference{\Gamma \ctx}{\Gamma \sez \uniPsh_\ell \type_{\ell+1}}{}, \qquad
	\binference{\Gamma \sez A : \uniPsh_\ell}{\Gamma \sez \El\,A \type_\ell}{}.
\end{equation}
In this section, we will devise a sequence of universes $\uniNDD_\ell$ such that
\begin{equation}
	\inference{\Gamma \ctx}{\Gamma \sez \uniNDD \type_{\ell+1}}{}, \qquad
	\binference{\Gamma \sez A : \uniNDD_\ell}{\Gamma \sez \El\,A \dtype_\ell}{},
\end{equation}
that is: $\uniNDD_\ell$ classifies discrete types of level $\ell$, but it is itself non-discrete. In \cref{sec:uniDD-discussion}, we will devise a universe that is itself discrete, and that in an unusual way classifies all discrete types.
\begin{proposition}
	The CwF $\widehat{\bpcubecat}$ supports a universe for $\DTy_\ell$, the functor that maps a context $\Gamma$ to its set of discrete level $\ell$ types $\Gamma \sez T \dtype_\ell$.
\end{proposition}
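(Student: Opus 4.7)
The plan is to adapt the proof of \cref{thm:unipsh} by restricting the underlying functor of types from $\Ty_\ell$ to $\DTy_\ell$. Concretely, for $\gamma : \DSub{W}{\Gamma}$ I would set
\[
\uniNDD_\ell \dsub \gamma = \set{\dtycode{T}}{\yoneda W \sez T \dtype_\ell} \cong \DTy_\ell(\yoneda W),
\]
with restriction $\dtycode T \psub \vfi = \dtycode{T \sub \vfi}$; this is well-defined because discreteness of types is preserved under substitution, as already verified when building $\bpdisc$, and it is natural in $\Gamma$ since the definition never mentions $\Gamma$ or $\gamma$ beyond the ambient primitive context $W$. The encoding $\tycode T \dsub \gamma = \dtycode{T \sub \gamma}$ and decoding $\El\,A \dsub \gamma = \dEl(A \dsub \gamma) \dsub \id$ are defined exactly as in \cref{thm:unipsh}, and the proofs that they commute with substitution and are mutually inverse carry over verbatim, because those arguments never inspect \emph{which} types are being classified. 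The level of $\uniNDD_\ell$ is $\ell+1$, as usual.

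The one genuinely new obligation is to verify that $\El\,A$ is itself discrete whenever $\Gamma \sez A : \uniNDD_\ell$. Here is the key observation: if $\gamma : \DSub{(W_0, \ctxpath{\var i})}{\Gamma}$ is degenerate in $\var i$, so $\gamma = \gamma' \circ (\facewkn{\var i})$ for some $\gamma' : \DSub{W_0}{\Gamma}$, then naturality of $A$ together with the restriction action of $\uniNDD_\ell$ gives
\[
A \dsub \gamma = A \dsub{\gamma'} \psub{(\facewkn{\var i})} = \dtycode{T_{\gamma'} \sub{(\facewkn{\var i})}},
\]
where $T_{\gamma'} := \dEl(A \dsub{\gamma'})$ is a discrete type over $\yoneda W_0$ by assumption. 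Consequently $\El\,A \dsub \gamma = T_{\gamma'} \sub{(\facewkn{\var i})} \dsub \id = T_{\gamma'} \dsub{(\facewkn{\var i})}$, so a defining term $t \in \El\,A \dsub \gamma$ is really a defining term of $T_{\gamma'}$ at the face map $(\facewkn{\var i}) : \DSub{(W_0, \ctxpath{\var i})}{\yoneda W_0}$. But this face map is patently degenerate in $\var i$ (it factors through $\yoneda W_0$), and discreteness of $T_{\gamma'}$ therefore forces $t = t \psub{(0/\var i, \facewkn{\var i})}^{T_{\gamma'}}$, which upon unfolding the inherited restriction action of $\El\,A$ is exactly the required equation $t = t \psub{(0/\var i, \facewkn{\var i})}^{\El\,A}$.

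The main obstacle is, as always with these universe constructions, bookkeeping: one must carefully unfold the restriction action of $\El\,A$ through $\dEl$ and $A$, and use naturality of $A$ to identify the fibre type $T_\gamma := \dEl(A\dsub\gamma)$ with the substituted version $T_{\gamma'} \sub{(\facewkn{\var i})}$, in order to see that the degeneracy supplied by discreteness of $T_{\gamma'}$ at the canonically degenerate substitution $(\facewkn{\var i})$ is literally the definitional requirement for $\El\,A$ to be discrete. Once this translation is made, all remaining universe axioms follow exactly as in \cref{thm:unipsh}.
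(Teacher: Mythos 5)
Your proof is correct and follows essentially the same route as the paper's: the universe is defined as the subpresheaf of $\uniPsh_\ell$ consisting of codes of discrete types, the encoding/decoding and their substitution compatibility are inherited from \cref{thm:unipsh}, and the only new content is showing that $\El\,A$ is discrete, which both you and the paper prove via the identity $\dEl(A\dsub{\gamma'(\facewkn{\var i})}) = \dEl(A\dsub{\gamma'})[\facewkn{\var i}]$ together with the observation that $(\facewkn{\var i})$ is a degenerate defining substitution of $\yoneda W_0$. The paper additionally spells out the converse direction (the code of a discrete type lands in $\uniNDD_\ell$), which you cover only implicitly via preservation of discreteness under substitution, but that is a harmless presentational difference.
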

\begin{proof}
	Given $\gamma : \DSub W \Gamma$, we define $\uniNDD_\ell \dsub \gamma := \set{\dtycode T}{\yoneda W \sez T \dtype_\ell}$. This makes $\uniNDD_\ell$ a dependent subpresheaf of $\uniPsh_\ell$. We use the same construction for encoding and decoding types (see \cref{thm:unipsh} on page \pageref{thm:unipsh}). The only thing we have to show is that a type $(\Gamma \sez T \type_\ell)$ is discrete if and only if its encoding $(\Gamma \sez \tycode T : \uniPsh_\ell)$ is a term of $\uniNDD_\ell$.
	\begin{itemize}
		\item[$\Rightarrow$] If $\Gamma \sez T \dtype_\ell$, then $\tycode T \dsub \gamma = \dtycode{T [\gamma]}$, and clearly $\yoneda W \sez T [\gamma] \dtype$ is discrete.
		
		\item[$\Leftarrow$] Assume $\Gamma \sez A : \uniNDD_\ell$. We show that $\Gamma \sez \El\,A \type$ is a discrete type, so pick a path $(W, \ctxpath{\var i}) \Dsez p : (\El\,A) \dsub{\gamma (\facewkn{\var i})}$. We need to show that $p = p \psub{0/\var i, \facewkn{\var i}}^{\El\,A}$. We have
		\begin{align*}
			p \psub{0/\var i, \facewkn{\var i}}^{\El\,A}
			&= p \psub{0/\var i, \facewkn{\var i}}^{\dEl(A \dsub{\gamma (\facewkn{\var i})})}
			= p \psub{0/\var i, \facewkn{\var i}}^{\dEl(A \dsub{\gamma}) [\facewkn{\var i}]},
		\end{align*}
		and so we need to prove $(W, \ctxpath{\var i}) \Dsez p = p \psub{0/\var i, \facewkn{\var i}} : \dEl(A \dsub{\gamma}) \dsub{\facewkn{\var i}}$. But $\dEl(A \dsub \gamma)$ is discrete by construction of $\uniNDD_\ell$ and $(\facewkn{\var i}) : \DSub{(W, \ctxpath{\var i})}{\yoneda W}$ is degenerate in $\var i$, so that this equality indeed holds. \qedhere
	\end{itemize}
\end{proof}

\subsection{Discrete universes of discrete types}\label{sec:uniDD-discussion}
Let us have a look at the structure of $\uniNDD$ (ignoring universe levels for a moment):
\begin{itemize}
	\item A \textbf{point} in $\uniNDD$ is a discrete type $\yoneda() \sez T \dtype$. Since $\yoneda()$ is the empty context, this effectively means that points in $\uniNDD$ are discrete closed types, as one would expect. Differently put, for every shape $W$, there is only one cube $\bullet : \DSub{W}{\yoneda()}$ and thus all $W$-shaped cubes $W \Dsez t : T \dsub{\bullet}$ have the same status; essentially $T$ has the structure of a non-dependent presheaf.
	\item A \textbf{path} in $\uniNDD$ is a discrete type $\yoneda(\ctxpath{\var i}) \sez T \dtype$. For every shape $W$, the presheaf $\yoneda(\ctxpath{\var i})$ contains fully degenerate $W$-cubes $(\facewkn{W}, 0/\var i), (\facewkn{W}, 1/\var i) : \DSub{W}{\yoneda(\ctxpath{\var i})}$. As these cubes are fully degenerate, all $W$-cubes of $T$ above them, must also be degenerate in all path dimensions (as $T$ is discrete). So $T$ contains two discrete, closed types $T[0/\var i]$ and $T[1/\var i]$.
	
	Moreover, for every shape $W \not\ni \var i$, we have a cube $(\facewkn W) : \DSub{(W, \ctxpath{\var i})}{\yoneda(\ctxpath{\var i})}$ that is degenerate in all dimensions but $\var i$. We can think of this as the constant cube on the path $\id : \DSub{(\ctxpath{\var i})}{\yoneda(\ctxpath{\var i})}$. Above it live heterogeneous higher paths (degenerate in all path dimensions but $\var i$) that connect a $W$-cube of $A$ with a $W$-cube of $B$. We get a similar setup of heterogeneous higher bridges from $(\facewkn W, \var i^\IB / \var i^\IP) : \DSub{(W, \ctxbrid{\var i})}{\yoneda(\ctxpath{\var i})}$. Finally, the face map $(\var i^\IB/\var i^\IP) : \PSub{(W, \ctxbrid{\var i})}{(W, \ctxpath{\var i})}$ allows us to find under every heterogeneous path, a heterogeneous bridge.
	
	Thus, bluntly put, a path from $A$ to $B$ in $\uniNDD$ consists of:
	\begin{itemize}
		\item A (discrete) notion of heterogeneous paths with source in $A$ and target in $B$,
		\item A (discrete) notion of heterogeneous bridges with source in $A$ and target in $B$,
		\item An operation that gives us a heterogeneous bridge under every heterogeneous path.
	\end{itemize}
	
	\item A \textbf{bridge} in $\uniNDD$ is a discrete type $\yoneda(\ctxbrid{\var i}) \sez T \type$. The presheaf $\yoneda(\ctxbrid{\var i})$ has everything that $\yoneda(\ctxpath{\var i})$ has, except for the interesting path. A similar analysis as above, shows that a bridge from $A$ to $B$ in $\uniNDD$ is quite simply a (discrete) notion of heterogeneous bridges from $A$ to $B$.
\end{itemize}
Now let us think a moment about what we want:
\begin{itemize}
	\item The \textbf{points} seem to be all right: we want them to be discrete closed types.
	\item A \textbf{path} in the universe should always be degenerate, if we want the universe to be a discrete closed type.
	\item In order to understand what a \textbf{bridge} should be, let us have a look at parametric functions. A function $f : \forall(X : \uni{}).\El\,X$ (which we know does not exist, but this choice of type keeps the example simple) is supposed to map related types $X$ and $Y$ to heterogeneously equal values $fX : \El\,X$ and $fY : \El\,Y$. Since bridges were invented as an abstraction of relations, and paths as some sort of pre-equality, we can reformulate this: The function $f$ should map bridges from $X$ to $Y$ to heterogeneous paths from $fX$ to $fY$. Well, then a bridge from $X$ to $Y$ will certainly have to provide a notion of heterogeneous paths between $\El\,X$ and $\El\,Y$!
	
	On the other hand, consider the (non-parametric) type $\Sigma(X : \uni{}).\El\,X$. What is a bridge between $(X, x)$ and $(Y, y)$ in this type? We should expect it to be a bridge from $X$ to $Y$ and a heterogeneous bridge from $x$ to $y$. This shows that bridges in the universe should also provide a notion of bridges.
\end{itemize}
To conclude: we want $\uni{}$ to be a type whose paths are constant, and whose bridges are the paths from $\uniNDD$, i.e. terms $(\ctxbrid{\vec{\var j}}, \ctxpath{\vec{\var i}}) \Dsez A : \uniDD$ should correspond to terms $(\ctxpath{\vec {\var j}}) \Dsez A' : \uniNDD$. So we define it that way:
\begin{definition}
	We define the \textbf{discrete universe of discrete level $\ell$ types} $\sez \uniDD_\ell \dtype_{\ell+1}$ as $\uniDD_\ell = \cohdisc \cohpaths \uniNDD_\ell = \flat \coshp \uniNDD_\ell$.
\end{definition}
Note that $\sharp \shp \dashv \flat \coshp$ and that $\sharp \shp (\ctxbrid{\vec{\var j}}, \ctxpath{\vec{\var i}}) = (\ctxpath{\vec {\var j}})$.

There is a minor issue with the above definition: we want $\uniDD_\ell$ to exist in any context. We can simply define $\Gamma \sez \uniDD_\ell \dtype_{\ell+1}$ as $\uniDD_\ell = (\flat \coshp \uniNDD_\ell)[\bullet]$. Note that $\uniNDD_\ell = \uniNDD_\ell[\bullet]$, so this does not destroy any information.

The universes $\uniPsh_\ell$ and $\uniNDD_\ell$ have a decoding operation $\El$ and an inverse encoding operation $\tycode \loch$ that allow us to turn terms of the universe into types and vice versa. Moreover, the operators for $\uniNDD_\ell$ are simply those of $\uniPsh_\ell$ restricted to $\uniNDD_\ell$ (for $\El$) or to discrete types (for $\tycode \loch$). For $\uniDD$, the situation is different:
\begin{proposition}
	We have mutually inverse rules
	\begin{equation}
		\inference{
			\Gamma \sez A : \uniDD_\ell
		}{\sharp \shp \Gamma \sez \ElDD~A \dtype_\ell}{} \qquad
		\inference{
			\sharp \shp \Gamma \sez T \dtype_\ell
		}{\Gamma \sez \tycodeDD T : \uniDD_\ell}{}
	\end{equation}
	that are natural in $\Gamma$, i.e. $(\ElDD\,A)[\sharp \shp \sigma] = \ElDD(A[\sigma])$.
	Moreover, $(\ElDD\,A)[\iota \varsigma] = \El\,\vartheta(\kappa(A))$.
\end{proposition}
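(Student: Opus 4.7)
The plan is to invoke the adjunction $\sharp \shp \dashv \flat \coshp$ from \cref{thm:cohesion-psh} applied to the universe $\uniNDD_\ell$. Since both $\flat = \fpsh \shp$ and $\coshp = \fpsh \sharp$ are strict CwF morphisms, we have $\flat \coshp() = ()$ on the nose, so $\uniDD_\ell = (\flat \coshp \uniNDD_\ell)[\bullet]$ lives over every context via the unique $\bullet$. Given $\Gamma \sez A : \uniDD_\ell$, I would apply the second direction of \cref{thm:adjunction-rules} with $\tau = \bullet : \Gamma \to \flat \coshp () = ()$ and $T = \uniNDD_\ell$ to obtain $\sharp \shp \Gamma \sez \alpha\inv(A) : \uniNDD_\ell$, and set
\[
\ElDD(A) := \El(\alpha\inv(A)),
\]
which is discrete of level $\ell$ by the defining property of $\uniNDD_\ell$ from \cref{sec:uniNDD}. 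Conversely, for $\sharp \shp \Gamma \sez T \dtype_\ell$, I would encode $\tycode T$ and apply the first direction of \cref{thm:adjunction-rules} to set $\tycodeDD(T) := \alpha(\tycode T)$ in $\Gamma$.

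Mutual inverseness is then immediate from $\alpha$ being an isomorphism combined with $\El$ and $\tycode{\loch}$ being inverses. Naturality in $\Gamma$, i.e.\ $(\ElDD\,A)[\sharp \shp \sigma] = \ElDD(A[\sigma])$, reduces to naturality of $\alpha$ (the corollary to \cref{thm:adjunction-rules}) together with the substitution laws for $\El$ and $\tycode{\loch}$ on $\uniNDD_\ell$ inherited from $\uniPsh_\ell$.

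The substitution identity $(\ElDD\,A)[\iota \varsigma] = \El\,\vartheta(\kappa(A))$ is the subtler part. Since $\El$ commutes with substitution, it suffices to prove $\alpha\inv(A)[\iota \varsigma] = \vartheta(\kappa(A))$ as terms of $\uniNDD_\ell$ in $\Gamma$. My approach is to decompose the composite adjunction into its factors $\shp \dashv \flat$ (unit $\varsigma$, counit $\kappa \circ (\varsigma \flat)\inv$) and $\sharp \dashv \coshp$ (unit $\iota$, counit $\vartheta$), and to compute $\alpha\inv$ in two steps. Since $\sharp = \fpsh \flat$ is a CwF morphism, the last item of \cref{thm:adjunction-rules} gives the clean formula $\alpha\inv_{\sharp \dashv \coshp}(s) = \vartheta(\ftrtm \sharp s)$, so $\alpha\inv(A) = \vartheta(\ftrtm \sharp{\alpha\inv_{\shp \dashv \flat}(A)})$. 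Substituting along $\iota \varsigma = \iota \shp \circ \varsigma$ and using item 4 of \cref{thm:nattrans-function}, $\iota(t') = \ftrtm \sharp{t'}[\iota]$, should let me turn the composite $\ftrtm \sharp{\ldots}[\iota]$ back into $\iota(\ldots)$, leaving only a substitution along $\varsigma$; sliding $\vartheta$ past this substitution via \cref{thm:nattrans-function} item 1 then reduces the target to $\vartheta(\kappa(A))$, provided $\alpha\inv_{\shp \dashv \flat}(A)$ composes with $\iota \circ \varsigma$ to give $\kappa(A)$ up to $\iota$-wrapping. This last step I would verify by inverting the formula $\alpha(u) = (\ftrtm \flat u)[\varsigma]$, using that the counit of $\shp \dashv \flat$ is built from $\kappa$ via $\kappa \circ (\varsigma \flat)\inv$. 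The main obstacle is precisely this bookkeeping: distinguishing $\nu(t)$ from $t[\nu]$, handling that only $\sharp$ and not $\shp$ is a CwF morphism, and invoking the interchange identities of \cref{eq:cohesion-psh-identities} (in particular $\sharp \varsigma \circ \iota = \iota \shp \circ \varsigma$ from naturality of $\iota$) at the right moments.
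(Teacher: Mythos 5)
Your proposal is correct and follows essentially the same route as the paper: both exploit the adjunction $\sharp\shp \dashv \flat\coshp$ decomposed into its factors $\shp \dashv \flat$ and $\sharp \dashv \coshp$ applied to the closed universe type $\uniNDD_\ell$, so that $\ElDD$ and $\tycodeDD$ are $\El$ and $\tycode\loch$ conjugated by $\alpha^{-1}$ and $\alpha$ respectively. The paper's proof is terser (it simply records the explicit formulas $\vartheta(\kappa(A[\varsigma]^{-1})[\iota]^{-1})$ and $\kappa^{-1}(\vartheta^{-1}(\tycode T)[\iota])[\varsigma]$, leaving the verification of naturality and of the substitution identity $(\ElDD\,A)[\iota\varsigma] = \El\,\vartheta(\kappa(A))$ implicit), whereas you spell out the bookkeeping — in particular invoking $\iota\coshp = \id$ from \cref{eq:cohesion-psh-identities} to turn $\ftrtm{\sharp}{s}[\iota]$ back into $s$ — but the content is the same.
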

\begin{proof}
	We use that $\uniDD_\ell = \flat \coshp \uniNDD_\ell$.

	We set $\ElDD~A = \El~\alpha_{\sharp \dashv \coshp}\inv(\alpha_{\shp \dashv \flat}\inv(A)) = \El~\vartheta(\kappa(A[\varsigma\inv])[\iota]\inv)$. Then the inverse is given by $\tycodeDD T = \kappa\inv(\vartheta\inv(\tycode T)[\iota])[\varsigma]$.
\end{proof}

\chapter{Semantics of ParamDTT}\label{ch:paramdtt}
In this chapter, we finally interpret the inference rules of ParamDTT in the category with families $\widehat{\bpcubecat}$ of bridge/path cubical sets. We start with some auxiliary lemmas, then give the meta-type of the interpretation function, followed by interpretations for the core typing rules, the typing rules related to internal parametricity, and the typing rules related to $\Nat$ and $\Size$.

\section{Some lemmas}\label{sec:uniDD-lemmas}
\begin{lemma}
	For discrete types $T$ in the relevant contexts, we have invertible rules:
	\begin{equation}
		\binference{\shp \Gamma \sez t : T}{\Gamma \sez t[\varsigma] : T[\varsigma]}{}, \qquad
		\binference{\sharp \shp \Gamma \sez t : T}{\sharp \Gamma \sez t[\sharp \varsigma] : T[\sharp \varsigma]}{}.
	\end{equation}
\end{lemma}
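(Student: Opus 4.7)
Both rules have downward directions given directly by substitution; the task is to construct and verify inverses.

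For the first rule, the plan is to invoke the universal property of $\varsigma$ as the quotient of $\Gamma$ to a discrete context. The context $\shp \Gamma = \cohdisc \cohpi \Gamma$ is discrete by \cref{thm:discrete-contexts-and-cohesion}, and its extension $\shp \Gamma.T$ is again discrete because the extension of a discrete context by a discrete type is discrete. A term $\Gamma \sez s : T[\varsigma]$ corresponds by the pullback property of substitution to a substitution $\sigma : \Gamma \to \shp \Gamma.T$ with $\pi \sigma = \varsigma$. Now the adjunction $\shp \dashv \flat$ combined with the isomorphism $\kappa : \flat D \cong D$ for any discrete $D$ (\cref{thm:discrete-contexts-and-cohesion}) implies that every map from $\Gamma$ into a discrete context factors uniquely through $\varsigma$; applying this to $\sigma$ gives a unique $\tau : \shp \Gamma \to \shp \Gamma.T$ with $\tau \circ \varsigma = \sigma$. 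The equation $\pi \circ \tau \circ \varsigma = \varsigma$ then forces $\pi \circ \tau = \id$ by the same uniqueness (applied to the canonical factorization of $\varsigma$ through itself), so $\tau$ is a section, providing the desired term $\shp \Gamma \sez t : T$ with $t[\varsigma] = s$.

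For the second rule, the plan is to reduce it to the first rule via the adjunction $\sharp \dashv \coshp$ of \cref{thm:adjunction-rules}. Starting from $\sharp \Gamma \sez t : T[\sharp \varsigma]$, the adjunction together with naturality of $\iota$ applied to $\varsigma$ gives $\alpha(\sharp \varsigma) = \coshp(\sharp \varsigma) \circ \iota = \iota \circ \varsigma$, and hence $\Gamma \sez \alpha(t) = (\ftrtm{\coshp}{t})[\iota] : (\coshp T)[\iota][\varsigma]$. The type $\coshp T$ is discrete by \cref{thm:coshp-preserves-discreteness} and discreteness is preserved by substitution, so $(\coshp T)[\iota]$ is a discrete type over $\shp \Gamma$. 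The first rule then yields a unique $\shp \Gamma \sez \tilde t : (\coshp T)[\iota]$ with $\tilde t [\varsigma] = \alpha(t)$. Descending back via $\alpha\inv$ of the adjunction gives $\sharp \shp \Gamma \sez \vartheta(\ftrtm{\sharp}{\tilde t}) : T[\alpha\inv(\iota)]$; using the identities $\sharp \iota = \id$ and $\vartheta \sharp = \id$ from \cref{thm:cohesion-psh} (together with \cref{thm:uniqueness-of-nattrans-psh}) one computes $\alpha\inv(\iota) = \vartheta \circ \sharp \iota = \id$, so the type simplifies to $T$.

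The main obstacle, and the step I expect to require the most bookkeeping, is verifying that the constructed $t' = \vartheta(\ftrtm{\sharp}{\tilde t})$ indeed satisfies $t'[\sharp \varsigma] = t$. This proceeds by chasing the translations in reverse, using naturality of $\vartheta$ as a function on terms, strictness of $\sharp$ as a CwF morphism (\cref{thm:fpsh-strict-cwf-morphism}), and the identities $\sharp \coshp = \coshp$, $\sharp \iota = \id$ and $\vartheta \sharp = \id$ to simplify the resulting composite to $t$. Bijectivity of each intermediate translation then guarantees that the upward and downward constructions are mutually inverse.
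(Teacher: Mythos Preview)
Your proof is correct, and in places genuinely neater than the paper's.

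For Rule 1, the paper reduces along the isomorphism $\shp \cong \quotshp$ and then sketches a direct elementwise argument in the style of \cref{thm:elim-quotshp} (Rule 1), checking that a term over $\Gamma$ with values in a discrete type is invariant under the shape equivalence relation. You instead invoke the universal property of $\varsigma$ (any map from $\Gamma$ into a discrete context factors uniquely through $\varsigma$) and observe that $\shp\Gamma.T$ is discrete, so the section $\Gamma \to \shp\Gamma.T$ descends. This is the same mathematical content at a higher level of abstraction; it is a clean repackaging of what the paper has already proven after \cref{def:ctxshp}.

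For Rule 2 your approach is genuinely different. The paper again argues elementwise, observing that $\sharp\shp = \sharp\quotshp$ and $\sharp\varsigma = \sharp\inquotshp$, and then carries out an argument analogous to the rather involved proof of \cref{thm:elim-quotshp} (Rule 2). You instead reduce to Rule 1 via the adjunction $\sharp \dashv \coshp$: transport the term across $\alpha$, apply Rule 1 to the discrete type $(\coshp T)[\iota]$ over $\shp\Gamma$, and transport back. This sidesteps the elementwise work entirely. One minor remark: the final verification you flagged as the main obstacle is in fact immediate from naturality of the adjunction on terms (\cref{thm:adjunction-rules}, part 1), namely $\alpha\inv(\tilde t)[\sharp\varsigma] = \alpha\inv(\tilde t[\varsigma]) = \alpha\inv(\alpha(t)) = t$; no further identities are needed beyond those you already used to simplify $\alpha\inv(\iota)$.
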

\begin{proof}
	\begin{description}
		\item[Rule 1] Recall that we have $\kappa \quotshp : \shp \cong \quotshp$ and $\varsigma = (\kappa \quotshp)\inv \inquotshp$. Thus, it is sufficient to prove
		\begin{equation}
			\binference{\quotshp \Gamma \sez t' : T'}{\Gamma \sez t'[\inquotshp] : T'[\inquotshp]}{},
		\end{equation}
		after which we can pick $T' = T[(\kappa \quotshp)\inv]$ and $t' = t[(\kappa \quotshp)\inv]$. A proof of this is analogous to but simpler than the proof of the first rule in \cref{thm:elim-quotshp}.
		
		\item[Rule 2] Since $\sharp \flat = \sharp$ and $\sharp \kappa = \id$, we have $\sharp \shp = \sharp \quotshp$ and $\sharp \varsigma = \sharp \inquotshp$. Thus, we need to prove
		\begin{equation}
			\binference{\sharp \quotshp \Gamma \sez t : T}{\sharp \Gamma \sez t[\sharp \inquotshp] : T[\sharp \inquotshp]}{}.
		\end{equation}
		 A proof of this is analogous to but simpler than the proof of the second rule in \cref{thm:elim-quotshp}. \qedhere
	\end{description}
\end{proof}
\begin{lemma}
	For discrete types $\shp\Gamma \sez T \dtype$, we have an invertible substitution
	\begin{equation}
		(\shp \pi, \xi[\varsigma]\inv) : \shp(\Gamma.T[\varsigma]) \cong (\shp \Gamma).T.
	\end{equation}
	We will abbreviate it as $\subext \varsigma\inv$ and the inverse as $\subext \varsigma$. We have $\subext \varsigma \circ \varsigma \subext = \varsigma$.
\end{lemma}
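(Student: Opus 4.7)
The plan proceeds in three steps: verify well-typedness, show both sides are discrete, and exhibit the inverse cube-by-cube.

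For well-typedness, naturality of $\varsigma$ gives $\shp\pi\circ\varsigma = \varsigma\circ\pi$, so $\xi$ has type $T[\shp\pi][\varsigma]$ in $\Gamma.T[\varsigma]$. Since $T[\shp\pi]$ is discrete (discreteness is preserved under substitution), the first rule of the previous lemma produces a unique term $\xi[\varsigma]\inv : T[\shp\pi]$ in context $\shp(\Gamma.T[\varsigma])$ whose pullback along $\varsigma$ is $\xi$, making $\subext\varsigma\inv = (\shp\pi, \xi[\varsigma]\inv)$ well-typed. Both $\shp(\Gamma.T[\varsigma])$ and $(\shp\Gamma).T$ are discrete: the former via $\shp = \cohdisc\cohpi$ and \cref{thm:discrete-contexts-and-cohesion}, the latter because discreteness is closed under extension by a discrete type---for any $(W, \ctxpath{\var i})$-cube $(\gamma, t)$, discreteness of $\shp\Gamma$ forces $\gamma$ to be $\var i$-degenerate, whereupon discreteness of $T$ forces $t$ to be $\var i$-degenerate.

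To exhibit the inverse, I would unfold $\shp = \flat\quotshp$ and argue levelwise: a cube of $\shp(\Gamma.T[\varsigma])$ at $W$ is an $\sheq$-equivalence class $[(\delta, s)]$ with $\delta : \DSub{\shp W}\Gamma$ and $s : T\dsub{[\delta]}$, while a cube of $(\shp\Gamma).T$ is a pair $([\delta], t)$ with the same type data. The map $\subext\varsigma\inv$ acts as $[(\delta, s)] \mapsto ([\delta], s)$, and the candidate inverse $\subext\varsigma$ as $([\delta], t) \mapsto [(\delta, t)]$. The main obstacle is well-definedness of $\subext\varsigma$: one must show $[\delta] = [\delta']$ in $\shp\Gamma$ implies $[(\delta, t)] = [(\delta', t)]$ in $\shp(\Gamma.T[\varsigma])$ for any $t$. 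This reduces to lifting a single $\sheq$-generator step $\eta \sheq \eta\circ(0/\var i, \facewkn{\var i})$ at $(V, \ctxpath{\var i})$-level, restricted to $\shp W$ by some face map $\vfi$, to a corresponding step in $\Gamma.T[\varsigma]$. The key fact is that discreteness of $\shp\Gamma$ makes $[\eta]$ itself $\var i$-degenerate, and then discreteness of $T$ forces every $u : T\dsub{[\eta]}$ to satisfy $u\psub{0/\var i, \facewkn{\var i}} = u$, so the $\Gamma.T[\varsigma]$-generator gives $(\eta, u) \sheq (\eta\circ(0/\var i, \facewkn{\var i}), u)$ with identical second components; restricting by $\vfi$ and choosing $u$ to be the $\var i$-degenerate extension of a $V$-level term $u_V$ realizing $u_V\psub{\facewkn{\var i}\vfi} = t$ then yields $(\delta, t) \sheq (\delta', t)$. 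Existence of a suitable $u_V$ follows from the bijection $T\dsub{[\eta]} \cong T\dsub{[\eta]\psub{0/\var i}}$ that discreteness of $T$ provides, together with a case analysis on the value $\vfi(\var i) \in \{0, 1\} \cup \text{bridge vars of }\shp W$ (the cases $\vfi(\var i) \in \{0,1\}$ are trivial or reduce to $V$-level extension, and the bridge-variable case is handled by weakening).

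Finally, the identities $\subext\varsigma\inv\circ\subext\varsigma = \id$ and $\subext\varsigma\circ\subext\varsigma\inv = \id$ are immediate from the explicit formulas, and $\subext\varsigma\circ\varsigma\subext = \varsigma$ follows from $\subext\varsigma\inv\circ\varsigma = (\shp\pi\circ\varsigma, \xi[\varsigma]\inv[\varsigma]) = (\varsigma\pi, \xi) = \varsigma\subext$ by naturality and the defining property of $\xi[\varsigma]\inv$, composed on the left with $\subext\varsigma$.
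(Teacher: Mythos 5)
Your overall strategy --- well-typedness via the previous lemma, discreteness of both sides, an explicit levelwise description of $\subext\varsigma\inv$ as $\overline{(\delta,s)}\mapsto(\overline\delta,s)$, and reduction of well-definedness of the candidate inverse to lifting $\sheq^\Gamma$-generators --- matches the paper's route (the paper packages the last step as the containment $\sheq^\Gamma\subseteq E$ for a suitable auxiliary equivalence relation $E$ rather than arguing along an explicit zigzag, but those two formulations are equivalent). The well-typedness, discreteness, and setup steps are fine.

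The gap is in the lifting step. You want a term $u$ at level $(V,\ctxpath{\var i})$ --- the level of the generator $\eta$ --- with $u\psub\vfi=t$, and you propose $u=u_V\psub{\facewkn{\var i}}$ for some $u_V$ at level $V$ restricting to $t$. That is an \emph{extension} problem along $(\facewkn{\var i})\vfi:\PSub{\shp W}{V}$, and such an extension need not exist: a defining term of a discrete type at a $\shp W$-shaped cell has no reason to lift over a prescribed face map to a $V$-shaped cell. Your remark that the case $\var i\psub\vfi=1$ "reduces to $V$-level extension" names the obstacle without resolving it (the case $\var i\psub\vfi=0$ really is trivial, since there $\delta=\delta'$; the cases $\var i\psub\vfi=1$ and $\var i\psub\vfi$ a variable are the substantive ones). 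The correct move is to push everything \emph{down} rather than up: write $\vfi=(\psi,\var i/\var i)(k/\var i)$ with $\psi:\PSub{\shp W}{V}$ and $k=\var i\psub\vfi$, degenerate $t$ to $t\psub{\facewkn{\var i}}$ at the \emph{lower} level $(\shp W,\ctxpath{\var i})$, verify (using discreteness of $\shp\Gamma$, which is where your observation that $\varsigma\circ\eta$ is $\var i$-degenerate genuinely enters) that $t\psub{\facewkn{\var i}}$ is a legitimate fibre element over $\eta(\psi,\var i/\var i)$, apply the generator there to $(\eta(\psi,\var i/\var i),\,t\psub{\facewkn{\var i}})$, and finally restrict by $(k/\var i):\PSub{\shp W}{(\shp W,\ctxpath{\var i})}$, which kills the freshly introduced $\var i$ and lands on $(\delta,t)\sheq(\delta',t)$. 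No extension of $t$ is ever needed --- only degeneracy and restriction. This is exactly the computation the paper performs inside the proof that $E$ satisfies the generator property of $\sheq^\Gamma$.
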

\begin{proof}
	We have a commutative diagram
	\begin{equation}
		\xymatrix{
			\shp(\Gamma.T[\varsigma])
				\ar[rr]^{(\shp \pi, \xi[\varsigma]\inv)}
			&& (\shp \Gamma).T
			\\
			& \Gamma.T[\varsigma]
				\ar[lu]_{\varsigma}
				\ar[ru]^{\varsigma \subext}
				\ar[ld]^{\inquotshp}
				\ar[rd]_{\inquotshp \subext}
			\\
			{\quotshp(\Gamma.T[\varsigma])}
				\ar[rr]^{(\quotshp \pi, \xi[\inquotshp]\inv)}
				\ar[uu]^{(\kappa \quotshp)\inv}_{\wr}
			&& (\quotshp \Gamma).T[(\kappa \quotshp)\inv].
				\ar[uu]_{(\kappa \quotshp)\inv \subext}^{\wr}
		}
	\end{equation}
	The left and right triangles commute because $\varsigma = (\kappa \quotshp)\inv \inquotshp$. The upper triangle commutes because $(\shp \pi, \xi[\varsigma]\inv) \varsigma = (\shp \pi \circ \varsigma, \xi[\varsigma]\inv[\varsigma]) = (\varsigma \pi, \xi) = \varsigma \subext$. The lower one commutes by similar reasoning. The square commutes because
	\begin{align*}
		\kappa \quotshp \subext \circ (\shp \pi, \xi[\varsigma]\inv)
		&= \kappa \quotshp \subext \circ (\flat \quotshp \pi, \xi[\inquotshp]\inv[\kappa \quotshp])
		= (\kappa \quotshp \circ \flat \quotshp \pi, \xi[\inquotshp]\inv[\kappa \quotshp]) \\
		&= (\quotshp \pi \circ \kappa \quotshp, \xi[\inquotshp]\inv[\kappa \quotshp])
		= (\quotshp \pi, \xi[\inquotshp]\inv) \circ \kappa \quotshp.
	\end{align*}
	So in order to prove the theorem, it is sufficient to show that the lower arrow is invertible. It maps $\overline{(\gamma, t)} : \DSub W {\quotshp(\Gamma.T[\varsigma])}$ to
	\begin{equation}
		(\quotshp \pi, \xi[\inquotshp]\inv) \circ \overline{(\gamma, t)}
		= (\quotshp \pi \circ \overline{(\gamma, t)}, \xi[\inquotshp]\inv \dsub{\overline{(\gamma, t)}})
		= (\overline \gamma, \xi \dsub{\gamma, t}) = (\overline \gamma, t) : \DSub{W}{(\quotshp \Gamma).T[(\kappa \quotshp)\inv]}.
	\end{equation}
	So we have to show that we can do the converse. So we have to show that if we have $\gamma, \gamma' : \DSub W \Gamma$ such that $\sheq^\Gamma_W(\gamma, \gamma')$ (i.e.\ $\overline \gamma = \overline{\gamma'}$), and $t : T[\varsigma]\dsub{\gamma} = T[(\kappa \quotshp)\inv] \dsub{\overline \gamma}$, then $\overline{(\gamma, t)} = \overline{(\gamma', t)}$. We will prove a stronger statement, namely that $\sheq^\Gamma \subseteq E$, where we say $E_W(\gamma, \gamma')$ when $\sheq^\Gamma(\gamma, \gamma')$ and for every $\vfi : \PSub{V}{W}$ and every $t : T[(\kappa \quotshp)\inv]\dsub{\overline{\gamma \vfi}}$, we have $\overline{(\gamma\vfi, t)} = \overline{(\gamma'\vfi, t)}$. Because $E$ is an equivalence relation on $\Gamma$, it suffices to prove that $E$ satisfies the defining property of $\Gamma$.
	
	So pick a path $\gamma : \DSub{(W, \ctxpath{\var i})}{\Gamma}$. We have to prove $E(\gamma, \gamma (0/\var i, \facewkn{\var i}))$. Pick some $\vfi : \PSub{V}{(W, \ctxpath{\var i})}$. As usual, we can decompose $\vfi = (\psi, \var i/\var i)(k/\var i)$ for some $\psi : \PSub V W$ and $k \in V \uplus \accol{0, 1}$. Pick $t : T[(\kappa \quotshp)\inv]\dsub{\overline{\gamma \vfi}}$. We have
	\begin{equation}
		\overline{(\gamma(\psi, \var i/\var i), t \psub{\facewkn{\var i}})}
		= \overline{(\gamma(\psi, \var i/\var i), t \psub{\facewkn{\var i}})} (0/\var i, \facewkn{\var i})
		= \overline{(\gamma (0/\var i, \facewkn{\var i}) (\psi, \var i/\var i), t \psub{\facewkn{\var i}})}
	\end{equation}
	where the first equality holds by definition of $\sheq$, and the second one follows from calculating with substitutions. Restricting by $(k/\var i)$ yields the desired result.
\end{proof}

\section{Meta-type of the interpretation function}
\textbf{Contexts} $\Gamma \ctx$ are interpreted to bridge/path cubical sets $\interp \Gamma \ctx$.

\textbf{Types} $\Gamma \judty{T}$ are interpreted to discrete types $\sharp \interp \Gamma \sez \interp T_\Ty \dtype$.

\textbf{Terms} $\Gamma \judty t$ are interpreted as terms $\interp \Gamma \sez \interp t : \interp T [\iota]$.

\textbf{Definitional equality} is interpreted as equality of interpretations.

In the paper, the promotion of an element of the universe to a type, is not reflected syntactically. For that reason, we need a different interpretation function for types and for terms. However, to keep things simpler here, we will add a syntactical reminder $\El$ of the term-to-type promotion, allowing us to omit the index $\Ty$.

\section{Core typing rules}
\subsection{Contexts}
Context formation rules are interpreted as follows:
\begin{equation}
	\interp{
		\inference{}{\ctx}{c-em}	
	} =
	\inference{}{\ctx}{}
\end{equation}
\begin{equation}
	\interp{
		\inference{\Gamma \judty T}{\Gamma, \ctxvar \mu x T \ctx}{c-ext}
	} =
	\inference{
		\inference{
			\inference{
				\sharp \interp \Gamma \sez \interp T \dtype \qquad
				\mu \in \accol{\Id, \sharp, \coshp}
			}{\sharp \interp \Gamma \sez \mu \interp T \type}{}
		}{\interp \Gamma \sez (\mu \interp T)[\iota] \type}{}
	}{\interp \Gamma, \ftrvar \mu x : (\mu \interp T)[\iota] \ctx}{}
\end{equation}
The variable rule
\begin{equation}
	\interp{
	\inference{
		\Gamma \ctx \quad
		(\ctxvar \mu x T) \in \Gamma \quad
		\mu \leq \idmod
	}{\Gamma \judtm x T}{t-var}
	}
\end{equation}
is interpreted through a combination of weakening and the following rules:
\begin{equation}
	\inference{
		\interp \Gamma, \var x : \interp T [\iota] \ctx
	}{\interp \Gamma, \var x : \interp T [\iota] \sez \var x : \interp T [\iota \wknvar x]}{}
	, \qquad
	\inference{
		\interp \Gamma, \var x : (\coshp \interp T) [\iota] \ctx
	}{\interp \Gamma, \var x : (\coshp \interp T) [\iota] \sez \vartheta(\var x) : \interp T [\iota \wknvar x]}{}
\end{equation}
The second one would normally have type $\interp T [\vartheta \iota \wknvar x]$, but we have $\vartheta \sharp = \id$, so we may remove $\vartheta : \sharp \interp \Gamma \to \sharp \interp \Gamma$.
\begin{lemma}\label{thm:leftflat}\label{thm:leftflat2}
	For any syntactic context $\Gamma$, we have $\sharp \interp \Gamma = \sharp \interp{\sharp \setminus \Gamma}$ and equivalently $\flat \interp \Gamma = \flat \interp{\sharp \setminus \Gamma}$. The substitution $\kappa : \flat \interp \Gamma \cong \interp{\sharp \setminus \Gamma}$ is an isomorphism.
\end{lemma}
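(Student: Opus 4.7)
The plan is to prove both equalities by induction on the syntactic structure of $\Gamma$, then derive the isomorphism statement from Proposition~\ref{thm:discrete-contexts-and-cohesion}. The operation $\sharp \setminus \Gamma$ is the left-division of $\Gamma$ by $\sharp$, which uses the composition identities $\sharp \Id = \sharp$, $\sharp \sharp = \sharp$, $\sharp \coshp = \coshp$; concretely it replaces each $\Id$-annotated variable by a $\sharp$-annotated one while leaving $\sharp$- and $\coshp$-annotated variables untouched. In particular, every variable in $\sharp \setminus \Gamma$ carries a modality in $\accol{\sharp, \coshp}$.

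For the empty context, both sides are terminal, so the equalities are immediate. For the inductive step at $\Gamma, \ctxvar \mu x T$, the interpretation extends by a variable of type $(\mu \interp T)[\iota]$, and applying $\sharp$ gives an extension by $(\sharp \mu \interp T)[\sharp \iota]$. Since $\sharp \Id = \sharp \sharp = \sharp$ and $\sharp \coshp = \coshp$, this matches the extension of $\sharp \interp{\sharp \setminus \Gamma}$ coming from the left-divided variable; combining with the induction hypothesis on the prefix yields $\sharp \interp \Gamma = \sharp \interp{\sharp \setminus \Gamma}$. The equivalence with $\flat \interp \Gamma = \flat \interp{\sharp \setminus \Gamma}$ follows formally because $\sharp = \cohcodisc \cohfget$ and $\flat = \cohdisc \cohfget$ both factor through $\cohfget$, while $\cohfget \cohdisc = \cohfget \cohcodisc = \Id$ implies $\cohfget$ reflects equalities between $\sharp$- or $\flat$-images; thus $\sharp A = \sharp B$, $\cohfget A = \cohfget B$ and $\flat A = \flat B$ are all equivalent conditions.

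For the final claim, I would argue that $\interp{\sharp \setminus \Gamma}$ is itself discrete: a straightforward induction on $\Gamma$ shows that every variable of $\sharp \setminus \Gamma$ extends by a type of the form $(\sharp \interp T)[\iota]$ or $(\coshp \interp T)[\iota]$, which fall into the images of $\cohcodisc$ and are therefore discrete (the $\coshp$-case using Lemma~\ref{thm:coshp-preserves-discreteness}), and the substitution by $\iota$ preserves discreteness. By Proposition~\ref{thm:discrete-contexts-and-cohesion}, discreteness of $\interp{\sharp \setminus \Gamma}$ means the co-unit $\kappa : \flat \interp{\sharp \setminus \Gamma} \to \interp{\sharp \setminus \Gamma}$ is an isomorphism; composing with the equality $\flat \interp \Gamma = \flat \interp{\sharp \setminus \Gamma}$ already established yields the desired isomorphism $\kappa : \flat \interp \Gamma \cong \interp{\sharp \setminus \Gamma}$.

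The main obstacle will be making sure the operation $\sharp \setminus \loch$ is set up correctly at the syntactic level so that the substitutions $\iota$ inserted by the context extension rule align on both sides of the equality; subtleties can arise because $\sharp \iota$ may need to be rewritten using the various identities of Proposition~\ref{thm:cohesion-psh} (notably $\sharp \iota = \iota \sharp$) before the variable types literally coincide rather than being merely isomorphic.
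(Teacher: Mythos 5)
Your inductive outline is structurally sound and matches the paper's (the paper works with $\flat$ rather than $\sharp$ in the inductive step, but that is a cosmetic difference given the equivalence you correctly note between the two equalities). However, there are two intertwined errors in the discreteness argument, and both stem from the same misconception.

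First, the action of $\sharp \setminus \loch$ on modality annotations runs the opposite way from what you describe. Since $\sharp \setminus \loch$ is \emph{left} adjoint to $\sharp \circ \loch$ in the poset of modalities, we have $\sharp \setminus \idmod = \idmod$, $\sharp \setminus \sharp = \idmod$, and $\sharp \setminus \coshp = \coshp$: a parametric ($\sharp$) variable in $\Gamma$ becomes a continuous ($\idmod$) variable in $\sharp \setminus \Gamma$, continuous stays continuous, and pointwise ($\coshp$) stays pointwise. So the modalities occurring in $\sharp \setminus \Gamma$ are in $\accol{\idmod, \coshp}$, not in $\accol{\sharp, \coshp}$ as you claim, and the variable types are $\interp T[\iota]$ or $(\coshp \interp T)[\iota]$, never $(\sharp \interp T)[\iota]$. (Your equality of $\sharp$-images still comes out correct, because the relevant contributions $\sharp\idmod = \sharp\sharp = \sharp$ collapse to the same thing under $\sharp$, so that part is saved by a coincidence.)

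Second, and more seriously, the claim that types ``in the image of $\cohcodisc$'' are discrete is false. It is the image of the \emph{discrete} functor $\cohdisc$ that characterizes discreteness (\cref{thm:discrete-contexts-and-cohesion}); the image of $\cohcodisc$ is the \emph{co}discrete one, and $\sharp T = \cohcodisc \cohfget T$ is in general not discrete (this is precisely why $\sharp$ and discreteness interact nontrivially, e.g.\ in the reflection-rule discussion). If $\sharp \setminus \Gamma$ did contain $\sharp$-annotated variables, the inductively extended context would \emph{not} be discrete and the isomorphism claim would fail. The correct argument, which is what the paper does, is: by the corrected description of $\sharp \setminus \loch$, the extension types are $\interp T$ (discrete by definition of the interpretation) or $\coshp \interp T$ (discrete by \cref{thm:coshp-preserves-discreteness}), and by the induction hypothesis $\interp{\sharp \setminus \Gamma}$ is already $\cong \flat \interp \Gamma$ and hence discrete, so the whole extended context is discrete; then \cref{thm:discrete-contexts-and-cohesion} gives that $\kappa$ is an isomorphism.
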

\begin{proof}
	We prove this by induction on the length of the context.
	\begin{description}
		\item[Empty context] We have $\sharp \setminus () = ()$; hence $\flat \interp{\sharp \setminus ()} = \flat \interp{()} = \flat () = ()$. Then $\kappa : () \to ()$ is the only substitution of that type and it is indeed an isomorphism.
		
		\item[Pointwise extension] We have
		\begin{align*}
			\flat \interp{\Gamma, \ctxptw x T}
			&= \flat (\interp \Gamma, \ftrvar \coshp x : (\coshp \interp T)[\iota])
			= \flat \interp \Gamma, \ftrvar {\flat\coshp} x : \flat \coshp \interp T, \\
			\flat \interp{\sharp \setminus (\Gamma, \ctxptw x T)}
			&= \flat (\interp{\sharp \setminus \Gamma}, \ftrvar \coshp x : (\coshp \interp T)[\iota])
			= \flat \interp{\sharp \setminus \Gamma}, \ftrvar {\flat\coshp} x : \flat \coshp \interp T,
		\end{align*}
		which is equal by virtue of the induction hypothesis. The context
		\begin{equation}
			\interp{\sharp \setminus (\Gamma, \ctxptw x T)}
			 = \interp{\sharp \setminus \Gamma}, \ftrvar \coshp x : (\coshp \interp T)[\iota]
		\end{equation}
		is discrete because $\interp{\sharp \setminus \Gamma} \cong \flat \interp \Gamma$ by the induction hypothesis, $\interp T$ is discrete and $\coshp$ preserves discreteness (\cref{thm:coshp-preserves-discreteness}). Hence, $\kappa$ is an isomorphism for this context.
		
		\item[Continuous extension] We have
		\begin{align*}
			\flat \interp{\Gamma, \ctxctu x T}
			&= \flat (\interp \Gamma, \var x : \interp T[\iota])
			= \flat \interp \Gamma, \ftrvar \flat x : \flat \interp T, \\
			\flat \interp{\sharp \setminus (\Gamma, \ctxctu x T)}
			&= \flat (\interp{\sharp \setminus \Gamma}, \var x : \interp T[\iota])
			= \flat \interp{\sharp \setminus \Gamma}, \ftrvar \flat x : \flat \interp T,
		\end{align*}
		which is equal by virtue of the induction hypothesis. The context
		\begin{equation}
			\interp{\sharp \setminus (\Gamma, \ctxctu x T)}
			 = \interp{\sharp \setminus \Gamma}, \var x : \interp T[\iota]
		\end{equation}
		is discrete because $\interp{\sharp \setminus \Gamma} \cong \flat \interp \Gamma$ by the induction hypothesis and $\interp T$ is discrete. Hence, $\kappa$ is an isomorphism for this context.
		
		\item[Parametric extension] We have
		\begin{align*}
			\flat \interp{\Gamma, \ctxpar x T}
			&= \flat (\interp \Gamma, \ftrvar \sharp x : (\sharp \interp T)[\iota])
			= \flat \interp \Gamma, \ftrvar{\flat\sharp} x : \flat \sharp \interp T,
			= \flat \interp \Gamma, \ftrvar \flat x : \flat \interp T, \\
			\flat \interp{\sharp \setminus (\Gamma, \ctxpar x T)}
			&= \flat \interp{(\sharp \setminus \Gamma), \ctxctu x T)}
			= \flat (\interp{\sharp \setminus \Gamma}, \var x : \interp T[\iota])
			= \flat \interp{\sharp \setminus \Gamma}, \ftrvar \flat x : \flat \interp T,
		\end{align*}
		which is equal by virtue of the induction hypothesis. The context
		\begin{equation}
			\interp{\sharp \setminus (\Gamma, \ctxpar x T)}
			= \interp{(\sharp \setminus \Gamma), \ctxctu x T)}
			 = \interp{\sharp \setminus \Gamma}, \var x : \interp T[\iota]
		\end{equation}
		is discrete because $\interp{\sharp \setminus \Gamma} \cong \flat \interp \Gamma$ by the induction hypothesis and $\interp T$ is discrete. Hence, $\kappa$ is an isomorphism for this context.
		
		\item[Interval extensions] These will be special cases of the above.
		\item[Face predicate extension] See the addendum in \cref{sec:face-predicates}. \qedhere
	\end{description}
\end{proof}
\begin{lemma}\label{thm:leftsharp}\label{thm:leftsharp2}
	For any syntactic context $\Gamma$, we have $\interp{\coshp \setminus \Gamma} = \sharp \interp \Gamma$.
\end{lemma}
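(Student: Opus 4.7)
The plan is to proceed by induction on the length of $\Gamma$, mirroring the structure of the proof of \cref{thm:leftflat}. The base case $\Gamma = ()$ is immediate: $\coshp \setminus () = ()$, and $\sharp$ preserves the terminal context because it is a (strict) CwF morphism (being a lifted functor, by \cref{thm:fpsh-strict-cwf-morphism}).

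For the inductive step, the key observation driving every case is the following. Once the IH $\interp{\coshp \setminus \Gamma} = \sharp \interp \Gamma$ is in hand, the unit substitution $\iota$ attached to this context becomes a substitution $\iota : \sharp \interp \Gamma \to \sharp \sharp \interp \Gamma$, which under the identification $\sharp \sharp = \sharp$ of \cref{thm:cohesion-psh} must be compatible with the identity: both $\sharp \iota$ and $\iota \sharp$ are natural endotransformations of $\sharp$, hence equal to $\id_{\sharp}$ by \cref{thm:uniqueness-of-nattrans-psh}. Consequently any type substitution of the form $[\iota]$ appearing in the interpretation of a $\coshp \setminus$-context collapses.

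For each extension rule one then computes $\sharp$ of the right-hand side using strictness of $\sharp$ and the cohesion identities $\sharp \sharp = \sharp$ and $\sharp \coshp = \coshp$ from \cref{thm:cohesion-psh}, and matches it against the interpretation of the corresponding $\coshp \setminus$-extension via the IH. For instance, in the pointwise case,
\[
\sharp \interp{\Gamma, \ctxptw x T}
= \sharp \interp \Gamma \,.\, (\sharp\coshp \interp T)[\sharp\iota]
= \sharp \interp \Gamma \,.\, \coshp \interp T,
\]
and the interpretation of the corresponding $\coshp \setminus$-extension reduces, via the IH and the collapse of $\iota$ above $\sharp \interp \Gamma$, to the same context. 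The continuous and parametric cases are analogous, using $\sharp \sharp = \sharp$ to absorb $\sharp$ on the type side.

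The anticipated hard part is not any single extension rule, each of which mimics one of the cases already handled in \cref{thm:leftflat}, but rather the coherence bookkeeping for the interval extensions and the face-predicate extension (the latter being addressed in the addendum in \cref{sec:face-predicates}). As in the earlier lemma, those cases should reduce to instances of the continuous, parametric, or pointwise cases once the definition of $\coshp \setminus$ on them has been unwound, so no genuinely new ingredients beyond the cohesion identities and the IH are required.
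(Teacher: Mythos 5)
Your proposal is correct and follows essentially the same route as the paper's own proof: induction on context length, with the base case handled by strictness of $\sharp$, and each extension case reduced to the identities $\sharp\sharp = \sharp$, $\sharp\coshp = \coshp$, and $\iota\sharp = \sharp\iota = \id$ from \cref{thm:cohesion-psh}, so that the unit substitution $[\iota]$ collapses once the induction hypothesis identifies $\interp{\coshp\setminus\Gamma}$ with $\sharp\interp\Gamma$. The paper likewise defers the interval and face-predicate extensions as you anticipate.
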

\begin{proof}
	We prove this by induction on the length of the context.
	\begin{description}
		\item[Empty context] We have $\coshp \setminus () = ()$; hence $\interp{\coshp \setminus ()} = \interp{()} = () = \sharp ()$.
		
		\item[Pointwise extension] We have
		\begin{align*}
			\sharp \interp{\Gamma, \ctxptw x T}
			&= \sharp (\interp \Gamma, \ftrvar \coshp x : (\coshp \interp T)[\iota])
			= \sharp \interp \Gamma, \ftrvar{\sharp \coshp}{x} : \sharp \coshp \interp T
			= \sharp \interp \Gamma, \ftrvar{\coshp}{x} : \coshp \interp T, \\
			\interp{\coshp \setminus (\Gamma, \ctxptw x T)}
			&= \interp{(\coshp \setminus \Gamma), \ctxptw x T}
			= \interp{\coshp \setminus \Gamma}, \ftrvar \coshp x : (\coshp \interp T)[\iota]
			= \sharp \interp{\Gamma}, \ftrvar \coshp x : \coshp \interp T,
		\end{align*}
		where in the last step we used the induction hypothesis and the fact that $\iota \sharp = \id : \sharp \interp \Gamma \to \sharp \interp \Gamma$.
		
		\item[Continuous extension] We have
		\begin{align*}
			\sharp \interp{\Gamma, \ctxctu x T}
			&= \sharp (\interp \Gamma, \var x : \interp T[\iota])
			= \sharp \interp \Gamma, \ftrvar \sharp x : \sharp \interp T, \\
			\interp{\coshp \setminus (\Gamma, \ctxctu x T)}
			= \interp{(\coshp \setminus \Gamma), \ctxpar x T}
			&= \interp{\coshp \setminus \Gamma}, \ftrvar \sharp x : (\sharp \interp T)[\iota]
			= \sharp \interp \Gamma, \ftrvar \sharp x : \sharp \interp T.
		\end{align*}
		
		\item[Parametric extension] We have
		\begin{align*}
			\sharp \interp{\Gamma, \ctxpar x T}
			&= \sharp(\interp \Gamma, \ftrvar \sharp x : (\sharp \interp T)[\iota])
			= \sharp \interp \Gamma, \ftrvar \sharp x : \sharp \interp T, \\
			\interp{\coshp \setminus (\Gamma, \ctxpar x T)}
			= \interp{(\coshp \setminus \Gamma), \ctxpar x T}
			&= \interp{\coshp \setminus \Gamma}, \ftrvar \sharp x : (\sharp \interp T)[\iota]
			= \sharp \interp \Gamma, \ftrvar \sharp x : \sharp \interp T.
		\end{align*}
		
		\item[Interval extensions] These will be special cases of the above.
		
		\item[Face predicate extension] See the addendum in \cref{sec:face-predicates}. \qedhere
	\end{description}
\end{proof}

\subsection{Universes}
We have
\begin{equation}
	\interp{
		\inference{
			\Gamma \ctx \qquad
			\ell \in \IN
		}{\Gamma \judtm{\uni \ell}{\El\,\uni{\ell+1}}}{t-Uni}
	} =
	\inference{
		\interp \Gamma \ctx \qquad
		\ell \in \IN
	}{\interp \Gamma \sez \tycodeDD{\uniDD_\ell} : \uniDD_{\ell + 1}}{}
\end{equation}
\begin{equation}
	\interp{
		\inference{
			\Gamma \judtm T {\El\,\uni k} \qquad
			k \leq \ell \in \IN
		}{\Gamma \judtm{T}{\El\,\uni \ell}}{t-lift}
	} =
	\inference{
		\interp \Gamma \sez \interp T : \uniDD_k \qquad
		k \leq \ell \in \IN
	}{\interp \Gamma \sez \interp T : \uniDD_\ell}{}
\end{equation}
\begin{equation}
	\interp{
		\inference{\leftflat\Gamma \judtm A {\uni{\ell}}}{\Gamma \judty{\El\,A}}{ty}
	} =
	\inference{
		\inference{
			\interp{\sharp \setminus \Gamma} \sez \interp A : \uniDD_\ell
		}{\sharp \shp \interp{\sharp \setminus \Gamma} \sez \ElDD \interp A \dtype}{}
	}{\sharp \interp{\sharp \setminus \Gamma} = \sharp \interp \Gamma \sez (\ElDD \interp A)[\sharp \varsigma] \dtype}{}
\end{equation}
In particular, we have
\begin{equation}
	\interp{\El\,\uni \ell} = (\ElDD \interp{\uni \ell})[\sharp \varsigma]
	= (\ElDD \tycodeDD{\uniDD_\ell})[\sharp \varsigma]
	= \uniDD_\ell[\sharp \varsigma] = \uniDD_\ell,
\end{equation}
so that it is justified that we simply put $\uniDD_k$ on several occasions where we should have used $\interp{\El\,\uni k}$.
\begin{remark}
	In the paper, we defined $\interp{\El\,A}$ as $\El~\vartheta(\ftrtm{\sharp}{\interp A})$. Note that we have
	\begin{align*}
		(\ElDD \interp A)[\sharp \varsigma]
		&= \El\,\vartheta(\kappa(\interp A[\varsigma]\inv)[\iota]\inv)[\sharp \varsigma]
		= \El\,\vartheta(\kappa(\interp A[\varsigma]\inv)[\varsigma][\iota]\inv)
		= \El\,\vartheta(\kappa(\interp A)[\iota]\inv).
	\end{align*}
	Moreover, $\sharp \interp \Gamma \sez \kappa(\interp A)[\iota]\inv = \ftrtm{\sharp}{\interp A} : \coshp \uniNDD_\ell$ because $(\ftrtm{\sharp}{\interp A})[\iota] = \iota(\interp A) = \kappa(\interp A)$ where the last step uses $\iota = \kappa : \flat \coshp \to \coshp$.
\end{remark}

\subsection{Substitution}
We have syntactic substitution rule
\begin{equation}
	\inference{
		\Gamma, \ctxvar \mu x T, \Delta \sez J \qquad
		\mu \setminus \Gamma \judtm t T
	}{\Gamma, \Delta[\varclr t/\varclr x] \sez J[\varclr t/\varclr x]}{subst}
\end{equation}
which can be shown to be admissible by induction on the derivation of $J$. The idea behind this is a combination of the general idea of substitution, and the fact that we use $\mu \setminus \Gamma \judtm t T$ to express something that would more intuitively look like $\Gamma \sez \ctxvar \mu t T$. In fact, we have the following result:
\begin{lemma}
	For $\mu \in \accol{\coshp, \idmod, \sharp}$, we have
	\begin{equation}
		\inference{
			\interp{\mu \setminus \Gamma} \sez t : T [\iota]
		}{\interp \Gamma \sez \forsub \mu t : (\mu T)[\iota]}{}
	\end{equation}
	where $\forsub \idmod t = t$, $\forsub \sharp t = (\ftrtm \sharp t)[\iota]$ and $\forsub \coshp t = (\ftrtm \coshp t)[\iota]$.
\end{lemma}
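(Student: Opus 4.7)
The plan is to dispatch the three values of $\mu$ separately, the case $\mu = \idmod$ being trivial since $\idmod \setminus \Gamma = \Gamma$, $\idmod T = T$, and $\forsub \idmod t = t$ by definition. For the other two, the strategy is to apply the appropriate CwF morphism ($\sharp$ or $\coshp$, both of which are lifted functors and hence strict morphisms of CwFs by \cref{thm:fpsh-strict-cwf-morphism}) to produce a term in a context built from $\sharp$ or $\coshp$ of the interpretation, and then substitute along $\iota : \interp{\Gamma} \to \sharp \interp{\Gamma}$ to land in $\interp{\Gamma}$.

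For $\mu = \sharp$, I would proceed as follows. By \cref{thm:leftflat}, the substitution $\kappa : \flat \interp{\Gamma} \to \interp{\sharp \setminus \Gamma}$ is an isomorphism (and, by the same lemma, $\sharp \interp{\sharp \setminus \Gamma} = \sharp \interp{\Gamma}$). Applying the CwF morphism $\sharp$ to the assumed term $\interp{\sharp \setminus \Gamma} \sez t : T[\iota]$ yields $\sharp \interp{\sharp \setminus \Gamma} \sez \ftrtm{\sharp}{t} : (\sharp T)[\sharp \iota]$, which by the equality of contexts just mentioned is the same as $\sharp \interp{\Gamma} \sez \ftrtm{\sharp}{t} : (\sharp T)[\sharp \iota]$. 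Substituting along $\iota$ gives a term of type $(\sharp T)[\sharp \iota \circ \iota]$. The key computation is that $\sharp \iota : \sharp \to \sharp \sharp = \sharp$ is the identity (listed in \cref{eq:cohesion-psh-identities}, using idempotence of $\sharp$ on both sides), so $\sharp \iota \circ \iota = \iota$ and the resulting term has exactly the required type $(\sharp T)[\iota]$.

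For $\mu = \coshp$, the approach is dual, though slightly easier since $\interp{\coshp \setminus \Gamma} = \sharp \interp{\Gamma}$ on the nose by \cref{thm:leftsharp}, so no mediation by $\kappa$ is needed. Applying the CwF morphism $\coshp$ to $\interp{\coshp \setminus \Gamma} \sez t : T[\iota]$ yields a term $\coshp \sharp \interp{\Gamma} \sez \ftrtm{\coshp}{t} : (\coshp T)[\coshp \iota]$. Since $\coshp \sharp = \sharp$ (from \cref{thm:cohesion-psh}) and $\iota : \sharp \interp{\Gamma} \to \sharp \interp{\Gamma}$ is the identity (idempotence), the component of $\coshp \iota$ at $\sharp \interp{\Gamma}$ is the identity of $\sharp \interp{\Gamma}$, so $(\coshp T)[\coshp \iota] = \coshp T$. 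Substituting along $\iota : \interp{\Gamma} \to \sharp \interp{\Gamma}$ then produces $(\ftrtm{\coshp}{t})[\iota]$ of type $(\coshp T)[\iota]$, as desired.

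There is no real obstacle here beyond bookkeeping: the entire argument consists of invoking the previously established identifications $\interp{\sharp \setminus \Gamma} \cong \flat \interp{\Gamma}$, $\interp{\coshp \setminus \Gamma} = \sharp \interp{\Gamma}$, $\sharp \sharp = \sharp$, $\coshp \sharp = \sharp$, and the fact that the components of $\iota$ at $\sharp$-contexts are identities. The one place where care is needed is to verify that the substitution $\sharp \iota \circ \iota$ (respectively $\coshp \iota \circ \iota$) on $\interp{\Gamma}$ indeed collapses to $\iota$, which is immediate from the identities collected in \cref{eq:cohesion-psh-identities} together with idempotence of $\sharp$.
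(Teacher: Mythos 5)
Your proposal is correct and takes essentially the same route as the paper. For $\mu = \sharp$, the paper actually presents two derivations — one via the adjunction $\flat \dashv \sharp$ and one ``simpler alternative'' that applies the CwF morphism $\sharp$ and substitutes by $\iota$ using $\sharp\interp{\sharp\setminus\Gamma} = \sharp\interp{\Gamma}$ and $\sharp\iota = \id$ — and yours is the latter, with the only cosmetic difference that you simplify $\sharp\iota \circ \iota = \iota$ after the substitution rather than simplifying $(\sharp T)[\sharp\iota]$ to $\sharp T$ first. For $\mu = \coshp$, the paper phrases the step as an application of the adjunction formula $\alpha_{\sharp\dashv\coshp}(t) = \vartheta^{-1}(t)[\iota]$ and then identifies $\vartheta^{-1}(t) = \ftrtm{\coshp}{t}$ using $\vartheta\sharp = \id$, whereas you apply the CwF morphism $\coshp$ directly, collapse $(\coshp T)[\coshp\iota]$ to $\coshp T$ via $\iota\sharp = \id$ and $\coshp\sharp = \sharp$, and then substitute by $\iota$. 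These are the same calculation in different packaging; the paper's framing emphasizes the adjunction machinery from \cref{thm:adjunction-rules}, while yours is a more hands-on type-chase, but both arrive at the identical term $\ftrtm{\coshp}{t}[\iota]$.
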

\begin{proof}
	The idea is that $\mu \setminus \loch$ is left adjoint to $\mu \circ \loch$. In the model, we see this formally as $\interp{\coshp \setminus \Gamma} = \sharp \interp \Gamma$, $\interp{\idmod \setminus \Gamma} = \interp \Gamma$ and $\interp{\sharp \setminus \Gamma} \cong \flat \interp \Gamma$. So in each case, we can simply use the adjunction in the model.
	\begin{description}
		\item[$\mu = \idmod$] Then $\interp \Gamma \sez t : T[\iota]$ by the premise.
		
		\item[$\mu = \sharp$] Then we have
		\begin{equation}
			\inference{
			\inference{
				\interp{\sharp \setminus \Gamma} \sez t : T [\iota]
			}{\flat \interp \Gamma \sez t [\kappa] : T [\iota \kappa]}{}
			}{\interp \Gamma \sez \iota(t [\kappa])[\kappa]\inv : (\sharp T)[\iota]}{},
		\end{equation}
		i.e.\ we first apply the isomorphism $\kappa : \flat \interp \Gamma \cong \interp{\sharp \setminus \Gamma}$ and then the adjunction $\iota(\loch)[\kappa]\inv : \flat \dashv \sharp$. The second step is well-typed because $\iota \circ (\iota \kappa) / \kappa = \iota : \interp \Gamma \to \sharp \interp \Gamma = \sharp \interp{\sharp \setminus \Gamma}$, or more meaningfully
		\begin{equation}
			\iota \sharp \interp \Gamma \circ (\iota \interp{\sharp \setminus \Gamma} \circ \kappa \interp{\sharp \setminus \Gamma}) / \kappa \interp{\Gamma} = \iota \interp \Gamma.
		\end{equation}
		Indeed, $\iota \sharp = \id$ and $(\iota \kappa) \interp{\sharp \setminus \Gamma} = \iota \flat \interp{\sharp \setminus \Gamma} = \iota \flat \interp \Gamma = \iota \interp \Gamma \circ \kappa \interp \Gamma$.
		
		However, the resulting term is a bit obscure. We can instead do
		\begin{equation}
			\inference{
			\inference{
				\interp{\sharp \setminus \Gamma} \sez t : T[\iota]
			}{\sharp \interp{\Gamma} \sez \ftrtm{\sharp}{t} : \sharp T}{}
			}{\interp \Gamma \sez (\ftrtm{\sharp}{t})[\iota] : (\sharp T)[\iota]}{}.
		\end{equation}
		In the first step, we used that $\sharp \interp{\sharp \setminus \Gamma} = \sharp \interp \Gamma$ and $\sharp \iota = \id$. As it happens, $\iota(t [\kappa])[\kappa]\inv = (\ftrtm{\sharp}{t})[\iota]$, or more precisely
		\begin{equation}
			\iota(t [\kappa \interp{\sharp \setminus \Gamma}])[\kappa \interp \Gamma]\inv = (\ftrtm{\sharp}{t})[\iota \interp \Gamma]
		\end{equation}
		because
		\begin{equation}
			(\ftrtm{\sharp}{t})[\iota \interp \Gamma][\kappa \interp \Gamma]
			= (\ftrtm{\sharp}{t})[\iota \interp{\sharp \setminus \Gamma}][\kappa \interp{\sharp \setminus \Gamma}]
			= \iota(t)[\kappa \interp {\sharp \setminus \Gamma}]
			= \iota(t[\kappa \interp {\sharp \setminus \Gamma}]).
		\end{equation}
		So we conclude $\forsub \sharp t = (\ftrtm{\sharp}{t})[\iota]$.
		
		\item[$\mu = \coshp$] We can apply the adjunction $\vartheta\inv(\loch)[\iota] : \sharp \dashv \coshp$:
		\begin{equation}
			\inference{
				\sharp \interp \Gamma \sez t : T
			}{\interp \Gamma \sez \vartheta\inv(t)[\iota] : (\coshp T)[\iota]}{}
		\end{equation}
		In the premise, we can omit $[\iota]$ on $T$ because $\iota \sharp \interp \Gamma = \id$. The conclusion should normally have type $(\coshp T)[\vartheta \setminus \iota]$. However, $\vartheta \sharp \interp \Gamma = \id$, so we are left with just $\iota$. Note that $\vartheta\inv(t) = \ftrtm{\coshp}{t}$ because $\vartheta (\ftrtm{\coshp}{t}) = t[\vartheta]$ and $\vartheta \sharp \interp \Gamma = \id$.
		So we conclude $\forsub \coshp t = (\ftrtm{\coshp}{t})[\iota]$. \qedhere
	\end{description}
\end{proof}
We assume the following without proof:\footnote{One could argue that the presence of a conjecture in this technical report, implies that we have not proven soundness of the type system. However, in practice, any mathematical proof will wipe some tedious details under the carpet, when the added value of figuring them out is outweighed by the work required to do so. Note also that, would this conjecture be false, most of the model remains intact.}
\begin{conjecture}
	The interpretation of the substitution rule, corresponding to the syntactic admissibility proof, is given by the substitution $(\id, \forsub \mu t) : \interp \Gamma \to \interp{\Gamma, \ctxvar \mu x T}$.
\end{conjecture}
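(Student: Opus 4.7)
The plan is to prove this statement by induction on the derivation of $\Gamma, \ctxvar \mu x T, \Delta \sez J$, strengthening the claim so it covers arbitrary $\Delta$. Concretely, I would build, for each such judgement, a semantic substitution $\sigma_\Delta : \interp{\Gamma, \Delta[t/x]} \to \interp{\Gamma, \ctxvar \mu x T, \Delta}$ by successive context extension of the base substitution $(\id, \forsub \mu t) : \interp \Gamma \to \interp{\Gamma, \ctxvar \mu x T}$ through the types appearing in $\Delta$, and then show that pulling back the interpretation of $J$ along $\sigma_\Delta$ yields the interpretation of $J[t/x]$. The base substitution is well-typed by the lemma just proved, since $\interp \Gamma \sez \forsub \mu t : (\mu \interp T)[\iota]$ and the latter is exactly the type $\xi$ has in $\interp{\Gamma, \ctxvar \mu x T}$.

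The linchpin case is the variable rule at $x$ itself. Here we must verify that $\xi[\sigma_\emptyset]$ matches the interpretation of $t$ after any post-processing inserted by the semantic variable rule: when $\mu = \idmod$ this is immediate from $\forsub \idmod t = t$; when $\mu = \coshp$, the semantic variable $\var x$ is accessed via $\vartheta(\var x)$, and unfolding $\forsub \coshp t = (\ftrtm \coshp t)[\iota]$ together with $\vartheta(\ftrtm \coshp t) = t[\vartheta]$ and $\vartheta \sharp = \id$ yields $t$ on the nose; when $\mu = \sharp$ the access happens through the parametric opening rules, at which point the component $\forsub \sharp t = (\ftrtm \sharp t)[\iota]$ lands in the correct type $(\sharp \interp T)[\iota]$. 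For any variable $y \neq x$ the computation is a routine application of $\pi \circ (\id, \forsub \mu t) = \id$. All remaining type and term formers propagate the substitution by their naturality equations recorded in \cref{sec:psh-cwf}.

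The main obstacle is the interaction of $\sigma_\Delta$ with context-extension rules that introduce $\sharp$- or $\coshp$-modal hypotheses, since these correspond syntactically to passing to $\sharp \setminus \Gamma$ or $\coshp \setminus \Gamma$. One must show that the restriction of $\sigma_\Delta$ to the appropriate modal fragment agrees with what $\forsub \mu$ delivers on the modally-restricted side. This is exactly the content of \cref{thm:leftflat} and \cref{thm:leftsharp}, which give $\flat \interp{\sharp \setminus \Gamma} = \flat \interp \Gamma$ and $\interp{\coshp \setminus \Gamma} = \sharp \interp \Gamma$; naturality of these identifications in the base substitution is the only genuinely delicate calculation. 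Once that bookkeeping is carried out, the remaining cases (including the universe, $\El$, and the various type formers) reduce to the standard CwF substitution laws, so no new semantic input is required beyond the machinery already assembled in this chapter.
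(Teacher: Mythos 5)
The statement you are trying to prove is explicitly labelled a \emph{conjecture} in the paper, with a footnote acknowledging that the authors chose not to carry out the proof because they judged the details too tedious relative to the added value. There is therefore no proof in the paper to compare your sketch against. Your outline follows the standard admissibility-of-substitution strategy: induction on the derivation of $\Gamma, \ctxvar \mu x T, \Delta \sez J$, constructing $\sigma_\Delta : \interp{\Gamma, \Delta[t/x]} \to \interp{\Gamma, \ctxvar \mu x T, \Delta}$ by iterated context extension of $(\id, \forsub \mu t)$, and commuting the interpretation past it. This is the right general shape of argument, and your variable-case analysis is roughly consistent with the lemma proved immediately after the conjecture (that $\forsub \mu \interp x = \ftrvar \mu x$), though for $\mu = \sharp$ your phrasing (``the access happens through the parametric opening rules'') is vaguer than the actual calculation, which rides on $\iota$ acting as identity on $\sharp$-typed data.

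The gap is exactly the one you flag yourself and then defer. You must verify not merely that the identifications $\flat\interp{\sharp \setminus \Gamma} = \flat\interp\Gamma$ and $\interp{\coshp \setminus \Gamma} = \sharp\interp\Gamma$ hold, but that they are natural in the base substitution, and more generally that $\sigma_\Delta$ commutes with every $[\iota]$, $[\sharp\varsigma]$ and $\subext\varsigma$ manipulation that the interpretation functions insert at each step. Calling this ``bookkeeping'' undersells the difficulty: it is precisely the work the authors declared too tedious to do, and the erratum in the identity-type section of this very chapter shows that sloppiness about which context a natural transformation is instantiated at can yield false conclusions. Until that calculation is carried out in full, you have reproduced the authors' sketch of a proof rather than resolved the conjecture.
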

\begin{lemma}
	Let $\Gamma' = (\Gamma, \ctxvar \mu x {\El~A})$ be a syntactic context. Then we have $\mu \setminus \Gamma' \judtm{x}{\El~A}$. The interpretation of $x$ satisfies $\forsub \mu \interp x = \ftrvar \mu x$.
\end{lemma}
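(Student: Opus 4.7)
The plan is to do a case analysis on $\mu \in \accol{\idmod, \sharp, \coshp}$. In each case, I first need to determine how the annotation of $\var x$ is transformed when we apply $\mu \setminus$ to the context extension $\Gamma' = \Gamma, \ctxvar \mu x {\El\,A}$; this can be read off from the clauses ``Pointwise/Continuous/Parametric extension'' in the proofs of \cref{thm:leftflat} and \cref{thm:leftsharp}. Concretely, $\idmod \setminus \Gamma' = \Gamma, \ctxctu x {\El\,A}$, $\sharp \setminus \Gamma' = (\sharp \setminus \Gamma), \ctxctu x {\El\,A}$, and $\coshp \setminus \Gamma' = (\coshp \setminus \Gamma), \ctxptw x {\El\,A}$. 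In the first two cases the modality of $\var x$ is demoted to $\idmod$ (so that $\mu \leq \idmod$ is satisfied trivially), while in the $\coshp$ case $\var x$ retains the $\coshp$ modality (which satisfies $\coshp \leq \idmod$). Hence rule $t$-var applies in each case, yielding $\mu \setminus \Gamma' \judtm x {\El\,A}$.

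Next I unfold the interpretation in each case. For $\mu = \idmod$ and $\mu = \sharp$ the variable rule interprets as $\interp x = \var x$, and for $\mu = \coshp$ it interprets as $\interp x = \vartheta(\ftrvar \coshp x)$. Then I apply the definition of $\forsub \mu$: the $\idmod$ case is immediate as $\forsub \idmod \var x = \var x = \ftrvar \idmod x$. In the $\sharp$ case, $\forsub \sharp \var x = (\ftrtm \sharp \var x)[\iota]$, and since $\sharp = \fpsh \flat$ is a strict CwF morphism by \cref{thm:fpsh-strict-cwf-morphism}, $\ftrtm \sharp \var x = \ftrvar \sharp x$. In the $\coshp$ case I first apply the naturality identity $\ftrtm F \nu(s) = (F \nu)(\ftrtm F s)$ from \cref{thm:nattrans-function} and use strictness of $\coshp = \fpsh \sharp$ to obtain $\ftrtm \coshp \vartheta(\ftrvar \coshp x) = (\coshp \vartheta)(\ftrvar \coshp x)$; since $\coshp \vartheta = \id$ by the identities table in \cref{thm:cohesion-psh}, this reduces to $\ftrvar \coshp x$, and hence $\forsub \coshp \interp x = \ftrvar \coshp x [\iota]$.

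Thus in both non-trivial cases the remaining obligation is to show $\ftrvar \mu x[\iota] = \ftrvar \mu x$ in $\interp{\Gamma'}$ for $\mu \in \accol{\sharp, \coshp}$, where $\iota : \interp{\Gamma'} \to \sharp \interp{\Gamma'}$. This is where the main work lies. The target context simplifies via $\sharp \iota = \id$ and $\sharp \coshp = \coshp$ to $\sharp \interp \Gamma, \ftrvar \mu x : \mu \interp{\El\,A}$, so the two occurrences of $\ftrvar \mu x$ have matching types. To compute the new-variable component of $\iota$ at $\interp{\Gamma'}$, I will unfold $\iota = \fpsh \kappa$ at the presheaf level: for a defining substitution $(\gamma_0, t) : W \to \interp{\Gamma'}$, $\iota \circ (\gamma_0, t) = (\gamma_0, t) \circ \kappa_W$, whose new-variable component is once again $t$ because $(\mu \interp T)[\iota]$ is already ``$\sharp$-saturated'' in the sense that the identification $\mu \interp T \dsub{\iota \gamma_0} = \mu \interp T \dsub{\iota \gamma_0 \kappa_W}$ obtained via $\fpsh \flat$ is compatible with the restriction along $\kappa_W$. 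Equivalently, one can invoke the adjunction $\flat \dashv \sharp$ (respectively $\sharp \dashv \coshp$) and use that $\iota$ corresponds to $\kappa$ (respectively $\id$) under the bijection on substitutions.

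The hard part will be carrying out this final identification of the new-variable component crisply, since the presheaf-level computation mixes the strict-morphism labels $\fpshadj \flat$ with the base-category counit $\kappa$, and one has to keep track of how the type identifications $\sharp \iota = \id$ interact with the comprehension structure. Once this is done, the lemma follows immediately by case assembly.
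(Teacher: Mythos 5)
Your overall strategy — case analysis on $\mu$, unfolding the variable rule, simplifying with strictness, naturality of $\vartheta$, and the identities $\coshp\vartheta = \id$ — is the same as the paper's, and your reductions are all correct. For the $\coshp$ case you are in fact slightly more explicit than the paper: the paper collapses $\ftrtm\coshp(\vartheta(\ftrvar\coshp x)) = \ftrvar\coshp x$ into a single step, whereas you spell out the appeal to $\ftrtm R(\nu(t)) = (R\nu)(\ftrtm R t)$ and $\coshp\vartheta = \id$.

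The one place you stop short is where you flag the ``hard part'': proving $\ftrvar\mu x[\iota] = \ftrvar\mu x$ for $\mu \in \accol{\sharp, \coshp}$. You correctly locate the issue and sketch a presheaf-level plan, but you do not actually carry it out, and you explicitly express uncertainty about whether the new-variable component identification goes through. The paper dispatches this very tersely by invoking $\iota\sharp = \id$ (for $\mu=\sharp$) and $\iota\coshp = \id$ (for $\mu=\coshp$): the point is that the new-variable component of $\iota_{\interp{\Gamma'}}$ is $\iota(\ftrvar\mu x)$, and because $\ftrvar\mu x$ has a $\sharp$- (resp.\ $\coshp$-) typed annotation, the relevant restriction along the base-category counit $\kappa$ acts as $\flat\kappa = \id$ on the underlying cube, so nothing happens. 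Your ``$\sharp$-saturated'' remark is gesturing at exactly this fact. So you have the right ingredient in hand; the gap is purely one of not finishing the bookkeeping, not of a wrong idea, and the paper's own proof is scarcely more detailed at this point.
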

\begin{proof}
	For $\mu = \idmod$, this is trivial.
	
	For $\mu = \sharp$, we have
	\begin{equation}
		\forsub \sharp \interp x = \forsub \sharp \var x = (\ftrvar \sharp x)[\iota].
	\end{equation}
	Here, $\iota$ has type $\interp{\Gamma'} \to \sharp \interp{\Gamma'}$, but $\ftrvar \sharp x$ is already in a sharp type in $\interp{\Gamma'}$ and $\iota \sharp = \id$, so we can omit it and have $\forsub \sharp \interp x = \ftrvar \sharp x$.
	
	For $\mu = \coshp$, we have
	\begin{equation}
		\forsub \coshp \interp x = \forsub \coshp \vartheta(\ftrvar \coshp x) = \ftrtm{\coshp}{(\vartheta(\ftrvar \coshp x))}[\iota] = \ftrvar \coshp x [\iota] = \ftrvar \coshp x,
	\end{equation}
	because $\iota \coshp = \id$.
\end{proof}

\subsection{Definitional equality} As definitional equality is interpreted as equality, it is evidently consistent to assume that this is an equivalence relation and a congruence. The conversion rule is also obvious.

\subsection{Quantification}
\subsubsection{Continuous quantification}
In general, write $\subext \iota = (\wknvar x, \iota(\var x)/\ftrvar \sharp x) : (\Gamma, \var x : T) \to (\Gamma, \ftrvar \sharp x : \sharp T[\iota])$. If $\Gamma = \sharp \Delta$, then this becomes $\subext \iota : (\sharp \Delta, \var x : T) \to (\sharp \Delta, \ftrvar \sharp x : \sharp T) = \sharp (\Delta, \var x : T[\iota])$. For the continuous quantifiers, we have
\begin{align*}
	&\interp{
		\inference{
			\Gamma \judtm{A}{\El\,\uni \ell} \qquad
			\Gamma, \ctxctu x {\El\,A} \judtm{B}{\El\,\uni \ell}
		}{\Gamma \judtm{\prodctu x A.B}{\El\,\uni \ell}}{t-$\Pi$}\quad
	} = \nn\\ &
	\inference{
	\inference{
		\inference{
			\interp \Gamma \sez \interp A : \uniDD_\ell
		}{\sharp \shp \interp \Gamma \sez \ElDD \interp A \dtype_\ell}{}
		\qquad
		\inference{
		\inference{
		\inference{
			\interp \Gamma, \var x : (\ElDD \interp A)[\sharp \varsigma][\iota] \sez \interp B : \uniDD_\ell
		}{\sharp \shp \paren{\interp \Gamma, \var x : (\ElDD \interp A)[\sharp \varsigma][\iota]} \sez \ElDD \interp B \dtype_\ell}{}
		}{\sharp \paren{\shp \interp \Gamma, \var x : (\ElDD \interp A)[\iota]} \sez \ElDD \interp B[\sharp(\subext \varsigma)] \dtype_\ell}{}
		}{\sharp \shp \interp \Gamma, \var x : \ElDD \interp A \sez \ElDD \interp B[\sharp(\subext \varsigma)][\subext \iota] \dtype_\ell}{}
	}{\sharp \shp \Gamma \sez \Pi(\var x : \ElDD \interp A).\ElDD \interp B[\sharp(\subext \varsigma)][\subext \iota] \dtype_\ell}{}
	}{\Gamma \sez \tycodeDD{\Pi(\var x : \ElDD \interp A).(\ElDD \interp B[\sharp(\subext \varsigma)][\subext \iota])} : \uniDD_\ell}{}
\end{align*}
and similar for $\Sigma$. Note that
\begin{align}
	\interp{\El~\prodctu x A.B}
	&= \ElDD \interp{\prodctu x A.B} [\sharp \varsigma]
	= \Pi(\var x : \ElDD \interp A).(\ElDD \interp B[\sharp(\subext \varsigma)][\subext \iota])~[\sharp \varsigma] \\
	&= \Pi(\var x : \ElDD \interp A [\sharp \varsigma]).(\ElDD \interp B[\sharp(\subext \varsigma)][\subext \iota][\sharp \varsigma \subext]) \nn\\
	&= \Pi(\var x : \ElDD \interp A [\sharp \varsigma]).(\ElDD \interp B[\sharp(\subext \varsigma \circ \varsigma \subext)][\subext \iota]) \nn\\
	&= \Pi(\var x : \ElDD \interp A [\sharp \varsigma]).(\ElDD \interp B[\sharp \varsigma][\subext \iota])
	= \Pi(\var x : \interp{\El\,A}).(\interp{\El\,B} [\subext \iota]). \nn
\end{align}
If we further apply $[\iota]$, we find
\begin{equation}
	\interp{\El~\prodctu x A.B}[\iota] = \Pi(\var x : \interp{\El~A}[\iota]).(\interp{\El~B}[\subext \iota][\iota \subext]) = \Pi(\var x : \interp{\El~A}[\iota]).(\interp{\El~B}[\iota]).
\end{equation}
\subsubsection{Parametric quantification}
For the existential type $\Sigmapar$, we have
\begin{align*}
	& \interp{
		\inference{
			\Gamma \judtm{A}{\El\,\uni \ell} \qquad
			\Gamma, \ctxctu x {\El~A} \judtm{B}{\El\,\uni \ell}
		}{\Gamma \judtm{\sumpar x A.B}{\El\,\uni \ell}}{t-$\Sigma$}
	} = \nn\\ &
	\inference{
	\inference{
	\inference{
		\inference{
		\inference{
			\interp \Gamma \sez \interp A : \uniDD_\ell
		}{\sharp \shp \interp \Gamma \sez \ElDD \interp A \dtype_\ell}{}
		}{\sharp \shp \interp \Gamma \sez \sharp \ElDD \interp A \type_\ell}{}
		\qquad
		\inference{
		\inference{
			\interp \Gamma, \var x : (\ElDD \interp A)[\sharp \varsigma][\iota] \sez \interp B : \uniDD_\ell
		}{\sharp \shp \paren{\interp \Gamma, \var x : (\ElDD \interp A)[\sharp \varsigma][\iota]} \sez \ElDD \interp B \dtype_\ell}{}
		}{\sharp \shp \interp \Gamma, \ftrvar \sharp x : \sharp \ElDD \interp A \sez \ElDD \interp B [\sharp(\subext\varsigma)] \dtype_\ell}{}
	}{\sharp \shp \Gamma \sez \Sigma(\ftrvar \sharp x : \sharp \ElDD \interp A).(\ElDD \interp B [\sharp (\subext \varsigma)]) \type_\ell}{}
	}{\sharp \shp \Gamma \sez \quotshp \Sigma(\ftrvar \sharp x : \sharp \ElDD \interp A).(\ElDD \interp B [\sharp (\subext \varsigma)]) \dtype_\ell}{}
	}{\Gamma \sez \tycodeDD{\quotshp \Sigma(\ftrvar \sharp x : \sharp \ElDD \interp A).(\ElDD \interp B [\sharp (\subext \varsigma)])} : \uniDD_\ell}{}
\end{align*}
Then we have
\begin{align*}
	\interp{\El~\sumpar x A.B}
	&= \ElDD \interp{\sumpar x A.B} [\sharp \varsigma]
	= \paren{\quotshp \Sigma(\ftrvar \sharp x : \sharp \ElDD \interp A).(\ElDD \interp B [\sharp (\subext \varsigma)])} [\sharp \varsigma] \\
	&= \quotshp \Sigma(\ftrvar \sharp x : (\sharp \ElDD \interp A)[\sharp \varsigma]).(\ElDD \interp B [\sharp (\subext \varsigma)] [\sharp (\varsigma \subext)]) \\
	&= \quotshp \Sigma(\ftrvar \sharp x : \sharp(\ElDD \interp A[\sharp \varsigma])).(\ElDD \interp B [\sharp \varsigma])
	= \quotshp \Sigma(\ftrvar \sharp x : \sharp \interp{\El\,A}).\interp{\El\,B}.
\end{align*}
The universal type $\Pipar$ is defined similarly, but without throwing in $\quotshp$:
\begin{equation}
	\interp{\prodpar x A.B} = \tycodeDD{\Pi(\ftrvar \sharp x : \sharp \ElDD \interp A).(\ElDD \interp B [\sharp(\subext \varsigma)])}.
\end{equation}
Then we have
\begin{equation}
	\interp{\El~\prodpar x A.B} = \Pi(\ftrvar \sharp x : \sharp \interp{\El\,A}).\interp{\El\,B}.
\end{equation}
If we further apply $[\iota]$, we find
\begin{equation}
	\interp{\El~\prodpar x A.B} [\iota] = \Pi(\ftrvar \sharp x : (\sharp \interp{\El\,A})[\iota]).(\interp{\El\,B}[\iota \subext]) = \Pi(\ftrvar \sharp x : (\sharp \interp{\El\,A})[\iota]).(\interp{\El\,B}[\iota]).
\end{equation}
In the second step, we use that $\iota \subext = (\iota \wknvar x, \var x/\var x)$ is equal to $\iota = (\iota \wknvar x, \iota(\var x)/\var x)$ because $\var x$ has a sharp type and $\iota\sharp = \id$.

\subsubsection{Pointwise quantification}
We generalize the $\subext \nu = (\wknvar x, \nu(\var x)/\var x)$ notation to any natural transformation $\nu$ between morphisms of CwFs. We have
\begin{align*}
	&\interp{
		\inference{
			\Gamma \judtm{A}{\El\,\uni \ell} \qquad
			\Gamma, \ctxptw x {\El\,A} \judtm{B}{\El\,\uni \ell}
		}{\Gamma \judtm{\prodptw x A.B}{\El\,\uni \ell}}{t-$\Pi$}
	} = \nn\\ &
	\inference{
	\inference{
		\inference{
		\inference{
			\interp \Gamma \sez \interp A : \uniDD_\ell
		}{\sharp \shp \interp \Gamma \sez \ElDD \interp A \dtype_\ell}{}
		}{\sharp \shp \interp \Gamma \sez \coshp \ElDD \interp A \dtype_\ell}{}
		\qquad
		\inference{
		\inference{
			\interp \Gamma, \ftrvar \coshp x : (\coshp(\ElDD \interp A [\sharp \varsigma]))[\iota] \sez \interp B : \uniDD_\ell
		}{\sharp \shp \paren{\interp \Gamma, \ftrvar \coshp x : (\coshp(\ElDD \interp A [\sharp \varsigma]))[\iota]} \sez \ElDD \interp B \dtype_\ell}{}
		}{\sharp \shp \interp \Gamma, \ftrvar \coshp x : \coshp \ElDD \interp A \sez \ElDD \interp B [\sharp(\subext \varsigma)] \dtype_\ell}{}
	}{\sharp \shp \interp \Gamma \sez \Pi(\ftrvar \coshp x : \coshp \ElDD \interp A).(\ElDD \interp B [\sharp(\subext \varsigma)]) \dtype_\ell}{}
	}{\interp \Gamma \sez \tycodeDD{\Pi(\ftrvar \coshp x : \coshp \ElDD \interp A).(\ElDD \interp B [\sharp(\subext \varsigma)])} : \uniDD_\ell}{}
\end{align*}
and similar for $\Sigma$. The step where we use $\sharp(\subext \varsigma)$ is a bit obscure. We have
\begin{equation}
	\subext \varsigma :
	\paren{\shp \interp \Gamma, \ftrvar \coshp x : (\coshp(\ElDD \interp A))[\iota]}
	\to
	\shp \paren{\interp \Gamma, \ftrvar \coshp x : (\coshp(\ElDD \interp A [\sharp \varsigma]))[\iota]}
\end{equation}
because
\begin{equation}
	(\coshp(\ElDD \interp A))[\iota][\varsigma]
	= (\coshp(\ElDD \interp A))[\sharp \varsigma][\iota]
	= (\coshp(\ElDD \interp A [\sharp \varsigma]))[\iota].
\end{equation}
Now if we apply $\sharp$ on the domain of $\subext \varsigma$, then $\sharp \iota = \id$ disappears, and $\coshp$ absorbs $\sharp$ on its left.

We will have
\begin{equation}
	\interp{\El~\prodptw x A.B} = \Pi(\ftrvar \coshp x : \coshp \interp{\El~A}).\interp{\El~B}
\end{equation}
and similar for $\Sigma$. If we further apply $[\iota]$, we find
\begin{equation}
	\interp{\El~\prodptw x A.B}[\iota] = \Pi(\ftrvar \coshp x : (\coshp \interp{\El~A})[\iota]).(\interp{\El~B}[\iota \subext]) = \Pi(\ftrvar \coshp x : (\coshp \interp{\El~A})[\iota]).(\interp{\El~B}[\iota]),
\end{equation}
where $\iota \subext = \iota$ because $\var x$ has a type in the image of $\coshp$ and $\iota \coshp = \id$.

So in general, we see that
\begin{equation}
	\interp{\El~\prodvar \mu x A.B}[\iota] = \Pi(\ftrvar \mu x : (\mu \interp{\El~A})[\iota]).(\interp{\El~B}[\iota]).
\end{equation}

\subsection{Functions}
Abstraction is interpreted as
\begin{equation}
	\interp{
		\inference{
			\Gamma, \ctxvar \mu x {\El~A} \judtm b {\El~B}
		}{\Gamma \judtm{\lamannotvar \mu x A.b}{\El~\prodvar \mu x A.B}}{t-$\lambda$}
	} =
	\inference{
		\interp \Gamma, \ftrvar \mu x : (\mu \interp{\El~A})[\iota] \sez \interp b : \interp{\El~B}[\iota]
	}{\interp \Gamma \sez \lambda \var x . \interp b : \Pi(\ftrvar \mu x : (\mu \interp{\El~A})[\iota]).(\interp{\El~B}[\iota])}{}.
\end{equation}

Application is interpreted as
\begin{align*}
	&\interp{
		\inference{
			\Gamma \judtm{f}{\El~\prodvar \mu x A.B} \qquad
			\mu \setminus \Gamma \judtm{a}{\El~A}
		}{\Gamma \judtm{\apvar \mu f a}{\El~B[\varclr a / \varclr x]}}{t-ap}
	} = \nn\\ &
	\inference{
		\interp \Gamma \sez \interp f : \Pi(\ftrvar \mu x : (\mu \interp{\El~A})[\iota]).(\interp{\El~B}[\iota]) \qquad
		\inference{
			\interp{\mu \setminus \Gamma} \sez \interp a : \interp{\El~A}[\iota]
		}{\interp \Gamma \sez \forsub \mu {\interp a} : (\mu \interp{\El~A})[\iota]}{}
	}{\interp \Gamma \sez \interp f (\forsub \mu {\interp a}) : (\mu \interp{\El~A})[\iota][\id, \forsub \mu \interp a / \ftrvar \mu x]}{}
\end{align*}

The $\beta$-rule looks like this:
\begin{align*}
	& \interp{
		\inference{
			\Gamma, \ctxvar \mu x {\El~A} \judtm b {\El~B} \qquad
			\mu \setminus \Gamma \judtm{a}{\El~A}
		}{\Gamma \judtmeq{(\lamannotvar \mu x A.b)\apvar \mu{}{a}}{b[\varclr a / \varclr x]}{\El~B[\varclr a / \varclr x]}}{}
	} = \nn\\ &
	\inference{
		\interp \Gamma, \ftrvar \mu x : (\mu \interp{\El~A})[\iota] \sez \interp b : \interp{\El\,B}[\iota] \qquad
		\inference{
			\interp{\mu \setminus \Gamma} \sez \interp a : \interp{\El~A}[\iota]
		}{\interp \Gamma \sez \forsub \mu \interp a : (\mu \interp{\El~A})[\iota]}{}
	}{\interp \Gamma \sez (\lambda \ftrvar \mu x . \interp b) (\forsub \mu \interp a) = \interp b [\id, \forsub \mu \interp a / \ftrvar \mu x] : \interp{\El~B}[\iota][\id, \forsub \mu \interp a / \ftrvar \mu x]}{}
\end{align*}
and follows from \cref{def:pi-types}. The $\eta$-rule is:
\begin{align*}
	& \interp{
		\inference{
			\Gamma \judtm{f}{\El~\prodvar \mu x A.B}
		}{\Gamma \judtmeq{\lamannotvar \mu x A.\apvar \mu f x}{f}{\El~\prodvar \mu x A.B}}{}
	} = \nn\\ &
	\inference{
		\interp \Gamma \sez \interp f : \Pi(\ftrvar \mu x : \interp{\El~A}[\iota]).(\interp{\El~B}[\iota])
	}{\interp \Gamma \sez \lambda \ftrvar \mu x . (\interp f [\wkn{\ftrvar \mu x}]) (\forsub \mu \interp x) = \interp f : \Pi(\ftrvar \mu x : \interp{\El~A}[\iota]).(\interp{\El~B}[\iota])}{}
\end{align*}
This rule also follows from \cref{def:pi-types}, because $\forsub \mu \interp x = \ftrvar \mu x$.

\subsection{Pairs}
For pair formation, we have quite straightforwardly:
\begin{align*}
	& \interp{
		\inference{
			\Gamma \judty{\El~\sumvar \mu x A.B} \qquad
			\mu \setminus \Gamma \judtm a {\El~A} \qquad
			\Gamma \judtm b {\El~B[\varclr a/\varclr x]}
		}{\Gamma \judtm{\pairvar \mu a b}{\El~\sumvar \mu x A.B}}{t-pair}
	} = \nn\\ &
	\inference{
		\inference{
			\interp{\mu \setminus \Gamma} \sez \interp a : \interp{\El~A}[\iota]
		}{\interp{\Gamma} \sez \forsub \mu \interp a : (\mu \interp{\El~A})[\iota]}{}
		\qquad
		\interp \Gamma \sez \interp b : \interp{\El~B}[\iota][\id, \forsub \mu \interp a/\ftrvar \mu x]
	}{\interp \Gamma \sez (\forsub \mu \interp a, \interp b) : \Sigma(\ftrvar \mu x : (\mu \interp{\El~A})[\iota]).(\interp{\El~B}[\iota])}{}
\end{align*}
For $\mu = \sharp$, we have to further apply $\hatinquotshp$.

The type is included in the syntactical rule to ensure that $B$ is actually a well-defined type. We also need to know that in the model, but we did not write it explicitly in the interpretation. Note that we are not in fact using the interpretation of the existence of the $\Sigma$-type; rather, we use that if the $\Sigma$-type exists, then admissibly $B$ is a type.

\subsubsection{Projections for continuous pairs}
Instead of interpreting the eliminator, we interpret the first and second projections:
\begin{equation}
	\interp{
		\inference{
			\Gamma \judtm{p}{\sumctu x A.B}
		}{\Gamma \judtm{\fst~p}{A}}{}
	} =
	\inference{
		\interp \Gamma \sez \interp p : \Sigma(\var x : \interp{\El~A}[\iota]).(\interp{\El~B}[\iota])
	}{\interp \Gamma \sez \fst \interp p : \interp{\El~A}[\iota]}{},
\end{equation}
\begin{equation}
	\interp{
		\inference{
			\Gamma \judtm{p}{\sumctu x A.B}
		}{\Gamma \judtm{\snd~p}{B[\fst~p/x]}}{}
	} =
	\inference{
		\interp \Gamma \sez \interp p : \Sigma(\var x : \interp{\El~A}[\iota]).(\interp{\El~B}[\iota])
	}{\interp \Gamma \sez \snd \interp p : \interp{\El~B}[\iota][\id, \fst \interp p / \var x]}{}.
\end{equation}
Then $\beta$- and $\eta$-rules lift from the model.

\subsubsection{Projections for pointwise pairs}
\begin{equation}
	\interp{
		\inference{
			\sharp \setminus \Gamma \judtm{p}{\El~\sumptw x A.B}
		}{\Gamma \judtm{\fstptw p}{\El~A}}{}
	} =
	\inference{
	\inference{
	\inference{
		\interp{\sharp \setminus \Gamma} \sez \interp p : \Sigma(\ftrvar \coshp x : (\coshp \interp{\El~A})[\iota]).(\interp{\El~B}[\iota])
	}{\interp \Gamma \sez \forsub \sharp \interp p : \Sigma(\ftrvar \coshp x : (\coshp \interp{\El~A})[\iota]).(\sharp \interp{\El~B}[\iota])}{}
	}{\interp \Gamma \sez \fst (\forsub \sharp \interp p) : (\coshp \interp{\El~A})[\iota]}{}
	}{\interp \Gamma \sez \vartheta(\fst (\forsub \sharp \interp p)) : \interp{\El~A}[\iota]}{},
\end{equation}
\begin{equation}
	\interp{
		\inference{
			\Gamma \judtm{p}{\El~\sumptw x A.B}
		}{\Gamma \judtm{\sndptw p}{\El~B[\fstptw p / x]}}{}
	} =
	\inference{
		\interp \Gamma \sez \interp p : \Sigma(\ftrvar \coshp x : (\coshp \interp{\El~A})[\iota]).(\interp{\El~B}[\iota])
	}{\interp \Gamma \sez \snd \interp p : \interp{\El~B}[\iota][\id, \fst \interp p / \ftrvar \coshp x]}{}
\end{equation}
One can show that $\fst \interp p$ is the appropriate term to appear in the substitution, for the conclusion to be well-typed.

\subsubsection{Elimination of parametric pairs}
We have to interpret the rule
\begin{equation}
		\inference{
			\Gamma, \ctxvar \nu z {\El~\sumpar x A.B} \judty{\El~C} \\
			\Gamma, \ctxpar{x}{\El~A}, \ctxvar{\nu}{y}{\El~B} \judtm{c}{\El~C[\varclr{\pairpar x y}/\varclr z]} \\
			\nu \setminus \Gamma \judtm{p}{\El~\sumpar x A.B}
		}{\Gamma \judtm{\ind^\nu_{\Sigmapar}(\varclr z.\parclr C, \parclr x.\varclr y.c, \varclr p)}{\El~C[\varclr p / \varclr z]}}{t-indpair}.
\end{equation}
Thanks to the $\forsub \nu$ operator, it is sufficient to interpret
\begin{equation}
	\inference{
		\Gamma, \ctxvar \nu z {\El~\sumpar x A.B} \judty{\El~C} \\
		\Gamma, \ctxpar{x}{\El~A}, \ctxvar{\nu}{y}{\El~B} \judtm{c}{\El~C[\varclr{\pairpar x y}/\varclr z]}
	}{\Gamma, \ctxvar \nu z {\El~\sumpar x A.B} \judtm{\ind^\nu_{\Sigmapar}(\varclr z.\parclr C, \parclr x.\varclr y.c, \varclr z)}{\El~C}}{t-indpair}.
\end{equation}
We have
\begin{equation*}
	\inference{
	\inference{
	\inference{
		\interp \Gamma, \ftrvar \sharp x : (\sharp \interp{\El~A})[\iota], \ftrvar \nu y : (\nu \interp{\El~B})[\iota] \sez c : \interp{\El~C}[\iota][\wkn{\ftrvar \sharp x} \wkn{\ftrvar \nu y}, (\nu \hatinquotshp)(\ftrvar \sharp x, \ftrvar \nu y) / \ftrvar \nu z]
	}{\interp \Gamma, \ftrvar \nu z : \Sigma(\ftrvar{\sharp} x : (\sharp \interp{\El~A})[\iota]).((\nu \interp{\El~B})[\iota]) \sez c [\wkn{\ftrvar \nu z}, \fst~\ftrvar \nu z / \ftrvar \sharp x, \snd~\ftrvar \nu z/\ftrvar \nu y] : \interp{\El~C}[\iota][\wkn{\ftrvar \nu z}, (\nu \hatinquotshp)(\ftrvar \nu z)/\ftrvar \nu z]}{}
	}{\interp \Gamma, \ftrvar \nu z : \paren{\nu \Sigma(\ftrvar \sharp x : \sharp \interp{\El~A}).\interp{\El~B}} [\iota] \sez c [\wkn{\ftrvar \nu z}, \fst~\ftrvar \nu z / \ftrvar \sharp x, \snd~\ftrvar \nu z/\ftrvar \nu y] : \interp{\El~C}[\iota][\wkn{\ftrvar \nu z}, (\nu \hatinquotshp)(\ftrvar \nu z)/\ftrvar \nu z]}{}
	}{\interp \Gamma, \ftrvar \nu z : \paren{\nu \quotshp \Sigma(\ftrvar \sharp x : \sharp \interp{\El~A}).\interp{\El~B}} [\iota] \sez c [\wkn{\ftrvar \nu z}, \fst~\ftrvar \nu z / \ftrvar \sharp x, \snd~\ftrvar \nu z/\ftrvar \nu y] [\wkn{\ftrvar \nu z}, (\nu \hatinquotshp)(\ftrvar \nu z)/\ftrvar \nu z]\inv : \interp{\El~C}[\iota]}{}
\end{equation*}
This is best read bottom-up. In the last step, we get rid of $\quotshp$ using \cref{thm:elim-quotshp}. In the middle, we simply rewrite the context using that $\Sigma$ commutes with lifted functors such as $\nu$, and that $\iota \sharp = \id$ to turn $[\iota \subext]$ into $[\iota]$ on the $\Sigma$-type's codomain. Above, we split up $\ftrvar \nu z$ in its components.

In order to see that the substitution in the type of the premise is correct, note that we have $\interp{  \nu \setminus (\Gamma, \ctxpar x {\El~A}, \ctxvar \nu y {\El~B}) \judtm{\pairpar x y}{\sumpar x A.B}  }$. Interpreting this and applying $\forsub \nu$, one finds $(\nu \hatinquotshp)(\ftrvar \sharp x, \ftrvar \nu y)$ after working through some tedious case distinctions. 

\subsection{Identity types}\label{sec:idtp-semantics}
We have
\begin{align*}
	& \interp{
		\inference{
			\Gamma \judtm{A}{\El~\uni \ell} \qquad
			\Gamma \judtm{a, b}{\El~A}
		}{\Gamma \judtm{a \idtp A b}{\El~\uni \ell}}{t-Id}
	} = \nn\\ &
	\inference{
	\inference{
		\inference{
			\interp \Gamma \sez \interp A : \uniDD_\ell
		}{\sharp \shp \interp \Gamma \sez \ElDD \interp A \dtype_\ell}{} \qquad
		\inference{
		\inference{
			\interp \Gamma \sez \interp a, \interp b : \ElDD \interp A [\sharp \varsigma] [\iota]
		}{\shp \interp \Gamma \sez \interp a [\varsigma]\inv, \interp b[\varsigma]\inv : \ElDD \interp A[\iota]}{}
		}{\sharp \shp \interp \Gamma \sez \ftrtm{\sharp}{(\interp a [\varsigma]\inv)}, \ftrtm{\sharp}{(\interp b [\varsigma]\inv)} : \sharp \ElDD \interp A}{}
	}{\sharp \shp \interp \Gamma \sez \ftrtm{\sharp}{(\interp a [\varsigma]\inv)} \idtp{\sharp \ElDD \interp A} \ftrtm{\sharp}{(\interp b [\varsigma]\inv)} \dtype_\ell}{}
	}{\interp \Gamma \sez \tycodeDD{\ftrtm{\sharp}{(\interp a [\varsigma]\inv)} \idtp{\sharp \ElDD \interp A} \ftrtm{\sharp}{(\interp b [\varsigma]\inv)}} : \uniDD_\ell}{}
\end{align*}
Observe:
\begin{align*}
	\interp{\El~a \idtp A b}
	&= \paren{\ftrtm{\sharp}{(\interp a [\varsigma]\inv)} \idtp{\sharp \ElDD \interp A} \ftrtm{\sharp}{(\interp b [\varsigma]\inv)}} [\sharp \varsigma]
	= \paren{\ftrtm{\sharp}{\interp a} \idtp{\sharp \interp{\El\,A}} \ftrtm{\sharp}{\interp b}}.
\end{align*}
If we further apply $[\iota]$, we get
\begin{equation}
	\interp{\El~a \idtp A b}[\iota]
	= \paren{ \ftrtm{\sharp}{\interp a}[\iota] \idtp{(\sharp \interp{\El\,A})[\iota]} \ftrtm{\sharp}{\interp b}[\iota] }.
\end{equation}

For reflexivity, we have
\begin{equation}
	\interp{
		\inference{
			\leftflat \Gamma \judtm a {\El~A}
		}{\Gamma \judtm{\refl\,\parclr a}{\El~a \idtp A a}}{t-refl}
	} =
	\inference{
	\inference{
		\interp{\sharp \setminus \Gamma} \sez \interp a : \interp{\El\,A} [\iota]
	}{\interp \Gamma \sez \forsub \sharp \interp a : (\sharp \interp{\El~A})[\iota]}{}
	}{\interp \Gamma \sez \refl~(\forsub \sharp \interp a) : \forsub \sharp \interp a \idtp{(\sharp \interp{\El~A})[\iota]} \forsub \sharp \interp a}{}
\end{equation}
where $\forsub \sharp t = \ftrtm \sharp t [\iota]$.

We also need to interpret the $\J$-rule:
\begin{equation}
	\inference{
		\leftflat \Gamma \judtm{a, b}{\El~A} \qquad
		\Gamma, \ctxpar{y}{\El~A}, \ctxvar \nu {w}{\El~a \idtp A y} \judty {\El~C} \\
		\nu \setminus \Gamma \judtm{e}{\El~a \idtp A b} \qquad
		\Gamma \judtm{c}{\El~C[\parclr a/\parclr y, \varclr{\refl~a}/\varclr w]}
	}{\Gamma \judtm{
		\J^\nu(\parclr a, \parclr b, \parclr y.\varclr w.\parclr C, \varclr e, c)
	}{\El~C[\parclr b/\parclr y, \varclr e/\varclr w]}}{t-J}
\end{equation}
First of all, note that if $\nu \in \accol{\coshp, \idmod, \sharp}$, then $\nu \interp{\El~a \idtp A b} = \interp{\El~a \idtp A b}$, because
\begin{equation}
	\nu \interp{\El~a \idtp A b}
	= \nu \paren{\ftrtm{\sharp}{\interp a} \idtp{\sharp \interp{\El\,A}} \ftrtm{\sharp}{\interp b}}
	= \paren{\ftrtm{\nu \sharp}{\interp a} \idtp{\nu \sharp \interp{\El\,A}} \ftrtm{\nu \sharp}{\interp b}}
	= \paren{\ftrtm{\sharp}{\interp a} \idtp{\sharp \interp{\El\,A}} \ftrtm{\sharp}{\interp b}}.
\end{equation}
Hence, we may assume that $\nu = \idmod$. We then have
\begin{equation}
	\inference{
		\interp \Gamma \sez \forsub \sharp \interp a, \forsub \sharp \interp b : (\sharp \interp{\El~A})[\iota] \\
		\interp \Gamma, \ftrvar \sharp y : (\sharp \interp{\El~A})[\iota], \var w : \forsub \sharp \interp a \idtp{(\sharp \interp{\El~A})[\iota]} \ftrvar \sharp y \sez \interp{\El~C}[\iota] \dtype \\
		\interp \Gamma \sez \interp e : \forsub \sharp \interp a \idtp{(\sharp \interp{\El~A})[\iota]} \forsub \sharp \interp b \\
		\interp \Gamma \sez \interp c : \interp{\El~C}[\iota][\id, \forsub \sharp \interp a / \ftrvar \sharp y, \refl~(\forsub \sharp \interp a) / \var w]
	}{\interp \Gamma \sez \J(\forsub \sharp \interp a, \forsub \sharp \interp b, \ftrvar \sharp y.\var w.\interp{\El~C}[\iota], \interp e, \interp c) : \interp{\El~C}[\iota][\id, \forsub \sharp \interp b / \ftrvar \sharp y, \interp e / \var w]}{}
\end{equation}

\subsubsection{The reflection rule}
\begin{remark}[Erratum]
	The original version of \cite{bpcubicalsets}, claimed to prove the reflection rule:
	\begin{equation}
		\inference{
			\Gamma \judtm{a, b}{\El~A} \qquad
			\Gamma \judtm{e}{\El~ a \idtp A b}
		}{\Gamma \judtmeq a b {\El~A}}{t-rflct}.
	\end{equation}
	\begin{proof}[Erroneous proof]
		Indeed, we have
		\begin{equation}
			\inference{
				\interp \Gamma \sez \interp e : \ftrtm \sharp{\interp a}[\iota] \idtp{\sharp (\ElDD \interp{A} [\sharp \varsigma]) [\iota]} \ftrtm \sharp{\interp b}[\iota]
			}{\interp \Gamma \sez \ftrtm \sharp{\interp a}[\iota] = \ftrtm \sharp{\interp b}[\iota] : \sharp(\ElDD \interp{A} [\sharp \varsigma]) [\iota]}{}
		\end{equation}
		Now for any $\gamma : \DSub{W}{\interp \Gamma}$, we have
		\begin{equation}
			\ftrtm \sharp{\interp a}[\iota]\dsub \gamma
			= \ftrtm \sharp{\interp a}\dsub{\fpshadj \flat(\gamma \kappa)}
			= \fpshadj \flat(\interp a \dsub{\gamma \kappa}).
		\end{equation}
		Hence, we can conclude that for any $\gamma$, we have $\flat W \Dsez \interp a \dsub{\gamma \kappa} = \interp b \dsub{\gamma \kappa} : \ElDD \interp{A} [\sharp \varsigma] [\iota] \dsub{\gamma \kappa}$. This means that $\interp a$ and $\interp b$ are equal on bridges, but maybe not on paths.
		
		The type $\ElDD \interp{A} [\sharp \varsigma] [\iota] \dsub{\gamma \kappa}$ we wrote there is correct, because
		\begin{equation}
			\fpshadj \flat(\iota \interp \Gamma \circ \gamma \circ \kappa W)
			= \fpshadj \flat(\fpshadj \flat(\gamma \circ \kappa W \circ \kappa \flat W))
			= \fpshadj \flat(\fpshadj \flat(\gamma \circ \kappa W))
			= \fpshadj \flat(\gamma \circ \kappa W)
			= \iota \interp \Gamma \circ \gamma,
		\end{equation}
		so that if $\flat W \Dsez t : \ElDD \interp A [\sharp \varsigma] \dsub{\iota \gamma \kappa}$, then $W \Dsez \fpshadj \flat(t) : \sharp(\ElDD \interp A [\sharp \varsigma]) \dsub{\iota \gamma}$.
		
		By discreteness, we can form
		\begin{equation}\label{eq:error-reflection}
			\inference{
				\interp \Gamma \sez \interp a, \interp b : \ElDD \interp A [\sharp \varsigma][\iota]
			}{\shp \interp \Gamma \sez \interp a [\varsigma]\inv, \interp b[\varsigma]\inv : \ElDD \interp A [\iota]}{}.
		\end{equation}
		Clearly if we can show $\interp a [\varsigma]\inv = \interp b[\varsigma]\inv$, then $\interp a = \interp b$. This is essentially saying that if $\interp a$ and $\interp b$ act the same way on bridges, then they are equal, so we should be almost there.
		
		Pick $\fpshadj \shp(\overline \gamma) : \DSub{W}{\shp \Gamma = \flat \quotshp \Gamma}$, i.e. $\overline \gamma : \DSub{\shp W}{\quotshp \Gamma}$, i.e. $\gamma : \DSub{\shp W}{\Gamma}$. Recall that $\varsigma = (\kappa \quotshp)\inv \circ \inquotshp$. Now
		\begin{equation}
			\kappa \quotshp \Gamma \circ \fpshadj \shp(\overline \gamma) = \overline \gamma \circ \varsigma W = \inquotshp(\gamma) \circ \varsigma W
		\end{equation}
		so we have $\varsigma \Gamma \circ \gamma \circ \varsigma W = (\kappa \quotshp \Gamma)\inv \circ \inquotshp \gamma \circ \varsigma W = \fpshadj \shp(\overline \gamma)$. Hence
		\begin{align*}
			\interp a [\varsigma]\inv \dsub{\fpshadj \shp(\overline \gamma)}
			= \interp a \dsub{\gamma \circ \varsigma W} &= \interp{a} \dsub{\gamma \circ \kappa \shp W} \psub{\varsigma W} \\ \nn 
			&= \interp b \dsub{\gamma \circ \kappa \shp W} \psub{\varsigma W} = \ldots = \interp b [\varsigma]\inv \dsub{\fpshadj \shp(\overline \gamma)}.
		\end{align*}
		Because $\fpshadj \shp (\overline \gamma)$ is a fully general defining subsitution of $\shp \Gamma$, we can conclude that $\interp a [\varsigma]\inv = \interp b[\varsigma]\inv$ and hence $\interp a = \interp b$.
	\end{proof}
	The error is in \cref{eq:error-reflection}. The idea there is that $\sharp \varsigma \circ \iota = \iota \circ \varsigma$, so that $\interp a$ and $\interp b$ would a type of the form $X[\varsigma]$ and the application of $[\varsigma]\inv$ would be valid. However, when we make some implicit arguments explicit we see that the premise is:
	\begin{equation}
		\interp \Gamma \sez \interp a, \interp b : \ElDD \interp A [\sharp \varsigma \interp{\sharp \setminus \Gamma}][\iota \interp \Gamma], %
	\end{equation}
	so that even though $\sharp \varsigma \circ \iota = \iota \circ \varsigma$ (as an equation of natural transformations), the equation cannot be applied here as the natural transformations have been instantiated on different contexts. The context $\interp{\sharp \setminus \Gamma} \cong \flat \interp \Gamma$ is discrete, i.e. all its paths are constant. By mixing it up with the context $\interp \Gamma$, we ended up assuming that $\interp \Gamma$ is discrete (which need not be the case if it contains parametric variables). Since we already knew that $\interp a$ and $\interp b$ are equal on bridges, we could then conclude that they are completely equal, as there would be only trivial paths.
	
	We see two ways to fix this erratum. Either we actually move back to the context $\sharp \setminus \Gamma$ in the reflection rule's conclusion, making it weaker and breaking the proof of function extensionality for parametric arguments. Or we conjecture the principle of pathhood irrelevance --- that any bridge can be a path in at most one way --- so that equality in $\sharp \setminus \Gamma$ entails equality in $\Gamma$.
\end{remark}

\paragraph{Solution 1: A more careful reflection rule}
\begin{lemma}\label{thm:careful-reflection}
	The model supports the following reflection rule:
	\begin{equation}
		\inference{
			\sharp \setminus \Gamma \judtm{a, b}{\El~A} \qquad
			\Gamma \judtm{e}{\El~ a \idtp A b}
		}{\sharp \setminus \Gamma \judtmeq a b {\El~A}}{}. %
	\end{equation}
\end{lemma}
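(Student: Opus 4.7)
My plan is to stay entirely within the context $\interp{\sharp\setminus\Gamma} \cong \flat \interp \Gamma$ when deriving $\interp a = \interp b$, thereby side-stepping the context-mixing mistake diagnosed in the erratum. The entire argument will be: read off a semantic equality from $\interp e$ in the big context $\interp \Gamma$, then transport it back along an injectivity statement that holds because $\forsub \sharp$ assembles two bijections.

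First, I unfold the interpretation of $\Gamma \judtm{e}{\El~a \idtp A b}$. By the computation of $\interp{\El~a \idtp A b}[\iota]$ given in \cref{sec:idtp-semantics}, this yields a term
\begin{equation*}
   \interp \Gamma \sez \interp e :
   \paren{\forsub \sharp \interp a \idtp{(\sharp \interp{\El\,A})[\iota]} \forsub \sharp \interp b},
\end{equation*}
where I use that $\ftrtm \sharp t [\iota] = \forsub \sharp t$ for any $\interp{\sharp\setminus\Gamma} \sez t : T[\iota]$. Since the presheaf identity type reflects equality (each fibre $(x \idtp T y)\dsub \gamma$ is either $\accol \star$ or empty according as $x\dsub \gamma = y\dsub \gamma$), the existence of a global section $\interp e$ pointwise forces the semantic equation
\begin{equation*}
    \interp \Gamma \sez \forsub \sharp \interp a \jeq \forsub \sharp \interp b : (\sharp \interp{\El\,A})[\iota].
\end{equation*}

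Second, I argue that the operation
\begin{equation*}
   \forsub \sharp : \Tm(\interp{\sharp\setminus\Gamma},\interp{\El\,A}[\iota]) \longrightarrow \Tm(\interp \Gamma, (\sharp\interp{\El\,A})[\iota])
\end{equation*}
is a bijection. This is the crux of the argument. By \cref{thm:leftflat2} the canonical $\kappa : \flat \interp \Gamma \to \interp{\sharp\setminus\Gamma}$ is an isomorphism, and the construction of $\forsub \sharp$ in the substitution lemma factors as precomposition with this isomorphism followed by the term-level adjoint correspondence for $\flat \dashv \sharp$ of \cref{thm:adjunction-rules}. Both factors are bijections, so their composite is too. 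Injectivity of $\forsub \sharp$ then turns the semantic equation above into $\interp a = \interp b$ as terms of $\interp{\El\,A}[\iota]$ over $\interp{\sharp\setminus\Gamma}$, which is exactly the interpretation of the desired judgemental equality.

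The main obstacle is the second step. The temptation is to weaken $\interp a$ and $\interp b$ from $\interp{\sharp\setminus\Gamma}$ into $\interp \Gamma$ and then try to rederive their equality there; this is precisely the manoeuvre that collapsed the path structure in the erroneous proof and produced the too-strong reflection rule. The careful version insists instead that the isomorphism $\kappa$ and the adjunction unit $\iota$ be kept rigorously distinct: $\kappa$ carries the equation backwards into the discrete replacement $\flat \interp \Gamma$, while the $\flat \dashv \sharp$ correspondence re-identifies $\flat \interp \Gamma$-terms with $\interp \Gamma$-terms of sharp types. Once that accounting is done faithfully, the injectivity of $\forsub \sharp$ is immediate and the reflection rule in its restricted form follows without ever assuming that $\interp \Gamma$ itself is discrete.
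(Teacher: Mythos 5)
Your argument is correct and is essentially the paper's own proof, just repackaged: where the paper explicitly walks the equation $\ftrtm\sharp{\interp a}[\iota] = \ftrtm\sharp{\interp b}[\iota]$ back through $\flat$, the natural transformation $\kappa$, and the isomorphism $\kappa : \flat\interp{\sharp\setminus\Gamma}\cong\interp{\sharp\setminus\Gamma}$, you observe that the composite of those steps is precisely a left inverse to $\forsub\sharp$ and phrase the same content as injectivity of $\forsub\sharp$. The factorization of $\forsub\sharp$ through substitution by the $\kappa$-isomorphism and the $\flat\dashv\sharp$ term-level adjunction is exactly what the proof of the substitution lemma for $\mu=\sharp$ already records, so your appeal to it is legitimate.
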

\begin{proof}
	We have
	\begin{equation}
		\inference{
		\inference{
		\inference{
		\inference{
		\inference{
			\interp \Gamma \sez \interp e : \ftrtm \sharp{\interp a}[\iota \interp \Gamma] \idtp{\sharp (\ElDD \interp{A} [\sharp \varsigma \interp{\sharp \setminus \Gamma}]) [\iota \interp \Gamma]} \ftrtm \sharp{\interp b}[\iota \interp \Gamma]
		}{\interp \Gamma \sez \ftrtm \sharp{\interp a}[\iota \interp \Gamma] = \ftrtm \sharp{\interp b}[\iota \interp \Gamma] : \sharp(\ElDD \interp{A} [\sharp \varsigma \interp{\sharp \setminus \Gamma}]) [\iota \interp \Gamma]}{}
		}{\flat \interp \Gamma \sez \ftrtm \flat{\interp a} = \ftrtm \flat{\interp b} : \flat (\ElDD \interp{A} [\sharp \varsigma \interp{\sharp \setminus \Gamma}][\iota \interp{\sharp \setminus \Gamma}])}{}
		}{\flat \interp \Gamma \sez \kappa(\ftrtm \flat{\interp a}) = \kappa(\ftrtm \flat{\interp b}) : \ElDD \interp{A} [\sharp \varsigma \interp{\sharp \setminus \Gamma}] [\iota \interp{\sharp \setminus \Gamma}] [\kappa \interp{\sharp \setminus \Gamma}]}{}
		}{\flat \interp{\sharp \setminus \Gamma} \sez \interp a [\kappa \interp{\sharp \setminus \Gamma}] = \interp b [\kappa \interp{\sharp \setminus \Gamma}] : \ElDD \interp{A} [\sharp \varsigma \interp{\sharp \setminus \Gamma}] [\iota \interp{\sharp \setminus \Gamma}] [\kappa \interp{\sharp \setminus \Gamma}]}{}
		}{\interp{\sharp \setminus \Gamma} \sez \interp a = \interp b : \ElDD \interp{A} [\sharp \varsigma \interp{\sharp \setminus \Gamma}] [\iota \interp{\sharp \setminus \Gamma}]}{} %
	\end{equation}
\end{proof}

\paragraph{Solution 2: Pathhood irrelevance}
\begin{definition}
	A type $T$ has irrelevant pathhood when the weakening of paths $(W, \ctxpath{\var i}) \Dsez t : T \dsub \gamma$ to bridges $(W, \ctxbrid{\var i}) \Dsez t \psub{\var i/\var i} : T \dsub{\gamma (\var i/\var i)}$ is an injection.
\end{definition}
\begin{lemma}
	If $T$ is pathhood irrelevant, then the operation
	\begin{equation}
		\inference{
			\Gamma \sez t : T
		}{\flat \Gamma \sez t[\kappa] : T[\kappa]}{} %
	\end{equation}
	is injective.
\end{lemma}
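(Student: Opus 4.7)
The plan is to fix $t, t' : T$ over $\Gamma$ with $t[\kappa] = t'[\kappa]$ and deduce $t = t'$, which means checking that $t \dsub \gamma = t' \dsub \gamma$ for every bridge/path cube $W$ and every $\gamma : \DSub W \Gamma$. First I will unpack the hypothesis. Using the ``$1 \Rightarrow 3$'' direction of \cref{thm:discrete-contexts-and-cohesion}, every defining substitution into $\flat \Gamma$ is of the form $\fpshadj \shp(\gamma')$ for some $\gamma' : \DSub{\shp W}{\Gamma}$, and $\kappa \circ \fpshadj \shp(\gamma') = \gamma' \circ \varsigma W$. Thus the hypothesis rewrites as: for every $W$ and every $\gamma' : \DSub{\shp W}{\Gamma}$, we have $t \dsub{\gamma' \varsigma W} = t' \dsub{\gamma' \varsigma W}$. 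In particular, when the cube $W''$ has no path dimensions, $\shp W'' = W''$ and $\varsigma W'' = \id$, so $t \dsub{\gamma''} = t' \dsub{\gamma''}$ already holds for every $\gamma'' : \DSub{W''}{\Gamma}$.

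Next, I will reduce the general case to this no-path case. Writing $W = (W_\IB, \{\var i_1, \ldots, \var i_k\})$, I form $W'' = (W_\IB \uplus \{\var i_1, \ldots, \var i_k\}, \emptyset)$ together with the face map $\psi = (\var i_1/\var i_1, \ldots, \var i_k/\var i_k) : \PSub{W''}{W}$ that converts each path variable into the bridge variable of the same name. Then $\gamma \circ \psi : \DSub{W''}{\Gamma}$ falls in the no-path case, so $t \dsub \gamma \psub \psi = t' \dsub \gamma \psub \psi$ in $T \dsub{\gamma \psi}$. The remaining task is to conclude $t \dsub \gamma = t' \dsub \gamma$ by stripping off $\psub \psi$. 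I will factor $\psi$ as the composite of $k$ single path-to-bridge weakenings $(\var i_n/\var i_n) : \PSub{W_n}{W_{n-1}}$, where $W_n$ has bridges $W_\IB \uplus \{\var i_1, \ldots, \var i_n\}$ and paths $\{\var i_{n+1}, \ldots, \var i_k\}$, and perform a descending induction on $n$. Each step is a direct invocation of the single-variable pathhood irrelevance of $T$, applied to the defining substitution $\gamma \circ (\var i_1/\var i_1, \ldots, \var i_{n-1}/\var i_{n-1}) : \DSub{W_{n-1}}{\Gamma}$ and the path variable $\var i_n$.

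The only mildly subtle point will be verifying that each inductive step is a genuine instance of the single-variable pathhood irrelevance property, but this is immediate because that property is asserted uniformly in the ambient cube and defining substitution. No naturality issue arises either, since $t$ and $t'$ are already terms of the CwF, so all restrictions and compositions used above are well-behaved by construction.
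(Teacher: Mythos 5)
The paper states this lemma without proof, so there is no argument in the source to compare against; your argument is correct and is the natural way to fill the gap. The key reduction — unfolding $t[\kappa] = t'[\kappa]$ to agreement of $t$ and $t'$ on all defining substitutions of $\Gamma$ at cubes with no path dimensions, and then recovering an arbitrary defining term $t\dsub\gamma$ by peeling off the path-to-bridge weakenings one variable at a time using the injectivity hypothesis — is exactly what is needed, and each inductive step is a literal instance of the definition of pathhood irrelevance at the substitution $\gamma_{n-1}$ and variable $\var i_n$. One minor bookkeeping point: the two facts you attribute to the ``$1\Rightarrow 3$'' direction of \cref{thm:discrete-contexts-and-cohesion} (the description of defining substitutions of $\flat\Gamma$ as $\fpshadj\shp(\gamma')$, and the computation $\kappa\circ\fpshadj\shp(\gamma') = \gamma'\circ\varsigma W$) do not depend on the discreteness assumption of that proposition; they follow directly from $\flat = \fpsh\shp$, $\kappa = \fpsh\varsigma$, and the formula for applying a lifted natural transformation, and are also recorded in \cref{thm:kappa-injective}, which would be the sharper citation.
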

\begin{conjecture}
	The interpretation of any type that can be constructed in ParamDTT, has irrelevant pathhood.
\end{conjecture}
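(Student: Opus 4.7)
The plan is to prove the conjecture by structural induction on the type formers available in ParamDTT. Note first that every interpretation $\sharp \interp \Gamma \sez \interp T \dtype$ is \emph{discrete}, but discreteness is strictly weaker than pathhood irrelevance: discreteness only forces paths to be degenerate over degenerate base paths, whereas pathhood irrelevance concerns paths over arbitrary base paths, asserting that their bridge weakenings determine them. So I would set up the induction with pathhood irrelevance as an additional invariant carried through the type-former cases, rather than trying to derive it from discreteness alone.

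The straightforward cases are the base types and the non-quotient type formers. For $\Nat$ and $\Size$, all paths over any base are in fact degenerate (their defining term sets are essentially constant), so $t$ is determined by $t\psub{0/\var i}$, which is recoverable from $t\psub{\var i/\var i}$. For the universe $\uniDD = \flat \coshp \uniNDD$, a path in $\uniDD$ over $\gamma : \DSub{(W, \ctxpath{\var i})}{\Gamma}$ unfolds (via $\fpshadj{\shp}$ and $\fpshadj{\sharp}$) into data indexed by $\sharp \shp(W, \ctxpath{\var i})$, which drops the path dimension, so the dimension data resides in the $\coshp$-layer where paths and bridges coincide, giving injectivity directly. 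For $\Pi$-types (in all three modalities), two paths of functions agree iff they agree pointwise on every argument-defining-term, reducing the question to pathhood irrelevance of the codomain, supplied by the induction hypothesis. For continuous and pointwise $\Sigma$-types and for identity types, we decompose pairwise or use proof-irrelevance of propositions. For $\Glue$ we would do a case split on the proposition $P \dsub \gamma$, each branch reducing to pathhood irrelevance of $A$ or $T$.

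The main obstacles are the quotient-based constructions, namely $\quotshp$ (appearing in the interpretation of $\Sigmapar$) and $\Weld$. In each, a path is an equivalence class $\overline{t}$, and pathhood irrelevance demands: whenever $\overline{t_1}\psub{\var i/\var i} = \overline{t_2}\psub{\var i/\var i}$ in the bridge-weakening of the quotient, already $\overline{t_1} = \overline{t_2}$ in the path version. Pathhood irrelevance of the underlying type does not obviously deliver this, because $\sheq^T$ or the weld-identifications could, a priori, fuse bridges that were not fused as paths. The strategy I would pursue is to give an explicit inductive characterization of $\sheq^T$ along the lines of \cref{eg:shape-equivalence-relation} (together with compatibility with the shape-equivalence operations), and to prove by mutual induction that both $T$ has pathhood irrelevance and that $\sheq^T$ itself respects the weakening-of-paths-to-bridges operation, i.e.\ that $\sheq^{T}$ restricted to path-indexed terms is the pullback of $\sheq^T$ on bridge-indexed terms along $\psub{\var i/\var i}$. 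The hard step will be establishing this compatibility in the presence of the reshuffling face maps $(\var j^\IB/\var i^\IP)$, which is precisely the feature that makes paths non-reducible to bridges; closing this gap is what prevents the conjecture from being a short corollary and is the reason it is recorded here as a conjecture rather than a theorem.
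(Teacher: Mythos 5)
Your proposal correctly shares the structural-induction shape and scope of the paper's justification, and it is more careful than the paper in one respect: you flag $\quotshp$ (as it appears in the interpretation of $\Sigmapar$) as a genuinely quotient-like danger on par with $\Weld$, whereas the paper's justification silently subsumes it under ``preserved by all functors'' even though $\quotshp$ is not a CwF morphism. However, for the $\Weld$ case you are working harder than the paper and missing its one substantive insight. The paper's argument is not that one should analyze $\sheq^T$ or prove a compatibility lemma for the weakening $\psub{\var i^\IB/\var i^\IP}$ directly; it is a \emph{syntactic} observation about ParamDTT: the only atomic face propositions the source language allows are of the form $\idpr i j$, so every $P$ occurring in a $\Weld$ is built from interval equations. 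Such a $P$ cuts out a subobject of the context along interval faces only, and the claim is that this shape of subobject cannot hold on a bridge $\gamma(\var i^\IB/\var i^\IP)$ without already holding on the underlying path $\gamma$; hence the $\weld$-identification of a bridge forces the corresponding identification of the path, giving injectivity of the bridge-weakening map. Your proposed alternative---characterizing $\sheq^T$ inductively and running a mutual induction with pathhood irrelevance---is a legitimate and more general route, but without the restriction on $P$ it is likely not even true at that level of generality (an arbitrary semantic proposition \emph{can} be true on a bridge face and false on the corresponding path face, exactly the scenario you identify as troublesome for $\Glue$), so the syntactic restriction is not just a convenience but the actual load-bearing step, and you should place it at the center of the $\Weld$ and $\Glue$ cases rather than as an afterthought.
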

\begin{proof}[Justification]
	Pathhood irrelevance is preserved by all functors used in the model (in particular, by all modalities). The discrete universe of discrete types, has irrelevant pathhood by construction. Of the other types, the $\Weld$ type is the most dangerous one, but as we only allow propositions of the form $\idpr i j$ internally, one cannot use $\Weld$ to identify bridges without identifying the corresponding paths.
\end{proof}
Then the model supports the following rule:
\begin{equation}
		\inference{
			\Gamma \judtm{a, b}{\El~A} \qquad
			\sharp \setminus \Gamma \judtmeq a b {\El~A}
		}{\Gamma \judtmeq a b {\El~A}}{}, %
\end{equation}
which we can combine with Solution 1 to obtain the original reflection rule.

\subsubsection{Function extensionality}
Using the reflection rule, we can derive function extensionality internally:
\begin{equation}
	\inference{
	\inference{
	\inference{
		\inference{
			\leftflat \Gamma \judtm{f, g}{\El~\prodvar \mu x A.B}
		}{(\sharp \setminus \Gamma), \ctxvar \mu x A \judtm{\apvar \mu f x, \apvar \mu g x}{\El~B}}{1}
		\qquad
		\inference{
		\inference{
			\Gamma \judtm{p}{\El~\prodvar \mu x A.{\apvar \mu f x} \idtp{B} {\apvar \mu g x}}
		}{\sharp \setminus \Gamma \judtm{p}{\El~\prodvar \mu x A.{\apvar \mu f x} \idtp{B} {\apvar \mu g x}}}{2}
		}{(\sharp \setminus \Gamma), \ctxvar \mu x A \judtm{\apvar \mu p x}{{\apvar \mu f x} \idtp{B} {\apvar \mu g x}}}{3}
	}{(\sharp \setminus \Gamma), \ctxvar \mu x A \judtmeq{\apvar \mu f x}{\apvar \mu g x}{\El~B}}{4}
	}{\sharp \setminus \Gamma \judtmeq{f}{g}{\El~\prodvar \mu x A.B}}{5}
	}{\Gamma \judtm{\refl~\parclr{f}}{\El~f \idtp{\prodvar \mu x A.B} g}}{6}
\end{equation}
Here we used (1) weakening and application, (2) weakening of variances, (3) weakening and application, (4) the reflection rule, (5) $\lambda$-abstraction and the $\eta$-rule and (6) reflexivity and conversion.
\begin{remark}
	If we use the more careful reflection rule (\cref{thm:careful-reflection}), then after (4) we end up with $\ctxvar {\sharp \setminus \mu} x A$ instead of $\ctxvar \mu x A$, so that we only obtain function extensionality for functions of a modality of the form $\sharp \setminus \mu$, i.e. only for pointwise and continuous functions.
\end{remark}

\subsubsection{Uniqueness of identity proofs}
The model supports uniqueness of identity proofs:
\begin{equation}
	\inference{
		\Gamma \judtm{e, e'}{a \idtp A b}
	}{\Gamma \judtmeq{e}{e'}{a \idtp A b}}{t=-UIP}.
\end{equation}
To prove this, we need to show
\begin{equation}
	\inference{
		\interp \Gamma \sez \interp e, \interp{e'} : {\ftrtm{\sharp}{\interp a}}\idtp{(\sharp \interp{\El~A})[\iota]}{\ftrtm{\sharp}{\interp b}}
	}{\interp \Gamma \sez \interp e = \interp{e'} : {\ftrtm{\sharp}{\interp a}}\idtp{(\sharp \interp{\El~A})[\iota]}{\ftrtm{\sharp}{\interp b}}}{}
\end{equation}
But of course, for any $\gamma : \DSub{W}{\Gamma}$, we have $\interp e \dsub \gamma = \star = \interp{e'} \dsub \gamma$.

\section{Internal parametricity: glueing and welding}
\subsection{The interval}
We interpret the interval as a type $\Gamma \sez \IX \dtype$ that exists in any context $\Gamma$ and is natural in $\Gamma$, i.e.\ it is a closed type. Hence, $\IX \dsub \gamma$ will not depend on $\gamma$. Instead, for $\gamma : \DSub W \Gamma$, we set $\IX \dsub \gamma = (\PSub W {(\ctxbrid{\var i})})$.
\begin{lemma}
	The type $\Gamma \sez \IX \dtype$ is discrete.
\end{lemma}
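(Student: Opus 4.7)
The plan is to verify the definition of discreteness directly, by unpacking what an element of $\IX \dsub \gamma$ is and observing a rigidity property of face maps into the single-bridge cube $(\ctxbrid{\var i})$. Since $\IX$ is a closed type (its defining presheaf does not depend on $\gamma$), the degeneracy condition on defining terms reduces to a purely combinatorial question about face maps in $\bpcubecat$.

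Concretely, I will consider a defining term $(W, \ctxpath{\var j}) \Dsez t : \IX \dsub{\gamma}$, where $\gamma : \DSub{(W, \ctxpath{\var j})}{\Gamma}$; by definition $t$ is a face map $t : \PSub{(W, \ctxpath{\var j})}{(\ctxbrid{\var i})}$. The key observation is that the target cube $(\ctxbrid{\var i})$ has a single dimension, and that dimension is a \emph{bridge} dimension. Recalling from \cref{sec:bpcube} the rule for face maps in $\bpcubecat$: a face map must assign to every bridge variable of the target either $0$, $1$, or a bridge variable of the source. Since $\var i$ is a bridge variable, the value $\var i \psub t$ must therefore lie in $W_\IB \uplus \accol{0, 1}$, and in particular cannot be the path variable $\var j$.

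Consequently, $t$ factors as $t = t' \circ (\facewkn{\var j})$ for a unique $t' : \PSub{W}{(\ctxbrid{\var i})}$, which by definition means $t$ is degenerate in $\var j$. This holds unconditionally, so in particular it holds whenever $\gamma$ is degenerate in $\var j$; therefore $\IX$ is discrete. I expect no serious obstacle here---indeed, the argument actually shows that every defining term of $\IX$ is automatically degenerate in every path dimension, a property strictly stronger than discreteness itself.
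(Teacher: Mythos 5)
Your proof is correct and takes essentially the same approach as the paper: identify a defining term of $\IX$ with a face map into $(\ctxbrid{\var i})$ and observe that, since $\var i$ is a bridge dimension, the face map cannot send it to a path variable, so it factors over $(\facewkn{\var j})$. You spell out why the factorization is forced (the rule that bridge variables can only be assigned constants or bridge variables), which the paper leaves implicit.
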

\begin{proof}
	Pick a term $(W, \ctxpath{\var j}) \Dsez t : \IX \dsub{\gamma(\facewkn{\var j})}$. Then $t$ is a primitive substitution $t : \PSub{(W, \ctxpath{\var j})}{(\ctxbrid{\var i})}$ which necessarily factors over $(\facewkn{\var j})$.
\end{proof}
The interval can be seen as a type:
\begin{equation}
	\interp{\Gamma \judty \IX} = \sharp \interp \Gamma \sez \IX \dtype,
\end{equation}
or as an element of the universe:
\begin{equation}
	\interp{\Gamma \judtm{\IX}{\uni 0}} = \interp \Gamma \sez \tycodeDD \IX : \uniDD_0,
\end{equation}
and then $\interp{\El\,\IX} = \IX [\sharp \varsigma] = \IX$. The terms $\interp{\Gamma \judtm{0, 1}{\IX}}$ are modelled by $\interp \Gamma \sez 0, 1 : \IX$ where $W \Dsez 0 \dsub \gamma = (0/\var i, \facewkn W) : \IX \dsub \gamma$ and similar for 1. All other rules regarding the interval are straightforwardly interpreted now that we know that $\IX$ is semantically a type like any other.

\subsection{Face predicates and face unifiers}\label{sec:face-predicates}
\subsubsection{The discrete universe of propositions}
If we had an internal face predicate judgement $\Gamma \sez \parclr P ~\name{fpred}$, analogous to the type judgement $\Gamma \judty T$, then the most obvious interpretation would be $\sharp \interp \Gamma \sez \interp P \prop$. However, in order to satisfy $\interp{\coshp \setminus \Delta} = \sharp \interp \Delta$ for contexts $\Delta$ that contain face predicates, we only want to consider face predicates that absorb $\sharp$. One can show that these take the form $\sharp \interp \Gamma \sez \sharp P \prop$, where $\flat \interp \Gamma \sez P \prop$. The latter corresponds to $\flat \interp \Gamma \sez \tycode P : \Prop$, which in turn corresponds to $\sharp \interp \Gamma \sez \iota(\tycode P)[\kappa]\inv : \sharp \Prop$. So whereas $\uniDD_\ell$ was defined as $\flat \coshp \uniPsh_\ell$, we define the discrete universe of propositions $\PropD = \flat \coshp (\sharp \Prop) = \flat \Prop$. We have
\begin{equation}
	\inference{
	\inference{
	\inference{
	\inference{
	\inference{
		\Gamma \sez P : \PropD = \flat \coshp \sharp \Prop
	}{\shp \Gamma \sez \kappa(P[\varsigma]\inv) : \coshp \sharp \Prop}{}
	}{\sharp \shp \Gamma \sez \vartheta(\kappa(P[\varsigma]\inv)[\iota]\inv) : \sharp \Prop}{}
	}{\flat \sharp \shp \Gamma = \shp \Gamma \sez \iota\inv(\vartheta(\kappa(P[\varsigma]\inv)[\iota]\inv)[\kappa]) : \Prop}{}
	}{\shp \Gamma \sez \El~\iota\inv(\vartheta(\kappa(P[\varsigma]\inv)[\iota]\inv)[\kappa]) \prop}{}
	}{\sharp \shp \Gamma \sez \sharp \El~\iota\inv(\vartheta(\kappa(P[\varsigma]\inv)[\iota]\inv)[\kappa]) \prop}{}
\end{equation}
We took a significant detour here in order to emphasize the parallel with $\uniDD_\ell$. We could have more simply done
\begin{equation}
	\inference{
	\inference{
	\inference{
		\Gamma \sez P : \PropD = \flat \Prop
	}{\shp \Gamma \sez \kappa(P[\varsigma]\inv) : \Prop}{}
	}{\shp \Gamma \sez \El~\kappa(P[\varsigma]\inv) \prop}{}
	}{\sharp \shp \Gamma \sez \sharp \El~\kappa(P[\varsigma]\inv) \prop}{}
\end{equation}
We show that these are equal. Making the interesting part of the former term more precise, we get:
\begin{align*}
	\iota\inv((\vartheta \sharp)((\kappa\coshp \sharp)(P[\varsigma \Gamma]\inv)[\iota \shp \Gamma]\inv)[\kappa \sharp \shp \Gamma])
	&= \iota\inv((\vartheta \sharp)((\kappa\coshp \sharp)(P[\varsigma \Gamma]\inv)[\iota \shp \Gamma]\inv)[\iota \shp \Gamma]) \\
	&= \iota\inv((\vartheta \sharp)((\kappa\coshp \sharp)(P[\varsigma \Gamma]\inv))) \\
	&= \iota\inv((\vartheta \sharp \circ \kappa\coshp \sharp)(P[\varsigma \Gamma]\inv)) \\
	&= \iota\inv((\iota \circ \kappa)(P[\varsigma \Gamma]\inv))
	= \kappa(P [\varsigma \Gamma]\inv),
\end{align*}
which is the corresponding part of the latter term. We set $\ElD~P = \El~\kappa(P[\varsigma]\inv)$ and inversely $\tycodeD P = \kappa\inv(\tycode P)[\varsigma]$.

\subsubsection{The face predicate formers}
We interpret $\interp{\Gamma \judty \IF} = (\sharp \interp \Gamma \sez \PropD \dtype)$, giving meaning to the face predicate judgement. The identity predicate is interpreted as:
\begin{equation}
	\interp{
		\inference{\Gamma \judtm{i, j}{\IX}}{\Gamma \judtm{\idpr i j}{\IF}}{f-eq}
	} =
	\inference{
	\inference{
	\inference{
		\interp \Gamma \sez \interp i, \interp j : \IX
	}{\shp \interp \Gamma \sez \interp i [\varsigma]\inv, \interp j [\varsigma]\inv : \IX}{}
	}{\shp \interp \Gamma \sez \interp i [\varsigma]\inv \idtp{\IX} \interp j [\varsigma]\inv \prop}{}
	}{\interp \Gamma \sez \tycodeD{\interp i [\varsigma]\inv \idtp{\IX} \interp j [\varsigma]\inv} : \PropD}{}.
\end{equation}
Other connectives are interpreted simply by decoding and encoding, e.g.
\begin{equation}
	\interp{
		\inference{\Gamma \judtm{P, Q}{\IF}}{\Gamma \judtm{P \wedge Q}{\IF}}{f-$\wedge$}
	} =
	\inference{
	\inference{
	\inference{
		\interp \Gamma \sez \interp P, \interp Q : \PropD
	}{\shp \interp \Gamma \sez \ElD \interp P, \ElD \interp Q \prop}{}
	}{\shp \interp \Gamma \sez \ElD \interp P \wedge \ElD \interp Q \prop}{}
	}{\interp \Gamma \sez \tycodeD{\ElD \interp P \wedge \ElD \interp Q} : \PropD}{}
\end{equation}

\subsubsection{Context extension with a face predicate}
\begin{equation}
	\interp{
		\inference{
			\Gamma \ctx \qquad
			\leftflat \Gamma \judtm P \IF
		}{\Gamma, \ctxface P \ctx}{c-f}
	} =
	\inference{
		\interp \Gamma \ctx \qquad
		\inference{
		\inference{
		\inference{
		\inference{
			\interp{\sharp \setminus \Gamma} \sez \interp P : \PropD
		}{\shp \interp{\sharp \setminus \Gamma} \sez \ElD \interp P \prop}{}
		}{\sharp \shp \interp{\sharp \setminus \Gamma} \sez \sharp \ElD \interp P \prop}{}
		}{\sharp \interp \Gamma \sez (\sharp \ElD \interp P)[\sharp \varsigma] \prop}{}
		}{\interp \Gamma \sez (\sharp \ElD \interp P)[\sharp \varsigma][\iota] \prop}{}
	}{\interp \Gamma, \_ : (\sharp \ElD \interp P)[\sharp \varsigma][\iota] \ctx}{}
\end{equation}
Note that we have
\begin{equation}
	(\sharp \ElD \interp P)[\sharp \varsigma]
	= \sharp ((\ElD \interp P)[\varsigma])
	= \sharp \El~\kappa(\interp P).
\end{equation}
By analogy to types, we will denote this as $\interp{\El~P}$, even though $\El~P$ does not occur in the syntax. We can then write extended contexts as $\interp \Gamma, \_ : \interp{\El~P}[\iota]$.

\begin{proof}[Addendum to the proof of \cref{thm:leftflat}]
	The fact that $\flat \interp{\Gamma, \ctxface P} = \flat \interp{\sharp \setminus (\Gamma, \ctxface P)}$ follows trivially from $\flat \interp \Gamma = \flat \interp{\sharp \setminus \Gamma}$. Since propositions are discrete, extending the context with a proposition preserves its discreteness and hence the fact that $\kappa$ for that context is an isomorphism.
\end{proof}
\begin{proof}[Addendum to the proof of \cref{thm:leftsharp}]
	The fact that $\interp{\coshp \setminus(\Gamma, \ctxface P)} = \sharp\interp{\Gamma, \ctxface P}$ follows from
	\begin{equation}
		\sharp(\interp{\El~P}[\iota]) = \sharp \interp{\El~P} = \interp{\El~P} = \interp{\El~P}[\iota\sharp]. \qedhere
	\end{equation}
\end{proof}

\subsubsection{Face unifiers} The use of face unifiers is motivated from a computational perspective and is a bit unpractical semantically. Since $\Prop$ is the subobject quantifier of $\widehat \bpcubecat$, extending a context with a proposition amounts to taking a subobject of the context. Every syntactic face unifier $\sigma : \Delta \to \Gamma$ has an interpretation $\interp \sigma : \interp \Delta \to \interp \Gamma$. One can show that the union of the images of all interpretations of all face unifiers to a context $\Gamma$, is equal to all of $\interp \Gamma$. Hence, checking whether something works under all face unifiers, amounts to checking whether it works. One can also show that $P \Rightarrow Q$ means $\interp P \subseteq \interp Q$. Then the rule
\begin{equation}
	\inference{
		\Gamma \judtm{P, Q}{\IF} \qquad
		P \Leftrightarrow Q
	}{\Gamma \judtmeq P Q \IF}{f=}
\end{equation}
is trivial. We also have
\begin{align*}
	& \interp{
		\inference{
			\Gamma \judtm{i, j}{\IX} \qquad
			\top \Rightarrow (\idpr i j)
		}{\Gamma \judtmeq i j \IX}{i=-f}
	} = 
	\inference{
	\inference{
	\inference{
	\inference{
		\interp \Gamma \sez \top \subseteq \tycodeD{(\interp i [\varsigma]\inv \idtp{\IX} \interp j [\varsigma]\inv)} : \PropD
	}{\shp \interp \Gamma \sez \top \subseteq (\interp i [\varsigma]\inv \idtp{\IX} \interp j [\varsigma]\inv) \prop}{}
	}{\shp \interp \Gamma \sez \star : \interp i [\varsigma]\inv \idtp{\IX} \interp j [\varsigma]\inv}{}
	}{\shp \interp \Gamma \sez \interp i [\varsigma]\inv = \interp j [\varsigma]\inv : \IX}{}
	}{\interp \Gamma \sez \interp i = \interp j : \IX}{}
\end{align*}

\subsection{Systems} The interpretation of systems is straightforward.

\subsection{Welding}
We interpret
\begin{equation}
	\inference{
		\Gamma \judtm{P}{\IF} \qquad
		\Gamma, \ctxface P \judtm{T}{\El~\uni \ell} \qquad
		\Gamma \judtm{A}{\El~\uni \ell} \\
		\leftsharp \Gamma, \ctxface P \judtm{f}{\El~A \to T}
	}{\Gamma \judtm{\Weldsys{A}{\Weldsysclause{P}{T}{f}}}{\El~\uni \ell}}{t-Weld}
\end{equation}
as
\begin{equation*}
	\inference{
	\inference{
		\left\{
		\begin{matrix}
			\inference{
				\interp \Gamma \sez \interp P : \PropD
			}{\sharp \shp \interp \Gamma \sez \sharp \ElD \interp P \prop}{}
			&
			\inference{
			\inference{
			\inference{
				\interp \Gamma, \var p : (\sharp \ElD \interp P)[\sharp \varsigma][\iota] \sez \interp T : \uniDD_\ell 
			}{\sharp \shp \paren{\interp \Gamma, \var p : (\sharp \ElD \interp P)[\sharp \varsigma][\iota]} \sez \ElDD \interp T : \dtype_\ell}{}
			}{\sharp \paren{\shp \interp \Gamma, \var p : (\sharp \ElD \interp P)[\iota]} \sez \ElDD \interp T [\sharp (\subext \varsigma)] \dtype_\ell}{}
			}{\sharp \shp \interp \Gamma, \var p : \sharp \ElD \interp P \sez \ElDD \interp T [\sharp (\subext \varsigma)] \dtype_\ell}{} \\
			\qquad \\
			\inference{
				\interp \Gamma \sez \interp A : \uniDD_\ell
			}{\sharp \shp \interp \Gamma \sez \ElDD \interp A \dtype_\ell}{}
			&
			\inference{
				\sharp \interp \Gamma, \var p : \sharp \ElD \interp P [\sharp \varsigma] \sez \interp f : \ElDD \interp A [\sharp \varsigma][\wknvar p] \to \ElDD \interp T [\sharp \varsigma]
			}{\sharp \shp \interp \Gamma, \var p : \sharp \ElD \interp P \sez \interp f [\varsigma \subext]\inv : \ElDD \interp A [\wknvar p] \to \ElDD \interp T[\sharp(\subext \varsigma)]}{}
		\end{matrix}
		\right.
	}{\sharp \shp \interp \Gamma \sez \Weldsys{\ElDD \interp A}{\Weldsysclauseb{\sharp \ElD \interp P}{\ElDD \interp T [\sharp (\subext \varsigma)]}{\interp f [\varsigma \subext]\inv}} \dtype_\ell}{}
	}{\interp \Gamma \sez \tycodeDD{\Weldsys{\ElDD \interp A}{\Weldsysclauseb{\sharp \ElD \interp P}{\ElDD \interp T [\sharp (\subext \varsigma)]}{\interp f [\varsigma \subext]\inv}}} : \uniDD_\ell}{}.
\end{equation*}
We have
\begin{align*}
	\interp{\El~\Weldsys{A}{\Weldsysclause{P}{T}{f}}}
	&= \ElDD \interp{\Weldsys{A}{\Weldsysclause{P}{T}{f}}} [\sharp \varsigma] \\
	&= \Weldsys{\interp{\El~A}}{\Weldsysclauseb{\interp{\El~P}}{\interp{\El~T}}{\interp f}}.
\end{align*}

The constructor
\begin{equation}
	\inference{
		\Gamma \judty{\El~\Weldsys{A}{\Weldsysclause{P}{T}{f}}} \qquad
		\Gamma \judtm{a}{\El~A}
	}{\Gamma \judtm{\weldsys{\sysclause P f} a}{\El~\Weldsys{A}{\Weldsysclause{P}{T}{f}}}}{t-weld}
\end{equation}
becomes
\begin{equation}
	\inference{
		\interp \Gamma \sez \interp a : \interp{\El~A} [\iota]
	}{\interp \Gamma \sez \weldsys{\sysclauseb{\interp{\El~P} [\iota]}{\interp f} [\iota]} \interp a : \Weldsys{\interp{\El~A}}{\Weldsysclauseb{\interp{\El~P}}{\interp{\El~T}}{\interp f}}[\iota]}{}.
\end{equation}

For the eliminator
\begin{equation}
	\inference{
		\Gamma, \ctxvar \nu {y}{\El~\Weldtp P A T f} \judty{\El~C} \qquad
		\Gamma, \ctxface P, \ctxvar \nu y {\El~T} \judtm{d}{\El~C} \\
		\Gamma, \ctxvar \nu x {\El~A} \judtm{c}{\El~C[\varclr{\weldtm P f x} / \varclr y]} \qquad
		\Gamma, \ctxface P, \ctxvar \nu x {\El~A} \judtmeq{c}{d[\varclr{f\,x} / \varclr y]}{\El~C[\varclr{f\,x} / \varclr y]} \\
		\nu \setminus \Gamma \judtm{b}{\El~\Weldtp P A T f}
	}{\Gamma \judtm{
		\ind^\nu_\Weld(\varclr y.\parclr C, \sys{\sysclause{P}{\varclr y.d}}, \varclr x.c, \varclr b)	
	}{\El~C[\varclr b/\varclr y]}}{t-indweld}
\end{equation}
first note that
\begin{align*}
	\nu \interp{\El~\Weldsys{A}{\Weldsysclause{P}{T}{f}}}
	&= \nu \Weldsys{\interp{\El~A}}{\Weldsysclauseb{\interp{\El~P}}{\interp{\El~T}}{\interp f}} \\
	&= \Weldsys{\nu \interp{\El~A}}{\Weldsysclauseb{\interp{\El~P}}{\nu \interp{\El~T}}{\lambda \ftrvar \nu x.\ftrtm \nu {(\interp f \var x)}}}
\end{align*}
because lifted functors preserve $\Weld$ and $\nu \interp{\El~P} = \interp{\El~P}$ for $\nu \in \accol{\coshp, \idmod, \sharp}$. Taking that into account, all of this boils down to straightforward use of the eliminator of the $\Weld$-type for presheaves.

\subsection{Glueing}
We similarly get
\begin{equation}
	\interp{\El~\Gluesys{A}{\Gluesysclause{P}{T}{f}}}
	= \Gluesys{\interp{\El~A}}{\Gluesysclauseb{\interp{\El~P}}{\interp{\El~T}}{\interp f}}.
\end{equation}

The constructor
\begin{equation}
	\inference{
		\Gamma \judty{\El~\Gluetp P A T f} \qquad
		\Gamma, \ctxface P \judtm t {\El~T} \qquad
		\Gamma \judtm a {\El~A} \qquad
		\Gamma, \ctxface P \judtmeq{f\,t}{a}{\El~A}
	}{\Gamma \judtm{\gluetm P a t}{\El~\Gluetp P A T f}}{t-glue}
\end{equation}
becomes
\begin{equation}
	\inference{
		\interp \Gamma, \var p : \interp{\El~P}[\iota] \sez \interp t : \interp{\El~T} \\
		\interp \Gamma \sez \interp a : \interp{\El~A} \\
		\interp \Gamma, \var p : \interp{\El~P}[\iota] \sez \interp f [\iota] \interp t = \interp a [\wknvar p] : \interp{\El~A}[\iota]
	}{\interp \Gamma \sez \gluesys{\interp a}{\sysclauseb{\interp{\El~P}[\iota]}{\interp t}} : \Gluesys{\interp{\El~A}}{\Gluesysclauseb{\interp{\El~P}}{\interp{\El~T}}{\interp f}}[\iota]}{}.
\end{equation}

The eliminator
\begin{equation}
	\inference{
		\Gamma \judtm{b}{\El~\Gluetp P A T f}
	}{\Gamma \judtm{\ungluetm P f b}{\El~A}}{t-unglue}
\end{equation}
becomes
\begin{equation}
	\inference{
		\interp \Gamma \sez \interp b : \Gluesys{\interp{\El~A}}{\Gluesysclauseb{\interp{\El~P}}{\interp{\El~T}}{\interp f}}[\iota]
	}{\interp \Gamma \sez \ungluesys{\sysclauseb{\interp{\El~P} [\iota]}{\interp f [\iota]}} \interp b : \interp{\El~A}[\iota]}{}.
\end{equation}

\subsection{The path degeneracy axiom}
We will interpret the path degeneracy axiom
\begin{equation}
	\inference{
		\Gamma \judty A \qquad
		\leftflat{\Gamma} \judtm{p}{\Pipar(i : \IX).A}
	}{\Gamma \judtm{\degaxof p}{{p}\idtp{\Pipar(i : \IX).A}{\paren{\lamannotpar{i}{\IX}\appar p 0}}}}{t-degax}
\end{equation}
via the stronger rule
\begin{equation}
	\inference{
		\Gamma \judtm{p}{\prodpar i \IX.A}
	}{\Gamma \judtmeq{p}{\lamannotpar i \IX.\appar p 0}{\prodpar i \IX.A}}{}
\end{equation}
after which the axiom follows by reflexivity. For simplicity, we put the variables in the context. We have
\begin{equation}
	\inference{
	\inference{
		\interp \Gamma, \ftrvar \sharp i : \sharp \IX \sez \interp a : \ElDD \interp A [\sharp \varsigma][\iota][\wkn{\ftrvar \sharp i}]
	}{\shp \paren{\interp \Gamma, \ftrvar \sharp i : \sharp \IX} \sez \interp a [\varsigma]\inv : \ElDD \interp A [\iota][\shp \wkn{\ftrvar \sharp i}]}{}
	}{\shp \interp \Gamma \sez \interp a [\varsigma]\inv [\shp \wkn{\ftrvar \sharp i}]\inv : \ElDD \interp A [\iota]}{}
\end{equation}
because $\shp \wkn{\ftrvar \sharp i} : \shp \paren{\interp \Gamma, \ftrvar \sharp i : \sharp \IX} \cong \shp \interp \Gamma$ can be shown to be an isomorphism. Since these operations are invertible, we have
\begin{equation}
	\interp a = \interp a [\varsigma]\inv [\shp \wkn{\ftrvar \sharp i}]\inv [\varsigma] [\wkn{\ftrvar \sharp i}]
\end{equation}
and the right hand side is clearly invariant under $\loch[\id, \ftrtm \sharp 0/\ftrvar \sharp i]$.

To see in general that $\shp \wkn{\ftrvar \sharp i} : \shp(\Gamma, \ftrvar \sharp i : \IX) \cong \shp \Gamma$ is an isomorphism, pick $\overline{(\gamma, \fpshadj \flat(\vfi))} : \DSub{W}{\shp(\Gamma, \ftrvar \sharp i : \IX)}$. We show that $\overline{(\gamma, \fpshadj \flat(\vfi))} = \overline{(\gamma, \ftrtm \sharp 0)}$. We have $\fpshadj \flat(\vfi) : \PSub{W}{\sharp (\ctxbrid{\var i})}$ (there is some benevolent abuse of notation involved here, related to the fact that $\IX$ is a closed type) and hence $\vfi : \PSub{\flat W}{(\ctxbrid{\var i})}$. Now $\vfi$ factors as $\vfi = (\facewkn{\flat W})(k / \var i)$. Similarly, one can show that $\ftrtm \sharp 0 = \fpshadj \flat((\facewkn{\flat W})(0/\var i))$.

By discreteness of $\shp$, we have
\begin{equation}
	\overline{(\gamma (\facewkn{\var i}), \fpshadj \flat(\facewkn{\flat W}))}
	= \overline{(\gamma (\facewkn{\var i}), \fpshadj \flat(\facewkn{\flat W}))}(0/\var i, \facewkn{\var i})
	= \overline{(\gamma (\facewkn{\var i}), \ftrtm \sharp 0(\facewkn{\var i}))}
	= \overline{(\gamma, \ftrtm \sharp 0)} (\facewkn{\var i}).
\end{equation}
Restricting both sides by $(k/\var i)$, we find what we wanted to prove.

\section{Sizes and natural numbers}
\subsection{The natural numbers}
In any presheaf category, we can define a closed type $\Nat$ by setting $\Nat \dsub \gamma = \IN$ and $n \psub \vfi = n$. We have
\begin{equation*}
	\inference{\Gamma \ctx}{\Gamma \sez \Nat \type_0}{}, \qquad
	\inference{\Gamma \ctx}{\Gamma \sez 0 : \Nat}{}, \qquad
	\inference{\Gamma \sez n : \Nat}{\Gamma \sez \suc\,n : \Nat}{} \qquad
	\inference{
		\Gamma, \var m : \Nat \sez C \type \\
		\Gamma \sez c_0 : C[\id, 0/\var m] \\
		\Gamma, \var m : \Nat, \var c : C \sez c_\suc : C[\id,\suc\,\var m/\var m] \\
		\Gamma \sez n : \Nat
	}{\Gamma \sez \ind_\Nat(\var m.C, c_0, \var m.\var c.c_\suc, n) : C[\id, n/\var m]}{}
\end{equation*}
and all these operators are natural in $\Gamma$ \emph{and} are respected by lifted functors, e.g.\ $\fpsh F \Nat = \Nat$ and $\ftrtm{\fpsh F}{(\suc\,n)} = \suc(\ftrtm{\fpsh F}{n})$.

In $\widehat{\bpcubecat}$, $\Nat$ is a discrete type since $n \psub{0/\var i, \facewkn{\var i}} = n$ since in general $n \psub \vfi = n$. We can now interpret the inference rules for natural numbers:
\begin{equation}
	\interp{
		\inference{
			\Gamma \ctx
		}{\Gamma \judtm{\Nat}{\uni 0}}{t-Nat}
	} =
	\inference{
	\inference{
		\interp \Gamma \ctx
	}{\sharp \shp \interp \Gamma \sez \Nat \dtype_0}{}
	}{\interp \Gamma \sez \tycodeDD \Nat : \uniDD_0}{}.
\end{equation}
We have $\interp{\El~\Nat} = \Nat$.
\begin{equation}
	\interp{
		\inference{
			\Gamma \ctx
		}{\Gamma \judtm{0}{\El~\Nat}}{t-0}
	} =
	\inference{
		\interp \Gamma \ctx
	}{\interp \Gamma \sez 0 : \Nat}{},
	\qquad \qquad
	\interp{
		\inference{
			\Gamma \judtm{n}{\El~\Nat}
		}{\Gamma \judtm{\suc\,n}{\El~\Nat}}{t-s}
	} =
	\inference{
		\interp \Gamma \sez \interp n : \Nat
	}{\interp \Gamma \sez \suc \interp n : \Nat}{}.
\end{equation}
The induction principle
\begin{equation}
	\inference{
		\Gamma, \ctxvar \nu m {\El~\Nat} \judty {\El~C} \qquad
		\Gamma \judtm{c_0}{\El~C[\varclr 0/\varclr m]} \\
		\Gamma, \ctxvar \nu m {\El~\Nat}, \ctxctu c {\El~C} \judtm{c_\suc}{\El~C[\varclr{\suc\,m}/\varclr m]} \\
		\nu \setminus \Gamma \judtm n {\El~\Nat}
	}{
		\Gamma \judtm{\ind^\nu_\Nat(\varclr m.\parclr C, c_0, \varclr m.c.c_\suc, \varclr n)
	}{\El~C[\varclr n / \varclr m]}}{t-indnat}
\end{equation}
is interpreted as
\begin{equation}
	\inference{
		\interp \Gamma, \var m : \Nat \sez \interp{\El~C}[\iota] \dtype \qquad
		\interp \Gamma \sez \interp{c_0} : \interp{\El~C}[\iota][\id, 0/\var m] \\
		\interp \Gamma, \var m : \Nat, \var c : \interp{\El~C}[\iota] \sez \interp{c_\suc} : \interp{\El~C}[\iota][\suc\,\var m/\var m] \\
		\interp \Gamma \sez \forsub \nu \interp n : \Nat
	}{\interp \Gamma \sez \ind_\Nat(\var m.\interp{\El~C}[\iota], \interp{c_0}, \var m.\var c.\interp{c_\suc}, \forsub \nu \interp n)}{},
\end{equation}
where we use extensively that $\nu \Nat = \Nat$.

\subsection{Sizes}$~$

\medskip

\subsubsection{In the model}

\begin{proposition}
	We have a discrete type $\Size$ with the following inference rules:
	\begin{equation}
		\inference{\Gamma \ctx}{\Gamma \sez \Size \dtype_0}{} \qquad
		\inference{\Gamma \ctx}{\Gamma \sez \szero : \Size}{} \qquad
		\inference{\Gamma \sez n : \Size}{\Gamma \sez \ssuc n : \Size}{} \qquad
		\inference{
			\shp \Gamma \sez P \prop \qquad
			\Gamma, \var p : P[\varsigma] \sez n : \Size
		}{\Gamma \sez \sfillsys{\sfillsysclause{\var p : P}{n}} : \Size}{}
	\end{equation}
	\begin{equation}
		\inference{
			\shp \Gamma \sez i : \IX \qquad
			\Gamma \sez m, n : \Size \\
			\Gamma, \var p : \paren{(i \idtp \IX 0) \vee (i \idtp \IX 1)}[\varsigma] \sez m[\wknvar p] = n[\wknvar p] : \Size
		}{\Gamma \sez m = n : \Size}{} \quad
		\inference{
			\Gamma \sez m, n : \Size
		}{\Gamma \sez m \sqcup n : \Size}{}
	\end{equation}
	satisfying the expected equations.
\end{proposition}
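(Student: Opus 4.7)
The plan is to realize $\Size$ as a presheaf on $\bpcubecat$ built by quotienting a free construction. Concretely, I would define $\Size\dsub\gamma$ for each $\gamma : \DSub W \Gamma$ as follows: start from the free set of formal terms under the four constructors $\szero$, $\ssuc(-)$, $\sfillsys{\sfillsysclause{\var p : P}{-}}$, and $-\sqcup-$, equip it with structural restriction maps $\loch\psub\vfi$ that act term-by-term (pulling back propositions along $\shp\vfi$ and bodies along $\vfi\subext$), and then mod out by the smallest congruence containing the ``expected'' equations for $\sqcup$ (associativity, commutativity, idempotence, neutrality of $\szero$), the computation rule forcing $\sfillsys{\sfillsysclause{\var p : P}{n}} = n[\id, \star/\var p]$ whenever $P\dsub\gamma$ is inhabited, and the endpoint-equality rule. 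Stability of each generating equation under restriction is a routine case analysis using the fact that propositions are subobjects, so truth of $P$ is preserved by restriction.

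Verifying the positive rules is then mechanical: each constructor corresponds to a generator, and each equational rule is either among the defining identifications of the quotient or a direct corollary. The universe clause $\Gamma \sez \tycodeDD\Size : \uniDD_0$ reduces to checking that $\Size$ is discrete and of level $0$; the size bound is clear because each $\Size\dsub\gamma$ is a set-theoretic quotient of a free term algebra over countable data at each stage, and discreteness is the main remaining claim.

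The main obstacle is the discreteness proof. Given $(W, \ctxpath{\var i}) \Dsez s : \Size\dsub{\gamma(\facewkn{\var i})}$, I need to exhibit $s = s\psub{0/\var i, \facewkn{\var i}}$ in the quotient. My plan is to invoke the endpoint-equality rule over the context $(W, \ctxpath{\var i})$: the path variable $\var i$ gives rise, via the weakening $(\var j^\IB / \var i^\IP)$ composed with $\varsigma$, to an interval term $i : \IX$ living in $\shp(W, \ctxpath{\var i})$ whose two endpoints $(i = 0)$ and $(i = 1)$ pull back precisely to the face maps $(0/\var i)$ and $(1/\var i)$ on the original cube. One then checks that $s$ and $s\psub{0/\var i, \facewkn{\var i}}$ agree on each of those endpoints, so the endpoint-equality rule forces them equal globally. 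The delicate point is producing the interval term functorially: a path dimension survives the shape only insofar as it weakens to a bridge, so the argument relies crucially on the availability of $(\var j^\IB / \var i^\IP)$ in $\bpcubecat$ --- precisely the feature that distinguishes bridge/path cubes from purely path-theoretic or purely bridge-theoretic alternatives.
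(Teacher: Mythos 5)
Your approach is fundamentally different from the paper's, and its central step — discreteness — does not work as written. The paper constructs $\Size$ directly as a concrete closed type: $(\DSub W \Size) := \IN^{\PSub{()}{\shp W}}$, an assignment of a natural number to each vertex of the shaped cube, with restriction $n\psub\vfi \ssub\psi := n\ssub{\shp\vfi \circ \psi}$. Discreteness is then immediate, because $\shp(0/\var i, \facewkn{\var i}) = \id$ once $\var i$ is a path dimension; there is nothing to prove. All the constructors and equations are verified vertex-by-vertex. Your proposal instead builds $\Size$ as a quotient of a free term algebra by the expected equations — a genuinely different and more syntactic route — but this is exactly where the trouble begins.

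The discreteness argument you sketch fails for two distinct reasons. First, it misreads what $\shp$ does. The shape functor \emph{contracts} path dimensions to points: $\shp(W, \ctxpath{\var i}) = \shp W$. It does not weaken $\var i^\IP$ to a bridge; that is what $\flat$ does, yielding $(\flat W, \ctxbrid{\var i})$. So when you set $\Gamma = \yoneda(W, \ctxpath{\var i})$, the variable $\var i$ has no counterpart whatsoever in $\shp\Gamma = \yoneda\shp W$, and there is no interval term $\shp\Gamma \sez i : \IX$ tracking $\var i$ for the endpoint-equality rule to be applied in the $\var i$-direction. The endpoint rule is a \emph{codiscreteness} principle along bridge dimensions, not a discreteness principle along path dimensions; it is the wrong tool. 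Second, even if such an $i$ were available, the argument would be circular. Writing $m = s$ and $n = s\psub{0/\var i, \facewkn{\var i}}$, agreement at the $(i=0)$ face is trivial, but at the $(i=1)$ face you would need $s\psub{1/\var i} = s\psub{0/\var i}$ at stage $W$, which is the very identification discreteness is supposed to furnish. You cannot assume it.

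Beyond that, the free-algebra quotient is itself on shaky ground as a definition of a \emph{closed} type. The $\sfill$ constructor is parameterized by a proposition $P$ over $\shp\Gamma$ and a partial term $n$ over $\Gamma.P[\varsigma]$, both of which depend on the ambient context $\Gamma$. To define $\DSub W \Size$ uniformly over all $\Gamma$, you would either have to restrict to $\Gamma = \yoneda W$ and prove a sheaf-like agreement condition, or define the algebra by simultaneous induction over all contexts — and in either case you still owe a proof that the resulting presheaf is discrete, which the paper's concrete definition makes free. Your observation that the $(\var j^\IB/\var i^\IP)$ face maps are what distinguish bridge/path cubes from simpler settings is a good instinct, but it points at $\flat$-type reasoning, whereas the rule you invoke lives in the world of $\shp$; the two are not interchangeable here.
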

For closed types $T$, the set $T \dsub \gamma$ is independent of $\gamma : \PSub W \Gamma$; hence we will denote it as $\DSub W T$. Similarly, we will write $W \Dsez t : T$ for $W \Dsez t : T \dsub \gamma$.
\begin{proof}
	For $\gamma : \DSub W \Gamma$, we set $(\DSub W \Size) = \IN^{\PSub{()}{\shp W}}$, i.e.\ a term $n : \Size \dsub \gamma$ consists of a natural number for every vertex of the cube $\shp W$, which is $W$ with all path dimensions contracted. Put differently still, a term $n : \DSub W \Size$ consists of a natural number for every vertex of the cube $W$, such that numbers for path-adjacent vertices are equal. Writing $n \ssub \psi$ for the vertex corresponding to $\psi : \PSub{()}{\shp W}$, we define accordingly $n \psub \vfi \ssub \psi = n \ssub{\shp \vfi \circ \psi}$. This implies that we have in general $n \ssub \psi = n \psub{\psi'} \ssub{\id_{()}}$ for any $\psi' : \PSub{()}{W}$ such that $\shp \psi' = \psi$. Such a $\psi'$ always exists, e.g. $\psi' = (\psi, 0/\var i^\IP \in W)$. Hence, we will avoid the $\ssub \loch$ notation and say that a term $W \Dsez n : \Size$ is determined by all of its vertices $() \Dsez n \psub \vfi : \Size$ which are in fact functions $\IN^{\PSub{()}{()}}$ but can be treated as naturals since $\PSub{()}{()}$ is a singleton. Every such term $n$ has the property that $() \Dsez n \psub \vfi : \Size$ is independent of how $\vfi$ treats path variables, i.e. if $\shp \vfi = \shp \psi$, then $n \psub \vfi = n \psub \psi$.
	
	To see that $\Size$ is discrete, pick $(W, \ctxpath{\var i}) \Dsez n : \Size$. Then $n \psub{0/\var i, \facewkn{\var i}} = n$ because $\shp(0/\var i, \facewkn{\var i}) = \id$.
	
	We define $W \Dsez \szero : \Size$ by setting $() \Dsez \szero \psub{\vfi} = 0 : \Size$ for all $\vfi : \PSub{()}{W}$.
	
	We define $W \Dsez \ssuc n : \Size$ by setting $() \Dsez (\ssuc n) \psub \vfi = n \psub \vfi + 1 : \Size$.
	
	Assume we have $\shp \Gamma \sez P \prop$ and $\Gamma, \var p : P[\varsigma] \sez n : \Size$. Then we define $\Gamma \sez \sfillsys{\sfillsysclause{\var p : P}{n}}$ as follows: pick $\gamma : \DSub{()}{\Gamma}$. Then we set $\sfillsys{\sfillsysclause{\var p : P}{n}} \dsub \gamma = n \dsub{\gamma, \star / \var p}$ if $P [\varsigma] \dsub \gamma = \accol{\star}$, and $\sfillsys{\sfillsysclause{\var p : P}{n}} \dsub \gamma = 0$ if $P [\varsigma] \dsub \gamma = \eset$. We need to show that this respects paths, i.e. if $\vfi, \psi : \PSub{()}{W}$, $\gamma : \DSub{W}{\Gamma}$ and $\shp \vfi = \shp \psi$, then we should prove that $\sfillsys{\sfillsysclause{\var p : P}{n}} \dsub{\gamma \vfi} = \sfillsys{\sfillsysclause{\var p : P}{n}} \dsub{\gamma \psi}$. First, we show that $\varsigma \gamma \vfi = \varsigma \gamma \psi$. Since $\varsigma$ decomposes as $\kappa\inv \inquotshp$, it suffices to show that $\kappa \varsigma \gamma \vfi = \kappa \varsigma \gamma \psi$. Now we have
	\begin{equation}
		\kappa \quotshp \Gamma \circ \varsigma \Gamma \circ \gamma \circ \vfi
		= \fpshadj \shp\inv(\varsigma \Gamma \circ \gamma \circ \vfi) \circ \varsigma()
		= \fpshadj \shp\inv(\varsigma \Gamma \circ \gamma) \circ \shp \vfi \circ \varsigma(),
	\end{equation}
	and similar for $\psi$, which proves the equality since $\shp \vfi = \shp \psi$. But this implies that $P[\varsigma]\dsub{\gamma \vfi} = P[\varsigma]\dsub{\gamma \psi}$. Since we also have $n \dsub{\gamma \vfi} = n \dsub{\gamma \psi}$, we can conclude that $\sfill$ respects paths.
	
	For the equality expressing codiscreteness, assume the premises and pick $\gamma : \DSub{()}{\Gamma}$. Then we have $i\dsub{\varsigma \gamma} : \DSub{()}{\IX}$, which is either 0 or 1. Hence, $\paren{(i \idtp \IX 0) \vee (i \idtp \IX 1)} [\varsigma] \dsub \gamma = \accol \star$, and we find
	\begin{equation}
		m \dsub \gamma = m[\wknvar p] \dsub{\gamma, \star} = n[\wknvar p] \dsub{\gamma, \star} = n \dsub \gamma.
	\end{equation}

	We define $W \Dsez m \sqcup n : \Size$ by setting $() \Dsez (m \sqcup n) \psub \vfi : \Size$ equal to the maximum of $m \psub \vfi$ and $n \psub \vfi$.
\end{proof}

We can easily lift $\szero$, $\ssuc$ and $\smax{}{}$ to $\sharp \Size$, e.g. we have $\var m : \Size \sez \ssuc \var m : \Size$, whence $\ftrvar \sharp m : \sharp \Size \sez \ftrtm{\sharp}{(\ssuc \var m)} : \sharp \Size$, and then using substitution we can derive
\begin{equation}
	\inference{
		\Gamma \sez n : \sharp \Size
	}{\Gamma \sez \ftrtm{\sharp}{(\ssuc \var m)}[\bullet, n/\ftrvar \sharp m] : \sharp \Size}{}.
\end{equation}
We will denote the latter term as $\ssuc n$. Similarly, we can lift $\sfill$:
\begin{equation}
	\inference{
	\inference{
		\inference{
		\inference{
			\Gamma \sez P \prop
		}{\flat \Gamma \sez \flat P \prop}{}
		}{\shp \flat \Gamma \sez (\flat P)[\varsigma \inv] \prop}{}
		\qquad
		\inference{
			\Gamma, \var p : P \sez n : \sharp \Size
		}{\flat \Gamma, \ftrvar \flat p : \flat P \sez \iota\inv(n[\kappa]) : \Size}{}
	}{\flat \Gamma \sez \sfillsys{\sfillsysclause{\ftrvar \flat p : (\flat P)[\varsigma\inv]}{\iota\inv(n[\kappa])}} : \Size}{}
	}{\Gamma \sez \iota \paren{\sfillsys{\sfillsysclause{\ftrvar \flat p : (\flat P)[\varsigma\inv]}{\iota\inv(n[\kappa])}}}[\kappa]\inv : \sharp \Size}{}
\end{equation}
We will denote this result as $\sfillsyssharp{\sfillsysclause{\var p : P}{n}}$. Note that $\varsigma()$ and $\kappa()$ are the identity. For $\gamma : \DSub{()}{\Gamma}$, we have
\begin{align*}
	\sfillsyssharp{\sfillsysclause{\var p : P}{n}} \dsub{\gamma}
	&= \iota \paren{\sfillsys{\sfillsysclause{\ftrvar \flat p : (\flat P)[\varsigma\inv\Gamma]}{\iota\inv(n[\kappa\Gamma])}}}[\kappa\Gamma]\inv \dsub{\gamma} \\
	&= \iota \paren{\sfillsys{\sfillsysclause{\ftrvar \flat p : (\flat P)[\varsigma\inv\Gamma]}{\iota\inv(n[\kappa\Gamma])}}} \dsub{\fpshadj \shp(\gamma \circ \varsigma ()\inv)} \\
	&= \iota \paren{\sfillsys{\sfillsysclause{\ftrvar \flat p : (\flat P)[\varsigma\inv\Gamma]}{\iota\inv(n[\kappa\Gamma])}}} \dsub{\fpshadj \shp(\gamma)} \\
	&= \fpshadj \flat \paren{\sfillsys{\sfillsysclause{\ftrvar \flat p : (\flat P)[\varsigma\inv\Gamma]}{\iota\inv(n[\kappa\Gamma])}} \dsub{\fpshadj \shp(\gamma) \circ \kappa()}} \\
	&= \fpshadj \flat \paren{\sfillsys{\sfillsysclause{\ftrvar \flat p : (\flat P)[\varsigma\inv\Gamma]}{\iota\inv(n[\kappa\Gamma])}} \dsub{\fpshadj \shp(\gamma)}}.
\end{align*}
Now
\begin{equation}
	(\flat P)[\varsigma\inv \Gamma][\varsigma \Gamma] \dsub{\fpshadj \shp(\gamma)}
	= (\flat P) \dsub{\fpshadj \shp \Gamma} = P \dsub \gamma.
\end{equation}
So we make the expected case distinction: if $P \dsub \gamma = \accol \star$, then
\begin{align*}
	\sfillsyssharp{\sfillsysclause{\var p : P}{n}} \dsub{\gamma}
	&= \fpshadj \flat \paren{\iota\inv(n[\kappa\Gamma]) \dsub{\fpshadj \shp(\gamma)}}
	= n[\kappa\Gamma]\dsub{\fpshadj \shp(\gamma)} = n \dsub \gamma.
\end{align*}
If $P \dsub \gamma = \eset$, then we just get $0$. So while it looks ugly, this is precisely the construction we would expect.
\begin{proposition}
	We have an inequality proposition
	\begin{equation}
		\inference{\Gamma \sez m, n : \sharp \Size}{\Gamma \sez m \leq n \prop}{}
	\end{equation}
	that satisfies reflexivity, transitivity, $0 \leq n$, $\ssuc m \leq \ssuc n$ if $m \leq n$, $\sfillsyssharp{\sfillsysclause{\var p : P}{m}} \leq \sfillsyssharp{\sfillsysclause{\var p : P}{n}}$ if $m \leq n$, and $m \leq m \sqcup n$ and $n \leq m \sqcup n$.
\end{proposition}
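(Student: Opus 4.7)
The plan is to define $m \leq n$ pointwise at vertices of the cube, reducing every verification to the corresponding fact about natural numbers. Concretely, for $\gamma : \DSub W \Gamma$, I set $(m \leq n)\dsub\gamma = \accol \star$ if for every vertex $\vfi : \PSub{()}{W}$ the values $m\dsub{\gamma\vfi}$ and $n\dsub{\gamma\vfi}$, which live in $\sharp\Size$ at the empty cube and so are naturally in bijection with the ordinary naturals (since $\Size$ is a closed type and $\flat() = ()$), satisfy $m\dsub{\gamma\vfi} \leq n\dsub{\gamma\vfi}$ in $\IN$; otherwise $(m \leq n)\dsub\gamma = \eset$. Naturality in $W$ is immediate: for $\psi : \PSub V W$ and any vertex $\omega : \PSub{()}{V}$, the composite $\psi\omega$ is a vertex of $W$, so pointwise inequality at all vertices of $W$ forces the same at all vertices of $V$. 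Since each $(m \leq n)\dsub\gamma$ is a subsingleton, this defines a proposition, which is then automatically discrete.

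For the listed properties, reflexivity, transitivity, and $0 \leq n$ hold stagewise as direct instances of the corresponding facts in $\IN$. For $\ssuc m \leq \ssuc n$ and $m \leq m \sqcup n$ (and symmetrically $n \leq m \sqcup n$), I would first unpack the $\sharp$-lifted definitions of $\ssuc$ and $\sqcup$ on $\sharp\Size$, observing that at each vertex $\vfi$ the values $(\ssuc m)\dsub{\gamma\vfi}$ and $(m \sqcup n)\dsub{\gamma\vfi}$ reduce respectively to $m\dsub{\gamma\vfi}+1$ and $\max(m\dsub{\gamma\vfi}, n\dsub{\gamma\vfi})$. The stated inequalities then follow vertex-wise from $k \leq l \Rightarrow k+1 \leq l+1$ and $k \leq \max(k,l)$ on naturals.

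The main obstacle I anticipate is the $\sfillsyssharp$ case. Its definition traverses a sequence of adjoint passages $\iota\inv(\loch[\kappa])$ and $\iota(\loch)[\kappa]\inv$, but these were already unpacked explicitly earlier in the $\Size$ subsection to give the case distinction $\sfillsyssharp{\sfillsysclause{\var p : P}{n}}\dsub\gamma = n\dsub{\gamma, \star/\var p}$ when $P\dsub\gamma$ is inhabited, and $0$ otherwise. Applying that unpacking simultaneously at each vertex $\vfi : \PSub{()}{W}$ of $\Gamma$, for both the $m$- and $n$-versions, the inequality either reduces to $m\dsub{\gamma\vfi,\star/\var p} \leq n\dsub{\gamma\vfi,\star/\var p}$ --- supplied by the hypothesis $m \leq n$ in the context extended by $\var p : P$ --- or to the trivial $0 \leq 0$. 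Once this bookkeeping is carried out, the construction is complete.
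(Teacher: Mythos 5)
Your proposal is correct and takes essentially the same approach as the paper: the paper defines $(m \leq n)\dsub{\gamma}$ by unwrapping the $\sharp$-terms to $\flat W \Dsez m, n : \Size$ and demanding $m\psub\vfi \leq n\psub\vfi$ in $\IN$ at every vertex $\vfi : \PSub{()}{\flat W}$, which under the canonical identification of vertices of $W$ with vertices of $\flat W$ is exactly your vertex-wise definition. Your verification sketches for the listed properties fill in what the paper dismisses with ``this can be shown to satisfy the required properties.''
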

\begin{proof}
	Assume we have $W \Dsez \fpshadj \flat(m), \fpshadj \flat(n) : \sharp \Size$, i.e. $\flat W \Dsez m, n : \Size$. Then we set $(\fpshadj \flat(m) \leq \fpshadj \flat(n))$ equal to $\accol \star$ if $() \Dsez m \psub \vfi \leq n \psub \vfi : \Size$ for every $\vfi : \PSub{()}{\flat W}$. Otherwise, we set it equal to $\eset$. This can be shown to satisfy the required properties.
\end{proof}
\begin{proposition}
	We have
	\begin{equation}
		\inference{
			(\mu, \beta : \mu \to \sharp) \in \accol{(\coshp, \iota \vartheta), (\Id, \iota), (\sharp, \id)} \\
			\Gamma, \ftrvar \mu n : \mu \Size \sez A \type \\
			\Gamma \sez f : \Pi(\ftrvar \mu n : \mu \Size).\paren{  \Pi(\ftrvar \mu m : \mu \Size).(\uparrow \beta(\ftrvar \mu m) \leq \beta(\ftrvar \mu n)) \to A[\ftrvar \mu m/\ftrvar \mu n]  } \to A
		}{\Gamma \sez \fix^\mu\,f : \Pi(\ftrvar \mu n : \mu\Size).A}{}
	\end{equation}
	(where we omit weakening and other uninteresting parts of substitutions).
\end{proposition}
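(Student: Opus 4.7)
The plan is to construct $\fix^\mu f$ semantically by strong induction on the natural-number magnitude of sizes, and then to verify that the construction assembles into a genuine term (i.e., commutes with restriction). We first handle the principal case $\mu = \Id$. By the proof of the size proposition earlier in this chapter, a defining term $W \Dsez n : \Size$ is determined by the family of vertex values $n\psub{\psi}$ indexed by $\psi : \PSub{()}{\shp W}$, each of which is a natural number. Define $|n| := \max\{n\psub\psi \mid \psi : \PSub{()}{\shp W}\} \in \IN$ (finite, since $\shp W$ is a finite cube). We then specify $(\fix^\Id f) \cdot n : A \dsub{\gamma, n}$ by strong induction on $|n|$, setting it equal to the result of applying $f \dsub \gamma \cdot n$ to the function $g$ described next.

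The function $g$ must, given $V \Dsez m : \Size$ (over some $\vfi : V \to W$, $\gamma\vfi : \DSub V \Gamma$) and a witness $p : (\ssuc \iota(m) \leq \iota(n\psub\vfi))$, return a term of $A[\var m/\var n] \dsub{\gamma\vfi, m}$. We set $g \cdot m \cdot p := (\fix^\Id f) \cdot m$. To see that this is covered by the induction hypothesis, unfold the definition of $\leq$ on $\sharp \Size$: the existence of $p$ forces $m\psub{\psi'} + 1 \leq (n\psub\vfi)\psub{\psi'}$ for every bridge vertex $\psi' : \PSub{()}{\flat V}$, and since sizes are constant along paths this propagates to all $\psi : \PSub{()}{\shp V}$, yielding $|m| < |n\psub\vfi| \leq |n|$. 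Since $p$ is an element of a proposition, the definition of $g$ does not depend on the choice of witness, and substitution-naturality of $\fix^\Id f$ will ensure naturality of $g$ in its arguments. For $\mu \in \{\sharp, \coshp\}$, one reduces to the $\mu = \Id$ case via the adjunctions $\flat \dashv \sharp$ and $\sharp \dashv \coshp$ together with the observations $\sharp \Size \cong \fpsh{\flat}(\Size)$ and $\coshp \Size \cong \fpsh{\sharp}(\Size)$: one applies the $\Id$-case construction inside $\flat \Gamma$ (resp.\ $\sharp \Gamma$) and transposes back.

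The main obstacle is not the well-foundedness of the recursion (which is immediate, since $|\cdot|$ valued in $\IN$ strictly decreases) but rather the verification that the strong-induction construction is natural in both $W$ and $\Gamma$, so that the pieces fit together into a single term of the indicated $\Pi$-type. Naturality in $W$ amounts to the equation $((\fix^\Id f) \cdot n) \psub \vfi = (\fix^\Id f) \cdot (n \psub \vfi)$; this follows by induction on $|n|$, using that $|n \psub \vfi| \leq |n|$ (so both sides are built at compatible stages of the recursion), the naturality of $f$ from its term status, and the fact that restricting $g$ along $\vfi$ yields exactly the $g$ one would build at $n\psub\vfi$ thanks to the vacuous case distinction on witnesses. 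Naturality in $\Gamma$ is then automatic, since the construction refers to $\gamma$ only through $f\dsub\gamma$ and through substitution into $A$, both of which are natural by hypothesis.
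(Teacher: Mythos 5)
Your treatment of the case $\mu = \Id$ matches the paper's proof in essence: the paper also constructs $\fix\,f = \lambda\,\var n.\,f\,\var n\,(\lambda\,\var m.\,\lambda\,\var p.\,\fix\,f\,\var m)$ and verifies well-foundedness by induction on the greatest vertex of $n$, using the observation that a witness $p : \ssuc\,\iota(m) \leq \iota(n)$ forces every vertex of $m$ to be strictly below the corresponding vertex of $n$ (via the factorization of arbitrary vertices $\psi : \PSub{()}{W}$ through $\kappa : \flat W \to W$). Your choice to bundle this into the quantity $|n| := \max_\psi n\psub\psi$ and to track cross-stage applications $V \Dsez m : \Size$ over $\vfi : V \to W$ is slightly more explicit than the paper, which implicitly works with terms rather than raw defining terms, but the underlying idea is identical. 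Your remark that naturality needs checking is legitimate (the paper glosses over it), though I would not call it the ``main obstacle'' --- the construction is uniform in $\gamma$ and $W$, so naturality is essentially bookkeeping once well-foundedness is settled.

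The place where your argument genuinely diverges, and where it has a gap, is the reduction of the $\sharp$ and $\coshp$ cases to the $\Id$ case ``via the adjunctions $\flat \dashv \sharp$ and $\sharp \dashv \coshp$ \ldots\ one applies the $\Id$-case construction inside $\flat\Gamma$ (resp.\ $\sharp\Gamma$) and transposes back.'' This is too quick. The adjunction $\alpha_{\flat\dashv\sharp}$ sends a term $\flat\Gamma \sez t : T$ to a term $\Gamma \sez \alpha(t) : (\sharp T)[\iota]$, so to land in the type $\Pi(\ftrvar\sharp n : \sharp\Size).A$ you would need $\sharp\bigl(\Pi(\var n : \Size).\hat A\bigr) \cong \Pi(\ftrvar\sharp n : \sharp\Size).\,\sharp\hat A$ for suitable $\hat A$; but the CwF morphism $\sharp = \fpsh\flat$ is only stated (\cref{thm:lifted-preserves-types}) to preserve $\Sigma$-, $\Weld$- and identity types on the nose --- it does \emph{not} commute with $\Pi$-types in this simple way, since $\fpsh\flat\yoneda W \cong \yoneda\sharp W$ rather than $\yoneda\flat W$, so the closure under variables differs on the two sides. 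You would also need to produce the transposed data $\flat\Gamma, \var n : \Size \sez \hat A \type$ and $\flat\Gamma \sez \hat f : \ldots$ from the given $A$ and $f$, and to reconcile the inequality proposition (which lives in $\sharp\Size$ on both sides, but under different variable modalities) across the transposition; none of this is spelled out. The paper simply re-runs the $\Id$-argument directly: for $\mu = \sharp$ one picks $\fpshadj\flat(n) : \DSub{W}{\sharp\Size}$, i.e.\ $n : \DSub{\flat W}{\Size}$, and takes the maximal vertex over $\flat W$; the witness $p : \fpshadj\flat(m+1) \leq \fpshadj\flat(n)$ then delivers the needed strict bound at every vertex of $\flat W$ without any need to reorganize contexts. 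That direct route is both shorter and safe, and I recommend you replace the transposition sketch with it.
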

\begin{proof}
	We define $\fix^\mu\,f = \lambda \ftrvar \mu n.f~\ftrvar \mu n~(\lambda \ftrvar \mu m.\lambda \var p.\fix^\mu~f~\ftrvar \mu m)$ (where we omit weakening substitutions), which we will prove to be a well-founded definition by induction essentially on the greatest vertex of $\ftrvar \mu n$.
	\begin{description}
		\item[$\mu = \Id$] Pick $\gamma : \DSub W \Gamma$ and $n : \DSub W \Size$. Let $\omega : \PSub{()}{W}$ attain a maximal vertex $n \psub \omega$ of $n$. We show that we can define $(\fix~f)\dsub \gamma \cdot n$ as $f \dsub \gamma \cdot n \cdot (\lambda \var m.\lambda \var p.\fix~f~\var m) \dsub \gamma$, assuming that $(\fix~f) \dsub \gamma \cdot m$ is already defined for all $m : \DSub W \Size$ such that all vertices of $m$ are less than $n \psub \omega$. In other words, we have to show that $(\lambda \var m.\lambda \var p.\fix~f~\var m) \dsub \gamma$ is already defined. But this function is determined completely by defining terms of the form $(\lambda \var m.\lambda \var p.\fix~f~\var m) \dsub \gamma \cdot m \cdot p$, where $W \Dsez m : \Size$ and $W \Dsez p : \ssuc \iota(m) \leq \iota(n)$. Note that $\iota(m) = \fpshadj \flat(m \psub{\kappa})$. The existence of $p$ then implies that $m \psub{\kappa \vfi} + 1 \leq n \psub{\kappa \vfi}$ for all $\vfi : \PSub{()}{\flat W}$. Since any $\psi : \PSub{()}{W}$ factors as $\psi = \psi \circ \kappa () = \kappa W \circ \flat \psi$, we can conlude that all vertices of $m$ are less than the corresponding ones of $n$, hence less than $n \psub \omega$. Then $(\fix~f) \dsub \gamma \cdot m$ is already defined, and one can show that
		\begin{equation}
			(\lambda \var m.\lambda \var p.\fix~f~\var m) \dsub \gamma \cdot m \cdot p = (\fix~f) \dsub \gamma \cdot m.
		\end{equation}
		Hence, the definition is well-founded.
		
		\item[$\mu = \sharp$] Pick $\gamma : \DSub W \Gamma$ and $\fpshadj \flat(n) : \DSub{W}{\sharp \Size}$, i.e. $n : \DSub{\flat W}{\Size}$. Let $\omega : \PSub{()}{\flat W}$ attain a maximal vertex $n \psub \omega$ of $n$. We show that we can define $(\fix^\sharp~f)\dsub \gamma \cdot \fpshadj \flat(n)$ as $f \dsub \gamma \cdot \fpshadj \flat(n) \cdot (\lambda \ftrvar \sharp m.\lambda \var p.\fix^\sharp~f~\ftrvar \sharp m) \dsub \gamma$. So all $(\lambda \ftrvar \sharp m.\lambda \var p.\fix^\sharp~f~\ftrvar \sharp m) \dsub \gamma \cdot m \cdot p = (\fix^\sharp~f) \dsub \gamma \cdot m$ have to be defined. But the existence of $W \Dsez p : \fpshadj \flat(m+1) \leq \fpshadj \flat(n)$ asserts that $(\fix^\sharp~f) \dsub \gamma \cdot m$ is already defined by the induction hypothesis. 
		
		\item[$\mu = \coshp$] Analogous. \qedhere
	\end{description}
\end{proof}

\subsubsection{The type $\Size$}
We can now proceed with the interpration of $\Size$ in ParamDTT. Just like with $\Nat$, the interpretation of $\szero$, $\ssuc$ and $\smax{}{}$ is entirely straightforward. For t-Size-fill, we have
\begin{align*}
	&\interp{\inference{
		\Gamma \judtm{P}{\IF} \qquad
		\Gamma, \ctxface{P} \judtm{n}{\Size}
	}{\Gamma \judtm{\sfillsys{\sfillsysclause P n}}{\Size}}{t-Size-fill}}
	= \\
	&\inference{
		\inference{
		\inference{
		\inference{
			\interp \Gamma \sez \interp P : \PropD
		}{\shp \interp \Gamma \sez \ElD{\interp P} \prop}{}
		}{\sharp \shp \interp \Gamma \sez \sharp \ElD{\interp P} \prop}{}
		}{\shp \interp \Gamma \sez (\sharp \ElD{\interp P})[\iota] \prop}{}
		\qquad
		\interp \Gamma, \var p : (\sharp \ElD{\interp P})[\sharp \varsigma][\iota] \sez \interp n : \Size
	}{\interp \Gamma \sez \sfillsys{\sfillsysclause{\var p : (\sharp \ElD{\interp P})[\iota]}{\interp n}} : \Size}{}
\end{align*}
Then we can also interpet t=-Size-codisc.

\subsubsection{The inequality type}
The inequality type is interpreted as
\begin{equation}
	\interp{
		\inference{
			\Gamma \judtm{m, n}{\El~\Size}
		}{\Gamma \judtm{m \leq n}{\El~\uni 0}}{t-$\leq$}
	} =
	\inference{
	\inference{
	\inference{
	\inference{
		\interp \Gamma \sez \interp m, \interp n : \Size
	}{\shp \interp \Gamma \sez \interp m[\varsigma]\inv, \interp n[\varsigma]\inv : \Size}{}
	}{\sharp \shp \interp \Gamma \sez \ftrtm{\sharp}{(\interp m[\varsigma]\inv)}, \ftrtm{\sharp}{(\interp n[\varsigma]\inv)} : \sharp \Size}{}
	}{\sharp \shp \interp \Gamma \sez \ftrtm{\sharp}{(\interp m[\varsigma]\inv)} \leq \ftrtm{\sharp}{(\interp n[\varsigma]\inv)} \prop}{}
	}{\interp \Gamma \sez \tycodeDD{\ftrtm{\sharp}{(\interp m[\varsigma]\inv)} \leq \ftrtm{\sharp}{(\interp n[\varsigma]\inv)}} : \uniDD_0}{}
\end{equation}
We have $\interp{\El~ m \leq n} = \ftrtm{\sharp}{\interp m} \leq \ftrtm{\sharp}{\interp n}$, and $\interp{\El~m \leq n}[\iota] = \forsub \sharp \interp m \leq \forsub \sharp \interp n$.

As an example of how we interpret simple inequality axioms, we take the following:
\begin{equation}
	\interp{
		\inference{
			\sharp \setminus \Gamma \judtm{n}{\El~\Size}
		}{\Gamma \judtm{\name{zero}_\leq~ n}{\El~0 \leq n}}{t-$\leq$-zero}
	} =
	\inference{
		\interp \Gamma \sez \forsub \sharp \interp n : \sharp \Size
	}{\interp \Gamma \sez \star : \szero \leq \forsub \sharp \interp n}{}.
\end{equation}

The filling rule is a bit more complicated. We need to prove
\begin{equation}
	\inference{
		\sharp \setminus \Gamma \judtm P \IF \qquad
		\sharp \setminus \Gamma, \ctxface P \judtm{m, n}{\Size} \qquad
		\Gamma, \ctxface P \judtm{e}{m \leq n}
	}{\Gamma \judtm{\leqfillsys{\leqfillsysclause P e}}{\sfillsys{\sfillsysclause P m} \leq \sfillsys{\sfillsysclause P n}}}{t-$\leq$-fill}.
\end{equation}
First, we unpack the proposition (see the section on face predicates):
\begin{equation}
	\inference{
		\interp{\sharp \setminus \Gamma \judtm P \IF} = \paren{\interp{\sharp \setminus \Gamma} \sez \interp P : \PropD}
	}{\interp{\Gamma} \sez (\sharp \ElD \interp P)[\sharp \varsigma][\iota] \prop}{}.
\end{equation}
We have $\interp \Gamma, \var p : (\sharp \ElD \interp P)[\sharp \varsigma][\iota] \sez \interp e : \forsub \sharp m \leq \forsub \sharp n$, and we need to prove
\begin{equation}
	\interp \Gamma \sez \ldots : \forsub \sharp \paren{ \sfillsys{\sfillsysclause{\var p : (\sharp \ElD{\interp P})[\iota]}{\interp m}} } \leq \forsub \sharp \paren{ \sfillsys{\sfillsysclause{\var p : (\sharp \ElD{\interp P})[\iota]}{\interp n}} }.
\end{equation}
Now, precisely in those cases where the $\sfill$s evaluate to $\interp m$ and $\interp n$ respectively, we have evidence that $\forsub \sharp \interp m \leq \forsub \sharp \interp n$. This allows us to construct the conclusion.

\subsubsection{The $\sfix$ rule}
The fix rule
\begin{equation*}
	\inference{
		\Gamma, \ctxvar \nu n {\El~\Size} \judty{\El~A} \\
		\Gamma \judtm{f}{\El~\prodvar \nu n \Size.(\prodvar \nu m \Size.(\ssuc\,m \leq n) \to A[\varclr m / \varclr n]) \to A}
	}{\Gamma \judtm{\sfix^\nu\,f}{\El~\prodvar \nu n \Size.A}}{t-fix}
\end{equation*}
is interpreted as
\begin{equation}
	\inference{
		(\nu, \beta : \nu \to \sharp) \in \accol{(\coshp, \iota \vartheta), (\Id, \iota), (\sharp, \id)} \\
		\interp \Gamma, \ftrvar \nu n : \nu \Size \sez \interp{\El~A}[\iota] \dtype \\
		\interp \Gamma \sez \interp f : \Pi(\ftrvar \nu n : \nu \Size).(\Pi(\ftrvar \nu m : \nu \Size).(\beta(\ftrvar \nu m) \leq \beta(\ftrvar \nu n)) \to \interp{\El~A}[\iota][\ftrvar \nu m/\ftrvar \nu n]) \to \interp{\El~A}[\iota]
	}{\interp \Gamma \sez \fix^\nu \interp f : \Pi(\ftrvar \nu n : \nu \Size).\interp{\El~A}[\iota]}{}.
\end{equation}
Since the model supports the definitional version of the equality axiom for $\fix$, the axiom itself can be interpreted as an instance of reflexivity.

\part{A Presheaf Model of Dependent Type Theory with Degrees of Relatedness}\label{part:reldtt}

\chapter{Finite-depth cubical sets} \label{ch:dcubecat}
In our paper \cite{reldtt}, we present a multi-mode type theory, where we assign every type and every judgement a depth $n$ where $n+1 \in \IN$ is the number of relations available in the type, and we assign every dependency a modality which suits the depth of its domain and codomain.

In this chapter, we build the presheaf category $\widehat{\dcubecat n}$ of depth $n$ cubical sets as a model for the depth $n$ fragment of our type theory, and for every modality $\mu : m \to n$ available in the type theory, we build a CwF morphism $F : \widehat{\dcubecat m} \to \widehat{\dcubecat n}$ between the corresponding presheaf categories.

The left division operation in the type system is modelled by another CwF morphism that is left adjoint to the modality by which we divide.

Because we use BCM-style operators in the paper \cite{moulin-param3,moulin}, we need the separated product to model context extension with an interval variable; a concept which only exists for cubical sets that have no operation for taking the diagonal. Because we want left division to preserve the structure of contexts, we need to show that the functors interpreting left divisions, preserve the separated product.

\section{The category of cubes of depth $n$}
In \cref{sec:cube}, we defined the category of cubes $\cubecat$, whose objects were cubes with just one flavour of dimensions. In \cref{sec:bpcube}, we defined bridge/path cubes, which had two flavours: bridge dimensions and path dimensions. Here we define the category $\dcubecat n$ of \textdef{cubes of depth $n$}, where $n+1 \in \IN$. A cube of depth $n$ has $n+1$ flavours of dimensions, called 0-bridges up to $n$-bridges, where 0-bridges will also be called paths.

The definition is self-evident: a cube of depth $n$ is a set of variables $W_0 \subseteq \aleph$ equipped with a function $W_0 \to \accol{0, \ldots, n}$ that assigns a flavour to every dimension. However, we denote it in a more type-theoretic style as
\begin{equation}
	W = (\var i_1 : \Idim{n_1}, \ldots, \var i_k : \Idim{n_k}).
\end{equation}
In particular, since there can only be a function $W \to \accol{}$ if $W = \eset$, the only cube of depth $-1$ is the 0-dimensional cube $()$.

A face map $\vfi : \PSub V W$ assigns to every variable $(\var i : \Idim n) \in W$ a variable $(\var i \psub \vfi : \Idim m) \in V$ such that $m \geq n$. Hence, when $m \geq n$, we can think of $\Idim m$ as a \emph{sub}type of $\Idim n$.

Note that $\pointcat \cong \dcubecat{-1}$, $\cubecat \cong \dcubecat 0$ and $\bpcubecat \cong \dcubecat 1$.

\section{Reshuffling functors}\label{sec:reshuffling-functors}
A \textdef{cubical set of depth $n$} is a presheaf $\Gamma$ over $\dcubecat n$. They are collected in the presheaf category $\widehat{\dcubecat n}$. We can think of $\Gamma$ as a set equipped with $n+1$ proof-relevant relations, as well as, of course, the equality relation. In intuitive discussions, we will write $\rlookup i \Gamma$ to refer to the $i$-bridge relation and $\rlookup \eqty \Gamma$ to refer to the equality relation on $\Gamma$. We will also write $\Gamma = \rellist{\rlookup \eqty \Gamma}{\rlookup 0 \Gamma, \ldots, \rlookup n \Gamma}$. Note that equality implies path-connectedness due to the existence of the face map $(\var i/\novar) : (\var i : \Idim 0) \to ()$, and that $m$-bridge-connectedness implies $n$-bridge-connectedness when $m \leq n$ due to the existence of the face map $(\var i^{\Idim n} / \var i^{\Idim m}) : (\var i : \Idim n) \to (\var i : \Idim m)$. We will denote this intuitively as $(\rlookup \eqty \Gamma) \subseteq \rlookup 0 \Gamma \subseteq \ldots \rlookup n \Gamma$.

In this section, we construct a family of functors $F : \widehat{\dcubecat m} \to \widehat{\dcubecat n}$ called \textdef{reshuffling functors} that essentially reshuffle the relations that a presheaf $\Gamma$ consists of. This family will contain, among others, the functors $\cohpi, \cohdisc, \cohfget, \cohcodisc, \cohpaths, \shp, \flat, \sharp$ and $\coshp$.

\subsection{Reshuffling functors on cubical sets, intuitively}\label{sec:reshuffling-functors-on-cubical-sets-intuitively}
A reshuffling functor $F : \widehat{\dcubecat m} \to \widehat{\dcubecat n}$ is specified up to isomorphism by a monotonically increasing function
\begin{equation}
	F : \accol{\eqty \leq 0 \leq \ldots \leq n} \to \accol{\eqty \leq 0 \leq \ldots \leq m \leq \top} : k \mapsto \rlookup k F.
\end{equation}
and will also be denoted as $F = \reshlist{\rlookup \eqty F}{\rlookup 0 F, \ldots, \rlookup n F}$.
Its action on a presheaf is defined intuitively by $\rlookup{i}{F \Gamma} = \rlookup{(\rlookup i F)}{\Gamma}$, where $\rlookup \top \Gamma$ is understood to be the total relation `true' on $\Gamma$. Composition is then given by $i \cdot (F \circ G) = (i \cdot F) \cdot G$, where $\top \cdot G$ is understood to mean $\top$. If $\rlookup \eqty F \neq (\eqty)$, this means that $F$ will identify all $(\rlookup \eqty F)$-bridge-connected points in $\Gamma$.

For example:
\begin{equation}
	\reshlist{0}{1, 3, \top}\Gamma = \rellist{ \rlookup 0 \Gamma }{ \rlookup 1 \Gamma, \rlookup 3 \Gamma, \rlookup \top \Gamma}.
\end{equation}

\begin{thesis}\label{thm:reshuffle-adjoint}
	A pair of reshuffling functors $L : \widehat{\dcubecat n} \to \widehat{\dcubecat m}$ and $R : \widehat{\dcubecat m} \to \widehat{\dcubecat n}$ is adjoint ($L \dashv R$) if and only if
	\begin{equation}
		\forall i \in \accol{\eqty, 0, \ldots, n, \top}, j \in \accol{\eqty, 0, \ldots, m, \top}. i \leq j \cdot L \Leftrightarrow i \cdot R \leq j.
	\end{equation}
	By consequence:
	\begin{itemize}
		\item A reshuffling functor $R$ has a left adjoint $L$ if and only if $(\eqty \cdot R) = (\eqty)$, where $j \cdot L$ is the greatest index $i \in \accol{\eqty, 0, \ldots, n, \top}$ such that $i \cdot R \leq j$ (even when $j = \top$).
		\item A reshuffling functor $L$ has a right adjoint $R$ if and only if $j \cdot L < \top$ for all $j < \top$, where $i \cdot R$ is the least index $j \in \accol{\eqty, 0, \ldots, m, \top}$ such that $i \leq j \cdot L$ (even when $i = \top$).
	\end{itemize}
\end{thesis}
\begin{example}
	For example, we have the following chain of adjunctions between $\widehat{\dcubecat 3}$ and $\widehat{\dcubecat 2}$:
	\begin{equation}
		\reshlist{0}{0, 2, 3} \dashv
		\reshlist{\eqty}{\eqty, 1, 1, 2} \dashv
		\reshlist{\eqty}{1, 1, 3} \dashv
		\reshlist{\eqty}{0, 0, 2, 2} \dashv
		\reshlist{\eqty}{0, 2, 2} \dashv
		\reshlist{\eqty}{0, 1, 1, \top}.
	\end{equation}
	To see this in action, consider the following table. In every row, we have a functor $L$, which we apply to a general object $S$ of the domain of $L$ to obtain an object $LS$ of its codomain. The $\mapsto$ arrow indicates which is which. In the row below, we apply the right adjoint $R$, which of course has domain and codomain swapped, to a general object $T$. One observes that the presheaf maps $LS \to T$ correspond to the presheaf maps $S \to RT$. Recall that the relations are always ordered from strict to liberal.
	\begin{equation}
		\begin{array}{c | c c c}
			\text{functor} & \in \widehat{\dcubecat 3} & & \in \widehat{\dcubecat 2} \\ \hline
			\reshlist{0}{0, 2, 3} & \rellist{A_\eqty}{A_0, A_1, A_2, A_3} & \mapsto & \rellist{A_0}{A_0, A_2, A_3} \\
			\bot & \downarrow & \cong & \downarrow \\
			\reshlist{\eqty}{\eqty, 1, 1, 2} & \rellist{B_\eqty}{B_\eqty, B_1, B_1, B_2} & \mapsfrom & \rellist{B_\eqty}{B_0, B_1, B_2} \\
			\bot & \downarrow & \cong & \downarrow \\
			\reshlist{\eqty}{1, 1, 3} & \rellist{C_\eqty}{C_0, C_1, C_2, C_3} & \mapsto & \rellist{C_\eqty}{C_1, C_1, C_3} \\
			\bot & \downarrow & \cong & \downarrow \\
			\reshlist{\eqty}{0, 0, 2, 2} & \rellist{D_\eqty}{D_0, D_0, D_2, D_2} & \mapsfrom & \rellist{D_\eqty}{D_0, D_1, D_2} \\
			\bot & \downarrow & \cong & \downarrow \\
			\reshlist{\eqty}{0, 2, 2} & \rellist{E_\eqty}{E_0, E_1, E_2, E_3} & \mapsto & \rellist{E_\eqty}{E_0, E_2, E_2} \\
			\bot & \downarrow & \cong & \downarrow \\
			\reshlist{\eqty}{0, 1, 1, \top} & \rellist{F_\eqty}{F_0, F_1, F_1, \top} & \mapsfrom & \rellist{F_\eqty}{F_0, F_1, F_2}
		\end{array}
	\end{equation}
	For example, in the first adjunction, on the left we see that:
	\begin{itemize}
		\item[$=$] Equality in $A$ implies equality in $B$ (which follows from the condition for 0 since $A_= \subseteq A_0$).
		\item[0] A path in $A$ implies equality in $B$.
		\item[1] A 1-bridge in $A$ implies a 1-bridge in $B$ (which follows from the condition for 2 since $A_1 \subseteq A_2$).
		\item[2] A 2-bridge in $A$ implies a 1-bridge in $B$.
		\item[3] A 3-bridge in $A$ implies a 2-bridge in $B$.
	\end{itemize}
	On the right, we get:
	\begin{itemize}
		\item[$=$] A path in $A$ implies equality in $B$.
		\item[0] A path in $A$ implies a path in $B$ (which follows from the condition for $=$ since $B_= \subseteq B_0$).
		\item[1] A 2-bridge in $A$ implies a 1-bridge in $B$.
		\item[2] A 3-bridge in $A$ implies a 2-bridge in $B$.
	\end{itemize}
	So on both sides, we have the same non-redundant properties.
\end{example}
\begin{example}[$\widehat{\cubecat}$ and $\widehat{\bpcubecat}$] \label{eg:bpcube}
	The category of cubical sets $\widehat \cubecat$ is isomorphic to $\widehat{\dcubecat 0}$, and the category of bridge/path cubical sets $\widehat \bpcubecat$ is isomorphic to $\widehat{\dcubecat 1}$, where the paths are the 0-bridges and the bridges are the 1-bridges. The functors from \cref{part:paramdtt} are all reshuffling functors:
	\begin{equation}
		\begin{array}{l | c c c | l}
			\text{functor} & \in \widehat{\dcubecat 1} & & \in \widehat{\dcubecat 0} \\ \hline
			\cohpi = \reshlist{0}{1} & \rellist{A_\eqty}{A_0, A_1} & \mapsto & \rellist{A_0}{A_1} & \text{quotient by path relation} \\
			\bot & \downarrow & \cong & \downarrow \\
			\cohdisc = \reshlist{\eqty}{\eqty, 0} & \rellist{B_\eqty}{B_\eqty, B_0} & \mapsfrom & \rellist{B_\eqty}{B_0} & \text{add discrete paths} \\
			\bot & \downarrow & \cong & \downarrow \\
			\cohfget = \reshlist{\eqty}{1} & \rellist{C_\eqty}{C_0, C_1} & \mapsto & \rellist{C_\eqty}{C_1} & \text{forget paths} \\
			\bot & \downarrow & \cong & \downarrow \\
			\cohcodisc = \reshlist{\eqty}{0, 0} & \rellist{D_\eqty}{D_0, D_0} & \mapsfrom & \rellist{D_\eqty}{D_0} & \text{add codisc. paths / disc. bridges} \\
			\bot & \downarrow & \cong & \downarrow \\
			\cohpaths = \reshlist{\eqty}{0} & \rellist{E_\eqty}{E_0, E_1} & \mapsto & \rellist{E_\eqty}{E_0} & \text{forget bridges} \\
			\bot & \downarrow & \cong & \downarrow \\
			\cohshirr = \reshlist{\eqty}{0, \top} & \rellist{F_\eqty}{F_0, \top} & \mapsfrom & \rellist{F_\eqty}{F_0} & \text{add codiscrete bridges}
		\end{array}
	\end{equation}
	This yields 5 adjoint endofunctors on $\widehat{\dcubecat 1}$:
	\begin{equation}
		\begin{array}{l | c c c | l}
			\text{functor} & \in \widehat{\dcubecat 1} & & \in \widehat{\dcubecat 1} \\ \hline
			\shp = \cohdisc \cohpi = \reshlist{0}{0, 1} & \rellist{A_\eqty}{A_0, A_1} & \mapsto & \rellist{A_0}{A_0, A_1} & \text{quotient by path relation} \\
			\qquad \bot & \downarrow & \cong & \downarrow \\
			\flat = \cohdisc \cohfget = \reshlist{\eqty}{\eqty, 1} & \rellist{B_\eqty}{B_\eqty, B_1} & \mapsfrom & \rellist{B_\eqty}{B_0, B_1} & \text{discrete paths} \\
			\qquad \bot & \downarrow & \cong & \downarrow \\
			\sharp = \cohcodisc \cohfget = \reshlist{\eqty}{1, 1} & \rellist{C_\eqty}{C_0, C_1} & \mapsto & \rellist{C_\eqty}{C_1, C_1} & \text{codiscrete paths} \\
			\qquad \bot & \downarrow & \cong & \downarrow \\
			\coshp = \cohcodisc \cohpaths = \reshlist{\eqty}{0, 0} & \rellist{D_\eqty}{D_0, D_0} & \mapsfrom & \rellist{D_\eqty}{D_0, D_1} & \text{discrete bridges} \\
			\qquad \bot & \downarrow & \cong & \downarrow \\
			\shirr = \cohshirr \cohpaths = \reshlist{\eqty}{0, \top} & \rellist{E_\eqty}{E_0, E_1} & \mapsto & \rellist{E_\eqty}{E_0, \top} & \text{codiscrete bridges}
		\end{array}
	\end{equation}
\end{example}

\subsection{The 2-poset of reshuffles}
In this section, we construct the simplest setting in which we can formally phrase \cref{thm:reshuffle-adjoint}: a category whose objects are just cube depths and whose morphisms are formal reshuffling functor specifications. Then, we phrase the lemma and prove it.

\begin{definition}
	A \textdef{2-poset}, \textdef{(1, 2)-category} or \textdef{poset-enriched category} $\cat C$ consists of:
	\begin{itemize}
		\item A set of objects $\Obj(\cat C)$,
		\item For each $x, y \in \Obj(\cat C)$, a \emph{poset} of morphisms $\Hom(x, y)$,
		\item For each $x \in \Obj(\cat C)$, an identity morphism $\id_x \in \Hom(x, x)$,
		\item For each $x, y, z \in \Obj(\cat C)$, an \emph{increasing} map $\circ : \Hom(y, z) \times \Hom(x, y) \to \Hom(x, z)$,
	\end{itemize}
	such that $\circ$ is associative and unital with respect to $\id$.
\end{definition}

\begin{definition}
	In a 2-poset, we say that a morphism $L \in \Hom(x, y)$ is \textdef{left adjoint} to $R \in \Hom(y, x)$ ($L \dashv R$) if $\id_x \leq RL$ and $LR \leq \id_y$. In that case, we have $L = LRL$ and $R = RLR$.
\end{definition}

\begin{lemma}
	In a 2-poset, if $L \dashv R$ and $L' \dashv R$, then $L = L'$. The dual result also holds.
\end{lemma}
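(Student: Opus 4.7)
The plan is to use the basic inequalities characterising the adjunctions together with the fact that composition in a 2-poset is monotone in both arguments. Concretely, I will exhibit a chain of inequalities $L \leq L'$ by inserting $RL'$ via the unit of $L' \dashv R$ and then cancelling $LR$ via the co-unit of $L \dashv R$; swapping the roles of $L$ and $L'$ yields $L' \leq L$, hence $L = L'$ by antisymmetry.

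More precisely, first I would write
\begin{equation*}
	L = L \circ \id_x \leq L \circ (R L') = (L R) \circ L' \leq \id_y \circ L' = L',
\end{equation*}
where the first inequality uses $\id_x \leq R L'$ (unit of $L' \dashv R$) together with monotonicity of $L \circ (-)$, the middle equality is associativity, and the second inequality uses $L R \leq \id_y$ (co-unit of $L \dashv R$) together with monotonicity of $(-) \circ L'$. Next, swapping $L$ and $L'$ in the same chain (using the unit $\id_x \leq R L$ of $L \dashv R$ and the co-unit $L' R \leq \id_y$ of $L' \dashv R$) gives $L' \leq L$. Antisymmetry of the order on $\Hom(x, y)$ then yields $L = L'$.

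The dual statement (if $L \dashv R$ and $L \dashv R'$, then $R = R'$) is proved by the analogous chain
\begin{equation*}
	R = \id_x \circ R \leq (R' L) \circ R = R' \circ (L R) \leq R' \circ \id_y = R',
\end{equation*}
and symmetrically $R' \leq R$, so $R = R'$.

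There is no real obstacle here; the only thing to keep track of is that both arguments of composition are required to be monotone (which is part of the definition of a 2-poset), so that we may substitute inequalities on either side of a composite. No associativity or unit subtleties arise beyond the equalities already stated in the definition.
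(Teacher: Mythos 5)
Your proposal is correct and takes essentially the same approach as the paper: you insert $RL'$ via the unit of $L' \dashv R$ and cancel $LR$ via the co-unit of $L \dashv R$ to get $L \leq LRL' \leq L'$, then swap. The paper writes this as the single circular chain $L \leq LRL' \leq L' \leq L'RL \leq L$, but the content is identical.
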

\begin{proof}
	We have
	\begin{equation}
		L \leq LRL' \leq L' \leq L'RL \leq L. \qedhere
	\end{equation}
\end{proof}

\begin{definition}\label{def:reshufflecat}
	We define the \textdef{2-poset of reshuffles} $\reshufflecat$ as follows:
	\begin{itemize}
		\item $\Obj(\reshufflecat) = \set{n}{n+1 \in \IN}$,
		\item $\Hom(m, n)$ is the set of \textdef{reshuffles} from $m$ to $n$, i.e. increasing functions
		\begin{equation}
			F : \accol{\eqty \leq 0 \leq \ldots \leq n} \to \accol{\eqty \leq 0 \leq \ldots \leq m \leq \top} : k \mapsto \rlookup k F.
		\end{equation}
		denoted as $F = \reshlist{\rlookup \eqty F}{\rlookup 0 F, \ldots, \rlookup n F}$.
		We say that $F \leq G$ whenever $\rlookup i F \leq \rlookup i G$ for all $i \in \accol{=, 0, \ldots, n}$.
		\item $\id_n = \reshlist{=}{0, \ldots, n}$.
		\item $G \circ F$ is defined by $\rlookup i {(G \circ F)} = \rlookup{(\rlookup i G)}{F}$ where we define $\top \cdot F = \top$.
	\end{itemize}
	
	Sometimes, we need to consider only reshuffles that have at least a certain number of left/right adjoints. The 2-posets that contain only those reshuffles (with the same order relation) will be denoted as $\reshufflecatllr$, where the number of white bullets on the left/right is the minimal number of left/right adjoints.
\end{definition}

\begin{lemma}\label{thm:reshuffle-adjoint-reshufflecat}
	A pair of reshuffles $L \in \Hom(n, m)$ and $R \in \Hom(m, n)$ is adjoint ($L \dashv R$) if and only if
	\begin{equation}
		\forall i \in \accol{\eqty, 0, \ldots, n, \top}, j \in \accol{\eqty, 0, \ldots, m, \top}. i \leq j \cdot L \Leftrightarrow i \cdot R \leq j.
	\end{equation}
	By consequence:
	\begin{itemize}
		\item A reshuffle $R$ has a left adjoint $L$ if and only if $(\eqty \cdot R) = (\eqty)$, where $j \cdot L$ is the greatest index $i \in \accol{\eqty, 0, \ldots, n, \top}$ such that $i \cdot R \leq j$ (even when $j = \top$).
		\item A reshuffle $L$ has a right adjoint $R$ if and only if $j \cdot L < \top$ for all $j < \top$, where $i \cdot R$ is the least index $j \in \accol{\eqty, 0, \ldots, m, \top}$ such that $i \leq j \cdot L$ (even when $i = \top$).
	\end{itemize}
\end{lemma}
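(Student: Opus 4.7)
The plan is to reduce the statement to the classical fact that adjoint pairs between (extended) total orders are exactly Galois connections, and then to read off the two existence conditions from the boundary case where one of $i,j$ equals $\top$.

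First I would unfold the 2-poset adjunction. By definition, $L\dashv R$ means $\id_n\leq R\circ L$ and $L\circ R\leq\id_m$, which by the componentwise order on reshuffles and the composition formula $k\cdot(G\circ F)=(k\cdot G)\cdot F$ is exactly: $k\leq (k\cdot R)\cdot L$ for all $k\in\{\eqty,0,\ldots,n\}$, and $(j\cdot L)\cdot R\leq j$ for all $j\in\{\eqty,0,\ldots,m\}$. Writing $L,R$ as monotone functions on the extended totally ordered sets (with the convention $\top\cdot L=\top\cdot R=\top$ built into the composition), these are precisely the unit and counit of an order-theoretic adjunction. I then show the equivalence with the Galois condition by the one-line classical arguments: given $i\leq j\cdot L$, monotonicity of $R$ and the counit yield $i\cdot R\leq (j\cdot L)\cdot R\leq j$, and dually using the unit; conversely, instantiating the biconditional at $(i,j)=(k,\,k\cdot R)$ recovers the unit from $k\cdot R\leq k\cdot R$, and at $(i,j)=(j\cdot L,\,j)$ recovers the counit from $j\cdot L\leq j\cdot L$.

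For the two consequences I would specialise the Galois condition to the endpoints. Taking $i=\eqty$, the left-hand side $\eqty\leq j\cdot L$ is always true, so the Galois equivalence forces $\eqty\cdot R\leq j$ for every $j$, in particular $\eqty\cdot R=\eqty$. Conversely, if $\eqty\cdot R=\eqty$, then for each $j$ the set $\{i\mid i\cdot R\leq j\}\subseteq\{\eqty,0,\ldots,n,\top\}$ is nonempty (it contains $\eqty$) and finite, so its maximum defines $j\cdot L$; monotonicity of $L$ and the Galois biconditional are then immediate from the definition. Symmetrically, taking $j<\top$ and $i=\top$, the right-hand side $\top\cdot R\leq j$ is false (since $\top\cdot R=\top$ by convention), so the Galois equivalence forces $\top\not\leq j\cdot L$, i.e.\ $j\cdot L<\top$. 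Conversely, if $j\cdot L<\top$ for all $j<\top$, then for each $i$ the set $\{j\mid i\leq j\cdot L\}\subseteq\{\eqty,0,\ldots,m,\top\}$ is nonempty (it contains $\top$) and finite, so its minimum defines $i\cdot R$; the assumption on $L$ is exactly what is needed so that $\top\cdot R$ defined by this recipe equals $\top$, compatibly with the composition convention.

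The only delicate point, and the main thing to get right, is the bookkeeping at $\top$: the extension of reshuffles by $\top\mapsto\top$ is forced by the composition rule, and the asymmetry between the two existence criteria (``$R$ preserves $\eqty$'' versus ``$L$ does not hit $\top$ except at $\top$'') is exactly the asymmetry between the minimum element $\eqty$ of the domain of every reshuffle and the maximum $\top$ adjoined only on the codomain side. Once this is pinned down, the constructive definitions of $L$ and $R$ above automatically land in the right ambient set, and one verifies the Galois equivalence on both sides with no further case analysis.
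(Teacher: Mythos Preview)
Your proof is correct and follows essentially the same approach as the paper: both directions of the equivalence between the 2-poset adjunction and the Galois condition are argued via the unit/counit inequalities and monotonicity, exactly as you do. Your treatment of the two consequences is in fact more complete than the paper's, which only argues the ``only if'' directions (showing the Galois condition fails when $(\eqty\cdot R)\neq(\eqty)$ or when some $j\cdot L=\top$) and leaves the explicit construction and verification of the adjoint, which you carry out, to the reader.
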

\begin{proof}
	First, assume that we have reshuffles $L \in \Hom(m, n)$ and $R \in \Hom(n, m)$ satisfying
	\begin{equation}
		\forall i \in \accol{\eqty, 0, \ldots, m, \top}, j \in \accol{\eqty, 0, \ldots, n, \top}. i \leq j \cdot L \Leftrightarrow i \cdot R \leq j.
	\end{equation}
	We show that $L \dashv R$, i.e. that $\id_m \leq RL$ and $LR \leq \id_n$.
	
	To see the former, we need to show that for all $i$, we have $i \leq (i \cdot R) \cdot L$ which is equivalent to $i \cdot R \leq i \cdot R$ and hence true.
	
	To see the latter, we need to show that for all $j$, we have $(j \cdot L) \cdot R \leq j$ which is equivalent to $j \cdot L \leq j \cdot L$ and hence true.
	
	Conversely, assume that $L \dashv R$. Then we have
	\begin{align}
		i \leq j \cdot L &\Rightarrow i \cdot R \leq j \cdot LR \leq j, \\
		i \cdot R \leq j &\Rightarrow i \leq i \cdot RL \leq j \cdot L,
	\end{align}
	so all adjoint pairs satisfy the given equivalence.
	
	The equivalence is unsatisfiable if $(\eqty \cdot R) \neq (\eqty)$, because then $(\eqty \cdot L)$ is undefined.
	
	It is also unsatisfiable if $m \cdot L = \top$ or $(\eqty \cdot L) = \top$ because then we get $\top \cdot R < \top$ which contradicts the definition $\top \cdot R := \top$.
\end{proof}
\begin{corollary}\label{thm:count-adjoints}
	A reshuffle $F \in \Hom(m, n)$ with $m, n \geq 0$ has
	\begin{equation*}
		\begin{array}{c c c | l}
			\text{1 left adjoint} & \Leftrightarrow & (\eqty \cdot F) = (\eqty) & F = \reshlist{\eqty}{\ldots} \\
			\text{2 left adjoints} & \Leftrightarrow & (\eqty \cdot F) = (\eqty) \wedge 0 \cdot F \geq 0 & F = \reshlist{\eqty}{\geq 0, \ldots} \\
			\text{3 left adjoints} & \Leftrightarrow & (\eqty \cdot F) = (\eqty) \wedge 0 \cdot F = 0 & F = \reshlist{\eqty}{0, \ldots} \\
			\text{1 right adjoint} & \Leftrightarrow & n \cdot F < \top & F = \reshlist{\ldots}{\ldots, < \top} \\
			\text{2 right adjoints} & \Leftrightarrow & n \cdot F = m & F = \reshlist{\ldots}{\ldots, m} \\
		\end{array}
	\end{equation*}
	When either domain or codomain is $-1$ \emph{and the other is not}, we give exhaustive lists:
	\begin{equation*}
		\begin{array}{c | c | c}
			& \Hom(-1, n) & \Hom(m, -1) \\
			\hline
			\text{1 left adjoint} & \reshlist{\eqty}{\eqty, \ldots, \eqty, \top, \ldots, \top} & \reshlist{\eqty}{} \\
			\text{2 left adjoints} & \reshlist{\eqty}{\top, \ldots, \top} & \reshlist{\eqty}{} \\
			\text{3 left adjoints} & \reshlist{\eqty}{\top, \ldots, \top} & \text{none} \\
			\text{1 right adjoint} & \reshlist{\eqty}{\eqty, \ldots, \eqty} & \reshlist{< \top}{} \\
			\text{2 right adjoints} & \reshlist{\eqty}{\eqty, \ldots, \eqty} & \reshlist{m}{}
		\end{array}
	\end{equation*}
	The only reshuffles of type $\Hom(-1, -1)$ are $\reshlist{\eqty}{}$ and $\reshlist{\top}{}$; the former is the identity and hence self-adjoint, the latter has neither a left nor a right adjoint.
\end{corollary}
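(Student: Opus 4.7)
The proof rests on iterating the characterization in Lemma \ref{thm:reshuffle-adjoint-reshufflecat}. In the case $m, n \geq 0$, the condition for \textbf{1 left adjoint}, namely $(\eqty \cdot F) = \eqty$, is precisely the criterion given there. When this holds, the first left adjoint $L_1$ is recovered from the formula $j \cdot L_1 = \max\{i \in \{\eqty, 0, \ldots, n, \top\} : i \cdot F \leq j\}$, which is a well-defined monotone reshuffle sending $\top$ to $\top$; one checks directly from the formula that $i \leq j \cdot L_1 \Leftrightarrow i \cdot F \leq j$, confirming $L_1 \dashv F$. The condition for \textbf{2 left adjoints} then amounts to applying the $1$-adjoint criterion to $L_1$: we need $(\eqty \cdot L_1) = \eqty$, meaning no $i \geq 0$ satisfies $i \cdot F \leq \eqty$. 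By monotonicity of $F$, this collapses to $0 \cdot F > \eqty$, i.e.\ $0 \cdot F \geq 0$.

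Iterating once more, \textbf{3 left adjoints} demands that $L_1$ itself have two left adjoints. By the previous case this additionally requires $0 \cdot L_1 \geq 0$; unfolding, $0 \cdot L_1 = \max\{i : i \cdot F \leq 0\}$, so the condition amounts to $0 \cdot F \leq 0$. Together with $0 \cdot F \geq 0$ we conclude $0 \cdot F = 0$.

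The right-adjoint cases are dual. For \textbf{1 right adjoint}, the lemma demands $j \cdot F < \top$ for all $j < \top$, which by monotonicity reduces to $n \cdot F < \top$. When this holds, the first right adjoint $R_1$ is $i \cdot R_1 = \min\{j \in \{\eqty, 0, \ldots, m, \top\} : i \leq j \cdot F\}$; applying the same criterion to $R_1$ gives the condition for \textbf{2 right adjoints} and amounts to $m \cdot R_1 < \top$, i.e.\ some $j < \top$ satisfies $m \leq j \cdot F$. By monotonicity this reduces to $n \cdot F \geq m$, and combined with $n \cdot F \leq m$ (which follows from $n \cdot F < \top$) yields $n \cdot F = m$.

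For the boundary cases, reshuffles in $\Hom(-1, n)$ are step functions $\{\eqty, 0, \ldots, n\} \to \{\eqty, \top\}$ determined by the cutoff above which they jump to $\top$, and reshuffles in $\Hom(m, -1)$ are determined by their single value $\eqty \cdot F \in \{\eqty, 0, \ldots, m, \top\}$. Substituting these shapes into the criteria of Lemma \ref{thm:reshuffle-adjoint-reshufflecat} (iterated for higher adjoint counts) yields the exhaustive tables. The case $\Hom(-1, -1)$ contains only $\reshlist{\eqty}{} = \id$, which is self-adjoint, and $\reshlist{\top}{}$, whose single value $\top$ violates both criteria. The main delicacy throughout is tracking the boundary behaviour at $\top$: the maxima and minima defining the adjoints are computed over index sets that implicitly include $\top$ (with $\top \cdot F = \top$), and in the $-1$ cases the unusually small index or value sets mean that the general formulas must be re-verified one reshuffle at a time rather than being read off from the main table.
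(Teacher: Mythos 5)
Your proof is correct and follows exactly the same strategy as the paper's (very terse) proof: apply \cref{thm:reshuffle-adjoint-reshufflecat} repeatedly, using the explicit formula for the next adjoint and unfolding the resulting conditions. You simply carry out the calculation that the paper waves at in one sentence. One minor slip: in the formula for the first right adjoint you wrote $i \cdot R_1 = \min\{j \in \{\eqty, 0, \ldots, m, \top\} : i \leq j \cdot F\}$; since $F \in \Hom(m, n)$ its right adjoint $R_1$ lies in $\Hom(n, m)$ and takes values in $\{\eqty, 0, \ldots, n, \top\}$, so the index set for $j$ should be $\{\eqty, 0, \ldots, n, \top\}$. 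This looks like a transcription of the lemma's notation (which uses $L \in \Hom(n, m)$, the opposite typing) without swapping $m$ and $n$. The slip is harmless because your downstream reasoning, in particular the appeal to monotonicity at $j = n$ to conclude $n \cdot F \geq m$, implicitly uses the correct index set.
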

\begin{proof}
	The characterization of having a single adjoint comes immediately from \ref{thm:reshuffle-adjoint-reshufflecat}. In order to have $n+1$ adjoints, a reshuffle needs to have $n$ adjoints and the same needs to hold for its adjoint. Going through the characterization of adjoint reshuffles then inevitably leads to the data above.
\end{proof}

\subsection{Reshuffling functors on cubes}\label{sec:reshuffle-on-cubes}
In this section, we want to define reshuffling functors between categories of cubes. Let $F$ be a reshuffle with right adjoint $G$. We know that a defining substitution $\DSub{(\var i : \Idim n)}{\Gamma}$ expresses an $n$-bridge in $\Gamma$. However, what does a defining substitution $\DSub{F(\var i : \Idim n)}{\Delta}$ express? By using the `right adjoint' $G = \fpsh F$ (see \cref{thm:lifting-adjunctions} and \cref{thm:yoneda-and-left-adjoint}), we know that this corresponds to a defining substitution $\DSub{(\var i : \Idim n)}{G \Delta}$, which we know is an $(\rlookup n G)$-bridge in $\Delta$ (or, if $\rlookup n G = (\eqty)$, equality in $\Delta$). Hence, we should set $F(\var i : \Idim n) := (\var i : \Idim{\rlookup n G})$ (or, if $\rlookup n G = (\eqty)$, we should set $F(\var i : \Idim n) := ()$ so that $F$ identifies source and target of every $n$-bridge). This requires that $F$ has a right adjoint $G$, and that $G$ does not contain $\top$, because $\Idim{\top}$ is not an available dimension flavour.\footnote{Suppose we had cubes such as $(\var i : \Idim \top)$. Then a defining substitution $\DSub{(\var i : \Idim \top)}{\Gamma}$ would consist of a source and a target that satisfy `true'. This means that a defining substitution with domain $(\var i : \Idim \top)$ is completely determined by its source and target. If we want to enforce that condition for all presheaves, we have to move to a \emph{sheaf} model of type theory. Categories of sheaves can be given the structure of a CwF, but the naive implementation of the universe satisfies the sheaf condition in general only up to isomorphism \cite{sheaf-universes}, causing further complications. For this reason, we want to avoid having to use a sheaf model.} In other words, $F$ must have two further right adjoints, i.e. $n \cdot F = m$ by \cref{thm:count-adjoints}.
Since a 2-poset is a special case of a 2-category, with a unique morphism $F \to G$ whenever $F \leq G$, we can consider 2-functors from a 2-poset to a 2-category.
\begin{definition}\label{def:reshuffle-on-cubes}
	Let $\reshufflecatrr$ be the sub-2-poset of $\reshufflecat$ that only contains reshuffles that have 2 right adjoints in $\reshufflecat$ (a property which is closed under identity and substitution). We have a 2-functor $\cubecat : \reshufflecatrr \to \Cat$ defined as follows:
	\begin{itemize}
		\item The object $n$ is mapped to the category $\dcubecat n$.
		\item The reshuffle $F$ with right adjoint $G$ is mapped to a functor, also denoted $F$, which maps a cube $(W, \var i : \Idim{n})$ to
		\begin{itemize}
			\item $(FW, \var i : \Idim{\rlookup n G})$ if $\rlookup n G \neq (\eqty)$,
			\item $FW$ if $\rlookup n G = (\eqty)$.
		\end{itemize}
		and a face map $(\vfi, \var j^{\Idim m}/\var i^{\Idim n})$, where necessarily we have $m \geq n$, to
		\begin{itemize}
			\item $(F\vfi, \var j^{\Idim{\rlookup m G}}/\var i^{\Idim{\rlookup n G}})$ if $\rlookup m G \geq \rlookup n G > (\eqty)$,
			\item $(F\vfi, \var j^{\Idim{\rlookup m G}}/\novar)$ (or more accurately $F \vfi$ in case $\var j$ occurs again in $\vfi$) if $\rlookup m G > \rlookup n G = (\eqty)$,
			\item $F\vfi$ if $\rlookup m G = \rlookup n G = (\eqty)$.
		\end{itemize}
		Functors arising this way are called \textdef{reshuffling functors on cubes}.
		\item Whenever $F \leq F'$ (with $F \dashv G$ and $F' \dashv G'$, implying that $G' \leq G$), we associate to this a natural transformation $\iota : F \to F'$ which sends the cube $(W, \var i : \Idim{n})$ to the face map
		\begin{itemize}
			\item $(\iota W, \var i^{\Idim{\rlookup n G}}/\var i^{\Idim{\rlookup n {G'}}}) : (FW, \var i : \Idim{\rlookup n G}) \to (F'W, \var i : \Idim{\rlookup n {G'}})$ if $\rlookup n G \geq \rlookup n {G'} > (\eqty)$,
			\item $(\iota W, \var i^{\Idim{\rlookup n G}}/\novar) : (FW, \var i : \Idim{\rlookup n G}) \to F'W$ if $\rlookup n G > \rlookup n {G'} = (\eqty)$,
			\item $\iota W : FW \to F'W$ if $\rlookup n G = \rlookup n {G'} = (\eqty)$.
		\end{itemize}
		Natural transformations arising in this way are called \textdef{reshuffle casts of cubes}.
	\end{itemize}
	This is straightforwardly verified to be a well-defined 2-functor.
\end{definition}
Note that 2-functors in general preserve adjointness of morphisms; this is trivial from the characterization of adjointness using unit and co-unit. By consequence, if $F \dashv G$ in $\reshufflecat$, then we automatically know that $F \dashv G$ as reshuffling functors on cubes, with the reshuffle casts as their unit and co-unit.

The following lemma generalizes \cref{thm:uniqueness-of-nattrans-base}:
\begin{lemma}\label{thm:uniqueness-of-nattrans-base-reshuffle}
	Let $F, F' : \dcubecat m \to \dcubecat n$ be two reshuffling functors. Then all natural transformations $\nu : F \to F'$ are equal to a (the) reshuffle cast.
\end{lemma}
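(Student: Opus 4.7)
The plan is to follow the strategy of \cref{thm:uniqueness-of-nattrans-base}, to which the current lemma is a direct generalization: I reduce first to components on single-variable cubes, then pin down $\nu$ using naturality with respect to the two endpoint face maps. Observe that the cube $(\var i_1 : \Idim{n_1}, \ldots, \var i_k : \Idim{n_k})$ is the $k$-fold product of its single-variable constituents $(\var i_j : \Idim{n_j})$ in $\dcubecat m$, and every reshuffling functor preserves finite products because by \cref{def:reshuffle-on-cubes} it acts pointwise on variables. Hence any natural transformation $\nu : F \to F'$ is determined by its components on the terminal cube (where necessarily $\nu_{()} = \id$) and on single-variable cubes $(\var i : \Idim n)$.

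Next, I would fix a single-variable cube $(\var i : \Idim n)$ and analyse $\nu_{(\var i : \Idim n)} \in \PSub{F(\var i : \Idim n)}{F'(\var i : \Idim n)}$ by case distinction on the right adjoints $G, G'$ of $F, F'$. If $F'(\var i : \Idim n) = ()$, which happens exactly when $\rlookup n {G'} = (\eqty)$, then $\nu_{(\var i : \Idim n)}$ is the unique face map into $()$ and trivially agrees with the reshuffle cast. Otherwise $F'(\var i : \Idim n) = (\var i : \Idim{\rlookup n {G'}})$, and $\nu_{(\var i : \Idim n)}$ is specified by the value $v := \var i \psub{\nu_{(\var i : \Idim n)}}$; the commutative square argument from \cref{thm:uniqueness-of-nattrans-base}, instantiated on each of the face maps $(0/\var i)$ and $(1/\var i) : () \to (\var i : \Idim n)$, rules out $v = 0$ and $v = 1$ respectively. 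Therefore $v$ must be a variable of $F(\var i : \Idim n)$, which in turn forces $F(\var i : \Idim n) = (\var i : \Idim{\rlookup n G})$ with $\rlookup n G \geq \rlookup n {G'}$, and $v = \var i$ --- precisely the prescription of the reshuffle cast in \cref{def:reshuffle-on-cubes}.

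The main subtlety, which is the principal obstacle, lies in handling the situations where $F$ and $F'$ are not comparable in the expected way at index $n$, i.e.\ where $\rlookup n G < \rlookup n {G'}$ (in particular the degenerate case $\rlookup n G = (\eqty) < \rlookup n {G'}$). The case analysis above shows that in such a situation no admissible value of $v$ remains, so no natural transformation $F \to F'$ exists at all and the claim holds vacuously. Finally, naturality of the established candidate on arbitrary face maps, not just the endpoint maps used to pin down $v$, follows automatically because the reshuffle cast is itself a natural transformation by construction, so it suffices that $\nu$ agrees with it on a generating family of components, which we have now verified.
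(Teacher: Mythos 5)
Your proof is correct and follows essentially the same route as the paper's: reduce to single-variable cubes via product preservation (this is the content of the reference to \cref{thm:uniqueness-of-nattrans-base}), rule out $0$ and $1$ via the endpoint naturality squares so that $\nu$ must rename $\var i$ to itself, and then observe that the resulting constraint on flavors is exactly $\rlookup{k}{G'} \leq \rlookup{k}{G}$, i.e.\ $F \leq F'$, which is the condition for the reshuffle cast to exist. The paper phrases the argument slightly differently, first extracting the "renames variables to themselves" conclusion and then running the case analysis to establish $F \leq F'$, but both proofs carry the same content, including the vacuous case you point out where $\rlookup{k}{G} = (\eqty) < \rlookup{k}{G'}$.
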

\begin{proof}
	Pick a natural transformation $\nu : F \to F'$. By the same reasoning as in the proof of \cref{thm:uniqueness-of-nattrans-base}, $\nu$ is completely determined and substitutes every variable of $FW'$ with the variable of $FW$ that goes by the same name.
	
	We show that $\nu = \iota$. It is sufficient to show that $F \leq F'$, since in that case $\iota$ exists and is therefore equal to $\nu$. Let $F \dashv G$ and $F' \dashv G'$, then we may equivalently show that $G' \leq G$. Since $G$ and $G'$ have a left adjoint, we know that $(\rlookup \eqty {G'}) = (\rlookup \eqty G) = (\eqty)$. Now pick $k \in \accol{0, \ldots, m}$. Then we get $\nu : F(\var i : \Idim k) \to F'(\var i : \Idim k)$. Several cases are possible:
	\begin{itemize}
		\item If $\rlookup k G = (\eqty)$, then $F(\var i : \Idim k) = ()$. Since $\nu$ substitutes variables for themselves, this must mean that $F'(\var i : \Idim k) = ()$ as well, i.e. $\rlookup k {G'} = (\eqty)$.
		
		\item If $\rlookup k {G'} = (\eqty)$, then necessarily it is less than or equal to $\rlookup k G$.
		
		\item If $\rlookup k G \neq (\eqty)$ and $\rlookup k {G'} \neq (\eqty)$, then we have $\nu = (\var i / \var i) : (\var i : \Idim{\rlookup k G}) \to (\var i : \Idim{\rlookup k {G'}})$, which is only possible if $\rlookup k {G'} \leq \rlookup k G$. \qedhere
	\end{itemize}
\end{proof}

\subsection{Reshuffling functors on cubical sets, formally}\label{sec:reshuffling-functors-on-cubical-sets-formally}
In this section, we try to define reshuffling functors between cubical set categories. The modalities available internally will be the reshuffles that have two left adjoints; we call these \textdef{right reshuffling functors}. In order to model left division, we need to consider the left adjoint of any modality (in the paper \cite{reldtt} this is called a contramodality), so we will also consider reshuffling functors that have a left and a right adjoint; we call these \textdef{central reshuffling functors}. However, the reshuffles that have no left adjoint, i.e. those that redefine the equality relation, have no role to play in the general modality structure. The reshuffle $\cohpi_0^n = \reshlist{0}{1, \ldots, n} : n \to n-1$ will also be needed as a functor on cubical sets, in order to reason about the universe and in order to model existential quantification, but we will treat it separately in \cref{sec:reldtt-cohpi}.

From \cref{sec:psh-transform}, we know that a functor $F : \catV \to \catW$ gives rise to three functors $\lpsh F \dashv \fpsh F \dashv \rpsh F$ where $\lpsh F, \rpsh F : \widehat \catV \to \widehat \catW$ and $\fpsh F : \widehat \catW \to \widehat \catV$. Here, the functor $\lpsh F$ (which we did not give a general construction for) extends $F$ in the sense that $\lpsh F \circ \yoneda \cong \yoneda \circ F$ (\cref{thm:yoneda-and-left-adjoint}). This functor is not necessarily a morphism of CwFs, but $\fpsh F$ is a strict morphism of CwFs (\cref{thm:fpsh-strict-cwf-morphism}) and $\rpsh F$ is a (possibly non-strict) morphism of CwFs (\cref{thm:rpsh-cwf-morphism}).

If we have $F \dashv G$, then we know that $\fpsh F \dashv \fpsh G$. By uniqueness of the adjoint, this gives us $\lpsh F \dashv \lpsh G \cong \fpsh F \dashv \fpsh G \cong \rpsh F \dashv \rpsh G$.
In other words, every chain of $n$ adjoint functors between two base categories, gives us a chain of $n+2$ adjoint functors between the presheaf categories, the first $n$ of which extend the original ones.

\begin{definition}\label{def:central-reshuffling-functor}
	We have a 2-functor $\widehat{\cubecat} : \reshufflecatlr \to \Cat$ defined as follows:
	\begin{itemize}
		\item The object $n$ is mapped to the category $\widehat{\dcubecat n}$.
		\item The reshuffle $F$ with left adjoint $L$ is mapped to $\fpsh L$, which we also denote $F$.
		Functors arising this way are called \textdef{central reshuffling functors on cubical sets}.
		\item Whenever $F \leq F'$ (with $L \dashv F$ and $L' \dashv F'$, implying that $L' \leq L$), we associate to this a natural transformation $\iota : F \to F'$ which is the lifting $\fpsh \iota$ of $\iota : L' \to L$.
		Natural transformations arising in this way are called (central) \textdef{reshuffle casts of cubes}.
	\end{itemize}
	This is straightforwardly verified to be a well-defined 2-functor.
\end{definition}
\begin{proposition}\label{thm:uniqueness-of-nattrans-psh-central-reshuffle}
	Let $F, F' : \widehat{\dcubecat m} \to \widehat{\dcubecat n}$ be two central reshuffling functors. Then all natural transformations $\nu : F \to F'$ are equal to a (the) reshuffle cast.
\end{proposition}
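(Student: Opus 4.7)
The plan is to reduce the claim to the uniqueness of natural transformations between reshuffling functors on cubes (\cref{thm:uniqueness-of-nattrans-base-reshuffle}) by unlifting via \cref{thm:unlift-nattrans}.

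Unfolding definitions: by \cref{def:central-reshuffling-functor}, $F = \fpsh{L}$ and $F' = \fpsh{L'}$ for reshuffling functors $L, L' : \dcubecat n \to \dcubecat m$ on cubes (namely the left adjoints of the reshuffles $F, F'$, which exist because $F, F' \in \reshufflecatlr$ and hence their left adjoints lie in $\reshufflecatrr$, the domain of $\cubecat$ by \cref{def:reshuffle-on-cubes}). In particular, both $F$ and $F'$ are lifted functors.

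Given $\nu : F \to F'$, i.e.\ $\nu : \fpsh{L} \to \fpsh{L'}$, the first case of \cref{thm:unlift-nattrans} gives us a natural transformation $\tilde \nu : L' \to L$ of base-category functors such that $\nu = \fpsh{\tilde \nu}$. I would then invoke \cref{thm:uniqueness-of-nattrans-base-reshuffle}, which tells us that $\tilde \nu$ is (the) reshuffle cast; in particular, its existence forces $L' \leq L$ in $\reshufflecat$.

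Finally, by the contravariance of $\fpsh{\loch}$ on natural transformations, $L' \leq L$ is precisely the condition $F = \fpsh{L} \leq \fpsh{L'} = F'$ noted in \cref{def:central-reshuffling-functor}, and the central reshuffle cast $\iota : F \to F'$ is \emph{by definition} $\fpsh{\iota}$ for the base-level cast $\iota : L' \to L$. Hence $\nu = \fpsh{\tilde \nu} = \fpsh{\iota}$ is exactly the central reshuffle cast, as required. There is no real obstacle here: the proof is a formal consequence of \cref{thm:unlift-nattrans,thm:uniqueness-of-nattrans-base-reshuffle} and the construction of central reshuffle casts by lifting; the only thing to check carefully is that the direction-reversal when passing from $L, L'$ to $\fpsh L, \fpsh{L'}$ is compatible with how $\leq$ and the casts are defined in \cref{def:central-reshuffling-functor}, which is immediate.
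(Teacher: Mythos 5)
Your proof is correct and matches the paper's own argument exactly: the paper's proof is the one-line observation that the result follows immediately from \cref{thm:unlift-nattrans} and \cref{thm:uniqueness-of-nattrans-base-reshuffle}, and you have simply written out the (routine) details, including the direction-reversal bookkeeping.
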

\begin{proof}
	This follows immediately from \cref{thm:unlift-nattrans} and \cref{thm:uniqueness-of-nattrans-base-reshuffle}.
\end{proof}
\begin{definition}\label{def:right-reshuffling-functor}
	We have a pseudofunctor (2-functor that respects identity and composition only up to coherent isomorphism) $\widehat{\cubecat} : \reshufflecatll \to \Cat$ defined as follows:
	\begin{itemize}
		\item The object $n$ is mapped to the category $\widehat{\dcubecat n}$.
		\item The reshuffle $F$ with left adjoints $K \dashv L \dashv F$ is mapped to $\rpsh K$, which we also denote $F$.
		Functors arising this way are called \textdef{right reshuffling functors on cubical sets}.
		\item Whenever $F \leq F'$ (with $K \dashv L \dashv F$ and $K' \dashv L' \dashv F'$, implying that $K \leq K'$), we associate to this a natural transformation $\iota : F \to F'$ which is the lifting $\rpsh \iota$ of $\iota : K \to K'$.
		Natural transformations arising in this way are called (right) \textdef{reshuffle casts of cubes}.
	\end{itemize}
	The following lemmas imply that this is a well-defined pseudofunctor.
\end{definition}
\begin{lemma}
	The operation defined in \cref{def:right-reshuffling-functor} preserves identity and composition up to (a priori possibly incoherent) isomorphism.
\end{lemma}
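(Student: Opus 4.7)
The plan is to reduce both claims to two facts: $\fpsh{(\cdot)}$ is a strict contravariant functor on functors between base categories, and $\rpsh{(\cdot)}$ is, by construction, a right adjoint to $\fpsh{(\cdot)}$ (on each presheaf category), so the usual uniqueness-of-adjoints argument yields the isomorphisms.

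First I would handle identity. The identity reshuffle $\id_n$ has $\id_n \dashv \id_n \dashv \id_n$, so the operation sends it to $\rpsh{\id_{\dcubecat n}}$. Since $\fpsh{\id_{\dcubecat n}} = \Id_{\widehat{\dcubecat n}}$ holds on the nose (both sides send a presheaf $\Gamma$ to $\Gamma \circ \id = \Gamma$), and $\rpsh{\id}$ is right adjoint to $\fpsh{\id} = \Id$, uniqueness of right adjoints up to natural isomorphism gives $\rpsh{\id} \cong \Id$.

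Next I would handle composition. Given composable reshuffles $F : m \to n$ and $F' : n \to p$ in $\reshufflecatll$, with chains $K \dashv L \dashv F$ and $K' \dashv L' \dashv F'$, the plan is to use that the 2-poset $\reshufflecat$ admits horizontal composition of adjoint pairs: from $L \dashv F$ and $L' \dashv F'$ we obtain $L \circ L' \dashv F' \circ F$, and from $K \dashv L$ and $K' \dashv L'$ we obtain $K' \circ K \dashv L \circ L'$. By uniqueness of adjoints in a 2-poset (proved in the lemma preceding \cref{def:reshufflecat}), $K' \circ K$ \emph{is} the outermost left adjoint of $F' \circ F$; so the operation sends $F' \circ F$ to $\rpsh{(K' \circ K)}$. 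On the other hand, by construction of $\fpsh{(\cdot)}$ (precomposition of presheaves), we have the on-the-nose identity $\fpsh{(K' \circ K)} = \fpsh K \circ \fpsh{K'}$. Taking right adjoints of this equality gives $\rpsh{(K' \circ K)} \cong \rpsh{K'} \circ \rpsh K$, because the right adjoint of a composite is the composite of the right adjoints in the opposite order, and all right adjoints are unique up to natural isomorphism.

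Finally, it would remain to note that the chosen left adjoint $K$ is indeed a well-defined functor on cubes, i.e. lies in the domain $\reshufflecatrr$ of the 2-functor $\cubecat$ from \cref{def:reshuffle-on-cubes}, so that $\fpsh K$ and hence $\rpsh K$ are defined. The hardest part is bookkeeping rather than mathematics: one must track that $\fpsh{}$ is contravariant while $\rpsh{}$ is covariant, and that the leftmost adjoint chain transports composition contravariantly in $\reshufflecat$ but covariantly through the presheaf categories. Once this bookkeeping is fixed, the proof is just uniqueness of adjoints. Coherence of these isomorphisms is not claimed at this stage, in line with the statement.
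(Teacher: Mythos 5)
Your proof is correct and follows the same strategy as the paper: both identities $\rpsh{\id} \cong \Id$ and $\rpsh{(K'\circ K)} \cong \rpsh{K'}\circ\rpsh{K}$ are established by observing that each side is right adjoint to the same functor ($\fpsh{\id} = \Id$, resp.\ $\fpsh{(K'\circ K)} = \fpsh{K}\circ\fpsh{K'}$) and invoking uniqueness of right adjoints up to natural isomorphism. You additionally spell out the bookkeeping — that the outermost left adjoint of $F'\circ F$ is $K'\circ K$ and that $K$ lands in $\reshufflecatrr$ — which the paper leaves implicit but which does indeed go through.
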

\begin{proof}
	Since both $\rpsh \id$ and $\Id$ are right adjoint to $\fpsh \id = \Id$, they are isomorphic.
	
	Since both $\rpsh{(G \circ F)}$ and $\rpsh G \circ \rpsh F$ are right adjoint to $\fpsh{(G \circ F)} = \fpsh F \circ \fpsh G$, they are isomorphic.
\end{proof}
\begin{proposition}\label{thm:uniqueness-of-nattrans-psh-right-reshuffle}
	Let $F, F' : \widehat{\dcubecat m} \to \widehat{\dcubecat n}$ be two right reshuffling functors. Then all natural transformations $\nu : F \to F'$ are equal to a (the) reshuffle cast.
\end{proposition}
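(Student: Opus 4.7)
The plan is to reduce this to \cref{thm:uniqueness-of-nattrans-psh-central-reshuffle} by using the adjunction mate correspondence. By definition of right reshuffling functor, we have $F = \rpsh K$ and $F' = \rpsh{K'}$ for reshuffling functors $K, K' : \dcubecat m \to \dcubecat n$ on cubes with the appropriate number of left adjoints, and by \cref{thm:lifting-adjunctions} we have canonical adjunctions $\fpsh K \dashv \rpsh K$ and $\fpsh{K'} \dashv \rpsh{K'}$ between $\widehat{\dcubecat m}$ and $\widehat{\dcubecat n}$. The mate calculus induced by these two adjunctions gives a bijection between natural transformations $\nu : \rpsh K \to \rpsh{K'}$ and natural transformations $\tilde\nu : \fpsh{K'} \to \fpsh K$.

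Now $\fpsh K$ and $\fpsh{K'}$ are central reshuffling functors in the sense of \cref{def:central-reshuffling-functor}, so by \cref{thm:uniqueness-of-nattrans-psh-central-reshuffle} there is at most one such $\tilde\nu$, namely the central reshuffle cast $\fpsh \iota$ associated to a cube-level cast $\iota : K \to K'$; in particular, the mere existence of a candidate $\tilde\nu$ forces $K \leq K'$ and, by 2-functoriality of $\rpsh{(\cdot)}$ (as built into \cref{def:right-reshuffling-functor}), also $F \leq F'$, so that the right reshuffle cast $\rpsh \iota : F \to F'$ is defined. It then remains to identify $\nu$ with $\rpsh \iota$.

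For this identification I avoid any direct manipulation of the mate formula: the mate of $\rpsh \iota$ is itself a natural transformation $\fpsh{K'} \to \fpsh K$, so by a second application of \cref{thm:uniqueness-of-nattrans-psh-central-reshuffle} it must coincide with the unique central reshuffle cast $\fpsh \iota$. Thus $\nu$ and $\rpsh \iota$ share the same mate, and by bijectivity of the mate correspondence we conclude $\nu = \rpsh \iota$. The main obstacle in this plan is not any combinatorial argument about reshuffles (\cref{thm:uniqueness-of-nattrans-base-reshuffle} has already done that work and been packaged into \cref{thm:uniqueness-of-nattrans-psh-central-reshuffle}), but rather to set up the mate correspondence cleanly given that the right adjoints $\rpsh K$ are constructed somewhat abstractly in \cref{sec:psh-transform}; this is however standard adjunction theory, and the double appeal to uniqueness on the central side allows us to completely sidestep any explicit computation with units and counits.
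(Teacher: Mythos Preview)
Your proof is correct and takes essentially the same route as the paper: both reduce to \cref{thm:uniqueness-of-nattrans-psh-central-reshuffle} via the mate bijection $\Nat(\rpsh K, \rpsh{K'}) \cong \Nat(\fpsh{K'}, \fpsh K)$ coming from the adjunctions $\fpsh K \dashv \rpsh K$ and $\fpsh{K'} \dashv \rpsh{K'}$. Your explicit identification of $\nu$ with $\rpsh\iota$ by a second appeal to uniqueness is a detail the paper leaves implicit; note, as a minor point, that the adjunction $\fpsh K \dashv \rpsh K$ is established by the construction of $\rpsh{(\cdot)}$ itself in \cref{sec:psh-transform}, not by \cref{thm:lifting-adjunctions}.
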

\begin{proof}
	Note that, if $L \dashv R$, then natural transformations $G \to HL$ correspond to natural transformations $GR \to H$. Hence, if $K \dashv L \dashv F$ and $K' \dashv L' \dashv F'$, then we have
	\begin{equation}
		(F \to F') = (\rpsh K \to \rpsh{K'}) \cong (\Id \to \rpsh{K'} \fpsh K) \cong (\fpsh{K'} \to \fpsh K).
	\end{equation}
	Then the result follows from \cref{thm:uniqueness-of-nattrans-psh-central-reshuffle}.
\end{proof}
\begin{proposition}
	The restrictions of the 2-functor defined in \cref{def:central-reshuffling-functor} and the pseudofunctor defined in \cref{def:right-reshuffling-functor} to the 2-poset $\reshufflecatllr$, are isomorphic.
\end{proposition}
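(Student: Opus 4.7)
My plan is to exhibit a canonical family of natural isomorphisms $\zeta_F : \rpsh{K_F} \cong \fpsh{L_F}$, one for each $F \in \reshufflecatllr$ (where I write $K_F \dashv L_F \dashv F$ for the two left adjoints, guaranteed to exist by the two white bullets on the left), and then verify that this family is pseudonatural, i.e.\ compatible with identity, composition and reshuffle casts.

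The isomorphism $\zeta_F$ is forced: on the right-hand side we have $\fpsh{K_F} \dashv \fpsh{L_F}$ (by \cref{thm:lifting-adjunctions}, since $K_F \dashv L_F$ in $\reshufflecat$), while by construction of $\rpsh{K_F}$ we have $\fpsh{K_F} \dashv \rpsh{K_F}$. Thus $\fpsh{L_F}$ and $\rpsh{K_F}$ are both right adjoint to $\fpsh{K_F}$, and uniqueness of right adjoints up to isomorphism supplies $\zeta_F : \rpsh{K_F} \cong \fpsh{L_F}$. This gives the object part of the pseudonatural isomorphism; in particular, applying the argument to $F = \id$ yields $\zeta_\id : \rpsh{\id} \cong \fpsh{\id}$ compatible (up to iso) with the respective unitors of both (pseudo)functors.

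Compatibility with composition is verified by the same uniqueness-of-adjoint principle. Given $F : m \to n$ and $G : n \to p$ in $\reshufflecatllr$, both $\rpsh{K_G} \circ \rpsh{K_F}$ and $\fpsh{L_F} \circ \fpsh{L_G}$ are right adjoint to $\fpsh{K_F} \circ \fpsh{K_G} = \fpsh{K_G \circ K_F} = \fpsh{K_{G \circ F}}$; hence both are canonically isomorphic to $\rpsh{K_{G \circ F}} \cong \fpsh{L_{G \circ F}}$, and the resulting coherence square commutes by uniqueness.

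For compatibility with 2-cells, given $F \leq F'$ in $\reshufflecatllr$ (so $K_F \leq K_{F'}$ and $L_{F'} \leq L_F$), I need $\fpsh{\iota} \circ \zeta_F = \zeta_{F'} \circ \rpsh{\iota}$ as natural transformations $\rpsh{K_F} \to \fpsh{L_{F'}}$. Conjugating the second composite by $\zeta_F\inv$ on the right produces two natural transformations $\fpsh{L_F} \to \fpsh{L_{F'}}$ between \emph{central} reshuffling functors; by \cref{thm:uniqueness-of-nattrans-psh-central-reshuffle}, any two such are equal (both being forced to equal the reshuffle cast associated with $L_{F'} \leq L_F$). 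The main obstacle, and essentially the only content, is therefore managing the bookkeeping of the various $\fpsh\loch$-/$\rpsh\loch$-adjunctions, including checking that the unit and co-unit of the induced adjunction $\fpsh{K_F} \dashv \zeta_F$ agree with those of $\fpsh{K_F} \dashv \fpsh{L_F}$ and $\fpsh{K_F} \dashv \rpsh{K_F}$; once that is in place, the rest reduces, via \cref{thm:uniqueness-of-nattrans-psh-central-reshuffle} and \cref{thm:uniqueness-of-nattrans-psh-right-reshuffle}, to invoking uniqueness of 2-cells between reshuffling functors of the same flavour.
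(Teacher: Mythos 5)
Your approach is exactly the paper's: both $\rpsh{K_F}$ and $\fpsh{L_F}$ are right adjoint to $\fpsh{K_F}$, hence canonically isomorphic, and the coherence/naturality of the resulting family of isomorphisms follows from the uniqueness of natural transformations between reshuffling functors. The paper compresses the pseudonaturality check into a single sentence (``This isomorphism is natural, by uniqueness of reshuffle casts''), which is precisely the content you unpack via \cref{thm:uniqueness-of-nattrans-psh-central-reshuffle} and \cref{thm:uniqueness-of-nattrans-psh-right-reshuffle}.
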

We will consider $F = \fpsh L$ a strict equality, and $F \cong \rpsh K$ merely an isomorphism, to avoid confusion when both are defined.
\begin{proof}
	Let $K \dashv L \dashv F \dashv R$. Then $\rpsh K \cong \fpsh L$ as both are right adjoint to $\fpsh K$. This isomorphism is natural, by uniqueness of reshuffle casts.
\end{proof}
We could similarly define left reshuffling functors; however we do not do so as we will only need $\cohpi_0^n : \widehat{\dcubecat n} \to \widehat{\dcubecat{n-1}}$.
\begin{theorem}[No maze theorem]\label{thm:no-maze}
	Under very general conditions on the reshuffling functors $F, G, H, K$, all natural transformations
	\begin{equation}
		\Hom(F \loch, G \loch) \to \Hom(H \loch, K \loch),
	\end{equation}
	are equal. A specification of the precise conditions would be less comprehensible than the proof below, from which the conditions can be inferred.
\end{theorem}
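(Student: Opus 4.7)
The plan is to prove the theorem by a two-stage Yoneda reduction that ultimately bottoms out in the uniqueness result \cref{thm:uniqueness-of-nattrans-base-reshuffle} for natural transformations between reshuffling functors on cubes. The high-level idea is that a natural transformation of hom-functors on presheaf categories is pinned down by its action on representables, and the homset of maps between representable presheaves, in turn, is computable in the base category via the adjunctions lifted from the 2-poset $\reshufflecatlr$ (or $\reshufflecatll$).

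First I would unpack the meaning of naturality in the statement. The functor $\Gamma \mapsto \Hom(F\Gamma, G\Gamma)$ becomes (co- or contra-) functorial in $\Gamma$ precisely when one of $F$, $G$ is equipped with an adjoint in the appropriate direction, so the first substantive task is to identify the precise hypotheses, e.g. that each of $F, G, H, K$ is a central or right reshuffling functor in the sense of \cref{def:central-reshuffling-functor} and \cref{def:right-reshuffling-functor}. With such hypotheses, each functor is of the form $\fpsh L$ or $\rpsh K$, so its action on a representable $\yoneda W$ is computable: by \cref{thm:yoneda-and-left-adjoint} we have $\lpsh L \yoneda W \cong \yoneda LW$, and dually $\rpsh K \yoneda W$ is the representable-valued presheaf $V \mapsto \Hom(\fpsh K \yoneda V, \yoneda W)$. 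Either way, $\Hom(F \yoneda W, G \yoneda W)$ rewrites as a hom-set in the base category $\dcubecat n$ or $\dcubecat m$.

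Next I would apply Yoneda to the outer natural transformation $\alpha : \Hom(F\loch, G\loch) \to \Hom(H\loch, K\loch)$: it suffices to determine $\alpha_{\yoneda W}$ for all $W$, and then check that the family is completely rigid. After the rewriting above, $\alpha_{\yoneda W}$ becomes a map between hom-functors in $\dcubecat m$ (or $\dcubecat n$) that is moreover natural in $W$ in the induced sense. A second application of Yoneda on the base 2-poset, together with \cref{thm:uniqueness-of-nattrans-base-reshuffle}, then forces this induced natural transformation to coincide with the unique reshuffle cast, which is the same for any two candidates $\alpha, \alpha'$. Hence $\alpha = \alpha'$.

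The main obstacle is not the final rigidity argument but the bookkeeping in identifying the correct side conditions on $F, G, H, K$. Each rewriting step requires a specific adjoint to exist in a specific direction, and the interaction between the covariant and contravariant slots of the hom-set must balance out so that the rewrite lands in a single representable hom-functor rather than in a more complicated end. I expect the clean formulation to read roughly: ``$F, H$ admit one further adjoint on the appropriate side, and $G, K$ are central or right reshuffling functors''; with these in place the chain of isomorphisms, and hence the reduction to \cref{thm:uniqueness-of-nattrans-base-reshuffle}, goes through mechanically. A further subtlety is that, for right reshuffling functors, the isomorphism $F \cong \rpsh K$ is not strict, so one must track the coherence isomorphisms, but these are themselves reshuffle casts and hence unique by \cref{thm:uniqueness-of-nattrans-psh-right-reshuffle}, so they cannot introduce ambiguity.
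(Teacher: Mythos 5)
Your reduction to uniqueness is the right destination, but the route you propose diverges from the paper's proof in a way that opens a real gap.

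The paper's argument has three clean steps. First, it uses the adjunction to rewrite $\Hom(F\loch, G\loch) \cong \Hom(\loch, G'\loch)$ with $G' = RG$ (or dually via a left adjoint to $G$). Second, and this is the crux, it applies a \emph{parametrized Yoneda argument at the level of functors}: applying $\zeta$ to $\id : G'\Gamma \to G'\Gamma$ yields a natural transformation $\nu : HG' \to K$ which, by naturality, completely determines $\zeta$ via $\zeta(\sigma) = \nu \circ HG'\sigma$. Third, it reduces uniqueness of $\nu : HG' \to K$ to \cref{thm:uniqueness-of-nattrans-psh-central-reshuffle} and \cref{thm:uniqueness-of-nattrans-psh-right-reshuffle} by shuttling central reshuffling functors across using adjoint transposition and naturality.

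Your proposal replaces the second step with a restriction to representables: ``it suffices to determine $\alpha_{\yoneda W}$ for all $W$, and then check that the family is completely rigid.'' This is where the gap is. For $\alpha$ to be determined by its components on representables, one generally needs the source functor $\Gamma \mapsto \Hom(F\Gamma, G\Gamma)$ to interact well with colimits of presheaves, so that a value at an arbitrary $\Gamma = \colim_i \yoneda W_i$ can be reconstructed from the values at the $\yoneda W_i$. But $\Hom(\loch, \loch)$ turns colimits in its first argument into limits, and right reshuffling functors (which are right adjoints) do not preserve colimits, so the source functor has no usable (co)continuity property. You would need to supply a separate argument that the representable components pin down $\alpha$; no such argument is in the proposal, and it is not clear one is available without essentially re-deriving the paper's evaluate-at-identity trick.

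Two further remarks. Your observation that the functors must admit adjoints on the appropriate sides is correct and matches the paper's opening move and its closing bookkeeping; this is a genuine side condition, and your intuition about what the ``very general conditions'' should say is close to the mark. Your remark about the coherence isomorphism $F \cong \rpsh K$ being non-strict but itself a reshuffle cast is also correct and is implicitly used in the paper's third step. Finally, note that even if your representables argument could be repaired, it would land on the base-level uniqueness \cref{thm:uniqueness-of-nattrans-base-reshuffle} rather than on the presheaf-level uniqueness propositions; the paper deliberately stops one level higher, which keeps the reduction modular and avoids re-threading the lifting machinery inside this proof.
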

This theorem states that we often need not, as we construct a natural transformation, leave a notational trail of crumbs to express how we got where we are: all roads to the same destination are equal; our model is not a maze.
\begin{proof}
	We assume that $F$ has a right adjoint $R$; if $G$ has a left adjoint $L$, we can proceed in a similar fashion. Then $\Hom(F\loch, G\loch) \cong \Hom(\loch, RG\loch)$, so it suffices to consider natural transformations
	\begin{equation}
		\zeta : \Hom(\loch, G'\loch) \to \Hom(H \loch, K \loch)
	\end{equation}
	where $G' = RG$.
	
	Applying $\zeta$ to $\id : G' \Gamma \to G' \Gamma$, we get $\nu\Gamma := \zeta(\id) : H G' \Gamma \to K \Gamma$. Hence, we have a natural transformation $\nu : HG' \to K$. Moreover, $\nu$ determines $\zeta$; indeed, $\zeta(\sigma) = \zeta(\id \circ \sigma) = \zeta(\id) \circ HG' \sigma = \nu \circ HG'\sigma$.
	
	Thus, it suffices to show that all natural transformations $\nu : H' \to K$ are equal, where $H' = HG'$. We ultimately want to use uniqueness of central or right reshuffling casts. If $H'$ contains some central reshuffling functors on the outside, we can bring these to the right as right reshuffling functors. If $K$ contains some central reshuffling functors on the inside, we can use naturality to bring these to the left as right reshuffling functors.
\end{proof}
\begin{remark}[No maze principle]\label{remark:no-maze}
	We will use the no maze theorem as a legitimation for not bothering to prove equality of transformations that fundamentally do nothing and arose as compositions of reshuffle casts. The no maze theorem says that under very general conditions, such transformations are necessarily equal. But we stretch its interpretation, and conclude that even when other transformations are around, those transformations must be rather contrived, so when dealing with non-contrived transformations, we will assume equality without proof.
\end{remark}

\subsection{Some useful reshuffles}\label{sec:useful-reshuffles}
Before we proceed, we will give names to some useful reshuffles.

The reshuffle $\cohpi_{[0, \ell]}^m$ quotients out the relations at positions $0$ up to $\ell$ (or equivalently just the one at $\ell$) from a presheaf of depth $m$. We require $\ell \geq -1$ and $m \geq \ell$.
\begin{align}
	\cohpi_{[0, -1]}^m &:= \reshlist{\eqty}{0, \ldots, m} = \id : m \to m, & (\ell = -1), \\ \nn
	\cohpi_{[0, \ell]}^m &:= \reshlist{\ell}{\ell + 1, \ldots, m} : m \to m - (\ell + 1), & (\ell \geq 0).
\end{align}
The \textdef{discrete} reshuffle $\cohdisc_{[k+1, \ell]}^m$ inserts the strictest possible relations at positions $k+1$ up to $\ell$, obtaining a presheaf of depth $m$. We require $k+1 \geq 0$, $\ell \geq k$ (if $\ell = k$ then we obtain the identity reshuffle) and $m \geq \ell$.
\begin{align}
	\cohdisc_{[0, \ell]}^m &:= \reshlist{\eqty}{\eqty, \ldots, \eqty, 0, \ldots, m-(\ell + )} : m - (\ell + 1) \to m, & (k+1 = 0), \\ \nn
	\cohdisc_{[k+1, \ell]}^m &:= \reshlist{\eqty}{0, \ldots, k, k, \ldots, k, k+1, \ldots, m - (\ell - k)} : m - (\ell - k) \to m, & (k+1 > 0).
\end{align}
The \textdef{forgetful} reshuffle $\cohfget_{[k+1, \ell]}^m$ forgets the relations $k+1$ up to $\ell$ in a presheaf of depth $m$. It has the same requirements on $k, \ell$ and $m$.
\begin{align}
	\cohfget_{[k+1, \ell]}^m &:= \reshlist{\eqty}{0, \ldots, k, \ell+1, \ldots, m} : m \to m - (\ell - k).
\end{align}
The \textdef{codiscrete} reshuffle $\cohcodisc_{[k+1, \ell]}^m$ inserts the most liberal relations possible at positions $k+1$ up to $\ell$, obtaining a presheaf of depth $m$. It has the same requirements on $k, \ell$ and $m$.
\begin{align}
	\cohcodisc_{[k+1, \ell]}^m &:= \reshlist{\eqty}{0, \ldots, k, k+1, \ldots, k+1, k+1, \ldots, m - (\ell - k)} : m - (\ell - k) \to m, & (\ell < m), \\ \nn
	\cohcodisc_{[k+1, m]}^m &:= \reshlist{\eqty}{0, \ldots, k, \top, \ldots, \top} : k \to m, &(\ell = m).
\end{align}
Note that, if both exist, we have $\cohdisc_{[k+1, \ell]}^m = \cohcodisc_{[k, \ell-1]}^m$. For each of the functors that are indexed with an interval, if both ends of the interval are equal ($k+1 = \ell$), then we write e.g. $\cohdisc_\ell^m = \cohdisc_{[\ell, \ell]}^m : m - 1 \to m$. We have
\begin{align}
	\cohpi_{[0, \ell]}^m \dashv \cohdisc_{[0, \ell]}^m &\dashv \cohfget_{[0, \ell]}^m \dashv \cohcodisc_{[0, \ell]}^m, & (k+1 = 0), \\ \nn
	\ldots \dashv \cohcodisc_{[k, \ell-1]}^m = \cohdisc_{[k+1, \ell]}^m &\dashv \cohfget_{[k+1, \ell]}^m \dashv \cohcodisc_{[k+1, \ell]}^m, & (k+1 > 0).
\end{align}
Furthermore, the following equations hold:
\begin{equation}
	\cohpi_{[0, \ell]}^{m} \circ \cohdisc_{[0, \ell]}^{m} = \id, \qquad
	\cohfget_{[k+1, \ell]}^m \cohdisc_{[k+1, \ell]}^m = \id, \qquad
	\cohfget_{[k+1, \ell]}^m \cohcodisc_{[k+1, \ell]}^m = \id.
\end{equation}
\begin{example}
	In \cref{eg:bpcube} and \cref{part:paramdtt}, we had
	\begin{align}
		\cohpi \cong \cohpi_0^1 &= \reshlist{0}{1} : 1 \to 0 \\ \nn
		\dashv \cohdisc \cong \cohdisc_0^1 &= \reshlist{\eqty}{\eqty, 0} : 0 \to 1 \\ \nn
		\dashv \cohfget \cong \cohfget_0^1 &= \reshlist{\eqty}{1} : 1 \to 0 \\ \nn
		\dashv \cohcodisc \cong \cohcodisc_0^1 = \cohdisc_1^1 &= \reshlist{\eqty}{0, 0} : 0 \to 1 \\ \nn
		\dashv \cohpaths \cong \cohfget_1^1 &= \reshlist{\eqty}{0} : 1 \to 0 \\ \nn
		\dashv \cohshirr \cong \cohcodisc_1^1 &= \reshlist{\eqty}{0, \top} : 0 \to 1.
	\end{align}
	We have $\cohpi \cohdisc \cong \cohfget \cohdisc = \cohfget \cohcodisc = \cohpaths \cohcodisc = \cohpaths \cohshirr = \reshlist{\eqty}{0} = \Id$.
\end{example}

\chapter{Fibrancy}\label{ch:fibrancy}
In order to support the identity extension lemma of parametricity \cite{reynolds}, we need to be able to interpret dependent type theory in a model that uses only \emph{discrete} types. In this chapter, we show that basic type constructions preserve discreteness, and we also prove a few other intermediate results. Some results we prove not for discreteness specifically, but for general \emph{robust} notions of fibrancy, of which discreteness is an instance. The concept of robustness is to our knowledge novel.

\section{Robust notions of fibrancy}\label{sec:fibrancy}
\subsection{Robust notions of fibrancy}
\begin{definition}
	Let $\cat C$ be any category. Its \textdef{arrow category} $\Arr(\cat C)$ is the functor space $\cat C^{\accol{\bullet \to \bullet}}$, i.e. the objects are the morphisms of $\cat C$, and the morphisms are commutative squares.
\end{definition}
\begin{definition}
	Let $\cat C$ be a category. A \textdef{class of morphisms in $\cat C$} is any category $\cat I$ together with a functor $\cat I \to \Arr(\cat C)$ that is an isofibration\footnote{Meaning, essentially, that if a morphism belongs to a given class, then all morphisms isomorphic to it, also belong to that same class.}.
\end{definition}
\begin{definition}
	Let $\cat C$ be a category and $i : \cat I \to \Arr(\cat C)$ a class of morphisms. A \textdef{right lifting operation} $r$ for a morphism $\vfi : x \to y$ provides a diagonal for every diagram of the form
	\begin{equation}
		\xymatrix{
			a \ar[r] \ar[d]_{i(\eta)}
			& x \ar[d]^{\vfi}
			\\
			b \ar[r] \ar@{.>}[ru]_{r(\eta)}
			& y
		}
	\end{equation}
	naturally in $\eta \in \cat I$. The category of morphisms $\vfi$ equipped with right lifting operations, forms a new class of morphisms $R(\cat I)$ (where the morphism-squares are defined in the obvious way).
	
	We dually define \textdef{left lifting operations} and the class $L(\cat I)$.
\end{definition}
\begin{definition}
	A class of morphisms $i : \cat I \to \cat C$ \textdef{has pullbacks} if, for every $\eta \in \cat I$ such that $i(\eta) : a \to b$ and every $\vfi : b' \to b$, there is a morphism $\eta[\vfi] \to \eta$ that is mapped to a pullback square
	\begin{equation}
		\xymatrix{
			a[\vfi] \ar[r] \ar[d]_{i(\eta[\vfi])} \pullbackcorner
			& a \ar[d]^{i(\eta)}
			\\
			b' \ar[r]_\vfi
			& b,
		}
	\end{equation}
	naturally in $b'$.
\end{definition}
\begin{definition}
	A class of morphisms is \textdef{robust} if it is of the form $R(\cat I)$, where $\cat I$ has pullbacks.
\end{definition}

\subsection{A model with only fibrant types}
In this section, we will show that if we restrict the general presheaf CwF $\widehat \catW$ to those types $T$ for which $\pi : \Gamma.T \to \Gamma$ belongs to a fixed robust class of morphisms $R(\cat I)$, called \textdef{fibrations}, then we obtain again a CwF $\pshfib \catW$. Morphisms of the class $\cat I \to \widehat \catW$ will be called \textdef{horn inclusions}.

\subsubsection{The category with families $\pshfib \catW$}
\begin{lemma}
	$\pshfib \catW$ is a well-defined category with families (see \cref{def:cwf}).
\end{lemma}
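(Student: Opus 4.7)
The plan is to leverage the fact that all the CwF structure of $\widehat{\catW}$ — contexts, substitutions, the empty context, terms, context extension, $\pi$, $\xi$, and the pairing $(\loch, \loch)$ — carries over verbatim to $\pshfib\catW$, since none of these operations depends on any fibrancy condition. The only thing that needs genuine verification is that the assignment $\Gamma \mapsto \set{T}{(\pi : \Gamma.T \to \Gamma) \in R(\cat I)}$ really defines a contravariant functor $\Ty_{\name{Fib}} : \Ctx\op \to \Set$; that is, that fibrancy is preserved under substitution. Context extension then automatically restricts, because $\Gamma.T$ for a fibrant $T$ is a context just like any other.

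First, I would observe that for any substitution $\sigma : \Delta \to \Gamma$ and any type $\Gamma \sez T \type$, the commutative square
\begin{equation*}
	\xymatrix{
		\Delta.T[\sigma] \ar[r]^{\sigma\subext} \ar[d]_{\pi} \pullbackcorner
		& \Gamma.T \ar[d]^{\pi}
		\\
		\Delta \ar[r]_\sigma
		& \Gamma
	}
\end{equation*}
is a pullback square in $\widehat{\catW}$ (this is a standard fact about the presheaf CwF). So stability of fibrancy under substitution will follow from the lemma that $R(\cat I)$ is closed under pullback along arbitrary maps in $\widehat{\catW}$. This is the crux of the proof.

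To prove that lemma, suppose $\vfi : x \to y$ lies in $R(\cat I)$ with right lifting operation $r$, and suppose $\vfi' : x' \to y'$ is its pullback along some $f : y' \to y$, with projection $g : x' \to x$. Given any horn $\eta \in \cat I$ with $i(\eta) : a \to b$ and any square with top $h : a \to x'$ and bottom $k : b \to y'$, I compose with $f$ and $g$ to obtain a square over $\vfi$ with top $g \circ h$ and bottom $f \circ k$; applying $r(\eta)$ produces a diagonal $d : b \to x$ making the evident triangles commute. Since $\vfi \circ d = f \circ k$, the universal property of the pullback square yields a unique $d' : b \to x'$ with $g \circ d' = d$ and $\vfi' \circ d' = k$, and one checks the horn inclusion triangles also commute. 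I define $r'(\eta) := d'$ and this is a right lifting operation on $\vfi'$. Naturality of $r'$ in $\eta \in \cat I$ reduces to naturality of $r$ combined with uniqueness in the pullback universal property: two maps into $x'$ that agree after composing with $g$ and with $\vfi'$ must coincide.

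The main obstacle, if one can call it that, is just bookkeeping in the naturality argument; the lifting operation itself falls out mechanically from the universal property of pullbacks. The hypothesis that $\cat I$ itself has pullbacks is not needed for this particular closure property (pullbacks in $\widehat{\catW}$ suffice), but it is consistent with our setup and will become relevant elsewhere when we need to pull back horn inclusions along substitutions. Once the pullback-stability of $R(\cat I)$ is in hand, $\Ty_{\name{Fib}}$ inherits its functorial action $\loch[\sigma]$ from $\Ty$, and every remaining CwF axiom follows at once from the corresponding fact in $\widehat{\catW}$.
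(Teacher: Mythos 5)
Your proof is correct and takes essentially the same approach as the paper: both reduce the problem to showing that fibrancy is stable under substitution, and both derive this from the fact that the substitution square $\Delta.T[\sigma] \to \Gamma.T$ over $\sigma$ is a pullback in $\widehat{\catW}$. You are slightly more explicit in isolating the pullback-stability of $R(\cat I)$ as a standalone lemma and in carrying out the naturality check, and your observation that the hypothesis that $\cat I$ has pullbacks is not needed for this particular result is accurate (it matters for other consequences of robustness, such as $\Pi$-types), but the underlying argument is identical to the paper's.
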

\begin{proof}
	The only thing we need to prove in order to show this is that $\Ty$ still has a morphism part, i.e. that fibrancy is preserved under substitution.

	So pick a substitution $\sigma : \Theta \to \Gamma$ and a fibrant type $\Gamma \sez T \ftype$. Pick a horn inclusion $\eta : \Lambda \to \Delta$. Then we need to find a diagonal the square on the left:
	\begin{equation}
		\xymatrix{
			\Lambda \ar[r] \ar[d]_\eta
			& \Theta.T[\sigma] \ar[r]^{\sigma \subext} \ar[d]_\pi \pullbackcorner
			& \Gamma.T \ar[d]_\pi
			\\
			\Delta \ar[r]\ar@{.>}@/_/[ru] \ar@{.>}@/_/[rru]
			& \Theta \ar[r]_\sigma
			& \Gamma.
		}
	\end{equation}
	However, we know that there is a diagonal $\Delta \to \Gamma.T$, by fibrancy of $T$. Since $\Theta.T[\sigma]$ is a pullback (as is easy to show), we get the required lifting.
\end{proof}

\subsubsection{Dependent sums}
\begin{lemma}
	The category with families $\pshfib \catW$ supports dependent sums.
\end{lemma}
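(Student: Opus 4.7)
The plan is as follows. Since the underlying CwF $\widehat{\catW}$ already supports $\Sigma$-types together with all the pairing, projection, and coherence data, and since $\pshfib\catW$ inherits the same contexts, substitutions, terms, and type formers, the only thing to verify is that if $\Gamma \sez A \ftype$ and $\Gamma.A \sez B \ftype$ then $\Sigma A B$ is fibrant, i.e., that the canonical projection $\pi : \Gamma.\Sigma A B \to \Gamma$ lies in the class $R(\cat I)$.

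To this end, I would exploit the natural isomorphism $\Gamma.\Sigma A B \cong \Gamma.A.B$ given by pairing and unpairing. Under this isomorphism, $\pi : \Gamma.\Sigma A B \to \Gamma$ corresponds to the composite
\begin{equation*}
    \Gamma.A.B \xrightarrow{\pi_B} \Gamma.A \xrightarrow{\pi_A} \Gamma.
\end{equation*}
Since any class of morphisms is an isofibration by definition, $R(\cat I)$ is closed under isomorphism in $\Arr(\widehat\catW)$, so it suffices to show that $\pi_A \circ \pi_B$ belongs to $R(\cat I)$. Both $\pi_A$ and $\pi_B$ are fibrations by fibrancy of $A$ and $B$ respectively.

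The key fact is therefore that $R(\cat I)$ is closed under composition. Given a horn inclusion $\eta : \Lambda \to \Delta$ in $\cat I$ together with a square
\begin{equation*}
    \xymatrix{
        \Lambda \ar[r]^-{\sigma} \ar[d]_{\eta} & \Gamma.A.B \ar[d]^{\pi_A \circ \pi_B} \\
        \Delta \ar[r]_-{\tau} & \Gamma,
    }
\end{equation*}
I would first apply the right lifting operation for $\pi_A$ to the horn $\pi_B \circ \sigma$ with filler $\tau$, producing a diagonal $\rho : \Delta \to \Gamma.A$ naturally in $\eta$. Then I would apply the right lifting operation for $\pi_B$ to the horn $\sigma$ with filler $\rho$, producing the required diagonal $\Delta \to \Gamma.A.B$. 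Naturality in $\eta$ follows by composing the two naturality witnesses.

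I do not expect any serious obstacle: the argument is the classical closure of right-orthogonality classes under composition, and the naturality needed for being a right lifting \emph{operation} (rather than just existence of lifts) is preserved under composition automatically. Robustness of the class and the hypothesis that $\cat I$ has pullbacks are not needed for this particular lemma; they will only become relevant for type formers such as $\Pi$, where one must transport lifts along pullback squares.
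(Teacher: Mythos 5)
Your proof is correct and follows essentially the same route as the paper: decompose $\pi : \Gamma.\Sigma A B \to \Gamma$ via the isomorphism $\Gamma.\Sigma A B \cong \Gamma.A.B$ into a composite of two fibrations, and observe that fibrations (morphisms with chosen natural right lifts) are closed under composition and isomorphism. Your extra remark that robustness of $\cat I$ is not needed here is also accurate; the paper invokes it only for $\Pi$-types, where a horn inclusion must be pulled back along $\pi : \Delta.A \to \Delta$.
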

\begin{proof}
	One can easily show that $\Gamma.\Sigma A B \cong \Gamma.A.B$, where $\Gamma.\Sigma A B \xrightarrow \pi \Gamma$ corresponds to $\Gamma.A.B \xrightarrow \pi \Gamma.A \xrightarrow \pi \Gamma$. Since a composition of fibrations is again a fibration (as the lifting operations can be composed), and isomorphisms are fibrations, we can conclude that $\Sigma A B$ is fibrant if $A$ and $B$ are.
\end{proof}

\subsubsection{Dependent products}
\begin{lemma}
	The category with families $\pshfib \catW$ supports dependent products.
\end{lemma}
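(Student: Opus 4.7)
The plan is to reduce a lifting problem for $\pi : \Gamma.\Pi A B \to \Gamma$ against a horn inclusion $\eta$ to a lifting problem for $\pi : \Gamma.A.B \to \Gamma.A$ against the pullback of $\eta$, and then exploit robustness to conclude that this pullback is itself a horn inclusion so that fibrancy of $B$ applies.

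Concretely, suppose we are given a horn inclusion $\eta : \Lambda \to \Delta$ in $\cat I$ and a commutative square
\begin{equation*}
	\xymatrix{
		\Lambda \ar[r]^-{t} \ar[d]_{\eta}
		& \Gamma.\Pi A B \ar[d]^{\pi}
		\\
		\Delta \ar[r]_-{\sigma}
		& \Gamma.
	}
\end{equation*}
The upper map corresponds to a term $\Lambda \sez f : (\Pi A B)[\sigma\eta] = \Pi(A[\sigma\eta])(B[(\sigma\eta)\subext])$; by the $\lambda/\ap$ isomorphism of \cref{def:pi-types}, $f$ corresponds to a term $\Lambda.A[\sigma\eta] \sez \ap f : B[(\sigma\eta)\subext]$, i.e.\ to a morphism $\tilde t : \Lambda.A[\sigma\eta] \to \Gamma.A.B$ lying over $\sigma\subext \circ \pi : \Lambda.A[\sigma\eta] \to \Gamma.A$.

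Now consider the naturality pullback of the horn inclusion $\eta$ along the weakening $\pi : \Delta.A[\sigma] \to \Delta$:
\begin{equation*}
	\xymatrix{
		\Lambda.A[\sigma\eta] \ar[r]^-{\eta\subext} \ar[d]_{\pi} \pullbackcorner
		& \Delta.A[\sigma] \ar[d]^{\pi}
		\\
		\Lambda \ar[r]_{\eta}
		& \Delta.
	}
\end{equation*}
By the hypothesis that $\cat I$ has pullbacks (this is where robustness enters), the top map $\eta\subext$ is itself a horn inclusion in $\cat I$, naturally in $\eta$. Then we have a lifting problem
\begin{equation*}
	\xymatrix{
		\Lambda.A[\sigma\eta] \ar[r]^-{\tilde t} \ar[d]_{\eta\subext}
		& \Gamma.A.B \ar[d]^{\pi}
		\\
		\Delta.A[\sigma] \ar[r]_-{\sigma\subext} \ar@{.>}[ru]^-{\tilde r(\eta\subext)}
		& \Gamma.A,
	}
\end{equation*}
and fibrancy of $B$ provides a diagonal $\tilde r(\eta\subext) : \Delta.A[\sigma] \to \Gamma.A.B$, corresponding to a term $\Delta.A[\sigma] \sez b : B[\sigma\subext]$. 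Applying $\lambda$ yields $\Delta \sez \lambda b : (\Pi A B)[\sigma]$ and hence a morphism $r(\eta) : \Delta \to \Gamma.\Pi A B$ over $\sigma$. The upper triangle $r(\eta) \circ \eta = t$ follows by undoing the $\lambda/\ap$ correspondence after restricting by $\eta$, which commutes with everything by the compatibility of $\lambda$ with substitution and the factorization $(\eta, \xi)_{\Gamma.A[\sigma]} = \eta\subext$ in the pullback.

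Finally, we must check that $\eta \mapsto r(\eta)$ is natural in $\eta \in \cat I$, which is the condition making $\pi$ into an object of $R(\cat I)$. This is where the main bookkeeping lies, but all the pieces are functorial: the pullback operation $\eta \mapsto \eta\subext$ is natural by the pullback assumption in the definition of robustness, the right lifting operation for $B$ supplied by $B$'s fibrancy is natural by hypothesis, and the $\lambda/\ap$ correspondence is natural in substitution. The only real obstacle is to cleanly keep track of these naturalities at the level of squares in $\Arr(\widehat\catW)$ rather than just single lifting problems; the verification itself is mechanical.
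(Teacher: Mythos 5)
Your proof is correct and follows essentially the same approach as the paper: convert the lifting problem for $\pi : \Gamma.\Pi A B \to \Gamma$ into one for $\pi : \Gamma.A.B \to \Gamma.A$ via the $\lambda/\ap$ isomorphism, and use the pullback-closure ingredient of robustness to conclude that $\eta\subext$ is again a horn inclusion so that fibrancy of $B$ applies. The only cosmetic differences are that the paper first reduces to $\sigma = \id$ by pulling back the type, whereas you work with general $\sigma$ directly, and that you address the naturality-in-$\eta$ requirement for the lifting operation more explicitly than the paper does.
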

In fact, the proof of this lemma proves something stronger:
\begin{lemma}\label{thm:disc-prod-fib}
	Given a context $\Gamma$, an arbitrary type $\Gamma \sez A \type$ and a fibrant type $\Gamma.A \sez B \ftype$, the type $\Pi A B$ is fibrant.
\end{lemma}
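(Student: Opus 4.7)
The plan is to show that $\pi : \Gamma.\Pi A B \to \Gamma$ has the right lifting property against every horn inclusion $\eta : \Lambda \to \Delta$ in $\cat I$. So fix such an $\eta$ together with a commutative square with bottom side $\sigma' : \Delta \to \Gamma$ and top side $(\sigma, f) : \Lambda \to \Gamma.\Pi A B$, where $\sigma = \sigma' \circ \eta$ and $\Lambda \sez f : (\Pi A B)[\sigma]$; the task is to exhibit $\Delta \sez g : (\Pi A B)[\sigma']$ with $g[\eta] = f$.

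The first step is to rephrase this via the adjoint correspondence $\lambda \dashv \ap$. Using naturality of $\Pi$ under substitution (\cref{def:pi-types}), we have $(\Pi A B)[\sigma'] = \Pi(A[\sigma'])(B[\sigma'\subext])$, and terms of this type correspond to terms $\Delta.A[\sigma'] \sez g' : B[\sigma'\subext]$; similarly $f$ corresponds to $\Lambda.A[\sigma] \sez f' : B[\sigma\subext]$. The compatibility $g[\eta] = f$ translates via naturality of $\ap$ to $g'[\eta\subext] = f'$. Thus the original lifting problem is equivalent to finding a diagonal for
\begin{equation}
	\xymatrix{
		\Lambda.A[\sigma] \ar[r]^{(\sigma\subext, f')} \ar[d]_{\eta\subext}
		& \Gamma.A.B \ar[d]^\pi
		\\
		\Delta.A[\sigma'] \ar[r]_{\sigma'\subext} \ar@{.>}[ru]
		& \Gamma.A.
	}
\end{equation}

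The key observation is then that $\eta\subext : \Lambda.A[\sigma] \to \Delta.A[\sigma']$ arises as the pullback of $\eta : \Lambda \to \Delta$ along $\pi : \Delta.A[\sigma'] \to \Delta$ (this is immediate from the universal property of context extension). Since $\cat I$ is assumed to have pullbacks as part of robustness, the map $\eta\subext$ is itself (the image under $i$ of) a horn inclusion. The right map $\pi : \Gamma.A.B \to \Gamma.A$ is a fibration by hypothesis on $B$, so the square admits a diagonal, producing the required $g'$ and hence $g$.

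The main obstacle is simply the bookkeeping in the first step: unpacking the bijection between lifting problems for $\pi$ over $\Pi A B$ and lifting problems for $\pi$ over $B$, one has to check that the currying/uncurrying is natural with respect to the substitutions $\eta$, $\sigma$, $\sigma'$, and is compatible with the canonical maps $\eta\subext$ and $\sigma\subext$. Once this formal equivalence is verified, robustness of the horn-inclusion class does all the real work, and the argument goes through without any need for a fresh dimension variable of the kind used in the concrete proof of \cref{thm:disc-prod} --- that trick is precisely what robustness abstracts.
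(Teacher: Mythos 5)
Your proof is correct and takes essentially the same approach as the paper's: transpose along the $\lambda/\ap$ adjunction to turn the lifting problem for $\Pi A B$ into one for $B$, then invoke pullback-stability of horn inclusions (robustness) to recognize $\eta\subext$ as a horn inclusion and conclude by fibrancy of $B$. The only cosmetic difference is that the paper first reduces to $\sigma' = \id$ via the pullback $\Delta.(\Pi A B)[\sigma']$, whereas you carry $\sigma'$ through explicitly; this is purely notational.
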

\begin{proof}
	Pick a lifting problem
	\begin{equation}
		\xymatrix{
			\Lambda \ar[rr]^{(\sigma \eta, f)} \ar[d]_\eta
			&& \Gamma.\Pi A B \ar[d]^\pi
			\\
			\Delta \ar[rr]_{\sigma} \ar@{.>}[rru]^{(\sigma, g)}
			&& \Gamma.
		}
	\end{equation}
	Since $\Delta.(\Pi A B)[\sigma]$ is a pullback, it is equivalent to prove that we can lift $\id : \Delta \to \Delta$ to $\Delta \to \Delta.(\Pi A B)[\sigma]$, i.e. without loss of generality we may assume that $\sigma = \id$ and need to solve the lifting problem
	\begin{equation}
		\xymatrix{
			\Lambda \ar[rr]^{(\eta, f)} \ar[d]_\eta
			&& \Delta.\Pi A B \ar[d]^\pi
			\\
			\Delta \ar[rr]_{\id} \ar@{.>}[rru]^{(\id, g)}
			&& \Delta.
		}
	\end{equation}
	In other words, we are looking for a term $\Delta \sez g : \Pi A B$ such that $g[\eta] = f$, or equivalently a term $\Delta.A \sez b : B$ such that $b[\eta \subext] = \ap~f$. This boils down to solving the lifting problem
	\begin{equation}
		\xymatrix{
			\Lambda.A[\eta] \ar[rr]^{(\eta \subext, \ap~f)} \ar[d]_{\eta \subext}
			&& \Delta.A.B \ar[d]^\pi
			\\
			\Delta.A \ar[rr]_{\id} \ar@{.>}[rru]^{(\id, b)}
			&& \Delta.A.
		}
	\end{equation}
	But $\eta \subext$ is the pullback of the horn inclusion $\eta$ under $\pi : \Delta.A \to \Delta$. Hence, it is a horn inclusion as well, so the diagonal exists by fibrancy of $B$.
\end{proof}

\subsubsection{Identity types and propositions}
\begin{lemma}
	If all horn inclusions are surjective, then $\pshfib \catW$ supports identity types.
\end{lemma}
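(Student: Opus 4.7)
The plan is to lift the identity-type structure directly from the ambient presheaf CwF $\widehat\catW$ and show that the identity types, interpreted as before, are fibrant. Since the definitions of $\refl$ and $\J$ in the standard presheaf model make no reference to the presheaf structure of $a \idtp A b$ beyond its being a subsingleton-valued presheaf (see the proof earlier in the text that every presheaf category supports identity types), the operations and their defining equations (including naturality in $\Gamma$ and the $\beta$-rule $\J(a,a,\ldots,\refl a, c) = c$) transport verbatim. The only remaining content is therefore to prove that $\Gamma \sez (a \idtp A b) \ftype$ whenever $\Gamma \sez A \ftype$ and $\Gamma \sez a, b : A$.

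The central observation is that $\pi : \Gamma.(a \idtp A b) \to \Gamma$ is a monomorphism: it exhibits $\Gamma.(a \idtp A b)$ as the subpresheaf of $\Gamma$ carved out by the equation $a = b$. So for any lifting problem
\begin{equation}
    \xymatrix{
        \Lambda \ar[r]^-{\tau} \ar[d]_{\eta}
        & \Gamma.(a \idtp A b) \ar[d]^{\pi}
        \\
        \Delta \ar[r]_-{\sigma} \ar@{.>}[ru]
        & \Gamma
    }
\end{equation}
with $\eta$ a horn inclusion, a diagonal exists (and is necessarily unique) if and only if $\sigma$ factors through $\pi$, equivalently $a[\sigma] = b[\sigma]$. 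From the existence of $\tau$ we get $a[\sigma \circ \eta] = b[\sigma \circ \eta]$.

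Here is where the surjectivity hypothesis enters. To promote this equation across $\eta$, I would pick an arbitrary $\delta : W \to \Delta$, use levelwise surjectivity of $\eta$ to find $\lambda : W \to \Lambda$ with $\eta \lambda = \delta$, and then compute
\begin{equation}
    a\dsub{\sigma \delta} = a\dsub{\sigma \eta \lambda} = b\dsub{\sigma \eta \lambda} = b\dsub{\sigma \delta},
\end{equation}
concluding $a[\sigma] = b[\sigma]$. The diagonal then exists, and by its uniqueness (monomorphicity of $\pi$) the assignment of diagonals is automatically natural in $\eta \in \cat I$, so it defines a right lifting operation and $\pi$ belongs to $R(\cat I)$.

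The main potential obstacle is purely a conceptual one rather than a technical one: the statement of $R(\cat I)$-fibrancy demands a lifting operation natural in $\eta$, not just the existence of diagonals. I expect this to evaporate because, $\pi$ being mono, the diagonal is forced to be $\sigma$ itself (viewed through $\pi$), which is manifestly functorial in the choice of horn inclusion. Apart from this, one should double-check that surjectivity of all horn inclusions is compatible with the earlier robustness assumption (pullbacks of horn inclusions remain surjective because pullbacks of surjective presheaf maps are surjective), but no further subtlety arises.
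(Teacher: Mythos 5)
Your proof is correct and matches the paper's intended argument. The paper obtains this lemma as a consequence of the stronger (but unproven) claim that propositions are fibrant; your reasoning — that $\pi : \Gamma.(a \idtp A b) \to \Gamma$ is a monomorphism, so any diagonal is unique (hence natural) and exists exactly when $a[\sigma] = b[\sigma]$, which follows from $a[\sigma\eta] = b[\sigma\eta]$ by levelwise surjectivity of $\eta$ — is precisely the proof the paper elides, and the transport of $\refl$, $\J$, their naturality and the $\beta$-rule is indeed immediate once fibrancy of the identity type is established.
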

We even have a stronger result:
\begin{lemma}
	If all horn inclusions are surjective, then propositions are fibrant. \qed
\end{lemma}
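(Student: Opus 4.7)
The approach is essentially a triviality once the right reduction is in place, so the real content is simply organizing the data. The plan is to fix a horn inclusion $\eta : \Lambda \to \Delta$, a substitution $\sigma : \Delta \to \Gamma$, and a lift $(\sigma \eta, t) : \Lambda \to \Gamma.T$ of $\sigma \eta$, and produce a diagonal $\Delta \to \Gamma.T$. As in the proof of \cref{thm:disc-prod-fib}, the first step is to pull back $T$ along $\sigma$: since $\Delta.T[\sigma]$ is a pullback, the original lifting problem reduces to finding $\Delta \sez t' : T[\sigma]$ with $t'[\eta] = t$, where $\Lambda \sez t : T[\sigma \eta] = T[\sigma][\eta]$.

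Next I would construct $t'$ pointwise. For each primitive context $W$ and each $\delta : \DSub{W}{\Delta}$, surjectivity of $\eta$ hands us some $\lambda : \DSub{W}{\Lambda}$ with $\eta \circ \lambda = \delta$. Then $t \dsub{\lambda} \in T[\sigma \eta] \dsub{\lambda} = T[\sigma] \dsub{\eta \lambda} = T[\sigma] \dsub{\delta}$, so $T[\sigma] \dsub{\delta}$ is inhabited. Since $T$ is a proposition, $T[\sigma] \dsub{\delta} \subseteq \accol{\star}$, so this fiber equals $\accol{\star}$, and we are forced to put $t' \dsub{\delta} := \star$. The choice of $\lambda$ does not matter because the fiber is a singleton.

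Third, naturality of $t'$ under restriction is automatic: whenever $\vfi : \PSub{V}{W}$, both $t' \dsub{\delta} \psub{\vfi}$ and $t' \dsub{\delta \vfi}$ lie in $T[\sigma] \dsub{\delta \vfi} = \accol{\star}$, so they agree. The compatibility $t'[\eta] = t$ is verified the same way: $t'[\eta] \dsub{\lambda} = t' \dsub{\eta \lambda} = \star$ and $t \dsub{\lambda}$ is likewise the unique element of the singleton $T[\sigma \eta] \dsub{\lambda}$.

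There is no real obstacle; the whole argument rests on the fact that a term of a proposition is uniquely determined once the underlying fibers are shown to be non-empty, and surjectivity of horn inclusions is exactly what transports non-emptiness from $\Lambda$ to $\Delta$. As an immediate consequence, one recovers the preceding lemma on identity types, since in the standard presheaf model $a \idtp{A} b$ is by construction a proposition.
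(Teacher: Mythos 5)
Your argument is correct and is exactly the proof the paper omits as trivial: surjectivity of the horn inclusion transports inhabitedness of the (sub-singleton) fibers of the proposition from $\Lambda$ to $\Delta$, so a lift exists and is forced, and since it is forced it is automatically natural in $\eta$, yielding a genuine right lifting operation. Nothing is missing.
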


\subsubsection{Glueing}
\begin{lemma}
	If all horn inclusions are surjective, the category with families $\pshfib \catW$ supports glueing.
\end{lemma}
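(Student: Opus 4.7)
The proof follows the pattern of the $\Pi$-type case (\cref{thm:disc-prod-fib}). Abbreviate $G = \Gluesys{A}{\Gluesysclauseb{P}{T}{f}}$. Given a horn inclusion $\eta : \Lambda \to \Delta$ and a lifting problem against $\pi : \Gamma.G \to \Gamma$, I first reduce to the case $\sigma = \id$ by pullback stability of fibrations, exactly as in the $\Pi$-proof; it then suffices to extend $\Lambda \sez b : G[\eta]$ to a term $\Delta \sez g : G$ with $g[\eta] = b$. The plan is to take $g := \gluesys{a}{\sysclauseb{P}{t}}$ for suitable $\Delta \sez a : A$ and $\Delta.P \sez t : T$ satisfying the compatibility $f(t) = a[\pi]$.

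First, apply fibrancy of $A$ to the horn $\unglue(b) : A[\eta]$ over $\Lambda$, obtaining a filler $\Delta \sez a : A$ with $a[\eta] = \unglue(b)$. Over $\Lambda.P[\eta]$, the proposition $P$ pulls back to a true (hence terminal) proposition, so $G$ reduces semantically to $T$ there, and the $\beta$-rule of $\unglue$ gives $f(b[\pi]) = \unglue(b)[\pi] = a[\pi][\eta\subext]$, i.e.\ the restriction of $b$ already sits over $a[\pi]$ up to equality.

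Next, to construct $t$, introduce the type $T' := \sumctu{t'}{T}(f(t') \idtp{A[\pi]} a[\pi])$ over $\Delta.P$. It is fibrant because $T$ is fibrant, because the identity type is fibrant (by the immediately preceding lemma, whose proof requires surjectivity of horn inclusions, since it proceeds via fibrancy of propositions), and because $\Sigma$-types preserve fibrancy. The pair $(b[\pi], \refl)$ provides a horn in $T'$ along the pulled-back horn inclusion $\eta\subext : \Lambda.P[\eta] \to \Delta.P$; filling it produces a term in $T'$ over $\Delta.P$ whose first projection is the desired $t$ extending $b[\pi]$, and whose second projection witnesses $f(t) = a[\pi]$ --- a proof which, for the standard presheaf identity type, is a definitional equation. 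Setting $g := \gluesys{a}{\sysclauseb{P}{t}}$ then solves the lifting problem, with $g[\eta] = b$ following from the $\eta$-rule for $\glue$/$\unglue$ after unfolding both sides.

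The main obstacle is the compatibility constraint $f(t) = a[\pi]$: it would seem to require a \emph{constrained} lifting property for $T$, which is not a priori given, but it is bundled into a single unconstrained lift in a fibrant $\Sigma$-type using fibrancy of identity types, and it is precisely at this step that the hypothesis on surjectivity of horn inclusions is invoked.
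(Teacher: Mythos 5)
Your proof is correct, but it is a genuinely different argument from the paper's. The paper observes that $\Gamma.G$ sits in a pullback square with $\Gamma.A$, $\Gamma.\Pi P T$ and $\Gamma.(P \to A)$, all of whose projections to $\Gamma$ are fibrations, and then invokes surjectivity to argue that the three lifts produced at the corners are automatically compatible (a surjective horn inclusion admits at most one lift), so the pullback inherits the lifting property. Your argument instead builds the filler by hand: lift $\unglue(b)$ in $A$, then package the compatibility constraint $f\,t = a[\pi]$ into the fibrant $\Sigma$-type $\Sigma\,T\,\bigl(f(\xi) \idtp{A[\pi][\pi]} a[\pi][\pi]\bigr)$ and fill the horn $(b[\pi],\refl)$ along the pulled-back horn inclusion $\eta\subext$. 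The crucial step — converting the term of the identity type back into a definitional equation so the premise of $\glue$ is satisfied — uses the reflection rule of the standard presheaf identity type, which holds unconditionally; the surjectivity hypothesis is used (as you note) for fibrancy of identity types, and implicitly also for fibrancy of the proposition $P$ in the paper's route. The paper's proof is slicker and uses surjectivity more structurally (uniqueness of lifts); yours is more explicit and would generalize to settings where the pullback characterization of $\Gluesys{A}{\Gluesysclauseb P T f}$ is less visible, at the cost of depending on the reflection rule of the ambient identity type. Both are sound.
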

\begin{proof}
	Suppose we have $\Gamma \sez A \ftype$, $\Gamma \sez P \prop$, $\Gamma.P \sez T \ftype$ and $\Gamma.P \sez f : T \to A[\pi]$. It suffices to show that $G = \Gluesys{A}{\Gluesysclauseb{P}{T}{f}}$ is fibrant. It is not hard to see that we have a pullback diagram
	\begin{equation}
		\xymatrix{
			\Gamma.G \ar[r] \ar[d] \pullbackcorner
			& \Gamma.A \ar[d]
			\\
			\Gamma.\Pi P T \ar[r]
			& \Gamma.(P \to A).
		}
	\end{equation}
	Now all three corners that determine the pullback, have a solution to any given lifting problem. Since surjective horn inclusions can be lifted in at most one way, these solutions are necessarily compatible. Then the pullback itself also has a solution.
\end{proof}
Unfortunately there are counterexamples for the same claim about welding.

\subsection{Fibrant replacement}
\begin{definition}
	A \textdef{fibrant replacement} of a type $\Gamma \sez T \type$ is a fibrant type $\Gamma \sez \fibrepl T \ftype$ together with a function $\Gamma \sez \infibrepl : T \to \fibrepl T$ such that every function $\Gamma \sez f : T \to S$ to a fibrant type $S$, factors as $f = g \circ \infibrepl$ in such a way that $g$ respects the lifting operation and is unique in doing so.
\end{definition}
Note that the definition implies that all fibrant replacements are isomorphic, and that every fibrant type is its own fibrant replacement.
\begin{proposition}\label{thm:fibrepl-subst}
	Assume that all horn inclusions are surjective.\footnote{This requirement is stronger than necessary.} If we have $\sigma : \Theta \to \Gamma$ and $\Gamma \sez T \type$ has a fibrant replacement $\Gamma \sez \fibrepl T \ftype$, then $(\fibrepl T)[\sigma]$ is a fibrant replacement of $T[\sigma]$.
\end{proposition}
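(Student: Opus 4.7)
The proof verifies the three defining properties of a fibrant replacement for the pair $((\fibrepl T)[\sigma],\, \infibrepl[\sigma])$.

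First, $(\fibrepl T)[\sigma]$ is fibrant because fibrations are closed under substitution in the CwF $\pshfib{\catW}$, a fact established earlier in this section by pulling a lifting problem for $(\fibrepl T)[\sigma]$ back to one for $\fibrepl T$ along the substitution square. Second, $\infibrepl$ is a term of the function type $T \to \fibrepl T$ in context $\Gamma$, so standard substitution yields $\infibrepl[\sigma] : T[\sigma] \to (\fibrepl T)[\sigma]$ in context $\Theta$, naturally in $\Theta$.

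The heart of the argument is the universal property. Given a fibrant type $\Theta \sez S' \ftype$ and a function $\Theta \sez f : T[\sigma] \to S'$, the plan is to transpose $f$ along the adjunction $\sigma^* \dashv \Pi_\sigma$ (substitution and its right adjoint, which exists in any presheaf category) to obtain a function $\hat f : T \to \Pi_\sigma S'$ in context $\Gamma$. Assuming $\Pi_\sigma S'$ is fibrant, the universal property of $\fibrepl T$ then produces a unique $\hat g : \fibrepl T \to \Pi_\sigma S'$ respecting the lifting operation with $\hat g \circ \infibrepl = \hat f$; transposing $\hat g$ back yields the desired $g : (\fibrepl T)[\sigma] \to S'$. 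The equation $g \circ \infibrepl[\sigma] = f$ follows from naturality of the adjunction, and uniqueness of $g$ from uniqueness of $\hat g$ and the bijectivity of the transposition.

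The main obstacle will be the claim that $\Pi_\sigma$ preserves fibrancy: given any horn inclusion $\eta : \Lambda \to \Delta$ over $\Gamma$ and a commuting square whose right side is $\pi : \Gamma.\Pi_\sigma S' \to \Gamma$, one must produce a diagonal. Transposing along $\sigma^* \dashv \Pi_\sigma$ converts this into a lifting problem against $\pi : \Theta.S' \to \Theta$, where the horn inclusion has been base-changed along a map built from $\sigma$; the crux is that horn inclusions form a robust class (hence closed under the relevant pullbacks), so the pulled-back horn inclusion is still a horn inclusion and the lift exists by fibrancy of $S'$. The surjectivity assumption, noted in the footnote as stronger than necessary, plays the auxiliary role of forcing uniqueness of lifts, which is what guarantees that the chosen fillers assemble into a genuine natural morphism $g$ (and that $g$ respects the lifting operation) rather than merely a pointwise-compatible family. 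This argument specializes to the proof of \cref{thm:disc-prod-fib} in the case $\sigma = \pi$.
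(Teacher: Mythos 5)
Your proof is correct and, at its core, proceeds by the same idea as the paper's: transpose the problem along $\sigma^* \dashv \Pi_\sigma$, apply the universal property of $\fibrepl T$ to land in the fibrant dependent product, and transpose back. The one notable difference is organizational. The paper first reduces an arbitrary $\sigma : \Theta \to \Gamma$ to a weakening $\pi : \Gamma.A \to \Gamma$ by defining a (possibly non-fibrant) type $A$ with $A \dsub \gamma = \sigma\inv(\gamma)$, so that $\Theta \cong \Gamma.A$. With $\sigma$ a weakening, $\Pi_\sigma$ is literally the type former $\Pi(\var x : A).(-)$, the transposition is ordinary currying, and the fibrancy of the dependent product is exactly \cref{thm:disc-prod-fib}, which is already proven (note that that lemma crucially does \emph{not} require $A$ to be fibrant, which is why this reduction is admissible). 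You instead work with the abstract $\Pi_\sigma$ for general $\sigma$ and must reprove, in a slightly more general form, that $\Pi_\sigma$ preserves fibrancy. Your sketch of that step is on the right track (robustness gives closure of horn inclusions under the relevant pullbacks, and a right adjoint whose left adjoint preserves horn inclusions preserves fibrancy, cf.\ \cref{thm:preserve-fibrancy}), but to make the transposition of the lifting problem go through cleanly one also needs the Beck--Chevalley condition for $\Pi_\sigma$ against base change along $\Delta \to \Gamma$; the paper's reduction sidesteps this bookkeeping. Your reading of the role of surjectivity is slightly off in one clause: $g$ is natural by construction (it is a transposition of a genuine map $\hat g$), so surjectivity is not needed to ``assemble'' it; its only job here, as in the paper, is to force uniqueness of lifting operations so that $g$ automatically ``respects the lifting operation'' in the sense required by the definition of fibrant replacement, sparing you from checking that condition by hand.
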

\begin{proof}
	We define a type $\Gamma \sez A \type$ by setting $A \dsub \gamma = \sigma\inv(\gamma)$. Then $\Gamma.A \cong \Theta$, with $\pi : \Gamma.A \to \Gamma$ corresponding to $\sigma : \Theta \to \Gamma$. In other words, without loss of generality we assume that $\sigma$ is a weakening substitution $\pi : \Gamma.A \to \Gamma$ for a variable of a potentially non-fibrant type $A$.
	
	Now pick a fibrant type $\Gamma, \var x : A \sez S \ftype$ and a function $\Gamma, \var x : A \sez f : T[\wknvar x] \to S$. We show that $f$ factors uniquely over $\infibrepl[\wknvar x] : T[\wknvar x] \to (\fibrepl T)[\wknvar x]$. We do not have to show that the resulting function $g : (\fibrepl T)[\wknvar \pi]$ respects the lifting operation, as there is always at most one lifting operation. Clearly, swapping some arguments, we have $\Gamma \sez \lambda \var t . \lambda \var x . f[\wknvar t \subext]~(\var t [\wknvar x]) : T \to \Pi(\var x : A).S$. Moreover, $\Pi(\var x : A).S$ is fibrant, so we have
	\begin{equation}
		\lambda \var t . \lambda \var x . f[\wknvar t \subext]~(\var t [\wknvar x]) = h \circ \infibrepl
	\end{equation}
	for some unique $\Gamma \sez h : \fibrepl T \to \Pi(\var x : A).S$.
	
	Then we get $\Gamma, \var x : A \sez g := \lambda \var r . h[\wknvar r \wknvar x]~\var r~(\var x[\wknvar r]) : (\fibrepl T)[\wknvar x] \to S$. We now claim that
	\begin{equation}
		\paren{\lambda \var r . h[\wknvar r \wknvar x]~\var r~(\var x[\wknvar r])} \circ \paren{\infibrepl[\wknvar x]} = f.
	\end{equation}
	This is easily checked syntactically. Moreover, $g$ is the only such function, since $h$ is unique and $g$ was obtained from $h$ by an invertible manipulation.
	\begin{equation}
		\xymatrix{
			\Gamma.A.T[\pi] \ar[rr]^{\infibrepl[\pi]} \ar@/^{2.5em}/[rrrr]^f \ar[dd]_{\pi \subext} \ar[rd]^\pi
			&& \Gamma.A.(\fibrepl T)[\pi] \ar@{.>}[rr] \ar[dd]_{\pi \subext} \ar@{->>}[ld]_{\pi}
			&& \Gamma.A.S \ar@{->>}[llld]_\pi
			\\
			& \Gamma.A \ar[dd]_(.7){\pi}
			\\
			\Gamma.T \ar[rr]^(.6){\infibrepl} \ar@/^{2.5em}/[rrrr]^(.8){\lambda \var t . \lambda \var x . f[\wknvar t \subext]~(\var t [\wknvar x])} \ar[rd]^\pi
			&& \Gamma.\fibrepl T \ar@{.>}[rr]^h \ar@{->>}[ld]_{\pi}
			&& \Gamma.\Pi A S \ar@{->>}[llld]_\pi
			\\
			& \Gamma
		}
	\end{equation}
\end{proof}
\begin{proposition}
	If all horn inclusions are surjective, then every type $\Gamma \sez T \type$ has a fibrant replacement $\Gamma \sez \fibrepl T \ftype$, naturally in $\Gamma$ (i.e. the operation $\fibrepl$ commutes with substitution on the nose).
\end{proposition}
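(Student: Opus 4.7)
The plan is to construct $\fibrepl T$ via a transfinite small object argument, adapted so that the construction commutes with substitution strictly. Given $\Gamma \sez T \type$, I would first define a one-step extension functor $(-)^+$ in the slice category $\widehat{\catW}/\Gamma$. Let $J_T$ be the (small) set of triples $(\eta, h, \sigma)$ where $\eta : \Lambda \to \Delta$ is an object of $\cat I$, $\sigma : \Delta \to \Gamma$ is a substitution, and $h : \Lambda \to \Gamma.T$ is a horn with $\pi \circ h = \sigma \circ \eta$. Form the pushout in $\widehat{\catW}/\Gamma$:
\begin{equation*}
\xymatrix{
\bigsqcup_{(\eta,h,\sigma) \in J_T} \Lambda \ar[r] \ar[d] \pushoutcorner
& \Gamma.T \ar[d]
\\
\bigsqcup_{(\eta,h,\sigma) \in J_T} \Delta \ar[r]
& \Gamma.T^+
}
\end{equation*}
with the inherited projection $\Gamma.T^+ \to \Gamma$; then $T^+$ is the corresponding dependent presheaf, and there is a canonical inclusion $T \to T^+$.

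Next I would iterate transfinitely: set $T^{(0)} = T$, $T^{(\alpha+1)} = (T^{(\alpha)})^+$, and $T^{(\lambda)} = \colim_{\alpha < \lambda} T^{(\alpha)}$ at limit ordinals; then put $\fibrepl T := T^{(\kappa)}$ for a regular cardinal $\kappa$ larger than the presentation size of every horn domain $\Lambda$. Since each $\Lambda$ is $\kappa$-small, any horn diagram into $\Gamma.\fibrepl T$ factors through some $\Gamma.T^{(\alpha)}$ for $\alpha < \kappa$, at which point the filler added in passing to $T^{(\alpha+1)}$ supplies the required lift — so $\fibrepl T$ is fibrant, and $\infibrepl : T \to \fibrepl T$ is the composite of the canonical inclusions. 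For the universal property, given $f : T \to S$ with $S$ fibrant, I would extend $f$ stage by stage: at successor stages the fibration structure on $S$ supplies a lift for each added horn, and uniqueness of this lift is forced by surjectivity of $\eta$ (horn inclusions being epimorphisms); at limit stages the extension is determined by the universal property of the colimit. The same surjectivity argument at every stage yields that the whole extension $\fibrepl T \to S$ is unique.

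For naturality, the key observation is that every ingredient of the construction is stable under pullback along $\sigma : \Delta \to \Gamma$. Because $\cat I$ has pullbacks, there is a canonical bijection between $J_{T[\sigma]}$ and the subset of $J_T$ consisting of problems whose base factors through $\sigma$. Since pullback along $\sigma$ (i.e. substitution on comprehensions) is a left adjoint and therefore preserves all colimits in $\widehat{\catW}/\Gamma$, the pushout defining $(-)^+$ commutes with substitution on the nose, giving $(T[\sigma])^+ = (T^+)[\sigma]$. Iterating and using the same preservation at limit ordinals yields $\fibrepl(T[\sigma]) = (\fibrepl T)[\sigma]$. The main obstacle will be the bookkeeping needed to verify that the pushouts and colimits formed at the level of total presheaves do yield well-defined types $T^{(\alpha)}$ with the expected fibers, and that $\kappa$ can be chosen uniformly in $\Gamma$ so that the on-the-nose equality survives iteration; a secondary subtlety is that the uniqueness claim in the universal property relies essentially on the surjectivity hypothesis, since without it a pushout along $\eta$ would admit several compatible extensions.
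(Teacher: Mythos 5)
Your transfinite small object argument is genuinely different from the paper's construction, and it has a real gap at precisely the point you mark as ``the main obstacle'' but appear to treat as mere bookkeeping. The critical claim---that the one-step pushout commutes with substitution on the nose because there is a ``canonical bijection between $J_{T[\sigma]}$ and the subset of $J_T$ consisting of problems whose base factors through $\sigma$''---is false. Pullback along a substitution $\tau : \Theta \to \Gamma$ does preserve the colimit, but it changes what the diagram looks like: the summand $\Delta$ attached to a triple $(\eta, h, \varrho) \in J_T$ pulls back to the fibre product of $\Delta$ and $\Theta$ over $\Gamma$, and this fibre product is generally nonempty even when $\varrho$ does not factor through $\tau$ (it suffices for the images of $\varrho$ and $\tau$ to overlap), and when $\varrho$ does factor through $\tau$ the fibre product may be strictly larger than any single $\Delta'$ indexed by $J_{T[\tau]}$. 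So $(T^+)[\tau]$ is the pushout of a diagram still indexed by $J_T$, with distorted summands, whereas $(T[\tau])^+$ is a pushout indexed by $J_{T[\tau]}$; the two diagrams, and hence the two presheaves, do not agree. This is exactly the obstruction that the paper's own remark on fibrant replacement in HoTT models warns about: the free-generation side of the small object argument is sensitive to the enumeration of lifting problems in a way that defeats strict stability under substitution. (Secondarily, the paper allows $\cat I$ to be an arbitrary category, so your assumption that $J_T$ is small is not granted in general.)

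The paper's proof is shorter and sidesteps the whole issue by defining $\fibrepl T$ extensionally, as the quotient $T/E$ by the \emph{least} equivalence relation making $T$ fibrant; this characterization does not depend on any enumeration of lifting problems. Strict naturality is then not argued at the level of the construction at all. Instead, \cref{thm:fibrepl-subst} shows directly that $(\fibrepl T)[\sigma]$ already satisfies the universal property of a fibrant replacement of $T[\sigma]$, the key trick being to transpose a map $f : T[\sigma] \to S$ to a map $T \to \Pi A S$ where $\Gamma.A \cong \Theta$ and to invoke \cref{thm:disc-prod-fib}, which says $\Pi A S$ is fibrant whenever $S$ is, even for non-fibrant $A$. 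This yields $(\fibrepl T)[\sigma] \cong \fibrepl(T[\sigma])$, and the on-the-nose equality then follows because both sides are quotients of $T[\sigma]$ and the unique isomorphism that respects the maps out of $T[\sigma]$ must therefore preserve representatives, forcing the two equivalence relations to coincide. Your sketch contains neither the $\Pi$-type transposition nor the representative-preservation step, and without them the claim of equality on the nose is unsupported.
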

\begin{proof}
	As in \cref{sec:tyshp}, we define $\fibrepl T$ by quotienting out the least equivalence relation that makes $T$ fibrant. From \cref{thm:fibrepl-subst}, we know that $(\fibrepl T)[\sigma] \cong \fibrepl(T[\sigma])$. Equality follows because both are quotients of $T[\sigma]$ and the isomorphism preserves representants.
\end{proof}

\subsection{Fibrant co-replacement}
\begin{definition}
	A \textdef{fibrant co-replacement} of a type $\Gamma \sez T \type$ is a fibrant type $\Gamma \sez \fibcorepl T \ftype$ together with a function $\Gamma \sez \outfibcorepl : \fibcorepl T \to T$ such that every function $\Gamma \sez f : S \to T$ from a fibrant type $S$, factors as $f = \outfibcorepl \circ g$ in such a way that $g$ respects the lifting operation and is unique in doing so.
\end{definition}
\begin{corollary}
	Let $\Gamma \sez A, B \type$. Then functions $A \to \fibcorepl B$ are in 1-to-1-correspondence with functions $\fibrepl A \to B$. If both constructions exist in general, then we can say $\fibrepl \dashv \fibcorepl$. \qed
\end{corollary}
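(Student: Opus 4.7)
The plan is to construct the bijection by routing both hom-sets through the common intermediate set $(\fibrepl A \to \fibcorepl B)$. Since $\fibcorepl B$ is fibrant by construction, the universal property of fibrant replacement applied to maps $A \to \fibcorepl B$ yields a bijection sending $f : A \to \fibcorepl B$ to its unique factorization $\tilde f : \fibrepl A \to \fibcorepl B$ with $\tilde f \circ \infibrepl = f$, with inverse $\tilde f \mapsto \tilde f \circ \infibrepl$. Dually, since $\fibrepl A$ is fibrant, the universal property of fibrant co-replacement applied to maps $\fibrepl A \to B$ yields a bijection sending $g : \fibrepl A \to B$ to its unique factorization $\tilde g : \fibrepl A \to \fibcorepl B$ with $\outfibcorepl \circ \tilde g = g$, with inverse $\tilde g \mapsto \outfibcorepl \circ \tilde g$. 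Composing these two bijections gives the required correspondence: explicitly, $f : A \to \fibcorepl B$ is sent to $\outfibcorepl \circ \tilde f : \fibrepl A \to B$, and $g : \fibrepl A \to B$ is sent to $\tilde g \circ \infibrepl : A \to \fibcorepl B$.

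For the adjointness claim, assuming both $\fibrepl$ and $\fibcorepl$ are defined on all types, I would first note that each becomes a functor via its own universal property (given $u : A \to A'$, the composite $\infibrepl \circ u : A \to \fibrepl A'$ factors uniquely through $\infibrepl : A \to \fibrepl A$, and analogously for $\fibcorepl$). Naturality of the bijection in $A$ and $B$ then follows directly from the uniqueness clauses: given $u : A' \to A$ and $v : B \to B'$, both legs of the evident naturality square produce morphisms $\fibrepl A' \to \fibcorepl B'$ whose postcomposition with $\outfibcorepl$ and precomposition with $\infibrepl$ agree with $v \circ (\outfibcorepl \circ \tilde f) \circ \fibrepl u$, so uniqueness of the lifts forces them to coincide.

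I do not anticipate any real obstacle; the argument is purely formal and mirrors the standard two-sided story for a reflective subcategory sitting inside a coreflective one (here, the subcategory of fibrant types inside all types). The only mildly delicate point is bookkeeping: one must remember that $\fibcorepl B$ is automatically fibrant (so that fibrant replacement applies trivially there) and $\fibrepl A$ is automatically fibrant (so that fibrant co-replacement applies trivially there), which is exactly what makes the two universal properties compose cleanly.
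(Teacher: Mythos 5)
Your proposal is correct and is essentially what the paper has in mind: the corollary is marked \qed, treated as immediate from the two universal properties, and your routing through $(\fibrepl A \to \fibcorepl B)$ is exactly that argument. The one point worth making explicit is that the intermediate set is the set of maps $\fibrepl A \to \fibcorepl B$ \emph{that respect the lifting operation}; both universal properties produce bijections onto that same set (not onto all maps between those fibrant types), and that coincidence is what lets the two factorizations compose into a bijection between arbitrary maps $A \to \fibcorepl B$ and arbitrary maps $\fibrepl A \to B$.
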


\subsection{Fibrancy and functors}
Assume we have a functor $F : \cat C \to \cat D$, where both $\cat C$ and $\cat D$ are equipped with a notion of horn inclusions.
\begin{proposition}\label{thm:preserve-fibrancy}
	If $F$ has a left adjoint $L$ that preserves horn inclusions, then $F$ preserves fibrancy.
\end{proposition}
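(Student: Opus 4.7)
The plan is to use the adjunction $L \dashv F$ to transpose every lifting problem against $F\varphi$ in $\cat D$ to a lifting problem against $\varphi$ in $\cat C$, where we can solve it using the assumed fibrancy, and then transpose the solution back.

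First I would unfold what needs to be checked: given a fibration $\varphi : x \to y$ in $\cat C$ with its right lifting operation $r_{\cat C}$, I must exhibit a right lifting operation $r_{\cat D}$ for $F\varphi : Fx \to Fy$ against every horn inclusion $\eta \in \cat I_{\cat D}$ (with $i(\eta) : a \to b$), and check naturality of $r_{\cat D}$ in $\eta$. So given a square
\begin{equation}
\xymatrix{
a \ar[r]^s \ar[d]_{i(\eta)} & Fx \ar[d]^{F\varphi} \\
b \ar[r]_t & Fy
}
\end{equation}
I apply the adjunction $L \dashv F$ to transpose it into a square
\begin{equation}
\xymatrix{
La \ar[r]^{\alpha^{-1}(s)} \ar[d]_{L\, i(\eta)} & x \ar[d]^{\varphi} \\
Lb \ar[r]_{\alpha^{-1}(t)} & y
}
\end{equation}
By hypothesis, $L\,i(\eta)$ is (up to the isofibration structure of $\cat I_{\cat C} \to \Arr(\cat C)$) in the image of $\cat I_{\cat C}$, so there is some $L\eta \in \cat I_{\cat C}$ with $i(L\eta) = L\,i(\eta)$. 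Then $r_{\cat C}(L\eta)$ produces a diagonal $d : Lb \to x$; I set $r_{\cat D}(\eta) := \alpha(d) : b \to Fx$, which by naturality of $\alpha$ with respect to both the top and bottom edges indeed fills the original square in $\cat D$.

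Next I would verify naturality of $r_{\cat D}$ in $\eta$. Given a morphism $\eta' \to \eta$ in $\cat I_{\cat D}$, functoriality of $L$ produces a morphism $L\eta' \to L\eta$ in $\cat I_{\cat C}$ (using again that $\cat I_{\cat C} \to \Arr(\cat C)$ is an isofibration to lift the image choice coherently), and naturality of $r_{\cat C}$ gives compatibility of the diagonals in $\cat C$; transposing back under $\alpha$, and using that $\alpha$ is a natural isomorphism in the hom-functor, transfers this compatibility to $\cat D$.

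The main obstacle is bookkeeping around the fact that a ``class of morphisms'' is an isofibration $\cat I \to \Arr$ rather than a literal subcategory, so ``$L$ preserves horn inclusions'' has to be unpacked as: there is a functor $\tilde L : \cat I_{\cat D} \to \cat I_{\cat C}$ over the evident functor $\Arr(\cat C) \to \Arr(\cat D)$ induced by $L$, commuting with the projections to the arrow categories up to coherent isomorphism. Once this is in hand, naturality of $r_{\cat D}$ factors through $\tilde L$ and $r_{\cat C}$, and the proof is routine. In essence, the content of the proposition is precisely that right adjoints preserve right-lifting structures against any class of morphisms that the left adjoint sends into the corresponding left-hand class.
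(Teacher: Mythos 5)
Your proposal is correct and takes essentially the same approach as the paper: transpose the lifting problem through the adjunction $L \dashv F$, solve it in $\cat C$ using the assumed fibrancy, and transpose the solution back. The paper states this in three lines without unpacking the isofibration bookkeeping or the naturality of the lifting operation, both of which you address carefully but which are routine.
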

\begin{proof}
	Pick a fibrant map $\tau : \Gamma' \to \Gamma$. We show that $F \tau$ is fibrant. Pick a lifting problem
	\begin{equation}
		\xymatrix{
			\Lambda \ar[r] \ar[d]_\eta
			& F\Gamma' \ar[d]^{F\tau}
			\\
			\Delta \ar[r]
			& F\Gamma.
		}
	\end{equation}
	This corresponds to a lifting problem
	\begin{equation}
		\xymatrix{
			L\Lambda \ar[r] \ar[d]_{L\eta}
			& \Gamma' \ar[d]^{\tau}
			\\
			L\Delta \ar[r]
			& \Gamma,
		}
	\end{equation}
	which has a solution that carries over to the original problem.
\end{proof}

\begin{proposition}\label{thm:partial-fib-repl}
	For fibrant types $T$ in the appropriate contexts, we have invertible rules:
	\begin{equation}
		\binference{
			\Gamma.\fibrepl S \sez t : T
		}{
			\Gamma.S \sez t[\subext \infibrepl] : T[\subext \infibrepl]
		}{}
		\qquad
		\binference{
			\fibrepl \Gamma \sez t : T
		}{
			\Gamma \sez t[\infibrepl] : T[\infibrepl]
		}{}
		\qquad
		\binference{
			(\fibrepl \Gamma).S \sez t : T
		}{
			\Gamma.S[\infibrepl] \sez t[\infibrepl \subext] : T[\infibrepl \subext]
		}{}
	\end{equation}
	and moreover, $\fibrepl \Gamma.\fibrepl T \cong \fibrepl(\Gamma.T[\infibrepl])$.
\end{proposition}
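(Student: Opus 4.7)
Each downward direction is immediate from substitution, so the work lies in establishing invertibility, which in each case amounts to a universal property of fibrant replacement. The central case is Rule~2. Given a term $\Gamma \sez u : T[\infibrepl]$ with $\fibrepl \Gamma \sez T \ftype$, the pair $(\infibrepl, u/\var x)$ assembles into a substitution $\Gamma \to \fibrepl \Gamma . T$. The codomain is a fibrant context --- the extension of a fibrant context by a fibrant type yields a fibration, since $\fibrepl \Gamma.T \to \fibrepl \Gamma \to ()$ is a composition of fibrations. The context-level universal property of fibrant replacement then factors $(\infibrepl, u/\var x)$ through $\infibrepl : \Gamma \to \fibrepl \Gamma$ as some $g : \fibrepl \Gamma \to \fibrepl \Gamma.T$. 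Computing $\pi \circ g \circ \infibrepl = \infibrepl$ and invoking the uniqueness clause (with target $\fibrepl \Gamma$ itself) forces $\pi \circ g = \id$, whence $g = (\id, t/\var x)$ gives the required $\fibrepl \Gamma \sez t : T$. Uniqueness of $t$ follows from uniqueness of $g$.

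Rules~1 and~3 are the same argument carried out relatively, in the slice $\widehat{\catW}/\Gamma$. For Rule~1, a term $\Gamma.S \sez u : T[\subext \infibrepl]$ with $\Gamma.\fibrepl S \sez T \ftype$ corresponds, via the pullback $\Gamma.S.T[\subext \infibrepl] \cong \Gamma.S \times_{\Gamma.\fibrepl S} \Gamma.\fibrepl S.T$, to a map $\Gamma.S \to \Gamma.\fibrepl S.T$ over $\Gamma.\fibrepl S$; its target is a fibration over $\Gamma.\fibrepl S$, and the source map $\subext \infibrepl$ serves as a fibrant replacement in that slice. The universal property then produces the desired section, from which one reads off $\fibrepl \Gamma.\fibrepl S \sez t : T$ with $t[\subext \infibrepl] = u$. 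Rule~3 is handled symmetrically, working over $\Gamma$ but now factoring the $\Gamma$-component rather than the type component.

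For the isomorphism $\fibrepl \Gamma.\fibrepl T \cong \fibrepl(\Gamma.T[\infibrepl])$, we show that the left-hand side satisfies the defining universal property of the right-hand side. It is fibrant (extension of fibrant by fibrant) and carries a canonical comparison map $(\infibrepl \circ \pi, \infibrepl(\xi)/\var x) : \Gamma.T[\infibrepl] \to \fibrepl \Gamma.\fibrepl T$; given any map from $\Gamma.T[\infibrepl]$ to a fibrant context $\Delta$, we factor it in two stages, first using Rule~2 to lift the $\Gamma$-component across $\infibrepl : \Gamma \to \fibrepl \Gamma$, then using Rule~1 applied along $\infibrepl : T \to \fibrepl T$ to lift the $T$-component, with compatibility with the lifting operation inherited from each stage.

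The main obstacle will be rigorously establishing the relative universal property used in Rules~1 and~3 --- that $\subext \infibrepl : \Gamma.S \to \Gamma.\fibrepl S$ functions as a fibrant replacement in $\widehat{\catW}/\Gamma$ for targets that are fibrations over $\Gamma$. This combines the absolute universal property with \cref{thm:fibrepl-subst} on stability of fibrant replacement under substitution, and requires matching the ``respects the lifting operation'' condition across the slice boundary. A secondary subtlety is the careful invocation of the uniqueness clause at each factorization step in the proof of the isomorphism, to ensure the two stages compose into a unique overall factorization.
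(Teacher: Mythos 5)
Your proof of Rule~2 is correct and is essentially what the paper does (the paper derives it as a special case of Rule~1, but your direct argument via the absolute universal property is equivalent). Your Rule~1 argument in the slice over $\Gamma$ is also sound, and once you notice that the target $\Gamma.\fibrepl S.T$ is the context extension $\Gamma.\Sigma(\fibrepl S)T$, it unwinds to exactly the paper's route via the fibrant $\Sigma$-type.

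The real gap is Rule~3, which you dismiss in one clause as ``handled symmetrically, working over $\Gamma$ but now factoring the $\Gamma$-component.'' That framing does not work: the map $\infibrepl\subext : \Gamma.S[\infibrepl] \to (\fibrepl\Gamma).S$ is a pullback of $\infibrepl : \Gamma \to \fibrepl\Gamma$ along a fibration, and neither its source nor its target lives over $\Gamma$ in a way that lets you invoke the universal property of $\fibrepl S$ in $\widehat\catW/\Gamma$. Trying to run the argument in the slice over $\fibrepl\Gamma$ instead turns it into a lifting problem against a fibration, which is not what the universal property of fibrant replacement gives you; and trying to recognise $\infibrepl\subext$ as itself a fibrant replacement map presupposes (an instance of) the final isomorphism, producing a circularity. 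The missing idea is the one the paper uses: pass through the fibrant type $\Pi S T$, so that a term $(\fibrepl\Gamma).S \sez t : T$ becomes a term $\fibrepl\Gamma \sez \lambda t : \Pi S T$, Rule~2 applies, and currying back gives the inverse. This $\Pi$-step is what transports the universal property across a pullback of $\infibrepl$ along a fibration; it has no purely slice-theoretic analogue of the kind you gesture at. You correctly identify the ``relative universal property'' as the main obstacle, but you attribute the difficulty to Rule~1, where it is in fact immediate from the definition; it is Rule~3 where it bites.

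For the closing isomorphism, your universal-property argument is a clean alternative to the paper's (the paper constructs both comparison maps explicitly and appeals to the concrete quotient presentation to see that they are inverses; showing $\fibrepl\Gamma.\fibrepl T$ has the defining property of $\fibrepl(\Gamma.T[\infibrepl])$ is more conceptual). But note that factoring across the composite $\Gamma.T[\infibrepl] \to \fibrepl\Gamma.T \to \fibrepl\Gamma.\fibrepl T$ requires Rule~3 for the first leg (not Rule~2 as you write), so that argument cannot be used to sidestep the Rule~3 gap above.
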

\begin{proof}
	For the first rule, note that terms $\Gamma.\fibrepl S \sez t : T$ correspond to functions $\Gamma \sez f : \fibrepl S \to \Sigma(\fibrepl S)T$ whose first projection is the identity. Since $\Sigma(\fibrepl S)T$ is fibrant, these correspond to functions $\Gamma \sez f' : S \to \Sigma(\fibrepl S)T$ whose first projection is $\infibrepl$, which in turn correspond to terms $\Gamma.S \sez t' : T[\infibrepl]$.
	
	The second rule is proven analogously or can be seen as a special case of the first rule, as contexts are essentially closed types.
	
	The third rule is proven by going via the fibrant type $\Pi S T$.
	
	We certainly have a map $\Gamma.T[\infibrepl] \to \fibrepl \Gamma.\fibrepl T$, which factors over $\infibrepl : \Gamma.T[\infibrepl] \to \fibrepl(\Gamma.T[\infibrepl])$ as the codomain is fibrant. The converse map is constructed in the same spirit as the rules proven above. They are inverses as both maps act on quotients and preserve representants.
\end{proof}

\section{Discreteness}
\subsection{Definition and characterization}
\begin{definition}
	In the CwF $\widehat{\dcubecat n}$, we say that a defining substitution $\gamma : \DSub{(W, \var i : \Idim k)}{\Gamma}$ or a defining term $(W, \var i : \Idim k) \Dsez t : T \dsub \gamma$ is \textdef{degenerate in $\var i$} if it factors over $(\facewkn{\var i}) : \PSub{(W, \var i : \Idim k)}{W}$.
\end{definition}
Thinking of $\var i$ as a variable, this means that $\gamma$ and $t$ do not refer to $\var i$. Thinking of $\var i$ as a dimension, this means that $\gamma$ and $t$ are flat in dimension $\var i$. Note that $t$ can only be degenerate in $\var i$ if $\gamma$ is.
\begin{corollary}
	For a defining substitution $\gamma : \DSub{(W, \var i : \Idim k)}{\Gamma}$ or a defining term $(W, \var i : \Idim k) \Dsez t : T \dsub \gamma$, the following are equivalent:
	\begin{enumerate}
		\item $\gamma$/$t$ is degenerate in $\var i$,
		\item $\gamma = \gamma \circ (0/\var i, \facewkn{\var i})$; $t = t \psub{0/\var i, \facewkn{\var i}}$,
		\item $\gamma = \gamma \circ (1/\var i, \facewkn{\var i})$; $t = t \psub{1/\var i, \facewkn{\var i}}$. \qed
	\end{enumerate}
\end{corollary}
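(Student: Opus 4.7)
The plan is to reduce both equivalences to a single retraction identity in the base category, namely that $(\facewkn{\var i}) \circ (k/\var i) = \id_W$ for $k \in \{0,1\}$, where $(k/\var i) : W \to (W, \var i : \Idim k)$ is the $k$-vertex extractor and $(\facewkn{\var i}) : (W, \var i : \Idim k) \to W$ is the weakening. This identity is immediate from the definition of face-map composition: both face maps act as the identity on every variable $\var j \in W$, and the composite has domain and codomain $W$. Given this, I would write $(k/\var i, \facewkn{\var i}) = (k/\var i) \circ (\facewkn{\var i})$ as a face map $(W, \var i : \Idim k) \to (W, \var i : \Idim k)$ and establish the three-way equivalence as a consequence.

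For the direction $(1) \Rightarrow (2)$, assuming $\gamma = \gamma' \circ (\facewkn{\var i})$ for some $\gamma' : \DSub{W}{\Gamma}$, associativity of composition together with $(\facewkn{\var i}) \circ (0/\var i) = \id_W$ gives $\gamma \circ (0/\var i, \facewkn{\var i}) = \gamma' \circ (\facewkn{\var i})= \gamma$, and symmetrically for $(1) \Rightarrow (3)$. For $(2) \Rightarrow (1)$, I would take $\gamma' := \gamma \circ (0/\var i) \in \DSub{W}{\Gamma}$ as the witness of degeneracy, since the hypothesis then reads $\gamma = \gamma' \circ (\facewkn{\var i})$; the implication $(3) \Rightarrow (1)$ is analogous with $\gamma' := \gamma \circ (1/\var i)$.

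The term case goes through by exactly the same argument, since $\loch \psub{\vfi}$ is functorial in $\vfi$ (preserving composition and identity), so the retraction identity lifts from defining substitutions to defining terms without modification. There is no real obstacle here; the result is a formal consequence of the retract structure between $W$ and $(W, \var i : \Idim k)$, made visible by factoring the composite face map.
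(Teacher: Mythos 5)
Your proof is correct and is essentially the argument the paper tacitly takes for granted — the paper gives no proof at all for this corollary, marking it with $\qed$, since it regards the retraction/section decomposition of the constant degeneracy through $(\facewkn{\var i})$ as immediate from the composition rules for face maps. You have simply spelled out the two identities $(\facewkn{\var i}) \circ (k/\var i) = \id_W$ and $(k/\var i, \facewkn{\var i}) = (k/\var i) \circ (\facewkn{\var i})$ and used functoriality of restriction, which is precisely what is needed; the term case follows the same way once one notes that, for $t \psub{0/\var i, \facewkn{\var i}}$ to live in the same fibre as $t$, the underlying $\gamma$ must already be degenerate (as the paper's preceding remark states).
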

\begin{definition}
	In the CwF $\widehat{\dcubecat n}$, where $n \geq 0$:
	\begin{itemize}
		\item We call a context \textbf{discrete} if all of its cubes are degenerate in every $\Idim 0$-dimension (also called path dimension).
	
		\item We call a map $\rho : \Gamma' \to \Gamma$ \textbf{discrete} if every defining substitution $\gamma$ of $\Gamma'$ is degenerate in every $\Idim 0$-dimension in which $\rho \circ \gamma$ is degenerate.
	
		\item We call a type $\Gamma \sez T \type$ \textbf{discrete} (denoted $\Gamma \sez T \dtype$) if every defining term $t : T \dsub \gamma$ is degenerate in every $\Idim 0$-dimension in which $\gamma$ is degenerate.
	\end{itemize}
	In the CwF $\widehat{\dcubecat{-1}}$, the presheaf structure does not provide a path relation.\footnote{Remember that the only primitive context is $()$, so that cubical sets of depth $-1$ are essentially sets.} We use the `true' relation instead of the path relation, leading to the following definitions:
	\begin{itemize}
		\item We call a context $\Gamma$ \textbf{discrete} if all defining substitutions $\DSub {()} \Gamma$ are equal.
		\item We call a map $\rho : \Gamma' \to \Gamma$ \textbf{discrete} if it is injective.
		\item We call a type $\Gamma \sez T \type$ \textbf{discrete} (denoted $\Gamma \sez T \dtype$) if, for every $\gamma : \DSub{()}{\Gamma}$, all defining terms $() \Dsez t : T \dsub \gamma$ are equal.
	\end{itemize}
	A more specific but unpractically long name for this definition would be `homogeneous 0-discreteness', as only the homogeneous 0-bridge relation is required to be discrete.
\end{definition}
\begin{proposition}
	A type $\Gamma \sez T \type$ is discrete if and only if $\pi : \Gamma.T \to \Gamma$ is discrete.
\end{proposition}
\begin{proof}
	We first treat the case where $n \geq 0$.
	\begin{itemize}
		\item[$\Rightarrow$] Assume that $T$ is discrete. Pick $(\gamma, t) : \DSub{(W, \var i : \Idim 0)}{\Gamma.T}$ such that $\pi \circ (\gamma, t) = \gamma$ is degenerate in $\var i$. Then $t$ is degenerate in $\var i$ by discreteness of $T$ and so is $(\gamma, t)$.
		\item[$\Leftarrow$] Assume that $\pi$ is discrete. Pick $t : T \dsub \gamma$ where $\gamma$ is degenerate in $\var i$. Then $(\gamma, t)$ is degenerate in $\var i$ since $\pi(\gamma, t) = \gamma$, and hence $t$ is degenerate in $\var i$.
	\end{itemize}
	Now let $n = -1$.
	\begin{itemize}
		\item[$\Rightarrow$] Assume that $T$ is discrete. Pick $(\gamma, t), (\gamma', t') : \DSub{()}{\Gamma.T}$ such that $\pi \circ (\gamma, t) = \pi \circ (\gamma', t')$, i.e. $\gamma = \gamma'$. Then $t = t'$ by discreteness of $T$, and hence $(\gamma, t) = (\gamma', t')$ so that $\pi$ is injective.
		\item[$\Leftarrow$] This is equally obvious. \qedhere
	\end{itemize}
\end{proof}
\begin{proposition}
	A context $\Gamma$ is discrete if and only if $\Gamma \to ()$ is discrete.
\end{proposition}
\begin{proof}
	If $n \geq 0$, it suffices to note that every defining substitution of $()$ is degenerate in every dimension. If $n = -1$, the claim is trivial.
\end{proof}

\begin{proposition}\label{thm:discreteness-yoneda-horns}
	For $n \geq 0$, a map $\rho : \Gamma' \to \Gamma$ is discrete if and only if it has the lifting property with respect to all maps $(\facewkn{\var i}) : \yoneda(W, \var i : \Idim 0) \to \yoneda W$.
	
	For $n = -1$, a map $\rho : \Gamma' \to \Gamma$ is discrete if and only if it has the lifting property with respect to the maps $(\id, \id) : \yoneda W \uplus \yoneda W \to \yoneda W$.\footnote{Recall that $W$ is necessarily $()$, such that $\yoneda W$ is the terminal presheaf; hence we might as well write $\Bool \to ()$.}
\end{proposition}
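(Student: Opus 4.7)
The plan is to treat the two cases $n \geq 0$ and $n = -1$ separately, since both the form of the horn inclusions and the very notion of discreteness differ between them.

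For $n \geq 0$, I would follow the template of the corresponding result proven earlier for $\widehat{\bpcubecat}$. The key preparatory step is to translate presheaf maps into defining substitutions via Yoneda: a commutative square
\begin{equation*}
	\xymatrix{
		\yoneda(W, \var i : \Idim 0) \ar[r]^-{\gamma'} \ar[d]_{(\facewkn{\var i})}
			& \Gamma' \ar[d]^\rho
		\\
		\yoneda W \ar[r]_\sigma
			& \Gamma
	}
\end{equation*}
corresponds to a pair $\gamma' \in \DSub{(W, \var i : \Idim 0)}{\Gamma'}$, $\sigma \in \DSub{W}{\Gamma}$ satisfying $\rho \circ \gamma' = \sigma \circ (\facewkn{\var i})$, and a diagonal corresponds precisely to a factorization of $\gamma'$ through $(\facewkn{\var i})$, that is, a witness that $\gamma'$ is degenerate in $\var i$. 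The right-hand composite $\sigma \circ (\facewkn{\var i})$ is manifestly degenerate in $\var i$, so the degeneracy of $\rho \circ \gamma'$ comes for free. For the forward direction I would feed this observation into the definition of discreteness of $\rho$, which then transports the degeneracy from $\rho \gamma'$ to $\gamma'$ itself. For the reverse direction, any $\gamma'$ whose postcomposition with $\rho$ is degenerate in $\var i$ fits into a square of the above shape, and the assumed lifting property supplies the factorization.

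For $n = -1$, the only primitive cube is $()$, so the indexing class collapses to the single codiagonal $\yoneda() \uplus \yoneda() \to \yoneda()$. I would unpack a lifting problem against it as a pair $\gamma_1', \gamma_2' \in \DSub{()}{\Gamma'}$ together with $\sigma \in \DSub{()}{\Gamma}$ such that $\rho \gamma_1' = \sigma = \rho \gamma_2'$, and a diagonal as a single $\gamma' \in \DSub{()}{\Gamma'}$ equal to both $\gamma_1'$ and $\gamma_2'$. Thus the lifting property is precisely injectivity of postcomposition by $\rho$ on $\DSub{()}{\loch}$, which is by definition discreteness in depth $-1$.

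The whole argument is essentially a dictionary lookup through Yoneda, so I do not anticipate any substantial obstacle. The only point requiring care is to keep the two definitions of discreteness (degeneracy-based for $n \geq 0$, injectivity-based for $n = -1$) cleanly separated and to match each one with its own class of horn inclusions, rather than trying to unify the two cases into a single argument.
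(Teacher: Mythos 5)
Your proposal is correct and follows essentially the same argument as the paper: for $n \geq 0$ it transcribes squares into defining substitutions via Yoneda and uses the definition of a discrete map directly in both directions, and for $n = -1$ it unfolds the lifting problem against the codiagonal into the injectivity of $\rho$ on $\DSub{()}{\loch}$. The paper's treatment of the $n = -1$ case is just more terse, but the content matches.
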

\begin{proof}
	We first treat the case where $n \geq 0$.
	\begin{itemize}
		\item[$\Rightarrow$] Suppose that $\rho$ is discrete and consider a square
		\begin{equation}
			\xymatrix{
				\yoneda(W, \var i : \Idim 0) \ar[r]^{\gamma'} \ar[d]_{(\facewkn{\var i})}
				& \Gamma' \ar[d]^{\rho}
				\\
				\yoneda W \ar[r]_{\gamma}
				&
				\Gamma.
			}
		\end{equation}
		Then the defining substitution $\rho \circ \gamma' : \DSub{(W, \var i : \Idim 0)}{\Gamma}$ clearly factors over $(\facewkn{\var i})$ so that it is degenerate in $\var i$. By degeneracy of $\rho$, the same holds for $\gamma'$, yielding the required diagonal.
		
		\item[$\Leftarrow$] Suppose that $\rho$ has the lifting property and take $\gamma' : \DSub{(W, \var i : \Idim 0)}{\Gamma'}$ such that $\rho \circ \gamma'$ is degenerate in $\var i$. This gives us a square as above, which has a diagonal, showing that $\gamma'$ is degenerate.
	\end{itemize}
	For the case where $n = -1$, it suffices to note that our presheaves are essentially sets and hence being injective is equivalent to lifting $\Bool \to ()$.
\end{proof}
\begin{definition}\label{def:dcubecat-horn-inclusion}
	Let $\cat I$ be the category whose objects are morphisms $\eta : \Lambda \to \Delta$ where $\Lambda \cong \Delta \times \yoneda(\var i : \Idim 0)$ and $\eta$ corresponds to $\pi_1$ under this isomorphism. The morphisms in $\cat I$ from $\eta' : \Lambda' \to \Delta'$ to $\eta : \Lambda \to \Delta$ are just morphisms $\sigma : \Delta \to \Delta'$, from which we can build a square by composing $\sigma \times \yoneda(\var i : \Idim 0)$ with the isomorphisms. For $n = -1$, we use $\Bool$ instead of $\yoneda(\var i : \Idim 0)$.
\end{definition}
\begin{corollary}\label{thm:dcubecat-lifting-property}
	A map $\rho : \Gamma' \to \Gamma$ is discrete if and only if it has the right lifting property with respect to all horn inclusions.
\end{corollary}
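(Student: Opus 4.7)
The plan is to reduce this corollary to Proposition \ref{thm:discreteness-yoneda-horns} by showing that the right lifting property against the horn inclusions of Definition \ref{def:dcubecat-horn-inclusion} is equivalent to the right lifting property against the smaller family of ``Yoneda horn inclusions'' $(\facewkn{\var i}) : \yoneda(W, \var i : \Idim 0) \to \yoneda W$ featuring in that proposition.

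For the easy direction, I would observe that since the Yoneda embedding preserves products and $(W, \var i : \Idim 0)$ is the product of $W$ and $(\var i : \Idim 0)$ in $\dcubecat n$, the isomorphism $\yoneda W \times \yoneda(\var i : \Idim 0) \cong \yoneda(W, \var i : \Idim 0)$ identifies the Yoneda image of $(\facewkn{\var i})$ with the first projection $\pi_1$. Hence each such map is a horn inclusion in the sense of Definition \ref{def:dcubecat-horn-inclusion} (taking $\Delta = \yoneda W$), so the RLP against all horn inclusions immediately implies the hypothesis of Proposition \ref{thm:discreteness-yoneda-horns}, giving discreteness. The case $n = -1$ is analogous via $\Bool \cong \yoneda(()) \sqcup \yoneda(())$.

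For the converse (case $n \geq 0$), assume $\rho$ is discrete and consider a lifting problem against a horn inclusion $\eta : \Delta \times \yoneda(\var i : \Idim 0) \to \Delta$, with upper face $f$ and lower face $g$. I would define the diagonal $d : \Delta \to \Gamma'$ by $d(\delta) := f(\delta, 0)$ for each $\delta : \DSub W \Delta$; naturality in $W$ follows from naturality of $f$, and the lower triangle $\rho \circ d = g$ is immediate from commutativity of the original square. The non-trivial part is the upper triangle $d \circ \eta = f$, which amounts to the identity $f(\delta, j) = f(\delta, 0)$ for every $j \in \PSub{W}{(\var i : \Idim 0)}$. To establish this, pick a fresh path variable $\var i \notin W$ and consider the defining substitution $f(\delta \circ (\facewkn{\var i}), \var i) \in \DSub{(W, \var i : \Idim 0)}{\Gamma'}$. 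Its image under $\rho$ equals $g \circ \delta \circ (\facewkn{\var i})$, which is degenerate in $\var i$, so discreteness of $\rho$ forces $f(\delta \circ (\facewkn{\var i}), \var i)$ to itself be degenerate in $\var i$. Restricting by any face map $(c/\var i) : \PSub{W}{(W, \var i : \Idim 0)}$ with $c \in \{0, 1\}$ or $c = \var k$ a path variable in $W$ must therefore yield the same element of $\DSub{W}{\Gamma'}$; by naturality of $f$ that element equals $f(\delta, c)$. Choosing $c = j$ and $c = 0$ concludes the case analysis.

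The case $n = -1$ reduces to injectivity: a horn inclusion there is of the form $\Delta \sqcup \Delta \to \Delta$ with both inclusions equal to the identity, and the upper face amounts to a pair $f_0, f_1 : \Delta \to \Gamma'$ with $\rho \circ f_0 = \rho \circ f_1 = g$; since $\rho$ is injective (the definition of discreteness at depth $-1$), $f_0 = f_1$ is the required diagonal. The main obstacle is not conceptual but bookkeeping: one must handle the compositions $(\facewkn{\var i}) \circ (c/\var i) = \id_W$ and the accompanying naturality squares carefully, but no new mathematical ingredient beyond those already used in Proposition \ref{thm:discreteness-yoneda-horns} is needed.
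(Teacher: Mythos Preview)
Your argument is correct, and in fact slightly more direct than the paper's. One small wording issue: when you restrict by $(c/\var i) : \PSub{W}{(W,\var i:\Idim 0)}$, you say $c$ is either $0$, $1$, or a \emph{path} variable of $W$. But since $\var i$ has level $0$, the face-map condition $m \geq 0$ allows $c$ to be \emph{any} variable of $W$, of arbitrary level; and indeed you need this, because the $j$ you must handle is an arbitrary element of $\PSub{W}{(\var i:\Idim 0)}$, not necessarily a path variable. Once you drop that restriction, your degeneracy-plus-restriction argument goes through verbatim for every $j$.

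For comparison: the paper proves the same implication by introducing \emph{two} fresh path variables $\var j,\var j'$, reducing the general pair $(\theta,\vfi),(\theta,\vfi')$ to the case $\vfi = (\var j/\var i)$, $\vfi' = (\var j'/\var i)$ with $\theta$ degenerate in both, and then concluding via a swap $(\var j \leftrightarrow \var j')$ after applying discreteness in each of the two new dimensions. Your single-variable approach---fix one fresh $\var i$, observe that $f(\delta(\facewkn{\var i}),\var i)$ is degenerate, then specialize $\var i$ to $0$ and to $j$---achieves the same result with less bookkeeping, at the cost of privileging the endpoint $0$ as a reference point. Both arguments ultimately exploit the same fact (degeneracy along a single fresh path dimension), so the difference is stylistic rather than substantive.
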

\begin{proof}
	The `if' part is trivial as the maps from \cref{thm:discreteness-yoneda-horns} are horn inclusions. To prove the `only if' part, pick a diagram
	\begin{equation}
		\xymatrix{
			\Theta \times \Upsilon \ar[r]^\sigma \ar[d]_{\pi_1}
			& \Gamma \ar[d]^\rho
			\\
			\Theta \ar[r]_{\sigma'}
			& \Gamma'
		}
	\end{equation}
	where $\pi_1$ is an arbitrary horn inclusion (note that any horn inclusion is, up to isomorphism, of this form) and $\rho$ is discrete. We need to show that $\sigma$ factors over $\pi_1$. When working at depth $-1$, then $\rho$ is injective, so $\sigma$ must factor over $\pi_1$ as $\rho \sigma$ does.
	
	Assume that we are working at depth $n \geq 0$. Then $\Upsilon = \yoneda(\var i : \Idim 0)$. Pick defining substitutions $(\theta, \vfi), (\theta, \vfi') : \DSub{W}{\Theta \times \yoneda(\var i : \Idim 0)}$ with equal first projections. We show that $\sigma(\theta, \vfi) = \sigma(\theta, \vfi')$. Without loss of generality, we may assume that $W = (V, \var j : \Idim 0, \var j ' : \Idim 0)$ and $\vfi = (\var j / \var i)$ and $\vfi' = (\var j'/\var i)$ and $\theta$ is degenerate in $\var j$ and $\var j'$. Indeed, precomposition with a single further primitive substitution gives us an arbitrary defining substitution.
	
	Then $\rho \sigma (\theta, \vfi) = \sigma' \theta$ is also degenerate in $\var j$ and $\var j'$, and so the same holds for $\sigma (\theta, \vfi) = \gamma (\var j / \novar, \var j'/\novar)$. But then we have:
	\begin{equation}
		\sigma (\theta, \vfi') = \sigma (\theta, \vfi) (\var j'/\var j, \var j/\var j') = \gamma (\var j / \novar, \var j'/\novar) (\var j'/\var j, \var j/\var j') = \gamma (\var j / \novar, \var j'/\novar) = \sigma (\theta, \vfi). \qedhere
	\end{equation}
\end{proof}

\subsection{A model with only discrete types}
\begin{proposition}\label{thm:disc-is-robust}
	Discreteness is robust. \qed
\end{proposition}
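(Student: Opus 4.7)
The plan is to combine the characterization of discrete maps via lifting (\cref{thm:dcubecat-lifting-property}) with the observation that the category of horn inclusions defined in \cref{def:dcubecat-horn-inclusion} has pullbacks. Concretely, \cref{thm:dcubecat-lifting-property} already tells us that the class of discrete maps is exactly $R(\cat I)$, where $\cat I$ is the category from \cref{def:dcubecat-horn-inclusion}. So the only remaining task is to exhibit pullbacks in $\cat I$.

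Recall that an object of $\cat I$ is, up to isomorphism, a projection $\pi_1 : \Delta \times \Upsilon \to \Delta$, where $\Upsilon$ is the fixed presheaf $\yoneda(\var i : \Idim 0)$ (if $n \geq 0$) or $\Bool$ (if $n = -1$). Given such an $\eta : \Lambda \to \Delta$ and a substitution $\vfi : \Delta' \to \Delta$ in $\widehat{\dcubecat n}$, I would take the evident candidate $\eta[\vfi] := \pi_1 : \Delta' \times \Upsilon \to \Delta'$, together with the morphism $\vfi \times \id_\Upsilon : \Delta' \times \Upsilon \to \Delta \times \Upsilon$ making the square
\begin{equation}
	\xymatrix{
		\Delta' \times \Upsilon \ar[r]^{\vfi \times \id} \ar[d]_{\pi_1} \pullbackcorner
		& \Delta \times \Upsilon \ar[d]^{\pi_1}
		\\
		\Delta' \ar[r]_{\vfi}
		& \Delta
	}
\end{equation}
commute. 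This square is a pullback in $\widehat{\dcubecat n}$ because products commute with pullbacks in any cartesian category, and it is clearly natural in $\Delta'$ since the construction is just the functor $\loch \times \Upsilon$ applied to $\vfi$.

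Hence $\cat I$ has pullbacks in the sense of the definition, and the class of discrete maps, being $R(\cat I)$, is robust. I do not anticipate any obstacle: the only subtlety is that morphisms in $\cat I$ are, by \cref{def:dcubecat-horn-inclusion}, simply substitutions $\sigma : \Delta \to \Delta'$ on the base (the square being recovered by taking $\sigma \times \Upsilon$ composed with the chosen isomorphisms), so naturality of the pullback construction in $\Delta'$ is automatic and no further coherence needs to be checked.
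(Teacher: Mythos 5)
Your proof is correct and fills in exactly the argument the paper leaves implicit: \cref{thm:dcubecat-lifting-property} identifies the discrete maps with $R(\cat I)$ (using, tacitly, that the horn inclusions are epimorphisms so that lifts are unique and the lifting \emph{property} automatically yields a natural lifting \emph{operation}), and pullbacks in $\cat I$ are supplied by the standard fact that $\pi_1 : \Delta' \times \Upsilon \to \Delta'$ over $\vfi : \Delta' \to \Delta$ via $\vfi \times \id_\Upsilon$ is a pullback of $\pi_1 : \Delta \times \Upsilon \to \Delta$, with functoriality in $\Delta'$ giving the required naturality. This is the intended route; no genuinely different decomposition is involved.
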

Hence, the results from \cref{sec:fibrancy} apply, including those that require horn inclusions to be surjective. Moreover, we can support welding:
\begin{lemma}
	The category with families $\ddisc n$ supports welding.
\end{lemma}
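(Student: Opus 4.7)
The plan is to adapt the Part I proof that $\bpdisc$ supports welding to the general depth $n$ setting. Because we explicitly noted that welding is \emph{not} preserved by the robust-fibrancy machinery of \cref{sec:fibrancy}, I cannot piggyback on an analogue of \cref{thm:disc-is-robust} the way the preceding lemma on glueing does. The argument must be done by hand, but thanks to the uniform definition of discreteness across depths, the Part I proof generalizes essentially verbatim.

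First I would unfold the premises: $\Gamma \sez A \dtype$, $\Gamma \sez P \prop$, $\Gamma.P \sez T \dtype$ and $\Gamma.P \sez f : A[\pi] \to T$. The CwF $\widehat{\dcubecat n}$ already supports welding by the chapter on the standard presheaf model, and the type and term formers, and the $\beta$-rule, are inherited on the nose. Hence it suffices to show that $\Omega = \Weldsys{A}{\Weldsysclauseb{P}{T}{f}}$ is discrete.

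For $n \geq 0$, I would pick a defining term $(W, \var i : \Idim 0) \Dsez w : \Omega \dsub \gamma$ with $\gamma$ degenerate in $\var i$, and show $w = w \psub{0/\var i, \facewkn{\var i}}$ by a case distinction on $P \dsub \gamma$, mirroring the Part I proof. If $P \dsub \gamma = \accol \star$, the construction of $\Omega$ in the standard presheaf model gives $\Omega \dsub \gamma = T \dsub{\gamma, \star/\var p}$, so $w$ is a defining term of the discrete type $T$ above a defining substitution that is still degenerate in $\var i$, and the desired equality follows from discreteness of $T$. If $P \dsub \gamma = \eset$, then $w$ has the form $\dweld\,a$, its restriction along $(0/\var i, \facewkn{\var i})$ unfolds to $\dweld(a \psub{0/\var i, \facewkn{\var i}})$, and the equality follows from discreteness of $A$.

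For $n = -1$, discreteness means the map $\pi : \Gamma.\Omega \to \Gamma$ is injective; equivalently, for each $\gamma : \DSub{()}{\Gamma}$ any two defining terms $() \Dsez w, w' : \Omega \dsub \gamma$ are equal. The same case distinction on $P \dsub \gamma$ immediately reduces this to discreteness of $T$ (in the true case) or of $A$ (in the false case, using that $\dweld$ is injective on representatives). The only bookkeeping subtlety is replacing the bridge/path specific $(0/\var i, \facewkn{\var i})$ manipulations of Part I by their depth-$n$ analogues on $\Idim 0$-dimensions; I expect no real obstacle beyond this.
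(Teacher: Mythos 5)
Your proof is correct and takes essentially the same approach as the paper: show $\Omega = \Weldsys{A}{\Weldsysclauseb{P}{T}{f}}$ is discrete by case distinction on $P \dsub \gamma$, reducing to discreteness of $T$ (true case) or $A$ (false case), treating $n \geq 0$ and $n = -1$ separately. The paper's proof is a more terse version of exactly this argument.
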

\begin{proof}
	Suppose we have $\Gamma \sez A \dtype$, $\Gamma \sez P \prop$, $\Gamma.P \sez T \dtype$ and $\Gamma.P \sez f : A[\pi] \to T$. It suffices to show that $\Omega = \Weldsys{A}{\Weldsysclauseb P T f}$ is discrete.
	
	First consider the case where $n \geq 0$. Pick $(W, \var i : \Idim 0) \Dsez w : \Omega \dsub \gamma$ where $\gamma$ is degenerate along $\var i$.
	\begin{itemize}
		\item If $P \dsub \gamma = \accol \star$, then $w \psub{0/\var i, \facewkn{\var i}}^\Omega = w \psub{0/\var i, \facewkn{\var i}}^{T[\id, \star]} = w$ by discreteness of $T$.
	
		\item If $P \dsub \gamma = \eset$, then $w \psub{0/\var i, \facewkn{\var i}}^\Omega = w \psub{0/\var i, \facewkn{\var i}}^{A} = w$ by discreteness of $A$.
	\end{itemize}
	
	Now consider the case where $n = -1$. Pick $W \Dsez w, w' : \Omega \dsub \gamma$.
	\begin{itemize}
		\item If $P \dsub \gamma = \accol \star$, then $\Omega \dsub \gamma = T \dsub{\gamma, \star}$ and $w = w'$ by discreteness of $T$.
	
		\item If $P \dsub \gamma = \eset$, then $\Omega \dsub \gamma = A \dsub \gamma$ and $w = w'$ by discreteness of $A$. \qedhere
	\end{itemize}
\end{proof}

\subsection{The functor $\cohpi_0$}\label{sec:reldtt-cohpi}
In \cref{sec:reshuffling-functors-on-cubical-sets-formally}, we defined central and right reshuffling functors on cubical sets, but not left ones. In other words, if $F : m \to n$ is a reshuffle with no left adjoint (because it redefines the equality relation), then we did not define a corresponding functor on cubical sets. This can be done in general if $F$ has two right adjoints, but we only need the special case where $F = \cohpi_0^n = \reshlist{0}{1, \ldots, n} : n \to n - 1$. This functor always has two right adjoints: $\cohpi_0^n \dashv \cohdisc_0^n \dashv \cohfget_0^n$.

In this section, we show that $\cohdisc_0^n \cohfget_0^n$ (which we could call $\flat_0^n$) is the discrete co-replacement functor on contexts. Hence, it is right adjoint to the discrete replacement functor $\quotshp$. Then $\cohfget_0^n \quotshp \dashv \cohdisc_0^n \cohfget_0^n \cohdisc_0^n \cong \cohdisc_0^n$, so we can define $\cohpi_0^n := \cohfget_0^n \quotshp : \widehat{\dcubecat{n}} \to \widehat{\dcubecat{n-1}}$ (where $n \geq 0$).

\begin{proposition}
	Let $n \geq 0$. Then the functor $\flat_0^n := \cohdisc_0^n \cohfget_0^n : \widehat{\dcubecat n} \to \widehat{\dcubecat n}$ is the discrete coreplacement functor.
\end{proposition}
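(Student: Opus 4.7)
The argument generalizes \cref{thm:discrete-contexts-and-cohesion} from \cref{part:paramdtt} from $n=1$ to arbitrary $n \geq 0$. The plan is to show that $\flat_0^n \Gamma$ is always discrete, that the reshuffle cast $\kappa : \flat_0^n \to \Id$ is an injective natural transformation, and that $\kappa_\Gamma$ is an isomorphism whenever $\Gamma$ is discrete. Combined with naturality of $\kappa$, these facts give the universal property of discrete coreplacement: for any $\sigma : \Delta \to \Gamma$ with $\Delta$ discrete, the map $\tilde{\sigma} := \flat_0^n \sigma \circ \kappa_\Delta^{-1} : \Delta \to \flat_0^n \Gamma$ satisfies $\kappa_\Gamma \circ \tilde{\sigma} = \sigma$ by naturality, and injectivity of $\kappa_\Gamma$ forces this factorization to be unique.

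Using the reshuffle composition rule from \cref{def:reshufflecat}, I first compute $\flat_0^n = \reshlist{\eqty}{\eqty, 1, 2, \ldots, n}$, which is $\leq \id$; this inequality is exactly the reshuffle cast $\kappa : \flat_0^n \to \Id$. Via \cref{def:reshuffle-on-cubes}, the underlying cube functor $\shp_0^n := \cohdisc_0^n \circ \cohpi_0^n : \dcubecat n \to \dcubecat n$ (for which $\flat_0^n \Gamma = \Gamma \circ \shp_0^n$) acts on a cube $W$ by deleting every $\Idim 0$-dimension and leaving all other dimensions untouched, and the corresponding reshuffle cast $\varsigma_W : W \to \shp_0^n W$ is the face map that substitutes $\novar$ for each deleted $\Idim 0$-variable. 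A defining substitution $\gamma' \in \DSub{W}{\flat_0^n \Gamma} = \DSub{\shp_0^n W}{\Gamma}$ is sent by $\kappa_\Gamma$ to $\gamma' \circ \varsigma_W \in \DSub{W}{\Gamma}$.

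Injectivity of $\kappa_\Gamma$ follows as in \cref{thm:kappa-injective}: the map $\varsigma_W$ admits a right inverse $\shp_0^n W \to W$ that embeds each retained variable into $W$ and assigns the value $0$ to each deleted $\Idim 0$-variable, so precomposition with $\varsigma_W$ is injective. Next, $\flat_0^n \Gamma$ is always discrete: for any $V$, we have $\shp_0^n(V, \var i : \Idim 0) = \shp_0^n V$, so a defining substitution in $\DSub{(V, \var i : \Idim 0)}{\flat_0^n \Gamma}$ is literally a $\shp_0^n V$-cube of $\Gamma$ that lies in the image of $\loch \circ (\facewkn{\var i})$, witnessing degeneracy in $\var i$. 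Conversely, if $\Gamma$ is discrete, then every $\gamma : \DSub W \Gamma$ is degenerate in each $\Idim 0$-variable of $W$ and therefore factors (uniquely, by the right-inverse argument above) as $\gamma = \gamma' \circ \varsigma_W$ for some $\gamma' \in \DSub{\shp_0^n W}{\Gamma}$, so $\kappa_\Gamma$ is surjective and hence an isomorphism.

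The main obstacle is keeping the reshuffle bookkeeping straight: computing the cube-level actions of $\cohdisc_0^n$, $\cohfget_0^n$ and $\cohpi_0^n$ from their reshuffle presentations via \cref{def:reshuffle-on-cubes} (where $\cohpi_0^n$ is not central and must be handled through its defining construction $\cohfget_0^n \quotshp$), and verifying the claimed descriptions of the composites $\shp_0^n$ and $\flat_0^n$. Once these combinatorial facts are established, the rest is a direct transcription of the $n=1$ arguments from \cref{part:paramdtt}.
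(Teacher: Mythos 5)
Your proof is correct and takes essentially the same approach as the paper: both generalize \cref{thm:discrete-contexts-and-cohesion} to show the reshuffle cast $\iota : \flat_0^n \Theta \to \Theta$ is an isomorphism when $\Theta$ is discrete, and both obtain existence of the factorization from naturality of $\iota$. The one small variation is your uniqueness step: you generalize the levelwise-injectivity argument of \cref{thm:kappa-injective} to conclude $\kappa_\Gamma$ is monic directly, whereas the paper applies $\flat_0^n$ to the equation $\iota \tau = \iota \tau'$ and exploits idempotence of $\flat_0^n$ (so that $\flat_0^n\iota = \id$); these are interchangeable tricks for the same fact. Your parenthetical about $\cohpi_0^n$ needing the presheaf-level construction $\cohfget_0^n\quotshp$ is a bit off the mark — at the \emph{base-category} level, where you actually work, $\cohpi_0^n$ and hence $\shp_0^n = \cohdisc_0^n\cohpi_0^n : \dcubecat n \to \dcubecat n$ exist directly by \cref{def:reshuffle-on-cubes} since $\cohpi_0^n$ has two right adjoints — but this does not affect the substance, as you correctly compute $\shp_0^n$ and use only $\flat_0^n = \fpsh{\shp_0^n}$.
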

\begin{proof}
	Pick a presheaf map $\sigma: \Theta \to \Gamma$, where $\Theta$ is discrete. We have to show that $\sigma$ factors uniquely over the reshuffle cast $\iota : \flat_0^n \Gamma \to \Gamma$.
	
	With a similar approach as in \cref{thm:discrete-contexts-and-cohesion}, we can show that $\iota : \flat_0^n \Theta \to \Theta$ is an isomorphism. Then by naturality of $\iota : \flat_0^n \to \Id$, we have $\sigma = \iota \circ (\flat \sigma \circ \iota\inv)$ which shows existence of the factorization.
	
	To show uniqueness, pick some $\tau, \tau' : \Theta \to \flat_0^n \Gamma$ such that $\iota \circ \tau = \iota \circ \tau' = \sigma : \Theta \to \Gamma$. Applying $\flat_0^n$, we find that $\flat_0^n \tau = \flat_0^n \tau' : \flat_0^n \Theta \to \flat_0^n \Gamma$. Postcomposing with $\iota : \flat_0^n \Gamma \to \Gamma$ and using naturality, we get $\tau \circ \iota = tau' \circ \iota : \flat_0^n \Theta \to \Gamma$. Precomposition with $\iota\inv : \Theta \cong \flat_0^n \Theta$ yields the desired result.
\end{proof}
\begin{definition}
	The functor $\cohpi_0^n : \widehat{\dcubecat{n}} \to \widehat{\dcubecat{n-1}}$ (where $n \geq 0$) is defined as $\cohpi_0^n := \cohfget_0^n \quotshp$, where $\quotshp$ is the discrete replacement functor on contexts.
\end{definition}
Then we know that $\cohpi_0^n = \cohfget_0^n \quotshp \dashv \flat_0^n \cohcodisc_0^n = \cohdisc_0^n \cohfget_0^n \cohcodisc_0^n = \cohdisc_0^n$ as intended.

\subsection{Discreteness and reshuffling functors}
\begin{proposition}
	A right reshuffling functor $H : \widehat{\dcubecat m} \to \widehat{\dcubecat n}$ preserves discreteness if and only if $m = -1$ or [$n \geq 0$ and $\rlookup 0 H = 0$].
\end{proposition}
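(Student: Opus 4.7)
The plan is to reduce preservation of discreteness to a combinatorial condition on reshuffles via the adjunction structure. A right reshuffling functor $H \colon \widehat{\dcubecat m} \to \widehat{\dcubecat n}$ has the form $H = \rpsh K$ for a chain $K \dashv L \dashv H$ in $\reshufflecat$, and its left adjoint on cubical sets is $\fpsh K$. By \cref{thm:preserve-fibrancy} it suffices, for the forward direction, that $\fpsh K$ carry every horn inclusion to a horn inclusion. Since $\fpsh K$ sits in a chain of adjoints on both sides, it preserves products, so I need only analyze its action on the single generator $\yoneda(\var i : \Idim 0^{\dcubecat n})$. Identifying $\fpsh K \cong \lpsh L$ (both being left adjoint to $\fpsh L \cong H$) and applying \cref{thm:yoneda-and-left-adjoint}, I get $\fpsh K \yoneda \cong \yoneda \circ L$; then the cube formula of \cref{def:reshuffle-on-cubes} using $L$'s right adjoint $H$, together with $\rlookup 0 H \neq (\eqty)$ (which follows from \cref{thm:count-adjoints} since $H$ has a left adjoint), yields $L(\var i : \Idim 0^{\dcubecat n}) = (\var i : \Idim{\rlookup 0 H}^{\dcubecat m})$. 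For $m, n \geq 0$ this is a generating horn at depth $m$ precisely when $\rlookup 0 H = 0$, which gives the forward direction in that case.

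For the converse with $m, n \geq 0$ and $\rlookup 0 H \geq 1$, I would exhibit a concrete counterexample. The presheaf $\Gamma = \yoneda(\var j : \Idim{\rlookup 0 H}^{\dcubecat m})$ is discrete: any face map $\delta \colon (W, \var k : \Idim 0^{\dcubecat m}) \to (\var j : \Idim{\rlookup 0 H})$ must send $\var j$ to a constant or to a variable of flavour at least $\rlookup 0 H \geq 1$, so it cannot use the path variable $\var k$ and is therefore degenerate in $\var k$. On the other hand, combining the description of $\rpsh K$ from \cref{thm:rpsh-cwf-morphism} with the Yoneda computation above gives the bijection $\DSub{(\var i : \Idim 0^{\dcubecat n})}{H\Gamma} \cong \DSub{(\var j : \Idim{\rlookup 0 H}^{\dcubecat m})}{\Gamma}$, and the identity element on the right transports to a non-degenerate $0$-bridge in $H\Gamma$, witnessing that $H$ fails to preserve discreteness.

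The boundary depths must be treated separately, since the cube-functor formula would otherwise invoke flavours $\Idim\top$ that do not exist. For $m = -1$, direct inspection shows that $\fpsh K \yoneda W'$ is the depth-$(-1)$ presheaf of vertices of $W'$, so $\DSub{W'}{H\Gamma}$ is the set of functions from vertices of $W'$ into $\Gamma(())$; discreteness of $\Gamma$ at depth $-1$ means $|\Gamma(())| \leq 1$, making every such function constant and hence degenerate in every dimension. For $m \geq 0$ and $n = -1$, \cref{thm:count-adjoints} forces $H = \reshlist{\eqty}{}$, and a parallel computation gives $H\Gamma(()) = \Gamma(())$, so taking $\Gamma = \yoneda() \sqcup \yoneda()$ produces a discrete depth-$m$ presheaf whose image has two elements. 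The main obstacle is the bookkeeping: keeping straight the roles of $K$, $L$ and $H$ in the adjoint chain and correctly invoking the cube formula (which is phrased in terms of the right adjoint of the reshuffle); once $L(\var i : \Idim 0^{\dcubecat n})$ is pinned down, both implications reduce to short verifications.
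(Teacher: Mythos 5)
Your approach mirrors the paper's exactly: sufficiency via \cref{thm:preserve-fibrancy} by showing the left adjoint $\fpsh K$ carries horn inclusions to horn inclusions (and since $\fpsh K$ preserves products, one only needs the generator $\yoneda(\var i : \Idim 0)$), and necessity via concrete counterexamples. However, there is a genuine gap on the necessity side when $\rlookup 0 H = \top$. Since $H$ is monotone, $\rlookup 0 H = \top$ forces $\rlookup k H = \top$ for all $k \geq 0$, i.e.\ $H$ is the irrelevance reshuffle $\reshlist{\eqty}{\top, \ldots, \top}$; this can occur for any $m \geq 0$, not only at the boundary depths you set aside. In that case there is no flavour $\Idim\top$, so your counterexample $\yoneda(\var j : \Idim{\rlookup 0 H})$ does not exist, and the asserted bijection $\DSub{(\var i : \Idim 0)}{H\Gamma} \cong \DSub{(\var j : \Idim{\rlookup 0 H})}{\Gamma}$ breaks down because $\fpsh K \yoneda(\var i : \Idim 0)$ is not representable: here the cube functor $K : \dcubecat m \to \dcubecat n$ collapses every cube to $()$, so $\fpsh K \yoneda(\var i : \Idim 0)$ is the constant two-element presheaf $\yoneda() \uplus \yoneda()$. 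The paper closes the case $\rlookup 0 H \geq 1$ (including $\top$) with $\Gamma = \yoneda(\var i : \Idim 1)$ when $m \geq 1$, and the leftover case $m = 0$, $\rlookup 0 H = \top$ with $\Gamma = \yoneda() \uplus \yoneda()$, verifying directly that $H\Gamma$ acquires a non-degenerate path.

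A related but more benign wobble: the step \emph{identifying $\fpsh K \cong \lpsh L$, both being left adjoint to $\fpsh L \cong H$}, and the subsequent use of \cref{thm:yoneda-and-left-adjoint}, require $L$ to be a reshuffling functor on cubes, which by \cref{def:reshuffle-on-cubes} requires $L$ to have two right adjoints in $\reshufflecat$, i.e.\ $H$ to have a right adjoint, i.e.\ $\rlookup n H < \top$. Whenever some $\rlookup k H = \top$ (which $\rlookup 0 H = 0$ does \emph{not} preclude), $\fpsh L$ and $\lpsh L$ are simply undefined. The conclusion $\fpsh K \yoneda(\var i : \Idim 0) \cong \yoneda(\var i : \Idim{\rlookup 0 H})$ should instead be established directly from $\DSub V {\fpsh K \Gamma} = \DSub{KV}{\Gamma}$ and the cube-level formula for $K$ (this is what the paper's ``tedious manipulations'' refer to); the isomorphism remains true when $\rlookup 0 H = 0$, so your sufficiency direction is salvageable, but it should not be routed through $L$.
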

\begin{remark}\label{remark:three-left-adjoints}
	Since $H$ is assumed to be a right reshuffling functor, we already know that the reshuffle $H$ has two left adjoints. The condition $m = -1 \vee (n \geq 0 \wedge 0 \cdot H = 0)$ is then equivalent to saying that $H$ has in fact three left adjoints (\cref{thm:count-adjoints}). In the paper \cite{reldtt}, right reshuffling functors are called modalities and central reshuffling functors are called contramodalities. So we claim that a modality preserves discreteness if and only if its left adjoint contramodality is again a modality. When a modality $\mu$ preserves discreteness, the $\mu$-modal $\Sigma$-type need not be quotiented and hence there exists a first projection, as is also observed in the paper.
\end{remark}
\begin{proof}
	We first show the `only if' part from the absurd. That is: we assume $m \geq 0$ and [$n = -1$ or $0 \cdot H > 0$].
	\begin{itemize}
		\item If $m \geq 0$ and $n = -1$, then $\Bool$ is discrete but $H \Bool$ is not, as it contains distinct points.
		
		\item If $m \geq 1$ and $n \geq 0$ and $\rlookup 0 H > 0$, then $\yoneda(\var i : \Idim 1)$ is discrete but $H \yoneda(\var i : \Idim 1)$ contains a non-trivial path from $(0/\var i)$ to $(1/\var i)$.
	
		\item If $m = 0$ and $n \geq 0$ and $\rlookup 0 H = \top$, then $\Bool$ is discrete but $H \Bool$ contains a non-trivial path from $\false$ to $\true$.
	\end{itemize}
	
	For the `if' part, by \cref{thm:preserve-fibrancy}, it suffices to show that the left adjoint $L \dashv H$ preserves horn inclusions. We know that $L$ respects products, so it suffices to show:
	\begin{description}
		\item[If $m = -1$ and $n = -1$] that $L \Bool \cong \Bool$, which is true since $L = \reshlist{\eqty}{} = \Id$,
		\item[If $m = -1$ and $n \geq 0$] that $L \yoneda(\var i : \Idim 0) \cong \Bool$, which is true since defining substitutions $\fpshadj K (\vfi) : \DSub{W = ()}{L \yoneda (\var i : \Idim)}$ (where $K \dashv L$) correspond to defining substitutions $\vfi : \DSub{K() = ()}{\yoneda(\var i : \Idim 0)}$ and these are the objects $(0/\var i)$ and $(1/\var i)$,
		\item[If $m \geq 0$ and $n \geq 0$] that $L \yoneda(\var i : \Idim 0) \cong \yoneda(\var i : \Idim 0)$. Tedious manipulations reveal that $L\yoneda(\var i : \Idim 0) \cong \yoneda(\var i : \Idim{0 \cdot H}) = \yoneda(\var i : \Idim 0)$. \qedhere
	\end{description}
\end{proof}
\begin{proposition}[Modal elimination of the discrete replacement]\label{thm:reldtt-modal-tyshp}
	For any right reshuffling functor $H$, we have a natural transformation on types $\iota : H \quotshp \to \quotshp H$ (where $\quotshp$ is the discrete replacement) such that $\iota \circ H \inquotshp = \inquotshp H$.
\end{proposition}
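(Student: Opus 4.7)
The plan is to generalize \cref{thm:hatinquotshp} by defining $\iota$ pointwise on equivalence classes: $\iota$ will send the class $\overline t \in H\quotshp T$ to the class $\overline t \in \quotshp HT$, interpreting the defining term on the right as a term of $HT$ via the action of $H$. Once well-definedness is established, the compatibility $\iota \circ H\inquotshp = \inquotshp_H$ and naturality in $\Gamma$ and $T$ follow tautologically from this pointwise description; the crux is therefore the verification that $H$ sends $\sheq^T$-related terms to $\sheq^{HT}$-related terms.

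Since $H$ is a right reshuffling functor whose underlying reshuffle admits two left adjoints $K \dashv L \dashv H$, we have $H \cong \fpsh L$ at the presheaf level (both are left adjoint to $\fpsh K$, as argued in \cref{sec:reshuffling-functors-on-cubical-sets-formally}). The adjunction lemma from \cref{ch:cwf-morphisms} (``applying a lifted functor to an equivalence relation, has a right adjoint'') then reduces the desired inclusion $\fpsh L \sheq^T \subseteq \sheq^{HT}$ to $\sheq^T \subseteq \forall_L \sheq^{HT}$. By minimality of $\sheq^T$, it suffices to show that $\forall_L \sheq^{HT}$ satisfies the defining property of $\sheq^T$: for every path $(W, \var i : \Idim 0) \Dsez p : T \dsub{\gamma(\facewkn{\var i})}$ and every face map $\vfi : \PSub{LV}{(W, \var i : \Idim 0)}$, $\sheq^{HT}$ must identify $\fpshadj L(p\psub \vfi)$ with $\fpshadj L(p\psub{0/\var i, \facewkn{\var i}}\psub\vfi)$ at the appropriate context. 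Case analysis on $\var i \psub \vfi$ then proceeds essentially verbatim as in the proof of $\sharp \sheq^T \subseteq \sheq^{\sharp T}$ from \cref{ch:discreteness}: the case $\var i \psub \vfi = 0$ collapses by reflexivity, and for $\var i \psub \vfi = 1$ or $\var i \psub \vfi = \var k \in LV$ one forms the path $(V, \var i : \Idim 0) \Dsez \fpshadj L(p\psub{\psi, \var i/\var i}) : HT$, applies the defining property of $\sheq^{HT}$, and restricts along $(1/\var i)$ or $(\var k/\var i)$ respectively.

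The principal technical obstacle is that $L$ may fail to extend to a full functor on cubes if $H$ lacks a right adjoint at the reshuffle level (so that $L \notin \reshufflecatrr$ in the sense of \cref{def:reshuffle-on-cubes}). However, the argument above only requires $L$'s action on primitive contexts of the form $(V, \var i : \Idim 0)$, and there the relevant entry of $L$'s would-be right adjoint $M$ computes as $\rlookup 0 M = 0$ --- using exactly $\rlookup \eqty H = \eqty$ and $\rlookup 0 H \geq 0$, the two conditions guaranteed by $H$ being a right reshuffling functor. Hence $L(V, \var i : \Idim 0) = (LV, \var i : \Idim 0)$ is unambiguous, $L$ preserves path dimensions on the contexts relevant to the proof, and the argument goes through uniformly.
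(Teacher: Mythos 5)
Your approach generalizes the proof of $\sharp \sheq^T \subseteq \sheq^{\sharp T}$ and is sound precisely when $H = \fpsh L$ for a cube-level functor $L$, i.e.\ when the reshuffle $H$ admits a \emph{right} adjoint (equivalently, $\rlookup n H < \top$). In that regime your argument is correct and arguably more elementary and uniform than the paper's, which instead decomposes $H$ and uses abstract adjunction/discreteness-preservation reasoning for the inner factor.

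However, there is a genuine gap for general right reshuffling functors. Being a right reshuffling functor only guarantees two \emph{left} adjoints $K \dashv L \dashv H$; it does not guarantee a right adjoint. Concretely, irrelevance $\reshlist{\eqty}{\top,\ldots,\top}$ and (for $n \geq 1$) shape-irrelevance $\reshlist{\eqty}{0,\top,\ldots,\top}$ are modalities with $\rlookup n H = \top$, hence with no right adjoint. For such $H$, the reshuffle $L$ has only one right adjoint, so $L \notin \reshufflecatrr$ and $L$ admits \emph{no} action on cubes: trying to evaluate $L(V,\var i : \Idim k)$ would require a dimension flavour $\Idim{\rlookup k H}$ with $\rlookup k H = \top$, which does not exist. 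Your attempted salvage contains two errors. First, the entry you need is $\rlookup 0 M$ with $M = H$, and $\rlookup 0 H \geq 0$ does \emph{not} give $\rlookup 0 H = 0$; in the irrelevance case $\rlookup 0 H = \top$ and the required cube $L(V,\var i : \Idim 0)$ simply does not exist. Second, and more structurally, the lifted-functor adjunction lemma and the construction of $\forall_L$ quantify over \emph{all} $V$ and face maps $\vfi : \PSub{LV}{W}$, so they require $L$ to be defined on the whole category $\dcubecat n$, not merely on cubes of the restricted form $(V, \var i : \Idim 0)$. In the absence of a right adjoint, $H$ is realized only as the weak CwF morphism $\rpsh K$, which does not act by relabelling defining terms, so the identity $H\quotshp T = HT/H\sheq^T$ on which your pointwise definition of $\iota$ rests is not available.

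The paper resolves exactly this: it decomposes $H = \cohcodisc_{[p+1,n]}^n G$, where $p$ is the greatest index with $p \cdot H < \top$, so that $G$ does have a right adjoint (and your style of argument could in principle apply to $G$). For the codiscrete outer factor it uses an entirely different argument, showing that $\quotshp\cohcodisc_{[p+1,n]}^n\Gamma$ is itself $i$-codiscrete for $i > p$ and hence lies in the image of $\cohcodisc_{[p+1,n]}^n$. Your proposal has no analogue of this second half, and that is where it fails.
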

\begin{proof}
	Let $p$ be the greatest index such that $p \cdot H < \top$. Then we can decompose $H$ as $\cohcodisc_{[p+1, n]}^n G$, where $G = \reshlist{\eqty}{0 \cdot H, \ldots, p \cdot H} : m \to p$ has two left adjoints and a right one (\cref{thm:count-adjoints}). Then we can prove the theorem separately for $G$ and $\cohcodisc_{[p+1, n]}^n$, and then compose $\cohcodisc_{[p+1, n]}^n G \quotshp \to \cohcodisc_{[p+1, n]}^n \quotshp G \to \quotshp \cohcodisc_{[p+1, n]}^n G$. Also, we only construct the natural transformation; we invite the courageous reader to keep track of the commutativity property as we go.
	
	So we first construct $G \quotshp \to \quotshp G$. Let $R$ be right adjoint to $G$. Then it is sufficient to construct $\quotshp \to R \quotshp G$.  Having three left adjoints, $R$ preserves discreteness (\cref{remark:three-left-adjoints}), so it is sufficient to construct $\Id \to R \quotshp G$. We have $\iota : \Id \to RG$ and $R \inquotshp G : RG \to R \quotshp G$ (where $\inquotshp$ is $\infibrepl$ for discreteness).
	
	Then we construct $\cohcodisc_{[p+1, n]}^n \quotshp \to \quotshp \cohcodisc_{[p+1, n]}^n$.
	\begin{description}
		\item[$p = -1$] Note that the functor $\cohcodisc_{[0, n]}^n$ maps a set to a codiscrete depth $n$ cubical set. It is straightforward to show that $\cohcodisc_{[0, n]}^n$ preserves surjectivity of presheaf maps. Hence, $\cohcodisc_{[0, n]}^n \inquotshp : \cohcodisc_{[0, n]}^n \to \cohcodisc_{[0, n]}^n \quotshp$ is surjective. Meanwhile, $\quotshp_{-1}$ can be seen as the reshuffling functor for $\reshlist{\top}{}$ and quotients out the trivially true relation. Thus, if we have some $\gamma : \DSub W {\cohcodisc_{[0, n]}^n \quotshp \Gamma}$, then it is the image under $\cohcodisc_{[0, n]}^n \inquotshp$ of some $\gamma' : \DSub{W}{\cohcodisc_{[0, n]}^n \Gamma}$. Then $\inquotshp \circ \gamma'$ proves that $\DSub{W}{\quotshp \cohcodisc_{[0, n]}^n \Gamma}$ is inhabited --- hence a singleton --- and we can map $\gamma$ to $\inquotshp \circ \gamma'$.
		
		\item[$p \geq 0$] Then $\cohcodisc_{[p+1, n]}^n : p \to n$ has four left adjoints: $\cohfget_{[p, n-1]}^n \dashv \cohcodisc_{[p, n-1]}^n = \cohdisc_{[p+1, n]}^n \dashv \cohfget_{[p+1, n]}^n \dashv \cohcodisc_{[p+1, n]}^n$. Hence, $\cohfget_{[p+1, n]}^n$ and $\cohcodisc_{[p+1, n]}^n$ preserve discreteness.
		
		We first show that any cubical set of the form $\quotshp \cohcodisc_{[p+1, n]}^n \Gamma$ is $i$-codiscrete for all $i \geq p$, i.e. that all $i$-bridges in it are fully determined by their source and target. We do this by proving that the (separated/cartesian) exponential presheaf $(\yoneda(\var i : \Idim i) \to \quotshp \cohcodisc_{[p+1, n]}^n \Gamma)$ is isomorphic to $\quotshp \cohcodisc_{[p+1, n]}^n \Gamma \times \quotshp \cohcodisc_{[p+1, n]}^n \Gamma$. We have the following diagram, which we discuss below in detail:
		\begin{equation}
			\xymatrix{
				(\yoneda(\var i : \Idim i) \to \quotshp \cohcodisc_{[p+1, n]}^n \Gamma)
					\ar@{->>}[rr]_{(\loch \circ (0/\var i), \loch \circ (1 / \var i))}
				&& {\quotshp \cohcodisc_{[p+1, n]}^n \Gamma \times \quotshp \cohcodisc_{[p+1, n]}^n \Gamma}
				\\
				\\
				{\quotshp (\yoneda(\var i : \Idim i) \to \cohcodisc_{[p+1, n]}^n \Gamma)}
					\ar@{=}[rr]_{\quotshp (\loch \circ (0/\var i), \loch \circ (1 / \var i))} \ar@{~}@<.8ex>[rr]
					\ar@{->>}[uu]
				&&
				{\quotshp (\cohcodisc_{[p+1, n]}^n \Gamma \times \cohcodisc_{[p+1, n]}^n \Gamma)}
					\ar@{=}[uu] \ar@{~}@<.8ex>[uu]
				\\
				\\
				{(\yoneda(\var i : \Idim i) \to \cohcodisc_{[p+1, n]}^n \Gamma)}
					\ar@{=}[rr]_{(\loch \circ (0/\var i), \loch \circ (1 / \var i))} \ar@{~}@<.8ex>[rr]
					\ar@{->>}[uu]_{\inquotshp}
					\ar@{->>}@/^{8em}/[uuuu]^{\inquotshp \circ \loch}
				&&
				{\cohcodisc_{[p+1, n]}^n \Gamma \times \cohcodisc_{[p+1, n]}^n \Gamma}
					\ar@{->>}[uu]^{\inquotshp}
					\ar@{->>}@/_{8em}/[uuuu]_{\inquotshp \times \inquotshp}
			}
		\end{equation}
		On the left, we always consider presheaves of $i$-bridges, whereas on the right, we consider presheaves of pairs consisting of a source and a target. Below, we consider presheaves of $i$-bridges and pairs in $\cohcodisc_{[p+1, n]}^n \Gamma$; by construction of $\cohcodisc_{[p+1, n]}^n$, these presheaves are isomorphic. In the middle, we consider the discrete replacement of the same presheaves, that are then also isomorphic. On top, we consider presheaves of $i$-bridges and pairs in $\quotshp \cohcodisc_{[p+1, n]}^n \Gamma$; here, isomorphism is what we want to prove. What we already know is that we can take the source and target of an $i$-bridge; we will see later that this operation is surjective, i.e. any source and target are connected by at least one $i$-bridge.
		
		We get from the lower row to the middle by using the natural transformation $\inquotshp : \Id \to \quotshp$ which is always surjective. We can map $i$-bridges in $\cohcodisc_{[p+1, n]}^n \Gamma$ to $i$-bridges in $\quotshp \cohcodisc_{[p+1, n]}^n \Gamma$. This operation is \emph{not} obviously surjective, but it is because a defining substitution
		\begin{equation}
			\DSub{W}{(\yoneda(\var i : \Idim i) \to \Theta)}
		\end{equation}
		is the same thing as a substitution
		\begin{equation}
			(\yoneda W * \yoneda(\var i : \Idim i)) \to \Theta
		\end{equation}
		(where $*$ is either the separated or the cartesian product), which in turn is essentially a defining substitution $\DSub{(W, \var i : \Idim i)}{\Theta}$, and $\inquotshp$ is surjective on defining substitutions. Similarly, we can map pairs in $\cohcodisc_{[p+1, n]}^n \Gamma$ to pairs in $\quotshp \cohcodisc_{[p+1, n]}^n \Gamma$. That operation is surjective because a pair consists of two unrelated defining substitutions and $\inquotshp$ is surjective on defining substitutions.
		
		Because the presheaves in the top row are discrete (since separated/cartesian exponential presheaves are discrete if their codomain is, and since the cartesian product preserves discreteness), the curved arrows factor over the discrete replacement; a bit of diagram chasing shows that the factors are then also surjective. At this point, more diagram chasing also reveals that the upper horizontal arrow is indeed surjective.
		
	We claim that $\quotshp \Theta \times \quotshp \Phi$ is in general a discrete replacement of $\Theta \times \Phi$. This is shown by exploiting the fact that the cartesian product is left adjoint to the cartesian exponential and the fact that the cartesian exponential is discrete if its codomain is. Hence, the upper right vertical arrow is an isomorphism.
		
		Then finally, we can walk in a loop along the upper rectangle in the diagram, implying that all of its edges are isomorphisms.
		
		Then we can conclude that $i$-bridges in $\quotshp \cohcodisc_{[p+1, n]}^n \Gamma$ are fully determined by their source and target for all $i > p$ and hence that $\quotshp \cohcodisc_{[p+1, n]}^n \Gamma$ is, up to isomorphism, in the image of $\cohcodisc_{[p+1, n]}^n$. Because $\cohfget_{[p+1, n]}^n \cohcodisc_{[p+1, n]}^n \cong \Id$, we can then conclude that
		\begin{equation}
			\quotshp \cohcodisc_{[p+1, n]}^n \Gamma \cong \cohcodisc_{[p+1, n]}^n \cohfget_{[p+1, n]}^n \quotshp \cohcodisc_{[p+1, n]}^n \Gamma.
		\end{equation}
		Moreover, in all this reasoning, we did not break naturality in $\Gamma$, so we may conclude
		\begin{equation}
			\quotshp \cohcodisc_{[p+1, n]}^n \cong \cohcodisc_{[p+1, n]}^n \cohfget_{[p+1, n]}^n \quotshp \cohcodisc_{[p+1, n]}^n
		\end{equation}
		and it suffices to construct
		\begin{equation}
			\cohcodisc_{[p+1, n]}^n \quotshp \to \cohcodisc_{[p+1, n]}^n \cohfget_{[p+1, n]}^n \quotshp \cohcodisc_{[p+1, n]}^n.
		\end{equation}
		Diving under $\cohcodisc_{[p+1, n]}^n$ and using that $\cohfget_{[p+1, n]}^n$ preserves discreteness, we need $\Id \to \cohfget_{[p+1, n]}^n \quotshp \cohcodisc_{[p+1, n]}^n$, which is proven by composing $\iota : \Id \to \cohfget_{[p+1, n]}^n \cohcodisc_{[p+1, n]}^n$ with $\cohfget_{[p+1, n]}^n \inquotshp \cohcodisc_{[p+1, n]}^n : \cohfget_{[p+1, n]}^n \cohcodisc_{[p+1, n]}^n \to \cohfget_{[p+1, n]}^n \quotshp \cohcodisc_{[p+1, n]}^n$. \qedhere
	\end{description}
\end{proof}

\commentout{
}

\commentout{
}

\chapter{Semantics of RelDTT}\label{ch:reldtt}

In this chapter, we interpret the typing rules of the type system from our LICS '18 paper \cite{reldtt} --- which we call RelDTT for relational dependent type theory --- in the categories of depth $n$ cubical sets.

\section{Modalities}
We will use $\reshufflecatll$ as the 2-poset of internal modalities (see \cref{def:reshufflecat} for a formal definition, and \cref{sec:reshuffling-functors} (introduction) and \cref{sec:reshuffling-functors-on-cubical-sets-intuitively} for intuition). By \cref{thm:count-adjoints} it contains precisely those reshuffles $\mu : m \to n$ for which $(\rlookup{\eqty}{\mu}) = (\eqty)$ and (if applicable) $\rlookup 0 \mu \geq 0$. Since $(\eqty \cdot \mu)$ is fixed, in the paper \cite{reldtt}, we write
\begin{equation}
	\angles{0 \cdot \mu, \ldots, n \cdot \mu} \quad \text{for} \quad \reshlist{\eqty}{0 \cdot \mu, \ldots, n \cdot \mu}.
\end{equation}

The 2-poset of contramodalities is then given by $\reshufflecatlr$ and consists of precisely those reshuffles $\kappa : n \to m$ for which $(\eqty \cdot \kappa) = \eqty$ and all $i \cdot \kappa < \top$. In the paper \cite{reldtt}, we similarly omit the value of $(\eqty \cdot \kappa)$ and moreover write $\bot$ instead of $(\eqty)$.

Although the composition of a contramodality (central reshuffle) $\kappa : n \to m$ with a modality (right reshuffle) $\rho : p \to n$ is not necessarily in $\reshufflecatll$, the operation $\mu \circ \loch : (p \to m) \to (p \to n)$ on $\reshufflecatll$ has a left adjoint $\mu \setminus \loch : (p \to n) \to (p \to m)$. It is computed as follows (where $\kappa \dashv \mu$ in $\reshufflecat$):
\begin{itemize}
	\item $\rlookup{(\eqty)}{(\mu \setminus \nu)} = (\eqty)$,
	\item $\rlookup i {(\mu \setminus \nu)} = 0$ if $\rlookup{i}{\kappa \nu} = (\eqty)$,
	\item $\rlookup i {(\mu \setminus \nu)} = \rlookup{i}{\kappa \nu}$ if $\rlookup{i}{\kappa \nu} \geq 0$.
\end{itemize}

We will use the abbreviations introduced in \cref{sec:useful-reshuffles}. Some modalities play a remarkable role:
\begin{itemize}
	\item $\cohfget_0^{n+1} = \reshlist{\eqty}{1, 2, \ldots, n+1} : n+1 \to n$ is \textbf{parametricity}.
	\item $\cohcodisc_0^{n+1} = \reshlist{\eqty}{0, 0, 1, \ldots, n} : n \to n+1$ is \textbf{structurality}.
	\item $\cohcodisc_{[1, n]}^n \cohfget_{[1, m]}^m = \reshlist{\eqty}{0, \top, \ldots, \top} : m \to n$ is \textbf{shape-irrelevance}.
	\item $\cohcodisc_{[0, n]}^n \cohfget_{[0, m]}^m = \reshlist{\eqty}{\top, \ldots, \top} : m \to n$ is \textbf{irrelevance}.
	\item $\cohdisc_{[1, n]}^n \cohfget_{[1, m]}^m = \reshlist{\eqty}{0, \ldots, 0} : m \to n$ is \textbf{ad hoc polymorphism}.
\end{itemize}

\section{Judgements and the meta-type of the interpretation function}
The judgements take the following forms:
\begin{itemize}
	\item $\Gamma \sez_n \Ctx$, meaning $\interp \Gamma \in \widehat{\dcubecat n}$,
	\item $\Gamma \sez_n T \type$, meaning $\interp \Gamma \sez_{\widehat{\dcubecat n}} \interp T \dtype$,
	\item $\Gamma \sez_n t : T$, meaning $\interp \Gamma \sez_{\widehat{\dcubecat n}} \interp t : \interp T$,
	\item $\Gamma \sez_1 i : \IX$ ($i$ is an interval variable or constant), meaning $\interp \Gamma \sez_{\widehat{\dcubecat 1}} \interp i : \IX$, where $\IX = \yoneda(\var i : \Idim 1)$,
	\item $\Gamma \sez_{n+1} P : \IF^n$ ($P$ enodes a face proposition), meaning $\interp \Gamma \sez_{\widehat{\dcubecat{n+1}}} \interp P : \cohdisc_0^{n+1} \Prop$,
	\item $\Gamma \sez_n P \prop$ ($P$ is a face proposition), meaning $\interp \Gamma \sez_{\widehat{\dcubecat{n}}} \interp P \prop$.
\end{itemize}
Definitional equality is always interpreted as equality of mathematical objects.
In the model, we will write $\sez_n$ as shorthand for $\sez_{\widehat{\dcubecat n}}$.

We maintain the following invariant, which we will prove each time we introduce a way to derive $\Gamma \sez_n \Ctx$.
\begin{proposition}
	Whenever $\Gamma \sez_n \Ctx$ and $\kappa \dashv \mu$, we have $\interp{\mu \setminus \Gamma} \cong \kappa \interp \Gamma$.
\end{proposition}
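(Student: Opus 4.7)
The plan is to prove the invariant by induction on the derivation of $\Gamma \sez_n \Ctx$, generalizing and simplifying the approach taken in \cref{thm:leftflat} and \cref{thm:leftsharp} for the ParamDTT model. The new ingredient is that we now consider arbitrary adjoint pairs $\kappa \dashv \mu$ in $\reshufflecat$, rather than just the concrete pairs $\flat \dashv \sharp$ and $\sharp \dashv \coshp$; this is made uniform by the general theory of reshuffling functors from \cref{ch:dcubecat}.

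For the base case $\Gamma = ()$, we need $\interp{\mu \setminus ()} \cong \kappa \interp{()}$. Since $\mu \setminus ()$ is syntactically the empty context and $\kappa$, being a central reshuffling functor (hence of the form $\fpsh L$, a strict morphism of CwFs by \cref{thm:fpsh-strict-cwf-morphism}), preserves the terminal object, both sides are terminal in $\widehat{\dcubecat m}$ and hence canonically isomorphic.

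For the inductive step, I would take a context extension $\Gamma, \ctxvar \nu x T$ and use that, syntactically, $\mu \setminus \loch$ distributes over context extension by modifying the annotation: $\mu \setminus (\Gamma, \ctxvar \nu x T) = (\mu \setminus \Gamma), \ctxvar{\mu \setminus \nu}{x}{T}$, where $\mu \setminus \nu$ is the left adjoint of $\mu \circ \loch$ computed as on p.~\pageref{sec:reldtt-cohpi}. Unfolding the interpretation gives $\interp{\mu \setminus \Gamma}, \ftrvar{\mu \setminus \nu} x : (\mu \setminus \nu)\interp T [\iota]$; by the induction hypothesis this is isomorphic to $\kappa \interp \Gamma, \ftrvar{\mu \setminus \nu} x : (\mu \setminus \nu)\interp T [\iota]$. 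The crux is then to identify $(\mu \setminus \nu)$ acting on types with $\kappa \nu$: indeed, if $\lambda \dashv \nu$ in $\reshufflecat$, then $\lambda \kappa$ is left adjoint to $\mu \nu$, and by uniqueness of adjoints in $\reshufflecat$ together with the 2-functoriality of \cref{def:right-reshuffling-functor} we have $(\mu \setminus \nu) \cong \kappa \nu$ as central reshuffling functors, modulo the adjustments recorded in the definition of $\mu \setminus \loch$ (replacing $(\eqty)$ by $0$, which corresponds to inserting discrete path dimensions and is handled by the usual reshuffle casts).

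The final step is to push $\kappa$ out of the context extension: since $\kappa$ is a strict morphism of CwFs, it commutes with context extension on the nose, so $\kappa(\interp \Gamma, \ftrvar \nu x : \nu \interp T[\iota]) = \kappa \interp \Gamma, \ftrvar{\kappa}{x} : \kappa(\nu \interp T[\iota])$, and this equals $\kappa \interp{\Gamma, \ctxvar \nu x T}$ as required after the identification of $\kappa \nu$ with $\mu \setminus \nu$. The main obstacles I anticipate are bookkeeping-heavy: first, the syntactic rule for $\mu \setminus \loch$ on various flavours of extension (continuous variables, parametric/modal variables, interval variables, and face-predicate extensions introduced later) must each be checked to interact correctly with the reshuffle arithmetic; second, the coherence of the various reshuffle casts with $\iota$ must be verified, which is exactly the situation covered by the no maze theorem (\cref{thm:no-maze}, \cref{remark:no-maze}), so we may appeal to it to avoid tracking explicit cast paths. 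This reduces the proof to a routine case analysis on the extension rule, with the hard work already done in \cref{ch:dcubecat}.
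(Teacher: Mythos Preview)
Your inductive strategy matches the paper's exactly: the invariant is established case-by-case for each context formation rule, and your treatment of the empty context and the bookkeeping around $\kappa$ preserving context extension (as a strict CwF morphism $\fpsh L$) is fine.

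The gap is at the step you yourself flag as ``the crux''. You write that $(\mu \setminus \nu) \cong \kappa \nu$ as functors, ``modulo the adjustments \ldots handled by the usual reshuffle casts'', and you invoke uniqueness of adjoints and the no-maze theorem. But $\kappa\nu$ and $\mu\setminus\nu$ are \emph{not} isomorphic as reshuffling functors: by the explicit formula for $\mu\setminus\loch$, we only have the inequality $\kappa\nu \leq \mu\setminus\nu$, with strict inequality precisely at those indices $i$ where $i\cdot(\kappa\nu)=(\eqty)$ but $i\cdot(\mu\setminus\nu)=0$. The reshuffle cast $\iota:\kappa\nu\to\mu\setminus\nu$ therefore exists, but it is not an isomorphism of functors, and the no-maze theorem (which concerns \emph{uniqueness} of 2-cells, not invertibility) cannot upgrade it to one. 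Your adjoint argument (``$\lambda\kappa$ is left adjoint to $\mu\nu$'') is correct as far as it goes, but it does not imply that $\mu\setminus\nu$ coincides with $\kappa\nu$; $\mu\setminus\loch$ is the left adjoint of $\mu\circ\loch$ \emph{restricted to} $\reshufflecatll$, not in $\reshufflecat$.

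What actually closes the gap is \cref{thm:left-division-semantics}: for a \emph{discrete} type $T$, the types $\kappa\nu T$ and $((\mu\setminus\nu)T)[\iota]$ are isomorphic over $\kappa\nu\interp\Gamma$. The proof inspects the only discrepancy --- the extra $0$-bridge dimensions in $(\mu\setminus\nu)T$ --- and observes that after substitution by $\iota$ these bridges live over degenerate cells of $\kappa\nu\interp\Gamma$, hence are themselves degenerate by discreteness of $T$. This is exactly how the paper proceeds for the extension case: apply the induction hypothesis, then use \cref{thm:left-division-semantics} to turn $((\mu'\setminus\mu)\interp T)[\iota]$ into $\kappa'\mu\interp T$, and finally pull $\kappa'$ out of the extension. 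Without invoking discreteness of $\interp T$ at this point, the argument does not go through.
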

In our proofs, we will often rely on the following lemma:
\begin{lemma}\label{thm:left-division-semantics}
	Let $\mu : m \to n$ and $\kappa \dashv \mu$. Then for any $\rho : p \to n$, we have $\kappa \circ \rho \leq \mu \setminus \rho$ and for any type $\Gamma \sez_p T \dtype$, the types $\kappa \rho \Gamma \sez_m \kappa \rho T, ((\mu \setminus \rho) T)[\iota] \type$ are isomorphic.
\end{lemma}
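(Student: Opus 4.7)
The plan is to treat the two assertions separately, the first being purely an abstract 2-categorical computation and the second requiring a concrete unpacking of defining substitutions combined with discreteness of $T$.

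\textbf{Step 1: the inequality $\kappa \circ \rho \leq \mu \setminus \rho$ in $\reshufflecat$.} Since $\mu \setminus \loch$ is by construction left adjoint to $\mu \circ \loch$ (on $\reshufflecatll$), the corresponding unit gives $\rho \leq \mu \circ (\mu \setminus \rho)$ (this inequality also holds in the larger 2-poset $\reshufflecat$, since $\reshufflecatll$ is a sub-2-poset and the ambient composition agrees). Composing on the left with $\kappa$, using monotonicity of composition, yields $\kappa \rho \leq \kappa \mu (\mu \setminus \rho)$. The counit of $\kappa \dashv \mu$ gives $\kappa \mu \leq \id$, and composing on the right with $\mu \setminus \rho$ yields $\kappa \mu (\mu \setminus \rho) \leq \mu \setminus \rho$. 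Chaining these gives the desired inequality, and hence a reshuffle cast $\iota : \kappa \rho \to \mu \setminus \rho$ which makes the substitution $\iota : \kappa \rho \Gamma \to (\mu \setminus \rho) \Gamma$ well-defined.

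\textbf{Step 2: the isomorphism of types.} The key observation is that the reshuffles $\kappa \rho$ and $\mu \setminus \rho$ agree at every position $i \in \accol{\eqty, 0, \ldots, m}$ except possibly at those non-$(\eqty)$ positions where $\rlookup i{\kappa \rho} = (\eqty)$; by the explicit formula for $\mu \setminus \rho$, at each such position $\mu \setminus \rho$ has value $0$ instead. I would factor $\iota$ as a composition of elementary casts, each promoting a single position $i$ from $(\eqty)$ to $0$, and prove the isomorphism for a single elementary cast by induction. For a single such step, at the base-category level of \cref{def:reshuffle-on-cubes}, the cast $\tilde\iota : (\mu \setminus \rho)_* W \to (\kappa \rho)_* W$ is, cubewise, precisely a face map $(\facewkn{\var j})$ that forgets a single path dimension $\var j : \Idim 0$ which $\mu \setminus \rho$ inserts but $\kappa \rho$ does not.

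Unpacking defining terms then reduces the isomorphism to the following: for any $\sigma : \DSub{W}{\kappa \rho \Gamma}$, restriction along $(\facewkn{\var j})$ induces a bijection between defining terms of $T$ over the cube obtained from $\iota \circ \sigma$ (which contains the extra path dimension $\var j$) and those over the cube obtained from $\sigma$ alone (which does not). But this is exactly the content of discreteness of $T$: every $(W', \var j : \Idim 0) \Dsez t : T \dsub{\gamma (\facewkn{\var j})}$ is determined uniquely by $t \psub{0/\var j, \facewkn{\var j}}$, i.e.\ factors through $(\facewkn{\var j})$. Hence restriction along $\tilde\iota$ is bijective on the level of defining terms, and by naturality in the cube this assembles into an isomorphism of types.

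\textbf{Main obstacle.} The hard part is not the conceptual content but the bookkeeping: one has to make sure that the factorization of $\iota$ into elementary casts lives entirely within the 2-poset over which our pseudofunctors are defined, and that the base-level cast $\tilde\iota$ truly is of the form described above (so that the discreteness argument applies verbatim). I expect this reduces to a case analysis according to whether the ``upgraded'' position $i$ lies at the boundary between $(\eqty)$ and $0$ in $\kappa \rho$, and could in principle be bypassed by invoking the no-maze principle (\cref{remark:no-maze}) once both sides are recognised as canonical maps between types arising from essentially the same construction on a discrete $T$.
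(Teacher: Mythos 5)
Your proof captures the same key idea as the paper's: the reshuffles $\kappa\rho$ and $\mu\setminus\rho$ differ only at positions where $\kappa\rho$ has $(\eqty)$ and $\mu\setminus\rho$ has $0$, and after unwinding to defining substitutions these differences amount to extra path dimensions over which the context restriction is degenerate, so discreteness of $T$ shows the corresponding defining terms are in bijection under restriction. Your Step~1 is a more explicit version of what the paper dismisses as ``immediate from the construction of $\setminus$''. The genuine divergence is your factorization of $\iota$ into elementary casts, each promoting a single position from $(\eqty)$ to~$0$: the paper instead treats all the extra path dimensions simultaneously, working directly with defining substitutions of the form $\fpshadj\theta(\rpshadj\chi(\sigma))$ and unpacking $\kappa\rho T$ and $(\mu\setminus\rho)T$ against the Yoneda images $\psi\,\yoneda\,\theta W$ and $(\psi/\kappa)\,\yoneda\,W$. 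That choice is not cosmetic: the paper never needs the intermediate reshuffles to be honest reshuffling functors at all, which sidesteps exactly the obstacle you flag. In your factorization, the intermediate reshuffles $F_i$ and, more importantly, their left adjoints $L_i$ need not lie in $\reshufflecatrr$ (since $\kappa\rho$ may contain $\top$ entries inherited from $\rho$, in which case $\kappa\rho$ has no right adjoint and $L_i$ fails to be a cube-level reshuffling functor). The paper's careful writing $\psi\,\yoneda\,\theta W$ instead of $\yoneda(\psi\theta)W$ is precisely how it avoids having to make sense of $\psi\theta$ as a cube functor. Your suggestion to bypass the issue by the no-maze principle will not by itself help, since that principle only gives uniqueness of natural transformations, not existence of the intermediate functors; to salvage the step-by-step decomposition you would have to do it at the level of defining substitutions, at which point it collapses into the paper's all-at-once argument anyway.
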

\begin{proof}
	The inequality $\kappa \circ \rho \leq \mu \setminus \rho$ follows immediately from the construction of $\loch \setminus \loch$. The only aspects of $\kappa \rho T$ and $((\mu \setminus \rho)T)[\iota]$ that are potentially different, are the $i$-bridges where $i \cdot \kappa = (\eqty)$. There the $i$-bridges of $\kappa \rho T$ are all degenerate, whereas those of $((\mu \setminus \rho) T)[\iota]$ are the 0-bridges of $T$. However, the substitution with $\iota : \kappa \rho \to \mu \setminus \rho$ ensures that only $i$-bridges remain that live over $i$-bridges in $\kappa \rho \Gamma$, which are all degenerate. In other words, we are considering 0-bridges in $T$ that live over degenerate bridges in $\Gamma$, but those are constant by discreteness of $T$. Hence, they are all constant.
	
	We make this precise. Let $\theta \dashv \kappa \dashv \mu$ and $\chi \dashv \psi \dashv \rho$. Pick a defining substitution $\fpshadj \theta(\rpshadj \chi(\sigma)) : \DSub{W}{\kappa \rho \Gamma}$. It arises from a substitution $\sigma : \psi \yoneda \theta W \to \Gamma$. Terms
	\begin{equation}
		W \Dsez \fpshadj \theta(\rpshadj \chi(t)) : (\kappa \rho T) \dsub{\fpshadj \theta(\rpshadj \chi(\sigma))}
	\end{equation}
	are just terms
	\begin{equation}
		\psi \yoneda \theta W \sez t : T[\sigma].
	\end{equation}
	Meanwhile, we have
	\begin{equation}
		\iota \circ \fpshadj \theta(\rpshadj \chi(\sigma)) = \rpshadj{\theta \setminus \chi}(\sigma \circ \iota) : \DSub{W}{(\mu \setminus \rho) \Gamma}
	\end{equation}
	where we denote the left adjoints to $\mu \setminus \rho$ using the symbols $\theta \setminus \chi \dashv \psi / \kappa \dashv \mu \setminus \rho$. The second no-maze-transformation $\iota$ has type $(\psi / \kappa)\yoneda \to \psi \yoneda \theta$.
	
	Terms
	\begin{equation}
		W \Dsez \rpshadj{\mu \setminus \rho}(t) : ((\mu \setminus \rho) T) \dsub{\iota \circ \fpshadj \theta(\rpshadj \chi(\sigma))}
	\end{equation}
	are then simply terms
	\begin{equation}
		(\psi/\kappa) \yoneda W \sez t : T [\sigma \circ \iota].
	\end{equation}
	So the question is: what is the difference between $(\psi/\kappa) \yoneda W$ and $\psi \yoneda \theta W \cong \psi \theta \yoneda W$?

	Note that $\psi$ and $(\psi / \kappa)$ preserve the cartesian product as they are lifted functors, and $\theta$ preserves it at least when applied under the Yoneda-embedding. So we can pull $W$ apart and consider the action of $\psi / \kappa$ and $\psi \circ \theta$ on individual variables. One can show that $\yoneda(\var i : \Idim i)$ is mapped (up to isomorphism) to $\yoneda(\var i : \Idim{i \cdot (\mu \setminus \rho)})$ and $\yoneda(\var i : \Idim{i \cdot (\kappa \circ \rho)})$, except when the reshuffles yield $\top$, then we end up with $\Bool$ instead.
	
	So the only case where there is a difference is when $i \cdot \kappa = (\eqty)$; in that case we get $\yoneda(\var i : \Idim 0)$ and $\yoneda()$ respectively. But then we are considering paths $t$ over $\sigma \circ \iota = \sigma \circ (\var i / \novar)$ which are degenerate by discreteness of $T$.
\end{proof}

\section{Contexts and variables}
\subsection{Empty context}
\begin{equation}
	\interpinference{n \geq -1}{\ctx_n}{c-em} =
	\inference{n \geq -1}{() \ctx}{}
\end{equation}
Clearly, if $\kappa \dashv \mu$, then $\kappa \interp{()} = \kappa () = () = \interp{()} = \interp{\mu \setminus ()}$.

\subsection{Context extension with a typed variable}
\begin{equation}
	\interpinference{
		\Gamma \sez_n \Ctx \\
		\mu : m \to n \\
		\mu \setminus \Gamma \sez_m T \type
	}{\Gamma, \ctxresh \mu x T \sez_n \Ctx}{c-ext}
	\vmess{
	=
	\inference{
		\interp \Gamma \in \widehat{\dcubecat n} \\
		\kappa \dashv \mu : m \to n \\
		\kappa \interp \Gamma \sez_m \interp T \dtype
	}{\interp \Gamma, \var x : (\mu \interp T) [\iota] \sez_n \Ctx}{}
	}
\end{equation}
If $\kappa' \dashv \mu' : m' \to n$, then
\begin{align*}
	\interp{\mu' \setminus (\Gamma, \ctxresh \mu x T)}
	&= \interp{(\mu' \setminus \Gamma), \ctxresh {\mu' \setminus \mu} x T} \\
	&= \interp{\mu' \setminus \Gamma}, \var x : ((\mu' \setminus \mu) \interp T)[\iota] \\
	&\cong (\kappa' \interp \Gamma), \var x : ((\mu' \setminus \mu) \interp T)[\iota].
\end{align*}
We have $(\mu' \setminus \mu) \kappa \interp \Gamma \sez (\mu' \setminus \mu) \interp T \dtype$. When we substitute with $\iota : \kappa' \circ \mu \to \mu' \setminus \mu$, by \cref{thm:left-division-semantics}, we get
\begin{equation}
	\kappa' \mu \kappa \interp \Gamma \sez ((\mu' \setminus \mu) \interp T) [\iota] \cong \kappa' \mu \interp T \dtype.
\end{equation}
Finally we substitute with $\kappa' \iota : \kappa' \to \kappa' \mu \kappa$ to find that our extended context is indeed isomorphic to
\begin{equation}
	\paren{\kappa' \interp \Gamma, \var x : (\kappa' \mu \interp T)[\kappa' \iota]} \cong \kappa' \paren{\interp \Gamma, \var x : (\mu \interp T)[\iota]}.
\end{equation}

\subsection{Cartesian context extension with an interval variable}
\begin{equation}
	\interpinference{
		\Gamma \sez_n \Ctx \\
		\mu : 1 \to n
	}{\Gamma, \ctxresh \mu i \IX \sez_n \Ctx}{}
	\vmess{
	= \inference{
		\interp \Gamma \ctx \\
		\mu : 1 \to n
	}{
		\interp \Gamma, \var i : \mu \IX \ctx
	}{}
	}
\end{equation}
The left division invariant is proven as for ordinary context extension.

\commentout{
}

\subsection{Context extension with a proof synthesis variable}
\begin{equation}
	\interpinference{
		\Gamma \sez_n \Ctx \\
		\mu : 0 \to n
	}{
		\Gamma, \ctxresh{\mu}{i}{\IJud}
	}{c-ext-j}
	\vmess{
	= \inference{
		\interp \Gamma \ctx \\
		\mu : 0 \to n
	}{
		\interp \Gamma, \var i : \mu \Bool \ctx
	}{}
	}
\end{equation}
The left division invariant is proven as before.

\subsection{Variable rules}
The variable rule
\begin{equation}
	\inference{
		\Gamma \sez_n \Ctx \\
		(\ctxresh \mu x T) \in \Gamma \\
		\mu \leq \idmod : n \to n
	}{\Gamma \sez_n x : T}{t-var}
\end{equation}
is interpreted using a combination of weakening and the following rule:
\begin{equation}
	\interpinference{
		\Gamma, \ctxresh{\mu}{x}{T} \sez_n \Ctx \\
		\mu \leq \idmod : n \to n
	}{
		\Gamma, \ctxresh{\mu}{x}{T} \sez_n x : T
	}{}
	\vmess{
		= \inference{
			\interp \Gamma, \var x : (\mu \interp T)[\iota] \ctx \\
			\iota : \mu \to \id : n \to n
		}{
			\interp \Gamma, \var x : (\mu \interp T)[\iota] \sez \iota(\var x) : \interp T[\iota][\wknvar x]
		}{}
	}
\end{equation}
The rule
\begin{equation}
	\inference{
		\Gamma \sez_1 \Ctx \\
		(\ctxresh \mu i \IX) \in \Gamma \\
		\mu \leq \idmod : 1 \to 1
	}{\Gamma \sez_1 i : \IX}{}
\end{equation}
is interpreted analogously.

\section{Universes}

\subsection{Shape-irrelevant universe polymorphism}
Since we assume that the metatheory has Grothendieck universes, the category $\widehat{\dcubecat n}$ --- as any presheaf category --- has a hierarchy of Hofmann-Streicher universes \cite{psh-universes}
\begin{equation}
	\uniPsh_0 \subseteq \uniPsh_1 \subseteq \ldots
\end{equation}
which are all closed types. However, this hierarchy does not classify discrete types. We will instead start from the hierarchy
\begin{equation}
	\uniNDD_0 \subseteq \uniNDD_1 \subseteq \ldots
\end{equation}
where $\uniNDD_\ell$ classifies discrete types of level $\ell$, i.e. the defining terms $W \Dsez T : \uniNDD_\ell$ are discrete types $\yoneda W \sez \dEl~T \dtype$.

These universes are not discrete, however; they are non-discrete universes of discrete types (whence the abbreviation NDD). Indeed, an $i$-bridge from $A$ to $B$ in $\uniNDD_\ell$ specifies what it means to give an $i$-bridge from $a : A$ to $b : B$. Instead, what we want is that all 0-bridges in the universe are constant, while the meaning of $i$-bridges from $a : A$ to $b : B$ is specified by an $(i+1)$-bridge from $A$ to $B$. We can achieve this by applying $\cohdisc_0^{n+1} = \reshlist{\eqty}{\eqty, 0, \ldots, n} : n \to n+1$ to $\uniPsh_\ell$, yielding a hierarchy
\begin{equation}
	\uniDD_0 := \cohdisc_0^{n+1} \uniNDD_0 \subseteq \uniDD_1 := \cohdisc_0^{n+1} \uniNDD_1 \subseteq \ldots
\end{equation}
of universes in $\widehat{\dcubecat{n+1}}$ classifying types from $\widehat{\dcubecat n}$.

However, this only gives us metatheoretic universe polymorphism, and we want internal shape-irrelevant universe polymorphism. For this reason, we need to construct
\begin{equation}
	\inference{
		\Gamma \sez_{n+1} \ell : \cohcodisc_{[0, n+1]}^{n+1} \cohfget_0^0 \IN
	}{
		\Gamma \sez_{n+1} \uniDD_{\ell} \dtype.
	}{}
\end{equation}
Recall that $\cohcodisc_{[0, n+1]}^{n+1} \cohfget_0^0 = \reshlist{\eqty}{\top, \ldots, \top} : 0 \to n+1$ is irrelevance; the polymorphic universe as a term should depend shape-irrelevantly on the level, but since we start by constructing it as a type, we need an irrelevant dependence.

So pick a defining substitution $\gamma : \DSub{W}{\Gamma}$. If $W$ is a $d$-dimensional cube, then $\ell \dsub {\gamma}$ specifies $2^d$ natural numbers and nothing else. Let $\min(\ell \dsub \gamma)$ be the least one among them. Then we take as defining terms
\begin{equation}
	W \Dsez_{n+1} T : \uniDD_{\ell} \dsub{\gamma}
\end{equation}
precisely the defining terms
\begin{equation}
	W \Dsez_{n+1} T : \uniDD_{\min(\ell \dsub \gamma)} \dsub{()}.
\end{equation}
Restriction can be imported from $\uniDD_{\min(\ell \dsub \gamma)}$. Indeed, if $\vfi : \PSub V W$, then $\min(\ell \dsub{\gamma \vfi}) \geq \min(\ell \dsub \gamma)$ so that $\uniDD_{\min(\ell \dsub \gamma)} \subseteq \uniDD_{\min(\ell \dsub {\gamma\vfi})}$ and hence $V \Dsez_{n+1} T \psub \vfi : \uniDD_{\min(\ell \dsub \gamma)} \subseteq \uniDD_{\min(\ell \dsub {\gamma\vfi})}$.

If $\Gamma \sez \ell : \cohcodisc_{[0, n]}^n \cohfget_0^0 \IN$, then we also introduce a notation $\Gamma \sez T \dtype_\ell$ which means that $T$ is a discrete type such that, for any $\gamma : \DSub W \Gamma$, the set $T \dsub \gamma$ has level $\min(\ell \dsub \gamma)$. For example, we have $\Gamma \sez \uniDD_\ell \dtype_{\suc~\ell}$.

We interpret:
\begin{equation*}
	\interpinference{
		n \geq -1 \\
		\Gamma \sez_{n+2} \Ctx \\
		\shirrmod \setminus \Gamma \sez_0 \ell : \IN
	}{
		\Gamma \sez_{n+2} \unidepth \ell n : \El~\unidepth{\suc~\ell}{n+1}
	}{t-Uni}
	=
	\inference{
	\inference{
	\inference{
	\inference{
	\inference{
	\inference{
	\inference{
		\interp{\shirrmod \setminus \Gamma} \sez_0 \ell : \IN
	}{\interp \Gamma \sez_{n+2} \iota(\ell) : \cohcodisc_{[1, n+2]}^{n+2} \IN}{}
	}{\quotshp \interp \Gamma \sez_{n+2} \iota(\ell) : \cohcodisc_{[1, n+2]}^{n+2} \IN}{}
	}{\cohfget_0^{n+2} \quotshp \interp \Gamma \sez_{n+1} \iota(\ell) : \cohfget_0^{n+2} \cohcodisc_{[1, n+2]}^{n+2} \IN}{}
	}{\cohpi_0^{n+2} \interp \Gamma \sez_{n+1} \iota(\ell) : \cohcodisc_{[0, n+1]}^{n+1} \cohfget_0^{0} \IN}{$\cong$}
	}{\cohpi_0^{n+2} \interp \Gamma \sez_{n+1} \uniDD_{\iota(\ell)} \dtype_{\iota(\suc~\ell)}}{}
	}{\cohpi_0^{n+2} \interp \Gamma \sez_{n+1} \tycode{\uniDD_{\iota(\ell)}} : \uniNDD_{\iota(\suc~\ell)}}{}
	}{\interp \Gamma \sez_{n+2} \iota(\tycode{\uniDD_{\iota(\ell)}}) : \uniDD_{\iota(\suc~\ell)}}{}
\end{equation*}
where $\iota$ is used to denote any no-maze-transformation.

\subsection{Decoding types}
\begin{equation}
	\interpinference{
		\parmod \setminus \Gamma \sez_{n+1} T : \unidepth \ell n
	}{
		\Gamma \sez_n \El~T \type
	}{ty}
	=
	\inference{
	\inference{
	\inference{
		\interp{\parmod \setminus \Gamma} \sez_{n+1} T : \uniDD_\ell
	}{\interp \Gamma \sez_n \iota(T) : \cohfget_0^{n+1} \uniDD_\ell}{}
	}{\interp \Gamma \sez_n \iota(T) : \uniNDD_\ell}{}
	}{\interp \Gamma \sez_n \El~\iota(T) \dtype}{}
\end{equation}

\subsection{Universe cumulativity}
\begin{equation}
	\interpinference{
		\Gamma \sez_{n+1} T : \unidepth \ell n
	}{
		\Gamma \sez_{n+1} T : \unidepth{\suc~\ell}{n}
	}{t-cumul}
	=
	\inference{
		\interp \Gamma \sez_{n+1} \interp T : \uniDD_\ell \qquad
		\inference{
		}{\interp \Gamma \sez_{n+1} \uniDD_\ell \subseteq \uniDD_{\suc~\ell} \dtype}{}
	}{\interp \Gamma \sez_{n+1} \interp T : \uniDD_{\suc~\ell}}{}
\end{equation}

\subsection{Conversion}
The rule
\begin{equation}
	\inference{
		\Gamma \sez_n t : T \\
		\Gamma \sez_n T \jeq T' \type
	}{\Gamma \sez_n t : T'}{t-conv}
\end{equation}
is trivial as $\interp T = \interp{T'}$.

\subsection{Excluded middle}
We need to interpret
\begin{equation}
	\interpinference{
		\hocmod \setminus \Gamma \sez_{n+1} T : \El~\unidepth \ell n
	}{\Gamma \sez_n \name{lem}~T : \El(T \uplus T \to \Empty)}{}
	=
	\inference{
	\inference{
	\inference{
		\interp{\hocmod \setminus \Gamma} \sez_{n+1} \interp T : \uniDD_\ell
	}{\interp \Gamma \sez_{n+1} \iota(\interp T) : \cohdisc_{[1, n+1]}^{n+1} \cohfget_{[1, n+1]}^{n+1} \uniDD_\ell}{}
	}{\interp \Gamma \sez_{n+1} \iota(\interp T) : \cohdisc_{[0, n+1]}^{n+1} \cohfget_{[0, n]}^{n} \uniNDD_\ell}{}
	}{\interp \Gamma \sez_{n+1} \interp{\name{lem}~T} : \El~\iota(\interp T) \uplus \El~\iota(\interp T) \to \Empty}{}
\end{equation}
So pick some $\gamma : \DSub{W}{\interp \Gamma}$. Then we get $W \Dsez_{n+1} \iota(\interp T) \dsub \gamma : \paren{\cohdisc_{[0, n+1]}^{n+1} \cohfget_{[0, n]}^{n} \uniNDD_\ell} \dsub \gamma$. Now $\iota(\interp T) \dsub \gamma$ necessarily arises as some $() \Dsez T' : \uniNDD_{\omega} \dsub{()}$, i.e. it is a closed discrete type. If it is empty, then we can construct $\El~\iota(\interp T) \to \Empty$. If it is not, then it contains a point and hence also degenerate cubes of any shape on that point. Then we can use the axiom of choice to define $\interp{\name{lem}~T} \dsub \gamma$ as the degenerate cube on the chosen point.

\section{Interval}
The rule
\begin{equation}
	\inference{
		\Gamma \sez_1 \Ctx
	}{\Gamma \sez_1 0, 1 : \IX}{}
\end{equation}
is interpreted in the obvious way.

The rules for $\IJud$ are also straightforward to interpret.

\section{$\Pi$-types}

\subsection{Type formation}
For any reshuffle $\mu : m \to n$, we define $\bar \mu : m+1 \to n+1$ by $\bar \mu = \cohfget_0^{n+1} \setminus (\mu \circ \cohfget_0^{m+1})$. This has the property that $\cohfget_0^{n+1} \bar \mu = \mu \cohfget_0^{m+1}$ and that $\bar \mu \cohdisc_0^{m+1} = \cohdisc_0^{n+1} \bar \mu$.
\begin{equation}
	\interpinference{
		\kappa \dashv \mu : m \to n \\
		\bar \mu \setminus \Gamma \sez_{m+1} A : \El~\unidepth \ell m \\
		\Gamma, \ctxresh{(\parmod \setminus \mu)}{x}{\El~A} \sez_{n+1} B : \El~\unidepth \ell n
	}{\Gamma \sez_{n+1} (\ctxresh \mu x A) \to B : \El~\unidepth \ell n}{t-Pi} =
\end{equation}
\begin{equation*}
	\inference{
	\inference{
	\inference{
		\inference{
		\inference{
		\inference{
		\inference{
		\inference{
			\interp{\bar \mu \setminus \Gamma} \sez_{m+1} \interp A : \uniDD_\ell
		}{\interp \Gamma \sez_{n+1} \iota(\interp A) : \bar \mu \uniDD_\ell}{}
		}{\interp \Gamma \sez_{n+1} \iota(\interp A) : \cohdisc_0^{n+1} \mu \uniNDD_\ell}{}
		}{\kappa \cohpi_0^{n+1} \interp \Gamma \sez_n \iota(\interp A) : \uniNDD_\ell}{}
		}{\kappa \cohpi_0^{n+1} \interp \Gamma \sez_n \El~\iota(\interp A) \dtype_\ell}{}
		}{\cohpi_0^{n+1} \interp \Gamma \sez_n (\mu \El~\iota(\interp A))[\iota] \type_\ell}{}
		~
		\inference{
		\inference{
		\inference{
		\inference{
			\interp \Gamma, \var x : \paren{(\cohfget_0^{n+1} \setminus \mu) \El~\iota(\interp A)}[\iota] \sez_{n+1} \interp B : \uniDD_\ell
		}{\cohpi_0^{n+1}  \paren{ \interp \Gamma, \var x : \paren{(\cohfget_0^{n+1} \setminus \mu) \El~\iota(\interp A)}[\iota] } \sez_n \iota(\interp B) : \uniNDD_\ell}{}
		}{\cohpi_0^{n+1} \interp \Gamma, \var x : \cohfget_0^{n+1} \quotshp \paren{(\cohfget_0^{n+1} \setminus \mu) \El~\iota(\interp A)}[\iota] \sez_n \iota(\interp B) : \uniNDD_\ell}{}
		}{\cohpi_0^{n+1} \interp \Gamma, \var x : (\mu \El~\iota(\interp A))[\iota] \sez_n \iota(\interp B) : \uniNDD_\ell}{}
		}{\cohpi_0^{n+1} \interp \Gamma, \var x : (\mu \El~\iota(\interp A))[\iota] \sez_n \El~\iota(\interp B) \dtype_\ell}{}
	}{\cohpi_0^{n+1} \interp \Gamma \sez_n \Pi(\var x : (\mu \El~\iota(\interp A))[\iota]).\El~\iota(\interp B) \dtype_\ell}{}
	}{\cohpi_0^{n+1} \interp \Gamma \sez_n \tycode{\Pi(\var x : (\mu \El~\iota(\interp A))[\iota]).\El~\iota(\interp B)} : \uniNDD_\ell}{}
	}{\interp \Gamma \sez_n \iota \paren{\tycode{\Pi(\var x : (\mu \El~\iota(\interp A))[\iota]).\El~\iota(\interp B)}} : \uniDD_\ell}{}
\end{equation*}
We have used among others the following properties:
\begin{itemize}
	\item $\cohpi_0^{n+1} := \cohfget_0^{n+1} \quotshp$,
	\item $\quotshp(\Gamma.T[\inquotshp]) \cong \quotshp \Gamma.\quotshp T$,
	\item $\mu \leq \cohfget_0^{n+1}(\cohfget_0^{n+1} \setminus \mu)$,
	\item $\uniDD_\ell = \cohdisc_0^{p+1} \uniNDD_\ell$.
\end{itemize}
The function type over the interval is interpreted similarly:
\begin{equation}
	\inference{
		\mu : 1 \to n \\
		\Gamma, \ctxresh{(\parmod \setminus \mu)}{i}{\IX} \sez_{n+1} B : \unidepth \ell n
	}{\Gamma \sez_{n+1} (\ctxresh \mu i \IX) \to B : \unidepth \ell n}{}
\end{equation}

\subsection{Abstraction and application}
Nothing is remarkable about the interpretation of the following rules:
\begin{equation}
	\inference{
		\Gamma, \ctxresh \mu x A \sez_n b : B
	}{\Gamma \sez \lambda(\ctxresh \mu x A).b : (\ctxresh \mu x A) \to B}{t-lam}
\end{equation}
\begin{equation}
	\inference{
		\Gamma, \ctxresh \mu i \IX \sez_n b : B
	}{\Gamma \sez \lambda(\ctxresh \mu i \IX).b : (\ctxresh \mu i \IX) \to B}{}
\end{equation}
\begin{equation*}
	\inference{
		\mu : m \to n \\
		\Gamma \sez_n f : (\ctxresh \mu x A) \to B \\
		\mu \setminus \Gamma \sez_m a : A
	}{\Gamma \sez_n \apresh \mu f a : B[a/x]}{t-app} \quad
	\inference{
		\mu : 1 \to n \\
		\Gamma \sez_n f : (\ctxresh \mu i \IX) \to B \\
		\mu \setminus \Gamma \sez_1 i : \IX
	}{\Gamma \sez_n \apresh \mu f i : B[i/x]}{} \quad
\end{equation*}

\section{$\Sigma$-types}
\subsection{Type formation}
The $\Sigma$-type formation rule
\begin{equation}
	\inference{
		\kappa \dashv \mu : m \to n \\
		\bar \mu \setminus \Gamma \sez_{m+1} A : \El~\unidepth \ell m \\
		\Gamma, \ctxresh{(\parmod \setminus \mu)}{x}{\El~A} \sez_{n+1} B : \El~\unidepth \ell n
	}{\Gamma \sez_{n+1} (\ctxresh \mu x A) \times B : \El~\unidepth \ell n}{t-Sigma}
\end{equation}
is interpreted in almost the same way as the $\Pi$-type formation rule. A crucial difference is that the $\Sigma$-type is not automatically discrete if the codomain is. Hence, we apply $\quotshp$ before moving back from the type judgement into the universe. So we have
\begin{equation}
	\interp{(\ctxresh \mu x A) \times B} = \iota \paren{\tycode{\quotshp \Sigma(\var x : (\mu \El~\iota(\interp A))[\iota]).\El~\iota(\interp B)}}.
\end{equation}
We similarly interpret:
\begin{equation}
	\inference{
		\mu : 1 \to n \\
		\Gamma, \ctxresh{(\parmod \setminus \mu)}{i}{\IX} \sez_{n+1} B : \unidepth \ell n
	}{\Gamma \sez_{n+1} (\ctxresh \mu i \IX) \times B : \unidepth \ell n}{}
\end{equation}
We do not support pair types dual to $\multimap$, because if they can exist at all, they would require special treatment in the context, similar to the Moulin-style interval itself.

\subsection{Pair construction}
\begin{equation}
	\interpinference{
		\mu : m \to n \\
		\mu \setminus \Gamma \sez_m a : \El~A \\
		\Gamma \sez_n b : \El~B[a/x]
	}{\Gamma \sez \pairresh \mu a b : \El~(\ctxresh \mu x A) \times B}{t-pair}
	=
\end{equation}
\begin{equation*}
	\inference{
	\inference{
		\inference{
			\interp{\mu \setminus \Gamma} \sez_m \interp a : \El~\iota(\interp A)
		}{\interp \Gamma \sez_n \iota(\interp a) : (\mu \El~\iota(\interp A))[\iota]}{}
		\quad
		\interp \Gamma \sez_n \interp b : \El~\iota(\interp B) [\iota(\interp a) / \var x]
	}{\interp \Gamma \sez_n (\iota(\interp a), \interp b) : \Sigma(\var x : (\mu \El~\iota(\interp A))[\iota]).\El~\iota(\interp B)}{}
	}{\interp \Gamma \sez_n \inquotshp (\iota(\interp a), \interp b) : \quotshp \Sigma(\var x : (\mu \El~\iota(\interp A))[\iota]).\El~\iota(\interp B)}{}
\end{equation*}
We similarly interpret:
\begin{equation}
	\inference{
		\mu : 1 \to n \\
		\mu \setminus \Gamma \sez_1 k : \IX \\
		\Gamma \sez_n b : B[k/i]
	}{\Gamma \sez \pairresh \mu k b : (\ctxresh \mu i \IX) \times B}{}
\end{equation}

\subsection{Pair elimination}
In order to interpret
\begin{equation}
	\inference{
		\mu : m \to n \qquad \nu : n \to p \\
		\Gamma, \ctxresh{\nu}{z}{\El~(\ctxresh \mu x A) \times B} \sez_p C \type \\
		\Gamma, \ctxresh{\nu \mu}{x}{\El~A}, \ctxresh{\nu}{y}{\El~B} \sez_p c : C[\pairresh \mu x y / z] \\
		\nu \setminus \Gamma \sez_n t : \El~(\ctxresh \mu x A) \times B
	}{\Gamma \sez_p \ind_\times^\nu(z.C, x.y.c, t) : C[t/z]}{t-indpair}
\end{equation}
we rely on substitution and instead interpret
\begin{equation}
	\interpinference{
		\mu : m \to n \qquad \nu : n \to p \\
		\Gamma, \ctxresh{\nu}{z}{\El~(\ctxresh \mu x A) \times B} \sez_p C \type \\
		\Gamma, \ctxresh{\nu \mu}{x}{\El~A}, \ctxresh{\nu}{y}{\El~B} \sez_p c : C[\pairresh \mu x y / z]
	}{\Gamma, \ctxresh{\nu}{z}{\El~(\ctxresh \mu x A) \times B} \sez_p \ind_\times^\nu(z.C, x.y.c, z) : C}{t-indpair} =
\end{equation}
\begin{equation*}
	\inference{
	\inference{
	\inference{
	\inference{
		\interp \Gamma, \var x : (\nu \mu \El~\iota(\interp A))[\iota], \var y : (\nu \El~\iota(\interp B))[\iota] \sez_p \interp c : C[(\var x, \var y)/\var z]
	}{\interp \Gamma, \var z : \paren{\Sigma(\var x : (\nu \mu \El~\iota(\interp A))[\iota]).(\nu \El~\iota(\interp B))[\iota]} \sez_p \interp c [\fst~\var z/\var x, \snd~\var z/\var y] : C}{4}
	}{\interp \Gamma, \var z : \paren{\nu \Sigma(\var x : (\mu \El~\iota(\interp A))[\iota]).\El~\iota(\interp B)}[\iota] \sez_p \interp c [\fst~\var z/\var x, \snd~\var z/\var y][\iota] : C}{3}
	}{\interp \Gamma, \var z : \paren{\quotshp \nu \Sigma(\var x : (\mu \El~\iota(\interp A))[\iota]).\El~\iota(\interp B)}[\iota] \sez_p \interp c [\fst~\var z/\var x, \snd~\var z/\var y][\iota][\inquotshp]\inv : C}{2}
	}{\interp \Gamma, \var z : \paren{\nu \quotshp \Sigma(\var x : (\mu \El~\iota(\interp A))[\iota]).\El~\iota(\interp B)}[\iota] \sez_p \interp c [\fst~\var z/\var x, \snd~\var z/\var y][\iota][\nu \inquotshp]\inv : C}{1}
\end{equation*}
This is best read from bottom to top. In (1), we use \cref{thm:reldtt-modal-tyshp} which gives us a transformation $\iota : \nu \quotshp \to \quotshp \nu$. In (2), we use discreteness of $C$ to get rid of $\quotshp$. In (3), we use \cref{thm:right-adjoint-preserves-types} which states that $\nu$ preserves $\Sigma$-types up to isomorphism. In (4), we simply split up the $\Sigma$-type.

We similarly interpret:
\begin{equation}
	\inference{
		\mu : 1 \to n \qquad \nu : n \to p \\
		\Gamma, \ctxresh{\nu}{z}{\El~(\ctxresh \mu i \IX) \times B} \sez_p C \type \\
		\Gamma, \ctxresh{\nu \mu}{i}{\IX}, \ctxresh{\nu}{y}{\El~B} \sez_p c : C[\pairresh \mu x y / z] \\
		\nu \setminus \Gamma \sez_n t : \El~(\ctxresh \mu i \IX) \times B
	}{\Gamma \sez_p \ind_\times^\nu(z.C, i.y.c, t) : C[t/z]}{}
\end{equation}

\section{Identity types}

\subsection{Type formation}
\begin{equation}
	\interpinference{
		\mu : m \to n \\
		\bar \mu \setminus \Gamma \sez_{m+1} A : \El~\unidepth \ell m \\
		(\parmod \setminus \mu) \setminus \Gamma \sez_m a, b : \El~A
	}{\Gamma \sez_{n+1} \idtpresh \mu A a b : \El~\unidepth \ell {n}}{t-Id} =
\end{equation}
\begin{equation*}
	\inference{
	\inference{
	\inference{
	\inference{
	\inference{
	\inference{
	\inference{
		\interp{(\parmod \setminus \mu) \setminus \Gamma} \sez_m \interp a, \interp b : \El~\iota(\interp A)
	}{\interp \Gamma \sez_{n+1} \iota(\interp a), \iota(\interp b) : ((\cohfget_0^{n+1} \setminus \mu) \El~\iota(\interp A)) [\iota]}{}
	}{\quotshp \interp \Gamma \sez_{n+1} \iota(\interp a), \iota(\interp b) : ((\cohfget_0^{n+1} \setminus \mu) \El~\iota(\interp A)) [\iota]}{}
	}{\cohfget_0^{n+1} \quotshp \interp \Gamma \sez_{n} \iota(\interp a), \iota(\interp b) : \cohfget_0^{n+1} ((\cohfget_0^{n+1} \setminus \mu) \El~\iota(\interp A)) [\iota]}{}
	}{\cohpi_0^{n+1} \interp \Gamma \sez_{n} \iota(\interp a), \iota(\interp b) : (\mu \El~\iota(\interp A)) [\iota]}{}
	}{\cohpi_0^{n+1} \interp \Gamma \sez_{n} \iota(\interp a) \idtp{(\mu \El~\iota(\interp A)) [\iota]} \iota(\interp b) \dtype_0}{}
	}{\cohpi_0^{n+1} \interp \Gamma \sez_{n} \tycode{  \iota(\interp a) \idtp{(\mu \El~\iota(\interp A)) [\iota]} \iota(\interp b)  }: \uniNDD_\ell}{}
	}{\interp \Gamma \sez_{n+1} \iota \paren{  \tycode{  \iota(\interp a) \idtp{(\mu \El~\iota(\interp A)) [\iota]} \iota(\interp b)  }  }: \uniDD_\ell}{}
\end{equation*}
where we have used that $\parmod \circ (\parmod \setminus \mu) = \mu$.

\subsection{Reflexivity}
\begin{equation}
	\interpinference{
		\mu \setminus \Gamma \sez_{m} a : \El~A
	}{\Gamma \sez_{n} \refl^\mu~a : \El~\idtpresh \mu A a a}{}
	=
	\inference{
	\inference{
		\interp{\mu \setminus \Gamma} \sez_m \interp a : \El~\iota(\interp A)
	}{\interp \Gamma \sez_n \iota(\interp a) : (\mu \El~\iota(\interp A))[\iota]}{}
	}{\interp \Gamma \sez_n \refl~\iota(\interp a) : \iota(\interp a) \idtp{(\mu \El~\iota(\interp A))[\iota]} \iota(\interp a)}{}
\end{equation}

\subsection{Reflection rule and UIP}
\begin{equation}
	\interpinference{
		\mu : m \to n \\
		\Gamma \sez_n e : \El~\idtpresh{\mu}{A}{a}{b}
	}{\mu \setminus \Gamma \sez_m a \jeq b : \El~A}{t-eq-rflct} =
	\inference{
	\inference{
		\interp \Gamma \sez_n \interp e : \iota(\interp a) \idtp{(\mu \El~\iota(\interp A))[\iota]} \iota(\interp b)
	}{\interp \Gamma \sez_n \iota(\interp a) = \iota(\interp b) : (\mu \El~\iota(\interp A))[\iota]}{}
	}{\interp{\mu \setminus \Gamma} \sez_m \iota(\interp a) = \iota(\interp b) : \El~\iota(\interp A)}{}
\end{equation}
We also have
\begin{equation}
	\inference{
		\Gamma \sez_n e, e' : \idtpresh \mu A a b
	}{\Gamma \sez_n e \jeq e' : \idtpresh \mu A a b}{t-eq-uip}
\end{equation}
From these two properties, function extensionality

\begin{equation}
	\inference{
		\Gamma \sez_n e : (\ctxresh \nu x A) \to \idtpresh \mu B {\apresh {\mu \setminus \nu} f x}{\apresh {\mu \setminus \nu} g x}
	}{\Gamma \sez_n \funext^\mu e : \idtpresh{\mu}{((\ctxresh{(\mu \setminus \nu)} x A) \to B)} f g}{}
\end{equation}
\begin{equation}
	\inference{
		\Gamma \sez_n e : (\ctxresh \nu i \IX) \to \idtpresh \mu B {\apresh {\mu \setminus \nu} f i}{\apresh {\mu \setminus \nu} g i}
	}{\Gamma \sez_n \funext^\mu e : \idtpresh{\mu}{((\ctxresh{(\mu \setminus \nu)} i \IX) \to B)} f g}{}
\end{equation}
and the J-rule
\begin{equation}
	\inference{
		\mu : m \to n \qquad \nu : n \to p \\
		(\nu \mu) \setminus \Gamma \sez_m a, b : A \\
		\Gamma, \ctxresh{\nu \mu}{y}{A}, \ctxresh{\nu}{w}{\idtpresh \mu A a y} \sez_{p} C \type \\
		\nu \setminus \Gamma \sez_{n} e : \idtpresh \mu A a b \\
		\Gamma \sez_p c : C[a/y, \refl^\mu~a/w]
	}{\Gamma \sez_p \J_{\tyresh \mu A}^\nu(a, b, y.w.C, e, c) : C[b/y, e/w]}{}
\end{equation}
can be derived.
The $\J$-rule proves in particular $(\tyresh{\nu}{\idtpresh \mu A a b}) \to \idtpresh{\nu \mu} A a b$.

\section{Inductive types}
\subsection{Empty and unit types}
These are trivial.

\subsection{Booleans}
The type formation rule and the constructors are trivial. After proof synthesis and substitution, the elimination rule takes the form
\begin{equation}
	\inference{
		\nu : 0 \to p \\
		\Gamma, \ctxresh{\nu}{x}{\El~\Bool} \sez_p C \type \\
		\ctxresh{\nu}{j}{\IJud}, \Gamma \sez_p c : C[\pparen{\true, \false}~j/x]
	}{\Gamma, \ctxresh{\nu}{x}{\El~\Bool} \sez_p \ind_\Bool^\nu(x.C, j.c, x) : C}{t-indBool}
\end{equation}
The interpretation of this rule is trivial, since $\interp{\El~\Bool} = \interp{\IJud} = \Bool$.

\subsection{Natural numbers}
The type formation rule and the constructors are trivial. The elimination rule is straightforward if $\nu = \reshlist{\eqty}{0, \ldots, 0}$. If $\nu = \cohcodisc_{[q+1, p]}^p \circ \reshlist{\eqty}{0, \ldots, 0}$, then we can first create an ad hoc function to $\cohfget_{[q+1, p]}^p C$ and then show that the image is $(q+1)$-codiscrete in $C$ using the synthesized proofs.

\commentout{

}

\section{Path degeneracy}
We need to interpret
\begin{equation}
	\inference{
		\underline 0 := \reshlist{\eqty}{1, \ldots, 1} : 1 \to n \\
		\Gamma \sez_n A \type \\
		\Gamma \sez_n p : (\ctxresh \mu i \IX) \to A
	}{\Gamma \sez_n \degax~p : \idtpresh{\idmod}{((\ctxresh {\underline 0} i \IX) \to A)}{p}{\lambda(\ctxresh {\underline 0} i \IX).\apresh {\underline 0} p 0}}{}
\end{equation}

In fact, the model validates the definitional conclusion $\Gamma \sez_n p = \lambda(\ctxresh {\underline 0} i \IX).\apresh {\underline 0} p 0 : (\ctxresh {\underline 0} i \IX) \to A$, as is immediately clear from discreteness.

\section{Glueing and welding}
We simply sketch the rules without giving semantics.

\subsection{Face predicates}
In the model, we have a subobject $\IX \to \reshlist{\eqty}{1, 1} \IX$, which defines a proposition $\var i : \reshlist{\eqty}{1, 1} \IX \sez_1 (\var i \in \IX) \prop$. We also have a subobject $\cohfget_0^1 \IX \to \cohfget_0^1 \IX \times \cohfget_0^1 \IX$, which defines a proposition $\var i, \var j : \cohfget_0^1 \IX \sez_0 \idpr{\var i}{\var j} \prop$. We have the following rules:
\begin{equation}
	\interp{
		\inference{
			\Gamma \sez_{n+1} \Ctx
		}{
			\Gamma \sez_{n+1} \bot, \top : \IF^n
		}{}
	}
	=
	\inference{
	\inference{
	\inference{
		\interp \Gamma \sez_{n+1} \Ctx
	}{\cohpi_0^{n+1} \interp \Gamma \sez_n \bot, \top \prop}{}
	}{\cohpi_0^{n+1} \interp \Gamma \sez_n \tycode{\bot}, \tycode{\top} : \Prop}{}
	}{\interp \Gamma \sez_n \iota(\tycode{\bot}), \iota(\tycode{\bot}) : \cohdisc_0^{n+1} \Prop}{}
\end{equation}
\begin{equation}
	\interp{
		\inference{
			\Gamma \sez_{n+1} P, Q : \IF^n
		}{
			\Gamma \sez_{n+1} P \vee Q, P \wedge Q : \IF^n
		}{}
	}
	=
	\inference{
	\inference{
	\inference{
	\inference{
	\inference{
		\interp \Gamma \sez_{n+1} P, Q : \cohdisc_0^{n+1} \Prop
	}{\cohpi_0^{n+1} \interp \Gamma \sez_n \iota(P), \iota(Q) : \Prop}{}
	}{\cohpi_0^{n+1} \interp \Gamma \sez_n \El~\iota(P), \El~\iota(Q) \prop}{}
	}{\cohpi_0^{n+1} \interp \Gamma \sez_n \El~\iota(P) \odot \El~\iota(Q) \prop}{$\odot \in \accol{\vee, \wedge}$}
	}{\cohpi_0^{n+1} \interp \Gamma \sez_n \tycode{\El~\iota(P) \odot \El~\iota(Q)} : \Prop}{}
	}{\interp \Gamma \sez_n \iota(\tycode{\El~\iota(P) \odot \El~\iota(Q)}) : \cohdisc_0^{n+1}\Prop}{}
\end{equation}
\begin{equation}
	\interp{
		\inference{
			\strmod \setminus \Gamma \sez_1 s, t : \IX
		}{
			\Gamma \sez_2 \idpr s t : \IF^1
		}{}
	}
	=
	\inference{
	\inference{
	\inference{
	\inference{
	\inference{
	\inference{
		\interp{ \strmod \setminus \Gamma} \sez_1 s, t : \IX
	}{\interp \Gamma \sez_2 \iota(s), \iota(t) : \cohcodisc_0^2 \IX}{}
	}{\quotshp \interp \Gamma \sez_2 \iota(s), \iota(t) : \cohcodisc_0^2 \IX}{$\cohcodisc_0^2 \IX$ is disc.}
	}{\cohfget_0^2 \quotshp \interp \Gamma \sez_1 \iota(s), \iota(t) : \IX}{}
	}{\cohfget_0^2 \quotshp \interp \Gamma \sez_1 {\iota(s)} \idtp{\IX} {\iota(t)} \prop}{}
	}{\cohpi_0^2 \interp \Gamma = \cohfget_0^2 \quotshp \interp \Gamma \sez_1 \tycode{{\iota(s)} \idtp{\IX} {\iota(t)}} : \Prop}{}
	}{\interp \Gamma \sez_1 \iota(\tycode{\idpr{\iota(s)}{\iota(t)}}) : \cohdisc_0^2 \Prop}{}
\end{equation}
\begin{equation}
	\interp{
		\inference{
			\reshlist{\eqty}{0, 1, 1} \setminus \Gamma \sez_1 t : \IX
		}{
			\Gamma \sez_2 (t \in \IX) : \IF^1
		}{}
	}
	=
	\inference{
	\inference{
	\inference{
	\inference{
	\inference{
	\inference{
	\inference{
		\interp{\reshlist{\eqty}{0, 1, 1} \setminus \Gamma} \sez_1 t : \IX
	}{\interp \Gamma \sez_2 \iota(t) : \reshlist{\eqty}{0, 1, 1} \IX}{}
	}{\interp \Gamma \sez_2 \iota(t) : \reshlist{\eqty}{0, 1, 1}}{$0 \cdot \IX \cong (\eqty \cdot \IX)$}
	}{\interp \Gamma \sez_2 \iota(t) : \cohdisc_0^2 \reshlist{\eqty}{1, 1} \IX = \reshlist{\eqty}{\eqty, 1, 1} \IX}{}
	}{\cohpi_0^2 \interp \Gamma \sez_1 \iota(t) : \reshlist{\eqty}{1, 1} \IX}{}
	}{\cohpi_0^2 \interp \Gamma \sez_1 (\iota(t) \in \IX) \prop}{}
	}{\cohpi_0^2 \interp \Gamma \sez_1 \tycode{\iota(t) \in \IX} : \Prop}{}
	}{\interp \Gamma \sez_2 \iota(\tycode{\iota(t) \in \IX}) : \cohdisc_0^2 \Prop}{}
\end{equation}
\begin{equation}
	\interpinference{
		\mu : m \to n \\
		\bar \mu \setminus \Gamma \sez_{m+1} P : \IF^m
	}{
		\Gamma \sez_{n+1} \name{Box}^\mu~P : \IF^n
	}{}
	=
	\inference{
	\inference{
	\inference{
	\inference{
	\inference{
	\inference{
	\inference{
		\kappa \dashv \mu : m \to n \\
		\interp{\bar \mu \setminus \Gamma} \sez_{m+1} P : \cohdisc_0^{m+1} \Prop
	}{\interp{\Gamma} \sez_{n+1} \iota(P) : \bar \mu \cohdisc_0^{m+1} \Prop = \cohdisc_0^{n+1} \mu \Prop}{}
	}{\kappa \cohpi_0^{n+1} \interp \Gamma \sez_{m} \iota(P) : \Prop}{}
	}{\kappa \cohpi_0^{n+1} \interp \Gamma \sez_{n} \El~\iota(P) \prop}{}
	}{\mu \kappa \cohpi_0^{n+1} \interp \Gamma \sez_{n} \mu \El~\iota(P) \prop}{}
	}{\cohpi_0^{n+1} \interp \Gamma \sez_{n} \mu \El~\iota(P) \prop}{}
	}{\cohpi_0^{n+1} \interp \Gamma \sez_{n} \tycode{\mu \El~\iota(P)} : \Prop}{}
	}{\interp \Gamma \sez_{n+1} \tycode{\mu \El~\iota(P)} : \cohdisc_0^{n+1} \Prop}{}
\end{equation}
\begin{equation}
	\interp{
		\inference{
			\parmod \setminus \Gamma \sez_{n+1} P : \IF^n
		}{
			\Gamma \sez_n \El~P \prop
		}{}
	}
	=
	\inference{
	\inference{
	\inference{
		\interp{\parmod \setminus \Gamma} \sez_{n+1} P : \cohdisc_0^{n+1} \Prop
	}{\interp \Gamma \sez_n \iota(P) : \cohfget_0^{n+1} \cohdisc_0^{n+1} \Prop}{}
	}{\interp \Gamma \sez_n \iota(P) : \Prop}{}
	}{\interp \Gamma \sez_n \El~\iota(P) \prop}{}
\end{equation}
\begin{equation}
	\interpinference{
		\Gamma \sez_n \Ctx \\
		\mu : m \to n \\
		\mu \setminus \Gamma \sez_m P \prop
	}{
		\Gamma, \tyresh \mu P \sez_n \Ctx
	}{}
	=
	\inference{
		\interp \Gamma \sez_n \Ctx \\
		\kappa \dashv \mu : m \to n \\
		\kappa \interp \Gamma \sez_m P \prop
	}{\interp \Gamma, \_ : (\mu P)[\iota] \sez_n \Ctx}{}
\end{equation}
Similar to the ordinary context extension rule, one can show that this rule preserves the invariant given by \cref{thm:left-division-semantics}. We can moreover show that the following obvious substitutions are isomorphisms:
\begin{align*}
	\interp{\Gamma, \ctxresh{\inf\accol{\mu, \nu}}{\var k}{\IX}, \Delta[\var k/\var i, \var k/\var j]} &\cong \interp{\Gamma, \ctxresh{\mu}{\var i}{\IX}, \ctxresh{\nu}{\var j}{\IX}, \tyresh{\rho}{\idpr{\var i}{\var j}}, \Delta} \\
	\interp{\Gamma, \Delta[\var k/\var i, 0/\var j]} &\cong \interp{\Gamma, \ctxresh{\mu}{\var i}{\IX}, \tyresh{\rho}{\idpr{\var i}{0}}, \Delta} \\
	\interp{\Gamma, \Delta[\var k/\var i, 1/\var j]} &\cong \interp{\Gamma, \ctxresh{\mu}{\var i}{\IX}, \tyresh{\rho}{\idpr{\var i}{1}}, \Delta} \\
	\interp{\Gamma, \Delta} & \cong \interp{\Gamma, \top, \Delta} \\
	\eset &\cong \interp{\Gamma, \bot, \Delta} \\
	\interp{\Gamma, P, Q, \Delta} &\cong \interp{\Gamma, P \wedge Q, \Delta} \\
	\interp{\Gamma, P, \Delta} \uplus_{\interp{\Gamma, P \wedge Q, \Delta}} \interp{\Gamma, Q, \Delta} &\cong \interp{\Gamma, P \vee Q, \Delta} \\
	\interp{\Gamma, \ctxresh{\inf \accol{\mu, \nu}}{\var i}{\IX}, \Delta} &\cong \interp{\Gamma, \ctxresh{\mu}{\var i}{\IX}, \tyresh{\nu}{(\var i \in \IX)}, \Delta} \\
	\interp{\Gamma, \tyresh{\nu \circ \mu}{P}, \Delta} &\cong \interp{\Gamma, \tyresh{\nu}{\name{Box}^\mu~p}, \Delta}
\end{align*}
This proves soundness of the unification rules.

\subsection{Systems}
When terms are defined on different subobjects of the same context, and they are compatible when the subobjects overlap, then we can paste them together and define a term on the union of all given subobjects.

\subsection{Welding}
We have a rule
\begin{equation}
	\interpinference{
		\Gamma \sez_{n+1} P : \IF^n \\
		\Gamma \sez_{n+1} A : \unidepth \ell n \\
		\Gamma, \tyresh{\strmod}{\El~P} \sez_{n+1} T : \unidepth \ell n \\
		(\strmod \setminus \Gamma), \El~P \sez_n f : \El~(A \to T)
	}{ \Gamma \sez_{n+1} \Weldsys{A}{\Weldsysclauseb{P}{T}{f}} : \unidepth \ell n }{}
\end{equation}
where $\Weldsys{A}{\Weldsysclauseb{\top}{T}{f}} \jeq T$. In order to interpret this, we first refurbish the premises:
\begin{equation}
	\inference{
	\inference{
		\interp \Gamma \sez_{n+1} P : \IF^n
	}{\cohpi_0^{n+1} \interp \Gamma \sez_n \iota(P) : \Prop}{}
	}{\cohpi_0^{n+1} \interp \Gamma \sez_n \El~\iota(P) \prop}{}
	\qquad
	\inference{
	\inference{
		\interp \Gamma \sez_{n+1} A : \uniDD_\ell
	}{\cohpi_0^{n+1} \interp \Gamma \sez_n \iota(A) : \uniNDD_\ell}{}
	}{\cohpi_0^{n+1} \interp \Gamma \sez_n \El~\iota(A) \dtype_\ell}{}
\end{equation}
\begin{equation}
	\inference{
	\inference{
	\inference{
		\interp \Gamma, \_ : \cohcodisc_0^{n+1} \El~\iota(P)[\iota] \sez_{n+1} T : \uniDD_\ell
	}{\cohpi_0^{n+1} (\interp \Gamma, \_ : \cohcodisc_0^{n+1} \El~\iota(P)[\iota]) \sez_n \iota(T) : \uniNDD_\ell }{}
	}{\cohpi_0^{n+1} \interp \Gamma, \_ : \El~\iota(P) \sez_n \iota(T) : \uniNDD_\ell}{\cref{thm:partial-fib-repl}}
	}{\cohpi_0^{n+1} \interp \Gamma, \_ : \El~\iota(P) \sez_n \El~\iota(T) \dtype_\ell}{}
\end{equation}
\begin{equation}
	\inference{
	\inference{
	\inference{
		\interp{\strmod \setminus \Gamma, \El~P} \sez_n f : \El~\iota(A)[\iota] \to \El~\iota(T)[\iota]
	}{\cohfget_0^{n+1} \interp \Gamma, \El~\iota(P)[\iota] \sez_n \iota(f) : \El~\iota(A)[\iota] \to \El~\iota(T)[\iota]}{}
	}{\quotshp\cohfget_0^{n+1} \interp \Gamma, \El~\iota(P)[\iota] \sez_n \iota(f) : \El~\iota(A)[\iota] \to \El~\iota(T)[\iota]}{\cref{thm:partial-fib-repl}}
	}{\cohpi_0^{n+1} \interp \Gamma, \El~\iota(P) \sez_n \iota(f) : \El~\iota(A) \to \El~\iota(T)}{\cref{thm:reldtt-modal-tyshp}}
\end{equation}
This allows us to build, in context $\cohpi_0^{n+1} \interp \Gamma$, the $\Weld$-type, which can then be turned into an element of $\uniNDD_\ell$ in context $\interp \Gamma$.

The following rule is interpreted using the $\weld$ constructor in the model.
\begin{equation}
	\interpinference{
		\Gamma \sez_n a : A
	}{\Gamma \sez_n \weldsys{\sysclauseb{P}{f}} ~ a : \Weldsys{A}{\Weldsysclauseb{P}{T}{f}}}{}
\end{equation}
We also need to model the induction principle:
\begin{equation}
	\interpinference{
		\nu : n \to p, \quad p \cdot \nu < \top \\
		\Gamma, \ctxresh \nu y {\Weldsys{A}{\Weldsysclauseb{P}{T}{f}}} \sez_p C \type \\
		\Gamma, \ctxresh \nu x A \sez c : C[\weldsys{\sysclauseb{P}{f}} ~ x/y] \\
		\Gamma, \tyresh \nu P, \ctxresh \nu y T \sez_p d : C \\
		\Gamma, \tyresh \nu P, \ctxresh \nu x A \sez_p c \jeq d[\apresh \idmod f x / y] : C[\apresh \idmod f x / y]
	}{\Gamma \sez_p \ind_\Weld^\nu(y.C, \sys{\sysclauseb{P}{y.d}}, x.c, b) : C[b/y]}{}
\end{equation}
But the requirement that $p \cdot \nu < \top$ implies that $\nu$ has a right adjoint and is therefore a central reshuffling functor, i.e. a lifted functor, that preserves $\Weld$ on the nose (\cref{thm:lifted-preserves-types}).

\subsection{Glueing}
The interpretation of $\Glue$ is analogous.
\begin{equation}
	\interpinference{
		\Gamma \sez_{n+1} P : \IF^n \\
		\Gamma \sez_{n+1} A : \unidepth \ell n \\
		\Gamma, \tyresh \strmod P \sez_{n+1} T : \unidepth \ell n \\
		(\strmod \setminus \Gamma), P \sez_n f : T \to A
	}{ \Gamma \sez_{n+1} \Gluesys{A}{\Weldsysclauseb{P}{T}{f}} : \unidepth \ell n }{}
\end{equation}
where $\Gluesys{A}{\Weldsysclauseb{\top}{T}{f}} \jeq T$.
\begin{equation}
	\interpinference{
		\Gamma \sez_n b : \Gluesys{A}{\Weldsysclauseb{P}{T}{f}}
	}{\Gamma \sez_n \ungluesys{\sysclauseb{P}{f}}~b : A}{}
\end{equation}
\begin{equation}
	\interpinference{
		\Gamma, P \sez_n t : T \\
		\Gamma \sez_n a : A \\
		\Gamma, P \sez_n \apresh \idmod f t \jeq a
	}{\Gamma \sez_n \gluesys{a}{\sysclauseb P t}}{}
\end{equation}

\commentout{
}

\bibliographystyle{alphaurl}
\bibliography{techrep-refs.bib}

\end{document}